\newtheorem{theorem}{Theorem}[section]
\newtheorem{proposition}[theorem]{Proposition}
\newtheorem{lemma}[theorem]{Lemma}
\newtheorem{corollary}[theorem]{Corollary}
\theoremstyle{definition}
\newtheorem{definition}[theorem]{Definition}
\newtheorem{theoremdefinition}[theorem]{Theorem-Definition}
\newtheorem{propositiondefinition}[theorem]{Proposition-Definition}
\newtheorem{example}[theorem]{Example}
\newtheorem{remark}[theorem]{Remark}
\numberwithin{equation}{section}
\newcommand{\bA}{\mathbf{A}}
\newcommand{\bC}{\mathbf{C}}
\newcommand{\bN}{\mathbf{N}}
\newcommand{\bR}{\mathbf{R}}
\newcommand{\bT}{\mathbf{T}}
\newcommand{\bZ}{\mathbf{Z}}
\newcommand{\bF}{\mathbf{F}}
\newcommand{\cA}{\mathcal{A}}
\newcommand{\cC}{\mathcal{C}}
\newcommand{\cE}{\mathcal{E}}
\newcommand{\cF}{\mathcal{F}}
\newcommand{\cG}{\mathcal{G}}
\newcommand{\cH}{\mathcal{H}}
\newcommand{\cO}{\mathcal{O}{}}
\newcommand{\cS}{\mathcal{S}}
\newcommand{\cU}{\mathcal{U}}
\newcommand{\fg}{\mathfrak{g}}
\def\eu{\ensuremath{\mathrm{e}}}
\def\iu{\ensuremath{\mathrm{i}}}
\def\du{\ensuremath{\mathrm{d}}}
\DeclareMathOperator{\ad}{ad}
\DeclareMathOperator{\Ad}{Ad}
\DeclareMathOperator{\Aut}{Aut}
\DeclareMathOperator{\Der}{Der}
\DeclareMathOperator{\End}{End}
\DeclareMathOperator{\GL}{GL}
\DeclareMathOperator{\Hom}{Hom}
\DeclareMathOperator{\id}{id}
\DeclareMathOperator{\Ob}{Ob}
\DeclareMathOperator{\Proj}{Proj}
\DeclareMathOperator{\PSL}{PSL}
\DeclareMathOperator{\SL}{SL}
\DeclareMathOperator{\Span}{Span}
\DeclareMathOperator{\vol}{vol}
\DeclareMathOperator{\Unit}{U}
\DeclareMathOperator{\Int}{int}
\newcommand{\inj}{\hookrightarrow}
\newcommand{\iso}{\xrightarrow{\sim}}
\newcommand{\surj}{\twoheadrightarrow}
\newcommand{\inv}[1]{{{#1}^{-1}}}
\providecommand\given{}
\newcommand\SetSymbol[1][]{\nonscript\:#1\vert\nonscript\:\allowbreak}
\DeclarePairedDelimiterX\set[1]\{\}{%
\renewcommand\given{\SetSymbol[\delimsize]}
#1
}
\newcommand{\rest}[2]{\left.{#1}\right\rvert_{#2}}
\DeclarePairedDelimiterX\norm[1]\lVert\rVert{
\ifblank{#1}{\:\cdot\:}{#1}
}
\DeclarePairedDelimiterX\hp[2](){
\ifblank{#1}{\ifblank{#2}{\:\cdot\:,\:\cdot\:}{\:\cdot\:,#2}}{\ifblank{#2}{#1,\:\cdot\:}{#1,#2}}
}
\DeclarePairedDelimiterX\ip[2]\langle\rangle{
\ifblank{#1}{\ifblank{#2}{\:\cdot\:,\:\cdot\:}{\:\cdot\:,#2}}{\ifblank{#2}{#1,\:\cdot\:}{#1,#2}}
}
\DeclarePairedDelimiterX\abs[1]\lvert\rvert{
	\ifblank{#1}{\:\cdot\:}{#1}
}
\newcommand{\sa}{\mathrm{sa}}
\newcommand{\hotimes}{\mathbin{\widehat{\otimes}}}
\DeclareMathOperator{\can}{can}
\newcommand{\e}{\epsilon}
\newcommand{\fr}[1]{\mathfrak{#1}}
\newcommand{\ver}{\mathrm{ver}}
\newcommand{\hor}{\mathrm{hor}}
\DeclareMathOperator{\Inn}{Inn}
\DeclareMathOperator{\Out}{Out}
\DeclareMathOperator{\ran}{ran}
\newcommand{\cm}[2]{{#1}_{(#2)}}
\newcommand{\ca}[2]{{#1}_{[#2]}}
\newcommand{\p}[1]{{#1}^\prime}
\newcommand{\pp}[1]{{#1}^{\prime\prime}}
\newcommand{\act}{\triangleright}
\newcommand{\dt}[1]{\operatorname{d}_{{#1}}}
\newcommand{\dv}[1]{\operatorname{d}_{{#1},\ver}}
\newcommand{\coinv}[2]{\prescript{\operatorname{co}{#1}}{}{{#2}}}
\newcommand{\ract}{\triangleleft}
\newcommand{\dva}[1]{{#1}^{\mathrm{pr}}}
\newcommand{\pr}[1]{{#1}^{\mathrm{pr}}}
\newcommand{\tot}[1]{{#1}^{\mathrm{tot}}}
\DeclareMathOperator{\CS}{CS}
\DeclareMathOperator{\ZS}{ZS}
\DeclareMathOperator{\BS}{BS}
\DeclareMathOperator{\HS}{HS}
\DeclareMathOperator{\CH}{CH}
\DeclareMathOperator{\ZH}{ZH}
\DeclareMathOperator{\BH}{BH}
\DeclareMathOperator{\HH}{HH}
\DeclareMathOperator{\Cent}{C}
\DeclareMathOperator{\Zent}{Z}
\DeclareMathOperator{\MC}{MC}
\DeclareMathOperator{\Op}{Op}
\newcommand{\fodc}{\textsc{fodc}}
\newcommand{\sodc}{\textsc{sodc}}
\newcommand{\ddt}{\tfrac{\partial}{\partial t}}
\let\Im\relax
\DeclareMathOperator{\Im}{Im}
\newcommand{\dvatimes}{\widehat{\otimes}^{\leq 2}}
\newcommand{\rel}{\mathrm{rel}}
\begin{document}

\title{Classical gauge theory on quantum principal bundles}


\author{Branimir \'Ca\'ci\'c}
\address{Department of Mathematics and Statistics, University of New Brunswick, PO Box 4400, Fredericton, NB\, E3B 5A3, Canada}
\email{bcacic@unb.ca}



\date{}


\begin{abstract}
We propose a conceptually economical and computationally tractable completion of the foundations of gauge theory on quantum principal bundles \`{a} la Brzezi\'{n}ski--Majid to the case of general differential calculi and strong bimodule connections. In particular, we use explicit groupoid equivalences to reframe the basic concepts of gauge theory---gauge transformation, gauge potential, and field strength---in terms of reconstruction of calculi on the total space (to second order) from given calculi on the structure quantum group and base, respectively. We therefore construct gauge-equivariant moduli spaces of all suitable first- and second-order total differential calculi, respectively, compatible with these choices. As a first illustration, we relate the gauge theory of a crossed product algebra \emph{qua} trivial quantum principal bundle to lazy Sweedler and Hochschild cohomology with coefficients. As a second illustration, we show that a noncommutative \(2\)-torus with real multiplication is the base space of a non-trivial \(\Unit_{q^2}(1)\)-gauge theory admitting Connes's constant curvature connection as a \(q\)-monopole connection, where---in the spirit of Manin's Alterstraum---one must take \(q\) to be the norm-positive fundamental unit of the corresponding real quadratic field.
\end{abstract}

\maketitle

\tableofcontents

\section{Introduction and summary of results}

Noncommutative geometry permits the semiclassical modelling of quantum physics by classical physics on `quantum' (noncommutative) spaces, whose underlying geometry serves as both generator of and recepticle for quantum corrections; indeed, in much of theoretical physics, this might as well be the operational definition of noncommutative geometry itself. From this perspective, it may be surprising to note that a faithful and more-or-less complete formulation of classical Yang--Mills gauge theory on `quantum' principal bundles does not yet exist. Nonetheless, partial foundations have already been laid by Brzezi\'{n}ski--Majid~\cite{BrM}, who make sense of principal (Ehresmann) connections on quantum principal bundles in terms of suitable first-order differential calculi (\textsc{fodc}) on principal comodule algebras in a manner compatible with the theory of quantum groups and quantum homogeneous spaces.

There are gaps, however, in these partial foundations. Principal connections correspond to gauge potentials---their curvature, which would then encode field strength, is typically only accessed in a piecemeal and somewhat indirect fashion through the curvature of induced module connections on quantum associated vector bundles. More significantly, gauge transformations are generally only defined in the theoretically convenient but geometrically and physically unnatural special case of universal \fodc{}. Furthermore, to the author's best knowledge, almost all calculations of affine spaces of principal connections on quantum principal bundles are restricted to this special case; Brzezi\'{n}ski--Gaunt--Schenkel's computations on the \(\theta\)-deformed complex Hopf fibration~\cite{BGS} provide a valuable and instructive exception.

We aim to fill these gaps in a conceptually economical and computationally tractable fashion that readily interfaces with noncommutative Riemannian geometry, whether that mean Connes's functional-analytic theory of spectral triples~\citelist{\cite{Connes95}\cite{Connes96}} or, e.g., noncommutative K\"{a}hler geometry as championed by \'{O} Buachalla~\cite{OB}. We take our cue from \'{C}a\'{c}i\'{c}--Mesland~\cite{CaMe}, who construct a comprehensive theory of principal connections and global gauge transformations on noncommutative Riemannian principal bundles with compact connected Lie structure group in terms of spectral triples. For our purposes, their main formal innovation is to encode principal connections in terms of horizontal covariant derivatives (cf. \DJ{}ur\dj{}evi\'{c}~\cite{Dj97}), which permits a workable notion of gauge transformation acting affinely on principal connections in a manner that faithfully generalises the classical case. There is an apparent drawback to this change of perspective: neither variation of the principal connection nor application of a gauge transformation generally preserves the total \fodc{}. However, we shall have reason to embrace this as a feature, not a bug.


We begin in Section~\ref{sec2} by effecting the aforementioned  change of perspective at the level of \fodc{}. Let \(H\) be a Hopf \(\ast\)-algebra and let \(P\) be a principal left \(H\)-comodule \(\ast\)-subalgebra. Fix a horizontal calculus on \(P\), which therefore consists of a \fodc{} \((\Omega^1_B,\dt{B})\) on \(B \coloneqq \coinv{H}{P}\) and a suitable \emph{projectable horizontal lift} \(\Omega^1_{P,\hor} = P \cdot \Omega^1_B \cdot P = P \cdot \Omega^1_B\) of \(\Omega^1_B = \coinv{H}{\Omega^1_{P,\hor}}\). For us, a \emph{gauge transformation} is an \(H\)-covariant \(\ast\)-automorphism \(f\) of \(P\) satisfying \(\rest{f}{B} = \id_B\) (i.e., a \emph{vertical} \(\ast\)-automorphism) that induces via \(\id_{\Omega^1_B}\) an \(H\)-comodule automorphism \(f_{\ast,\hor}\) of \(\Omega^1_{P,\hor}\), while a \emph{gauge potential} is a lift of \(\dt{B} : B \to \Omega^1_B\) to an \(H\)-covariant \(\ast\)-derivation \(P \to \Omega^1_{P,\hor}\). The group \(\fr{G}\) of gauge transformations now acts affine-linearly on the real affine space \(\fr{At}\) of gauge potentials, whose space of translations \(\fr{at}\) consists of \emph{relative gauge potentials}. In turn, we can define the subgroup \(\Inn(\fr{G})\) of inner gauge transformations and the subspace \(\Inn(\fr{at})\) of inner relative gauge potentials and check that \(\Out(\fr{G}) \coloneqq \fr{G}/\Inn(\fr{G})\) still acts affine-linearly on \(\Out(\fr{At}) \coloneqq \fr{At}/\Inn(\fr{at})\).

We now relate these constructions to the standard theory. Fix a bicovariant \fodc{} \((\Omega^1_H,\dt{H})\) on \(H\) with respect to which \(P\) is \emph{locally free}. We simultaneously consider all \(H\)-covariant \fodc{} \((\Omega^1_P,\dt{P})\) on \(P\) that make \((P;\Omega^1_P,\dt{P})\) into a quantum principal \((H;\Omega^1_H,\dt{H})\)-bundle inducing the horizontal calculus \((\Omega^1_B,\dt{B};\Omega^1_{P,\hor})\) and admitting a \(\ast\)-preserving bimodule connection---for convenience, call such an \fodc{} \emph{admissible}. Indeed, let \(\cG[\Omega^1_H]\) be the groupoid of \emph{abstract gauge transformations}, whose objects are admissible \fodc{} on \(P\) and whose arrows are vertical \(\ast\)-automorphisms of \(P\) that are differentiable with respect to the source \fodc{} on the domain and the target \fodc{} on the codomain; let \(\mu[\Omega^1_H] : \cG[\Omega^1_H] \to \Aut(P)\) be the forgetful homomorphism. Likewise, let \(\cA[\Omega^1_H]\) be the set of all triples \((\Omega^1_P,\dt{P};\Pi)\), where \((\Omega^1_P,\dt{P})\) is admissible and where \(\Pi\) is a (necessarily strong) \(\ast\)-preserving bimodule connection on \((P;\Omega^1_P,\dt{P})\). Thus, the classical affine action of gauge transformations on principal Ehresmann connections generalises to the action of \(\cG[\Omega^1_H]\) on \(\cA[\Omega^1_H]\) by conjugation of bimodule connections. We can now summarise the non-trivial results of Section~\ref{sec2} as follows:

\begin{theorem}
	Reconstruction of \fodc{} induces an explicit equivalence of groupoids
	\[
		\Sigma[\Omega^1_H] : \fr{G} \ltimes \fr{At} \to \cG[\Omega^1_H] \ltimes \cA[\Omega^1_H]
	\]
	with explicit homotopy inverse manifesting the action groupoid \(\fr{G} \ltimes \fr{At}\) as a deformation retraction of the action groupoid \(\cG[\Omega^1_H] \ltimes \cA[\Omega^1_H]\). Furthermore:
	\begin{enumerate}
	\item the forgetful homomorphism \(\mu[\Omega^1_H] : \cG[\Omega^1_H] \to \Aut(P)\) maps surjectively onto \(\fr{G}\);
	\item the equivalence of groupoids \(\Sigma[\Omega^1_H]\) descends to a groupoid isomorphism \[\fr{G} \ltimes \fr{At}/\fr{at}[\Omega^1_H] \iso \cG[\Omega^1_H]/\ker\mu[\Omega^1_H],\] where \(\fr{at}[\Omega^1_H]\) is the \(\fr{G}\)-invariant subspace of \emph{\((\Omega^1_H,\dt{H})\)-adapted} relative gauge potentials.
\end{enumerate}
Hence, in particular, the quotient affine space \(\fr{At}/\fr{at}[\Omega^1_H]\) defines a \(\fr{G}\)-equivariant moduli space of  admissible \fodc{}.
\end{theorem}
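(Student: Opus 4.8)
The plan is to realise $\Sigma[\Omega^1_H]$ as a \emph{first-order reconstruction functor} and its homotopy inverse as an \emph{extraction functor}, and then to read off (1), (2) and the moduli-space statement as formal consequences. Given a gauge potential $\omega\in\fr{At}$, I would reconstruct a total calculus $\Omega^1_P[\omega]\coloneqq\Omega^1_{P,\hor}\oplus(P\otimes\Lambda^1)$, where $\Lambda^1$ denotes the left-invariant forms of the fixed bicovariant $(\Omega^1_H,\dt{H})$, the second summand carries the $P$-bimodule structure twisted through the left $H$-coaction and the braiding of $\Omega^1_H$, the $\ast$-structure is assembled from those of $\Omega^1_{P,\hor}$ and $\Omega^1_H$, and the differential is $\dt{P}[\omega]\coloneqq\omega\oplus\varpi$ with $\varpi\colon P\to P\otimes\Lambda^1$ the Maurer--Cartan term built solely from the coaction and $(\Omega^1_H,\dt{H})$; I would equip $\Omega^1_P[\omega]$ with the coordinate projection $\Pi[\omega]$ onto $P\otimes\Lambda^1$, which is a $\ast$-preserving strong bimodule connection. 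A gauge transformation $f\in\fr{G}$ then acts by $f_{\ast,\hor}\oplus(f\otimes\id_{\Lambda^1})$, a twisted $P$-bimodule isomorphism $\Omega^1_P[\omega]\iso\Omega^1_P[f\cdot\omega]$ over $f$ intertwining the differentials and conjugating $\Pi[\omega]$ to $\Pi[f\cdot\omega]$; this defines $\Sigma[\Omega^1_H]$ on arrows, and functoriality is a direct check. In the other direction, from an admissible $(\Omega^1_P,\dt{P})$ carrying a $\ast$-preserving bimodule connection $\Pi$ (necessarily strong) I would set $\Xi(\Omega^1_P,\dt{P};\Pi)\coloneqq(\id-\Pi)\circ\dt{P}$, an $H$-covariant $\ast$-derivation $P\to\Omega^1_{P,\hor}$ restricting to $\dt{B}$ on $B$ --- hence a gauge potential --- and send an arrow $f$ to $\mu[\Omega^1_H](f)=f$, which lies in $\fr{G}$ because a vertical $\ast$-automorphism differentiable between admissible \fodc{} inducing the same horizontal calculus restricts to an $H$-comodule automorphism of $\Omega^1_{P,\hor}$ over $\id_{\Omega^1_B}$.

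One then checks $\Xi\circ\Sigma[\Omega^1_H]=\id$ on the nose: extraction returns $(\id-\Pi[\omega])\dt{P}[\omega]=\omega$ and returns every arrow to its underlying automorphism. For the reverse composite, a $\ast$-preserving bimodule connection $\Pi$ affords a $P$-bimodule direct-sum decomposition $\Omega^1_P=\ker\Pi\oplus\ran\Pi$ with $\ker\Pi=\Omega^1_{P,\hor}$ (this is strongness), and local freeness over $(\Omega^1_H,\dt{H})$ identifies $\ran\Pi\cong P\otimes\Lambda^1$ compatibly with the vertical part of $\dt{P}$; this produces at every object a canonical isomorphism $\Omega^1_P\iso\Omega^1_P[\Xi(\dots)]$ over $\id_P$, intertwining the differentials, natural in $\cG[\Omega^1_H]\ltimes\cA[\Omega^1_H]$ and equal to the identity on the image of $\Sigma[\Omega^1_H]$ --- that is, a natural isomorphism $\Sigma[\Omega^1_H]\circ\Xi\Rightarrow\id$ exhibiting the asserted deformation retraction. \textbf{The main obstacle} is exactly what this ``canonical isomorphism'' conceals: that $\Omega^1_P[\omega]$ really is a \fodc{} making $(P;\Omega^1_P[\omega],\dt{P}[\omega])$ a quantum principal $(H;\Omega^1_H,\dt{H})$-bundle which induces the fixed horizontal calculus and carries $\Pi[\omega]$ as a strong bimodule connection, and, dually, that every admissible \fodc{}-with-connection splits into this standard form. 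This is the technical heart --- in effect a reconstruction theorem for first-order total calculi --- and it turns on exactness of the vertical sequence $0\to\Omega^1_{P,\hor}\to\Omega^1_P\to P\otimes\Lambda^1\to0$ furnished by local freeness, on strongness promoting the connection splitting from a vector-space to a $P$-bimodule decomposition, and on a Leibniz-rule computation pinning the vertical component of $\dt{P}$ to the Maurer--Cartan term; this is where ``strong'' and ``$\ast$-preserving'' carry the weight.

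Part (1) is then immediate: the image of $\mu[\Omega^1_H]$ lies in $\fr{G}$ as just noted, and given $f\in\fr{G}$ one picks any $\omega\in\fr{At}$ (nonempty by the existence of gauge potentials under local freeness), so that $\Sigma[\Omega^1_H](f,\omega)$ is an arrow with underlying automorphism $f$. For (2), let $\fr{at}[\Omega^1_H]\subseteq\fr{at}$ consist of the relative gauge potentials $\eta$ for which $\Omega^1_P[\omega]$ and $\Omega^1_P[\omega+\eta]$ are isomorphic through a bimodule map covering $\id_P$, for some --- equivalently every --- $\omega\in\fr{At}$; one shows these are exactly the $\eta=\beta\circ\varpi$ for appropriate bimodule maps $\beta\colon P\otimes\Lambda^1\to\Omega^1_{P,\hor}$, so $\fr{at}[\Omega^1_H]$ is a linear subspace, and it is $\fr{G}$-invariant because under $\Sigma[\Omega^1_H]$ the affine $\fr{G}$-action on $\fr{At}$ is intertwined with the conjugation action of $\fr{G}$ on such $\beta$. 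Since an arrow of $\ker\mu[\Omega^1_H]$ from $\Omega^1_P[\omega]$ to $\Omega^1_P[\omega']$ --- being precisely an $\id_P$-covering bimodule isomorphism intertwining $\dt{P}[\omega]$ and $\dt{P}[\omega']$ --- exists exactly when $\omega-\omega'\in\fr{at}[\Omega^1_H]$, the functor $\Sigma[\Omega^1_H]$ descends to the stated isomorphism $\fr{G}\ltimes(\fr{At}/\fr{at}[\Omega^1_H])\iso\cG[\Omega^1_H]/\ker\mu[\Omega^1_H]$. Finally, the object set of $\cG[\Omega^1_H]/\ker\mu[\Omega^1_H]$ is precisely the collection of admissible \fodc{} up to $P$-fixing isomorphism --- the natural moduli space of admissible \fodc{} --- equipped with the residual action of $\fr{G}$ induced by the pushforward (transport-of-structure) action and by (1); restricting the isomorphism of (2) to objects identifies this moduli space $\fr{G}$-equivariantly with the quotient affine space $\fr{At}/\fr{at}[\Omega^1_H]$, which --- $\fr{at}[\Omega^1_H]$ being a $\fr{G}$-invariant linear subspace --- is a real affine space under an affine $\fr{G}$-action. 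That identification realises $\fr{At}/\fr{at}[\Omega^1_H]$ as the asserted $\fr{G}$-equivariant moduli space of admissible \fodc{}.
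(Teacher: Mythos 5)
Your proposal follows essentially the same route as the paper: the reconstruction functor sending $\nabla$ to the direct-sum calculus $\Omega^1_{P,\ver}\oplus\Omega^1_{P,\hor}$ with differential $(\dv{P},\nabla)$ and the coordinate projection as strong bimodule connection is exactly the paper's $\Sigma[\Omega^1_H]$, your extraction $(\id-\Pi)\circ\dt{P}$ is its homotopy inverse $A[\Omega^1_H]$ (up to the canonical identification of $\ker\ver[\dt{P}]$ with $\Omega^1_{P,\hor}$), the natural isomorphism from the splitting of the Atiyah sequence is the paper's $\eta[\Omega^1_H]$, and your characterisation of $\fr{at}[\Omega^1_H]$ as potentials factoring through the vertical derivative via a bimodule map matches the paper's notion of $(\Omega^1_H,\dt{H})$-adapted relative gauge potentials used to descend to the quotient groupoid. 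You also correctly locate the technical heart in the deconstruction/reconstruction of admissible calculi via strongness and the vertical exact sequence, which is precisely the content of the paper's Proposition~\ref{analysisthm} and Theorem~\ref{firstorderclassification}.
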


\noindent This justifies our definition of gauge transformation and proves that we can work directly with the group \(\fr{G}\) and real affine space \(\fr{At}\) without any loss of geometrical or physical information, at least to first order.

Next, in Section~\ref{sec3}, we turn to making sense of curvature---field strength---in a conceptually minimalistic fashion. This requires the careful refinement of the constructions and results from Section~\ref{sec2} to the context of second-order differential calculi (\sodc{}), i.e., \(\ast\)-differential calculi truncated at degree \(2\). Suppose that we have prolonged our horizontal calculus to a \emph{second-order horizontal calculus} \((\Omega_B,\dt{B};\Omega_{P,\hor})\) on \(P\). A gauge transformation \(f\) is \emph{prolongable} if it extends via \(f_\ast : \Omega^1_{P,\hor} \to \Omega^1_{P,\hor}\) to an \(H\)-covariant \(\ast\)-automorphism of the graded left \(H\)-comodule \(\ast\)-algebra \(\Omega_{P,\hor}\); a gauge potential \(\nabla\) is \emph{prolongable} if it extends via \(\dt{B} : \Omega^1_B \to \Omega^2_B\) and \(0 : \Omega^2_{P,\hor} \to 0\) to an \(H\)-covariant degree \(1\) \(\ast\)-derivation of \(\Omega_{P,\hor}\), in which case its \emph{field strength} is \(\bF[\nabla] \coloneqq \rest{\nabla^2}{P}\). The subgroup \(\pr{\fr{G}}\) of prolongable gauge transformations continues to act affine-linearly on the affine subspace \(\pr{\fr{At}}\) of prolongable gauge transformations, whose space of translations \(\pr{\fr{at}}\) consists of \emph{prolongable} relative gauge potentials.

Now, suppose that we have prolonged the bicovariant \fodc{} \((\Omega^1_H,\dt{H})\) to a bicovariant \(\ast\)-differential calculus \((\Omega_H,\dt{H})\). We distill a proposal of Beggs--Majid~\cite{BeM}*{\S 5.5} into a notion of \emph{strong second-order quantum principal \((H;\Omega_H,\dt{H})\)-bundle} and propose a compatible notion of \emph{prolongable} \(\ast\)-preserving bimodule connection, which turns out to be related to \DJ{}ur\dj{}evi\'{c}'s notion of multiplicative connection~\cite{Dj97}. We can still define a groupoid \(\cG[\Omega^{\leq 2}_H]\) of \emph{prolongable} abstract gauge transformations on \(P\), a set \(\cA[\Omega^{\leq 2}_H]\) of prolongable \(\ast\)-preserving bimodule connections, and an action of \(\cG[\Omega^{\leq 2}_H]\) on \(\cA[\Omega^{\leq 2}_H]\) by conjugation. However, constructing an analogue \(\Sigma[\Omega^{\leq 2}_H]\) of the groupoid equivalence \(\Sigma[\Omega^1_H]\) turns out to be a subtle matter. 

As it turns out, we can still reconstruct from \(\Omega_{P,\hor}\) and \((\Omega_H)^{\mathrm{co}H}\) a canonical \(H\)-covariant graded \(\ast\)-algebra \(\Omega_{P,\oplus}\) of total differential forms on \(P\) through degree \(2\) (cf.\ \DJ{}ur\dj{}evi\'{c}~\cite{Dj10}). However, a prolongable gauge potential \(\nabla\) will yield an object \((\Omega_{P,\oplus},\dt{P,\nabla})\) of \(\cG[\Omega^{\leq 2}_H]\) if and only if its field strength factors as \(\bF[\nabla] = F[\nabla] \circ \dv{P}\), where \((\Omega^1_{P,\ver},\dv{P})\) is the first-order \emph{vertical calculus} of \(P\) induced by \((\Omega^1_H,\dt{H})\) and \(F[\nabla] : \Omega^1_{P,\ver} \to \Omega^2_{P,\hor}\) is a (necessarily unique) \(H\)-covariant morphism of \(P\)-\(\ast\)-bimodules. We say that such a gauge potential \(\nabla\) is \emph{\((\Omega^1_H,\dt{H})\)-adapted}, in which case, we call \(F[\nabla]\) its \emph{curvature \(2\)-form}; we denote the \(\pr{\fr{G}}\)-invariant quadric subset of all \((\Omega^1_H,\dt{H})\)-adapted prolongable gauge potentials by \(\pr{\fr{At}}[\Omega^1_H]\). Reconstruction of \sodc{} yields an explicit equivalence of groupoids
\[
	\Sigma[\Omega^{\leq 2}_H] : \pr{\fr{G}} \ltimes \pr{\fr{At}}[\Omega^1_H] \to \cG[\Omega^{\leq 2}_H] \ltimes \cA[\Omega^{\leq 2}_H]
\]
with explicit homotopy inverse manifesting the action groupoid \(\pr{\fr{G}} \ltimes \pr{\fr{At}}[\Omega^1_H]\) as a deformation retraction of the action groupoid \(\cG[\Omega^{\leq 2}_H] \ltimes \cA[\Omega^{\leq 2}_H]\), so that one can work directly with \(\pr{\fr{G}}\) and \(\pr{\fr{At}}[\Omega^1_H]\) without any loss of geometric or physical information, at least to second order. Once more, it follows that the obvious forgetful homomorphism \(\mu[\Omega^{\leq 2}_H] : \cG[\Omega^{\leq 2}_H] \to \Aut(P)\) maps surjectively onto \(\pr{\fr{G}}\), justifying our notion of prolongable gauge potential. However, the construction of a \(\pr{\fr{G}}\)-equivariant moduli space of admissible \sodc{} on \(P\) becomes considerably more involved:

\begin{theorem}
	Suppose that \((\Omega_H,\dt{H})\) is given through degree \(2\) by the canonical prolongation \`{a} la Woronowicz of \((\Omega^1_H,\dt{H})\). Then \(\Sigma[\Omega^{\leq 2}_H]\) descends to a groupoid isomorphism
	\[
		\pr{\fr{G}} \ltimes \pr{\fr{At}}[\Omega^1_H]/\pr{\fr{at}}_{\can}[\Omega^{1}_H] \iso \cG[\Omega^{\leq 2}_H]/\ker\mu[\Omega^{\leq 2}_H],
	\]
	where \(\pr{\fr{at}}_{\can}[\Omega^1_H]\) is the \(\pr{\fr{G}}\)-invariant subspace of \emph{canonically \((\Omega^1_H,\dt{H})\)-adapted} prolongable relative gauge potentials. Thus, the \(\pr{\fr{G}}\)-invariant quadric subset \(\pr{\fr{At}}[\Omega^1_H]/\pr{\fr{at}}_{\can}[\Omega^{1}_H]\) of the affine space \(\pr{\fr{At}}/\pr{\fr{at}}_{\can}[\Omega^{1}_H]\) defines a \(\pr{\fr{G}}\)-equivariant moduli space of admissible \sodc{}.
\end{theorem}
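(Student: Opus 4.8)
The plan is to follow the descent argument behind part~(2) of the first theorem, with the new labour concentrated in pinning down the subgroupoid \(\ker\mu[\Omega^{\leq 2}_H]\) --- equivalently, the translation subspace \(\pr{\fr{at}}_{\can}[\Omega^1_H]\). Since \(\Sigma[\Omega^{\leq 2}_H]\) is already an equivalence of groupoids and \(\mu[\Omega^{\leq 2}_H]\) already surjects onto \(\pr{\fr{G}}\), the whole statement reduces to deciding exactly when two prolongable gauge potentials \(\nabla\) and \(\nabla' = \nabla+\alpha\) reconstruct the same admissible \sodc{} on \(P\) up to the identity-on-\(P\) isomorphism that constitutes an arrow of \(\ker\mu[\Omega^{\leq 2}_H]\). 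First I would replay, essentially verbatim from Section~\ref{sec2}, the degrees \(0\) and \(1\): the underlying graded \(\ast\)-algebra \(\Omega_{P,\oplus}\) is canonical, so such an isomorphism is a graded \(\ast\)-automorphism \(\phi\) of \(\Omega_{P,\oplus}\) fixing \(\Omega_{P,\hor}\) pointwise; commutation with \(\dt{P}\) in degree \(0\) forces \(\alpha\) to factor as \(\alpha = \gamma_\alpha\circ\dv{P}\) for a unique covariant \(\ast\)-preserving \(P\)-bimodule map \(\gamma_\alpha : \Omega^1_{P,\ver} \to \Omega^1_{P,\hor}\), and \(\phi\) is then the multiplicative extension of \(\mathrm{id}\) on \(\Omega_{P,\hor}\) and \(\mathrm{id} - \gamma_\alpha\) on \(\Omega^1_{P,\ver}\). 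This is the entire first-order story; no further constraint appears there.

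The genuinely new --- and hard --- point is the degree-\(1\) commutation \(\phi\circ\dt{P,\nabla'} = \dt{P,\nabla}\circ\phi\) on \(\Omega^1_{P,\oplus}\), which constrains the degree-\(2\) behaviour. Unwinding the reconstructions of \(\dt{P,\nabla}\) on \(\Omega^1_{P,\hor}\) and on the vertical part, I expect the obstruction to \(\phi\) being a chain map to be a sum of two terms in \(\Omega^2_{P,\oplus}\): the curvature correction \(\bF[\nabla'] - \bF[\nabla]\), which is of degree \(\leq 2\) in \(\alpha\) since \(\bF[\nabla] = \rest{\nabla^2}{P}\), and the cross- and vertical--vertical terms by which \(\gamma_\alpha\) twists the product of \(\Omega_{P,\oplus}\). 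The hard part will be to show that, \emph{when \((\Omega_H,\dt{H})\) is the canonical prolongation of \((\Omega^1_H,\dt{H})\)}, the vanishing of this obstruction is equivalent to the \emph{canonically \((\Omega^1_H,\dt{H})\)-adapted} condition defining \(\pr{\fr{at}}_{\can}[\Omega^1_H]\) --- in particular that it does not depend on the base point \(\nabla\) --- which comes down to checking that the curvature correction and the \(\gamma_\alpha\)-twist are each forced into the sub-\(\ast\)-bimodule of \(\Omega^2_{P,\oplus}\) killed by the degree-\(2\) relations reconstructed from the canonical \(\Omega^2_H\), where they cancel, and that the surviving constraint, relating \(\gamma_\alpha\) and \(\bF[\nabla'] - \bF[\nabla]\) through \(\Omega^2_H\) and the second-order vertical calculus, is \(\bR\)-linear. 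It is exactly here that the hypothesis bites: for a non-canonical degree-\(2\) prolongation of \(\Omega^1_H\) the degree-\(2\) relations are too rigid to absorb the curvature correction, the dependence of the reconstructed \sodc{} on \(\nabla\) no longer factors through a linear quotient, and the clean affine moduli picture breaks down (this is the subtlety flagged before the theorem). Granting linearity and base-point independence, \(\pr{\fr{G}}\)-invariance of \(\pr{\fr{at}}_{\can}[\Omega^1_H]\) is immediate from the naturality built into \(\Sigma[\Omega^{\leq 2}_H]\), which conjugates the maps \(\gamma_\alpha\) and the field strengths simultaneously by the prolonged \(f_\ast\); and saturation of the quadric \(\pr{\fr{At}}[\Omega^1_H]\) under \(\pr{\fr{at}}_{\can}[\Omega^1_H]\)-translations follows because adding such an \(\alpha\) sends an object of \(\cG[\Omega^{\leq 2}_H]\) to an isomorphic one, hence again to an object. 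Thus \(\pr{\fr{At}}[\Omega^1_H]/\pr{\fr{at}}_{\can}[\Omega^1_H]\) is a well-defined subset of the affine space \(\pr{\fr{At}}/\pr{\fr{at}}_{\can}[\Omega^1_H]\), and it is a quadric, being the zero locus of the map \(\nabla \mapsto \bF[\nabla] - F[\nabla]\circ\dv{P}\) (of degree \(\leq 2\) in \(\nabla\)) pushed to the quotient.

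The descent is now formal and mirrors part~(2) of the first theorem. On objects, \(\nabla \mapsto (\Omega_{P,\oplus},\dt{P,\nabla})\) descends, by the characterisation of \(\pr{\fr{at}}_{\can}[\Omega^1_H]\) just obtained together with saturation, to a well-defined bijection of \(\pr{\fr{At}}[\Omega^1_H]/\pr{\fr{at}}_{\can}[\Omega^1_H]\) onto the object set of \(\cG[\Omega^{\leq 2}_H]/\ker\mu[\Omega^{\leq 2}_H]\) --- injectivity being the implication just established, surjectivity holding because every admissible \sodc{}, equipped with any prolongable \(\ast\)-preserving bimodule connection, is reconstructed up to identity-differentiability from the associated prolongable gauge potential --- and this bijection is \(\pr{\fr{G}}\)-equivariant. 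On arrows, since \(\Sigma[\Omega^{\leq 2}_H]\) is an equivalence and \(\mu[\Omega^{\leq 2}_H]\) is faithful with kernel exactly the identity-on-\(P\) isomorphisms of \sodc{}, the arrow map of \(\Sigma[\Omega^{\leq 2}_H]\) descends to a bijection on hom-sets, and functoriality and naturality are inherited. This yields the asserted groupoid isomorphism, and since the object set of \(\cG[\Omega^{\leq 2}_H]/\ker\mu[\Omega^{\leq 2}_H]\) is by construction the set of admissible \sodc{} on \(P\) up to identity-differentiability, the isomorphism exhibits the quadric \(\pr{\fr{At}}[\Omega^1_H]/\pr{\fr{at}}_{\can}[\Omega^1_H]\), as a \(\pr{\fr{G}}\)-space, as a \(\pr{\fr{G}}\)-equivariant moduli space of admissible \sodc{}. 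As indicated, the one step requiring real work --- and the one place the canonical-prolongation hypothesis is essential --- is the degree-\(2\) cancellation of the second paragraph.
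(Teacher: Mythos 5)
Your overall architecture coincides with the paper's: characterise when \(\id_P\) induces an arrow of \(\cG[\Omega^{\leq 2}_H]\) between the calculi reconstructed from \(\nabla\) and \(\nabla+\bA\), identify the resulting equivalence kernel with \(\pr{\fr{at}}_{\can}[\Omega^1_H]\) independently of the base point, and then run the same formal descent as in the first-order case (in the paper: Lemma~\ref{qpblem3} followed by Lemma~\ref{groupoidlemma}). However, your sketch of the decisive step contains a genuine error: the ``surviving constraint'' on \(\gamma_{\bA}\) is \emph{not} \(\bR\)-linear. Canonical adaptedness is the pair of conditions \eqref{toteq1}--\eqref{toteq2}, and \eqref{toteq2} is quadratic in the relative connection \(1\)-form, just as \(\bF[\nabla+\bA]-\bF[\nabla]=-\iu\left(\pr{\nabla}\circ\bA+\pr{\bA}\circ\nabla+\pr{\bA}\circ\bA\right)\) is quadratic in \(\bA\). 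So closure of \(\pr{\fr{at}}_{\can}[\Omega^1_H]\) under addition---which your descent requires---does not follow from linearity of any defining equation; it is true, but the paper obtains it only \emph{after} base-point independence, by composing the two identity-on-\(P\) arrows attached to \(\bA_1\) and then \(\bA_2\) (equivalently: the equivalence kernel is transitive, and by base-point independence the translation set is the same at every \(\nabla\)). As written, your justification for linearity would fail; since base-point independence is already part of your plan, you should replace it by this composition argument.

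Second, the mechanism you propose for the degree-\(2\) step is misdirected: there is no cancellation of the curvature correction against a product twist inside a sub-bimodule of \(\Omega^2_{P,\oplus}\). Two separate things must be proved. (a) Before any arrow can exist, \((\Omega_{P,\oplus},\dt{P,\nabla+\bA})\) must be an \emph{object}, i.e.\ \(\nabla+\bA\) must again be \((\Omega^1_H,\dt{H})\)-adapted (Proposition~\ref{sodciso}); with \(N\coloneqq\omega[\bA]\), this amounts to showing that \(\nabla N\) and \([N,N]\) are well-defined left \(H\)-covariant morphisms of \(P\)-\(\ast\)-bimodules, and it is precisely the left \(P\)-linearity of \([N,N]\) that uses the Woronowicz hypothesis, via the \(\Ad\)-equivariance of \(\rest{\dt{H}}{\Lambda^1_H}\) equivalent to the canonical presentation of \(\Lambda^2_H\) (Proposition~\ref{worprop}, Lemmata~\ref{qpblem1}--\ref{qpblem2}). (b) The candidate arrow \((\id_P)_\ast=\id+N\circ\Proj_1\) (inverse \(\id-N\circ\Proj_1\)) must extend \emph{multiplicatively} to \(\Omega_{P,\oplus}\); since \(\Omega_{P,\oplus}\) is presented by the braided relations among the degree-\(1\) generators, and the relation \(\mu\wedge\nu+\ca{\mu}{-1}\act\nu\wedge\ca{\mu}{0}=0\) holds in \(\Omega^2_{P,\ver}\) exactly because the prolongation is canonical, well-definedness is equivalent to \(N\) satisfying \eqref{toteq1}--\eqref{toteq2}, and conversely multiplicativity of a given identity arrow forces these identities---which is how base-point independence and the identification of the kernel actually come out. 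Your ``twist-plus-curvature cancellation'' picture conflates (a) and (b), and executed literally it would yield neither the adaptedness of \(\nabla+\bA\) nor the characterisation \eqref{toteq1}--\eqref{toteq2}; note also that your saturation argument presupposes (a), since without adaptedness the reconstructed differential \(\dt{P,\nabla+\bA}\) does not define an object at all. The remaining ingredients of your plan (\(\pr{\fr{G}}\)-invariance, the surjectivity via the homotopy of Proposition~\ref{sodcequiv}, and the formal descent) are sound once these points are repaired.
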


Next, in Section~\ref{sec4}, we illustrate our definitions and results in the case of crossed product algebras, which one views as trivial quantum principal bundles. As \'{C}a\'{c}i\'{c}--Mesland showed in the context of spectral triples~\cite{CaMe}, the group \(\fr{G}\) of gauge transformations and the affine space \(\fr{At}\) of gauge potentials of a crossed product by \(\bZ^n\) \emph{qua} trivial quantum principal \(\bT^n\)-bundle can be computed in terms of the degree \(1\) group cohomology of \(\bZ^n\) with coefficients in a certain group of unitaries and a certain \(\bR[\bZ^m]\)-module of noncommutative \(1\)-forms, respectively. Generalising their calculations from the commutative and cocommutative Hopf \(\ast\)-algebra \(\bC[\bZ^m] \cong \cO(\bT^m)\) to arbitary Hopf \(\ast\)-algebras requires the construction of novel \emph{ad hoc} degree \(1\) cohomology groups, which we term `lazy' on account of their formal resemblance to a construction of Bichon--Carnovale~\cite{BC}. These generalisations reduce appropriately to conventional group cohomology in the case of a group algebra and to Lie algebra cohomology in the case of the universal enveloping algebra of a real Lie algebra.

Let \(H\) be a Hopf \(\ast\)-algebra, let \(B\) be a right \(H\)-module \(\ast\)-algebra, and let \(M\) be an \(H\)-equivariant \(B\)-\(\ast\)-bimodule. On the one hand, we can define the degree \(1\) \emph{lazy \((B,M)\)-valued Sweedler cohomology} 
\(
	\HS^1_\ell(H;B,M) \coloneqq \ZS^1_\ell(H;B,M)/\BS^1_\ell(H;B,M)
\)
of \(H\), where \(\ZS^1_\ell(H;B,M)\) is the group of \emph{lazy Sweedler \(1\)-cocycles} and \(\BS^1_\ell(H;B,M)\) is the central subgroup of \emph{lazy Sweedler \(1\)-coboundaries}; this generalises Sweedler cohomology~\cite{Sweedler} through degree \(1\) to the case of not-necessarily-cocommutative Hopf \(\ast\)-algebras and non-trivial coefficients. On the other hand, we can refine the degree \(1\) Hoschchild cohomology of \(H\) with coefficients in \(M\) with the given right \(H\)-action and the trivial left \(H\)-action to degree \(1\) \emph{lazy \(M\)-valued Hoschchild cohomology}
\(
	\HH^1_\ell(H;M) \coloneqq \ZH^1_\ell(H;M)/\BH^1_\ell(H;M),
\)
where \(\ZH^1_\ell(H;M)\) is the real vector space of \emph{lazy Hochschild \(1\)-cocycles} and \(\BH^1_\ell(H;M)\) is the subspace of \emph{lazy Hochschild \(1\)-coboundaries}. Conjugation with respect to convolution on \(H\) now yields a representation of \(\ZS^1_\ell(H;B,M)\) on \(\ZH^1_\ell(H;M)\) that descends to a representation of \(\HS^1_\ell(H;B,M)\) on \(\HH^1_\ell(H;M)\). Furthermore, any \(H\)-equivariant \(\ast\)-derivation \(\partial : B \to M\) induces a canonical group \(1\)-cocycle \(\MC[\partial] : \ZS^1_\ell(H;B,M) \to \ZH^1_\ell(H;M)\) that descends, in turn, to a group \(1\)-cocycle \(\widetilde{MC}[\partial] : \HS^1_\ell(H;B,M) \to \HH^1_\ell(H;M)\).

We can now exemplify the use of lazy Sweedler and Hochschild cohomology to compute the group \(\fr{G}\), the real affine space \(\fr{At}\), and the affine-linear action of \(\fr{G}\) on \(\fr{At}\) for a trivial quantum principal \(H\)-bundle \(B \rtimes H\); note that none of this structure would be visible if we did not consider all possible admissible \fodc{} simultaneously.

\begin{proposition}
	Let \(P \coloneqq B \rtimes H\), so that \(B = \coinv{H}{P}\). Suppose that \((\Omega^1_B,\dt{B})\) is a an \(H\)-equivariant \fodc{} on \(B\), and endow \(P\) with the horizontal calculus \((\Omega^1_B,\dt{B};\Omega^1_B \rtimes H)\). We have a group isomorphism  \(\Op : \ZS^1_\ell(H;B,\Omega^1_B) \to \fr{G}\) and an isomorphism \(\Op : \ZH^1_\ell(H;\Omega^1_B) \iso \fr{At}\) of real affine spaces given, respectively, by
	\begin{gather*}
		\forall \sigma \in \ZS^1_\ell(H;B,\Omega^1_B), \, \forall h \in H, \, \forall b \in B, \quad \Op(\sigma)(hb) \coloneqq \cm{h}{1}\sigma(\cm{h}{2})b,\\
		\forall \mu \in \ZH^1_\ell(H;\Omega^1_B), \, \forall h \in H, \, \forall b \in B, \quad \Op(\mu)(hb) \coloneqq h \cdot \dt{B}(b) + \cm{h}{1} \cdot \mu(\cm{h}{2}) \cdot b;
	\end{gather*}
	these descend, respectively, to a group isomorphism \(\widetilde{\Op} : \HS^1_\ell(H;B,\Omega^1_B) \iso \Out(\fr{G})\) and an affine-linear isomorphism \(\widetilde{\Op} : \HH^1_\ell(H;\Omega^1_B) \iso \Out(\fr{At})\). Moreover for every lazy Sweedler \(1\)-cocycle \(\sigma \in \ZS^1_\ell(H;B,\Omega^1_B)\) and lazy Hochschild \(1\)-cocycle \(\mu \in \ZH^1_\ell(H;\Omega^1_B)\),
	\begin{gather*}
		\Op(\sigma) \act \Op(\mu) = \Op(\sigma \act \mu + \MC[\dt{B}](\sigma)),\\
		\widetilde{\Op}([\sigma]) \act \widetilde{\Op}([\mu]) = \widetilde{\Op}\mleft([\sigma]\act[\mu]+\widetilde{\MC}[\dt{B}]([\sigma])\mright).
	\end{gather*}
\end{proposition}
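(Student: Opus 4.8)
The plan is to exploit the reconstruction principle of Section~\ref{sec2}. For the trivial bundle \(P = B \rtimes H\), equipped with the left \(H\)-coaction \(\delta(hb) = \cm{h}{1} \otimes \cm{h}{2}b\) (so that \(\coinv{H}{P} = B\)), every \emph{vertical} \(H\)-covariant linear map out of \(P\) is determined by its restriction to the subalgebra \(H\), which by \(H\)-covariance is in turn encoded by a single linear map into the coinvariants. Concretely, for a vertical \(H\)-covariant linear endomorphism \(f\) of \(P\) the assignment \(\sigma_f \colon h \mapsto S(\cm{h}{1})f(\cm{h}{2})\) takes values in \(B\) (a one-line Sweedler computation using the antipode and \(H\)-covariance), and since \(\rest{f}{B} = \id_B\) one recovers \(f(hb) = \cm{h}{1}\sigma_f(\cm{h}{2})b\); likewise, for a vertical \(H\)-covariant linear map \(\nabla \colon P \to \Omega^1_{P,\hor} = \Omega^1_B \rtimes H\) (for which \(\coinv{H}{\Omega^1_{P,\hor}} = \Omega^1_B\)) the assignment \(\mu_\nabla \colon h \mapsto S(\cm{h}{1})\nabla(\cm{h}{2})\) takes values in \(\Omega^1_B\), and if moreover \(\rest{\nabla}{B} = \dt{B}\) the Leibniz rule gives \(\nabla(hb) = h \cdot \dt{B}(b) + \cm{h}{1}\cdot \mu_\nabla(\cm{h}{2})\cdot b\). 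So \(\Op\) is exactly the assignment \(\sigma \mapsto (hb \mapsto \cm{h}{1}\sigma(\cm{h}{2})b)\), resp.\ \(\mu \mapsto (hb \mapsto h \cdot \dt{B}(b) + \cm{h}{1}\cdot\mu(\cm{h}{2})\cdot b)\); it is manifestly injective, and it remains to match the conditions on both sides.

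To show \(\Op\) restricts to a bijection \(\ZS^1_\ell(H;B,\Omega^1_B) \to \fr{G}\), I would translate each defining property of a gauge transformation into a property of \(\sigma\) using the crossed-product relations of \(P\) and of \(\Omega^1_B \rtimes H\): \(\Op(\sigma)\) is unital and multiplicative iff \(\sigma\) is normalized and satisfies the (multiplicative) Sweedler \(1\)-cocycle identity; \(\Op(\sigma)\) is invertible iff \(\sigma\) is convolution-invertible, in which case \(\Op(\sigma)^{-1} = \Op(\sigma^{-1})\) (convolution inverse); \(\Op(\sigma)\) is \(\ast\)-preserving iff \(\sigma\) obeys the \(\ast\)-identity of a lazy Sweedler cochain; and---crucially---\(\Op(\sigma)\) induces via \(\id_{\Omega^1_B}\) a \emph{well-defined} \(H\)-comodule automorphism \(\Op(\sigma)_{\ast,\hor}\) of \(\Omega^1_B \rtimes H\) iff \(\sigma\) is \emph{lazy}, i.e.\ these four conditions together are exactly membership in \(\ZS^1_\ell(H;B,\Omega^1_B)\). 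Because \(\Op(\sigma)\) fixes \(B\) pointwise, \(\Op(\sigma)\circ\Op(\sigma') = \Op(\sigma\ast\sigma')\) is then immediate from the formula, so \(\Op\) is a group isomorphism. The affine statement is parallel: \(\Op(\mu)\) is an \(H\)-covariant \(\ast\)-derivation lifting \(\dt{B}\) iff \(\mu\) is a lazy Hochschild \(1\)-cocycle (the additive cocycle condition being the derivation property, together with the appropriate \(\ast\)-identity and laziness), the base point \(\Op(0)\colon hb \mapsto h\cdot\dt{B}(b)\) is a gauge potential, and \(\mu \mapsto \Op(\mu) - \Op(0)\) is an \(\bR\)-linear bijection onto the space \(\fr{at}\) of relative gauge potentials (the same parametrization applied to \(H\)-covariant \(\ast\)-derivations \(P \to \Omega^1_{P,\hor}\) vanishing on \(B\)); hence \(\Op\) is an isomorphism of real affine spaces.

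For the descent to the \(\Out\)-quotients I would unwind, in the trivial-bundle case, the definitions of \(\Inn(\fr{G})\) and \(\Inn(\fr{at})\) from Section~\ref{sec2} and check that under the two bijections above they match precisely the lazy Sweedler \(1\)-coboundaries \(\BS^1_\ell(H;B,\Omega^1_B)\) and the lazy Hochschild \(1\)-coboundaries \(\BH^1_\ell(H;\Omega^1_B)\)---both coboundary families being visibly the ``\(\Ad\) of a unitary of \(B\)'' ones---so that \(\Op\) passes to the claimed isomorphisms \(\widetilde{\Op}\) on \(\HS^1_\ell(H;B,\Omega^1_B) \cong \Out(\fr{G})\) and \(\HH^1_\ell(H;\Omega^1_B) \cong \Out(\fr{At})\). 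Finally, for the action formula I would expand \(\Op(\sigma)\act\Op(\mu) = \Op(\sigma)_{\ast,\hor}\circ\Op(\mu)\circ\Op(\sigma)^{-1}\) (the affine \(\fr{G}\)-action of Section~\ref{sec2}) on a general element \(hb\), using \(\Op(\sigma)^{-1} = \Op(\sigma^{-1})\), the Leibniz rule for \(\dt{B}\), and the facts that \(\Op(\sigma)_{\ast,\hor}\) is the identity on \(\Omega^1_B\) and satisfies \(\Op(\sigma)_{\ast,\hor}(p \cdot \omega \cdot p') = \Op(\sigma)(p)\cdot\omega\cdot\Op(\sigma)(p')\) for \(\omega \in \Omega^1_B\). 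The identity \(\sigma(\cm{h}{1})\sigma^{-1}(\cm{h}{2}) = \epsilon(h)1\) collapses one summand to \(h\cdot\dt{B}(b)\), and the remaining terms assemble into \(\Op(\sigma\act\mu)(hb) + \Op(\MC[\dt{B}](\sigma))(hb)\), where \((\sigma\act\mu)(h) = \sigma(\cm{h}{1})\cdot\mu(\cm{h}{2})\cdot\sigma^{-1}(\cm{h}{3})\) is the convolution-conjugation action and \(\MC[\dt{B}](\sigma)(h) = \sigma(\cm{h}{1})\cdot\dt{B}(\sigma^{-1}(\cm{h}{2}))\) is the Maurer--Cartan cocycle attached to \(\dt{B}\). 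The descended identity then follows by applying the quotient maps, using that \(\MC[\dt{B}]\) descends to the group \(1\)-cocycle \(\widetilde{\MC}[\dt{B}]\) recorded just before the statement.

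The step I expect to be the main obstacle is the equivalence flagged in the second paragraph: that \emph{laziness} of \(\sigma\) (resp.\ of \(\mu\)) is \emph{exactly} the condition for \(\Op(\sigma)\) (resp.\ \(\Op(\mu)\)) to descend to a well-defined \(H\)-covariant (auto)morphism of the crossed product \(\Omega^1_B \rtimes H\). This is where the otherwise \emph{ad hoc} notion of lazy (co)homology earns its keep, and the only point requiring genuinely new bookkeeping; the remainder is careful but routine manipulation of Sweedler notation and of the crossed-product commutation relations, where the principal hazard is keeping the conventions of Section~\ref{sec2} consistent---which copy of \(H\) embeds in \(P\), the exact form of those relations, and the direction of the \(\fr{G}\)-action---so as to place \(\sigma\) versus \(\sigma^{-1}\) correctly in the action formula.
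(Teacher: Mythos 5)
Your proposal is correct and follows essentially the same route as the paper: parametrise vertical $H$-covariant right $B$-linear maps out of $P = B \rtimes H$ by their ``logarithmic'' restriction $h \mapsto S(\cm{h}{1})f(\cm{h}{2})$ (Brzezi\'{n}ski's trivial-bundle dictionary), translate multiplicativity, $\ast$-compatibility, and well-definedness of the induced map on $\Omega^1_B \rtimes H$ condition-by-condition into the lazy Sweedler/Hochschild cocycle axioms, identify coboundaries with $\Ad_\upsilon$ resp.\ $\ad_\alpha$, and compute the action by splitting $\Op(\mu) = \Op(0) + \Op_0(\mu)$. Two harmless imprecisions to fix in the write-up: the $\rho_B(B)$-centralising half of laziness is already forced by multiplicativity of $\Op(\sigma)$ on the crossed product (only the $\rho_{\Omega^1_B}(\Omega^1_B)$ half comes from the horizontal lift), and in the final assembly the two remaining summands are $\Op_0(\sigma\act\mu)$ and $\Op_0(\MC[\dt{B}](\sigma))$ (linear parts), not the affine $\Op$'s, which would double-count $h \cdot \dt{B}(b)$.
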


\noindent We can similarly compute the second-order gauge theory of \(B \rtimes H\) in terms of suitable further refinements of lazy Sweedler and lazy Hochschild cohomology. As we shall show in forthcoming work~\cite{CaTi}, one can similarly analyse general quantum principal \(\Unit(1)\)-bundles with commutative base space \(C^\infty(M)\)  in terms of the degree \(1\) group cohomology of \(\bZ\) with coefficients in \(C^\infty(M,\Unit(1))\) and \(\Omega^1_{\mathrm{dR}}(M,\iu{}\bR)\); although \(\fr{G} \ltimes \fr{At}\) will only depend on \(M\), all other groupoids of interest will turn out to be sensitive to the underlying dynamics.

Finally, in Section~\ref{sec5}, we apply our formalism to the example of the non-trivial quantum principal \(\cO(\Unit(1))\)-bundle implicit to Manin's `Alterstraum'~\cite{Manin}. Let \(\theta \in \bR\) be a quadratic irrationality, and let \(\cA_\theta\) be the corresponding smooth noncommutative \(2\)-torus endowed with the canonical \sodc{} \((\Omega_{\cA_\theta},\dt{\cA_\theta})\). Then, by combining results of Schwarz~\cite{Schwarz98}, Dieng--Schwarz~\cite{DiengSchwarz}, Polishchuk--Schwarz~\cite{PolishchukSchwarz}, Po\-li\-shchuk \cite{Polishchuk}, and Vlasenko~\cite{Vlasenko}, we canonically assemble the self-Morita equivalence bimodules among the basic Heisenberg modules over \(\cA_\theta\) into a non-cleft principal \(\cO(\Unit(1))\)-comodule \(\ast\)-algebra \(P\) with base \(\coinv{\cO(\Unit(1))}{P} = \cA_\theta\). Moreover, using results of Poli\-shchuk--Schwarz~\cite{PolishchukSchwarz}, we canonically assemble Connes's constant curvature connections~\cite{Connes80} on the isotypical components into a prolongable gauge potential \(\nabla_0\) on \(P\) with respect to a certain second-order horizontal calculus constructed from \((\Omega_{\cA_\theta},\dt{\cA_\theta})\). From there, we can show that \(\fr{G} = \pr{\fr{G}} \cong \Unit(1)\), where \(\Unit(1)\) acts as the structure group on \(P\), that \(\Inn(\fr{G}) = \set{1}\), that every relative gauge potential is inner prolongable and given by supercommutation by a \(1\)-form in \(\bR^2 \subset \cA_{\theta}^{\oplus 2} = \Omega^1_{\cA_\theta}\), and that \(\fr{G} = \pr{\fr{G}}\) acts trivially on \(\fr{At} \cong \bR^2\). In particular, we can prove the following:

\begin{theorem}
	Let \(\e  = c_1\theta + d_1\) be the norm-positive fundamental unit of \(\mathbf{Q}[\theta]\), where \(c_1 \in \bN\) and \(d_1 \in \bZ\) are uniquely determined. Given \(q \in \bR^\times\), let \((\Omega^1_q,\dt{q})\) denote the corresponding \(q\)-deformed bicovariant \fodc{} on \(\cO(\Unit(1))\), so that \((\Omega^1_1,\dt{1})\) is the de Rham calculus. Then
	\[
		\forall q \in \bR^\times, \quad \pr{\fr{At}}[\Omega^1_q] = \begin{cases} \fr{At} &\text{if \(q = \e^2\),}\\ \emptyset &\text{else,}\end{cases}
	\]
	and for every \(\nabla \in \fr{At} = \pr{\fr{At}}[\Omega^1_{\e^2}]\), the curvature \(2\)-form \(F[\nabla]\) is non-zero and given by
		\[
			F[\nabla](\dt{\e^2}t) = - \iu{}\e c_1\vol_{\cA_\theta},
		\]
		where \(\dt{\e^2}t \coloneqq \tfrac{1}{2\pi\iu{}}\dt{\e^2}(z) \cdot z^{-1}\) and where \(\vol_{\cA_{\theta}} \coloneqq 1_{\cA_\theta} \in \cA_{\theta} = \Omega^2_{\cA_\theta}\). Furthermore, 
	\[
		\forall q \in \bR^\times, \quad \fr{at}[\Omega^1_q] = \pr{\fr{at}}_{\can}[\Omega^{1}_q] = \begin{cases} \fr{at} &\text{if \(q = \e\),}\\ 0 &\text{else,}\end{cases}
	\]
	so that \(\fr{At}/\fr{at}[\Omega^1_{\e^2}] = \pr{\fr{At}}[\Omega^1_{\e^2}]/\pr{\fr{at}}_{\can}[\Omega^{1}_{\e^2}] = \fr{At} \cong \bR^2\).
\end{theorem}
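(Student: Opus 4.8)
The plan is to reduce the entire statement to a spectral computation around the distinguished prolongable gauge potential \(\nabla_0\), exploiting the structural facts already established: \(\fr{G} = \pr{\fr{G}} \cong \Unit(1)\) acts trivially on \(\fr{At} = \pr{\fr{At}} \cong \bR^2\), and every \(\nabla \in \fr{At}\) has the form \(\nabla = \nabla_0 + [\omega,-]\) with \(\omega\) a \emph{constant} \(1\)-form in the real \(2\)-plane \(\bR^2 \subset \Omega^1_{\cA_\theta}\). The first step is to check that the field strength is insensitive to the relative gauge potential: since \(\nabla_0\) restricts on \(\Omega^1_{\cA_\theta}\) to the canonical differential \(\dt{\cA_\theta}\), which annihilates constant \(1\)-forms, and since \(\omega \wedge \omega = 0\) for any constant \(1\)-form, the graded-Leibniz expansion collapses:
\[
	\bF[\nabla] = \rest{\nabla^2}{P} = \rest{\nabla_0^2}{P} + \rest{[\nabla_0\omega,-]}{P} + \rest{[\omega\wedge\omega,-]}{P} = \rest{\nabla_0^2}{P} = \bF[\nabla_0].
\]
Consequently membership in \(\pr{\fr{At}}[\Omega^1_q]\) is all-or-nothing and turns purely on whether \(\bF[\nabla_0]\) factors through the \(q\)-vertical calculus.

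Next I would assemble the relevant \emph{spectral data}, each of which is governed by a \(q\)-integer \([n]_x \coloneqq 1 + x + \dots + x^{n-1}\). \emph{(a) The vertical calculus.} As \(\cO(\Unit(1))\) is commutative and cocommutative its quantum tangent space is one-dimensional, and for the \(q\)-deformed bicovariant \fodc{} \((\Omega^1_q,\dt{q})\) the quantum Maurer--Cartan form evaluates on the grading group-like as \(\varpi(z^n) = [n]_q\,\eta\), so that on the isotypical component \(\cE_n\) one has \(\dv{P}(\xi_n) = [n]_q\,\xi_n\eta\), the right \(P\)-module structure of \(\Omega^1_{P,\ver}\) being twisted by \(q\) along the \(\bZ\)-grading. \emph{(b) Connes's curvature.} On \(\cE_n\), \(\bF[\nabla_0]\) is left multiplication by the curvature of Connes's constant curvature connection, namely \(2\pi\iu\,\tfrac{c_n}{\e^n}\,\vol_{\cA_\theta}\); here, letting \(g\) be the determinant-one integer matrix fixing \(\theta\) under Möbius action with bottom row \((c_1,d_1)\)---so that \(\e = c_1\theta + d_1\) is its eigenvalue on \((\theta,1)\) and \(\e' = \e^{-1}\) the conjugate---the Chebyshev recursion for the powers of \(g\) gives \(\e^n = c_n\theta + d_n\) and \(c_n = c_1\,\tfrac{\e^n - \e'^n}{\e - \e'}\), whence \(\tfrac{c_n}{\e^n}\,\e^{2n} = c_n\e^n = c_1\e\,[n]_{\e^2}\). \emph{(c) The horizontal bimodule structure.} This is the crucial geometric input, provided by the real-multiplication structure of \(\cA_\theta\) in the spirit of Schwarz, Polishchuk--Schwarz and Vlasenko: the isotypical component \(\cE_n\) acts trivially on \(\cA_\theta\) and rescales the constant \(1\)-forms \(\bR^2 \subset \Omega^1_{\cA_\theta}\) by the scalar \(\e^n\), hence the volume \(2\)-form \(\vol_{\cA_\theta}\) by \(\e^{2n}\). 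Combining (b) and (c) yields \(\bF[\nabla_0](\xi_n) = 2\pi\iu\,\tfrac{c_n}{\e^n}\,\vol_{\cA_\theta}\cdot\xi_n = 2\pi\iu\,c_1\e\,[n]_{\e^2}\,\xi_n\vol_{\cA_\theta}\).

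The stated identities now follow by comparison of one-parameter families of \(q\)-integers. A prolongable gauge potential lies in \(\pr{\fr{At}}[\Omega^1_q]\) exactly when \(\bF[\nabla_0]\) factors as \(F[\nabla] \circ \dv{P}\) for an \(H\)-covariant \(P\)-\(\ast\)-bimodule morphism \(F[\nabla]\); writing \(G \coloneqq F[\nabla](\eta) \in \Omega^2_{\cA_\theta}\), the left-module equation \(2\pi\iu\,c_1\e\,[n]_{\e^2}\,\xi_n\vol_{\cA_\theta} = [n]_q\,\xi_n G\) (\(n \in \bZ\)) forces \([n]_{\e^2} \propto [n]_q\), i.e.\ \(q = \e^2\), and then \(G = 2\pi\iu\,c_1\e\,\vol_{\cA_\theta}\); when \(q = \e^2\) the right-module and \(\ast\)-structure conditions on \(F[\nabla]\) become automatic, the \(q = \e^2\)-twist of \(\Omega^1_{P,\ver}\) being exactly compensated by the \(\e^{2n}\)-twist of \(\Omega^2_{P,\hor}\), and every \(\nabla\) qualifies by the first step. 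Unwinding \(\dt{\e^2}t = \tfrac{1}{2\pi\iu}\dt{\e^2}(z)\,z^{-1}\) through the \(q\)-commutation relations and the normalisation \(\vol_{\cA_\theta} = 1_{\cA_\theta}\) converts \(G\) into \(F[\nabla](\dt{\e^2}t) = -\iu\e c_1\,\vol_{\cA_\theta}\), which is nonzero because \(c_1 \in \bN\). Symmetrically, a relative gauge potential \([\omega,-]\) lies in \(\fr{at}[\Omega^1_q]\) iff it factors as \(A \circ \dv{P}\) for an \(H\)-covariant bimodule morphism \(A\); by (c), \([\omega,\xi_n] = (\e^n - 1)\,\xi_n\omega = (\e - 1)\,[n]_{\e}\,\xi_n\omega\) for \emph{every} \(\omega \in \bR^2\), so this holds for the whole of \(\fr{at}\) precisely when \(q = \e\), and for nothing but \(0\) otherwise; the second-order notion \(\pr{\fr{at}}_{\can}[\Omega^1_q]\) is pinned down by the same computation because every relative gauge potential here is inner prolongable. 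Finally \(\e \neq 1\) gives \(\e^2 \neq \e\), so \(\fr{at}[\Omega^1_{\e^2}] = \pr{\fr{at}}_{\can}[\Omega^1_{\e^2}] = 0\) while \(\pr{\fr{At}}[\Omega^1_{\e^2}] = \fr{At}\), and both quotients collapse to \(\fr{At} \cong \bR^2\).

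The step I expect to be the main obstacle is input (c): determining the precise \(P\)-bimodule structure of \(\Omega^1_{P,\hor}\) and \(\Omega^2_{P,\hor}\)---equivalently, how the self-Morita-equivalence bimodules \(\cE_n\) twist the base differential forms---and confirming that Connes's constant curvature connections genuinely assemble into a single \(\nabla_0\) on \(P\) whose field strength is, isotypical component by isotypical component, proportional to \([n]_{\e^2}\). This is exactly where the arithmetic of the real quadratic field \(\mathbf{Q}[\theta]\) enters and where Manin's ``Alterstraum''---the expectation that the distinguished calculus parameter be a power of the fundamental unit---becomes a theorem rather than an analogy; the remaining manipulations reduce to bookkeeping with \(q\)-integers and the \(\ast\)-structure, modulo the normalisation constants appearing in the closed formula for \(F[\nabla]\).
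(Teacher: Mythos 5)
Your route is essentially the paper's: you use the structural facts from Proposition~\ref{heisthm} (triviality of the gauge action, $\pr{\fr{at}}=\fr{at}\cong\bR^2$, and constancy of $\bF$ on $\fr{At}$, so that adaptedness is all-or-nothing), and then exploit that $\Omega^1_{P,\ver}$, $\Omega^1_{P,\hor}$, $\Omega^2_{P,\hor}$ are free left $P$-modules on the coinvariant generators $\du_q t$, $\du\tau^1,\du\tau^2$, $\vol_B$, so that $(\Omega^1_q,\dt{q})$-adaptedness reduces to the existence of a covariant bimodule morphism between modules whose right actions are twisted by $q$ versus $\e^2$ (resp.\ $\e$); your isotypical comparison of $[n]_q$ against $[n]_{\e^2}$ (resp.\ $[n]_\e$) is the same computation the paper carries out on a single $p\in P_m$, and your value $F[\nabla](\du_{\e^2}t)=-\iu\e c_1\vol_{\cA_\theta}$ agrees. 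One bookkeeping caution: with the paper's twists ($\du\tau^j\cdot p=\e^{-m}p\cdot\du\tau^j$ and $\vol_B\cdot p=\e^{-2m}p\cdot\vol_B$ for $p\in P_m$) one has $\bF[\nabla_0](\xi_n)=2\pi\e^{-n}c_n\,\xi_n\cdot\vol_B=2\pi c_1\e\,[n]_{\e^2}\,\vol_B\cdot\xi_n$, so your item (b) (``left multiplication by $2\pi\iu\,c_n\e^{-n}\vol$'') and the direction of the twist used in your displayed chain are each off, in compensating ways; since the $q$-integer comparison is insensitive to which side you normalise on, this does not change the conclusions, but it should be straightened out.

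The one genuine gap is the claim $\pr{\fr{at}}_{\can}[\Omega^1_q]=\fr{at}[\Omega^1_q]$, which you justify by saying that every relative gauge potential is inner prolongable. That is not what canonical $(\Omega^1_q,\dt{q})$-adaptedness asks for: beyond prolongability and $(\Omega^1_q,\dt{q})$-adaptedness, the definition imposes the two extra conditions \eqref{toteq1} and \eqref{toteq2} on the relative connection $1$-form $\omega[\bA]$, and inner prolongability does not imply them. Here the verification is immediate and should be added: for $q\neq\e$ there is nothing to check since $\fr{at}[\Omega^1_q]=0$, while for $q=\e$ one has $\omega[\bA](\du_\e t)\in\Span_{\bR}\set{\du\tau^1,\du\tau^2}$, and \eqref{toteq1} holds because $\du\tau^1,\du\tau^2$ commute with $B$ and satisfy $\du\tau^j\wedge\du\tau^k+\du\tau^k\wedge\du\tau^j=0$, which together with $\du\tau^j\wedge\du\tau^j=0$ (and $\Omega^2_q=0$) also gives \eqref{toteq2}; this is exactly how the paper disposes of the point. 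With that verification inserted (and the normalisations in (b)--(c) made consistent), your argument matches the paper's proof.
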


\noindent Thus, when \(q = \e\), we obtain a \(q\)-monopole over \(\cA_\theta\) exactly analogous to the \(q\)-monopole constructed by Brzezi\'{n}ski--Majid~\cite{BrM} on the \(q\)-deformed complex Hopf fibration; furthermore, distinct gauge potentials are gauge-inequivalent and yield non-isomorphic admissible \fodc{}. Note that the elementary algebraic number theory of the real quadratic irrational \(\theta\) appears again and again in all constructions and calculations.

\addtocontents{toc}{\protect\setcounter{tocdepth}{0}}
\subsection*{Notation and conventions} We shall mostly follow the notation, terminology, and conventions of Beggs--Majid~\cite{BeM} with certain exceptions. In this work, unless otherwise stated, all algebras are unital \(\bC\)-algebras and all modules are unital modules.

We shall use Sweedler notation as follows. If \(h\) is an element of a bialgebra \(H\), its coproduct will be denoted by \(\Delta(h) \eqqcolon \cm{h}{1} \otimes \cm{h}{2} \in H \otimes H\). If \(p\) is an element of a left \(H\)-comodule \(P\), we denote its left \(H\)-coaction by \(\delta(p) \eqqcolon \ca{p}{-1} \otimes \ca{p}{0} \in H \otimes P\), and we denote the subspace of left \(H\)-coinvariants of \(P\) by \(\coinv{H}{P}\); similarly, if \(q\) is an element of a right \(H\)-comodule \(Q\), we denote its right \(H\)-coaction by \(\delta(q) \eqqcolon \ca{q}{0} \otimes \ca{q}{1} \in P \otimes H\), and we denote the subspace of right \(H\)-coinvariants of \(Q\) by \(Q^{\mathrm{co}H}\). We shall also use the corresponding higher-order Sweedler notation, e.g.,
\begin{align*}
	\forall h \in H, \quad (\id_H \otimes \Delta) \circ \Delta(h) &= (\Delta \otimes \id) \circ \Delta(h) \eqqcolon \cm{h}{1} \otimes \cm{h}{2} \otimes \cm{h}{3},\\
	\forall p \in P, \quad (\id \otimes \delta) \circ \delta(p) &= (\Delta \otimes \id) \circ \delta(p) \eqqcolon \ca{p}{-2} \otimes \ca{p}{-1} \otimes \ca{p}{0},\\
	\forall q \in Q, \quad (\delta \otimes \id) \circ \delta(q) &= (\id \otimes \Delta) \circ \delta(q) \eqqcolon \ca{q}{0} \otimes \ca{q}{1} \otimes \ca{q}{2}.
\end{align*}

We use the following conventions related to bimodules. Given a \(\ast\)-algebra \(B\), a \emph{\(B\)-\(\ast\)-bimodule} is a \(B\)-bimodule \(M\)  with a \(\bC\)-antilinear involution \(\ast : M \to M\) satisfying
\[
	\forall b_1,b_2 \in B,\,\forall m \in M, \quad (b_1 \cdot m \cdot b_2)^\ast = b_2^\ast \cdot m^\ast \cdot b_1^\ast.
\]
In this case, we set \(M_{\sa} \coloneqq \set{m \in M \given m^\ast = m}\) and define the \emph{centre} of \(M\) with respect to \(B\) by
\(
	\Zent_B(M) \coloneqq \set{m \in M \given \forall b \in B,\, b \cdot m = m \cdot b};
\)
moreover, given a subset \(S \subset M\), we define the \emph{centraliser} of \(S\) in \(B\) by
\(
	\Cent_B(S) \coloneqq \set{b \in B \given \forall m \in S,\, b \cdot m = m \cdot b}
\). Given a \(\ast\)-algebra \(B\) and a \(B\)-\(\ast\)-module \(M\), we say that a derivation \(\partial : B \to M\) is a \emph{\(\ast\)-derivation} whenever
\[
	\forall b \in B, \quad \partial(b^\ast) = -\partial(b)^\ast,
\]
and we denote by \(\Der_B(M)\) the \(\bR\)-vector space of all \(\ast\)-derivations \(B \to M\); given \(m \in M\), the resulting \emph{inner} \(\ast\)-derivation \(\ad_m \in \Der_B(M)\) is defined by setting
\[
	\forall b \in B, \quad \ad_m(b) \coloneqq [m,b].
\]
Note that Beggs--Majid use the opposite convention, where a \(\ast\)-derivation is \(\ast\)-preserving (i.e., intertwines \(\bC\)-antilinear involutions). A \(\ast\)-algebra \(B\) admits an obvious category of \(B\)-\(\ast\)-bimodules, where a morphism is a left and right \(B\)-linear map that is \(\ast\)-preserving.

Finally, let us set the following terminology and conventions related to differential calculi. A \emph{graded \(\ast\)-algebra} is a \(\bZ\)-graded algebra \(\Omega = \bigoplus_{k \in \bZ} \Omega^k\) with \(\Omega^k = 0\) for \(k < 0\) together with a \(\bC\)-linear involution \(\ast : \Omega \to \Omega\) such that \(\ast(\Omega^k) = \Omega^k\) for all \(k \in \bZ\) and
\[
	\forall m,n \in \bZ_{\geq 0}, \, \forall \alpha \in \Omega^m, \, \forall \beta \in \Omega^n, \quad (\alpha \wedge \beta)^\ast = (-1)^{mn} \beta^\ast \wedge \alpha^\ast.
\]
Given a graded \(\ast\)-algebra \(\Omega\) and \(k \in \bZ\), a \emph{degree \(k\) \(\ast\)-derivation} is a \(\bC\)-linear map \(\partial : \Omega \to \Omega\) satisfying \(\partial(\Omega^m) \subseteq \Omega^{m+k}\) for \(m \in \bZ_{\geq 0}\) and
\begin{gather*}
	\forall m,n \in \bZ_{\geq 0}, \, \forall \alpha \in \Omega^m, \, \forall \beta \in \Omega^n, \quad \partial(\alpha \wedge \beta) = \partial(\alpha) \wedge \beta + (-1)^{km} \alpha \wedge \partial(\beta),\\
	\forall \alpha \in \Omega, \quad \partial(\alpha^\ast) = -\partial(\alpha)^\ast.
\end{gather*}
Hence, given a \(\ast\)-algebra \(B\), a \emph{\(\ast\)-differential calculus} on \(B\) is a pair \((\Omega_B,\dt{B})\), where \(\Omega_B\) is a graded \(\ast\)-algebra with \(\Omega^0_B = B\) and \(\dt{B} : \Omega_B \to \Omega_B\) is a degree \(1\) \(\ast\)-derivation with \(\dt{B}^2 = 0\), such that \(\Omega_B\) is generated over \(B\) by \(\dt{B}(B) \subset \Omega^1_B\); in this case, we say that \((\Omega_B,\dt{B})\) is a \emph{prolongation} of the first-order differential calculus (\fodc{}) \((\Omega^1_B,\dt{B})\) on \(B\). In particular, a \emph{second-order differential calculus} (\emph{\sodc{}}) on a \(\ast\)-algebra \(B\) is a \(\ast\)-differential calculus \((\Omega_B,\dt{B})\) on \(B\), such that \(\Omega_B\) is truncated at degree \(2\), i.e., \(\Omega^k_B = 0\) for \(k > 2\).

\subsection*{Acknowledgements} The author is grateful to Edwin Beggs, Viqar Husain, Will Jagy, Andrey Krutov, R\'{e}amonn \'{O} Buachalla, Matilde Marcolli, Bram Mesland, Pavle Pand\v{z}i\'{c}, Karen Strung, and V.\ Karthik Timmavajjula for useful comments and conversations. This research was supported by NSERC Discovery Grant RGPIN-2017-04249 and by a Harrison McCain Foundation Young Scholar Award.

\addtocontents{toc}{\protect\setcounter{tocdepth}{2}}
\section{Gauge theory to first order}\label{sec2}

In this section, we use explicit groupoid equivalences to [re]formulate the notions of gauge transformation and principal connection on quantum principal bundles in terms of horizontal covariant derivatives, thereby yielding a computationally tractable generalisation of the gauge action on principal connections to quantum principal bundles with general first-order differential calculi (\textsc{fodc}); in the process, we obtain a gauge-equivariant moduli space of all relevant \fodc{} inducing the same vertical and horizontal calculi.

\subsection{Deconstruction of quantum principal bundles to first order}\label{firstorderdecon}

Let \(H\) be a Hopf \(\ast\)-algebra over \(\bC\), and let \((\Omega^1_H,\dt{H})\) be a bicovariant \fodc{} on \(H\). We give a novel review of the theory of quantum principal \(H\)-bundles \`{a} la Brzezi\'{n}ski--Majid~\cite{BrM} incorporating recent insights of Beggs--Majid~\cite{BeM} while recovering earlier insights of \DJ{}ur\dj{}evi\'{c}~\cite{Dj97}. In particular, we shall see how a strong bimodule connection decomposes  the total \textsc{fodc} of a quantum principal \(H\)-bundle into the direct sum of independent vertical and horizontal calculi, where the latter can be viewed as a horizontal lift of the induced \textsc{fodc} on the base.

Let us first recall the relevant notion of topological quantum principal \(H\)-bundle.

\begin{definition}[Brzezi\'{n}ski--Hajac~\cite{BH}]
	A left \(H\)-comodule \(\ast\)-algebra \(P\) is \emph{principal} if and only if both of the following conditions hold:
\begin{enumerate}
	\item the canonical map \(P \otimes P \to H \otimes P\) defined by \(p \otimes \p{p} \mapsto \ca{p}{-1} \otimes \ca{p}{0}\p{p}\) descends to a bijection \(P \otimes_B P \to H \otimes P\);
	\item there exists a unital bicovariant map \(\omega : H \to P \otimes P\), such that \(m_P \circ \omega = \epsilon(\cdot) 1_P\), where \(m_P : P \otimes P \to P\) is multiplication in  \(P\) and \(\epsilon\) is the counit of \(H\).
\end{enumerate}
\end{definition}

\begin{example}[Schneider~\cite{Schneider}]
	If the Hopf \(\ast\)-algebra \(H\) is cosemisimple, e.g., if \(H = \cO(G)\) for \(G\) a compact Lie group or \(H = \bC[\Gamma]\) for \(\Gamma\) a discrete group, then \(P\) is principal if and only if its canonical map is surjective. 
\end{example}

\begin{remark}[Baum--De Commer--Hajac~\cite{BDH}]
	One can interpret principality of a left \(H\)-co\-module \(\ast\)-algebra \(P\) as topological freeness of the \(H\)-coaction on \(P\) together with vestigial local triviality of the topological quantum principal \(H\)-bundle \(P\) to the extent that quantum associated vector bundles are finitely generated and projective as \(\coinv{H}{P}\)-modules \cite{DGH}*{Cor.\ 2.6}.
\end{remark}

We now consider differentiable quantum principal \(H\)-bundles compatible with the given bicovariant \textsc{fodc} \((\Omega^1_H,\dt{H})\) on \(H\). To this end, we will find it convenient to encode \((\Omega^1_H,\dt{H})\) in terms of the \(\ast\)-closed left \(H\)-subcomodule
\begin{equation}
	\Lambda^1_H \coloneqq (\Omega^1_H)^{\operatorname{co}H}
\end{equation}
of right \(H\)-covariant \(1\)-forms and \emph{quantum Maurer--Cartan form} \(\varpi_H : H \to \Lambda^1_H\) given by
\begin{equation}
	\forall h \in H, \quad \varpi_H(h) \coloneqq \dt{H}(\cm{h}{1})\cdot S(\cm{h}{2});
\end{equation}
the relevant properties of \(\Lambda^1_H\) and \(\varpi_H\) are given by the following definition.

\begin{definition}
	A \emph{left crossed \(H\)-\(\ast\)-module} is a left \(H\)-module and comodule \(V\) over \(\bC\) together with a conjugate-linear involution \(\ast : V \to V\), such that
\begin{gather*}
	\forall h \in H, \, \forall v \in V, \quad \delta(h \act v) = \cm{h}{1} \ca{v}{-1} S(\cm{h}{3}) \otimes \cm{h}{2} \act \ca{v}{0},\\
	\forall h \in H, \, \forall v \in V, \quad (h \act v)^\ast = S(h)^\ast \act v^\ast,\\
	\forall v \in V, \quad  \delta(v^\ast) = (\ca{v}{-1})^\ast \otimes (\ca{v}{0})^\ast;
\end{gather*}
in this case, a \(V\)-valued \emph{\(1\)-cocycle} is a \(\bC\)-linear map \(\varpi : H \to V\) satisfying
\begin{gather*}
	\forall h,k \in H, \quad \varpi(hk) = h \act \varpi(k) + \varpi(h)\epsilon(k),\\
	\forall h \in H, \quad \varpi(h)^\ast = \varpi(S(h)^\ast),
\end{gather*}
which we call \emph{\(\Ad\)-covariant} whenever it also satisfies
\[
	\forall h \in H, \quad \delta(\varpi(h)) = \cm{h}{1}S(\cm{h}{3}) \otimes \varpi(\cm{h}{2}),
\]
\end{definition}

One can now show that \(\Lambda^1_H\) defines a left crossed \(H\)-\(\ast\)-module with respect to the left \emph{adjoint} action of \(H\) given by
\begin{equation}
	\forall h \in H, \, \forall \omega \in \Lambda^1_H, \quad h \act \omega \coloneqq \cm{h}{1} \cdot \omega \cdot S(\cm{h}{2})
\end{equation}
and that \(\varpi_H\) defines a surjective \(\Lambda^1_H\)-valued \(\Ad\)-covariant \(1\)-cocycle. Furthermore, one can show that the bicovariant \textsc{fodc} \((\Omega^1_H,\dt{H})\) can be recovered from the data \((\Lambda^1_H,\varpi_H)\) up to isomorphism~\cite{BeM}*{Proof of Thm.\ 2.26}. The proof of this fact, \emph{mutatis mutandis}, permits the following noncommutative generalisation---essentially due to \DJ{}ur\dj{}evi\'{c}---of a locally free action of a connected Lie group and its orbitwise differential calculus.

\begin{definition}[cf.\ \DJ{}ur\dj{}evi\'{c}~\cite{Dj97}*{Lemma 3.1}]
	The \emph{vertical calculus} of a left \(H\)-comodule \(\ast\)-algebra \(P\) with respect to the bicovariant \textsc{fodc} \((\Omega^1_H,\dt{H})\) is the pair \((\Omega^1_{P,\ver},\dv{P})\), where:
	\begin{enumerate}
		\item \(\Omega^1_{P,\ver} \coloneqq \Lambda^1_H \otimes P\) is a left \(H\)-comodule \(P\)-\(\ast\)-bimodule with respect to the left \(H\)-coaction, \(P\)-bimodule structure, and \(\ast\)-structure given respectively by
		\begin{gather}
		\forall p \in P, \, \forall \omega \in \Lambda^1_H, \quad \delta(\omega \otimes p) \coloneqq \ca{\omega}{-1}\ca{p}{-1} \otimes \ca{\omega}{0} \otimes \ca{p}{0},\\
		\forall p,q,\p{q} \in P, \, \forall \omega \in \Lambda^1_H, \quad q \cdot (\omega \otimes p) \cdot \p{q} \coloneqq \ca{q}{-1} \act \omega  \otimes \ca{q}{0}p\p{q},\\
		\forall \omega \in \Lambda^1_H, \, \forall p \in P, \quad (\omega \otimes p)^\ast \coloneqq \ca{p}{-1}^\ast \act \omega^\ast  \otimes \ca{p}{0}^\ast;
	\end{gather}
		\item \(\dv{P} : P \to \Omega^1_{P,\ver}\) is the left \(H\)-covariant \(\ast\)-derivation defined by
		\begin{equation}
		\forall p \in P, \quad \dv{P}(p) \coloneqq (\varpi_H \otimes \id) \circ \delta(p) = \varpi_H(\ca{p}{-1}) \otimes \ca{p}{0}.
	\end{equation}
	\end{enumerate}
	We say that the bicovariant \textsc{fodc} \((\Omega^1_H,\dt{H})\) is \emph{locally freeing} for \(P\) whenever 
	\begin{equation}
		\Omega^1_{P,\ver} = P \cdot \dv{P}(P),
	\end{equation}
	so that \((\Omega^1_{P,\ver},\dv{P})\) defines a left \(H\)-covariant \textsc{fodc} on \(P\).
\end{definition}

\begin{example}
	If \(P\) is a principal \(H\)-comodule \(\ast\)-algebra, then the universal \fodc{} on \(H\) is locally freeing for \(P\).
\end{example}

\begin{remark}
	If \((\Omega^1_H,\dt{H})\) is locally freeing for \(P\), we can interpret \((\Omega^1_{P,\ver},\dv{P})\) as the orbitwise differential calculus of the locally free \(H\)-coaction on \(P\); in particular, we can interpret \(\dv{P} : P \to \Omega^1_{P,\ver}\) as the dualised ``infinitesimal coaction'' of \(H\) on \(P\) with respect to \((\Omega^1_H,\dt{H})\).
\end{remark}

We can now give a suitable formulation of the standard notion of differentiable quantum principal \(H\)-bundle. From now on, let \(P\) be a principal left \(H\)-comodule \(\ast\)-algebra with \(\ast\)-subalgebra of coinvariants \(B \coloneqq \coinv{H}{P}\).

\begin{definition}[Brzezi\'{n}ski--Majid~\cite{BrM}*{Def.\ 4.9}, cf.\ Beggs--Majid~\cite{BeM}*{\S 5.4}]
	Let \((\Omega^1_P,\dt{P})\) be a left \(H\)-covariant \textsc{fodc} on \(P\). Then \((P;\Omega^1_P,\dt{P})\) defines a \emph{quantum principal \((H;\Omega^1_H,\dt{H})\)-bundle}  if and only if the \emph{vertical map} \(\ver[\dt{P}] : \Omega^1_P \to \Omega^1_{P,\ver}\) given by
	\begin{equation}\label{verdef}
		\forall p, \p{p} \in P, \quad \ver[\dt{P}]\mleft(p \cdot \dt{P}(\p{p})\mright) \coloneqq p \cdot \dv{P}(\p{p})
	\end{equation}
	is well-defined and surjective with kernel \(P \cdot \dt{P}(B) \cdot P\), where \(B \coloneqq \coinv{H}{P}\); in this case, we say that \((\Omega^1_P,\dt{P})\) is \emph{\((H;\Omega^1_H,\dt{H})\)-principal}.
\end{definition}

\begin{example}[Brzezi\'{n}ski--Majid~\cite{BrM}*{\S 4}]
	Let \((\Omega^1_{H,u},\dt{H,u})\) be the universal \fodc{} on \(H\). Then the universal \fodc{} on \(P\) is \((H;\Omega^1_{H,u},\dt{H,u})\)-principal.
\end{example}

\begin{remark}[Brzezi\'{n}ski--Majid~\cite{BrM}*{\S 4.1}]
	Suppose that \((\Omega^1_P,\dt{P})\) on \(P\) is an \((H;\Omega^1_H,\dt{H})\)-principal \fodc{} on \(P\). We view its vertical map \(\ver[\dt{P}] : \Omega^1_{P} \to \Omega^1_{P,\ver}\) as encoding contraction of \(1\)-forms with fundamental vector fields, so that \(\ker\ver[\dt{P}] = P \cdot \dt{P}(B) \cdot P\) is the \(P\)-\(\ast\)-bimodule of horizontal \(1\)-forms; note that \(\ker[\dt{P}]\) is correctly generated as a \(P\)-bimodule by the basic \(1\)-forms \(B \cdot \dt{P}(B)\).
\end{remark}

\begin{remark}\label{locallyfreeremark}
	If \(P\) admits an \((H;\Omega^1_H,\dt{H})\)-principal \fodc{} \((\Omega^1_P,\dt{P})\), then \((\Omega^1_H,\dt{H})\) is locally freeing for \(P\). Indeed, since \(\dv{P} = \ver[\dt{P}] \circ \dt{P}\), it follows that
	\[
		\Omega^1_{P,\ver} = \ver[\dt{P}](\Omega^1_P) = \ver[\dt{P}](P \cdot \dt{P}) = P \cdot \dv{P}(P).
	\]
	Thus, following \DJ{}ur\dj{}evi\'{c}~\cite{Dj97}*{\S 3}, we can also view \(\ver[\dt{P}]\) as encoding restriction of \(1\)-forms to orbitwise \(1\)-forms.
\end{remark}

The following now gives the most commonly used notion of principal connection on a differentiable quantum principal \(H\)-bundle.

\begin{definition}[Brzezi\'{n}ski--Majid~\cite{BrM}*{\S 4.2}, Hajac~\cite{Hajac}*{Def.\ 2.1}, Beggs--Majid~\cite{BeM}*{\S 5.4}]
	Suppose that \((\Omega^1_P,\dt{P})\) is an \((H;\Omega^1_H,\dt{H})\)-principal \textsc{fodc} on the principal left \(H\)-comodule \(\ast\)-algebra \(P\). A \emph{connection} on the quantum principal \((H;\Omega^1_H,\dt{H})\)-bundle \((P;\Omega^1_P,\dt{P})\) is a left \(H\)-covariant left \(P\)-linear map \(\Pi : \Omega^1_P \to \Omega^1_P\) satisfying \[\Pi^2 = \Pi, \quad \ker\Pi = \ker\ver[\dt{P}];\] in this case, \(\Pi\) is a \emph{bimodule connection} if and only if it is right \(P\)-linear and \(\ast\)-preserving, and it is \emph{strong} if and only if 
	\begin{equation}\label{strong}
		(\id-\Pi) \circ \dt{P}(P) \subseteq P \cdot \dt{P}(B).
	\end{equation}
\end{definition}

\begin{remark}[cf.\ Atiyah~\cite{Atiyah57}]\label{sesremark}
	Suppose that \((\Omega^1_P,\dt{P})\) is an \((H;\Omega^1_H,\dt{H})\)-principal \fodc{} on \(P\), so that we have a short exact sequence
	\begin{equation}\label{ses}
		0 \to P \cdot \dt{P}(B) \cdot P \to \Omega^1_P \xrightarrow{\ver[\dt{P}]} \Omega^1_{P,\ver}\to 0
	\end{equation}
	of left \(H\)-comodule \(P\)-\(\ast\)-bimodules generalising the Atiyah sequence of a smooth principal bundle~\cite{Atiyah57}. Then a connection \(\Pi\) corresponds to a splitting of~\eqref{ses} in the category of left \(H\)-comodule left \(P\)-modules, which is a splitting in the category of left \(H\)-comodule \(P\)-\(\ast\)-bimodules if and only if \(\Pi\) is a bimodule connection. In particular, a connection \(\Pi\) induces the right splitting \((\ver[\dt{P}] \circ \Pi)^{-1}\) and the left splitting \(\id-\Pi\).
\end{remark}

\begin{remark}
	Suppose that \((\Omega^1_P,\dt{P})\) is an \((H;\Omega^1_H,\dt{H})\)-principal \fodc{} on \(P\). If the quantum principal \((H;\Omega^1_H,\dt{H})\)-bundle \((P;\Omega^1_P,\dt{P})\) admits a strong connection, then every connection on \((P;\Omega^1_P,\dt{P})\) is strong.
\end{remark}

\begin{remark}
	Suppose that \((\Omega^1_P,\dt{P})\) is an \((H;\Omega^1_H,\dt{H})\)-principal \fodc{} on \(P\). If \(\Pi\) is a connection on \((P;\Omega^1_P,\dt{P})\), then the restriction of \[(\rest{\ver[\dt{P}] \circ \Pi}{\ran\Pi})^{-1} : \Omega^1_{P,\ver} \to \Omega^1_P\] to \(\Lambda^1_H\) is its (absolute) connection \(1\)-form in the sense of Brzezi\'{n}ski--Majid~\cite{BrM}*{Prop.\ 4.10} and \DJ{}ur\dj{}evi\'{c}~\cite{Dj97}*{Def.\ 4.1}. Thus, by a result of Beggs--Majid~\cite{BeM}*{Prop. 5.54}, bimodule connections correspond to regular connections in the sense of \DJ{}ur\dj{}evi\'{c}~\cite{Dj97}*{Def.\ 4.3}.
\end{remark}

The algebraic significance of the strong connection condition \eqref{strong} was already noted by Hajac~\cite{Hajac}, while its functional-analytic significance has recently been demonstrated by \'{C}a\'{c}i\'{c}--Mesland~\cite{CaMe}*{Appx.\ A}. However, a conceptual understanding of this condition has only recently been provided by Beggs--Majid~\cite{BeM}*{\S 5.4.2}: under standard hypotheses, \eqref{strong} is equivalent to requiring that \(\coinv{H}{\ker(\ver[\dt{P}])} = B \cdot \dt{P}(B)\), i.e., that a \(1\)-form is basic if and only if it is horizontal and \(H\)-coinvariant. We shall repeatedly use the following abstract restatement of this reformulation.

\begin{definition}
	Recall that \(B \coloneqq \coinv{H}{P}\) for \(P\) a principal left \(H\)-comodule \(\ast\)-subalgebra. Let \(E\) be a \(B\)-\(\ast\)-bimodule. A \emph{horizontal lift} of \(E\) is a pair \((\tilde{E},\iota)\), where \(\tilde{E}\) is a left \(H\)-covariant \(P\)-\(\ast\)-bimodule and \(\iota : E \inj \coinv{H}{\tilde{E}}\) is an injective morphism of \(B\)-\(\ast\)-bimodules, such that
	\(
		\tilde{E} = P \cdot \iota(E) \cdot P
	\);
	we say that \((\tilde{E},\iota)\) is \emph{projectable} whenever \(\coinv{H}{\tilde{E}} = \iota(E)\).
\end{definition}

\begin{proposition}[Beggs--Majid~\cite{BeM}*{Cor.\ 5.53}]\label{strongprop}
	Let \(E\) be a \(B\)-\(\ast\)-bimodule and let \((\tilde{E},\iota)\) be a horizontal lift of \(E\). Then  \((\tilde{E},\iota)\) is projectable if and only if \(\tilde{E} = P \cdot \iota(E)\).
\end{proposition}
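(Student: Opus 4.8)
The plan is to deduce the proposition from the structure theorem for relative Hopf modules over the Hopf--Galois extension \(B \subseteq P\); the only non-formal ingredient is that theorem together with faithful flatness, and everything else is a short diagram chase. Since \(P\) is a principal left \(H\)-comodule \(\ast\)-algebra, condition (1) of principality says exactly that \(B \subseteq P\) is a Hopf--Galois extension, while condition (2) (the strong connection \(\omega\)) guarantees that \(P\) is faithfully flat as a left and as a right \(B\)-module (recall also that a Hopf \(\ast\)-algebra automatically has bijective antipode \(h \mapsto S(h^\ast)^\ast\)). Hence the functor sending a left \(H\)-comodule left \(P\)-module \(N\) with \(\delta_N(p \cdot n) = \delta_P(p)\,\delta_N(n)\) to its space of coinvariants \(\coinv{H}{N}\) is an equivalence onto left \(B\)-modules, with quasi-inverse \(E \mapsto P \otimes_B E\) and with counit the multiplication map; see \cite{Schneider} (or \cite{BeM}, where the inverse of multiplication is produced concretely from \(\omega\)).

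First I would apply this with \(N = \tilde{E}\), forgetting its right \(P\)-action and \(\ast\)-structure, which play no role here: multiplication is then a left \(P\)-linear, left \(H\)-colinear isomorphism \(P \otimes_B \coinv{H}{\tilde{E}} \iso \tilde{E}\); in particular \(\tilde{E} = P \cdot \coinv{H}{\tilde{E}}\). Given this, the forward implication is immediate: if \((\tilde{E},\iota)\) is projectable, then \(\coinv{H}{\tilde{E}} = \iota(E)\), whence \(\tilde{E} = P \cdot \iota(E)\).

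For the converse, suppose \(\tilde{E} = P \cdot \iota(E)\). Since \(\iota(E) \subseteq \coinv{H}{\tilde{E}}\) is an inclusion of left \(B\)-modules and \(P\) is flat over \(B\), the induced map \(P \otimes_B \iota(E) \to P \otimes_B \coinv{H}{\tilde{E}}\) is injective, and its image under the isomorphism above is precisely \(P \cdot \iota(E) = \tilde{E}\); thus \(P \otimes_B \iota(E) = P \otimes_B \coinv{H}{\tilde{E}}\). Applying \(P \otimes_B (-)\) to the short exact sequence \(0 \to \iota(E) \to \coinv{H}{\tilde{E}} \to \coinv{H}{\tilde{E}}/\iota(E) \to 0\) (using flatness for left-exactness) yields \(P \otimes_B \bigl(\coinv{H}{\tilde{E}}/\iota(E)\bigr) = 0\), so by faithful flatness \(\coinv{H}{\tilde{E}} = \iota(E)\), i.e.\ \((\tilde{E},\iota)\) is projectable.

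So there is no genuine obstacle beyond the structure theorem and faithful flatness of \(P\) over \(B\) — this is exactly the point at which principality of \(P\) (rather than mere differentiability) enters. If a self-contained treatment were wanted, I would instead write out the explicit left \(H\)-colinear splitting of the multiplication map built from the strong connection, namely \(m \mapsto \omega(\ca{m}{-1})^{(1)} \otimes \omega(\ca{m}{-1})^{(2)} \cdot \ca{m}{0}\) (with \(\omega(h)\) written as \(\omega(h)^{(1)} \otimes \omega(h)^{(2)}\)), and check that it lands in \(P \otimes_B \coinv{H}{\tilde{E}}\) and inverts multiplication; but that is precisely the routine verification one would rather cite than reproduce.
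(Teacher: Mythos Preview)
The paper does not itself prove this proposition; it is quoted from Beggs--Majid \cite{BeM}*{Cor.\ 5.53} with no argument supplied, so there is no in-paper proof to compare against. Your approach via the structure theorem for relative Hopf modules (the ``generalised Hopf module lemma'', which the paper invokes elsewhere as \cite{BeM}*{Lemma 5.29}) is correct and is almost certainly the argument underlying the cited corollary.

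One cosmetic point: you phrase the converse as an appeal to faithful flatness of \(P\) over \(B\). What you actually use is (i) that \(P\) is flat over \(B\), so \(P \otimes_B \iota(E) \to P \otimes_B \coinv{H}{\tilde{E}}\) is injective, and (ii) that the functor \(P \otimes_B (-)\), being an equivalence onto relative Hopf modules, reflects isomorphisms. Both of these follow directly from principality via the strong connection --- indeed, your final paragraph already sketches the explicit splitting \(m \mapsto \omega(\ca{m}{-1})^{(1)} \otimes_B \omega(\ca{m}{-1})^{(2)}\ca{m}{0}\) that establishes (ii) --- so the argument stands without needing faithful flatness as a separately stated fact.
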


We now expand upon Beggs--Majid's characterisation of strong connections to reinterpret a strong bimodule connection on a regular quantum principal \((H;\Omega^1_H,\dt{H})\)-bundle as a splitting of the total differential calculus into the direct sum of the vertical calculus and a projectible horizontal lift of the basic calculus.

\begin{proposition}[cf. \DJ{}ur\dj{}evi\'{c}~\cite{Dj97}*{Prop.\ 4.6, Thm.\ 4.12}, Beggs--Majid~\cite{BeM}*{Prop.\ 5.54}]\label{analysisthm}
	Suppose that \((\Omega^1_P,\dt{P})\) is an \((H;\Omega^1_H,\dt{H})\)-principal \textsc{fodc} on the principal left \(H\)-comodule \(\ast\)-algebra \(P\), such that quantum principal \((H;\Omega^1_H,\dt{H})\)-bundle \((P;\Omega^1_P,\dt{P})\) admits a bimodule connection. Define the restriction of \((\Omega^1_P,\dt{P})\) to an \textsc{fodc} on \(B \coloneqq \coinv{H}{P}\) by 
	\begin{equation}
		(\Omega^1_B,\dt{B}) \coloneqq (B \cdot \dt{P}(B),\rest{\dt{P}}{B}),
	\end{equation}
	Then the left \(H\)-covariant \(P\)-\(\ast\)-submodule
	\begin{equation}
		\Omega^1_{P,\hor} \coloneqq \ker \ver[\dt{P}] = P \cdot \dt{P}(B) \cdot P
	\end{equation}
	of \(\Omega^1_P\) together with the inclusion \(\Omega^1_B \inj \coinv{H}{\Omega^1_{P,\hor}}\) defines a horizontal lift of \(\Omega^1_B\), which is projectable if and only if every bimodule connection on \((P;\Omega^1_P,\dt{P})\) is strong. In this case, for every strong bimodule connection \(\Pi\) on \((P;\Omega^1_P,\dt{P})\):
	\begin{enumerate}
		\item \(\nabla_\Pi \coloneqq (\id-\Pi) \circ \dt{P} : P \to \Omega^1_{P,\hor}\) 
		is a left \(H\)-covariant \(\ast\)-derivation satisfying
		\[
			\Omega^1_{P,\hor} = P \cdot \nabla_\Pi(P), \quad \rest{\nabla_\Pi}{B} = \dt{B};
		\]
		\item the map \(\psi_\Pi : \Omega^1_P \to \Omega^1_{P,\ver} \oplus \Omega^1_{P,\hor}\) given by
		\begin{equation*}
			\forall \omega \in \Omega^1_P, \quad \psi_\Pi(\omega) \coloneqq \left(\ver[\dt{P}](\omega),(\id-\Pi)(\omega)\right)
		\end{equation*}
		defines a left \(H\)-covariant isomorphism of \(P\)-\(\ast\)-bimodules, such that
		\begin{equation*}
			\forall p \in P, \quad \psi_\Pi \circ \dt{P}(p) = \left(\dv{P}(p),\nabla_\Pi(p)\right).
		\end{equation*}
	\end{enumerate}
\end{proposition}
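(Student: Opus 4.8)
The plan is to establish the four assertions in turn, each reducing to the definitions together with Proposition~\ref{strongprop}, the short exact sequence~\eqref{ses}, and the splitting it acquires from a bimodule connection as in Remark~\ref{sesremark}.

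\emph{Horizontal lift.} First I would check that $\Omega^1_{P,\hor} = \ker\ver[\dt{P}] = P\cdot\dt{P}(B)\cdot P$ is a left $H$-comodule $P$-$\ast$-subbimodule of $\Omega^1_P$: it is an $H$-subcomodule as the kernel of the left $H$-covariant $P$-bimodule map $\ver[\dt{P}]$, and it is $\ast$-closed since $B$ is a $\ast$-subalgebra and $\dt{P}$ is a $\ast$-derivation. Left $H$-covariance of $\dt{P}$ forces $\dt{P}(B)\subseteq\coinv{H}{\Omega^1_P}$, hence $\Omega^1_B = B\cdot\dt{P}(B)\subseteq\coinv{H}{\Omega^1_P}$; since also $\Omega^1_B\subseteq P\cdot\dt{P}(B)\cdot P = \Omega^1_{P,\hor}$, one gets $\Omega^1_B\subseteq\coinv{H}{\Omega^1_{P,\hor}}$, and the inclusion $\iota$ is an injective morphism of $B$-$\ast$-bimodules. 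Finally $\Omega^1_{P,\hor} = P\cdot\dt{P}(B)\cdot P = P\cdot(1\cdot\dt{P}(B))\cdot P = P\cdot\iota(\Omega^1_B)\cdot P$ because $1\in B$, which is exactly the point already noted above.

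\emph{Projectability versus strength.} By Proposition~\ref{strongprop}, $(\Omega^1_{P,\hor},\iota)$ is projectable if and only if $\Omega^1_{P,\hor} = P\cdot\iota(\Omega^1_B) = P\cdot\dt{P}(B)$. Now fix any bimodule connection $\Pi$, which exists by hypothesis; since $\Pi^2 = \Pi$, the map $\id-\Pi$ is idempotent and left $P$-linear with $\ran(\id-\Pi) = \ker\Pi = \ker\ver[\dt{P}] = \Omega^1_{P,\hor}$. If every bimodule connection is strong, then writing a general $\omega\in\Omega^1_{P,\hor} = \ran(\id-\Pi)$ in the form $\omega = (\id-\Pi)\bigl(\sum_i p_i\,\dt{P}(p_i')\bigr) = \sum_i p_i\,(\id-\Pi)\dt{P}(p_i') \in P\cdot[(\id-\Pi)\circ\dt{P}](P)\subseteq P\cdot P\cdot\dt{P}(B) = P\cdot\dt{P}(B)$ via~\eqref{strong} shows $\Omega^1_{P,\hor}= P\cdot\dt{P}(B)$, hence projectability; this forward direction can also be shortened by recalling that, once one strong connection exists, every connection is strong. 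Conversely, if $\Omega^1_{P,\hor} = P\cdot\dt{P}(B)$, then for any bimodule connection $\Pi$ one has $(\id-\Pi)\circ\dt{P}(P)\subseteq\ran(\id-\Pi) = \Omega^1_{P,\hor} = P\cdot\dt{P}(B)$, which is precisely~\eqref{strong}, so $\Pi$ is strong.

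\emph{The decomposition.} Now suppose we are in the projectable case and fix a strong bimodule connection $\Pi$; set $\nabla_\Pi\coloneqq(\id-\Pi)\circ\dt{P}$. Its image lies in $\ran(\id-\Pi) = \Omega^1_{P,\hor}$, and it is left $H$-covariant as a composite of left $H$-covariant maps. The Leibniz rule for $\nabla_\Pi$ follows at once from the fact that $\id-\Pi$ is a morphism of $P$-$\ast$-bimodules (here I use two-sided $P$-linearity of the \emph{bimodule} connection), and $\nabla_\Pi(p^\ast) = -\nabla_\Pi(p)^\ast$ follows from $\dt{P}(p^\ast) = -\dt{P}(p)^\ast$ together with $\ast$-preservation of $\Pi$. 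Left $P$-linearity of $\id-\Pi$ gives $\Omega^1_{P,\hor} = \ran(\id-\Pi) = (\id-\Pi)(P\cdot\dt{P}(P)) = P\cdot\nabla_\Pi(P)$, and $\rest{\nabla_\Pi}{B} = \dt{B}$ because $\dt{P}(B)\subseteq\Omega^1_B\subseteq\ker\ver[\dt{P}] = \ker\Pi$. For $\psi_\Pi$: it lands in $\Omega^1_{P,\ver}\oplus\Omega^1_{P,\hor}$ as $\id-\Pi$ ranges in $\ker\Pi = \Omega^1_{P,\hor}$, and it is a morphism of left $H$-comodule $P$-$\ast$-bimodules since both $\ver[\dt{P}]$ (which is $\ast$-preserving because $\dv{P}$ is a $\ast$-derivation) and $\id-\Pi$ are; it is injective because $\ver[\dt{P}](\omega) = 0$ forces $\omega\in\ker\Pi$, whence $(\id-\Pi)(\omega) = \omega$, and surjective because $(\eta,\xi)$ is the image of $\sigma(\eta)+\xi$, where $\sigma\coloneqq(\rest{\ver[\dt{P}]\circ\Pi}{\ran\Pi})^{-1}$ is the right splitting of~\eqref{ses} from Remark~\ref{sesremark}. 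Finally $\psi_\Pi\circ\dt{P}(p) = (\ver[\dt{P}]\circ\dt{P}(p),(\id-\Pi)\circ\dt{P}(p)) = (\dv{P}(p),\nabla_\Pi(p))$ using $\dv{P} = \ver[\dt{P}]\circ\dt{P}$ from Remark~\ref{locallyfreeremark}.

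The argument is entirely formal once Proposition~\ref{strongprop} is granted; the only point requiring genuine care is bookkeeping which features of the bimodule connection are invoked where. Right $P$-linearity and $\ast$-preservation of $\Pi$ are precisely what promote $\nabla_\Pi$ from a bare left $P$-linear lift of $\dt{B}$ to an honest $\ast$-derivation and $\psi_\Pi$ from an isomorphism of left $H$-comodule left $P$-modules to one of left $H$-comodule $P$-$\ast$-bimodules.
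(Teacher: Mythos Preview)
Your proof is correct and follows essentially the same route as the paper's: reduce projectability to the equality \(\Omega^1_{P,\hor}=P\cdot\dt{P}(B)\) via Proposition~\ref{strongprop}, establish the equivalence with strength of bimodule connections by using that \(\ran(\id-\Pi)=\Omega^1_{P,\hor}\), and then read off the properties of \(\nabla_\Pi\) and \(\psi_\Pi\) from the splitting of~\eqref{ses}. The only difference is that you spell out explicitly the horizontal-lift verification and the injectivity/surjectivity of \(\psi_\Pi\), whereas the paper takes these as immediate from the definitions and from Remark~\ref{sesremark}.
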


\begin{proof}
	By Proposition~\ref{strongprop}, \(\Omega^1_{P,\hor}\) is a projectable horizontal lift of \(\Omega^1_B\) if and only if \(\Omega^1_{P,\hor} = P \cdot \Omega^1_B\). On the one hand, if \((P;\Omega^1_P,\dt{P})\) admits a strong connection \(\Pi\), then
	\[
		\Omega^1_{P,\hor} = (\id-\Pi)(\Omega^1_P) = (\id-\Pi)(P \cdot \dt{P}(P)) = P \cdot (\id-\Pi) \circ \dt{P}(P) = P \cdot P \cdot \dt{P}(B) = P \cdot \Omega^1_B;
	\]
	on the other hand, if \(\Omega^1_{P,\hor} = P \cdot \Omega^1_B\), then, for every connection \(\Pi\) on \((P;\Omega^1_P,\dt{P})\),
	\[
		(\id-\Pi) \circ \dt{P}(B) \subset \Omega^1_{P,\hor} = P \cdot \Omega^1_B = P \cdot \dt{P}(B).
	\]
	
	Now, suppose that \(\Pi\) is a strong bimodule connection on \((P;\Omega^1_P,\dt{P})\). First, since \(\Pi\) is a left \(H\)-covariant morphism of \(P\)-\(\ast\)-bimodules, \(\nabla_\Pi \coloneqq (\id-\Pi) \circ \dt{P}\) is a left \(H\)-covariant \(\ast\)-derivation; since \(\ker \Pi = \Omega^1_{P,\hor}\) and \(\Pi^2 = \Pi\), it follows that \(\rest{\nabla_\Pi}{B} = \dt{B}\) and
	\[
		\Omega^1_{P,\hor} = (\id-\Pi)(P \cdot \dt{P}(P)) = P \cdot (\id-\Pi)\circ\dt{P}(P) = P \cdot \nabla_\Pi(P).
	\]
	Next, in light of Remark~\ref{sesremark}, the map \(\psi_\Pi : \Omega^1_P \to \Omega^1_{P,\ver} \oplus \Omega^1_{P,\hor}\) is simply the isomorphism of left \(H\)-comodule \(P\)-\(\ast\)-bimodules induced by splitting of the short exact sequence~\eqref{ses} of left \(H\)-comodule \(P\)-\(\ast\)-bimodules defined by \(\Pi\) as a bimodule connection. Finally, for all \(p \in P\),
	\[
		\psi_\Pi \circ \dt{P}(p) = (\ver[\dt{P}] \circ \dt{P}(p),(\id-\Pi) \circ \dt{P}(p)) = (\dv{P}(p),\nabla_\Pi(p)). \qedhere
	\]
\end{proof}

Thus, if a quantum principal \((H;\Omega^1_H,\dt{H})\)-bundle \((P;\Omega^1_P,\dt{P})\) admits strong bimodule connection, then a choice of strong bimodule connection \(\Pi\) is equivalent to a decomposition (up to isomorphism) of \((\Omega^1_P,\dt{P})\) as a direct sum of left \(H\)-covariant \textsc{fodc}
\begin{equation}\label{decomposition}
	(\Omega^1_P,\dt{P}) \cong (\Omega^1_{P,\ver},\dv{P}) \oplus (\Omega^1_{P,\hor},\nabla_\Pi).
\end{equation}
Here, the left \(H\)-covariant \textsc{fodc} \((\Omega^1_{P,\ver},\dv{P})\) on \(P\) is completely determined by the left \(H\)-coaction on \(P\) and by the bicovariant \textsc{fodc} \((\Omega^1_H,\dt{H})\) on \(H\), while the left \(H\)-covariant \textsc{fodc} \((\Omega^1_{P,\hor},\nabla_\Pi)\)  on \(P\) is a projectable horizontal lift of \((\Omega^1_B,\dt{B})\)  in the sense that
\[
	\left(\coinv{H}{\Omega^1_{P,\hor}},\rest{\nabla_\Pi}{B}\right) = (\Omega^1_B,\dt{B}).
\]
This strongly suggests that variation of the (principal) connection \(\Pi\) can be viewed as tantamount to variation of the lift \(\nabla_\Pi : P \to \Omega^1_{P,\hor}\) of \(\dt{B} : B \to \Omega^1_B\), a change of perspective that we shall explore in the next section and justify in \S\ref{reconsec}.

\subsection{Gauge transformations and gauge potentials}

Let \(H\) be a Hopf \(\ast\)-algebra, and let \(P\) be a principal left \(H\)-comodule \(\ast\)-algebra with algebra of coinvariants \(B \coloneqq \coinv{H}{P}\). We shall now will study the problem of lifting a \textsc{fodc} \((\Omega^1_B,\dt{B})\) on \(B\) to a left \(H\)-covariant \textsc{fodc} \((\Omega^1_{P,\hor},\nabla)\) on \(P\), such that \((\coinv{H}{\Omega^1_{P,\hor}},\rest{\nabla}{B})\) recovers \((\Omega^1_B,\dt{B})\) up to isomorphism. As \DJ{}ur\dj{}evi\'{c} observed~\cite{Dj98}, this suggests encoding principal connections through their induced horizontal covariant derivatives; we shall see that this also yields a tentative definition of gauge transformation for quantum principal bundles with general \fodc{} in the spirit of \'{C}a\'{c}i\'{c}--Mesland~\cite{CaMe}*{\S 3}. Our definitions will be rigorously justified in \S\ref{reconsec}.

We begin with the following definition, which encodes a choice of \fodc{} on the base \(B\) together with compatible left \(H\)-covariant \(P\)-\(\ast\)-bimodule of horizontal \(1\)-forms.

\begin{definition}[cf.\ \DJ{}ur\dj{}evi\'{c}~\cite{Dj10}*{\S\ 3.1}]
	A \emph{(first-order) horizontal calculus} on the principal left \(H\)-co\-module \(\ast\)-algebra \(P\) is a quadruple \((\Omega^1_B,\dt{B};\Omega^1_{P,\hor},\iota)\), where:
	\begin{enumerate}
		\item \((\Omega^1_B,\dt{B})\) is a (trivially left \(H\)-covariant) \fodc{} on \(B \coloneqq \coinv{H}{P}\);
		\item \((\Omega^1_{P,\hor},\iota)\) is a projectable horizontal lift of the \(B\)-\(\ast\)-bimodule \(\Omega^1_B\).
	\end{enumerate}
\end{definition}

\begin{example}
	Let \((\Omega^1_H,\dt{H})\) be a bicovariant \fodc{} for \(H\), and suppose that \((\Omega^1_P,\dt{P})\) is an \((H;\Omega^1_H,\dt{H})\)-principal \fodc{} on \(P\), such that the quantum principal \((H;\Omega^1_H,\dt{H})\)-bundle \((P;\Omega^1_P,\dt{P})\) admits a strong bimodule connection. Then, by Proposition~\ref{analysisthm}, the triple
	\begin{equation}
		(\Omega^1_B,\dt{B};\Omega^1_{P,\hor},\iota) \coloneqq \left(B \cdot \dt{P}(B),\rest{\dt{P}}{B};\ker\ver[\dt{P}],\rest{\id_{\Omega^1_P}}{B \cdot \dt{P}(B)}\right)
	\end{equation}
	defines a first-order horizontal calculus on \(P\), which we view as the \emph{canonical} first-order horizontal calculus induced by the \((H;\Omega^1_H,\dt{H})\)-principal \fodc{} \((\Omega^1_P,\dt{P})\).
\end{example}

Assume, therefore, that \(P\) admits a first-order horizontal calculus \((\Omega^1_B,\dt{B};\Omega^1_{P,\hor},\iota)\), which we now fix. Let us also assume that \(\dt{B} : B \to \Omega^1_B\) admits a lift to a left \(H\)-covariant \(\ast\)-derivation \(P \to \Omega^1_{P,\hor}\). To simplify notation, we suppress the inclusion map \(\iota\) and identify \(\Omega^1_B\) with its image in \(\Omega^1_{P,\hor}\); hence, where convenient, we denote the first-order horizontal calculus \((\Omega^1_B,\dt{B};\Omega^1_{P,\hor},\iota)\) by the triple \((\Omega^1_B,\dt{B};\Omega^1_{P,\hor})\).

We begin by considering automorphisms of horizontal lifts; this will permit us to define gauge transformations as automorphisms of \(\Omega^1_{P,\hor}\) \emph{qua} horizontal lift of \(\Omega^1_B\).

\begin{definition}
	Let \(E\) be a \(B\)-\(\ast\)-bimodule, where \(B \coloneqq \coinv{H}{P}\), and let \((\tilde{E},\iota)\) be a horizontal lift of \(E\). An \emph{automorphism} of \((\tilde{E},\iota)\) is a pair \((\phi,\phi_\ast)\), where \(\phi : P \to P\) is a left \(H\)-covariant \(\ast\)-automorphism satisfying \(\rest{\phi}{B} = \id_B\) and \(\phi_\ast : \tilde{E} \to \tilde{E}\) is a left \(H\)-covariant \(\ast\)-preserving \(\bC\)-linear bijection satisfying \(\phi_\ast \circ \iota = \iota\) and
	\begin{equation}
		\forall p, \p{p} \in P, \, \forall \eta \in \tilde{E}, \quad \phi_\ast(p \cdot \eta \cdot \p{p}) = \phi(p) \cdot \phi_\ast(\eta) \cdot \phi(\p{p}).
	\end{equation}
	We denote the group of all automorphisms of \((\tilde{E},\iota)\) by \(\Aut(\tilde{E},\iota)\).
\end{definition}

\begin{remark}\label{asthor}
Suppose that \(E\) is a \(B\)-\(\ast\)-bimodule with horizontal lift \((\tilde{E},\iota)\). Since \(\iota(E)\) generates \(E\) as a \(P\)-bimodule, an automorphism \((\phi,\phi_\ast) \in \Aut(\tilde{E},\iota)\) is uniquely determined by \(\phi\). By mild abuse of notation, we can therefore identify \(\Aut(\tilde{E},\iota)\) with the subgroup of \(\Aut(P)\) consisting of left \(H\)-covariant \(\phi \in \Aut(P)\), such that \(\rest{P}{B} = \id_B\) and the map
\[
	\phi_\ast : \tilde{E} \to \tilde{E}, \quad p \cdot \iota(e) \cdot \p{p} \mapsto \phi(p) \cdot \iota(e) \cdot \phi(\p{p})
\]
is well-defined and bijective; with this convention, it follows that
\[
	(\phi \mapsto \phi_\ast) : \Aut(\tilde{E},\iota) \to \GL(\tilde{E})
\]
is a group homomorphism.
\end{remark}

If \(B \coloneqq \coinv{H}{P}\) is not central in \(P\) (or even if \(B\) is central in \(P\) but does not centralise the lifted \(B\)-\(\ast\)-bimodule), then the inner automorphisms will form a non-trivial central subgroup of the group of automorphisms of a horizontal lift. More broadly, the significance of inner automorphisms in noncommutative gauge theory has been pointed out by Connes~\cite{Connes96}*{p.\ 165} in the context of the spectral action on spectral triples.

\begin{definition}
	Let \(E\) be a \(B\)-\(\ast\)-bimodule, and let \((\tilde{E},\iota)\) be a horizontal lift of \(E\). We say that an automorphism \(\phi \in \Aut(\tilde{E},\iota)\) is \emph{inner} if and only if
	\[
		\phi = \Ad_v \coloneqq (p \mapsto v p v^\ast), \quad \phi_\ast = \Ad_v \coloneqq (\eta \mapsto v \cdot \eta \cdot v^\ast)
	\]
	for some \(v \in \Unit(B)\). We denote the subset of all inner automorphisms of \((\tilde{E},\iota)\) by \(\Inn(\tilde{E},\iota)\).
\end{definition}

\begin{proposition}\label{innergaugeprop1}
	Let \(E\) be a \(B\)-\(\ast\)-bimodule, and let \((\tilde{E},\iota)\) be a horizontal lift of \(E\). For every \(v \in \Unit(B)\), the automorphism \(\Ad_v\) of \(P\) defines an inner automorphism of \((\tilde{E},\iota)\) if and only if \(v \in \Cent_B(B \oplus E)\). Hence, \(\Inn(\tilde{E},\iota)\) defines a central subgroup of \(\Aut(\tilde{E},\iota)\).
\end{proposition}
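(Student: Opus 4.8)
The plan is to unwind the definitions of \emph{horizontal lift}, \emph{automorphism of a horizontal lift}, and \emph{inner automorphism}, and to observe that conjugation by a unitary $v \in \Unit(B)$ respects every structural ingredient automatically \emph{except} the vertical condition $\rest{\phi}{B} = \id_B$ and the normalisation $\phi_\ast \circ \iota = \iota$; these two are precisely what force $v \in \Cent_B(B \oplus E)$. Concretely: for $v \in \Unit(B)$ one first checks that $\Ad_v = (p \mapsto v p v^\ast)$ is a left $H$-covariant $\ast$-automorphism of $P$ with inverse $\Ad_{v^\ast}$, left $H$-covariance holding because $v, v^\ast \in B = \coinv{H}{P}$ and $\delta$ is an algebra homomorphism, so that $\delta(v p v^\ast) = (1 \otimes v)\,\delta(p)\,(1 \otimes v^\ast) = \ca{p}{-1} \otimes v\ca{p}{0}v^\ast$. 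The same computation, using that $\tilde{E}$ is a left $H$-comodule $P$-$\ast$-bimodule, shows that the conjugation $\Ad_v \coloneqq (\eta \mapsto v \cdot \eta \cdot v^\ast)$ on $\tilde{E}$ is a left $H$-covariant, $\ast$-preserving, $\bC$-linear bijection satisfying $\Ad_v(p \cdot \eta \cdot \p{p}) = \Ad_v(p) \cdot \Ad_v(\eta) \cdot \Ad_v(\p{p})$ (insert $v^\ast v = 1$ twice).

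Hence the pair $(\Ad_v,\Ad_v)$ meets all requirements in the definition of an automorphism of $(\tilde{E},\iota)$ except possibly two, which I would treat separately. First, $\rest{\Ad_v}{B} = \id_B$ says $v b v^\ast = b$ for all $b \in B$; right-multiplying by $v$ and using $v^\ast v = 1$, this is equivalent to $v$ being central in $B$, i.e.\ $v \in \Cent_B(B)$. Second, since $\iota$ is an injective morphism of $B$-$\ast$-bimodules, $\Ad_v \circ \iota = \iota$ says $v \cdot \iota(e) \cdot v^\ast = \iota(v e v^{-1}) = \iota(e)$ for all $e \in E$, hence is equivalent to $v \in \Cent_B(E)$. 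Because the $B$-action on $B \oplus E$ is diagonal, $\Cent_B(B \oplus E) = \Cent_B(B) \cap \Cent_B(E)$; so $(\Ad_v,\Ad_v) \in \Aut(\tilde{E},\iota)$ --- and is then \emph{inner} by definition --- if and only if $v \in \Cent_B(B \oplus E)$. The converse implication is immediate: if $\Ad_v$ defines an inner automorphism of $(\tilde{E},\iota)$, then by definition $(\Ad_v,\Ad_v) \in \Aut(\tilde{E},\iota)$, so both conditions above hold and $v \in \Cent_B(B \oplus E)$. This establishes the displayed equivalence.

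For the closing `hence', I would note that $\Cent_B(B \oplus E) = \Cent_B(B) \cap \Cent_B(E)$ is a unital $\ast$-subalgebra of $B$ --- the one point worth a line being that $\Cent_B(E)$ is $\ast$-closed, since applying $\ast$ to $b e = e b$ ($\forall e \in E$) and using that $\ast \colon E \to E$ is a bijection gives $b^\ast e = e b^\ast$ ($\forall e \in E$). Thus $\Unit(\Cent_B(B \oplus E))$ is a group, and $v \mapsto (\Ad_v,\Ad_v)$ is a group homomorphism from it into $\Aut(\tilde{E},\iota)$ whose image, by the first part, is exactly $\Inn(\tilde{E},\iota)$; so $\Inn(\tilde{E},\iota)$ is a subgroup. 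For centrality, given any $(\phi,\phi_\ast) \in \Aut(\tilde{E},\iota)$ and $v \in \Unit(\Cent_B(B \oplus E))$, use $v, v^\ast \in \Cent_B(B) \subseteq B$ and $\rest{\phi}{B} = \id_B$ to get $\phi(v) = v$ and $\phi(v^\ast) = v^\ast$, whence $\phi \circ \Ad_v = \Ad_v \circ \phi$ on $P$ and, by the bimodule compatibility of $(\phi,\phi_\ast)$, $\phi_\ast \circ \Ad_v = \Ad_v \circ \phi_\ast$ on $\tilde{E}$, i.e.\ $(\Ad_v,\Ad_v)$ commutes with $(\phi,\phi_\ast)$. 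The whole argument is definition-chasing with no real obstacle; the only thing to be careful about is to keep the two conjugation conditions apart --- centrality of $v$ in $B$ comes from verticality, centralising $E$ comes from $\phi_\ast \circ \iota = \iota$ --- until they recombine into membership in $\Cent_B(B \oplus E)$.
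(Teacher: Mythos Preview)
Your proposal is correct and follows essentially the same approach as the paper: both reduce the problem to the two conditions $\rest{\Ad_v}{B}=\id_B$ (giving $v\in\Zent(B)$) and compatibility with $\iota$ (giving $v\in\Cent_B(E)$), and both prove centrality via $\phi(v)=v$ from $\rest{\phi}{B}=\id_B$. You are a bit more explicit than the paper in separating the bimodule-compatibility condition (automatic for conjugation) from the normalisation $\phi_\ast\circ\iota=\iota$, and in checking centrality on $\tilde{E}$ as well as on $P$; the paper folds these together in a single computation of $\Ad_v(p\cdot\iota(e)\cdot\p{p})-\Ad_v(p)\cdot\iota(e)\cdot\Ad_v(\p{p})$ and then relies on the earlier remark that $\phi_\ast$ is determined by $\phi$.
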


\begin{proof}
	First, \(\Ad_{v}\) restricts to the identity on \(B\) if and only if \(v \in \Zent(B)\). Now, given \(v \in \Unit(\Zent(B))\), for every \(p,\,\p{p}\in P\) and \(e \in E\),
	\begin{align*}
		\Ad_v(p \cdot e \cdot \p{p}) - \Ad_v(p) \cdot e \cdot \Ad_v(\p{p})	&= v \cdot (p \cdot e \cdot \p{p}) \cdot v^\ast - vpv^\ast \cdot e \cdot v\p{p}v^\ast\\
		&= vp \cdot (e - v^\ast \cdot e \cdot v) \cdot \p{p} v^\ast,
	\end{align*}
	so that \(\Ad_v \in \Inn(\tilde{E},\iota)\) if and only if \(v \in \Cent_B(\Omega^1_B)\). Thus, in particular, the group homomorphism
	\(
		(v \mapsto \Ad_v) : \Unit(P) \to \Aut(P)
	\) 
	restricts to a surjection \(\Unit(\Cent_B(B\oplus E)) \surj \Inn(\tilde{E},\iota)\), so that \(\Inn(\tilde{E},\iota) \leq \Aut(\tilde{E},\iota)\) since \(\Unit(\Cent_B(B\oplus E)) \leq \Unit(P)\).
	
	Let us now show that \(\Inn(\tilde{E},\iota)\) is central. Let \(\phi \in \Aut(\tilde{E},\iota)\) and let \(v \in \Unit(\Cent_B(B\oplus E))\). Then, for all \(p \in P\), since \(\rest{\phi}{B} = \id_B\),
	\[
		\phi \circ \Ad_v \circ \inv{\phi}(p) = \phi(v \inv{\phi}(p) v^\ast) = v \phi(\inv{\phi}(p)) v^\ast = \Ad_v(p). \qedhere
	\]
\end{proof}

At last, we can tentatively define a gauge transformation to be an automorphism of \(\Omega^1_{P,\hor}\) as a horizontal lift of \(\Omega^1_B\) and check its differentiability with respect to any vertical calculus on \(P\). This definition, which is a non-trivial refinement of Brzezi\'{n}ski's notion of vertical automorphism~\cite{Br96}*{\S 5}, will be justified by Proposition~\ref{equiv1} and Corollary~\ref{equiv1cor}.

\begin{definition}
A \emph{gauge transformation} of the principal left \(H\)-comodule \(\ast\)-algebra \(P\) with respect to the horizontal calculus \((\Omega^1_B,\dt{B};\Omega^1_{P,\hor},\iota)\) is an automorphism \(\phi\) of the projectable horizontal lift \((\Omega^1_{P,\hor},\iota)\) of \(\Omega^1_B\); hence, the \emph{gauge group}, \emph{inner gauge group}, and \emph{outer gauge group} of \(P\) with respect to \((\Omega^1_B,\dt{B};\Omega^1_{P,\hor},\iota)\) are given by
\[
	\fr{G} \coloneqq \Aut(\Omega^1_{P,\hor},\iota), \quad \Inn(\fr{G}) \coloneqq \Inn(\Omega^1_{P,\hor},\iota), \quad \Out(\fr{G}) \coloneqq \fr{G}/\Inn(\fr{G}),
\]
respectively. When clarity requires, for \(\phi \in \fr{G}\), we will denote \(\phi_\ast\) by \(\phi_{\ast,\hor}\).
\end{definition}

\begin{remark}
Thus, by Proposition~\ref{innergaugeprop1}, the map \((v \mapsto \Ad_v) : \Unit(\Cent_B(B\oplus\Omega^1_B)) \surj \Inn(\fr{G})\) yields the short exact sequence of Abelian groups
\[
	1 \to \Unit(\Cent_B(\Omega^1_B) \cap \Zent(P)) \to \Unit(\Cent_B(B \oplus \Omega^1_B)) \to \Inn(\fr{G}) \to 1,
\]
where \(\Inn(\fr{G})\) is central in \(\fr{G}\).
\end{remark}

We will find it useful to record the following straightforward observation, which guarantees that gauge transformations are differentiable with respect the vertical calculus induced by any locally freeing bicovariant \fodc{} on \(H\).

\begin{proposition}\label{astver}
	Suppose that \((\Omega^1_H,\dt{H})\) is a bicovariant \textsc{fodc} on \(H\) that is locally freeing for\(P\); let \((\Omega^1_{P,\ver},\dv{P})\) be the resulting vertical calculus. For every \(f \in \fr{G}\), the map 
	\begin{equation}
		f_{\ast,\ver} \coloneqq \id_{\Lambda^1_H} \otimes f
	\end{equation} 
	defines a left \(H\)-covariant \(\ast\)-preserving \(\bC\)-linear endomorphism of \(\Omega^1_{P,\ver}\) satisfying
	\begin{gather}
		\forall p, \p{p} \in P, \, \forall \omega \in \Omega^1_{P,\ver}, \quad f_{\ast,\ver}(p \cdot \omega \cdot \p{p}) = f(p) \cdot f_{\ast,\ver}(\omega) \cdot f(\p{p}),\\
		\dv{P} \circ f = f_{\ast,\ver} \circ \dv{P}.
	\end{gather}
	Furthermore, the map \((f \mapsto f_{\ast,\ver}) : \fr{G} \to \GL(\Omega^1_{P,\ver})\) is a group homomorphism.
\end{proposition}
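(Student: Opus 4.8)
The plan is to verify directly that $f_{\ast,\ver} \coloneqq \id_{\Lambda^1_H} \otimes f$ has each of the claimed properties, using throughout the single structural consequence of $f \in \fr{G}$ being left $H$-covariant, namely that $\delta \circ f = (\id_H \otimes f) \circ \delta$ on $P$; equivalently, for every $p \in P$ the element $\ca{p}{-1} \otimes f(\ca{p}{0})$ represents $\delta(f(p))$. Since every structure map on $\Omega^1_{P,\ver} = \Lambda^1_H \otimes P$ is assembled from $\delta(p)$ in such a way that the $H$-leg is either fed into the adjoint action on $\Lambda^1_H$ or applied by $\varpi_H$, while the residual $P$-data sits in the right-hand tensor factor, pushing $f$ through onto that factor will be compatible with all of them.

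The elementary points come first: $f_{\ast,\ver}$ is a well-defined $\bC$-linear endomorphism of the algebraic tensor product $\Lambda^1_H \otimes P$, bijective with inverse $\id_{\Lambda^1_H} \otimes \inv{f}$; and since $(g \circ f)_{\ast,\ver} = g_{\ast,\ver} \circ f_{\ast,\ver}$ and $(\id_P)_{\ast,\ver} = \id$, the assignment $f \mapsto f_{\ast,\ver}$ is a group homomorphism $\fr{G} \to \GL(\Omega^1_{P,\ver})$. For the remaining assertions I would test on a simple tensor $\omega \otimes p \in \Lambda^1_H \otimes P$. Left $H$-covariance follows from $\delta\mleft(f_{\ast,\ver}(\omega \otimes p)\mright) = \delta(\omega \otimes f(p)) = \ca{\omega}{-1}\ca{p}{-1} \otimes \ca{\omega}{0} \otimes f(\ca{p}{0}) = (\id_H \otimes f_{\ast,\ver})\mleft(\delta(\omega \otimes p)\mright)$, the second equality using covariance of $f$. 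The twisted bimodule identity follows by expanding $f_{\ast,\ver}\mleft(p \cdot (\omega \otimes q) \cdot \p{p}\mright) = \ca{p}{-1} \act \omega \otimes f(\ca{p}{0}\,q\,\p{p})$ and using that $f$ is an algebra map and left $H$-covariant to rewrite the right-hand side as $\ca{f(p)}{-1} \act \omega \otimes \ca{f(p)}{0}\,f(q)\,f(\p{p}) = f(p) \cdot f_{\ast,\ver}(\omega \otimes q) \cdot f(\p{p})$. And intertwining of the vertical derivation follows from $\dv{P}(f(p)) = \varpi_H(\ca{f(p)}{-1}) \otimes \ca{f(p)}{0} = \varpi_H(\ca{p}{-1}) \otimes f(\ca{p}{0}) = f_{\ast,\ver}(\dv{P}(p))$.

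The only step requiring a moment's care is $\ast$-preservation, where one must combine covariance of $f$ with $\ast$-preservation of $f$. On the one hand, $f_{\ast,\ver}\mleft((\omega \otimes p)^\ast\mright) = f_{\ast,\ver}\mleft(\ca{p}{-1}^\ast \act \omega^\ast \otimes \ca{p}{0}^\ast\mright) = \ca{p}{-1}^\ast \act \omega^\ast \otimes f(\ca{p}{0}^\ast)$; on the other hand, $\mleft(f_{\ast,\ver}(\omega \otimes p)\mright)^\ast = (\omega \otimes f(p))^\ast = \ca{f(p)}{-1}^\ast \act \omega^\ast \otimes \ca{f(p)}{0}^\ast = \ca{p}{-1}^\ast \act \omega^\ast \otimes f(\ca{p}{0})^\ast$, the last equality by covariance of $f$, and the two agree because $f(\ca{p}{0}^\ast) = f(\ca{p}{0})^\ast$. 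I expect no genuine obstacle: after unwinding Sweedler notation, every item reduces to the covariance identity $\delta \circ f = (\id_H \otimes f) \circ \delta$ together with $f$ being a $\ast$-algebra automorphism of $P$. The only subtlety worth flagging is the order of operations in the $\ast$-preservation check---the $\ast$-structure on $\Omega^1_{P,\ver}$ refers to the coaction on the $P$-leg, so covariance of $f$ must be used to move $f$ past the legs of $\delta(p)$ before $\ast$-preservation of $f$ on $P$ can be invoked.
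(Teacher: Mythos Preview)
Your proof is correct and complete. The paper does not provide a proof of this proposition, introducing it only as a ``straightforward observation'' and leaving the verification implicit; your direct check of each property on simple tensors, using covariance of $f$ in the form $\ca{f(p)}{-1} \otimes \ca{f(p)}{0} = \ca{p}{-1} \otimes f(\ca{p}{0})$ together with $f$ being a $\ast$-algebra automorphism, is exactly the routine argument the paper has in mind.
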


Having tentatively defined gauge transformations to be symmetries of the projectable horizontal lift \(\Omega^1_{P,\hor}\) of \(\Omega^1_B\), we now tentatively define a gauge potential to be a lift of \(\dt{B} : B \to \Omega^1_B\) to a left \(H\)-covariant \(\ast\)-derivation \(P \to \Omega^1_{P,\hor}\), which, like \DJ{}ur\dj{}evi\'{c}~\cite{Dj98}, we view as the horizontal covariant derivative of a principal connection. The justification for this definition, which builds crucially on the role of (strong bimodule) connections in splitting the noncommutative Atiyah sequence \eqref{ses}, will be provided by Proposition~\ref{equiv1}.

\begin{definition}[cf.\ \DJ{}ur\dj{e}vi\'{c}~\cite{Dj98}*{Def.\ 1}]
	A \emph{gauge potential} on the principal left \(H\)-comodule \(\ast\)-algebra \(P\) with respect to the first-order horizontal calculus \((\Omega^1_B,\dt{B};\Omega^1_{P,\hor})\) is a left \(H\)-covariant \(\ast\)-derivation \(\nabla : P \to \Omega^1_{P,\hor}\), such that
	\(
		\rest{\nabla}{B} = \dt{B}
	\);	
	we define the \emph{Atiyah space} of \(P\) with respect to \((\Omega^1_B,\dt{B};\Omega^1_{P,\hor})\) to be the set  \(\fr{At}\) of all gauge potentials.
\end{definition}

\begin{example}
	Suppose that the \fodc{} \((\Omega^1_B,\dt{B})\) is inner, so that \(\dt{B} = \ad_\alpha\) for some \(1\)-form \(\alpha \in (\Omega^1_B)_{\sa}\). Then \(\nabla \coloneqq \ad_{\iota(\alpha)}\) is a gauge potential on \(P\) with respect to \((\Omega^1_B,\dt{B};\Omega^1_{P,\hor})\).
\end{example}

Our assumption on the horizontal calculus \((\Omega^1_B,\dt{B};\Omega^1_{P,\hor})\) is that \(\fr{At} \neq \emptyset\). Since \(\Omega^1_{P,\hor}\) is projectable as a horizontal lift of \(\Omega^1_B\), it follows that \(\Omega^1_{P,\hor} = P \cdot \Omega^1_B = P \cdot \dt{B}(B)\). Thus for every \(\nabla \in \fr{At}\), the pair \((\Omega^1_{P,\hor},\nabla)\) defines a left \(H\)-covariant \textsc{fodc} on \(P\) satisfying
\[
	(\Omega^1_B,\dt{B}) = (\coinv{H}{\Omega^1_{P,\hor}},\rest{\nabla}{\coinv{H}{\Omega^1_{P,\hor}}});
\]
in other words, it defines a projectable horizontal lift of \((\Omega^1_B,\dt{B})\).

It is now straightforward to check that the Atiyah space \(\fr{At}\) is an \(\fr{G}\)-invariant \(\bR\)-affine subspace of the \(\bR\)-vector space \(\Der_P(\Omega^1_{P,\hor})\) of  \(\ast\)-derivations on the \(P\)-\(\ast\)-bimodule \(\Omega^1_{P,\hor}\).

\begin{definition}[cf. \DJ{}ur\dj{}evi\'{c}~\cite{Dj98}*{p.\ 98}]
	A \emph{relative gauge potential} on \(P\) with respect to the horizontal calculus \((\Omega^1_B,\dt{B};\Omega^1_{P,\hor})\) is a left \(H\)-covariant \(\ast\)-derivation \(\bA: P \to \Omega^1_{P,\hor}\), such that \(\rest{\bA}{B} = 0\); we denote by \(\fr{at}\) the \(\bR\)-vector space of all relative gauge potentials on \(P\) with respect to \((\Omega^1_B,\dt{B};\Omega^1_{P,\hor})\).
\end{definition}

\begin{proposition}
	The Atiyah space \(\fr{At}\) of \(P\) with respect to \((\Omega^1_B,\dt{B};\Omega^1_{P,\hor})\) defines a \(\bR\)-affine space modelled on the \(\bR\)-vector space \(\fr{at}\) with respect to the subtraction map \(\fr{At} \times \fr{At} \to \fr{at}\) defined by \((\nabla,\nabla^\prime) \mapsto \nabla - \nabla^\prime\). Moreover, \(\fr{At}\) admits an affine action of the gauge group \(\fr{G}\) defined by
	\begin{equation}
		\forall \phi \in \fr{G}, \, \forall\, \nabla \in \fr{At}, \quad \phi \triangleright \nabla \coloneqq \phi_{\ast} \circ \nabla \circ \phi^{-1},
	\end{equation}
	whose linear part is the linear action of \(\fr{G}\) on \(\fr{at}\) defined by
	\begin{equation}
		\forall \phi \in \fr{G}, \, \forall \bA \in \fr{at}, \quad \phi \triangleright \bA \coloneqq \phi_{\ast} \circ \bA \circ \phi^{-1}.
	\end{equation}
\end{proposition}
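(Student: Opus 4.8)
The plan is to recognise that $\fr{At}$ is, by construction, a coset of the $\bR$-linear subspace $\fr{at}$ inside the ambient $\bR$-vector space of left $H$-covariant $\ast$-derivations $P \to \Omega^1_{P,\hor}$, and that the prescribed $\fr{G}$-action is the restriction to this coset of an obvious conjugation action that manifestly respects the affine structure. First I would note that the left $H$-covariant $\ast$-derivations $P \to \Omega^1_{P,\hor}$ form an $\bR$-vector space --- only real rescalings survive, because of the sign in $\partial(p^\ast) = -\partial(p)^\ast$ --- inside which $\fr{at}$ is the linear subspace cut out by the single condition $\rest{\,\cdot\,}{B} = 0$. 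Since $\rest{(\nabla+\bA)}{B} = \dt{B}$ for $\nabla \in \fr{At}$ and $\bA \in \fr{at}$, while $\rest{(\nabla-\nabla')}{B} = 0$ for $\nabla,\nabla' \in \fr{At}$, addition by an element of $\fr{at}$ preserves $\fr{At}$ and the subtraction map sends $\fr{At}\times\fr{At}$ into $\fr{at}$; combined with the standing hypothesis $\fr{At} \neq \emptyset$, this exhibits $\fr{At}$ as a torsor under $\fr{at}$, with the additivity relation and simple transitivity of the subtraction map inherited verbatim from the vector-space operations.

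Next I would check that $\phi \triangleright \nabla \coloneqq \phi_\ast \circ \nabla \circ \inv{\phi}$ is again a gauge potential for all $\phi \in \fr{G}$ and $\nabla \in \fr{At}$. It is $\bC$-linear and left $H$-covariant since $\phi$, $\inv{\phi}$, $\phi_\ast$ and $\nabla$ all are; it is a $\ast$-derivation because $\inv{\phi}$ is a $\ast$-automorphism of $P$, $\nabla$ is a $\ast$-derivation, and $\phi_\ast$ is $\ast$-preserving with $\phi_\ast(p \cdot \eta \cdot \p{p}) = \phi(p) \cdot \phi_\ast(\eta) \cdot \phi(\p{p})$ --- so that the one-sided rules $\phi_\ast(\eta \cdot \p{p}) = \phi_\ast(\eta)\cdot\phi(\p{p})$ and $\phi_\ast(p\cdot\eta) = \phi(p)\cdot\phi_\ast(\eta)$ hold since $\phi(1)=1$, and the Leibniz rule for $\phi_\ast \circ \nabla \circ \inv{\phi}$ falls straight out of the one for $\nabla$. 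Finally, $\rest{\phi}{B} = \id_B$ forces $\rest{\inv{\phi}}{B} = \id_B$, and $\phi_\ast \circ \iota = \iota$ says $\phi_\ast$ fixes $\iota(\Omega^1_B)$ pointwise, so $(\phi \triangleright \nabla)(b) = \phi_\ast(\dt{B}(b)) = \dt{B}(b)$ for $b \in B$; hence $\phi \triangleright \nabla \in \fr{At}$. The same computation with $\dt{B}$ replaced by $0$ shows $\phi \triangleright \bA \in \fr{at}$ for $\bA \in \fr{at}$, and that $\bA \mapsto \phi \triangleright \bA$ is $\bR$-linear, post-composition with the $\bR$-linear $\phi_\ast$ and pre-composition with $\inv{\phi}$ being $\bR$-linear operations on maps.

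It then remains to verify the action axioms and affine-linearity. That $\phi \mapsto (\nabla \mapsto \phi \triangleright \nabla)$ is a left action of $\fr{G}$ follows from $\inv{(\phi\psi)} = \inv{\psi}\,\inv{\phi}$ together with the homomorphism property $(\phi\psi)_\ast = \phi_\ast \psi_\ast$ recorded in Remark~\ref{asthor}; since the inverse of $\phi \triangleright (\,\cdot\,)$ is $\inv{\phi}\triangleright(\,\cdot\,)$, each $\phi \triangleright (\,\cdot\,)$ is a bijection of $\fr{At}$. Affine-linearity is then the identity $(\phi \triangleright \nabla) - (\phi \triangleright \nabla') = \phi_\ast \circ (\nabla - \nabla') \circ \inv{\phi} = \phi \triangleright (\nabla - \nabla')$, valid by the same $\bR$-linearity of composition, which simultaneously exhibits $\bA \mapsto \phi \triangleright \bA$ as the linear part of the affine action and, via the homomorphism property of $\phi\mapsto\phi_\ast$ again, as a homomorphism $\fr{G} \to \GL(\fr{at})$.

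I do not expect a real obstacle here: the statement is essentially bookkeeping internal to the category of left $H$-comodule $P$-$\ast$-bimodules. The only points needing care are the verification that $\phi \triangleright \nabla$ still restricts to $\dt{B}$ on $B$ --- which is precisely where the defining conditions $\rest{\phi}{B} = \id_B$ and $\phi_\ast \circ \iota = \iota$ of an automorphism of a horizontal lift are used --- and the appeal to the functoriality $(\phi\psi)_\ast = \phi_\ast\psi_\ast$ from Remark~\ref{asthor}, which is what upgrades $\triangleright$ from a mere family of maps to an honest group action.
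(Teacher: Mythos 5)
Your verification is correct and is exactly the direct, element-by-element check that the paper treats as ``straightforward'' and omits: the torsor structure comes from the linearity of the Leibniz and $\ast$-conditions together with $\fr{At}\neq\emptyset$, and the action is well defined precisely because $\rest{\phi}{B}=\id_B$, $\phi_\ast\circ\iota=\iota$, and $(\phi\psi)_\ast=\phi_\ast\psi_\ast$ (Remark~\ref{asthor}). No gaps; your attention to where the defining properties of an automorphism of the horizontal lift enter is exactly the point the paper intends.
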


If \(B \coloneqq \coinv{H}{P}\) is not central in \(P\) (or if \(B\) is central in \(P\) but does not centralise \(\Omega^1_{P,\hor}\)), one obtains a non-trivial \(\fr{G}\)-invariant \(\bR\)-subspace of inner relative gauge potentials. More broadly, the significance of inner derivations in noncommutative gauge theory has been pointed out by Connes~\cite{Connes96}*{p.\ 166} in the context of the spectral action on spectral triples.

\begin{definition}
	A relative gauge potential \(\bA \in \fr{at}\) on \(P\) with respect to the horizontal calculus \((\Omega^1_B,\dt{B};\Omega^1_{P,\hor})\) is \emph{inner} whenever \(\bA = \ad_\alpha\) for some \(\alpha \in (\Omega^1_B)_{\sa}\); we denote the subspace of all inner relative gauge potentials by \(\Inn(\fr{at})\). Thus, we define the \emph{outer Atiyah space} of \(P\) with respect to \((\Omega^1_B,\dt{B};\Omega^1_{P,\hor})\) to be the affine space \(\Out(\fr{At}) \coloneqq \fr{At}/\Inn(\fr{at})\) with space of translations \(\Out(\fr{at}) \coloneqq \fr{at}/\Inn(\fr{at})\).
\end{definition}

\begin{proposition}\label{innerprop}
	A map \(\bA : P \to \Omega^1_{P,\hor}\) is an inner relative gauge potential if and only if \(\bA = \ad_\alpha\) for some \(\alpha \in \Zent_B(\Omega^1_B)_{\sa}\). Thus, the map \((\alpha \mapsto \ad_\alpha) : Z_B(\Omega^1_B)_{\sa} \to \Inn(\fr{at})\) yields the short exact sequence
\begin{equation}\label{innerpropeq}
	0 \to \Zent_B(\Omega^1_B)_{\sa} \cap \Zent_P(\Omega^1_{P,\hor}) \to \Zent_B(\Omega^1_B)_{\sa} \to \Inn(\fr{at}) \to 0.
\end{equation}
Moreover, the subspace \(\Inn(\fr{at})\) of inner relative gauge potentials consists of \(\fr{G}\)-invariant vectors, so that the affine action of \(\fr{G}\) on \(\fr{At}\) descends to an affine action of \(\fr{G}\) on \(\Out(\fr{At})\).
\end{proposition}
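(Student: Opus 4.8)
The plan is to follow the template of the proof of Proposition~\ref{innergaugeprop1}, exploiting the fact that a self-adjoint, left \(H\)-coinvariant \(1\)-form \(\alpha\) automatically produces an inner \(\ast\)-derivation \(\ad_\alpha \colon P \to \Omega^1_{P,\hor}\) whose fine structure is dictated solely by the extent to which \(\alpha\) centralises \(B\) and \(P\), respectively. Throughout I keep the standing identification of \(\Omega^1_B\) with \(\iota(\Omega^1_B) \subseteq \Omega^1_{P,\hor}\).

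First I would record that, for \(\alpha \in (\Omega^1_B)_{\sa}\), the map \(\ad_\alpha \colon P \to \Omega^1_{P,\hor}\) is well defined (since \(\Omega^1_{P,\hor}\) is a \(P\)-\(\ast\)-bimodule containing \(\alpha\)), is a \(\ast\)-derivation (because \(\alpha = \alpha^\ast\)), and is left \(H\)-covariant (because \(\alpha\) is left \(H\)-coinvariant and the coaction is multiplicative over the \(P\)-bimodule structure); hence \(\ad_\alpha\) defines an element of \(\fr{at}\) precisely when \(\rest{\ad_\alpha}{B} = 0\), which, since \(\iota\) is a morphism of \(B\)-\(\ast\)-bimodules, holds if and only if \([\alpha,b] = 0\) for all \(b \in B\), i.e.\ if and only if \(\alpha \in \Zent_B(\Omega^1_B)\). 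Running the same computation in reverse, starting from an arbitrary inner relative gauge potential \(\bA = \ad_\alpha\) with \(\alpha \in (\Omega^1_B)_{\sa}\) and using \(\rest{\bA}{B} = 0\), yields the asserted characterisation and, with it, that \((\alpha \mapsto \ad_\alpha) \colon \Zent_B(\Omega^1_B)_{\sa} \to \Inn(\fr{at})\) is a surjective \(\bR\)-linear map; its kernel consists of those \(\alpha \in \Zent_B(\Omega^1_B)_{\sa}\) with \([\alpha,p] = 0\) for all \(p \in P\), i.e.\ of those \(\alpha\) that additionally lie in \(\Zent_P(\Omega^1_{P,\hor})\), so that exactness of \eqref{innerpropeq} follows at once.

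It then remains to show that \(\Inn(\fr{at})\) consists of \(\fr{G}\)-invariant vectors. Given \(\phi \in \fr{G}\) and \(\ad_\alpha \in \Inn(\fr{at})\), I would evaluate \(\phi \triangleright \ad_\alpha = \phi_\ast \circ \ad_\alpha \circ \phi^{-1}\) on an arbitrary \(p \in P\) by expanding \(\ad_\alpha(\phi^{-1}(p)) = 1_P \cdot \alpha \cdot \phi^{-1}(p) - \phi^{-1}(p) \cdot \alpha \cdot 1_P\) and applying the multiplicativity of \(\phi_\ast\) over \(\phi\), together with \(\phi(1_P) = 1_P\) and \(\phi_\ast(\alpha) = \alpha\)---the latter because \(\alpha\) lies in the image of \(\iota\) and \(\phi_\ast \circ \iota = \iota\); this gives \((\phi_\ast \circ \ad_\alpha \circ \phi^{-1})(p) = \alpha \cdot p - p \cdot \alpha = \ad_\alpha(p)\). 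Hence \(\Inn(\fr{at})\) is a \(\fr{G}\)-invariant subspace of fixed vectors, and since the affine action of \(\fr{G}\) on \(\fr{At}\) has the linear action of \(\fr{G}\) on \(\fr{at}\) as its linear part, it descends to the quotient \(\Out(\fr{At}) = \fr{At}/\Inn(\fr{at})\).

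No step is a serious obstacle; the only point requiring genuine care is the bookkeeping around the identification \(\Omega^1_B \cong \iota(\Omega^1_B)\) inside \(\Omega^1_{P,\hor}\), which is exactly what makes the condition of \(\alpha\) centralising \(B\) translate into \(\alpha \in \Zent_B(\Omega^1_B)\) rather than one internal to \(\Omega^1_{P,\hor}\), what pins down the kernel as \(\Zent_B(\Omega^1_B)_{\sa} \cap \Zent_P(\Omega^1_{P,\hor})\), and what forces \(\phi_\ast\) to fix \(\alpha\) so that \(\phi \triangleright \ad_\alpha\) equals \(\ad_\alpha\) on the nose.
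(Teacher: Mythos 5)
Your proof is correct and follows the same direct-computation route as the paper: $\ad_\alpha$ is a relative gauge potential precisely when it kills $B$, i.e.\ when $\alpha \in \Zent_B(\Omega^1_B)_{\sa}$, which immediately gives the short exact sequence \eqref{innerpropeq}. The only difference is that you also write out the $\fr{G}$-invariance of $\Inn(\fr{at})$ explicitly (using $\phi_\ast \circ \iota = \iota$ and the bimodule multiplicativity of $\phi_\ast$), a verification the paper's terse proof leaves implicit; your argument there is exactly the intended one and is correct.
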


\begin{proof}
	First, given \(\alpha \in (\Omega^1_B)_{\sa}\), the left \(H\)-covariant \(\ast\)-derivation \(\bA \coloneqq \ad_{\alpha}\) defines a relative gauge potential if and only if \(\ker \bA \supseteq B\), if and only if \(\alpha \in \Zent_B(\Omega^1_B)\); this immediately yields the short exact sequence \eqref{innerpropeq}.
\end{proof}


In fact, it turns out that the affine action of the gauge group \(\fr{G}\) on the Atiyah space \(\fr{At}\) descends further to an affine action of the outer gauge group \(\Out(\fr{G})\) on the outer Atiyah space \(\Out(\fr{At})\), which, as we shall see, will yield a non-trivial invariant of the principal left \(H\)-comodule \(\ast\)-algebra \(P\) endowed with the horiziontal calculus \((\Omega^1_{B},\dt{B};\Omega^1_{P,\hor})\).

\begin{proposition}\label{inducedprop}
	The inner gauge group \(\Inn(\fr{G})\) acts trivially on \(\Out(\fr{at})\), so that the induced action of \(\fr{G}\) on \(\Out(\fr{At})\) descends further to an affine action of \(\Out(\fr{G})\) on \(\Out(\fr{At})\).
\end{proposition}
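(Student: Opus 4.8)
The plan is to reduce the whole statement to one explicit computation of how an inner gauge transformation displaces a gauge potential. By the definition of $\Inn(\Omega^1_{P,\hor},\iota)$ together with Proposition~\ref{innergaugeprop1}, any $\phi \in \Inn(\fr{G})$ has the form $\phi = \Ad_v$ and $\phi_\ast = \Ad_v$ for some $v \in \Unit(\Cent_B(B\oplus\Omega^1_B))$; in particular $v \in \Zent(B)$ and $v$ commutes with every element of $\Omega^1_B$. First I would fix such a $v$ and a gauge potential $\nabla \in \fr{At}$, note that $\phi^{-1} = \Ad_{v^\ast}$ by unitarity, and expand $(\phi\triangleright\nabla)(p) = v\cdot\nabla(v^\ast p v)\cdot v^\ast$ by the Leibniz rule, using $\rest{\nabla}{B}=\dt{B}$ to replace $\nabla(v)$ by $\dt{B}(v)$ and $\nabla(v^\ast)$ by $\dt{B}(v^\ast)$, the relations $vv^\ast = v^\ast v = 1$, and the identity $\dt{B}(v)v^\ast = -v\dt{B}(v^\ast)$ coming from $\dt{B}(vv^\ast) = 0$. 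The expected outcome is the clean formula
\[
	\phi\triangleright\nabla = \nabla - \ad_{\dt{B}(v)v^\ast},
\]
with the same displacement $\alpha \coloneqq \dt{B}(v)v^\ast$ for every $\nabla \in \fr{At}$.

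The only step that is more than bookkeeping — and the place where the full hypothesis $v \in \Cent_B(B\oplus\Omega^1_B)$ gets used — is the verification that $\alpha = \dt{B}(v)v^\ast$ lies in $\Zent_B(\Omega^1_B)_{\sa}$, for then $\ad_\alpha \in \Inn(\fr{at})$ by Proposition~\ref{innerprop}. Self-adjointness is immediate: $\alpha^\ast = v\,\dt{B}(v)^\ast = -v\,\dt{B}(v^\ast) = \dt{B}(v)v^\ast = \alpha$, using the $\ast$-derivation identity $\dt{B}(v^\ast) = -\dt{B}(v)^\ast$ and, once more, $\dt{B}(v)v^\ast = -v\dt{B}(v^\ast)$. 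For centrality with respect to $B$, I would first show $\dt{B}(v) \in \Zent_B(\Omega^1_B)$ by applying $\dt{B}$ to the relation $vb = bv$ for $b \in B$ and cancelling the terms $v\,\dt{B}(b) = \dt{B}(b)\,v$, which uses precisely that $v$ commutes with $\Omega^1_B$; since $v^\ast \in \Zent(B)$ as well, the product $\dt{B}(v)v^\ast$ then also commutes with all of $B$.

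Granting the formula $\phi\triangleright\nabla = \nabla - \ad_\alpha$ with $\ad_\alpha \in \Inn(\fr{at})$, the remainder is formal. Because the displacement $\alpha$ depends only on $\phi$ and not on $\nabla$, writing any $\bA \in \fr{at}$ as $\bA = (\nabla + \bA) - \nabla$ gives $\phi\triangleright\bA = \bA$ for all $\bA$ (equivalently, $\rest{\bA}{B} = 0$ makes $\bA(v^\ast p v) = v^\ast\bA(p)v$ directly), so $\Inn(\fr{G})$ acts trivially on $\fr{at}$ and hence on $\Out(\fr{at})$. Moreover, the same formula shows that every $\phi \in \Inn(\fr{G})$ fixes each class in $\Out(\fr{At}) = \fr{At}/\Inn(\fr{at})$, so the homomorphism from $\fr{G}$ to the group of affine automorphisms of $\Out(\fr{At})$ underlying the action of Proposition~\ref{innerprop} kills the normal — indeed central, by Proposition~\ref{innergaugeprop1} — subgroup $\Inn(\fr{G})$, and therefore factors through $\Out(\fr{G}) = \fr{G}/\Inn(\fr{G})$, yielding the asserted affine action of $\Out(\fr{G})$ on $\Out(\fr{At})$.
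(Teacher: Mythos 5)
Your proposal is correct and follows essentially the same route as the paper: conjugate a gauge potential by \(\Ad_v\), use the Leibniz rule and \(\rest{\nabla}{B} = \dt{B}\) to identify the displacement as the inner relative gauge potential \(\ad_{-v^\ast\dt{B}(v)}\) (your \(-\ad_{\dt{B}(v)v^\ast}\) is the same element, since \(\dt{B}(v) \in \Zent_B(\Omega^1_B)\)), and then observe the displacement is independent of \(\nabla\). Your explicit checks that \(\dt{B}(v)v^\ast\) is self-adjoint and central, and the final factorisation through \(\Out(\fr{G})\), are just details the paper leaves implicit.
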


\begin{proof}
	Let \(\phi \in \Inn(\fr{G})\), so that by Proposition~\ref{innergaugeprop1}, \(\phi = \Ad_v\) and \(\phi_\ast = \Ad_v\) for some unitary \(v \in \Unit(Z(B) \cap \Cent_B(\Omega^1_B))\); let \(\nabla \in \fr{At}\). Then, for all \(p \in P\),
	\begin{align*}
		(\phi \act \nabla - \nabla)(p) &= v \cdot \nabla(v^\ast p v) \cdot v^\ast - \nabla(p)\\
		&= v \cdot \left(\dt{B}(v^\ast) \cdot pv + v^\ast \cdot \nabla(p) \cdot v + v^\ast p \cdot \dt{B}(v)\right) \cdot v^\ast - \nabla(p)\\
		&= [-v^\ast \cdot \dt{B}(v),p]
	\end{align*}
	so that \(\phi \act \nabla - \nabla = \ad_{-v^\ast \dt{B}(v)} \in \Inn(\fr{at})\). Hence, \(\Inn(\fr{G})\) acts trivially on \(\Out(\fr{At})\).
\end{proof}

\subsection{Reconstruction of quantum principal bundles to first order}\label{reconsec}

Let \(H\) be a Hopf \(\ast\)-algebra and let \(P\) be a principal left \(H\)-comodule \(\ast\)-algebra with \(\ast\)-subalgebra of coinvariants \(B \coloneqq \coinv{H}{P}\). Given a horizontal calculus \((\Omega^1_B,\dt{B};\Omega^1_{P,\hor})\) for \(P\) and a bicovariant \fodc{} \((\Omega^1_H,\dt{H})\) on \(H\) that is locally freeing for \(P\), we consider \emph{all} \((H;\Omega^1_H,\dt{H})\)-principal \fodc{} on \(P\) inducing the first-order horizontal calculus \((\Omega^1_B,\dt{B};\Omega^1_{P,\hor})\). This will allow us to justify our notions of gauge transformation, gauge potential, and gauge action relative to the theory of Brzezi\'{n}ski--Majid~\cite{BrM} by means of a conceptually transparent equivalence of relevant groupoids. Furthermore, this equivalence will yield a gauge-equivariant affine moduli space of \((H;\Omega^1_H,\dt{H})\)-principal \fodc{} on \(P\) inducing \((\Omega^1_B,\dt{B};\Omega^1_{P,\hor})\). For relevant definitions from the basic theory of groupoids, see Appendix~\ref{Groupoids}.

From now on, fix a horizontal calculus \((\Omega^1_B,\dt{B};\Omega^1_{P,\hor},\iota)\) on the principal left \(H\)-co\-mod\-ule \(\ast\)-algebra \(P\). Given a bicovariant \fodc{} \((\Omega^1_H,\dt{H})\) on \(H\) that is locally freeing for \(P\), we construct a groupoid whose objects are \((H;\Omega^1_H,\dt{H})\)-principal \fodc{} on \(P\) inducing the horizontal calculus \((\Omega^1_B,\dt{B};\Omega^1_{P,\hor},\iota)\) on \(P\) and whose arrows will give an abstract notion of gauge transformation adapted to general \fodc{}; note that we do not require an abstract gauge transformation to be differentiable with respect to the same \fodc{} on \(P\) as both domain and codomain. In what follows, let \((\Omega^1_{P,\ver},\dv{P})\) denote the vertical calculus on \(P\) induced by a locally freeing bicovariant \fodc{} \((\Omega^1_H,\dt{H})\).

\begin{definition}\label{qpbdef}
	Let \((\Omega^1_H,\dt{H})\) be a bicovariant \fodc{} on \(H\) that is locally freeing for \(P\). We define the \emph{abstract gauge groupoid} with respect to \((\Omega^1_H,\dt{H})\) to be the groupoid \(\cG[\Omega^1_H]\) defined as follows:
	\begin{enumerate}
		\item an object is an \((H;\Omega^1_H,\dt{H})\)-principal \fodc{} \((\Omega^1_P,\dt{P})\) on \(P\), such that the quantum principal \((H;\Omega^1_H,\dt{H})\)-bundle \((P;\Omega^1_P,\dt{P})\) admits bimodule connections, and 
		\[
			(\ker\ver[\dt{P}], \dt{B}(b) \mapsto \dt{P}(b))
		\]
		defines a horizontal lift of \(\Omega^1_B\) admitting a (necessarily unique) left \(H\)-covariant isomorphism
		\(
			C[\Omega^1_P] : P \cdot \dt{B}(P) \to \Omega^1_{P,\hor}
		\)
		of \(P\)-\(\ast\)-bimodules satisfying \begin{equation}C[\Omega^1_P] \circ \rest{\dt{P}}{B} = \iota \circ \dt{B};\end{equation}
		\item given objects \((\Omega_1,\dt{1})\) and \((\Omega_2,\dt{2})\), an arrow \(f : (\Omega_1,\dt{1}) \to (\Omega_2,\dt{2})\) consists of a left \(H\)-covariant \(\ast\)-automorphism \(f : P \to P\), such that \(\rest{f}{B} = \id\), and such that
		\begin{equation}
			f_\ast : \Omega_1 \to \Omega_2, \quad p \cdot \dt{1}(\p{p}) \cdot \pp{p} \mapsto f(p) \cdot \dt{2}(f(\p{p})) \cdot f(\pp{p})
		\end{equation}
		is a well-defined bijection;
		\item composition of arrows is induced by composition of \(\ast\)-automorphisms of \(P\), and the identity of an object \((\Omega,\dt{})\) is given by \(\id_{(\Omega,\dt{})} \coloneqq \left(\id_P : (\Omega,\dt{}) \to (\Omega,\dt{})\right)\).
	\end{enumerate}
	Moreover, we define the star-injective homomorphism \(\mu[\Omega^1_H] : \cG[\Omega^1_H] \to \Aut(P)\)  by
	\begin{equation}
		\forall \left(f : (\Omega_1,\dt{1}) \to (\Omega_2,\dt{2})\right) \in \cG[\Omega^1_H], \quad \mu[\Omega^1_H]\mleft(f : (\Omega_1,\dt{1}) \to (\Omega_2,\dt{2})\mright) \coloneqq f.
	\end{equation}
\end{definition}

\begin{remark}
	Thus, the canonical horizontal calculus of an object \((\Omega_P,\dt{P})\) of \(\cG[\Omega^{1}_H]\) induces the given horizontal calculus \((\Omega^1_B,\dt{B};\Omega^1_{P,\hor})\) up to the canonical isomorphism \(C[\Omega_P]\) of left \(H\)-covariant \(P\)-\(\ast\)-bimodules.
\end{remark}

\begin{remark}
	Suppose that \(f : (\Omega_1,\dt{1}) \to (\Omega_2,\dt{2})\) is an arrow in \(\cG[\Omega^1_H]\). Then Proposition~\ref{astver} applies unchanged to the vertical automorphism \(f\), i.e., there exists a unique left \(H\)-covariant \(\ast\)-preserving bijection \(f_{\ast,\ver} : \Omega^1_{P,\ver} \to \Omega^1_{P,\ver}\), such that 
	\[
		\forall p, q \in P, \, \forall \omega \in \Omega^1_{P,\ver}, \quad f_{\ast,\ver}(p \cdot \omega \cdot q) = f(p) \cdot f_{\ast,\ver}(\omega) \cdot f(q)
	\]
	and \(f_{\ast,\ver} \circ \dv{P} = \dv{P} \circ f_{\ast,\ver}\). A straightforward calculation now shows that
	\[
		\ver[\dt{2}] \circ f_\ast = f_{\ast,\ver} \circ \ver[\dt{1}].
	\]	
\end{remark}

In keeping with our stated philosophy, given a bicovariant \fodc{} \((\Omega^1_H,\dt{H})\) on \(H\), we consider bimodule connections on all quantum principal \((H;\Omega^1_H,\dt{H})\)-bundles induced from \(P\) by \fodc{} in \(\Ob(\cG[\Omega^1_H])\). It is straightforward to check that the abstract gauge groupoid \(\cG[\Omega^1_H]\) admits a canonical action on this set of bimodule connections.

\begin{propositiondefinition}
	Let \((\Omega^1_H,\dt{H})\) be a bicovariant \fodc{} on \(H\) that is locally freeing for \(P\). Let \(\cA[\Omega^1_H]\) to be the set of all triples \((\Omega^1_P,\dt{P};\Pi)\), where \((\Omega^1_P,\dt{P}) \in \Ob(\cG[\Omega^1_H])\) and \(\Pi\) is a bimodule connection on the quantum principal \((\Omega^1_H,\dt{H})\)-bundle \((P;\Omega^1_P,\dt{P})\); hence, let \(p[\Omega^1_H]:\cA[\Omega^1_H] \to \Ob(\cG[\Omega^1_H])\) be the canonical surjection given by
	\[
		\forall (\Omega^1_P,\dt{P};\Pi) \in \cA[\Omega^1_H], \quad p[\Omega^1_H](\Omega^1_P,\dt{P};\Pi)  \coloneqq (\Omega^1_P,\dt{P}).
	\]
	Then the \emph{abstract gauge action} is the action of \(\cG[\Omega^1_H]\) on \(\cA[\Omega^1_H]\) via \(p[\Omega^1_H]\) defined by
	\begin{multline}
		\forall \left(f : (\Omega_1,\dt{1}) \to (\Omega_2,\dt{2})\right) \in \cG[\Omega^1_H], \, \forall (\Omega_1,\dt{1};\Pi) \in p[\Omega^1_H]^{-1}(\Omega_1,\dt{1}),\\ \left(f : (\Omega_1,\dt{1}) \to (\Omega_2,\dt{2})\right) \act (\Omega_1,\dt{1};\Pi) \coloneqq (\Omega_2,\dt{2};f_\ast \circ \Pi \circ f_\ast^{-1}).
	\end{multline}
	Hence, the canonical covering \(\pi[\Omega^1_H] : \cG[\Omega^1_H] \ltimes \cA[\Omega^1_H] \to \cG[\Omega^1_H]\) is given by
	\begin{multline}
		\forall \left(\left(f : (\Omega_1,\dt{1}) \to (\Omega_2,\dt{2})\right),(\Omega_1,\dt{1};\Pi)\right) \in \cG[\Omega^1_H] \ltimes \cA[\Omega^1_H], \\ \pi[\Omega^1_H]\mleft(\left(f : (\Omega_1,\dt{1}) \to (\Omega_2,\dt{2})\right),(\Omega_1,\dt{1};\Pi)\mright) \coloneqq \left(f : (\Omega_1,\dt{1}) \to (\Omega_2,\dt{2})\right).
	\end{multline}
\end{propositiondefinition}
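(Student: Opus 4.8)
\emph{Proof proposal.}
The definitional content of the statement---the set $\cA[\Omega^1_H]$, the surjection $p[\Omega^1_H]$, and the canonical covering $\pi[\Omega^1_H]$---requires no argument, so the plan is to check that the displayed formula genuinely defines an action of $\cG[\Omega^1_H]$ on $\cA[\Omega^1_H]$ via $p[\Omega^1_H]$. The one substantive claim is that, given an arrow $f : (\Omega_1,\dt{1}) \to (\Omega_2,\dt{2})$ of $\cG[\Omega^1_H]$ and a bimodule connection $\Pi$ on $(P;\Omega_1,\dt{1})$, the conjugate $\Pi' \coloneqq f_\ast \circ \Pi \circ f_\ast^{-1}$ is a bimodule connection on $(P;\Omega_2,\dt{2})$, so that $(\Omega_2,\dt{2};\Pi') \in \cA[\Omega^1_H]$ and lies over the codomain $(\Omega_2,\dt{2})$ of $f$; the groupoid action axioms then follow formally.

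First I would record the elementary properties of $f_\ast : \Omega_1 \to \Omega_2$. By construction it is a $\bC$-linear bijection intertwining the $P$-bimodule structures through $f$, that is, $f_\ast(p \cdot \omega \cdot q) = f(p) \cdot f_\ast(\omega) \cdot f(q)$ (expand $\omega$ in the generators $p \cdot \dt{1}(\p{p}) \cdot \pp{p}$); and since $f$ is a left $H$-covariant $\ast$-automorphism while $\dt{1}$ and $\dt{2}$ are left $H$-covariant $\ast$-derivations, the same expansion shows $f_\ast$ is left $H$-covariant and $\ast$-preserving. Hence $f_\ast^{-1}$ enjoys the same properties with $f$ replaced by $f^{-1}$. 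Finally, by the Remark preceding the statement, $\ver[\dt{2}] \circ f_\ast = f_{\ast,\ver} \circ \ver[\dt{1}]$, where $f_{\ast,\ver}$ is the $\bC$-linear bijection of $\Omega^1_{P,\ver}$ supplied by Proposition~\ref{astver}.

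With these facts in hand, transferring the connection axioms from $\Pi$ to $\Pi'$ is routine: left $P$-linearity, right $P$-linearity, $\ast$-preservation, and left $H$-covariance of $\Pi'$ all pass through conjugation by the bimodule-intertwining bijections $f_\ast$, $f_\ast^{-1}$; idempotency is immediate from $\Pi^2 = \Pi$; and the kernel condition follows from
\[
	\ker\Pi' = f_\ast(\ker\Pi) = f_\ast\mleft(\ker\ver[\dt{1}]\mright) = \ker\ver[\dt{2}],
\]
the last equality using $\ver[\dt{2}] \circ f_\ast = f_{\ast,\ver} \circ \ver[\dt{1}]$ together with the injectivity of $f_{\ast,\ver}$ and the surjectivity of $f_\ast$. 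Thus $(\Omega_2,\dt{2};\Pi') \in \cA[\Omega^1_H]$ with $p[\Omega^1_H]$-value equal to the codomain of $f$, as required.

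To finish I would verify the two action axioms over $p[\Omega^1_H]$. For the identity arrow $\id_P$ of an object $(\Omega,\dt{})$ one has $(\id_P)_\ast = \id_{\Omega}$, so $(\Omega,\dt{};\Pi)$ is fixed; and for composable arrows $f : (\Omega_1,\dt{1}) \to (\Omega_2,\dt{2})$, $g : (\Omega_2,\dt{2}) \to (\Omega_3,\dt{3})$, comparison on generators gives $(g \circ f)_\ast = g_\ast \circ f_\ast$, whence
\[
	(g \circ f)_\ast \circ \Pi \circ (g \circ f)_\ast^{-1} = g_\ast \circ \mleft(f_\ast \circ \Pi \circ f_\ast^{-1}\mright) \circ g_\ast^{-1},
\]
which is precisely $g \act \mleft(f \act (\Omega_1,\dt{1};\Pi)\mright)$; the formula for $\pi[\Omega^1_H]$ is then the standard covering of an action groupoid. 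The only step that is not pure bookkeeping is the kernel condition for $\Pi'$, and the main thing to get right there is the compatibility identity $\ver[\dt{2}] \circ f_\ast = f_{\ast,\ver} \circ \ver[\dt{1}]$ together with the bijectivity of $f_{\ast,\ver}$ coming from Proposition~\ref{astver}; once that is in place, nothing else is delicate.
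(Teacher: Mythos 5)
Your proposal is correct: the paper states this result without proof, calling the verification straightforward, and your argument is exactly the intended routine check—conjugation by the bimodule-intertwining bijection $f_\ast$ transfers all connection axioms, with the only non-trivial point being the kernel condition via $\ver[\dt{2}] \circ f_\ast = f_{\ast,\ver} \circ \ver[\dt{1}]$ and bijectivity of $f_{\ast,\ver}$, followed by the formal action axioms $(\id_P)_\ast = \id$ and $(g \circ f)_\ast = g_\ast \circ f_\ast$.
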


As a convenient abuse of notation, we will denote an arrow \[\left(\left(f : (\Omega_1,\dt{1}) \to (\Omega_2,\dt{2})\right),(\Omega_1,\dt{1};\Pi)\right)\] of the action groupoid \(\cG[\Omega^1_H] \ltimes \cA[\Omega^1_H]\) by
\[
	f: (\Omega_1,\dt{1};\Pi_1) \to (\Omega_2,\dt{2};\Pi_2),
\]
where \(\Pi_2 \coloneqq f_\ast \circ \Pi_1 \circ f_\ast^{-1}\), so that, in particular
\[
	\pi[\Omega^1_H]\mleft(f: (\Omega_1,\dt{1};\Pi_1) \to (\Omega_2,\dt{2};\Pi_2)\mright) \coloneqq \left(f : (\Omega_1,\dt{1}) \to (\Omega_2,\dt{2})\right).
\]

\begin{example}
	Let \(\Omega^1_{P,\oplus} \coloneqq \Omega^1_{P,\ver} \oplus \Omega^1_{P,\hor}\), and let \(\Pi_\oplus : \Omega^1_{P,\oplus} \to \Omega^1_{P,\oplus}\) denote the projection onto	 \(\Omega^1_{P,\ver}\) along \(\Omega^1_{P,\hor}\). For every gauge potential \(\nabla \in \fr{At}\) on \(P\) with respect to \((\Omega^1_B,\dt{B};\Omega^1_{P,\hor})\), the left \(H\)-covariant \(\ast\)-derivation \(\dt{P,\nabla} : P \to \Omega^1_{P,\oplus}\) given by
	\begin{equation}
		\forall p \in P, \quad \dt{P,\nabla}(p) \coloneqq \left(\dv{P}(p),\nabla(p)\right)
	\end{equation}
	makes \((\Omega^1_{P,\oplus},\dt{P,\nabla};\Pi_{\oplus})\) into an element of \(\cA[\Omega^1_H]\), such that \(\nabla_{\Pi_\oplus} = \nabla\); in particular, the resulting vertical map \(\ver[\dt{P,\nabla}]\) is simply the projection onto \(\Omega^1_{P,\ver}\) along \(\Omega^1_{P,\hor}\).
\end{example}

Let \(\fr{G}\) and \(\fr{At}\) respectively denote the gauge group and Atiyah space of the principal left \(H\)-comodule \(\ast\)-algebra \(P\) with respect to the horizontal calculus \((\Omega^1_B,\dt{B};\Omega^1_{P,\hor})\), where, by the usual abuse of notation, we suppress the inclusion \(\iota : \Omega^1_B \inj \Omega^1_{P,\hor}\). We now promote the decomposition \eqref{decomposition} given by Proposition~\ref{analysisthm} to an explicit equivalence of categories that realises the action groupoid \(\fr{G} \ltimes \fr{At}\), which is independent of the choice of bicovariant \fodc{} \((\Omega^1_H,\dt{H})\) on \(H\), as a deformation retraction of the action groupoid \(\cG[\Omega^1_H] \ltimes \cA[\Omega^1_H]\) of the abstract gauge action. This rigorously justifies identifying the action of the gauge group \(\fr{G}\) on the Atiyah space \(\fr{At}\) as the affine action of global gauge transformations on principal connections for the quantum principal \(H\)-bundle \(P\) with respect to the bicovariant \fodc{} \((\Omega^1_H,\dt{H})\) on \(H\) and the horizontal calculus \((\Omega^1_B,\dt{B};\Omega^1_{P,\hor})\).

\begin{proposition}\label{equiv1}
	Let \((\Omega^1_H,\dt{H})\) be a bicovariant \fodc{} on \(H\) that is locally freeing for \(P\). The groupoid homomorphism \(\Sigma[\Omega^1_H] : \fr{G} \ltimes \fr{At} \to \cG[\Omega^1_H] \ltimes \cA[\Omega^1_H]\)
	given by
	\begin{equation}
		\forall (\phi,\nabla) \in \fr{G} \ltimes \fr{At}, \quad \Sigma[\Omega^1_H](\phi, \nabla) \coloneqq \left(\phi : (\Omega^1_{P,\oplus},\dt{P,\nabla};\Pi_{\oplus}) \to (\Omega^1_{P,\oplus},\dt{P,\phi\act\nabla};\Pi_{\oplus})\right)
	\end{equation}
	is an equivalence of groupoids with left inverse and homotopy inverse \(A[\Omega^1_H]\) given by
	\begin{multline}
		\forall \left(f: (\Omega_1,\dt{1};\Pi_1) \to (\Omega_2,\dt{2};\Pi_2)\right) \in \cG[\Omega^1_H] \ltimes \cA[\Omega^1_H], \\
			A[\Omega^1_H]\mleft(f: (\Omega_1,\dt{1};\Pi_1) \to (\Omega_2,\dt{2};\Pi_2)\mright) \coloneqq \left(f, C[\Omega_1] \circ (\id-\Pi_1) \circ \dt{1}\right).
	\end{multline}
	In particular, there exists a  homotopy \(\eta[\Omega^1_H] : \id_{\cG[\Omega^1_H] \ltimes \cA[\Omega^1_H]} \Rightarrow \Sigma[\Omega^1_H] \circ A[\Omega^1_H]\), which is necessarily unique, such that
	\begin{equation}\label{natiso}
		\forall (\Omega^1_P,\dt{P};\Pi) \in \cA[\Omega^1_H], \quad \mu[\Omega^1_H] \circ \pi[\Omega^1_H]\mleft(\eta[\Omega^1_H]_{(\Omega^1_P,\dt{P};\Pi)}\mright) = \id_P.
	\end{equation}
\end{proposition}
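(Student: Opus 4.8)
The plan is to verify that $\Sigma[\Omega^1_H]$ and $A[\Omega^1_H]$ are well-defined homomorphisms of groupoids, that $A[\Omega^1_H] \circ \Sigma[\Omega^1_H] = \id_{\fr{G} \ltimes \fr{At}}$ strictly, and that $\Sigma[\Omega^1_H] \circ A[\Omega^1_H]$ is naturally isomorphic to $\id_{\cG[\Omega^1_H] \ltimes \cA[\Omega^1_H]}$ via an explicit homotopy $\eta[\Omega^1_H]$; the deformation-retraction claim and the uniqueness of $\eta[\Omega^1_H]$ subject to~\eqref{natiso} then follow formally. Essentially all of the substantive content has already been isolated in Propositions~\ref{analysisthm} and~\ref{astver}, so what remains is bookkeeping with the canonical isomorphisms $C[\Omega^1_P]$.

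That $\Sigma[\Omega^1_H]$ sends each $\nabla \in \fr{At}$ to an object of $\cA[\Omega^1_H]$ is precisely the content of the Example above constructing $(\Omega^1_{P,\oplus},\dt{P,\nabla};\Pi_\oplus)$. For an arrow $(\phi,\nabla)$ I would identify the induced endomorphism of $\Omega^1_{P,\oplus}$ with $\phi_{\ast,\ver} \oplus \phi_{\ast,\hor}$, where $\phi_{\ast,\ver}$ is the vertical map furnished by Proposition~\ref{astver}, and then read off from $\phi_{\ast,\ver} \circ \dv{P} = \dv{P} \circ \phi$ and $\phi_{\ast,\hor} \circ \nabla \circ \phi^{-1} = \phi \act \nabla$ that it is a well-defined bijection intertwining $\dt{P,\nabla}$ with $\dt{P,\phi\act\nabla}$ and conjugating $\Pi_\oplus$ to $\Pi_\oplus$; functoriality is then immediate, since arrows compose on both sides by composition of $\ast$-automorphisms of $P$. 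For $A[\Omega^1_H]$: given an object $(\Omega^1_P,\dt{P};\Pi)$, its horizontal lift is isomorphic to the projectable lift $\Omega^1_{P,\hor}$ and so is itself projectable, whence $\Pi$ is strong by Proposition~\ref{analysisthm} and $\nabla_\Pi \coloneqq (\id-\Pi)\circ\dt{P}$ is a left $H$-covariant $\ast$-derivation with $\rest{\nabla_\Pi}{B} = \rest{\dt{P}}{B}$; it follows that $C[\Omega^1_P] \circ \nabla_\Pi$ is a gauge potential by virtue of $C[\Omega^1_P] \circ \rest{\dt{P}}{B} = \iota \circ \dt{B}$. The step needing care is that $A[\Omega^1_H]$ respects arrows: for $f : (\Omega_1,\dt{1};\Pi_1) \to (\Omega_2,\dt{2};\Pi_2)$ one must show both that $f$ lies in $\fr{G}$, with induced automorphism $f_{\ast,\hor}$ of $\Omega^1_{P,\hor}$, and that $f \act (C[\Omega_1]\circ(\id-\Pi_1)\circ\dt{1}) = C[\Omega_2]\circ(\id-\Pi_2)\circ\dt{2}$; both reduce to the identity $C[\Omega_2] \circ f_\ast = f_{\ast,\hor} \circ C[\Omega_1]$ on horizontal $1$-forms, which I would prove by evaluating both sides on the generators $\dt{1}(B)$ using $\rest{f}{B} = \id$, $C[\Omega_i]\circ\rest{\dt{i}}{B} = \iota\circ\dt{B}$, and $\phi_\ast\circ\iota = \iota$, and then extending by $P$-bimodule-linearity together with $\dt{2}\circ f = f_\ast\circ\dt{1}$.

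To obtain $A[\Omega^1_H]\circ\Sigma[\Omega^1_H] = \id$, I would observe that for the object $(\Omega^1_{P,\oplus},\dt{P,\nabla};\Pi_\oplus)$ the vertical map is the projection onto $\Omega^1_{P,\ver}$ along $\Omega^1_{P,\hor}$, so that its module of horizontal $1$-forms is $0 \oplus \Omega^1_{P,\hor}$ and the canonical isomorphism $C[\Omega^1_{P,\oplus}]$ is the tautological map $(0,\omega)\mapsto\omega$; hence $C[\Omega^1_{P,\oplus}]\circ(\id-\Pi_\oplus)\circ\dt{P,\nabla} = \nabla$, so $A[\Omega^1_H]\circ\Sigma[\Omega^1_H](\phi,\nabla) = (\phi,\nabla)$.

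For the homotopy, I would fix an object $(\Omega^1_P,\dt{P};\Pi)$, put $\nabla \coloneqq C[\Omega^1_P]\circ(\id-\Pi)\circ\dt{P}$ so that $\Sigma[\Omega^1_H]\circ A[\Omega^1_H]$ carries it to $(\Omega^1_{P,\oplus},\dt{P,\nabla};\Pi_\oplus)$, and take $\eta[\Omega^1_H]_{(\Omega^1_P,\dt{P};\Pi)}$ to be $\id_P$ equipped with the left $H$-covariant $P$-$\ast$-bimodule isomorphism $(\id_{\Omega^1_{P,\ver}}\oplus C[\Omega^1_P])\circ\psi_\Pi : \Omega^1_P \to \Omega^1_{P,\oplus}$, with $\psi_\Pi$ the isomorphism of Proposition~\ref{analysisthm}. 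Since $\psi_\Pi\circ\dt{P} = (\dv{P},\nabla_\Pi)$, this map carries $p\cdot\dt{P}(\p{p})\cdot\pp{p}$ to $p\cdot\dt{P,\nabla}(\p{p})\cdot\pp{p}$, so it is exactly the map $(\id_P)_\ast$ of Definition~\ref{qpbdef}; and $\ver[\dt{P}] = \ver[\dt{P}]\circ\Pi$ together with $(\id-\Pi)\circ\Pi = 0$ shows that it conjugates $\Pi$ to $\Pi_\oplus$, so $\eta[\Omega^1_H]_{(\Omega^1_P,\dt{P};\Pi)}$ is a genuine arrow of $\cG[\Omega^1_H]\ltimes\cA[\Omega^1_H]$ with underlying $\ast$-automorphism $\id_P$, i.e.\ satisfying~\eqref{natiso}. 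Naturality of $\eta[\Omega^1_H]$ and its uniqueness subject to~\eqref{natiso} both follow from the fact that a parallel pair of arrows in $\cG[\Omega^1_H]\ltimes\cA[\Omega^1_H]$ is determined by its common underlying $\ast$-automorphism of $P$: for naturality the two legs of each square have underlying automorphism the given $f$ and matching source and target, while any competing normalised homotopy $\eta'$ has each $\eta'_X$ parallel to $\eta[\Omega^1_H]_X$ with underlying $\id_P$. The main obstacle throughout is keeping the identifications $C[\Omega^1_P]$ straight, in particular establishing the functoriality identity $C[\Omega_2]\circ f_\ast = f_{\ast,\hor}\circ C[\Omega_1]$ which makes $A[\Omega^1_H]$ a homomorphism.
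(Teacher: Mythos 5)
Your proposal is correct and follows essentially the same route as the paper's proof: well-definedness of \(\Sigma[\Omega^1_H]\) via \(\phi_\ast = \phi_{\ast,\ver}\oplus\phi_{\ast,\hor}\), well-definedness of \(A[\Omega^1_H]\) via the identity \(f_{\ast,\hor} = C[\Omega_2]\circ f_\ast\circ C[\Omega_1]^{-1}\) checked on the generators \(\dt{1}(B)\), the observation \(C[\Omega^1_{P,\oplus}] = \id_{\Omega^1_{P,\hor}}\) for the strict left inverse, and the homotopy given on each object by \(\id_P\) with induced map \((\id\oplus C[\Omega^1_P])\circ\psi_\Pi\) from Proposition~\ref{analysisthm}. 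Your explicit verification that this map conjugates \(\Pi\) to \(\Pi_\oplus\), and the naturality/uniqueness argument via parallel arrows being determined by their underlying \(\ast\)-automorphism of \(P\), merely spell out steps the paper leaves implicit.
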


\begin{proof}
	First, let us check that \(\Sigma[\Omega^1_H]\) is well-defined. Let \((\phi,\nabla) \in \fr{G} \ltimes \fr{At}\). Then, by Remark~\ref{asthor} and Proposition~\ref{astver}, for all \(p, \p{p}, \pp{p} \in P\), 
	\begin{align*}
		\phi(p) \cdot \dt{P,\phi \act \nabla}(\phi(\p{p})) \cdot \phi(\pp{p}) &= \phi(p) \cdot \left(\dv{P}(\phi(\p{p})),(\phi \act \nabla)(\phi(\p{p}))\right) \cdot \phi(\pp{p})\\
		&=\phi(p) \cdot \left(\phi_{\ast,\ver}(\dv{P}(\p{p})),\phi_{\ast,\hor}(\nabla(\p{p}))\right) \cdot \phi(\pp{p})\\
		&= \left(\phi_{\ast,\ver}(p \cdot \dv{P}(\p{p}) \cdot \pp{p}),\phi_{\ast,\hor}(p \cdot \nabla(\p{p}) \cdot \pp{p})\right)\\
		&= (\phi_{\ast,\ver}\oplus \phi_{\ast,\hor})\mleft(p \cdot \dt{P,\nabla}(\p{p}) \cdot \pp{p}\mright)
	\end{align*}
	so that \(\phi_\ast = \phi_{\ast,\ver} \oplus \phi_{\ast,\hor}\) is a well-defined bijection that satisfies
	\begin{gather*}
		\phi_{\ast,\ver} \circ \ver[\dt{P,\nabla}] = \phi_{\ast,\ver} \circ \Proj_1 = \Proj_1 \circ (\phi_{\ast,\ver}\oplus\phi_{\ast,\hor}) = \ver[\dt{P,\phi \act \nabla}] \circ \phi_\ast, \\ \phi_\ast \circ \Pi_\oplus = (\phi_{\ast,\ver}\oplus\phi_{\ast,\hor}) \circ \Pi_\oplus = \Pi_\oplus \circ (\phi_{\ast,\ver}\oplus\phi_{\ast,\hor}) = \Pi_\oplus \circ \phi_\ast.
	\end{gather*}
	Hence, \(\phi : (\Omega^1_{P,\oplus},\dt{P,\nabla};\Pi_{\oplus}) \to (\Omega^1_{P,\oplus},\dt{P,\phi\act\nabla};\Pi_{\oplus})\) is an arrow in \(\cG[\Omega^1_H] \ltimes \cA[\Omega^1_H]\). Thus, \(\Sigma[\Omega^1_H]\) is well-defined as a function between sets of arrows. That \(\Sigma[\Omega^1_H]\) is a groupoid homomorphism now follows from the fact that both of
	\begin{gather*}
		(\phi \mapsto \phi_{\ast,\ver}) : \fr{G} \to \GL(\Omega^1_{P,\ver}), \quad (\phi \mapsto \phi_{\ast,\hor}) : \fr{G} \to \GL(\Omega^1_{P,\hor})
	\end{gather*}
	are group homomorphisms.

	Next, let us check that \(A[\Omega^1_H]\) is well-defined. Let \(f: (\Omega_1,\dt{1};\Pi_1) \to (\Omega_2,\dt{2};\Pi_2)\) be an arrow in \(\cG[\Omega^1_H] \ltimes \cA[\Omega^1_H]\), so that \(f_\ast : \Omega_1 \to \Omega_2\) is a well-defined bijection satisfying 
	\[
		\ver[\dt{2}] \circ f_\ast = f_{\ast,\ver} \circ \ver[\dt{1}], \quad \Pi_2 \circ f_\ast = f_\ast \circ \Pi_2.
	\]
	For \(i=1,2\), let \(\nabla_i \coloneqq C[\Omega_i] \circ (\id-\Pi_i) \circ \dt{P} \in \fr{At}\); we need to show that \(f \in \fr{G}\) and that \(\nabla_2 = f \act \nabla_1\). Now, for every \(p,\p{p} \in P\) and \(b \in B\),
	\begin{align*}
		f(p) \cdot \dt{B}(b) \cdot f(\p{p}) &= C[\Omega_2]\mleft(f(p) \cdot \dt{2}(b) \cdot f(\p{p})\mright)\\
		&= C[\Omega_2] \circ f_\ast \mleft(p \cdot \dt{1}(b) \cdot \p{p}\mright)\\
		&= C[\Omega_2] \circ f_\ast \circ C[\Omega_1]^{-1}\mleft(p \cdot \dt{B}(b) \cdot \p{p}\mright),
	\end{align*}
	so that \(f_{\ast,\hor} = C[\Omega_2] \circ f_\ast \circ C[\Omega_1]^{-1}\) is well-defined and bijective, hence \(f \in \fr{G}\), with
	\begin{multline*}
		f \act \nabla_1 = f_{\ast,\hor} \circ \nabla_1 \circ f^{-1} = C[\Omega_2] \circ f_\ast \circ C[\Omega_1]^{-1} \circ C[\Omega_1] \circ (\id-\Pi_1) \circ \dt{1} \circ f^{-1}\\
		= C[\Omega_2] \circ f_\ast \circ (\id-\Pi_1) \circ \dt{1} \circ f^{-1} = C[\Omega_2] \circ (\id-\Pi_2) \circ \dt{2} = \nabla_2.
	\end{multline*}
	As a result, \((f,\nabla_1) : \nabla_1 \to \nabla_2\) is a well-defined arrow in \(\fr{G} \ltimes \fr{At}\). Thus, \(A[\Omega^1_H]\) is well-defined as a function between sets of arrows. That \(A[\Omega^1_H]\) is a groupoid homomorphism now follows since \((\phi \mapsto \phi_{\ast,\hor}) : \fr{G} \to \GL(\Omega^1_{P,\hor})\) is a group homomorphism.
	
	On the one hand, \(C[\Omega^1_{P,\oplus}] = \id_{\Omega^1_{P,\hor}}\) by uniqueness of \(C[\Omega^1_{P,\oplus}]\); hence, it follows that \(A[\Omega^1_H] \circ \Sigma[\Omega^1_H] = \id_{\fr{G}\ltimes\fr{At}}\). On the other hand, by the proof of Proposition~\ref{analysisthm}, \emph{mutatis mutandis}, we can define a homotopy \(\eta[\Omega^1_H] : \id_{\cG[\Omega^1_H] \ltimes \cA[\Omega^1_H]} \Rightarrow \Sigma[\Omega^1_H] \circ A[\Omega^1_H]\) by
	\begin{multline*}
		\forall (\Omega^1_P,\dt{P};\Pi) \in \cA[\Omega^1_H], \\  \eta[\Omega^1_H]_{(\Omega^1_P,\dt{P};\Pi)} \coloneqq \left(\id_P : (\Omega^1_P,\dt{P};\Pi) \to A[\Omega^1_H] \circ \Sigma[\Omega^1_H](\Omega^1_P,\dt{P};\Pi)\right),
	\end{multline*}
	which, by construction, satisfies~\eqref{natiso}.
\end{proof}

In particular, this justifies the identification of \(\fr{G}\) as the group of global gauge transformations on the quantum principal \(H\)-bundle \(P\) with respect to the bicovariant \fodc{} \((\Omega^1_H,\dt{H})\) on \(H\) and horizontal calculus \((\Omega^1_B,\dt{B};\Omega^1_{P,\hor})\).

\begin{corollary}\label{equiv1cor}
	Let \((\Omega^1_H,\dt{H})\) be a bicovariant \fodc{} on \(H\) that is locally freeing for \(P\). The range of the star-injective groupoid homomorphism \(\mu[\Omega^1_H] : \cG[\Omega^1_H] \to \Aut(P)\) is \(\fr{G}\), so that, after restriction of codomain,
	\[
		\mu[\Omega^1_H] : \cG[\Omega^1_H] \to \fr{G}, \quad \mu[\Omega^1_H] \circ \pi[\Omega^1_H] : \cG[\Omega^1_H] \ltimes \cA[\Omega^1_H] \to \fr{G}
	\]
	both define coverings of groupoids.
\end{corollary}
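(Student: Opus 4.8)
The strategy is to deduce everything from Proposition~\ref{equiv1}, exploiting the normalisation~\eqref{natiso}, which makes the homotopy $\eta[\Omega^1_H]$ ``vertical'' over the composite $\mu[\Omega^1_H] \circ \pi[\Omega^1_H]$; after that the argument is a short diagram chase in the category of groupoids.

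First I would record two formal facts. On the one hand, the canonical covering $\pi[\Omega^1_H] : \cG[\Omega^1_H] \ltimes \cA[\Omega^1_H] \to \cG[\Omega^1_H]$ is a covering that is moreover surjective on objects and arrows: it is star-injective since the codomain connection $\Pi_2 = f_\ast \circ \Pi_1 \circ f_\ast^{-1}$ of an arrow $f : (\Omega_1,\dt{1};\Pi_1) \to (\Omega_2,\dt{2};\Pi_2)$ is determined by its source together with $\pi[\Omega^1_H](f)$; it is star-surjective and surjective on objects because every object of $\cG[\Omega^1_H]$ admits a bimodule connection, so any object or arrow of $\cG[\Omega^1_H]$ lifts along $\pi[\Omega^1_H]$. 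Consequently $\ran(\mu[\Omega^1_H]) = \ran(\mu[\Omega^1_H] \circ \pi[\Omega^1_H])$, and $\mu[\Omega^1_H]$ will be a covering onto its range once $\mu[\Omega^1_H] \circ \pi[\Omega^1_H]$ is, since star-injectivity of $\mu[\Omega^1_H]$ is recorded in Definition~\ref{qpbdef} and star-surjectivity of $\mu[\Omega^1_H]$ follows from that of $\mu[\Omega^1_H] \circ \pi[\Omega^1_H]$ by projecting arrows along $\pi[\Omega^1_H]$. On the other hand, by the definition of $\Sigma[\Omega^1_H]$ the composite $q \coloneqq \mu[\Omega^1_H] \circ \pi[\Omega^1_H] \circ \Sigma[\Omega^1_H] : \fr{G} \ltimes \fr{At} \to \Aut(P)$ sends $(\phi,\nabla)$ to $\phi$, i.e., $q$ is the projection of the action groupoid $\fr{G} \ltimes \fr{At}$ onto $\fr{G}$ followed by the inclusion $\fr{G} \hookrightarrow \Aut(P)$; in particular $\ran(q) = \fr{G}$ because $\fr{At} \neq \emptyset$ by standing assumption.

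Next I would compute the range. Naturality of the homotopy $\eta[\Omega^1_H] : \id \Rightarrow \Sigma[\Omega^1_H] \circ A[\Omega^1_H]$ gives $\Sigma[\Omega^1_H] \circ A[\Omega^1_H](h) = \eta[\Omega^1_H]_y \circ h \circ \eta[\Omega^1_H]_x^{-1}$ for every arrow $h : x \to y$ of $\cG[\Omega^1_H] \ltimes \cA[\Omega^1_H]$; applying the homomorphism $\mu[\Omega^1_H] \circ \pi[\Omega^1_H]$ and using~\eqref{natiso} to kill the two copies of $\eta[\Omega^1_H]$, we obtain
\[
	\mu[\Omega^1_H] \circ \pi[\Omega^1_H](h) = \mu[\Omega^1_H] \circ \pi[\Omega^1_H]\bigl(\Sigma[\Omega^1_H] \circ A[\Omega^1_H](h)\bigr) = q\bigl(A[\Omega^1_H](h)\bigr) \in \ran(q) = \fr{G}.
\]
Together with $\fr{G} = \ran(q) = \ran(\mu[\Omega^1_H] \circ \pi[\Omega^1_H] \circ \Sigma[\Omega^1_H]) \subseteq \ran(\mu[\Omega^1_H] \circ \pi[\Omega^1_H])$, this yields $\ran(\mu[\Omega^1_H]) = \ran(\mu[\Omega^1_H] \circ \pi[\Omega^1_H]) = \fr{G}$.

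Finally I would upgrade to a covering. Star-injectivity of $\mu[\Omega^1_H] \circ \pi[\Omega^1_H]$ follows from that of $\mu[\Omega^1_H]$ and of $\pi[\Omega^1_H]$. For star-surjectivity, fix an object $x = (\Omega^1_P,\dt{P};\Pi)$ of $\cG[\Omega^1_H] \ltimes \cA[\Omega^1_H]$ and put $\nabla \coloneqq A[\Omega^1_H](x) \in \fr{At}$, so that $\eta[\Omega^1_H]_x : x \to \Sigma[\Omega^1_H](\nabla)$. Given $g \in \fr{G}$, the arrow $h \coloneqq \Sigma[\Omega^1_H](g,\nabla) \circ \eta[\Omega^1_H]_x : x \to \Sigma[\Omega^1_H](g \act \nabla)$ has source $x$ and, by the displayed identity together with~\eqref{natiso}, satisfies $\mu[\Omega^1_H] \circ \pi[\Omega^1_H](h) = q(g,\nabla) = g$; hence $\mu[\Omega^1_H] \circ \pi[\Omega^1_H]$ is a covering onto $\fr{G}$, and then so is $\mu[\Omega^1_H]$ by the reduction above (choosing, for a given object of $\cG[\Omega^1_H]$, any bimodule connection and projecting the corresponding $h$ along $\pi[\Omega^1_H]$). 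I expect the only genuinely delicate point to be the bookkeeping around~\eqref{natiso} that realises $\eta[\Omega^1_H]$ as vertical over $\mu[\Omega^1_H] \circ \pi[\Omega^1_H]$; everything else is formal manipulation of action groupoids and coverings, so that the statement is essentially a corollary of Proposition~\ref{equiv1}.
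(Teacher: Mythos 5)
Your argument is correct and is essentially the route the paper intends: Corollary~\ref{equiv1cor} is read off from Proposition~\ref{equiv1}, with \(\ran\mleft(\mu[\Omega^1_H]\mright) \subseteq \fr{G}\) coming from the well-definedness of \(A[\Omega^1_H]\), the reverse inclusion and star-surjectivity coming from \(\Sigma[\Omega^1_H]\) together with the normalised homotopy \(\eta[\Omega^1_H]\) of~\eqref{natiso}, and the passage between \(\mu[\Omega^1_H]\) and \(\mu[\Omega^1_H] \circ \pi[\Omega^1_H]\) handled exactly by your observation that \(\pi[\Omega^1_H]\) is surjective on objects and arrows because every object of \(\cG[\Omega^1_H]\) admits a bimodule connection. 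The only stylistic remark is that the detour through naturality of \(\eta[\Omega^1_H]\) and~\eqref{natiso} to get \(\ran\mleft(\mu[\Omega^1_H] \circ \pi[\Omega^1_H]\mright) \subseteq \fr{G}\) is dispensable, since \(\mu[\Omega^1_H] \circ \pi[\Omega^1_H](h)\) is literally the \(\fr{G}\)-component of \(A[\Omega^1_H](h)\); this does no harm.
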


Given a bicovariant \fodc{} \((\Omega^1_H,\dt{H})\) on \(H\), we now use the groupoid equivalence \(\Sigma[\Omega^1_H]\) of Proposition~\ref{equiv1} to construct a \(\fr{G}\)-equivariant moduli space of \((H;\Omega^1_H,\dt{H})\)-principal \fodc{} on \(P\) inducing the horizontal calculus \((\Omega^1_B,\dt{B};\Omega^1_{P,\hor})\). Indeed, this moduli space will turn out to be a quotient of the Atiyah space \(\fr{At}\) by the the space of all relative gauge potentials of the following form.

\begin{definition}
	Let \((\Omega^1_H,\dt{H})\) be a bicovariant \textsc{fodc} on \(H\) that is locally freeing for \(P\). We say that a relative gauge potential \(\bA \in \fr{at}\) for \(P\) with respect to \((\Omega^1_B,\dt{B};\Omega^1_{P,\hor})\) is \emph{\((\Omega^1_H,\dt{H})\)-adapted} whenever
	\begin{equation}
		\bA = \omega[\bA] \circ \dv{P}
	\end{equation}
	for some (necessarily unique) left \(H\)-covariant morphism \(\omega[\bA] : \Omega^1_{P,\ver} \to \Omega^1_{P,\hor}\) of \(P\)-\(\ast\)-bimodules; in this case, we call \(\omega[\bA]\) the \emph{relative connection \(1\)-form} of \(\bA\). We denote by \(\fr{at}[\Omega^1_H]\) the subspace of all \((\Omega^1_H,\dt{H})\)-adapted relative gauge potentials on \(P\) with respect to the horizontal calculus \((\Omega^1_B,\dt{B};\Omega^1_{P,\hor})\).
\end{definition}

\begin{remark}[cf.\ Brzezi\'{n}ski--Majid~\cite{BrM}*{\S 4.2}, \DJ{}ur\dj{}evi\'{c}~\cite{Dj97}*{\S 4}, Beggs--Majid~\cite{BeM}*{Prop.\ 5.54}]
	Using uniqueness, one can now check that the map
	\[
		\omega : \fr{at}[\Omega^1_H] \to \Hom_{P}(\Omega^1_{P,\ver},\Omega^1_{P,\hor}), \quad \bA \mapsto \omega[\bA]
	\]	
	is \(\bR\)-linear; moreover, given \(\bA \in \fr{at}[\Omega^1_H]\), the relative connection \(1\)-form \(\omega[\bA]\) is completely determined by its restriction to a map \(\Lambda^1_H \cong \Lambda^1_H \otimes 1_P \to \Omega^1_{P,\hor}\), which can be viewed as a noncommutative Lie-valued \(1\)-form.
\end{remark}

Now, let \((\Omega_1,\dt{1}),(\Omega_2,\dt{2})\in\Ob(\cG[\Omega^1_H])\), where \((\Omega^1_H,\dt{H})\) is a locally freeing bicovariant \fodc{} on \(H\). Observe that \((\Omega_1,\dt{1})\) and \((\Omega_2,\dt{2})\) admit an isomorphism of left \(H\)-covariant \textsc{fodc} for \(P\) if and only if \(\id_P : (\Omega_1,\dt{1}) \to (\Omega_2,\dt{2})\) is an arrow in \(\cG[\Omega^1_H]\). Since the subgroupoid of all such arrows is precisely \(\ker\mu[\Omega^1_H]\), it follows that \((\Omega_1,\dt{1})\) and \((\Omega_2,\dt{2})\) are isomorphic if and only if they define the same object in the quotient groupoid \(\cG[\Omega^1_H]/\ker\mu[\Omega^1_H]\), which will turn out to be well-defined and canonically isomorphic to \(\fr{G} \ltimes \fr{At}/\fr{at}[\Omega^1_H]\). Thus, the quotient affine space \(\fr{At}/\fr{at}[\Omega^1_H]\) yields the desired \(\fr{G}\)-equivariant affine moduli space of \((\Omega^1_H,\dt{H})\)-principal \fodc{} on \(P\) inducing \((\Omega^1_B,\dt{B};\Omega^1_{P,\hor})\).

\begin{theorem}\label{firstorderclassification}
	Let \((\Omega^1_H,\dt{H})\) be a bicovariant \fodc{} on \(H\). Suppose that the Atiyah space \(\fr{At}\) is non-empty. The subspace \(\fr{at}[\Omega^1_H]\) is \(\fr{G}\)-invariant; the subgroupoid \(\ker\mu[\Omega^1_H]\) of \(\cG[\Omega^1_H]\) is wide and has trivial isotropy groups, so that the quotient groupoid \(\cG[\Omega^1_H]/\ker\mu[\Omega^1_H]\) is well-defined; and there exists a unique groupoid isomorphism \[\tilde{\Sigma}[\Omega^1_H] : \fr{G} \ltimes \left(\fr{At}/\fr{at}[\Omega^1_H]\right) \iso \cG[\Omega^1_H]/\ker\mu[\Omega^1_H],\] such that
	\begin{equation}
		\forall (\phi,\nabla) \in \fr{G} \ltimes \fr{At}, \quad \tilde{\Sigma}[\Omega^1_H](\phi,\nabla+\fr{at}[\Omega^1_H]) = \left[\pi[\Omega^1_H] \circ \Sigma[\Omega^1_H](\phi,\nabla)\right]_{\ker\mu[\Omega^1_H]}.
	\end{equation}
	
\end{theorem}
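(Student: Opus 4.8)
The plan is to descend the groupoid equivalence $\Sigma[\Omega^1_H]$ of Proposition~\ref{equiv1} through quotients on both sides. First the two preliminary assertions. For $\fr{G}$-invariance of $\fr{at}[\Omega^1_H]$, take $\bA = \omega[\bA]\circ\dv{P} \in \fr{at}[\Omega^1_H]$ and $\phi \in \fr{G}$; Proposition~\ref{astver} applied to $\phi^{-1}$ gives $\dv{P}\circ\phi^{-1} = \phi^{-1}_{\ast,\ver}\circ\dv{P}$, whence $\phi\act\bA = \bigl(\phi_{\ast,\hor}\circ\omega[\bA]\circ\phi^{-1}_{\ast,\ver}\bigr)\circ\dv{P}$, and the bracketed composite is a left $H$-covariant $\ast$-preserving morphism of $P$-$\ast$-bimodules $\Omega^1_{P,\ver}\to\Omega^1_{P,\hor}$, so $\phi\act\bA \in \fr{at}[\Omega^1_H]$ with relative connection $1$-form $\phi_{\ast,\hor}\circ\omega[\bA]\circ\phi^{-1}_{\ast,\ver}$. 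Next, $\ker\mu[\Omega^1_H]$ is wide because it contains every identity arrow of $\cG[\Omega^1_H]$ (each sent to $\id_P$ by $\mu[\Omega^1_H]$), and it has trivial isotropy because $\mu[\Omega^1_H]$ is star-injective and sends identities to $\id_P$; by Appendix~\ref{Groupoids} the quotient $\cG[\Omega^1_H]/\ker\mu[\Omega^1_H]$ is thus well-defined, its objects being the $\ker\mu[\Omega^1_H]$-equivalence classes of objects of $\cG[\Omega^1_H]$. I will repeatedly use that, since every arrow of $\ker\mu[\Omega^1_H]$ has underlying vertical $\ast$-automorphism $\id_P$, any two arrows of $\cG[\Omega^1_H]$ with $\ker\mu[\Omega^1_H]$-equivalent sources, $\ker\mu[\Omega^1_H]$-equivalent targets, and equal underlying vertical $\ast$-automorphism determine the same arrow of the quotient: given a $\ker\mu[\Omega^1_H]$-identification of the sources, conjugating the first arrow by it produces a $\ker\mu[\Omega^1_H]$-identification of the targets carrying the first arrow to the second.

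The technical heart is the identification: for $\nabla,\nabla'\in\fr{At}$, the map $\id_P : (\Omega^1_{P,\oplus},\dt{P,\nabla}) \to (\Omega^1_{P,\oplus},\dt{P,\nabla'})$ --- between the objects of $\cG[\Omega^1_H]$ attached to $\nabla$ and $\nabla'$ via the canonical elements $(\Omega^1_{P,\oplus},\dt{P,\nabla};\Pi_\oplus)$ of $\cA[\Omega^1_H]$ --- is an arrow of $\cG[\Omega^1_H]$, necessarily then lying in $\ker\mu[\Omega^1_H]$, if and only if $\nabla-\nabla'\in\fr{at}[\Omega^1_H]$. For the forward implication, writing $\bA\coloneqq\nabla-\nabla' = \omega[\bA]\circ\dv{P}$, the assignment $(v,h)\mapsto(v,h-\omega[\bA](v))$ is a left $H$-covariant $\ast$-preserving automorphism of the $P$-$\ast$-bimodule $\Omega^1_{P,\oplus} = \Omega^1_{P,\ver}\oplus\Omega^1_{P,\hor}$ carrying $\dt{P,\nabla}(p)$ to $\dt{P,\nabla'}(p)$ for all $p$, so $(\id_P)_\ast$ is a well-defined bijection. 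The reverse implication is where I expect the main difficulty: I would write an arbitrary such $(\id_P)_\ast$ as a matrix of left $H$-covariant $\ast$-preserving $P$-bimodule maps with respect to $\Omega^1_{P,\oplus} = \Omega^1_{P,\ver}\oplus\Omega^1_{P,\hor}$ and exploit $(\id_P)_\ast\circ\dt{P,\nabla} = \dt{P,\nabla'}$; evaluation at $b\in B$ using $\rest{\nabla}{B} = \rest{\nabla'}{B} = \dt{B}$, $\dv{P}(B)=0$, and the generation of $\Omega^1_{P,\hor}$ over $P$ by $\Omega^1_B$ forces the horizontal-to-vertical component to vanish and the horizontal-to-horizontal one to be the identity; evaluation at general $p$ using the generation of $\Omega^1_{P,\ver}$ over $P$ by $\dv{P}(P)$ then forces the vertical-to-vertical component to be the identity and identifies the vertical-to-horizontal component $\tau$ by $\tau\circ\dv{P} = \nabla'-\nabla$, so $\nabla-\nabla'\in\fr{at}[\Omega^1_H]$ with relative connection $1$-form $-\tau$.

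Now I would build $\tilde{\Sigma}[\Omega^1_H]$ by post-composing $\Sigma[\Omega^1_H]$ with the covering $\pi[\Omega^1_H]$ and the quotient functor $q:\cG[\Omega^1_H]\to\cG[\Omega^1_H]/\ker\mu[\Omega^1_H]$. Since $\pi[\Omega^1_H]\circ\Sigma[\Omega^1_H](\phi,\nabla)$ is the arrow $\phi:(\Omega^1_{P,\oplus},\dt{P,\nabla})\to(\Omega^1_{P,\oplus},\dt{P,\phi\act\nabla})$, the identification of the previous paragraph --- applied to the sources and, via $\fr{G}$-invariance of $\fr{at}[\Omega^1_H]$ through $\phi\act\nabla-\phi\act\nabla' = \phi\act(\nabla-\nabla')$, to the targets --- together with the equal-underlying-automorphism remark shows that $q\circ\pi[\Omega^1_H]\circ\Sigma[\Omega^1_H]$ sends $(\phi,\nabla)$ and $(\phi,\nabla')$ to the same arrow whenever $\nabla-\nabla'\in\fr{at}[\Omega^1_H]$. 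Hence it factors through a groupoid homomorphism $\tilde{\Sigma}[\Omega^1_H]:\fr{G}\ltimes(\fr{At}/\fr{at}[\Omega^1_H])\to\cG[\Omega^1_H]/\ker\mu[\Omega^1_H]$ obeying the displayed formula, and that formula determines $\tilde{\Sigma}[\Omega^1_H]$ on all arrows because $\fr{At}\to\fr{At}/\fr{at}[\Omega^1_H]$ is surjective and a groupoid homomorphism is fixed by its effect on arrows, giving uniqueness.

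Finally, since $\tilde{\Sigma}[\Omega^1_H]$ is a groupoid homomorphism, it is an isomorphism as soon as it is bijective on objects and fully faithful. Injectivity on objects is the reverse implication of the key identification. For surjectivity on objects and for fullness I would feed the homotopy $\eta[\Omega^1_H]$ and Corollary~\ref{equiv1cor} into the argument: given an object $(\Omega^1_P,\dt{P})$ of $\cG[\Omega^1_H]$ with a bimodule connection $\Pi$, the arrow $\pi[\Omega^1_H](\eta[\Omega^1_H]_{(\Omega^1_P,\dt{P};\Pi)})$ lies in $\ker\mu[\Omega^1_H]$ by~\eqref{natiso} and identifies $(\Omega^1_P,\dt{P})$ with $(\Omega^1_{P,\oplus},\dt{P,\nabla})$ for $\nabla\coloneqq C[\Omega^1_P]\circ(\id-\Pi)\circ\dt{P}\in\fr{At}$, yielding surjectivity; for fullness, given an arrow of $\cG[\Omega^1_H]/\ker\mu[\Omega^1_H]$ between $[(\Omega^1_{P,\oplus},\dt{P,\nabla_1})]$ and $[(\Omega^1_{P,\oplus},\dt{P,\nabla_2})]$ represented by some $g$ with $\mu[\Omega^1_H](g) = \phi\in\fr{G}$, lifting $g$ to $\cG[\Omega^1_H]\ltimes\cA[\Omega^1_H]$ by choosing a bimodule connection on its source and applying $A[\Omega^1_H]$ and $\eta[\Omega^1_H]$ shows that $\phi\act\nabla_1-\nabla_2\in\fr{at}[\Omega^1_H]$, so $(\phi,\nabla_1+\fr{at}[\Omega^1_H])$ is an arrow of the source which, by the equal-underlying-automorphism remark, maps under $\tilde{\Sigma}[\Omega^1_H]$ to the given arrow. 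Faithfulness is immediate: if $\tilde{\Sigma}[\Omega^1_H](\phi,\nabla_1+\fr{at}[\Omega^1_H]) = \tilde{\Sigma}[\Omega^1_H](\phi',\nabla_1+\fr{at}[\Omega^1_H])$, the two representing arrows of $\cG[\Omega^1_H]$ share the same underlying vertical $\ast$-automorphism, so $\phi=\phi'$. This completes the proof.
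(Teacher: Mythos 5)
Your proposal is correct, and its computational core coincides with the paper's, but the surrounding architecture is genuinely different. The paper proves $\fr{G}$-invariance of $\fr{at}[\Omega^1_H]$ exactly as you do, and then delegates all of the formal groupoid work---wideness and trivial isotropy of $\ker\mu[\Omega^1_H]$, well-definedness of the quotient, and the existence and uniqueness of an isomorphism onto $\cG[\Omega^1_H]/\ker\mu[\Omega^1_H]$ satisfying the stated formula---to the abstract Lemma~\ref{groupoidlemma} of Appendix~\ref{Groupoids}, applied to $\pi[\Omega^1_H]$, $\mu[\Omega^1_H]$, $\Sigma[\Omega^1_H]$, $A[\Omega^1_H]$ and $\eta[\Omega^1_H]$; the only remaining task there is to identify the resulting equivalence kernel $\sim$ on $\fr{At}$ with translation by $\fr{at}[\Omega^1_H]$. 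That identification is exactly your ``technical heart'': the paper's forward direction sets $\mathcal{N} \coloneqq (\id_P)_\ast - \id$ and uses $\ver[\dt{P,\nabla_2}]\circ(\id_P)_\ast = \Proj_1$ to get $\nabla_2-\nabla_1 = \Proj_2\circ\mathcal{N}\circ\dv{P}$, while its converse takes $(\id_P)_\ast = \id + \omega[\nabla_2-\nabla_1]\circ\Pi_\oplus$; your block-matrix decomposition and your explicit automorphism $(v,h)\mapsto(v,h-\omega[\bA](v))$ are the same computations, and your version even rederives the vanishing of the horizontal-to-vertical block directly from $(\id_P)_\ast\circ\dt{P,\nabla} = \dt{P,\nabla'}$ rather than quoting the remark on vertical maps, which is fine. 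What you do differently is to bypass the abstract lemma entirely: trivial isotropy follows for you directly from star-injectivity of $\mu[\Omega^1_H]$ (indeed only faithfulness on each hom-set is needed, which holds by construction of the arrows---this is considerably shorter than the lemma's argument through the homotopy), well-definedness of $\tilde{\Sigma}[\Omega^1_H]$ follows from your conjugation remark combined with $\fr{G}$-invariance of $\fr{at}[\Omega^1_H]$, and bijectivity on objects, fullness and faithfulness are checked by hand using $\eta[\Omega^1_H]$, equation~\eqref{natiso} and Corollary~\ref{equiv1cor}. The trade-off is the expected one: the paper's route factors the bookkeeping into a lemma that is reused verbatim for the second-order analogue (Theorem~\ref{qpbthm}), whereas your route is more elementary and self-contained and makes visible exactly where each hypothesis (star-injectivity, the covering $\pi[\Omega^1_H]$, the normalised homotopy) enters, at the cost that the argument would have to be repeated in the second-order setting.
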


\begin{proof}
	Let us first show that \(\fr{at}[\Omega^1_H]\) is \(\fr{G}\)-invariant. Let  \(\bA \in \fr{at}[\Omega^1_H]\), so that \(\bA = \omega[\bA] \circ \dt{P,\ver}\). Then, for any \(\phi \in \fr{G}\),
	\[
		\phi \act \bA = \phi_{\ast,\hor} \circ \omega[\bA] \circ \dt{P,\ver} \circ \phi^{-1} = \left(\phi_{\ast} \circ \omega[\bA] \circ \phi^{-1}_{\ast,\ver}\right) \circ \dt{P,\ver},
	\]
	where, by properties of \(\phi_{\ast}\) and \(\phi^{-1}\), the map \(\phi_{\ast,\hor} \circ \omega[\bA] \circ \phi_{\ast,\ver}^{-1} : \Omega^1_{P,\ver} \to \Omega^1_{P,\hor}\) remains a left \(H\)-covariant morphism of \(P\)-\(\ast\)-bimodules, so that \(\phi \act \bA \in \fr{at}[\Omega^1_H]\), with \[\omega[\phi\act\bA] = \phi_{\ast,\hor} \circ \omega[\bA] \circ \phi_{\ast,\ver}^{-1}.\]
	
	Now, by Proposition~\ref{equiv1}, the surjective covering \(\pi[\Omega^1_H] : \cG[\Omega^1_H] \ltimes \cA[\Omega^1_H] \to \cG[\Omega^1_H]\), the star-injective groupoid homomorphism \(\mu[\Omega^1_H] : \cG[\Omega^1_H] \to \Aut(P)\), the injective group\-oid homomorphism \(\Sigma[\Omega^1_H] : \fr{G} \ltimes \fr{At} \to \cG[\Omega^1_H] \ltimes \cA[\Omega^1_H]\), and the left inverse \(A[\Omega^1_H]\) of \(\Sigma[\Omega^1_H]\) combine to satisfy the hypotheses of Lemma~\ref{groupoidlemma}. Thus, the subgroupoid \(\ker\mu[\Omega^1_H]\) is wide and has trivial isotropy groups, the quotient groupoid \(\cG[\Omega^1_H]/\ker\mu[\Omega^1_H]\) is well-defined, the equivalence kernel \(\sim\) of the map
	\[
		\left(\nabla \mapsto \left[\pi[\Omega^1_H] \circ \Sigma[\Omega^1_H](\id_P,\nabla)\right]_{\ker\mu[\Omega^1_H]}\right) : \fr{At} \to \Ob(\cG[\Omega^1_H]/\ker\mu[\Omega^1_H])
	\]
	is a \(\fr{G}\)-invariant equivalence relation on \(\fr{At}\), and there exists a unique groupoid isomorphism \(\tilde{\Sigma}[\Omega^1_H] : \fr{G} \ltimes \fr{At}/\sim \iso \cG[\Omega^1_H]/\ker\mu[\Omega^1_H]\), such that
	\[
		\forall (\phi,\nabla) \in \fr{G} \ltimes \fr{At}, \quad \tilde{\Sigma}[\Omega^1_H](\phi,[\nabla]_\sim) = \left[\Sigma[\Omega^1_H](\phi,\nabla)\right]_{\ker\mu[\Omega^1_H]}.
	\]
	Thus, it remains to show that \(\sim\) is the orbit equivalence with respect to the translation action of \(\fr{at}[\Omega^1_H] \leq \fr{at}\) on \(\fr{At}\).
	
	On the one hand, suppose that \(\nabla_1,\nabla_2 \in \fr{At}\) satisfy \(\nabla_1 \sim \nabla_2\), so that \(\id_P\) defines an arrow
	\[
		\left(\id_P : (\Omega^1_{P,\oplus},\dt{P,\nabla_1}) \to (\Omega^1_{P,\oplus},\dt{P,\nabla_2})\right) \in \ker\mu[\Omega^1_H] \subset \cG[\Omega^1_H].
	\]
	Thus, \((\id_P)_\ast : \Omega^1_{P,\oplus} \to \Omega^1_{P,\oplus}\) is a left \(H\)-covariant automorphism of the \(P\)-\(\ast\)-bimodule \(\Omega^1_{P,\oplus}\), such that \(\dt{P,\nabla_2} = (\id_P)_{\ast} \circ \dt{P,\nabla_1}\) and 
	\begin{gather*}
		\Proj_1 \circ (\id_P)_\ast = \ver[\dt{P,\nabla_2}] \circ (\id_P)_{\ast} = (\id_P)_{\ast,\ver} \circ \ver[\dt{P,\nabla_1}] = \Proj_1.
	\end{gather*}
	Let \(\mathcal{N} \coloneqq (\id_P)_\ast - \id\). Now, for all \(p \in P\),
	\[
		0 = \dv{P}(p) - \dv{P}(p) = \Proj_1 \circ \dt{P,\nabla_2} - \Proj_1 \circ \dt{P,\nabla_1} = \Proj_1 \circ \mathcal{N} \circ \dt{P,\nabla_1},
	\]
	so that \(\ran \mathcal{N} \subset \Omega^1_{P,\hor}\), while for all \(p \in P\) and \(b \in B\),
	\[
		(\id_P)_\ast(p \cdot \dt{B}(b)) = p \cdot \dt{B}(b) = \id(p \cdot \dt{B}(b)),
	\]
	so that \(\Omega^1_{P,\hor} \subset \ker \mathcal{N}\), and hence \(\mathcal{N} \circ \Pi_\oplus = \mathcal{N}\). Thus, for all \(p \in P\),
	\[
		(\nabla_2-\nabla_1)(p) = \Proj_2 \circ (\dt{P,\nabla_2}-\dt{P,\nabla,1}) (p) = \Proj_2 \circ \mathcal{N} \circ \Pi_{\oplus} \circ \dt{P,\nabla_1} = \Proj_2 \circ \mathcal{N} \circ \dv{P}(p),
	\]
	so that \(\nabla_2 - \nabla_1 \in \fr{at}[\Omega^1_H]\) with \(\omega[\nabla_2-\nabla_1] = \rest{\Proj_2 \circ \mathcal{N}\,}{\Omega^1_{P,\ver}}\).	
	
	On the other hand, suppose that \(\nabla_1,\nabla_2 \in \fr{At}\) satisfy \(\nabla_2 - \nabla_1 \in \fr{at}[\Omega^1_H]\). Let \[N \coloneqq \omega[\nabla_2-\nabla_1],\] and observe that \((N \circ \Pi_{\oplus})^2 = 0\). Then, for all \(p, \p{p},\pp{p} \in P\),
	\begin{align*}
		p \cdot \dt{P,\nabla_2}(\p{p}) \cdot \pp{p} &= p \cdot \left(\dv{P}(\p{p}),\nabla_2(\p{p})\right) \cdot \pp{p}	\\
		&= p \cdot \left(\dv{P}(\p{p}),\nabla_1(\p{p}) + N \circ \dv{P}(\p{p}) \right) \cdot \pp{p}\\
		&= p \cdot \left(\dt{P,\nabla_1}(\p{p}) + N \circ \Pi_{\oplus} \circ \dt{P,\nabla_1}(\p{p})\right) \cdot \pp{p}\\
		&= \left(\id + (N \circ \Pi_{\oplus})\right)\mleft(p \cdot \dt{P,\nabla_1}(\p{p}) \cdot \pp{p}\mright),
	\end{align*}
	so that \((\id_P)_\ast = \id + (N \circ \Pi_{\oplus})\) is a well-defined bijection satisfying
	\[
		\ver[\dt{P,\nabla_2}] \circ (\id_P)_\ast = \Proj_1 \circ \left(\id + (N \circ \Pi_{\oplus})\right) = \Proj_1 = (\id_P)_{\ast,\ver} \circ \ver[\dt{P,\nabla_1}].
	\]
	Hence, \(\id_P : (\Omega^1_{P,\oplus},\dt{P,\nabla_1}) \to (\Omega^1_{P,\oplus},\dt{P,\nabla_2})\) defines an arrow in \(\ker\mu[\Omega^1_H] \subset \cG[\Omega^1_H]\).
\end{proof}

\begin{remark}[cf.\ Zucca~\cite{Zucca}*{\S 8.4}]
	Let \((\Omega^1_P,\dt{P}) \in \Ob(\cG[\Omega^1_H])\). We can view the stabilizer subgroup of
	\(
		\tilde{\Sigma}[\Omega^1_H]^{-1}([(\Omega^1_P,\dt{P})]_{\ker\mu[\Omega^1_H]})
	\)
	in \(\fr{G}\) as the gauge group of the quantum principal \((H;\Omega^1_H,\dt{H})\)-bundle \((P;\Omega^1_P,\dt{P})\). Moreover, \(\Sigma[\Omega^1_H]\) restricts to a gauge-equivariant bijection from \(\fr{at}[\Omega^1_H]\) to the set of all strong bimodule connections on \((P;\Omega^1_P,\dt{P})\).
\end{remark}

\section{Gauge theory to second order}\label{sec3}

Our goal for this section is to develop and justify a conceptually economical notion of curvature of a principal connection on a quantum principal bundle that only involves differential calculus to second order and remains simultaneously compatible with both the theory of quantum principal \(H\)-bundles and strong bimodule connections \`{a} la Brzezi\'{n}ski--Majid and our notions of gauge transformations and (relative) gauge potentials. Our guiding principle will be prolongability of principal \fodc{} to second order in a manner compatible with decomposition into vertical and horizontal calculi.

\subsection{Deconstruction of quantum principal bundles to second order}

We begin by prolonging the standard theory of quantum principal \(H\)-bundles and strong bimodule connections \`{a} la Brzezi\'{n}ski--Majid to second order in the spirit of Beggs--Brzezi\'{n}ski~\cite{BB} and Beggs--Majid~\cite{BeM}*{\S 5.5}; in the process, we recover certain insights of \DJ{}ur\dj{}evi\'{c}~\cite{Dj97}*{\S 4}.

Let \((\Omega^1_H,\dt{H})\) be a bicovariant \textsc{fodc} on \(H\) with corresponding left crossed \(H\)-\(\ast\)-module \(\Lambda^1_H\) of right coinvariant \(1\)-forms and quantum Maurer--Cartan form \(\varpi_H\); let \((\Omega_H,\dt{H})\) be a prolongation of \((\Omega^1_H,\dt{H})\) to a bicovariant \(\ast\)-differential calculus on \(H\). We wish to consider differentiable quantum principal \(H\)-bundles compatible with \((\Omega_H,\dt{H})\) through degree \(2\). To this end, we will find it convenient to encode \((\Omega_H,\dt{H})\) in terms of the graded left crossed \(H\)-module \(\ast\)-algebra of right \(H\)-coinvariant forms
\begin{equation}
	\Lambda_H \coloneqq (\Omega_H)^{\operatorname{co}H}
\end{equation}
and the restricted differential \(\rest{\dt{H}}{\Lambda_H} : \Lambda_H \to \Lambda_H\). By results of Majid--Tao~\cite{MT}*{Prop.\ 3.3, 3.4}, it follows that \(\rest{\dt{H}}{
\Lambda^1_H} : \Lambda^1_H \to \Lambda^2_H\) is given by the Maurer--Cartan equation
\begin{equation}
	\forall h \in H, \quad \dt{H}(\varpi_H(h)) \coloneqq \varpi_H(\cm{h}{1}) \wedge \varpi_H(\cm{h}{2}).
\end{equation}
and satisfies
\begin{multline}\label{deg2eq}
	\forall h,k \in H, \\ \dt{H}\mleft(h \act \varpi_H(k)\mright) - h \act \dt{H}\varpi(k) = \cm{h}{1} \act \varpi_H(k) \wedge \varpi_H(\cm{h}{2}) + \varpi_H(\cm{h}{1}) \wedge \cm{h}{2}\act\varpi_H(k).
\end{multline}

The proof that \(\Omega_H\) can be recovered from \(\Lambda_H\) up to isomorphism~\cite{MT}*{Prop.\ 3.3}, \emph{mutatis mutandis}, lets us make the following definition, which we shall use to reconstruct graded \(\ast\)-algebras of (total) differential forms on a  quantum principal \(H\)-bundle from relevant graded \(\ast\)-algebras of vertical and horizontal forms, respectively. Note that we shall only be concerned with \(\ast\)-differential calculi and graded \(\ast\)-algebras through degree \(2\).

\begin{definition}[cf.\ \DJ{}ur\dj{}evi\'{c}~\cite{Dj97}*{Eqq.\ 4.49, 4.50}]
	Let \(\Omega\) be a graded left \(H\)-comodule \(\ast\)-algebra. We define the graded left \(H\)-comodule \(\ast\)-algebra \(\Lambda_H \dvatimes \Omega\) as follows:
	\begin{enumerate}
		\item \((\Lambda_H \dvatimes \Omega)^0 \coloneqq \Omega^0\) as a left \(H\)-comodule \(\ast\)-algebra;
		\item \((\Lambda_H \dvatimes \Omega)^1 \coloneqq (\Lambda^1_H \otimes \Omega^0) \oplus \Omega^1\) as a left \(H\)-comodule right \(\Omega^0\)-module together with the left \(\Omega^0\)-module structure defined by
		\begin{equation}
			\forall p,q \in \Omega^0, \, \forall \omega \in \Lambda^1_H, \, \forall \alpha \in \Omega^1, \quad q \cdot (\omega \otimes p,\alpha) \coloneqq (\ca{q}{-1}\act\omega \otimes \ca{q}{0}p,q\cdot\alpha)
		\end{equation}
		and the \(\ast\)-structure defined by
		\begin{equation}
			\forall p \in \Omega^0, \, \forall \omega \in \Lambda^1_H, \, \forall \alpha \in \Omega^1, \quad (\omega \otimes p,\alpha)^\ast \coloneqq (\ca{p}{-1}^\ast \act \omega^\ast \otimes \ca{p}{0}^\ast,\alpha^\ast);
		\end{equation}
		\item \((\Lambda_H \dvatimes \Omega)^1 \coloneqq (\Lambda^2_H \otimes \Omega^0) \oplus (\Lambda^1_H \otimes \Omega^1) \oplus \Omega^2\) as a left \(H\)-comodule right \(\Omega^0\)-module together with the left \(\Omega^0\)-structure defined by
		\begin{multline}
			\forall \omega \in \Lambda^1_H, \, \forall \mu \in \Lambda^2_H, \,\forall p,q \in \Omega^0,\,\forall \alpha \in \Omega^1,\,\forall \beta \in \Omega^2,\\
			q \cdot (\mu \otimes p,\omega \otimes \alpha,\beta) \coloneqq (\ca{q}{-1} \act \mu \otimes \ca{q}{0}p,\ca{q}{-1}\act\omega\otimes \ca{q}{0} \cdot \alpha,q \cdot \beta)
		\end{multline}
		and the \(\ast\)-structure defined by
		\begin{multline}
			\forall \omega \in \Lambda^1_H, \, \forall \mu \in \Lambda^2_H, \,\forall p \in \Omega^0,\,\forall \alpha \in \Omega^1,\,\forall \beta \in \Omega^2,\\
			(\mu \otimes p,\omega \otimes \alpha,\beta)^\ast \coloneqq (\ca{p}{-1}^\ast \act \mu^\ast \otimes \ca{p}{0}^\ast,\ca{p}{-1}^\ast \act\omega^\ast \otimes \ca{p}{0}^\ast,\beta^\ast);
		\end{multline}
		\item the wedge product \(\wedge : (\Lambda_H \dvatimes \Omega)^1 \otimes_{\Omega^0} (\Lambda_H \dvatimes \Omega)^1 \to (\Lambda_H \dvatimes \Omega)^2\) is defined by
		\begin{multline}
			\forall \omega,\p{\omega}\in\Lambda^1_H,\,\forall p,\p{p}\in \Omega^0, \, \forall \alpha,\p{\alpha} \in \Omega^1,\\
				(\omega \otimes p,\alpha) \wedge (\p{\omega}\otimes \p{p},\p{\alpha}) \coloneqq (\omega \wedge \ca{p}{-1}\act\p{\omega} \otimes \ca{p}{0}\p{p},\omega \otimes p\cdot\p{\alpha}+\ca{\alpha}{-1}\act\p{\omega}\otimes\ca{\omega}{0}\cdot\p{p},\alpha \wedge \p{\alpha});
		\end{multline}
		\item \((\Lambda_H \dvatimes \Omega)^k \coloneqq 0\) for \(k > 2\).
	\end{enumerate}
\end{definition}

For greater notational simplicity, given a graded left \(H\)-comodule \(\ast\)-algebra \(\Omega\), we shall view \(\Lambda_H \dvatimes \Omega\) as the graded left \(H\)-comodule \(\ast\)-algebra, truncated at degree \(2\), generated by the graded left \(H\)-subcomodule \(\ast\)-subalgebras \(\Lambda_H\) and \(\Omega\) subject the relation \(1_{\Lambda_H} = 1_\Omega\) and the braided graded commutation relations
\begin{equation}
	\forall \omega \in \Lambda_H, \, \forall \alpha \in \Omega, \quad \alpha \wedge \omega = (-1)^{\abs{\alpha}\abs{\omega}} \ca{\alpha}{-1} \act \omega \wedge \ca{\alpha}{0}.
\end{equation}

Now, let \(P\) once more be a principal left \(H\)-comodule \(\ast\)-algebra over \(\bC\) with \(B \coloneqq \coinv{H}{P}\). If the bicovariant \fodc{} \((\Omega^1_H,\dt{H})\) on \(H\) is locally freeing for \(P\), then the proof~\cite{MT}*{\S 3.1} that \((\Omega_H,\dt{H})\) can be recovered from the data \((\Lambda_H,\rest{\dt{H}}{\Lambda_H})\), \emph{mutatis mutandis}, lets us extend the induced (first-order) vertical calculus on \(P\) to a left \(H\)-covariant \textsc{sodc} as follows.

\begin{definition}[\DJ{}ur\dj{}evi\'{c}~\cite{Dj97}*{Lemma 3.1}]
	Suppose that the bicovariant \fodc{} \((\Omega^1_H,\dt{H})\) is locally freeing for \(P\). The \emph{second-order vertical calculus} of \(P\) is the extension of the vertical calculus \((\Omega^1_{P,\ver},\dv{P})\) to a left \(H\)-covariant \textsc{sodc} \((\Omega_{P,\ver},\dv{P})\) on \(P\) defined as follows:
	\begin{enumerate}
		\item \(\Omega_{P,\ver} \coloneqq \Lambda_H \dvatimes P\) as a graded left \(H\)-comodule \(\ast\)-algebra, where \(P\) is trivially extended to a graded left \(H\)-comodule \(\ast\)-algebra;
		\item the derivative \(\dv{P} : \Omega_{P,\ver} \to \Omega_{P,\ver}\) is given by
		\begin{equation}
			\forall \omega \in \Lambda_H, \, \forall p \in P, \quad \dv{P}(\omega \cdot p) \coloneqq \dt{H}(\omega) \cdot p+(-1)^{\abs{\omega}}\omega \cdot \dv{p}(p).
		\end{equation}
	\end{enumerate}
\end{definition}

We can now refine the standard definition of differentiable quantum principal \(H\)-bundle to account for differential calculus through degree \(2\). This can be viewed as a distillation of recent results of Beggs--Majid~\cite{BeM}*{\S 5.5} that specialise Beggs--Brzezi\'{n}ski's theory of noncommutative fibrations~\cite{BB} to the case of quantum principal bundles; it also echoes \DJ{}ur\dj{}evi\'{c}'s notion of differentiable quantum principal bundle~\cite{Dj97}*{\S 3}.

\begin{definition}[cf.\ Beggs--Majid~\cite{BeM}*{\S 5.5}]\label{sodc}
	Suppose that the bicovariant \fodc{} \((\Omega^1_H,\dt{H})\) on \(H\) is locally freeing for the principal left \(H\)-comodule \(\ast\)-algebra \(P\); hence, let \((\Omega_{P,\ver},\dv{P})\) be the second-order vertical calculus of \(P\) with respect to the bicovariant prolongation \((\Omega_H,\dt{H})\) of \((\Omega^1_H,\dt{H})\). Let \((\Omega_P,\dt{P})\) be a left \(H\)-covariant \textsc{sodc} on \(P\) . Define the restriction \((\Omega_B,\dt{B})\) of \((\Omega_P,\dt{P})\) to a \textsc{sodc} on \(B \coloneqq \coinv{H}{P}\) by
	\[
		\Omega^1_B \coloneqq B \cdot \dt{P}(B), \quad \Omega^2_B \coloneqq \Omega^1_B \wedge \Omega^1_B, \quad \dt{B} \coloneqq \rest{\dt{P}}{\Omega_B}.
	\]
	Then \((P;\Omega,\dt{P})\) defines a \emph{strong (second-order) quantum principal \((H;\Omega_H,\dt{H})\)-bundle} if and only if the following all hold:
	\begin{enumerate}
		\item\label{socd1} the pair \((\Omega^1_P,\dt{P})\) defines an \((H;\Omega^1_H,\dt{H})\)-principal \fodc{} on \(P\) whose vertical map \(\ver[\dt{P}] : \Omega^1_P \to \Omega^1_{P,\ver}\) satsfies
		\begin{equation}
			\ker \ver[\dt{P}] = P \cdot \Omega^1_B;
		\end{equation}
		\item\label{socd2} the map \(\ver^{2,2}[\dt{P}] : \Omega^2_P \to \Omega^2_{P,\ver}\) given by
		\begin{equation}
			\forall \alpha,\beta \in \Omega^1_P, \quad \ver^{2,2}[\dt{P}](\alpha \wedge \beta) \coloneqq \ver[\dt{P}](\alpha) \wedge \ver[\dt{P}](\beta)
		\end{equation}
		is well-defined, is surjective, and satisfies
		\begin{equation}
			\ker(\ver^{2,2}[\dt{P}]) = \Omega^1_P \wedge \Omega^1_B;
		\end{equation}
		\item\label{socd3} the map \(\ver^{2,1}[\dt{P}] : \Omega^2_P \to \Lambda^1_H \otimes \Omega^1_{P} \subset (\Lambda_H \dvatimes \Omega_P)^2\) given by
		\begin{equation}
			\forall \alpha,\beta \in \Omega^1_P, \quad \ver^{2,1}[\dt{P}](\alpha\wedge\beta) \coloneqq \ver[\dt{P}](\alpha) \wedge \beta + \alpha \wedge \ver[\dt{P}](\beta)
		\end{equation}
		is well-defined and satisfies
		\begin{align}
			\ker(\ver^{2,1}[\dt{P}]) \cap \ker(\ver^{2,2}[\dt{P}])&= P \cdot \Omega^2_B, \\ 
			\ver^{2,1}[\dt{P}](\ker(\ver^{2,2}[\dt{P}])) & = \Lambda^1_H \otimes P \cdot \Omega^1_B.
		\end{align}
	\end{enumerate}
	In this case, we call \((\Omega_P,\dt{P})\) a \emph{strongly \((H;\Omega_H,\dt{H})\)-principal} \sodc{} on \(P\), and we define the left \(H\)-covariant graded \(\ast\)-subalgebra \(\Omega_{P,\hor}\) of \emph{horizontal forms} in \(\Omega_P\) by
	\[
		\Omega^0_{P,\hor} \coloneqq P, \quad \Omega^1_{P,\hor} \coloneqq P \cdot \Omega^1_B, \quad \Omega^2_{P,\hor} \coloneqq P \cdot \Omega^2_B.
	\]
\end{definition}

\begin{remark}\label{verremark} 
	We can provide the following conceptual interpretation of Definition~\ref{sodc}. On the one hand, by condition~\ref{socd2}, the map \(\ver[\dt{P}] : \Omega^1_P \to \Omega^1_{P,\ver}\) extends via \(\ver^{2,2}[\dt{P}]\) to a left \(H\)-covariant graded \(\ast\)-epimorphism \(\ver[\dt{P}] : \Omega_P \to \Omega_{P,\ver}\), such that
		\[
			\rest{\ver[\dt{P}]}{P} = \id_P, \quad \ker\ver[\dt{P}] = \Omega_P \wedge \Omega^1_{P,\hor};
		\]
		following \DJ{}ur\dj{}evi\'{c}~\cite{Dj97}*{\S 3}, we interpret this extension of \(\ver[\dt{P}]\) as encoding restriction of differential forms to orbitwise differential forms. On the other hand, by condition~\ref{socd3}, the rescaled vertical map \(-\iu{}\,\ver[\dt{P}] : \Omega^1_P \to \Omega^1_{P,\ver}\) extends via \(-\iu{}\,\ver^{2,1}[\dt{P}]\) to a degree \(0\) left \(H\)-covariant \(\ast\)-derivation \(\Int[\dt{P}] : \Omega_P \to \Lambda_H \dvatimes \Omega_P\), such that
		\[
			\rest{\Int[\dt{P}]}{P} = 0, \quad \ker \Int[\dt{P}] \cap \ker\ver[\dt{P}] = \Omega^2_{P,\hor}, \quad \Int[\dt{P}](\ker\ver[\dt{P}]) = \Lambda^1_H \otimes \Omega^1_{P,\hor};
		\]
		we interpret the map \(\Int[\dt{P}] : \Omega_P \to \Lambda_H \dvatimes \Omega_P\) as encoding contraction of differential forms with fundamental vector fields. Proposition~\ref{strongprop} now implies that
		\[
			\ker \Int[\dt{P}] \cap \ker\ver[\dt{P}] = \Omega^1_{P,\hor} \oplus \Omega^2_{P,\hor}
		\]
		recovers the graded \(\ast\)-ideal of horizontal forms of positive degree, so that the equality \[\coinv{H}{(\ker \Int[\dt{P}] \cap \ker\ver[\dt{P}])} = \Omega^1_B \oplus \Omega^2_B\] recovers the basic differential forms as the \(H\)-coinvariant horizontal differential forms.
\end{remark}

\begin{remark}
	Suppose that \((\Omega_H,\dt{H})\) is Woronowicz's canonical prolongation~\cite{Woronowicz}*{\S\S 3--4} of the bicovariant \fodc{} \((\Omega^1_H,\dt{H})\) on \(H\), so that \(\Omega_H\) defines a graded super-Hopf \(\ast\)-algebra~\cite{Br93}. Suppose that \((\Omega_P,\dt{P})\) is a \(\ast\)-differential calculus on \(P\), such that the left \(H\)-coaction of \(H\) on \(P\) extends to a differentiable left \(\Omega_H\)-coaction \(\hat{\delta}_{\Omega_P}\) of the graded super-Hopf \(\ast\)-algebra \(\Omega_H\) on \(\Omega_P\) (cf.\ Beggs--Majid~\cite{BeM}*{\S 5.5}), so that \((P;\Omega_P,\dt{P})\) is a quantum principal \((H;\Omega_H,\dt{H})\)-bundle \`{a} la \DJ{}ur\dj{}evi\'{c}~\cite{Dj97}*{\S 3}, and hence the map \(\ver[\dt{P}] : \Omega_P \to \Omega_{P,\ver}\) of Remark~\ref{verremark} is a well-defined surjective morphism of left \(H\)- and \(\Omega_H\)-covariant \(\ast\)-\textsc{dga}~\cite{Dj97}*{Prop.\ 3.9}. Let \(\Omega_B\) be the graded \(\ast\)-subalgebra of left \(\Omega_H\)-coinvariants generated by \(B\) and \(\dt{P}(B)\). Then the degree \(2\) truncation \((\Omega_P^{\leq 2},\dt{P})\) of \((\Omega_P,\dt{P})\) is a strongly \((H;\Omega_H,\dt{H})\)-principal \sodc{} on \(P\) whenever
	\[
		\ker\ver[\dt{P}] = \Omega_P \wedge \Omega_B, \quad
		\set{\omega \in \Omega_P \given \hat{\delta}_{\Omega_P}(\omega) = \delta_{\Omega_P}(\omega)} = P \cdot \Omega_B, 
	\]
	in which case, \(\coinv{\Omega_H}{\Omega_P} = \Omega_B\) by Proposition \ref{strongprop}.
\end{remark}

\begin{remark}[Beggs--Majid~\cite{BeM}*{Lemma 5.60, Theorem 5.61}] In the context of Definition~\ref{sodc}, suppose that the \(B\)-\(\ast\)-bimodule of basic \(1\)-forms \(\Omega^1_B \coloneqq B \cdot \dt{P}(B)\) is flat as a left \(B\)-module (e.g., finitely generated and projective), that conditions~\ref{socd1} and~\ref{socd2} are satisfied, and that the map \(\ver^{2,1}[\dt{P}] : \Omega^2_P \to \Lambda^1_H \otimes \Omega^1_{P}\) of condition~\ref{socd3} is well-defined. Then \((\Omega_P,\dt{P})\) is a strongly \((H;\Omega_H,\dt{H})\)-principal \sodc{} on the principal left \(H\)-comodule \(\ast\)-algebra \(P\), and the inclusion \(\Omega_B \inj \Omega_P\) defines (to second order) a noncommutative fibration in the sense of Beggs--Brzezi\'{n}ski~\cite{BB}.
\end{remark}

We now refine the definition of strong bimodule connection to the context of strong second-order quantum principal bundles, thereby providing a straightforward notion of noncommutative principal connection compatible with non-universal \sodc{}. We shall soon see that this refinement is related to \DJ{}ur\dj{}evi\'{c}'s notion of multiplicative connection~\cite{Dj97}*{Def.\ 4.2}, and as such resolves (through degree \(2\)) the following question implicitly flagged by Hajac~\cite{Hajac}*{\S 4}: when does a connection  extend from \(\Omega^1_P\) to all of \(\Omega_P\)?

\begin{definition}\label{prolongcon}
	Suppose that the bicovariant \fodc{} \((\Omega^1_H,\dt{H})\) on \(H\) is locally freeing for \(P\) and that \((\Omega_P,\dt{P})\) is a strongly \((H;\Omega_H,\dt{H})\)-principal \sodc{} on the principal left \(H\)-comodule \(\ast\)-algebra \(P\). A connection \(\Pi\) on the quantum principal \((H;\Omega^1_H,\dt{H})\)-bundle \((P;\Omega^1_P,\dt{P})\) is called \emph{prolongable} if and only if it is a bimodule connection that satisfies both of the following:
	\begin{enumerate}
		\item\label{prolongcon1} the map \(\Pi \wedge \Pi : \Omega^2_P \to \Omega^2_P\) given by
		\begin{equation}
			\forall \alpha, \beta \in \Omega^1_P, \quad \Pi \wedge \Pi(\alpha \wedge \beta) \coloneqq \Pi(\alpha) \wedge \Pi(\beta)
		\end{equation}
		is well-defined and satisfies \(\ker(\Pi \wedge \Pi) = \Omega^1_P \wedge \Omega^1_{P,\hor}\).
		\item\label{prolongcon2} the map \(\Pi \wedge \id + \id{} \wedge \Pi : \Omega^2_P \to \Omega^2_P\) given by
		\begin{equation}
			\forall \alpha,\beta \in \Omega^1_P, \quad (\Pi \wedge \id + \id \wedge \Pi)(\alpha \wedge \beta) \coloneqq \Pi(\alpha) \wedge \beta + \alpha \wedge \Pi(\beta)
		\end{equation}
		is well-defined and satisfies \(\ker((\Pi \wedge \id + \id \wedge \Pi) - \Pi \wedge \Pi) = \Omega^2_{P,\hor}\).
	\end{enumerate}
\end{definition}

\begin{remark}
	Suppose that the bimodule connection \(\Pi\) satisfies Condition~\ref{prolongcon1}. Then \(\Pi\) satisfies condition~\ref{prolongcon2} if and only if the map \((\id-\Pi) \wedge (\id-\Pi) : \Omega^2_P \to \Omega^2_P\) given by
	\[
		\forall \alpha, \beta \in \Omega^1_P, \quad (\id-\Pi) \wedge (\id-\Pi)(\alpha \wedge \beta) \coloneqq (\id-\Pi)(\alpha) \wedge (\id-\Pi)(\beta)
	\]
	is well-defined and satisfies \(\ker(\id - (\id-\Pi)\wedge(\id-\Pi)) = \Omega^2_{P,\hor}\).
\end{remark}

Suppose that \((\Omega_P,\dt{P})\) is a strongly \((H;\Omega_H,\dt{H})\)-principal \sodc{} on the principal left \(H\)-comodule \(\ast\)-algebra \(P\). Recall that a bimodule connection \(\Pi\) on the quantum princnipal \((H;\Omega^1_H,\dt{H})\)-bundle \((P;\Omega^1_P,\dt{P})\) yields a resolution of the identity \(\set{\Pi,\id-\Pi}\) on \(\Omega^1_P\) realising \(\ker\ver[\dt{P}] = \Omega^1_{P,\hor}\) as a complementable left \(H\)-subcomodule \(P\)-\(\ast\)-subbimodule of \(\Omega^1_P\). Straightforward calculations now show that a prolongable bimodule connection \(\Pi\) yields a compatible resolution of the identity \(\set{\Pi_{2,2},\Pi_{2,1},\Pi_{2,0}}\) on \(\Omega^2_P\) that realises
\[
	\ker(\ver^{2,2}[\dt{P}]) = \Omega^1_P \wedge \Omega^1_{P,\hor}, \quad \ker(\ver^{2,1}[\dt{P}]) \cap \ker(\ver^{2,2}[\dt{P}]) = \Omega^2_{P,\hor}
\]
as complementable left \(H\)-subcomodule \(P\)-\(\ast\)-subbimodules of \(\Omega^2_P\).

\begin{proposition}
	Suppose that the bicovariant \fodc{} \((\Omega^1_H,\dt{H})\) on \(H\) is locally freeing for \(P\), that \((\Omega_P,\dt{P})\) is a strongly \((H;\Omega_H,\dt{H})\)-principal \sodc{} on \(P\), and that \(\Pi\) is a prolongable bimodule connection on the strong quantum principal \((H;\Omega_H,\dt{H})\)-bundle \((P;\Omega_P,\dt{P})\). 
	\begin{enumerate}
		\item The maps \(\Pi_{1,1},\Pi_{1,0} : \Omega^1_P \to \Omega^1_P\) given by
		\[
			\Pi_{1,1} \coloneqq \Pi, \quad \Pi_{1,0} \coloneqq \id-\Pi,
		\]
		respectively, define an orthogonal pair of idempotent left \(H\)-covariant morphisms of \(P\)-\(\ast\)-bimodules, such that \(\Pi_{1,1}+\Pi_{1,0}=\id\) and
		\[
			\ran \Pi_{1,0} = \ker \Pi_{1,1} = P \cdot \Omega^1_B.
		\]
		\item The maps \(\Pi_{2,2},\Pi_{2,1},\Pi_{2,0}: \Omega^2_P \to \Omega^2_P\) given by
		\begin{gather*}
			\Pi_{2,2} \coloneqq \Pi \wedge \Pi, \quad \Pi_{2,1} \coloneqq (\Pi\wedge\id{}+\id{}\wedge\Pi) -2(\Pi\wedge\Pi), \\ \Pi_{2,0} \coloneqq \id - (\Pi\wedge\id{}+\id{}\wedge\Pi) + \Pi \wedge \Pi, 
		\end{gather*}
		respectively, define a pairwise orthogonal triple of idempotent left \(H\)-covariant morphisms of \(P\)-\(\ast\)-bimodules, such that \(\Pi_{2,2}+\Pi_{2,1}+\Pi_{2,0}=\id\) and
		\[
			\ran \Pi_{2,0} = \ker(\Pi_{2,2}+\Pi_{2,1}) = \Omega^2_{P,\hor}, \quad \ran(\Pi_{2,1}+\Pi_{2,0}) = \ker(\Pi_{2,2}) = \Omega^1_P \wedge \Omega^1_{P,\hor};
		\]
		\item For all \(\alpha,\beta \in \Omega^1_P\),
		\begin{align*}
			\Pi_{2,2}(\alpha\wedge\beta) &= \Pi_{1,1}(\alpha)\wedge\Pi_{1,1}(\beta), \\ \Pi_{2,1}(\alpha\wedge\beta) &= \Pi_{1,1}(\alpha)\wedge \Pi_{1,0}(\beta) + \Pi_{1,0}(\alpha) \wedge \Pi_{1,1}(\beta), \\  \Pi_{2,0}(\alpha \wedge \beta) &= \Pi_{1,0}(\alpha)\wedge\Pi_{1,0}(\beta).
		\end{align*}
	\end{enumerate}
\end{proposition}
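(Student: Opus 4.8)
The plan is to prove the three assertions in the order (1), (3), (2), because the algebraic identities in part (3) feed directly into the idempotency and orthogonality relations of part (2). Part (1) is essentially a restatement of the properties of a (strong) bimodule connection: since $\Pi$ is a bimodule connection on $(P;\Omega^1_P,\dt{P})$ it is a left $H$-covariant morphism of $P$-$\ast$-bimodules with $\Pi^2 = \Pi$ and $\ker\Pi = \ker\ver[\dt{P}]$, and condition~\ref{socd1} of Definition~\ref{sodc} forces $\ker\ver[\dt{P}] = P\cdot\Omega^1_B = \Omega^1_{P,\hor}$. The rest are the elementary facts about a complementary pair of idempotents: $\Pi_{1,0} := \id-\Pi_{1,1}$ is again a left $H$-covariant $P$-$\ast$-bimodule morphism with $\Pi_{1,0}^2 = \Pi_{1,0}$, $\Pi_{1,1}\Pi_{1,0} = \Pi_{1,0}\Pi_{1,1} = 0$, $\Pi_{1,1}+\Pi_{1,0}=\id$, and $\ran\Pi_{1,0} = \ker\Pi_{1,1} = \Omega^1_{P,\hor}$.

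For part (3), I would first note that $\Omega^2_P = \Omega^1_P\wedge\Omega^1_P$ is spanned over $\bC$ by elements $\alpha\wedge\beta$, so it suffices to verify the identities on such elements; well-definedness of $\Pi\wedge\Pi$ (condition~\ref{prolongcon1} of Definition~\ref{prolongcon}) and of $\Pi\wedge\id+\id\wedge\Pi$ (condition~\ref{prolongcon2}) guarantees the right-hand sides are unambiguous. Writing $\alpha = \Pi_{1,1}(\alpha)+\Pi_{1,0}(\alpha)$, $\beta = \Pi_{1,1}(\beta)+\Pi_{1,0}(\beta)$ and expanding by bilinearity and $P$-balancedness of $\wedge$ into the four cross terms, one gets $\Pi\wedge\Pi(\alpha\wedge\beta) = \Pi(\alpha)\wedge\Pi(\beta) = \Pi_{1,1}(\alpha)\wedge\Pi_{1,1}(\beta)$ straight from the definition, while $(\Pi\wedge\id+\id\wedge\Pi)(\alpha\wedge\beta) = \Pi(\alpha)\wedge\beta + \alpha\wedge\Pi(\beta)$ expands via the same splitting of $\beta$ (resp.\ $\alpha$) to $2\,\Pi_{1,1}(\alpha)\wedge\Pi_{1,1}(\beta) + \Pi_{1,1}(\alpha)\wedge\Pi_{1,0}(\beta) + \Pi_{1,0}(\alpha)\wedge\Pi_{1,1}(\beta)$; subtracting $2\,\Pi\wedge\Pi$ gives the formula for $\Pi_{2,1}$, and, since $\Pi_{2,2}+\Pi_{2,1}+\Pi_{2,0} = \id$ by inspection of the definitions, subtracting $\Pi_{2,2}+\Pi_{2,1}$ from $\alpha\wedge\beta$ leaves $\Pi_{2,0}(\alpha\wedge\beta) = \Pi_{1,0}(\alpha)\wedge\Pi_{1,0}(\beta)$.

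For part (2), the equality $\Pi_{2,2}+\Pi_{2,1}+\Pi_{2,0} = \id$ is immediate from the defining formulas, and left $H$-covariance together with the $P$-$\ast$-bimodule morphism property of each $\Pi_{2,i}$ follows from the corresponding properties of $\Pi$ and $\wedge$ (using $(\alpha\wedge\beta)^\ast = -\beta^\ast\wedge\alpha^\ast$), checked on spanning elements. Idempotency and pairwise orthogonality then come from part (3): applying $\Pi_{2,i}$ to $\Pi_{2,j}(\alpha\wedge\beta)$, which is a combination of wedges of $1$-forms lying in $\ran\Pi_{1,1}$ or $\ran\Pi_{1,0}$, and using $\Pi_{1,1}^2 = \Pi_{1,1}$, $\Pi_{1,0}^2 = \Pi_{1,0}$, $\Pi_{1,1}\Pi_{1,0} = \Pi_{1,0}\Pi_{1,1} = 0$, one reads off $\Pi_{2,i}^2 = \Pi_{2,i}$ and $\Pi_{2,i}\Pi_{2,j} = 0$ for $i\neq j$. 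For the range and kernel identities I would observe that $\ker\Pi_{2,2} = \ker(\Pi\wedge\Pi) = \Omega^1_P\wedge\Omega^1_{P,\hor}$ is precisely condition~\ref{prolongcon1} of Definition~\ref{prolongcon}, so $\ran(\Pi_{2,1}+\Pi_{2,0}) = \ker\Pi_{2,2} = \Omega^1_P\wedge\Omega^1_{P,\hor}$ since the $\Pi_{2,i}$ are orthogonal idempotents summing to $\id$; and that $\Pi_{2,2}+\Pi_{2,1} = (\Pi\wedge\id+\id\wedge\Pi)-\Pi\wedge\Pi$, so $\ker(\Pi_{2,2}+\Pi_{2,1}) = \Omega^2_{P,\hor}$ is condition~\ref{prolongcon2}, whence $\ran\Pi_{2,0} = \ker(\Pi_{2,2}+\Pi_{2,1}) = \Omega^2_{P,\hor}$ again by orthogonality.

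I expect the only genuine subtlety, and hence the main thing to handle carefully, is the bookkeeping around well-definedness: the maps $\Pi\wedge\Pi$ and $\Pi\wedge\id+\id\wedge\Pi$ exist only as maps on the quotient $\Omega^2_P$, so each composite must be read as ``apply the well-defined map to an element of $\Omega^2_P$, expressed as a sum of wedges of $1$-forms, and evaluate via the formula on each summand,'' invoking the hypothesised well-definedness at every step. Beyond this care, no substantive new idea is needed; everything reduces to the definitions, the two conditions of Definition~\ref{prolongcon}, and standard facts about families of orthogonal idempotents.
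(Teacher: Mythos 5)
Your proposal is correct and supplies exactly the routine verification the paper leaves implicit (the proposition is stated there without proof, as a ``straightforward calculation''): you reduce everything to the two well-definedness/kernel conditions in the definition of a prolongable bimodule connection, the first-order idempotent relations for \(\Pi\) and \(\id-\Pi\), and the standard algebra of pairwise orthogonal idempotents, which is precisely the intended argument. Your attention to the fact that \(\Pi\wedge\Pi\) and \(\Pi\wedge\id+\id\wedge\Pi\) are only defined on the quotient \(\Omega^2_P\), so that all formulas must be checked on spanning wedges and invoked via the hypothesised well-definedness, is the right way to handle the only genuine subtlety.
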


\begin{remark}\label{horremark}
	Recall the maps \(\ver[\dt{P}] : \Omega_P \to \Omega_{P,\ver}\) and \(\Int[\dt{P}] : \Omega_P \to \Lambda_H \dvatimes \Omega_P\) from Remark~\ref{verremark} induced by the vertical map \(\ver[\dt{P}] : \Omega^1_P \to \Omega^1_{P,\ver}\). On the one hand, the strong bimodule connection \(\Pi = \Pi_{1,1}\) extends via \(\Pi_{2,2}\) to a left \(H\)-covariant graded \(\ast\)-homomorphism \(\ver_\Pi : \Omega_P \to \Omega_P\), such that
	\[
		\rest{\ver_\Pi}{P} = \id, \quad (\ver_\Pi)^2 = \ver_\Pi, \quad \ker \ver_\Pi = \Omega^1_P \wedge \Omega^1_{P,\hor} = \ker \ver[\dt{P}];
	\]
	in particular, it follows that \(\ver[\dt{P}] \circ \ver_\Pi = \ver[\dt{P}]\). On the other, following an observation of \DJ{}ur\dj{}evi\'{c}~\cite{Dj97}*{Eq.\ 4.58}, we see that \(\id-\Pi = \Pi_{1,0}\) extends via \(\Pi_{2,0}\) to a left \(H\)-covariant graded \(\ast\)-homomorphism \(\hor_\Pi : \Omega_P \to \Omega_P\), such that
	\[
		\rest{\hor_\Pi}{P} = \id, \quad (\hor_\Pi)^2 = \hor_\Pi, \quad \ran \hor_\Pi = \Omega_{P,\hor} = P \oplus \ker \Int[\dt{P}] \cap \ker \ver[\dt{P}];
	\]
	in particular, it follows that \(\rest{\hor_\Pi}{\Omega_B} = \id\). Thus, in particular, the map \(\ver_\Pi\) yields the extension from \(\Omega^1_P\) to \(\Omega_P\)  of the strong bimodule connection \(\Pi\) required by Hajac for the discussion of curvature~\cite{Hajac}*{\S 4}.
\end{remark}

We now see that if \(\Pi\) is a prolongable bimodule projection on a strong quantum principal \((H;\Omega_H,\dt{H})\)-bundle \((P;\Omega_P,\dt{P})\), then \(\nabla_\Pi \coloneqq \Pi \circ \dt{P}\) extends to a lift of \(\dt{B} : \Omega_B \to \Omega_B\) to a degree \(1\) left \(H\)-covariant \(\ast\)-derivation on \(\Omega_{P,\hor}\), thereby addressing---if only through degree \(2\)---an open issue for connections \`{a} la Brzezi\'{n}ski--Majid that was first flagged by Hajac~\cite{Hajac}*{\S 4}. Furthermore, we shall also see that \(\nabla_\Pi^2\) directly yields a well-defined curvature \(2\)-form for \(\Pi\) \emph{qua} principal connection without invoking (absolute) connection \(1\)-forms in the sense of Brzezi\'{n}ski--Majid~\cite{BrM}*{Prop.\ 4.10} or \DJ{}ur\dj{}evi\'{c}~\cite{Dj97}*{Def.\ 4.1}.

\begin{propositiondefinition}[cf.\ Brzezi\'{n}ski--Majid~\cite{BrM}*{Appx.\ A and \S 3}, Hajac~\cite{Hajac}*{\S 4}, \DJ{}ur\dj{}evi\'{c}~\cite{Dj97}*{Def.\ 4.5 and Prop.\ 4.6}]
	Suppose that the bicovariant \fodc{} \((\Omega^1_H,\dt{H})\) on \(H\) is locally freeing for the principal left \(H\)-comodule \(\ast\)-algebra \(P\), that \((\Omega_P,\dt{P})\) is a strongly \((H;\Omega_H,\dt{H})\)-principal \sodc{} on \(P\), and that \(\Pi\) is a prolongable bimodule connection on the strong quantum principal \((H;\Omega_H,\dt{H})\)-bundle \((P;\Omega_P,\dt{P})\). 
	\begin{enumerate}
		\item The left \(H\)-covariant map \(\nabla_\Pi : \Omega_{P} \to \Omega_{P,\hor}\) given by
		\[
			\rest{\nabla_\Pi}{P} \coloneqq \Pi_{1,0} \circ \dt{P}, \quad \rest{\nabla_\Pi}{\Omega^1_P} \coloneqq \Pi_{2,0} \circ \dt{P}
		\]
		restricts to a degree \(1\) left \(H\)-covariant \(\ast\)-derivation \(\Omega_{P,\hor} \to \Omega_{P,\hor}\), such that
		\[
			\rest{\nabla_\Pi}{\Omega_B} = \dt{B};
		\]
		we call \(\nabla_\Pi\) the \emph{exterior covariant derivative} induced by \(\Pi\).
		\item The left \(H\)-covariant right \(P\)-linear map \(F_\Pi : \Omega^1_{P,\ver} \to \Omega^2_{P,\hor}\) given by
		\begin{equation}
			\rest{F_\Pi}{\Lambda^1_H} \coloneqq \iu{}\,\Pi_{2,0} \circ \dt{P} \circ \rest{(\rest{\ver[\dt{P}]}{\ran \Pi_{1,1}})^{-1}}{\Lambda^1_H}
		\end{equation}
		defines the unique left \(H\)-covariant morphism of \(P\)-\(\ast\)-bimodules, such that 
		\begin{equation}
			\rest{\nabla^2_\Pi}{P} = \iu{} \,F_\Pi \circ \dv{P};
		\end{equation}
		we call \(F_\Pi\) the \emph{curvature} of \(\Pi\).
	\end{enumerate}
\end{propositiondefinition}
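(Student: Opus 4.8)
The plan is to route both assertions through the horizontal homomorphism $\hor_\Pi$ of Remark~\ref{horremark} and the right splitting of the noncommutative Atiyah sequence from Remark~\ref{sesremark}. I recall that $\hor_\Pi\colon\Omega_P\to\Omega_P$ is a left $H$-covariant graded $\ast$-homomorphism with $(\hor_\Pi)^2=\hor_\Pi$, $\ran\hor_\Pi=\Omega_{P,\hor}$, $\rest{\hor_\Pi}{P}=\id$, $\rest{\hor_\Pi}{\Omega_B}=\id$, $\rest{\hor_\Pi}{\Omega^1_P}=\Pi_{1,0}=\id-\Pi$ and $\rest{\hor_\Pi}{\Omega^2_P}=\Pi_{2,0}$; and that $\iota_\Pi\coloneqq(\rest{\ver[\dt{P}]}{\ran\Pi_{1,1}})^{-1}\colon\Omega^1_{P,\ver}\to\Omega^1_P$, being the right splitting for the bimodule connection $\Pi$, is a left $H$-covariant morphism of $P$-$\ast$-bimodules with $\ver[\dt{P}]\circ\iota_\Pi=\id$ and $\ran\iota_\Pi=\ran\Pi_{1,1}=\ker\Pi_{1,0}=\ker(\rest{\hor_\Pi}{\Omega^1_P})$.

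For part~(1), the restriction identities above give $\nabla_\Pi=\hor_\Pi\circ\dt{P}$, so $\nabla_\Pi$ is left $H$-covariant, of degree $1$, and has image in $\ran\hor_\Pi=\Omega_{P,\hor}$. Since $\hor_\Pi$ is idempotent with range $\Omega_{P,\hor}$, it restricts to the identity on $\Omega_{P,\hor}$, and applying the graded $\ast$-homomorphism $\hor_\Pi$ to the graded $\ast$-Leibniz identity for $\dt{P}$ then produces the graded $\ast$-Leibniz identity for $\rest{\nabla_\Pi}{\Omega_{P,\hor}}$; hence $\nabla_\Pi$ restricts to a degree $1$ $\ast$-derivation of $\Omega_{P,\hor}$. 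Finally, $\rest{\dt{P}}{\Omega_B}=\dt{B}$ has image in $\Omega_B$, on which $\hor_\Pi$ is the identity, so $\rest{\nabla_\Pi}{\Omega_B}=\dt{B}$.

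For part~(2), I would set $G\coloneqq\Pi_{2,0}\circ\dt{P}\circ\iota_\Pi\colon\Omega^1_{P,\ver}\to\Omega^2_{P,\hor}$ and show it is a morphism of $P$-$\ast$-bimodules. For $\xi\in\Omega^1_{P,\ver}$ and $q\in P$ one has $\iota_\Pi(\xi\cdot q)=\iota_\Pi(\xi)\cdot q$; expanding $\dt{P}(\iota_\Pi(\xi)\cdot q)$ and $\dt{P}(q\cdot\iota_\Pi(\xi))$ by the Leibniz rule and applying the graded $\ast$-homomorphism $\hor_\Pi$, the cross-terms $\hor_\Pi(\iota_\Pi(\xi)\wedge\dt{P}(q))$ and $\hor_\Pi(\dt{P}(q)\wedge\iota_\Pi(\xi))$ factor through $\hor_\Pi(\iota_\Pi(\xi))=0$ and hence vanish, leaving $G(\xi\cdot q)=G(\xi)\cdot q$ and $G(q\cdot\xi)=q\cdot G(\xi)$. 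The map $G$ is left $H$-covariant as a composite of covariant maps, and $G(\xi^\ast)=-G(\xi)^\ast$ because $\iota_\Pi$ and $\hor_\Pi$ are $\ast$-preserving while $\dt{P}$ is a $\ast$-derivation; hence, by conjugate-linearity of $\ast$, the rescaling $F_\Pi\coloneqq\iu\,G$ is $\ast$-preserving. Thus $F_\Pi$ is a left $H$-covariant morphism of $P$-$\ast$-bimodules $\Omega^1_{P,\ver}\to\Omega^2_{P,\hor}$ whose restriction to $\Lambda^1_H$ is the stated composite, and it is the unique such: right $P$-linearity forces $F(\omega\otimes p)=\rest{F}{\Lambda^1_H}(\omega)\cdot p$ for any other candidate $F$, so $F$ is determined by $\rest{F}{\Lambda^1_H}$.

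It then remains to verify $\rest{\nabla^2_\Pi}{P}=\iu\,F_\Pi\circ\dv{P}$. For $p\in P$, using $\nabla_\Pi=\hor_\Pi\circ\dt{P}$, $\dt{P}^2=0$ and $\rest{\hor_\Pi}{\Omega^1_P}=\id-\Pi$, one obtains $\nabla^2_\Pi(p)=-\Pi_{2,0}(\dt{P}(\Pi(\dt{P}(p))))$; moreover $\Pi(\dt{P}(p))=\iota_\Pi(\dv{P}(p))$, since $\Pi(\dt{P}(p))\in\ran\Pi_{1,1}$ and $\ver[\dt{P}](\Pi(\dt{P}(p)))=\ver[\dt{P}](\dt{P}(p))=\dv{P}(p)$, the complementary summand $(\id-\Pi)(\dt{P}(p))$ lying in $\ker\ver[\dt{P}]$. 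Hence $\nabla^2_\Pi(p)=-G(\dv{P}(p))=\iu\cdot\iu\,G(\dv{P}(p))=\iu\,F_\Pi(\dv{P}(p))$. The step demanding the most care is the $P$-bilinearity of $G$: because $\dt{P}$ is merely a derivation, cross-terms carrying $\dt{P}(q)$ appear, and their annihilation by $\hor_\Pi$ is precisely where prolongability of $\Pi$---the fact that $\hor_\Pi=\Pi_{\bullet,0}$ is a well-defined graded $\ast$-homomorphism---is used essentially, while the factor $\iu$ only repairs the $\ast$-structure.
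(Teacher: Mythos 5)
Your argument is, in substance, the paper's own: the paper proves the Leibniz and $\ast$-identities by exactly the computations you compress into the statement that $\hor_\Pi$ is a graded $\ast$-homomorphism restricting to the identity on $\Omega_{P,\hor}$ (and it records the identities $\nabla_\Pi = \hor_\Pi \circ \dt{P}$ and $\rest{F_\Pi}{\Lambda^1_H} = \pm\iu\,\hor_\Pi \circ \dt{P} \circ \rest{\theta}{\Lambda^1_H}$ in the remark immediately following). Your verification of left and right $P$-linearity of $G = \Pi_{2,0}\circ\dt{P}\circ\iota_\Pi$ via multiplicativity of $\hor_\Pi$ together with $\hor_\Pi\circ\iota_\Pi = 0$ is a clean repackaging of the paper's Sweedler-notation computation, and your derivation of $\rest{\nabla^2_\Pi}{P} = \iu\,F_\Pi\circ\dv{P}$ from $\Pi(\dt{P}(p)) = \iota_\Pi(\dv{P}(p))$ is the same calculation.

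There is, however, one genuine (if small) gap: you prove the wrong uniqueness statement. You argue that a left $H$-covariant right $P$-linear map $\Omega^1_{P,\ver}\to\Omega^2_{P,\hor}$ is determined by its restriction to $\Lambda^1_H$, which is true but is not what the proposition asserts. The claim is that $F_\Pi$ is the unique left $H$-covariant morphism of $P$-$\ast$-bimodules satisfying $\rest{\nabla^2_\Pi}{P} = \iu\,F_\Pi\circ\dv{P}$, i.e.\ the curvature identity itself pins $F_\Pi$ down. Your argument does not show that this identity determines the restriction to $\Lambda^1_H$. The missing step is where the locally freeing hypothesis enters: if $F'$ is another such morphism, then $F_\Pi - F'$ vanishes on $\dv{P}(P)$, hence by left $P$-linearity on $P\cdot\dv{P}(P) = \Omega^1_{P,\ver}$, so $F' = F_\Pi$. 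This is how the paper concludes; adding that one line closes the gap.
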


\begin{proof}
	Let us first check that the map \(\nabla_\Pi\), which is left \(H\)-covariant and satisfies \[\forall \alpha \in \Omega_P, \quad \nabla_\Pi(\alpha^\ast) = -\nabla(\alpha)^\ast,\] restricts to a degree \(1\) \(\ast\)-derivation on \(\Omega_{P,\hor}\). Since \(\rest{\nabla_\Pi}{P}\) is the horizontal derivative of Proposition~\ref{analysisthm}, it remains to show that \(\rest{\nabla_\Pi}{\Omega^1_P}\) satisfies the correct graded Leibniz rules with respect to the products \(P \otimes \Omega^1_{P,\hor} \to \Omega^1_{P,\hor}\) and \(\Omega^1_{P,\hor} \otimes P \to \Omega^1_{P,\hor}\). Indeed, for all \(p \in P\) and \(\alpha \in \Omega^1_{P,\hor}\),
	\begin{gather*}
		\begin{multlined}\nabla_\Pi(p \cdot \alpha) = \Pi_{2,0} \circ \dt{P}(p \cdot \alpha) = \Pi_{2,0}(\dt{P}(p) \wedge \alpha + p \cdot \dt{P}(\alpha))\\ = \Pi_{1,0}(\dt{P}(p)) \wedge \Pi_{1,0}(\alpha) + p \cdot \Pi_{2,0}(\dt{P}(\alpha)) = \nabla_\Pi(p) \wedge \alpha + p \cdot \nabla_\Pi(\alpha),\end{multlined}\\
		\begin{multlined}
			\nabla_\Pi(\alpha \cdot p) = \Pi_{2,0} \circ \dt{P}(\alpha \cdot p) = \Pi_{2,0}(\dt{P}(\alpha) \cdot p - \alpha \wedge \dt{P}(p))\\ = \Pi_{2,0}(\dt{P}(\alpha)) \cdot p - \Pi_{1,0}(\alpha) \wedge \Pi_{1,0}(\dt{P}(p)) = \nabla_\Pi(\alpha) \cdot p - \alpha \wedge \nabla_\Pi(p)
		\end{multlined}
	\end{gather*}
	as required. Note that \(\rest{\nabla_\Pi}{\Omega_B} = \dt{B}\) because \(\rest{\Pi_{1,0}}{\Omega^1_B} = \id\) and \(\rest{\Pi_{2,0}}{\Omega^2_B} = \id\).
	
	Let us now check that \(F_\Pi\) is left \(P\)-linear and \(\ast\)-preserving. Set \(\theta \coloneqq (\rest{\ver[\dt{P}]}{\ran \Pi_{1,1}})^{-1}\). Then, for all \(p \in P\) and \(\omega \in \Lambda^1_H\),
	\begin{align*}
		p \cdot F_\Pi(\omega) &= p \cdot \left(\iu{}\, \Pi_{2,0} \circ \dt{P} \circ \theta(\omega)\right)\\
		&= \iu{}\, \Pi_{2,0}\mleft(\dt{P}(\theta(p \cdot \omega) - \dt{P}(p) \wedge \theta(\omega) \mright)\\
		&= \iu{}\, \Pi_{2,0} \mleft(\dt{P}(\theta(\ca{p}{-1} \act \omega) \cdot \ca{p}{0}) \mright)\\
		&= \iu{}\, \Pi_{2,0} \mleft(\dt{P}(\theta(\ca{p}{-1}) \act \omega) \cdot \ca{p}{0} - \theta(\ca{p}{-1}\act\omega) \wedge \dt{P}(\ca{p}{0})\mright)\\
		&= \iu{}\, \Pi_{2,0} \circ\dt{P}\mleft(\theta(\ca{p}{-1}\act\omega) \cdot \ca{p}{0}\mright)\\
		&= F_\Pi\mleft((\ca{p}{-1}\act\omega)\cdot\ca{p}{0}\mright),
	\end{align*}
	so that \(F_\Pi\) is left \(P\)-linear; that \(F_\Pi\) is \(\ast\)-preserving now follows from the observation that \(\rest{F_\Pi}{\Lambda^1_H} = -\iu{}\,\Pi_{2,0} \circ \dt{P} \circ \rest{\theta}{\Lambda^1_H}\) is \(\ast\)-preserving.
	
	Finally, let us check the relation between \(F_\Pi\) and \(\nabla^2\). For every \(p \in P\),
	\begin{align*}
		\nabla_\Pi^2(p) &= \nabla_\Pi(\dt{P}(p)-\theta \circ \dv{P}(p))\\
		&= \Pi_{2,0} \circ \dt{P}\mleft(\dt{P}(p) - \theta(\varpi_H(\ca{p}{-1})) \cdot \ca{p}{0} \mright)\\
		&= \Pi_{2,0}\mleft(-\dt{P} \circ \theta(\varpi_H(\ca{p}{-1})) \cdot \ca{p}{0} + \theta(\varpi_H(\ca{p}{-1})) \wedge \dt{P}(\ca{p}{0}) \mright)\\
		&= -\Pi_{2,0} \circ \dt{P} \circ \theta(\varpi_H(\ca{p}{-1})) \cdot \ca{p}{0}\\
		&= \iu{}\, F_\Pi \circ \dv{P}(p),
	\end{align*}
	so that \(\rest{\nabla_\Pi^2}{P} = \iu{} \,F_\Pi \circ \dv{P}\), as was claimed; since \(\Omega^1_{P,\ver} = P \cdot \dv{P}(P)\), it now follows that \(F_\Pi : \Omega^1_{P,\ver} \to \Omega^2_{P,\hor}\) is the unique such left \(H\)-covariant morphism of \(P\)-\(\ast\)-bimodules.
\end{proof}

\begin{remark}[cf.\ Brzezi\'{n}ski--Majid~\cite{BrM}*{Appx.\ A}, \DJ{}ur\dj{}evi\'{c}~\cite{Dj97}*{Prop.\ 4.16}]
	In terms of the map \(\hor_\Pi : \Omega_P \to \Omega_P\) defined in Remark~\ref{horremark}, 
	\[
		\nabla_\Pi = \hor_\Pi \circ \dt{P}, \quad \rest{F_\Pi}{\Lambda^1_H} = \hor_\Pi \circ \dt{P} \circ \rest{(\rest{\ver[\dt{P}]}{\ran \Pi})^{-1}}{\Lambda^1_H}.
	\]
\end{remark}

Given a prolongable connection \(\Pi\) on a strong quantum principal \((H;\Omega_H,\dt{H})\)-bundle \((P;\Omega_P,\dt{P})\), we can extend the decomposition of the \((H;\Omega^1_H,\dt{H})\)-principal \fodc{} \((\Omega^1_P,\dt{P})\) on \(P\) given by Proposition~\ref{analysisthm} to a left \(H\)-covariant isomorphism of graded \(\ast\)-alge\-bras \(\psi_\Pi : \Omega_P \to \Lambda_H \hotimes \Omega_{P,\hor}\), such that \(\psi_\Pi \circ \dt{P} \circ \psi_\Pi^{-1}\) can be explicitly expressed in terms of the quantum Maurer--Cartan form \(\varpi_H\) of the bicovariant \fodc{} \((\Omega^1_H,\dt{H})\) on \(H\), the gauge potential \(\nabla_\Pi\) induced by \(\Pi\), and the curvature \(F_\Pi\) of \(\Pi\). This will provide a roadmap for replacing prolongable connections with sufficiently well-behaved gauge potentials by the appropriate refinement of Proposition~\ref{equiv1}.

\begin{proposition}[cf.\ \DJ{}ur\dj{}evi\'{c}~\cite{Dj97}*{Thm.\ 4.12}]\label{isothm}
	Suppose that the bicovariant first-order differential calculus \((\Omega^1_H,\dt{H})\) on \(H\) is locally freeing for the principal left \(H\)-comodule \(\ast\)-algebra \(P\), that \((\Omega_P,\dt{P})\) is a strongly \((H;\Omega_H,\dt{H})\)-principal \sodc{} on \(P\), and that \(\Pi\) is a prolongable bimodule connection on the strong quantum principal \((H;\Omega_H,\dt{H})\)-bundle \((P;\Omega_P,\dt{P})\). The map \(\psi_\Pi : \Omega_P \to \Lambda_H \hotimes \Omega_{P,\hor}\) given by
	\[
		\rest{\psi_\Pi}{P} \coloneqq \id, \quad \rest{\psi_\Pi}{\Omega^1_P} \coloneqq \ver[\dt{P}] + \Pi_{1,0}, \quad \rest{\psi_\Pi}{\Omega^2_P} \coloneqq \ver^{2,2}[\dt{P}] + \ver^{2,1}[\dt{P}] \circ \Pi_{2,1} + \Pi_{2,0}
	\]
	defines a left \(H\)-covariant isomorphism of graded \(\ast\)-algebras, such that
	\begin{gather}
		\forall p \in P, \quad \psi_\Pi \circ \dt{P} \circ \psi_\Pi^{-1}(p) = \dv{P}(p) + \nabla_\Pi(p),\\
		\forall \omega \in \Lambda^1_H, \quad 	\psi_\Pi \circ \dt{P} \circ \psi_\Pi^{-1}(\omega) = \dt{H}(\omega) - \iu{}\,F_\Pi(\omega),\\
		\forall \alpha \in \Omega^1_{P,\hor}, \quad \psi_\Pi \circ \dt{P} \circ \psi_\Pi^{-1}(\alpha) = \varpi_H(\ca{\alpha}{-1}) \wedge \ca{\alpha}{0} +\nabla_\Pi(\alpha),
	\end{gather}
	and hence, in particular, \(\rest{\psi_\Pi \circ \dt{P} \circ \psi_\Pi^{-1}}{\Omega_B} = \dt{B}\).
\end{proposition}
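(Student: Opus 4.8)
The plan is to show, degree by degree, that $\psi_\Pi$ is a left $H$-covariant isomorphism of graded $\ast$-algebras onto $\Lambda_H\hotimes\Omega_{P,\hor}$, and then to read off the formulas for $\dt{P,\nabla}\coloneqq\psi_\Pi\circ\dt{P}\circ\psi_\Pi^{-1}$ on algebra generators. Left $H$-covariance is clear throughout, since $\ver[\dt{P}]$, $\Pi$, $\ver^{2,j}[\dt{P}]$ and $\Pi_{2,j}$ are all left $H$-covariant. In degree $0$, $\rest{\psi_\Pi}{P}=\id_P$ is trivially an isomorphism of left $H$-comodule $\ast$-algebras. In degree $1$, $\rest{\psi_\Pi}{\Omega^1_P}=\ver[\dt{P}]\oplus\Pi_{1,0}$ is exactly the isomorphism $\psi_\Pi$ of part (2) of Proposition~\ref{analysisthm}, hence an isomorphism of $P$-$\ast$-bimodules onto $(\Lambda^1_H\otimes P)\oplus\Omega^1_{P,\hor}=(\Lambda_H\hotimes\Omega_{P,\hor})^1$.

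For bijectivity in degree $2$, I would use the resolution of the identity $\Pi_{2,2}+\Pi_{2,1}+\Pi_{2,0}=\id$ on $\Omega^2_P$ from the preceding proposition, giving $\Omega^2_P=\ran\Pi_{2,2}\oplus\ran\Pi_{2,1}\oplus\ran\Pi_{2,0}$. Since $\Omega^1_{P,\hor}\subseteq\ker\ver[\dt{P}]$, the map $\ver^{2,2}[\dt{P}]$ vanishes on $\ran\Pi_{2,1}\oplus\ran\Pi_{2,0}=\ker\Pi_{2,2}=\Omega^1_P\wedge\Omega^1_B=\ker\ver^{2,2}[\dt{P}]$ (combining condition~\ref{socd2} of Definition~\ref{sodc} with Definition~\ref{prolongcon}), and $\ver^{2,1}[\dt{P}]\circ\Pi_{2,1}$ vanishes on $\ran\Pi_{2,2}\oplus\ran\Pi_{2,0}$; hence $\rest{\psi_\Pi}{\Omega^2_P}$ is block-diagonal, restricting on the three summands to $\rest{\ver^{2,2}[\dt{P}]}{\ran\Pi_{2,2}}$, $\rest{\ver^{2,1}[\dt{P}]}{\ran\Pi_{2,1}}$ and $\id_{\Omega^2_{P,\hor}}$. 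The first is an isomorphism onto $\Omega^2_{P,\ver}=\Lambda^2_H\otimes P$ by condition~\ref{socd2}; the second is an isomorphism onto $\Lambda^1_H\otimes\Omega^1_{P,\hor}$ since $\ker\ver^{2,1}[\dt{P}]\cap\ker\ver^{2,2}[\dt{P}]=\Omega^2_{P,\hor}=\ran\Pi_{2,0}$ and $\ver^{2,1}[\dt{P}](\ker\ver^{2,2}[\dt{P}])=\Lambda^1_H\otimes\Omega^1_{P,\hor}$ by condition~\ref{socd3}; the third is tautologically an isomorphism onto $\Omega^2_{P,\hor}$. Thus $\psi_\Pi$ is bijective.

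That $\psi_\Pi$ is $\ast$-preserving is a routine unwinding of the $\ast$-structure on $\Lambda_H\hotimes\Omega_{P,\hor}$ against that of $\ver[\dt{P}]$, $\Pi$ and their prolongations. For multiplicativity the only non-trivial case is $\alpha\wedge\beta$ with $\alpha,\beta\in\Omega^1_P$, the cases with a degree-$0$ factor following from left/right $P$-linearity of the constituent maps. Decomposing $\alpha=\Pi(\alpha)+\Pi_{1,0}(\alpha)$ and $\beta=\Pi(\beta)+\Pi_{1,0}(\beta)$ and using the identities $\Pi_{2,2}(\alpha\wedge\beta)=\Pi(\alpha)\wedge\Pi(\beta)$, $\Pi_{2,1}(\alpha\wedge\beta)=\Pi(\alpha)\wedge\Pi_{1,0}(\beta)+\Pi_{1,0}(\alpha)\wedge\Pi(\beta)$ and $\Pi_{2,0}(\alpha\wedge\beta)=\Pi_{1,0}(\alpha)\wedge\Pi_{1,0}(\beta)$ from the preceding proposition, one expands $\psi_\Pi(\alpha)\wedge\psi_\Pi(\beta)$ using the braided product on $\Lambda_H\hotimes\Omega_{P,\hor}$ (which here is the product of $\Lambda_H\dvatimes\Omega_{P,\hor}$) and matches term by term: the $\Lambda^2_H\otimes P$-part is $\ver^{2,2}[\dt{P}](\alpha\wedge\beta)$, the $\Omega^2_{P,\hor}$-part is $\Pi_{2,0}(\alpha\wedge\beta)$, and the two mixed terms---one of which involves commuting a horizontal factor past a $\Lambda^1_H$-factor through the $H$-coaction, precisely as encoded by $\ver^{2,1}[\dt{P}]$---assemble into $\ver^{2,1}[\dt{P}](\Pi_{2,1}(\alpha\wedge\beta))$.

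Since $\psi_\Pi$ is then an isomorphism of graded $\ast$-algebras, $\dt{P,\nabla}$ is a degree-$1$ $\ast$-derivation on $\Lambda_H\hotimes\Omega_{P,\hor}$, so it is determined by its values on $P$, on $\Lambda^1_H$, and on $\Omega^1_{P,\hor}$. On $P$ the asserted formula is immediate from $\dv{P}=\ver[\dt{P}]\circ\dt{P}$ and $\nabla_\Pi=\Pi_{1,0}\circ\dt{P}$; on $\Omega^1_{P,\hor}$ it follows by writing a horizontal $1$-form as $p\cdot\dt{B}(b)$, applying the Leibniz rule for $\dt{P}$ and then $\psi_\Pi$, and using $\rest{\nabla_\Pi}{\Omega_B}=\dt{B}$, $\dt{B}^2=0$ and the $H$-coinvariance of $\dt{B}(b)$. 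The crux is the value on $\omega\in\Lambda^1_H$, where $\psi_\Pi^{-1}(\omega)=\theta(\omega)$ with $\theta\coloneqq(\rest{\ver[\dt{P}]}{\ran\Pi})^{-1}$; one decomposes $\dt{P,\nabla}(\omega)=\psi_\Pi(\dt{P}\theta(\omega))$ along $(\Lambda_H\hotimes\Omega_{P,\hor})^2=(\Lambda^2_H\otimes P)\oplus(\Lambda^1_H\otimes\Omega^1_{P,\hor})\oplus\Omega^2_{P,\hor}$. The $\Omega^2_{P,\hor}$-component is $\Pi_{2,0}(\dt{P}\theta(\omega))=-\iu F_\Pi(\omega)$ by the very definition of $F_\Pi$; the $\Lambda^2_H\otimes P$-component is $\dt{H}(\omega)$, because $\ver[\dt{P}] : (\Omega_P,\dt{P})\to(\Omega_{P,\ver},\dv{P})$ is a morphism of graded $\ast$-\textsc{dga}---which one checks on the generating set $P\cdot\dt{P}(P)$ using $\dv{P}^2=0$---so that the vertical projection $\ver[\dt{P}]\circ\psi_\Pi^{-1}$ intertwines $\dt{P,\nabla}$ with $\dv{P}$, and $\dv{P}(\omega)=\dt{H}(\omega)$. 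The main obstacle is to show that the residual $\Lambda^1_H\otimes\Omega^1_{P,\hor}$-component $m(\omega)$ vanishes---i.e.\ that the curvature of a prolongable connection carries no mixed vertical-horizontal component. I would obtain this from $\dt{P,\nabla}^2=0$ evaluated at $p\in P$: using $\dt{P,\nabla}(p)=\varpi_H(\ca p{-1})\cdot\ca p0+\nabla_\Pi(p)$, the $H$-covariance of $\nabla_\Pi$, and the Maurer--Cartan relation $\dt{H}\varpi_H(h)=\varpi_H(\cm h1)\wedge\varpi_H(\cm h2)$, all contributions to the $\Lambda^1_H\otimes\Omega^1_{P,\hor}$-component of $\dt{P,\nabla}^2(p)$ cancel in pairs except $m(\varpi_H(\ca p{-1}))\cdot\ca p0$, whence $m(\varpi_H(\ca p{-1}))\cdot\ca p0=0$ for all $p\in P$; feeding in the canonical element (translation map) of the principal comodule algebra $P$ then forces $m\circ\varpi_H=0$, so $m=0$ by surjectivity of $\varpi_H$. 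The final assertion $\rest{\dt{P,\nabla}}{\Omega_B}=\dt{B}$ is then immediate from $\varpi_H(1)=0$ and $\rest{\nabla_\Pi}{B}=\dt{B}$.
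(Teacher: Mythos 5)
Your proposal is correct, and for the isomorphism part (degree-by-degree bijectivity via the resolution of the identity $\set{\Pi_{2,2},\Pi_{2,1},\Pi_{2,0}}$ and conditions \ref{socd2}--\ref{socd3} of Definition~\ref{sodc}, multiplicativity via the identities for $\Pi_{2,j}$ on wedges) it follows the paper's proof essentially verbatim. Where you genuinely diverge is the computation of $\psi_\Pi \circ \dt{P} \circ \psi_\Pi^{-1}$. The paper proves a single ``projected Cartan's magic formula'' by evaluating $\ver^{2,1}[\dt{P}] \circ \Pi_{2,1} \circ \dt{P}$ and $\ver^{2,2}[\dt{P}] \circ \dt{P}$ directly on elements $p \cdot \dt{P}(q)$, which yields the values on $\Lambda^1_H$ and on $\Omega^1_{P,\hor}$ simultaneously and makes the absence of a mixed $\Lambda^1_H \otimes \Omega^1_{P,\hor}$-term for $\omega \in \Lambda^1_H$ automatic (since $\Pi_{1,0} \circ \theta = 0$ and $(\id \hotimes \nabla_\Pi)(\omega \otimes 1_P) = 0$). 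You instead get the horizontal formula from multiplicativity plus the Leibniz rule on $p \cdot \dt{P}(b)$, and then kill the unknown mixed component $m$ on $\Lambda^1_H$ by exploiting $\dt{P,\nabla}^2 = 0$ (legitimate once $\psi_\Pi$ is known to be an algebra isomorphism), the covariance-induced cancellation of the two terms $\mp\,\varpi_H(\ca{p}{-1}) \wedge \nabla_\Pi(\ca{p}{0})$, and then principality: feeding the translation map into $m(\varpi_H(\ca{p}{-1})) \cdot \ca{p}{0} = 0$ and using surjectivity of $\varpi_H$. This works---the only care points are that the passage through $P \otimes_B P$ is harmless because the relation is stable under right multiplication by $P$ on the $\Omega^1_{P,\hor}$-factor, and that your appeal to the Maurer--Cartan relation is really only needed for the $\Lambda^2_H \otimes P$-component (which you need not check at all), not for the mixed one. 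The trade-off: the paper's route is purely computational, needs no flatness or translation-map input, and produces the explicit projected magic formula as a by-product; your route is more structural, reusing the isomorphism already established, at the cost of invoking principality of $P$ beyond what the direct computation requires.
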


\begin{proof}
	Before continuing, recall that the restriction \(\rest{\psi_\Pi}{\Omega^1_P}\) is a left \(H\)-covariant isomorphism of \(P\)-\(\ast\)-bimodules by Proposition~\ref{analysisthm}. First, observe the the left \(H\)-covariant graded \(\bC\)-linear map  \(\psi_\Pi : \Omega_P \to \Lambda_H \dvatimes \Omega_{P,\hor}\) is bijective: since \((P;\Omega_P,\dt{P})\) is a strong second-order quantum principal \((H;\Omega_H,\dt{H})\)-bundle and since \(\Pi\) is prolongable, it follows that
	\[
		\rest{\ver^{2,2}[\dt{P}]}{\ran \Pi_{2,2}} : \ran \Pi_{2,2} \to \Omega^2_{P,\ver}, \quad \rest{\ver^{2,1}}{\ran \Pi_{2,1}} : \ran \Pi_{2,1} \to \Lambda^1_H \otimes \Omega^1_{P,\hor},
	\]
	are both bijective, so that, in turn, the restriction
	\[
		\rest{\psi_\Pi}{\Omega^2_P} \coloneqq \ver^{2,2}[\dt{P}] + \ver^{2,1}[\dt{P}] \circ \Pi_{2,1} + \Pi_{2,0} = \ver^{2,2}[\dt{P}] \circ \Pi_{2,2} + \ver^{2,1}[\dt{P}] \circ \Pi_{2,1} + \Pi_{2,0}
	\]
	is also bijective. Next, to show that the map \(\psi_\Pi\) is a homomorphism, it suffices to show that it is multiplicative with respect to the product \(\Omega^1_P \otimes \Omega^1_P \to \Omega^2_P\); indeed, for all \(\alpha,\beta \in \Omega^1_P\),
	\begin{align*}
		\psi_\Pi(\alpha \wedge \beta) &= \ver^{2,2}[\dt{P}](\alpha\wedge\beta) +	 \ver^{2,1}[\dt{P}]\circ\Pi_{2,1}(\alpha\wedge\beta) + \Pi_{2,0}(\alpha\wedge\beta)\\
		&= \ver^{2,2}[\dt{P}](\alpha\wedge\beta) + \ver^{2,1}[\dt{P}]\mleft(\Pi_{1,1}(\alpha)\wedge\Pi_{1,0}(\beta)+\Pi_{1,0}(\alpha)\wedge\Pi_{1,1}(\beta)\mright)\\ &\quad\quad + \Pi_{1,0}(\alpha)\wedge\Pi_{1,0}(\beta)\\
		&= \ver[\dt{P}](\alpha)\wedge\ver[\dt{P}](\beta) + \ver[\dt{P}](\alpha)\wedge\Pi_{1,0}(\beta) + \Pi_{1,0}(\alpha)\wedge\ver[\dt{P}](\beta)\\ &\quad\quad+ \Pi_{1,0}(\alpha)\wedge\Pi_{1,0}(\beta)\\
		&=\left(\ver[\dt{P}](\alpha)+\Pi_{1,0}(\alpha)\right)\wedge\left(\ver[\dt{P}](\beta)+\Pi_{1,0}(\beta)\right)\\
		&=\psi_\Pi(\alpha)\wedge\psi_\Pi(\beta).
	\end{align*}
	Since \(\rest{\psi_\Pi}{\Omega^1_P}\) is \(\ast\)-preserving and \(\Omega^2_P = \Omega^1_P \wedge \Omega^1_P\), it now follows that \(\rest{\psi_\Pi}{\Omega^2_P}\) is \(\ast\)-preserving. Hence, the map \(\psi_\Pi\) is indeed a left \(H\)-covariant isomorphism of graded \(\ast\)-algebras.
	
	Let us now compute \(\psi_\Pi \circ \dt{P} \circ \psi_\Pi^{-1}\). By Proposition~\ref{analysisthm}, we know that \[\rest{\psi_\Pi \circ \dt{P} \circ \psi_\Pi^{-1}}{P} = \psi_\Pi \circ \dt{P} = \dv{P} + \nabla,\] so it remains to compute \(\rest{\psi_\Pi \circ \dt{P} \circ \psi_\Pi^{-1}}{\Omega^1_P}\). Observe that
	\begin{align*}
		\psi_\Pi \circ \dt{P} &= \left(\ver^{2,2}[\dt{P}] + \ver^{2,1}[\dt{P}] \circ \Pi_{2,1} + \Pi_{2,0}\right) \circ \dt{P}\\
		&= \ver^{2,2}[\dt{P}] \circ \dt{P} + \ver^{2,1}[\dt{P}] \circ \Pi_{2,1} \circ \dt{P} + \nabla_\Pi,
	\end{align*}
	On the one hand, for all \(p,q \in P\),
	\[
		\ver^{2,2}[\dt{P}] \circ \dt{P}(p \cdot \dt{P}q) = \ver^{2,2}[\dt{P}](\dt{P}(p) \wedge \dt{P}(q)) = \dv{P}(p) \wedge \dv{P}(q) = \dv{P}(p \cdot \dv{P}(q)),
	\]
	so that \(\ver^{2,2}[\dt{P}] \circ \dt{P} = \dv{P} \circ \ver[\dt{P}]\). On the other, for all \(p,q \in P\),
	\begin{align*}
		&\ver^{2,1}[\dt{P}] \circ \Pi_{2,1} \circ \dt{P}(p \cdot \dt{P}(q))\\ &= \ver^{2,1}[\dt{P}] \circ \Pi_{2,1}(\dt{P}(p) \wedge \dt{P}(q))\\
		&= \ver^{2,1}[\dt{P}]\mleft(\Pi_{1,1}(\dt{P}(p)) \wedge \Pi_{1,0}(\dt{P}(q)) + \Pi_{1,0}(\dt{P}(p)) \wedge \Pi_{1,1}(\dt{P}(q))\mright)\\
		&= (\ver[\dt{P}] \circ \Pi_{1,1})(\dt{P}(p)) \wedge \nabla_\Pi(q) + \nabla_\Pi(p) \wedge (\ver[\dt{P}] \circ \Pi_{1,1})(\dt{P}(q))\\
		&= \dv{P}(p) \wedge \nabla_\Pi(q) + \nabla_\Pi(p) \wedge \dv{P}(q)\\
		&= \varpi_H(\ca{p}{-1})\epsilon(\ca{q}{-1}) \otimes \ca{p}{0}\cdot\nabla_\Pi(\ca{q}{0}) - \ca{p}{-1} \act \varpi_H(\ca{q}{-1}) \otimes \nabla_\Pi(\ca{p}{0}) \cdot \ca{q}{0}\\
		&= \varpi_H(\ca{p}{-1}\ca{q}{-1}) \otimes \ca{p}{0}\cdot\nabla_\Pi(\ca{p}{0}) - \ca{p}{-1}\act\varpi_H(\ca{q}{-1}) \otimes \nabla_\Pi(\ca{p}{0}\ca{q}{0})\\
		&= \varpi_H(\ca{\Pi_{1,0}(p \cdot \dt{P}(q))}{-1}) \otimes \ca{\Pi_{1,0}(p \cdot \dt{P}(q))}{0} + (\id \hotimes \nabla_\Pi) \circ \ver[\dt{P}](p \cdot \dt{P}(q)),
	\end{align*}
	where \(\id \hotimes \nabla_\Pi : \Lambda^1_H \otimes P \to \Lambda^1_H \otimes \Omega^1_{P,\hor}\), yielding the projected Cartan's magic formula
	\[
		\forall \alpha \in \Omega^1_P, \quad \ver^{2,1}[\dt{P}] \circ \Pi_{2,1} \circ \dt{P}(\alpha) = \varpi_H(\ca{\Pi_{1,0}(\alpha)}{-1}) \otimes \ca{\Pi_{1,0}(\alpha)}{0} + (\id \hotimes \nabla_\Pi) \circ \ver[\dt{P}](\alpha).
	\]
	It therefore follows that
	\begin{align*}
		\forall \alpha \in \Omega^1_P, \quad \psi_\Pi \circ \dt{P}(\alpha) &= \dv{P} \circ \ver[\dt{P}](\alpha) + \varpi_H(\ca{\Pi_{1,0}(\alpha)}{-1}) \otimes \ca{\Pi_{1,0}(\alpha)}{0}\\&\quad\quad + (\id \hotimes \nabla_\Pi) \circ \ver[\dt{P}](\alpha) + \nabla_\Pi(\alpha).
	\end{align*}
	On the one hand, for all \(\omega \in \Lambda^1_H\), since \(\ver[\dt{P}] \circ \theta(\omega) = \omega\) and \(\Pi_{1,0} \circ \theta(\omega) = 0\), it follows that
	\[
		\psi_\Pi \circ \dt{P} \circ \psi_\Pi^{-1}(\omega) = \dt{H}(\omega) + \nabla_\Pi \circ \theta(\omega) = \dt{H}(\omega)-\iu{}\,F_\Pi(\omega),
	\]
	while on the other, for all \(\alpha \in \Omega^1_{P,\hor}\), since \(\ver[\dt{P}](\alpha) = 0\) and \(\Pi_{1,0}(\alpha)=\alpha\), it follows that
	\[
		\psi_\Pi \circ \dt{P} \circ \psi_\Pi^{-1}(\alpha) = \varpi_H(\ca{\alpha}{-1}) \otimes \ca{\alpha}{0} + \nabla_\Pi(\alpha) = \varpi_H(\ca{\alpha}{-1}) \wedge \ca{\alpha}{0} + \nabla_\Pi(\alpha).
	\]
	Finally, since \(\rest{\nabla_\Pi}{\Omega_B} = \dt{B}\) and \(\Omega_B = \coinv{H}{\Omega_{P,\hor}}\), it follows that \(\rest{\psi_\Pi \circ \dt{P} \circ \psi_\Pi^{-1}}{\Omega_B} = \dt{B}\).
\end{proof}

\begin{remark}
	The isomorphism \(\psi_\Pi\) was first constructed \DJ{}ur\dj{}evi\'{c}~\cite{Dj97}*{Thm.\ 4.12} for multiplicative regular connections in his sense~\cite{Dj97}*{Def.\ 4.2 and 4.3}. Since regular connections \`{a} la \DJ{}ur\dj{}evi\'{c} yield bimodule connections \`{a} la Beggs--Majid, this suggests that our notion of prolongable connection can be viewed as a variant of \DJ{}ur\dj{}evi\'{c}'s notion of multiplicative connection. Our computation of \(\psi_\Pi^{-1} \circ \dt{P} \circ \psi_\Pi\), however, seems to be novel.
\end{remark}

\begin{remark}
	\emph{Mutatis mutandis}, one can also prove the following (unprojected) Cartan's magic formula for a strong quantum principal \((H;\Omega_H,\dt{H})\)-bundle \((P;\Omega_P,\dt{P})\):
	\begin{equation}
		\forall \alpha \in \Omega^1_P, \quad \ver^{2,1}[\dt{P}] \circ \dt{P}(\alpha) = \varpi(\ca{\alpha}{-1}) \wedge \ca{\alpha}{0} + (\id \hotimes \dt{P}) \circ \ver[\dt{P}](\alpha).
	\end{equation}
\end{remark}

\subsection{Prolongability and field strength}

Having adapted the notions of quantum principal bundle and strong bimodule connection to the setting of second-order differential calculi, we now turn to the notions of gauge transformation and gauge potential with respect to a fixed horizontal calculus. In particular, we see that the standard noncommutative-geometric notion of curvature of a module connection can be adapted to gauge potentials \emph{qua} horizontal covariant derivatives. In contrast to \DJ{}ur\dj{}evi\'{c}~\cites{Dj98,Dj10}, we are explicitly concerned with the problem of extending constructions from degree \(1\) to degree \(2\).

From now on, let \(P\) be a principal \(H\)-comodule \(\ast\)-algebra, and let \(B \coloneqq \coinv{H}{P}\). We begin with the following refinement of the notion of horizontal calculus on \(P\), which encodes a choice of basic differential calculus (through degree \(2\)) together with compatible left \(H\)-covariant graded \(\ast\)-algebra over \(P\) of horizontal forms (through degree \(2\)).

\begin{definition}[cf.\ \DJ{}ur\dj{e}vi\'{c}~\cite{Dj10}*{\S 3.1}]
	A \emph{second-order horizontal calculus} on the principal left \(H\)-comodule \(\ast\)-algebra \(P\) is a quadruple \((\Omega_B,\dt{B};\Omega_{P,\hor},\iota)\), where:
	\begin{enumerate}
		\item \((\Omega_B,\dt{B})\) is a \sodc{} on \(B\);
		\item \(\Omega_{P,\hor}\) is a left \(H\)-covariant graded \(\ast\)-algebra generated by \(\Omega^1_{P,\hor}\) over \(\Omega^0_P = P\) and truncated at degree \(2\);
		\item \(\iota : \Omega_B \inj \coinv{H}{\Omega_{P,\hor}}\) is an injective morphism of graded \(\ast\)-algebras, such that the pair \((\Omega^1_{P,\hor},\rest{\iota}{\Omega^1_{P,\hor}})\) defines a projectable horizontal lift of the \(B\)-\(\ast\)-bimodule \(\Omega^1_B\).
	\end{enumerate}
\end{definition}

\begin{example}
	Let \((\Omega_H,\dt{H})\) be a bicovariant \sodc{} for \(H\), and suppose that the bicovariant \fodc{} \((\Omega^1_H,\dt{H})\) on \(H\) is locally freeing for \(P\). Suppose that \((\Omega_P,\dt{P})\) is a strongly \((H;\Omega_H,\dt{H})\)-principal \sodc{} on \(P\) admitting a prolongable bimodule connection; recall that \((\Omega_P,\dt{P})\) therefore restricts to the \sodc{} \((\Omega_B,\dt{B})\) on \(B \coloneqq \coinv{H}{P}\) given by
	\[
		\Omega^1_B \coloneqq B \cdot \dt{P}(B), \quad \Omega^2_B \coloneqq \Omega^1_B \wedge \Omega^1_B, \quad \dt{B} \coloneqq \rest{\dt{P}}{\Omega_B}.
	\]
	Finally, let \(\Omega_{P,\hor}\) be the left \(H\)-subcomodule graded \(\ast\)-subalgebra of \(\Omega_P\) generated by \(\Omega^1_B\) over \(P\), and let \(\iota : \Omega_B \inj \Omega_{P,\hor}\) be the inclusion map. Then, by Proposition~\ref{analysisthm} applied to the quantum principal \((H;\Omega^1_H,\dt{H})\)-bundle \((P;\Omega^1_P,\dt{P})\), the data
	\(
		(\Omega_B,\dt{B};\Omega_{P,\hor},\iota)
	\)
	define a second-order horizontal calculus on \(P\), which we can view as the \emph{canonical} second-order horizontal calculus on \(P\) induced by the strongly \((H;\Omega_H,\dt{H})\)-principal \sodc{} \((\Omega_P,\dt{P})\) on \(P\).
\end{example}

\begin{remark}
	If \((\Omega_B,\dt{B};\Omega_{P,\hor},\iota)\) is a second-order horizontal	calculus on \(P\), then the data \((\Omega^1_B,\dt{B};\Omega^1_{P,\hor},\rest{\iota}{\Omega^1_B})\) define a first-order horizontal calculus on \(P\).
\end{remark}

Although it is not obvious, in a second-order horizontal calculus \((\Omega_B,\dt{B};\Omega_{P,\hor},\iota)\) on \(P\), the left \(H\)-covariant graded \(\ast\)-algebra \(\Omega_{P,\hor}\) of horizontal forms defines a projectable horizontal lift of the entire graded \(\ast\)-algebra \(\Omega_B\) of basic forms (through degree \(2\)) on \(P\).

\begin{proposition}\label{projectprop}
	Suppose that \((\Omega_B,\dt{B};\Omega_{P,\hor},\iota)\) is a second-order horizontal calculus on \(P\); recall that \(B \coloneqq \coinv{H}{P}\). Then \((\Omega^2_{P,\hor},\rest{\iota}{\Omega^2_{P,\hor}})\) defines a projectable horizontal lift of the \(B\)-\(\ast\)-bimodule \(\Omega^2_B\), and hence \(\iota : \Omega_B \inj \coinv{H}{\Omega_{P,\hor}}\) is an isomorphism of graded \(\ast\)-algebras.
\end{proposition}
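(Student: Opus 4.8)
The plan is to reduce everything to Proposition~\ref{strongprop} applied in degree~$2$, using crucially the first-order projectability hypothesis built into the definition of a second-order horizontal calculus. First I would record two structural facts. Since $\Omega_{P,\hor}$ is generated by $\Omega^1_{P,\hor}$ over $P$ and truncated at degree~$2$, every degree-$2$ element is a sum of products $p_0\wedge\eta_1\wedge p_1\wedge\eta_2\wedge p_2$ with $p_i\in P$ and $\eta_i\in\Omega^1_{P,\hor}$; absorbing the $p_i$ into the $\eta_i$ using the $P$-bimodule structure (so that e.g.\ $\eta_1\wedge p_1\wedge\eta_2 = (\eta_1\cdot p_1)\wedge\eta_2$) yields $\Omega^2_{P,\hor}=\Omega^1_{P,\hor}\wedge\Omega^1_{P,\hor}$, and likewise $\Omega^2_B=\Omega^1_B\wedge\Omega^1_B$ since $(\Omega_B,\dt{B})$ is a \sodc{}. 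Second, the left $H$-coaction on $\Omega_{P,\hor}$ respects the grading, so its subspace of coinvariants is graded, whence $\coinv{H}{\Omega^k_{P,\hor}}=\coinv{H}{\Omega_{P,\hor}}\cap\Omega^k_{P,\hor}$ for each $k$; since $\iota$ is a graded $\ast$-algebra morphism with image in $\coinv{H}{\Omega_{P,\hor}}$, this places $\iota(\Omega^2_B)$ inside $\coinv{H}{\Omega^2_{P,\hor}}$ and makes the degree-$2$ component $\rest{\iota}{\Omega^2_B}$ an injective morphism of $B$-$\ast$-bimodules $\Omega^2_B\to\coinv{H}{\Omega^2_{P,\hor}}$.

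The heart of the argument is the identification $\Omega^2_{P,\hor}=P\cdot\iota(\Omega^2_B)$. Projectability of the first-order horizontal lift $(\Omega^1_{P,\hor},\rest{\iota}{\Omega^1_B})$ gives, via Proposition~\ref{strongprop}, that $\Omega^1_{P,\hor}=P\cdot\iota(\Omega^1_B)$, and hence $\iota(\Omega^1_B)\cdot P\subseteq\Omega^1_{P,\hor}=P\cdot\iota(\Omega^1_B)$. Substituting into $\Omega^2_{P,\hor}=\Omega^1_{P,\hor}\wedge\Omega^1_{P,\hor}=P\cdot\iota(\Omega^1_B)\cdot P\cdot\iota(\Omega^1_B)$ lets me absorb the interior copy of $P$ and, using multiplicativity of $\iota$, conclude $\Omega^2_{P,\hor}\subseteq P\cdot\bigl(\iota(\Omega^1_B)\wedge\iota(\Omega^1_B)\bigr)=P\cdot\iota(\Omega^1_B\wedge\Omega^1_B)=P\cdot\iota(\Omega^2_B)$; the reverse inclusion $P\cdot\iota(\Omega^2_B)=P\cdot\bigl(\iota(\Omega^1_B)\wedge\iota(\Omega^1_B)\bigr)\subseteq\Omega^1_{P,\hor}\wedge\Omega^1_{P,\hor}=\Omega^2_{P,\hor}$ is immediate. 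As $\Omega^2_{P,\hor}$ is a $P$-$\ast$-bimodule containing $\iota(\Omega^2_B)$, this equality upgrades to $\Omega^2_{P,\hor}=P\cdot\iota(\Omega^2_B)\cdot P$, so $(\Omega^2_{P,\hor},\rest{\iota}{\Omega^2_B})$ is a horizontal lift of $\Omega^2_B$; applying Proposition~\ref{strongprop} again, the equality $\Omega^2_{P,\hor}=P\cdot\iota(\Omega^2_B)$ shows this lift is projectable, i.e.\ $\coinv{H}{\Omega^2_{P,\hor}}=\iota(\Omega^2_B)$.

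It then follows that $\iota:\Omega_B\to\coinv{H}{\Omega_{P,\hor}}$ is an isomorphism of graded $\ast$-algebras: it is an injective graded $\ast$-algebra morphism by hypothesis, and it is surjective in each degree --- in degree $0$ it is the canonical identification $B=\coinv{H}{P}$, in degree $1$ it is surjective by the first-order projectability hypothesis, and in degree $2$ by the projectability just established. The only step requiring genuine care is the absorption of the interior copy of $P$: without first-order projectability one is left with $\Omega^2_{P,\hor}=P\cdot\iota(\Omega^1_B)\cdot P\cdot\iota(\Omega^1_B)$, which is a priori strictly larger than $P\cdot\iota(\Omega^2_B)$, and it is precisely the hypothesis $\Omega^1_{P,\hor}=P\cdot\iota(\Omega^1_B)$ that collapses it.
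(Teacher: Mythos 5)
Your proposal is correct and follows essentially the same route as the paper's proof: it uses Proposition~\ref{strongprop} in degree~$1$ to get $\Omega^1_{P,\hor}=P\cdot\iota(\Omega^1_B)$, pushes interior factors of $P$ leftward through $\iota(\Omega^1_B)$ to obtain $\Omega^2_{P,\hor}=P\cdot\iota(\Omega^2_B)$, and then applies Proposition~\ref{strongprop} again in degree~$2$ to conclude projectability and hence $\iota(\Omega_B)=\coinv{H}{\Omega_{P,\hor}}$. The only difference is cosmetic ordering (you establish $\Omega^2_{P,\hor}=P\cdot\iota(\Omega^2_B)$ first and deduce the horizontal-lift identity $P\cdot\iota(\Omega^2_B)\cdot P$ from it, whereas the paper does the reverse), which does not change the argument.
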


\begin{proof}
	First, by applying Proposition~\ref{strongprop} to the projectable horizontal lift \((\Omega^1_{P,\hor},\rest{\iota}{\Omega^1_{P,\hor}})\) of the \(B\)-\(\ast\)-bimodule \(\Omega^1_B\), we find that
	\[
		\Omega^2_{P,\hor} = \Omega^1_{P,\hor} \wedge \Omega^1_{P,\hor} = P \cdot \iota(\Omega^1_B) \wedge \iota(\Omega^1_B) \cdot P = P \cdot \iota(\Omega^2_B) \cdot P,
	\]
	so that \((\Omega^2_{P,\hor},\rest{\iota}{\Omega^2_{P,\hor}})\) is a horizontal lift of \(\Omega^2_B\).
	
	Now, let \(p,\p{p} \in P\) and \(\omega,\p{\omega} \in \Omega^1_B\). Then \(\iota(\p{\omega})\cdot \p{p} = \sum_i q_i \cdot \iota(\p{\omega}_i)\) for some \(q_i \in P\) and \(\p{\omega}_i \in \Omega^1_B\), and for each \(i\), \(\iota(\omega) \cdot q_i = \sum_j q_{ij} \cdot \iota(\omega_{ij})\) for some \(q_{ij} \in P\) and \(\omega_{ij} \in \Omega^1_B\), so that
	\[
		p \cdot \iota(\omega) \wedge \iota(\p{\omega}) \cdot \p{p} = \sum_i p \cdot \iota(\omega) \wedge q_i \cdot \iota(\p{\omega}_i) = \sum_{i,j} p q_{ij} \cdot \iota(\omega_{ij} \wedge \p{\omega}_i) \in P \cdot \iota(\Omega^2_B). 
	\]
	Hence, by Proposition~\ref{strongprop}, the horizontal lift \((\Omega^2_{P,\hor},\rest{\iota}{\Omega^2_{P,\hor}})\) of \(\Omega^2_B\) is projectable. In particular, it now follows that \(\iota(\Omega_B) = \coinv{H}{\Omega_{P,\hor}}\).
\end{proof}

Assume, therefore, that \(P\) admits a second-order horizontal calculus \((\Omega_B,\dt{B};\Omega_{P,\hor},\iota)\), which we now fix. To simplify notation, we suppress the inclusion map \(\iota\) and identify \(\Omega_B\) with its image in \(\Omega_{P,\hor}\); hence, where convenient, we denote the second-order horizontal calculus \((\Omega_B,\dt{B};\Omega_{P,\hor},\iota)\) by the triple \((\Omega_B,\dt{B};\Omega_{P,\hor})\). Since \((\Omega^1_B,\dt{B};\Omega^1_{P,\hor})\) is a first-order horizontal calculus on \(P\), we can define its gauge group \(\fr{G}\), its inner gauge group \(\Inn(\fr{G})\), and its Atiyah space \(\fr{At}\) with corresponding space of translations \(\fr{at}\).

We begin by characterizing those gauge transformations \(f \in \fr{G}\) that extend via the induced map \(f_{\ast} : \Omega^1_{P,\hor} \to \Omega^1_{P,\hor}\) to automorphisms of \(\Omega_{P,\hor}\).

\begin{definition}
	We say that a gauge transformation \(\phi \in \fr{G}\) is \emph{prolongable} with respect to the second-order horizontal calculus \((\Omega_B,\dt{B};\Omega_{P,\hor},\iota)\) on \(P\) whenever \(\phi\) is also an automorphism of the projectable horizontal lift  \((\Omega^2_{P,\hor},\rest{\iota}{\Omega^2_B})\) of \(\Omega^2_B\). Hence, we define the \emph{prolongable gauge group} of \(P\) with respect to \((\Omega_B,\dt{B};\Omega_{P,\hor},\iota)\) by
	\[
		\pr{\fr{G}} \coloneqq \fr{G} \cap \Aut\mleft(\Omega^2_{P,\hor},\rest{\iota}{\Omega^2_B}\mright),
	\]
	where \(\fr{G}\) is the gauge group of \(P\) with respect to \((\Omega^1_B,\dt{B};\Omega^1_{P,\hor},\rest{\iota}{\Omega^1_B})\) and \(\Aut(\Omega^2_{P,\hor},\rest{\iota}{\Omega^2_B})\) is the automorphism group of \((\Omega^2_{P,\hor},\rest{\iota}{\Omega^2_B})\). By abuse of notation, given \(\phi \in \dva{\fr{G}}\), we denote by \(\phi_{\ast}\) or by \(\phi_{\ast,\hor}\) the induced automorphism of the left \(H\)-comodule graded \(\ast\)-algebra \(\Omega_{P,\hor}\).
\end{definition}

\begin{remark}
	That a prolongable gauge transformation \(\phi \in \dva{\fr{G}}\) induces an automorphism of the entire left \(H\)-covariant graded \(\ast\)-algebra \(\Omega_{P,\hor}\), viz, that
	\[
		\forall \omega, \p{\omega} \in \Omega^1_{P,\hor}, \quad \phi_\ast(\omega \wedge \p{\omega}) = \phi_\ast(\omega) \wedge \phi_\ast(\p{\omega}),
	\]
	is a consequence of the proof of Proposition~\ref{projectprop}.
\end{remark}

As an example, all inner gauge transformations are automatically prolongable.

\begin{propositiondefinition}
	The prolongable gauge group \(\pr{\fr{G}}\) of \(P\) with respect to the second-order horizontal calculus \((\Omega_B,\dt{B};\Omega_{P,\hor})\) contains the inner gauge group \(\Inn(\fr{G})\) of \(P\) with respect to the first-order horizontal calculus \((\Omega^1_B,\dt{B};\Omega^1_{P,\hor})\) as a central subgroup. Hence, the \emph{outer prolongable gauge group} of \(P\) with respect to \((\Omega_B,\dt{B};\Omega_{P,\hor})\) is 
\[
	\Out(\pr{\fr{G}}) \coloneqq \pr{\fr{G}}/\Inn(\fr{G}).
\]
\end{propositiondefinition}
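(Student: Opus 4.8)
The plan is to apply Proposition~\ref{innergaugeprop1} twice: once to the projectable horizontal lift $(\Omega^1_{P,\hor},\rest{\iota}{\Omega^1_B})$ of $\Omega^1_B$, which describes $\Inn(\fr{G})$ and its centrality inside $\fr{G}$, and once to the pair $(\Omega^2_{P,\hor},\rest{\iota}{\Omega^2_B})$, which is again a genuine projectable horizontal lift of a $B$-$\ast$-bimodule by Proposition~\ref{projectprop}, so that Proposition~\ref{innergaugeprop1} applies to it verbatim. For bookkeeping one first notes that $\pr{\fr{G}}$ is indeed a subgroup of $\fr{G}$, being the intersection of two subgroups of $\Aut(P)$ under the identification of Remark~\ref{asthor}.

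Next I would take $\phi \in \Inn(\fr{G})$. By Proposition~\ref{innergaugeprop1} applied with $E = \Omega^1_B$, there is $v \in \Unit(\Cent_B(B \oplus \Omega^1_B)) = \Unit(\Zent(B) \cap \Cent_B(\Omega^1_B))$ with $\phi = \Ad_v$ and $\phi_{\ast,\hor} = \Ad_v$. The one computation to carry out is that such a $v$ automatically centralises $\Omega^2_B$: since $v$ is unitary, central in $B$, and commutes with every $1$-form inside the graded algebra $\Omega_B$, associativity of the wedge product gives
\[
	v \cdot (\omega \wedge \p{\omega}) = (v\omega) \wedge \p{\omega} = (\omega v) \wedge \p{\omega} = \omega \wedge (v\p{\omega}) = \omega \wedge (\p{\omega} v) = (\omega \wedge \p{\omega}) \cdot v
\]
for all $\omega,\p{\omega} \in \Omega^1_B$; as $\Omega^2_B = \Omega^1_B \wedge \Omega^1_B$, this yields $v \in \Cent_B(\Omega^2_B)$ and hence $v \in \Cent_B(B \oplus \Omega^2_B)$. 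Now Proposition~\ref{innergaugeprop1}, applied with $E = \Omega^2_B$ and $\tilde{E} = \Omega^2_{P,\hor}$, shows that $\Ad_v$ defines an (inner) automorphism of $(\Omega^2_{P,\hor},\rest{\iota}{\Omega^2_B})$, whose induced map is $\eta \mapsto v \cdot \eta \cdot v^\ast$; together with $\phi = \Ad_v \in \fr{G}$ this gives $\phi \in \fr{G} \cap \Aut(\Omega^2_{P,\hor},\rest{\iota}{\Omega^2_B}) = \pr{\fr{G}}$. Hence $\Inn(\fr{G}) \subseteq \pr{\fr{G}}$, and since $\Inn(\fr{G})$ is a subgroup of $\fr{G}$ (Proposition~\ref{innergaugeprop1}) contained in the subgroup $\pr{\fr{G}}$, it is a subgroup of $\pr{\fr{G}}$.

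Finally, centrality is immediate: $\Inn(\fr{G})$ is central in $\fr{G}$ by Proposition~\ref{innergaugeprop1}, hence central in the subgroup $\pr{\fr{G}}$, so that $\Out(\pr{\fr{G}}) \coloneqq \pr{\fr{G}}/\Inn(\fr{G})$ is well-defined as a group. I do not expect any serious obstacle here: the only non-formal point is the short centralising computation displayed above, which is also exactly what makes the induced map $\phi_\ast = \Ad_v$ multiplicative on all of $\Omega_{P,\hor}$, the compatibility flagged in the remark following the definition of prolongable gauge transformation.
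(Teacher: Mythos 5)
Your proposal is correct and takes essentially the same route as the paper: the paper's proof is precisely the inclusion \(\Cent_B(\Omega^1_B) \subseteq \Cent_B(\Omega^2_B)\) (which your displayed wedge computation makes explicit, the paper asserting it directly from \(\Omega^2_B = \Omega^1_B \wedge \Omega^1_B\)) followed by Proposition~\ref{innergaugeprop1} applied to the horizontal lift \((\Omega^2_{P,\hor},\iota)\), giving \(\Inn(\fr{G}) \leq \Inn(\Omega^2_{P,\hor},\iota) \leq \Aut(\Omega^2_{P,\hor},\iota)\) and hence \(\Inn(\fr{G}) \leq \fr{G} \cap \Aut(\Omega^2_{P,\hor},\iota) = \pr{\fr{G}}\), with centrality inherited from centrality in \(\fr{G}\) exactly as you argue.
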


\begin{proof}
	Since \(\Omega^2_{P,\hor}\) is a horizontal lift of \(\Omega^2_B = \Omega^1_B \wedge \Omega^1_B\), it follows that
	\(
		\Cent_B(\Omega^2_B) \supseteq \Cent_B(\Omega^1_B)
	\), so that, by Proposition~\ref{innergaugeprop1},
	\[
		\Inn(\fr{G}) = \set{\Ad_v \given v \in \Unit(Z(B) \cap \Cent_B(\Omega^1_B)} \leq \set{\Ad_v \given v \in \Unit(Z(B) \cap \Cent_B(\Omega^2_B))} = \Inn(\Omega^2_{P,\hor},\iota),
	\]
	and hence, \(\Inn(\fr{G}) \leq \fr{G} \cap \Aut(\Omega^2_{P,\hor},\iota) \eqqcolon \pr{\fr{G}}\).
\end{proof}

We can now characterise those gauge potentials on \(P\) with respect to the first-order horizontal calculus \((\Omega^1_B,\dt{B};\Omega^1_{P,\hor})\) that extend to lifts of \(\dt{B} : \Omega_B \to \Omega_B\) to degree \(1\) left \(H\)-covariant \(\ast\)-derivations of \(\Omega_{P,\hor}\). Because we only work through degree \(2\), such extensions are completely determined by maps \(\Omega^1_{P,\hor} \to \Omega^2_{P,\hor}\) of the following form.

\begin{definition}
	Let \(\Omega\) be a left \(H\)-comodule graded \(\ast\)-algebra truncated at degree \(2\). Let \(\partial : \Omega^0 \to \Omega^1\) be a left \(H\)-comodule \(\ast\)-derivation on the \(\Omega^0\)-\(\ast\)-bimodule \(\Omega^1\). A \emph{second-order prolongation} of \(\partial\) is a left \(H\)-covariant \(\bC\)-linear map \(\partial^\prime : \Omega^1 \to \Omega^2\) satisfying 
	\begin{gather*}
		\forall a,b \in \Omega^0, \, \forall \omega \in \Omega^1, \quad \p{\partial}(a \cdot \omega \cdot b) = \nabla(a) \wedge \omega \cdot b + a \cdot \p{\partial}(\omega) \cdot b - a \cdot \omega \wedge \partial(b),\\
		\forall \alpha \in \Omega^1, \quad \p{\partial}(a)^\ast = -\p{\partial}(\alpha^\ast),
	\end{gather*}
	so that \(\partial : \Omega^0 \to \Omega^1\) extends via \(\partial^\prime : \Omega^1 \to \Omega^2\) and \(0 : \Omega^2 \to 0\) to a degree \(1\) left \(H\)-covariant \(\ast\)-derivation on the left \(H\)-comodule graded \(\ast\)-algebra \(\Omega\).
\end{definition}

We can characterise those gauge potentials on \(P\) with respect to \((\Omega^1_B,\dt{B};\Omega^1_{P,\hor})\) that suitably extend to all of \(\Omega_{P,\hor}\); this, in turn, yields a conceptually minimalistic notion of curvature compatible with the standard notion of curvature for module connections.

\begin{propositiondefinition}\label{canprol}
	We say that a gauge potential \(\nabla \in \fr{At}\) is \emph{prolongable} with respect to the second-order horizontal calculus \((\Omega_B,\dt{B};\Omega_{P,\hor})\) whenever its \emph{canonical prolongation}
	\[
		\dva{\nabla} : \Omega^1_{P,\hor} \to \Omega^2_{P,\hor}, \quad p \cdot \dt{B}(b) \cdot \p{p} \mapsto \nabla(p) \wedge \dt{B}(b) \cdot \p{p} - p \cdot \dt{B}(b) \wedge \nabla(\p{p})
	\]
	is well-defined, in which case:
	\begin{enumerate}
		\item \(\dva{\nabla}\) is the unique second-order prolongation of \(\nabla\), such that \(\rest{\pr{\nabla}}{\Omega^1_B} = \dt{B}\);
		\item the \emph{field strength} \(\bF[\nabla] : P \to \Omega^2_{P,\hor}\) of \(\nabla\) defined by
	\[
		\bF[\nabla] \coloneqq -\iu{}\,\dva{\nabla} \circ \nabla.
	\]
	is a left \(H\)-covariant \(\ast\)-derivation, such that \(\rest{\bF[\nabla]}{B} = 0\).
	\end{enumerate}
	Hence, we define the \emph{prolongable Atiyah space} \(\dva{\fr{At}}\) of \(P\) with respect to \((\Omega_B,\dt{B};\Omega_{P,\hor})\) to be the subset of all prolongable gauge potentials on \(P\) with respect to \((\Omega_B,\dt{B};\Omega_{P,\hor})\).
\end{propositiondefinition}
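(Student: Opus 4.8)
The plan is to adopt the well-definedness of the canonical prolongation $\dva{\nabla}$ as the standing hypothesis and then obtain parts \emph{(1)} and \emph{(2)} by direct computation, using only that $\nabla$ is a left $H$-covariant $\ast$-derivation with $\rest{\nabla}{B} = \dt{B}$ (so $\nabla(1_P) = 0$), that $(\Omega_B,\dt{B})$ is a \sodc{} (so $\dt{B}^2 = 0$ and $\Omega^2_B = \Omega^1_B \wedge \Omega^1_B$), and that $\Omega^1_{P,\hor} = P \cdot \Omega^1_B = P \cdot \dt{B}(B)$ by projectability and Proposition~\ref{strongprop}. Left $H$-covariance of $\dva{\nabla}$ is immediate, since the defining formula is assembled from $\nabla$, the $P$-actions on $\Omega_{P,\hor}$, and the wedge product, all of which are left $H$-covariant. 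For $\ast$-anticompatibility, I would evaluate the formula on $(p \cdot \dt{B}(b) \cdot q)^\ast = -q^\ast \cdot \dt{B}(b^\ast) \cdot p^\ast$ and simplify using the degree-$1$ graded-$\ast$ identity $(\eta \wedge \zeta)^\ast = -\zeta^\ast \wedge \eta^\ast$ together with $\dt{B}(b)^\ast = -\dt{B}(b^\ast)$ and $\nabla(q)^\ast = -\nabla(q^\ast)$, obtaining $\dva{\nabla}(\alpha^\ast) = -\dva{\nabla}(\alpha)^\ast$.

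Next I would check the twisted Leibniz rule exhibiting $\dva{\nabla}$ as a second-order prolongation of $\nabla$. For $a, b \in P$ and $\omega = p \cdot \dt{B}(b^\prime) \cdot q \in \Omega^1_{P,\hor}$, I would rewrite $a \cdot \omega \cdot b = (ap) \cdot \dt{B}(b^\prime) \cdot (qb)$, apply the defining formula, expand $\nabla(ap) = \nabla(a)\cdot p + a\cdot\nabla(p)$ and $\nabla(qb) = \nabla(q)\cdot b + q\cdot\nabla(b)$, and regroup using associativity of the $P$-bimodule and wedge structure on $\Omega_{P,\hor}$: the $\nabla(a)$-term becomes $\nabla(a)\wedge\omega\cdot b$, the $\nabla(b)$-term becomes $-a\cdot\omega\wedge\nabla(b)$, and the remaining $\nabla(p)$- and $\nabla(q)$-terms collapse to $a\cdot\dva{\nabla}(\omega)\cdot b$, which is precisely the required identity. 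Specialising the formula to $\omega = b_1\dt{B}(b_2) \in \Omega^1_B$ (taking $p = b_1$, $b^\prime = b_2$, $q = 1_P$) and using $\rest{\nabla}{B} = \dt{B}$, $\nabla(1_P) = 0$, $\dt{B}^2 = 0$ gives $\dva{\nabla}(b_1\dt{B}(b_2)) = \dt{B}(b_1)\wedge\dt{B}(b_2) = \dt{B}(b_1\dt{B}(b_2))$, so $\rest{\dva{\nabla}}{\Omega^1_B} = \dt{B}$. Uniqueness then follows because any second-order prolongation of $\nabla$ restricting to $\dt{B}$ on $\Omega^1_B$ satisfies, by its own twisted Leibniz rule applied to $p \cdot \dt{B}(b^\prime) \cdot q$ and $\dt{B}^2(b^\prime) = 0$, the very formula defining $\dva{\nabla}$; since such elements span $\Omega^1_{P,\hor}$, it coincides with $\dva{\nabla}$.

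Finally, for the field strength $\bF[\nabla] = -\iu{}\,\dva{\nabla}\circ\nabla$: it is left $H$-covariant as a composite of covariant maps, and $\rest{\bF[\nabla]}{B} = 0$ because $\nabla(b) = \dt{B}(b) \in \Omega^1_B$ for $b \in B$, so $\dva{\nabla}(\nabla(b)) = \dt{B}^2(b) = 0$. The Leibniz rule for $\bF[\nabla]$ follows by applying the twisted Leibniz rule for $\dva{\nabla}$ to $\nabla(pq) = \nabla(p)\cdot q + p\cdot\nabla(q)$: the two resulting instances generate cancelling terms $\mp\,\nabla(p)\wedge\nabla(q)$, leaving $\dva{\nabla}(\nabla(pq)) = \dva{\nabla}(\nabla(p))\cdot q + p\cdot\dva{\nabla}(\nabla(q))$, i.e.\ $\bF[\nabla](pq) = \bF[\nabla](p)\cdot q + p\cdot\bF[\nabla](q)$. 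The $\ast$-derivation property is where the normalising constant $-\iu{}$ is used: since $\nabla$ and $\dva{\nabla}$ are both $\ast$-anticompatible, the composite $\dva{\nabla}\circ\nabla$ is $\ast$-preserving, $\dva{\nabla}(\nabla(p^\ast)) = \dva{\nabla}(\nabla(p))^\ast$, and hence, by conjugate-linearity of $\ast$, $\bF[\nabla](p^\ast) = -\iu{}\,\dva{\nabla}(\nabla(p))^\ast = -\bigl(\iu{}\,\dva{\nabla}(\nabla(p))^\ast\bigr) = -\bF[\nabla](p)^\ast$. I expect no genuine obstacle: once well-definedness of $\dva{\nabla}$ is granted, everything reduces to bookkeeping with the Leibniz and $\ast$-rules for $\nabla$, the degree-$1$ graded-$\ast$ sign, and $\dt{B}^2 = 0$; the single point deserving care is matching the conjugate-linear $\ast$-structure against the $-\iu{}$ twist, which is exactly what selects $\bF[\nabla]$ rather than $\dva{\nabla}\circ\nabla$ itself as the $\ast$-derivation.
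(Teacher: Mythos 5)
Your proposal is correct and follows essentially the same route as the paper's own proof: verify the twisted Leibniz rule and the $\ast$-anticompatibility of $\dva{\nabla}$ on spanning elements $p\cdot\dt{B}(b)\cdot\p{p}$, check $\rest{\dva{\nabla}}{\Omega^1_B}=\dt{B}$ (whence $\rest{\bF[\nabla]}{B}=0$), get uniqueness from the Leibniz rule of any competing prolongation together with $\dt{B}^2=0$, and obtain the derivation and $\ast$-derivation properties of $\bF[\nabla]$ via the cancellation of the $\nabla(p)\wedge\nabla(q)$ terms and the sign bookkeeping against the $-\iu$ normalisation. The only cosmetic difference is that you phrase the Leibniz check for a general element $a\cdot\omega\cdot b$ with $\omega = p\cdot\dt{B}(b')\cdot q$, while the paper checks the two one-sided products separately; the content is identical.
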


\begin{proof}
	Let us first show that \(\dva{\nabla}\) is a secord-order prolongation of \(\nabla\). On the one hand, \(\dva{\nabla}\) is left \(H\)-covariant since \(\Omega^1_B = \coinv{H}{\Omega^1_{P,\hor}}\) and \(\nabla\) is left \(H\)-covariant. On the other, for all \(p,q_1,q_2 \in P\) and \(b \in \dt{B}(b)\), we have
	\begin{align*}
		\dva{\nabla}\mleft(q_1 \cdot (p\cdot \dt{B}(b)) \cdot q_2\mright) 
		&= \nabla(q_1 \cdot p) \wedge \dt{B}(b)	\cdot q_2 - q_1 \cdot p \cdot \dt{B}(b) \wedge \nabla(q_2)\\
		&= \left(\nabla(q_1) \cdot p + q_1 \cdot \nabla(p)\right) \wedge \dt{B}(b)	\cdot q_2 - q_1 \cdot p \cdot \dt{B}(b) \wedge \nabla(q_2)\\
		&= \nabla(q_1) \wedge (p \cdot \dt{B}(b)) \cdot q_2 + q_1 \cdot \dva{\nabla}(p \cdot \dt{B}(b)) \cdot q_2\\&\quad\quad - q_1 \cdot (p\cdot\dt{B}(b)) \wedge \nabla(q_2),
	\end{align*}
	while for all \(p \in P\) and \(b \in \dt{B}(b)\),
	\[
		\dva{\nabla}(p \cdot \dt{B}(b))^\ast = \left(\nabla(p) \wedge \dt{B}(b)\right)^\ast = -\dt{B}(b^\ast) \wedge \nabla(p^\ast) = \dva{\nabla}(\dt{B}(b^\ast) \cdot p^\ast) = -\dva{\nabla}\mleft((p \cdot \dt{B}(b))^\ast\mright).
	\]
	
	Next, observe that \(\rest{\dva{\nabla}}{\Omega^1_B} = \dt{B}\), since for all \(b_1,b_2 \in B\),
	\[
		\dva{\nabla}(b_1 \cdot \dt{B}(b_2)) = \nabla(b_1) \wedge \dt{B}(b_2) = \dt{B}(b_1) \wedge \dt{B}(b_2) = \dt{B}(b_1 \cdot \dt{B}(b_2));
	\]
	in particular, it now follows that \(\bF[\nabla]\) vanishes on \(B\); thus, if \(\nabla^\prime\) is any second-order prolongation of \(\nabla\), then for all \(p,q \in P\) and \(\beta \in \Omega^1_B\),
	\begin{align*}
		\p{\nabla}(p \cdot \beta \cdot q) = \nabla(p) \wedge \dt{B}(b) \cdot q + p \cdot \dt{B}(\beta) \cdot q - p \cdot \dt{B}(b) \wedge \nabla(q) = \pr{\nabla}(p \cdot \beta \cdot q),
	\end{align*}
	so that \(\p{\nabla} = \pr{\nabla}\).

	Finally, observe that the left \(H\)-covariant map \(\bF[\nabla] : P \to \Omega^2_{P,\hor}\) is a \(\ast\)-derivation, since for all \(p,q \in P\),
	\begin{align*}
		\bF[\nabla](pq) &= -\iu{}\dva{\nabla}\mleft(\nabla(p) \cdot q + p \cdot \nabla(q)\mright)\\ &= \bF[\nabla](p) \cdot q + \iu{} \nabla(p) \wedge \nabla(q) - \iu{} \nabla(p) \wedge \nabla(q) + \bF[\nabla](q)\\ &= \bF[\nabla](p) \cdot q + p \cdot \bF[\nabla](q),
	\end{align*}
	while for all \(p \in P\),
	\[
		\bF[\nabla](p)^\ast = \left(-\iu{}\dva{\nabla}(\nabla(p))\right)^\ast = -\iu{}\dva{\nabla}(\nabla(p)^\ast) = \iu{}\dva{\nabla}(\nabla(p^\ast)) = -\bF[\nabla](p^\ast). \qedhere
	\]	
\end{proof}

\begin{remark}
	That the field strength of a prolongable gauge potential defines a left \(H\)-covariant \(\ast\)-derivation \(P \to \Omega^2_{P,\hor}\) was essentially first observed by \DJ{}ur\dj{}evi\'{c}~\cite{Dj98}*{p.\ 101}.
\end{remark}

We can now similarly characterise those relative gauge potentials on \(P\) with respect to \((\Omega^1_B,\dt{B};\Omega^1_{P,\hor})\) that suitably extend to all of \(\Omega_{P,\hor}\); this follows, \emph{mutatis mutandis}, from the proof of Proposition-Definition~\ref{canprol} .

\begin{propositiondefinition}
	We say that \(\bA \in \fr{at}\) is \emph{prolongable} with respect to the second-order horizontal calculus \((\Omega_B,\dt{B};\Omega_{P,\hor})\) whenever its \emph{canonical prolongation}
	\[
		\dva{\bA} : \Omega^1_{P,\hor} \to \Omega^2_{P,\hor}, \quad p \cdot \dt{B}(b) \cdot \p{p} \mapsto \bA(p) \wedge \dt{B}(b) \cdot \p{p} - p\cdot \dt{B}(b) \wedge \bA(\p{p})
	\]
	is well-defined, in which case, the canonical prolongation \(\pr{\bA}\) is the unique second-order prolongation of \(\bA\), such that \(\rest{\pr{\bA}}{\Omega^1_B} = 0\). We denote by \(\pr{\fr{at}}\) the subspace of all prolongable relative gauge potentials on \(P\) with respect to \((\Omega_B,\dt{B};\Omega_{P,\hor})\).
\end{propositiondefinition}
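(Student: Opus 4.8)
The plan is to follow the proof of Proposition-Definition~\ref{canprol} essentially verbatim, with the single change that $\rest{\bA}{B} = 0$ in place of $\rest{\nabla}{B} = \dt{B}$, and omitting the final step concerning field strength, which has no analogue here. Throughout, one assumes --- as is the standing hypothesis defining prolongability --- that $\dva{\bA}$ is well-defined as a $\bC$-linear map $\Omega^1_{P,\hor} \to \Omega^2_{P,\hor}$.

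First I would check that $\dva{\bA}$ is a second-order prolongation of $\bA$ in the sense of the preceding definition. Left $H$-covariance is immediate from $\Omega^1_B = \coinv{H}{\Omega^1_{P,\hor}}$ and left $H$-covariance of $\bA$. For the graded Leibniz rule one expands, for $q_1,q_2,p \in P$ and $b \in B$,
\[
	\dva{\bA}\mleft(q_1 \cdot (p \cdot \dt{B}(b)) \cdot q_2\mright) = \bA(q_1 p) \wedge \dt{B}(b) \cdot q_2 - q_1 p \cdot \dt{B}(b) \wedge \bA(q_2),
\]
applies the Leibniz rule for $\bA$ to $\bA(q_1 p)$, and reads off the three terms $\bA(q_1) \wedge (p \cdot \dt{B}(b)) \cdot q_2$, $q_1 \cdot \dva{\bA}(p \cdot \dt{B}(b)) \cdot q_2$, and $-q_1 \cdot (p \cdot \dt{B}(b)) \wedge \bA(q_2)$, exactly as for $\dva{\nabla}$. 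The $\ast$-identity $\dva{\bA}(\alpha^\ast) = -\dva{\bA}(\alpha)^\ast$ follows from the graded sign $(-1)^{1 \cdot 1}$ in $\Omega_{P,\hor}$ together with $\bA$ being a $\ast$-derivation and $\dt{B}$ a degree $1$ $\ast$-derivation, by the same computation as in the proof of Proposition-Definition~\ref{canprol}.

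Next I would compute $\rest{\dva{\bA}}{\Omega^1_B}$: for $b_1,b_2 \in B$ one has $\dva{\bA}(b_1 \cdot \dt{B}(b_2)) = \bA(b_1) \wedge \dt{B}(b_2) - b_1 \cdot \dt{B}(b_2) \wedge \bA(1_P) = 0$ since $\rest{\bA}{B} = 0$; this is the only point at which the argument departs from that of Proposition-Definition~\ref{canprol}, where $\dt{B}$ appears instead. Finally, for uniqueness, if $\p{\bA}$ is any second-order prolongation of $\bA$ with $\rest{\p{\bA}}{\Omega^1_B} = 0$, then the graded Leibniz rule forces
\[
	\p{\bA}(p \cdot \beta \cdot q) = \bA(p) \wedge \beta \cdot q + p \cdot \p{\bA}(\beta) \cdot q - p \cdot \beta \wedge \bA(q) = \bA(p) \wedge \beta \cdot q - p \cdot \beta \wedge \bA(q) = \dva{\bA}(p \cdot \beta \cdot q)
\]
for all $p,q \in P$ and $\beta \in \Omega^1_B$, and since $\Omega^1_{P,\hor} = P \cdot \Omega^1_B$ by projectability (Proposition~\ref{strongprop}), this shows $\p{\bA} = \dva{\bA}$. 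There is no genuine obstacle here beyond bookkeeping; the one point worth flagging is that well-definedness of $\dva{\bA}$ cannot be established in general --- it is precisely the condition packaged into the word \emph{prolongable} --- so the statement is, as phrased, conditional on that hypothesis, exactly as in Proposition-Definition~\ref{canprol}, and once that is granted, everything else is routine.
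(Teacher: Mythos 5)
Your proposal is correct and matches the paper exactly: the paper itself proves this statement by declaring that it follows \emph{mutatis mutandis} from the proof of Proposition-Definition~\ref{canprol}, and your write-up is precisely that adaptation — the same Leibniz and $\ast$-identity computations, the replacement of $\rest{\nabla}{B}=\dt{B}$ by $\rest{\bA}{B}=0$ to get $\rest{\pr{\bA}}{\Omega^1_B}=0$, the same uniqueness argument via $\Omega^1_{P,\hor}=P\cdot\Omega^1_B\cdot P$, and the omission of the field-strength step. Your remark that well-definedness of $\pr{\bA}$ is the hypothesis packaged into the word \emph{prolongable} is also exactly the paper's reading.
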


It now follows that the affine action of the gauge group \(\fr{G}\) on the Atiyah space \(\fr{At}\) restricts to an affine action of the subgroup \(\pr{\fr{G}}\) on the affine subspace \(\pr{\fr{At}}\) that is compatible with canonical prolongation.

\begin{proposition}\label{fieldstrengthprop}
	Suppose that the prolongable Atiyah space \(\pr{\fr{At}}\) of \(P\) with respect to the second-order horizontal calculus \((\Omega_B,\dt{B};\Omega^1_{P,\hor})\) is non-empty. Then \(\dva{\fr{At}}\) is a \(\dva{\fr{G}}\)-invariant affine subspace of the Atiyah space \(\fr{At}\) with space of translations \(\dva{\fr{at}}\). Moreover:
	\begin{enumerate}
		\item \((\nabla \mapsto \dva{\nabla}) : \dva{\fr{At}} \to \Hom_{\bC}(\Omega^1_{P,\hor},\Omega^2_{P,\hor})\) is \(\dva{\fr{G}}\)-equivariant and affine linear with \(\dva{\fr{G}}\)-equivariant linear part \((\bA \mapsto \dva{\bA}) : \dva{\fr{at}} \to \Hom_{\bC}(\Omega^1_{P,\hor},\Omega^2_{P,\hor})\);
		\item \(\bF \coloneqq (\nabla \mapsto \bF[\nabla]) : \dva{\fr{At}} \to \Der_P(\Omega^2_{P,\hor})\) is \(\dva{\fr{G}}\)-equivariant and affine quadratic, satisfying
			\[
				\forall \nabla \in \dva{\fr{At}}, \,\forall \bA \in \dva{\fr{at}}, \quad \bF[\nabla + \bA] - \bF[\nabla] = -\iu{}\left(\dva{\nabla} \circ \bA + \dva{\bA} \circ \nabla + \dva{\bA} \circ \bA\right).
			\]
	\end{enumerate}
\end{proposition}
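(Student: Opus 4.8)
\emph{Sketch of proof.} The idea is to realise canonical prolongation as the descent of a single $\bC$-linear operator. By projectability and Proposition~\ref{strongprop}, $\Omega^1_{P,\hor}$ is spanned over $\bC$ by the elements $p\cdot\dt{B}(b)\cdot\p{p}$ with $p,\p{p}\in P$ and $b\in B$, so the $\bC$-trilinear map $(p,b,\p{p})\mapsto p\cdot\dt{B}(b)\cdot\p{p}$ induces a surjection $q\colon P\otimes_{\bC}B\otimes_{\bC}P\surj\Omega^1_{P,\hor}$; moreover $\Omega^2_{P,\hor}=\Omega^1_{P,\hor}\wedge\Omega^1_{P,\hor}$ by the proof of Proposition~\ref{projectprop}. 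For a $\bC$-linear map $D\colon P\to\Omega^1_{P,\hor}$, let $\widetilde{\Phi}_D\colon P\otimes_{\bC}B\otimes_{\bC}P\to\Omega^2_{P,\hor}$ be the $\bC$-trilinear map $p\otimes b\otimes\p{p}\mapsto D(p)\wedge\dt{B}(b)\cdot\p{p}-p\cdot\dt{B}(b)\wedge D(\p{p})$. Unwinding the definitions, $\nabla\in\fr{At}$ (resp.\ $\bA\in\fr{at}$) is prolongable precisely when $\ker q\subseteq\ker\widetilde{\Phi}_\nabla$ (resp.\ $\ker q\subseteq\ker\widetilde{\Phi}_{\bA}$), in which case the canonical prolongation is the induced map, characterised by $\dva{\nabla}\circ q=\widetilde{\Phi}_\nabla$ (resp.\ $\dva{\bA}\circ q=\widetilde{\Phi}_{\bA}$). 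Two formal properties of $D\mapsto\widetilde{\Phi}_D$ do all the work: it is additive, $\widetilde{\Phi}_{\nabla+\bA}=\widetilde{\Phi}_\nabla+\widetilde{\Phi}_{\bA}$; and, writing $\widetilde{\phi}\coloneqq\phi\otimes\id_B\otimes\phi$ for $\phi\in\fr{G}$, the identity $q\circ\widetilde{\phi}^{\,-1}=\phi_{\ast,\hor}^{-1}\circ q$ holds (since $\phi_{\ast,\hor}$ fixes $\iota(\Omega_B)$ pointwise), while for $\phi\in\dva{\fr{G}}$ the fact that $\phi_{\ast,\hor}$ is a graded $\ast$-algebra automorphism of $\Omega_{P,\hor}$ yields $\widetilde{\Phi}_{\phi\act\nabla}=\phi_{\ast,\hor}\circ\widetilde{\Phi}_\nabla\circ\widetilde{\phi}^{\,-1}$, and similarly with $\bA$ in place of $\nabla$; both are verified by a direct computation distributing $\phi_{\ast,\hor}$ across the wedge products.

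Next I would deduce the affine structure and part~(1). Fix $\nabla\in\dva{\fr{At}}$, which is non-empty by hypothesis, and let $\bA\in\fr{at}$. Since $\widetilde{\Phi}_\nabla$ vanishes on $\ker q$, the operators $\widetilde{\Phi}_{\nabla+\bA}=\widetilde{\Phi}_\nabla+\widetilde{\Phi}_{\bA}$ and $\widetilde{\Phi}_{\bA}$ agree on $\ker q$; hence $\nabla+\bA\in\dva{\fr{At}}$ if and only if $\bA\in\dva{\fr{at}}$, and in that case $\dva{(\nabla+\bA)}\circ q=\widetilde{\Phi}_\nabla+\widetilde{\Phi}_{\bA}=(\dva{\nabla}+\dva{\bA})\circ q$, so $\dva{(\nabla+\bA)}=\dva{\nabla}+\dva{\bA}$. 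As $\dva{\fr{at}}$ is a linear subspace of $\fr{at}$ by additivity of $\widetilde{\Phi}$, this exhibits $\dva{\fr{At}}$ as the $\dva{\fr{at}}$-coset of $\nabla$ inside $\fr{At}$, hence as an affine subspace with space of translations $\dva{\fr{at}}$, and shows that $\nabla\mapsto\dva{\nabla}$ is affine linear with linear part $\bA\mapsto\dva{\bA}$. For $\dva{\fr{G}}$-invariance and equivariance, let $\phi\in\dva{\fr{G}}$. From $q\circ\widetilde{\phi}^{\,-1}=\phi_{\ast,\hor}^{-1}\circ q$ and injectivity of $\phi_{\ast,\hor}^{-1}$ one gets that $\widetilde{\phi}^{\,-1}$ is bijective with $\widetilde{\phi}^{\,-1}(\ker q)=\ker q$; together with $\widetilde{\Phi}_{\phi\act\nabla}=\phi_{\ast,\hor}\circ\widetilde{\Phi}_\nabla\circ\widetilde{\phi}^{\,-1}$ this gives $\ker q\subseteq\ker\widetilde{\Phi}_{\phi\act\nabla}\iff\ker q\subseteq\ker\widetilde{\Phi}_\nabla$, so $\dva{\fr{At}}$ is $\dva{\fr{G}}$-invariant; and descending the same identity through $q$ (noting $\widetilde{\phi}^{\,-1}$ descends to $\phi_{\ast,\hor}^{-1}$ on $\Omega^1_{P,\hor}$) gives $\dva{(\phi\act\nabla)}=\phi_{\ast,\hor}\circ\dva{\nabla}\circ\phi_{\ast,\hor}^{-1}$, which is $\dva{\fr{G}}$-equivariance of $\nabla\mapsto\dva{\nabla}$. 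The same argument with $\bA$ in place of $\nabla$ shows $\dva{\fr{at}}$ is $\dva{\fr{G}}$-invariant and $\bA\mapsto\dva{\bA}$ is $\dva{\fr{G}}$-equivariant. This proves the first assertion and part~(1).

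Part~(2) is then formal. That $\bF[\nabla]=-\iu{}\,\dva{\nabla}\circ\nabla$ lies in $\Der_P(\Omega^2_{P,\hor})$ is Proposition-Definition~\ref{canprol}. For $\phi\in\dva{\fr{G}}$, using that $\nabla\circ\phi^{-1}$ is $\Omega^1_{P,\hor}$-valued,
\begin{align*}
	\bF[\phi\act\nabla]&=-\iu{}\,\dva{(\phi\act\nabla)}\circ(\phi\act\nabla)=-\iu{}\,\bigl(\phi_{\ast,\hor}\circ\dva{\nabla}\circ\phi_{\ast,\hor}^{-1}\bigr)\circ\bigl(\phi_{\ast,\hor}\circ\nabla\circ\phi^{-1}\bigr)\\
	&=-\iu{}\,\phi_{\ast,\hor}\circ\dva{\nabla}\circ\nabla\circ\phi^{-1}=\phi_{\ast,\hor}\circ\bF[\nabla]\circ\phi^{-1}=\phi\act\bF[\nabla],
\end{align*}
so $\bF$ is $\dva{\fr{G}}$-equivariant. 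Finally, for $\nabla\in\dva{\fr{At}}$ and $\bA\in\dva{\fr{at}}$, part~(1) gives $\dva{(\nabla+\bA)}=\dva{\nabla}+\dva{\bA}$, hence
\begin{align*}
	\bF[\nabla+\bA]&=-\iu{}\,(\dva{\nabla}+\dva{\bA})\circ(\nabla+\bA)\\
	&=\bF[\nabla]-\iu{}\bigl(\dva{\nabla}\circ\bA+\dva{\bA}\circ\nabla+\dva{\bA}\circ\bA\bigr),
\end{align*}
which is the asserted difference formula; with a base point fixed, this presents $\bA\mapsto\bF[\nabla+\bA]$ as a polynomial map of degree at most $2$, with constant term $\bF[\nabla]$, $\bA$-linear term $-\iu{}(\dva{\nabla}\circ\bA+\dva{\bA}\circ\nabla)$, and quadratic term $-\iu{}\,\dva{\bA}\circ\bA$, i.e.\ $\bF$ is affine quadratic. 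The only step demanding real care is the equivariance identity $\widetilde{\Phi}_{\phi\act\nabla}=\phi_{\ast,\hor}\circ\widetilde{\Phi}_\nabla\circ\widetilde{\phi}^{\,-1}$: it is precisely here that \emph{prolongability} of $\phi$ is needed, so that $\phi_{\ast,\hor}$ is a graded $\ast$-algebra automorphism fixing $\iota(\Omega_B)$ and may be distributed across the wedge products defining $\widetilde{\Phi}$; with that in hand, everything reduces to the bookkeeping above.
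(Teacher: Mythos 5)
Your proof is correct and takes essentially the same route as the paper's: the descent along \(q\colon P\otimes_{\bC}B\otimes_{\bC}P\surj\Omega^1_{P,\hor}\) is just a formal repackaging of the paper's direct verification on the spanning elements \(p\cdot\dt{B}(b)\cdot\p{p}\), and the key points — prolongability as an affine-linear condition, \(\dva{(\phi\act\nabla)}=\phi_{\ast,\hor}\circ\dva{\nabla}\circ\phi_{\ast,\hor}^{-1}\) via multiplicativity of \(\phi_{\ast,\hor}\) (where prolongability of \(\phi\) is indeed exactly what is needed), and the expansion of \(-\iu\left(\dva{\nabla}+\dva{\bA}\right)\circ(\nabla+\bA)\) — are precisely those underlying the paper's argument. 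No gaps.
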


\begin{proof}
	First, that \(\pr{\fr{At}}\) is an affine subspace of \(\fr{At}\) with space of translations \(\fr{at}\) follows from the observation that \(\pr{\fr{At}}\) is defined by an affine-linear condition whose linear part is defines \(\pr{\fr{at}}\). Next, let \(\phi \in \pr{\fr{G}}\), \(\nabla \in \pr{\fr{At}}\). Then, for all \(p, \p{p} \in P\) and \(b \in B\),
	\begin{align*}
		&(\phi \act \nabla)(p) \wedge \dt{B}(b) \cdot \p{p} - p \cdot \dt{B}(b) \wedge (\phi \act \nabla)(\p{p})\\
		&= (\phi_{\ast} \circ \nabla \circ \inv{\phi})(p) \wedge \dt{B}(b) \cdot \p{p} - p \cdot \dt{B}(b) \wedge (\phi_{\ast} \circ \nabla \circ \inv{\phi})(\p{p})\\
		&= \phi_\ast\mleft( \nabla(\phi^{-1}(p)) \wedge \phi^{-1}_{\ast}(\dt{B}(b) \cdot \p{p})  - \phi^{-1}_{\ast}(p \cdot \dt{B}(b)) \wedge \nabla(\phi^{-1}(\p{p}))\mright)\\
		&= \phi_\ast \circ \pr{\nabla} \circ \phi^{-1}_\ast \mleft(p \cdot \dt{B}(b) \cdot \p{p}\mright),
	\end{align*}
	so that \(\phi \act \nabla \in \pr{\fr{At}}\) with \(\pr{(\phi \act \nabla)} = \phi_\ast \circ \pr{\nabla} \circ \phi_\ast^{-1}\). Thus, \(\pr{\fr{At}}\) is \(\pr{\fr{G}}\)-invariant and the map \((\nabla \mapsto \pr{\nabla}) : \pr{\fr{At}} \to \Hom_{\bC}(\Omega^1_{P,\hor},\Omega^2_{P,\hor})\) is \(\pr{\fr{G}}\)-equivariant. A similary calculation shows that \(\pr{\fr{at}}\) is \(\pr{\fr{G}}\)-invariant and that \((\bA \mapsto \pr{\bA}) : \pr{\fr{at}} \to \Hom_{\bC}(\Omega^1_{P,\hor},\Omega^2_{P,\hor})\) is \(\pr{\fr{G}}\)-equivariant. Finally, our claims about \(\bF : \dva{\fr{At}} \to \Der_P^H(P,\Omega^2_{P,\hor})\) follow by straightforward calculation.
\end{proof}

While inner gauge transformations are automatically prolongable and yield inner automorphisms of the left \(H\)-comodule graded \(\ast\)-algebra \(\Omega_{P,\hor}\), the analogous statement is no longer generally true for inner relative gauge potentials. Instead, we shall consider those prolongable inner relative gauge potentials that yield inner derivations of \(\Omega_{P,\hor}\).

\begin{definition}
	Let \(\bA \in \pr{\fr{at}}\) be a prolongable gauge potential on \(P\) with respect to the second-order horizontal calculus \((\Omega_B,\dt{B};\Omega_{P,\hor})\). We say that \(\bA\) is \emph{inner} if and only if there exists a \(1\)-form \(\alpha \in (\Omega^1_B)_{\sa}\), such that
	\[
		\bA = \ad_\alpha \coloneqq (p \mapsto [\alpha,p]), \quad \pr{\bA} = \ad_\alpha \coloneqq (\omega \mapsto [\alpha,\omega]).
	\]
	We denote by \(\Inn(\pr{\fr{at}})\) the subspace of all inner prolongable relative gauge potentials on \(P\) with respect to \((\Omega_B,\dt{B};\Omega_{P,\hor})\). Thus, we define the \emph{outer prolongable Atiyah space} of \(P\) with respect to \((\Omega_B,\dt{B};\Omega_{P,\hor})\) to be the quotient affine space \(\Out(\pr{\fr{At}}) \coloneqq \pr{\fr{At}}/\Inn(\pr{\fr{at}})\) with space of translations \(\Out(\pr{\fr{at}}) \coloneqq \pr{\fr{at}}/\Inn(\pr{\fr{at}})\).
\end{definition}

We can now characterize those elements of \(\Omega^1_B\) that yield inner prolongable relative gauge potentials on \(P\) with respect to \((\Omega_B,\dt{B};\Omega_{P,\hor})\); as an upshot, we find that the field strength of a prolongable gauge potential varies \emph{linearly}---not quadratically---under translation by inner prolongable relative gauge potentials.

\begin{proposition}\label{innpot}
	Let \(\alpha \in (\Omega^1_B)_{\sa}\). Then the inner derivation \(\ad_\alpha\) defines an inner prolongable relative gauge potential on \(P\) with respect to \((\Omega_B,\dt{B};\Omega_{P,\hor})\) if and only if the \(1\)-form \(\alpha\) is central in \(\Omega_B\), in which case
	\begin{equation}\label{innpoteq}
		\forall \nabla \in \dva{\fr{At}}, \, \forall p \in P, \quad \quad \iu{}\left(\bF[\nabla+\ad_\alpha]-\bF[\nabla]\right)(p) = \ad_{\dt{B}(\alpha)}(p) \coloneqq [\dt{B}(\alpha),p].
	\end{equation}
	Thus, the map \((\alpha \mapsto \ad_\alpha) : (\Omega^1_B)_{\sa} \cap \Zent(\Omega_B) \surj \Inn(\pr{\fr{at}})\) yields a short exact sequence
\[
	0 \to (\Omega^1_B)_{\sa} \cap \Zent(\Omega_{P,\hor}) \to (\Omega^1_B)_{\sa} \cap \Zent(\Omega_B) \to \Inn(\pr{\fr{at}}) \to 0.
\]
\end{proposition}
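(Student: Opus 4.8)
The plan is to treat the three assertions in turn — the characterisation of which $\ad_\alpha$ are inner prolongable relative gauge potentials, the variation formula \eqref{innpoteq}, and the short exact sequence — all of which rest on a single wedge-product identity. Throughout, recall from Proposition~\ref{innerprop} that for $\alpha \in (\Omega^1_B)_\sa$ the inner derivation $\ad_\alpha$ is already a relative gauge potential exactly when $\alpha \in \Zent_B(\Omega^1_B)$; since any $\alpha$ central in $\Omega_B$ in particular centralises $\Omega^0_B = B$, what is at issue is prolongability and the precise shape of the canonical prolongation. The crux is that, expanding the defining formula of the canonical prolongation of $\ad_\alpha$ using only associativity of $\wedge$ and the degree-zero module action, one obtains for $p,\p{p} \in P$ and $b \in B$
\[
	\ad_\alpha(p)\wedge\dt{B}(b)\cdot\p{p} - p\cdot\dt{B}(b)\wedge\ad_\alpha(\p{p}) = [\alpha, p\cdot\dt{B}(b)\cdot\p{p}] - p\cdot\bigl(\alpha\wedge\dt{B}(b) + \dt{B}(b)\wedge\alpha\bigr)\cdot\p{p},
\]
where $[\alpha,-]$ denotes the graded commutator in $\Omega_{P,\hor}$ (the anticommutator on degree-$1$ arguments).

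From this identity the characterisation is immediate. If $\alpha$ is central in $\Omega_B$ then, given $\alpha \in \Zent_B(\Omega^1_B)$, this amounts exactly to $\alpha\wedge\dt{B}(b) = -\dt{B}(b)\wedge\alpha$ for all $b \in B$, so the anomalous term drops out and the canonical-prolongation formula descends to the manifestly well-defined map $\omega \mapsto [\alpha,\omega]$ on $\Omega^1_{P,\hor}$; hence $\ad_\alpha$ is prolongable with $\pr{(\ad_\alpha)} = \ad_\alpha$, i.e.\ $\ad_\alpha \in \Inn(\pr{\fr{at}})$. Conversely, if $\ad_\alpha \in \Inn(\pr{\fr{at}})$ then $\alpha \in \Zent_B(\Omega^1_B)$ and $\pr{(\ad_\alpha)} = [\alpha,-]$; evaluating $\rest{\pr{(\ad_\alpha)}}{\Omega^1_B} = 0$ on the generators $\dt{B}(b)$ forces $\alpha\wedge\dt{B}(b) = -\dt{B}(b)\wedge\alpha$, which together with $\alpha \in \Zent_B(\Omega^1_B)$ propagates to $\Omega^1_B = B\cdot\dt{B}(B)$ and then to $\Omega^2_B = \Omega^1_B\wedge\Omega^1_B$, so $\alpha$ is central in $\Omega_B$.

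For the variation formula I would apply Proposition~\ref{fieldstrengthprop}(2) with $\bA = \ad_\alpha$, so that $\bigl(\bF[\nabla+\ad_\alpha] - \bF[\nabla]\bigr)(p) = -\iu\bigl(\pr{\nabla}(\ad_\alpha(p)) + \pr{(\ad_\alpha)}(\nabla(p)) + \pr{(\ad_\alpha)}(\ad_\alpha(p))\bigr)$, and compute the three terms: the Leibniz rule for the second-order prolongation $\pr{\nabla}$ together with $\rest{\pr{\nabla}}{\Omega^1_B} = \dt{B}$ (Proposition-Definition~\ref{canprol}) gives $\pr{\nabla}(\ad_\alpha(p)) = [\dt{B}(\alpha),p] - [\alpha,\nabla(p)]$; the identity $\pr{(\ad_\alpha)} = [\alpha,-]$ just established gives $\pr{(\ad_\alpha)}(\nabla(p)) = [\alpha,\nabla(p)]$; and $\alpha\wedge\alpha = 0$ — a consequence of centrality of $\alpha$ in $\Omega_B$ over $\bC$ — gives $\pr{(\ad_\alpha)}(\ad_\alpha(p)) = [\alpha\wedge\alpha,p] = 0$. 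The two copies of $[\alpha,\nabla(p)]$ cancel, leaving $-\iu[\dt{B}(\alpha),p]$, and multiplying through by $\iu$ yields \eqref{innpoteq}.

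Finally, the map $\alpha \mapsto \ad_\alpha$ is visibly $\bR$-linear, and by the characterisation it carries $(\Omega^1_B)_\sa \cap \Zent(\Omega_B)$ onto $\Inn(\pr{\fr{at}})$ (surjectivity being the definition of $\Inn(\pr{\fr{at}})$ combined with the forward implication). Its kernel is the set of $\alpha$ with $\ad_\alpha = 0$ on $P$, i.e.\ $\alpha \in \Zent_P(\Omega^1_{P,\hor})$, so the kernel is $(\Omega^1_B)_\sa \cap \Zent(\Omega_B) \cap \Zent_P(\Omega^1_{P,\hor})$; it then remains to identify this with $(\Omega^1_B)_\sa \cap \Zent(\Omega_{P,\hor})$, a direct check using $\Omega^i_{P,\hor} = P\cdot\Omega^i_B$ (projectability, Propositions~\ref{strongprop} and~\ref{projectprop}): a $1$-form $\alpha \in \Omega^1_B$ lies in the graded centre of $\Omega_{P,\hor}$ iff it centralises $P$ and lies in the graded centre of $\Omega_B$, the only non-formal step being that $\alpha\wedge(q\cdot\omega) = -(q\cdot\omega)\wedge\alpha$ for $q \in P$, $\omega \in \Omega^1_B$ follows from $\alpha q = q\alpha$ and $\alpha\wedge\omega = -\omega\wedge\alpha$. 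The main obstacle I anticipate is purely the sign and grading bookkeeping: pinning down that ``central in $\Omega_B$'' is exactly the condition killing the term $p\cdot(\alpha\wedge\dt{B}(b) + \dt{B}(b)\wedge\alpha)\cdot\p{p}$, and then tracking the graded commutators and the factors of $\iu$ correctly through the field-strength computation — no single step is deep, but the cancellations are easy to mishandle.
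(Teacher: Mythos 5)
Your proposal is correct and follows essentially the same route as the paper: the same graded-commutator identity $[\alpha,p]\wedge\beta\cdot\p{p}-p\cdot\beta\wedge[\alpha,\p{p}]=[\alpha,p\cdot\beta\cdot\p{p}]-p\cdot[\alpha,\beta]\cdot\p{p}$ drives the characterisation, and the field-strength computation via Proposition~\ref{fieldstrengthprop} with the cancellation of the two $[\alpha,\nabla(p)]$ terms and $\alpha\wedge\alpha=0$ is exactly the paper's. The extra detail you supply (the converse via $\rest{\pr{(\ad_\alpha)}}{\Omega^1_B}=0$ and the identification of the kernel with $(\Omega^1_B)_{\sa}\cap\Zent(\Omega_{P,\hor})$ using $\Omega^i_{P,\hor}=P\cdot\Omega^i_B$) just fills in steps the paper leaves implicit.
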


\begin{proof}
	The first part of the claim follows from observing that for all \(p,\p{p} \in P\) and \(\beta \in \Omega^1_B\),
	\[
		[\alpha,p] \wedge \beta \cdot \p{p} - p\cdot \beta \wedge [\alpha,\p{p}] = [\alpha,p\cdot\beta\cdot\p{p}] - p \cdot [\alpha,\beta] \cdot \p{p}.
	\]
	Now, if \([\alpha,\Omega^1_B] = \set{0}\), then, by the above calculation, \(\ad_\alpha\) is prolongable with canonical prolongation \(\dva{(\ad_\alpha)} = [\alpha,\cdot]\); moreover, if \(\nabla \in \dva{\fr{At}}\), then for any \(p \in P\),
	\begin{align*}
		\iu{}\left(\bF[\nabla+\bA]-\bF[\nabla]\right)(p) &=(\dva{\nabla} \circ \ad_\alpha + \dva{(\ad_\alpha)} \circ \nabla)(p) + \dva{(\ad_\alpha)} \circ \ad_\alpha(p)\\
		&= \dva{\nabla}([\alpha,p]) + [\alpha,\nabla(p)] + [\alpha,[\alpha,p]]\\
		&= [\dt{B}(\alpha) + \alpha \wedge \alpha,p]\\
		&= [\dt{B}(\alpha),p]. \qedhere
	\end{align*}
\end{proof}

Just as in the first-order case, the affine action of the prolongable gauge group \(\fr{G}\) on the prolongable Atiyah space \(\pr{\fr{At}}\) descends further to an affine action of the outer prolongable gauge group \(\Out(\pr{\fr{G}})\) on the outer prolongable Atiyah space \(\Out(\pr{\fr{At}})\), which will yield a non-trivial invariant of the principal left \(H\)-comodule \(\ast\)-algebra \(P\) endowed with the second-order horizontal calculus \((\Omega_B,\dt{B};\Omega_{P,\hor})\). 

\begin{proposition}\label{innprop}
	The subspace \(\Inn(\pr{\fr{at}})\) of inner prolongable gauge potentials on \(P\) consists of \(\pr{\fr{G}}\)-invariant vectors, so that the affine action of \(\pr{\fr{G}}\) on \(\pr{\fr{At}}\) descends to an affine action of \(\pr{\fr{G}}\) on \(\Out(\pr{\fr{At}}) \coloneqq \pr{\fr{At}}/\Inn(\pr{\fr{at}})\). Moreover, the inner gauge group \(\Inn(\fr{G})\) acts trivially on \(\Out(\pr{\fr{at}})\), so that the affine action of \(\pr{\fr{G}}\) on \(\pr{\fr{At}}\) descends further to an affine action of \(\Out(\pr{\fr{G}})\) on \(\Out(\pr{\fr{At}})\).
\end{proposition}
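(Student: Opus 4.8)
The plan is to imitate the first-order arguments of Propositions~\ref{innerprop} and~\ref{inducedprop}, inserting the modifications forced by prolongability.

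First I would settle the $\pr{\fr{G}}$-invariance of $\Inn(\pr{\fr{at}})$. Every inner prolongable relative gauge potential is, \emph{a fortiori}, an inner relative gauge potential, so $\Inn(\pr{\fr{at}}) \subseteq \Inn(\fr{at})$; since $\Inn(\fr{at})$ consists of $\fr{G}$-invariant vectors by Proposition~\ref{innerprop} and $\pr{\fr{G}} \leq \fr{G}$, the subspace $\Inn(\pr{\fr{at}})$ is in fact pointwise fixed by $\pr{\fr{G}}$, hence certainly $\pr{\fr{G}}$-invariant. Thus the affine $\pr{\fr{G}}$-action on $\pr{\fr{At}}$ descends to $\Out(\pr{\fr{At}})$.

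For the second assertion, fix $\phi \in \Inn(\fr{G})$; recall that $\Inn(\fr{G}) \leq \pr{\fr{G}}$, and by Proposition~\ref{innergaugeprop1} write $\phi = \Ad_v$, $\phi_\ast = \Ad_v$ with $v \in \Unit(\Zent(B) \cap \Cent_B(\Omega^1_B))$. The first step is to check that $\Inn(\fr{G})$ acts trivially on $\pr{\fr{at}}$ itself, hence \emph{a fortiori} on $\Out(\pr{\fr{at}})$: for $\bA \in \pr{\fr{at}}$ and $p \in P$, the relations $\rest{\bA}{B} = 0$ and $vv^\ast = 1$ give $(\phi \act \bA)(p) = v \cdot \bA(v^\ast p v)\cdot v^\ast = \bA(p)$. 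In particular the affine $\Inn(\fr{G})$-action on $\pr{\fr{At}}$ has trivial linear part, so it acts by translations, and the computation in the proof of Proposition~\ref{inducedprop} carries over verbatim to $\nabla \in \pr{\fr{At}}$ (using $\rest{\nabla}{B} = \dt{B}$) to show that the relevant translation is $\phi \act \nabla - \nabla = \ad_{-v^\ast\dt{B}(v)}$.

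It then remains to prove $\ad_{-v^\ast\dt{B}(v)} \in \Inn(\pr{\fr{at}})$; granting this, each $\phi \in \Inn(\fr{G})$ acts as the identity on $\Out(\pr{\fr{At}})$ --- trivial linear part, trivial translation --- so the $\pr{\fr{G}}$-action on $\Out(\pr{\fr{At}})$ descends to an affine action of $\Out(\pr{\fr{G}}) = \pr{\fr{G}}/\Inn(\fr{G})$. By Proposition~\ref{innpot} it suffices to verify that $\beta \coloneqq -v^\ast\dt{B}(v)$ is a self-adjoint $1$-form central in $\Omega_B$. Self-adjointness is immediate from $\dt{B}(v^\ast)\cdot v = -v^\ast\cdot\dt{B}(v)$ (differentiate $v^\ast v = 1$) together with the $\ast$-derivation identity $\dt{B}(v)^\ast = -\dt{B}(v^\ast)$. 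Centrality is the step I expect to be the main obstacle: differentiating $vb = bv$ for $b \in B$ and using that $v$ centralises $\Omega^1_B$ yields $\dt{B}(v)\cdot b = b\cdot\dt{B}(v)$, while differentiating $v\omega = \omega v$ for $\omega \in \Omega^1_B$ and using that $v$ centralises $\Omega^2_B$ (a consequence of $v \in \Cent_B(\Omega^1_B)$) yields $\dt{B}(v)\wedge\omega + \omega\wedge\dt{B}(v) = 0$; since $v^\ast$ also lies in $\Zent(B) \cap \Cent_B(\Omega^1_B)$ and has degree $0$, these identities pass to $\beta$, so $\beta$ is central in $\Omega_B$ and Proposition~\ref{innpot} applies.
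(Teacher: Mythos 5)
Your proposal is correct and takes essentially the same route as the paper: the first assertion is inherited \emph{a fortiori} from the first-order statement, and the second is reduced, via the computation of Proposition~\ref{inducedprop} and the criterion of Proposition~\ref{innpot}, to checking that \(-v^\ast\cdot\dt{B}(v)\) is a self-adjoint \(1\)-form central in \(\Omega_B\). Your verification of centrality by differentiating the relations \(vb=bv\) and \(v\cdot\omega=\omega\cdot v\) is the same trick the paper uses (the paper applies \(\dt{B}\) to \(v\cdot\dt{B}(b)=\dt{B}(b)\cdot v\) and then invokes \(\Omega^1_B = B\cdot\dt{B}(B)\)), so the differences are only cosmetic.
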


\begin{proof}
	Since \(\Inn(\fr{at})\) consists of \(\fr{G}\)-invariant vectors, it follows \emph{a fortiori} that \(\Inn(\pr{\fr{at}})\) consists of \(\pr{\fr{G}}\)-invariant vectors. Now, let \(\phi \in \Inn(\fr{G})\), so that by Proposition~\ref{innergaugeprop1}, \(\phi = \Ad_v\) and \(\phi_\ast = \Ad_v\) for some \(v \in \Unit(Z(B) \cap \Cent_B(\Omega^1_B))\); let \(\nabla \in \pr{\fr{At}}\). By the proof of Proposition~\ref{inducedprop}, we find that \(\phi \act \nabla - \nabla = \ad_\alpha\) for \(\alpha \coloneqq - v^\ast \cdot \dt{B}(v)\), so that \(\phi \act \nabla - \nabla \in \Inn(\pr{\fr{at}})\) if and only if \(-v^\ast \cdot \dt{B}(v) \in Z(\Omega_B)\), if and only if \([\dt{B}(v),\dt{B}(B)] = \set{0}\). However, for all \(b \in B\),
	\[
		[\dt{B}(v),\dt{B}(b)] = \dt{B}(v) \wedge \dt{B}(b) + \dt{B}(b) \wedge \dt{B}(v) = \dt{B}(v \cdot \dt{B}(b)) - \dt{B}(\dt{B}(b) \cdot v) = 0,
	\]
	so that, indeed, \(\phi \act \nabla - \nabla \in \Inn(\pr{\fr{at}})\). Hence, \(\Inn(\fr{G})\) acts trivially on \(\Out(\pr{\fr{At}})\).
\end{proof}

Finally, by Proposition~\ref{innprop}, we can characterize the variation of field strength under translation by inner prolongable gauge potentials as follows---note that field strength, \emph{a priori}, is affine quadratic, not linear.

\begin{definition}
	Let \(\bA\) be an inner prolongable gauge potential on \(P\) with respect to the second-order horizontal calculus \((\Omega_B,\dt{B};\Omega_{P,\hor})\). The \emph{relative field strength} of \(\bA\) is the unique \(H\)-covariant \(\ast\)-derivation \(\bF_{\rel}[\bA] : P \to \Omega^2_{P,\hor}\) vanishing on \(B\), such that
	\[
		\forall \nabla \in \pr{\fr{At}}, \quad \bF[\nabla+\bA]-\bF[\nabla] = \bF_{\rel}[\bA].
	\]
\end{definition}

\begin{corollary}\label{relcor}
	Let \(\alpha \in (\Omega^1_B)_{\sa} \cap \Zent(\Omega_B)\), so that \(\ad_\alpha\) defines an inner prolongable gauge potential on \(P\) with respect to \((\Omega_B,\dt{B};\Omega_{P,\hor})\). Then
	\begin{equation}
		\bF_{\rel}[\ad_\alpha] = \ad_{-\iu{}\dt{B}(\alpha)}.
	\end{equation}
	Hence, the map \(\bF_{\rel} \coloneqq (\bA \mapsto \bF_{\rel}[\bA]) : \Inn(\pr{\fr{at}}) \to \Der^H(P,\Omega^2_{P,\hor})\) is \(\pr{\fr{G}}\)-equivariant, \(\Inn(\fr{G})\)-invariant, and \(\bR\)-linear.
\end{corollary}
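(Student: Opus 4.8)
The plan is to derive everything from Proposition~\ref{innpot}, since the genuine computation --- equation~\eqref{innpoteq} --- is already in hand and what remains is bookkeeping. First I would observe that~\eqref{innpoteq} rearranges, for every \(\nabla \in \pr{\fr{At}}\) and \(p \in P\), to \((\bF[\nabla+\ad_\alpha]-\bF[\nabla])(p) = -\iu{}\,[\dt{B}(\alpha),p] = \ad_{-\iu{}\dt{B}(\alpha)}(p)\), so that the difference \(\bF[\nabla+\ad_\alpha]-\bF[\nabla]\) equals the fixed map \(\ad_{-\iu{}\dt{B}(\alpha)}\) independently of \(\nabla\). To invoke the uniqueness clause in the definition of the relative field strength it then suffices to check that \(\ad_{-\iu{}\dt{B}(\alpha)}\) really lies in \(\Der^H(P,\Omega^2_{P,\hor})\) and vanishes on \(B\): because \(\alpha = \alpha^\ast\) and \(\dt{B}\) is a degree-\(1\) \(\ast\)-derivation we get \(\dt{B}(\alpha)^\ast = -\dt{B}(\alpha)\), whence \(-\iu{}\,\dt{B}(\alpha) \in (\Omega^2_B)_{\sa}\) and \(\ad_{-\iu{}\dt{B}(\alpha)}\) is a \(\ast\)-derivation in the paper's sense; it is left \(H\)-covariant since \(\dt{B}(\alpha) \in \Omega^2_B = \coinv{H}{\Omega^2_{P,\hor}}\) by Proposition~\ref{projectprop}; and it annihilates \(B\) because evaluating~\eqref{innpoteq} on \(b \in B\) and using that \(\bF[\nabla]\) and \(\bF[\nabla+\ad_\alpha]\) both vanish on \(B\) (Proposition-Definition~\ref{canprol}) forces \([\dt{B}(\alpha),b] = 0\). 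This proves \(\bF_{\rel}[\ad_\alpha] = \ad_{-\iu{}\dt{B}(\alpha)}\), and at the same time shows \(\bF_{\rel}\) is defined on all of \(\Inn(\pr{\fr{at}})\), since by Proposition~\ref{innpot} every element of \(\Inn(\pr{\fr{at}})\) is of the form \(\ad_\alpha\) for some \(\alpha \in (\Omega^1_B)_{\sa} \cap \Zent(\Omega_B)\).

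For \(\bR\)-linearity I would use that \((\Omega^1_B)_{\sa} \cap \Zent(\Omega_B)\) is a real subspace on which \(\alpha \mapsto \ad_\alpha\) is \(\bR\)-linear and surjects onto \(\Inn(\pr{\fr{at}})\) (Proposition~\ref{innpot}), so given \(\bA_i = \ad_{\alpha_i} \in \Inn(\pr{\fr{at}})\) and \(\lambda \in \bR\) one has \(\bA_1 + \lambda \bA_2 = \ad_{\alpha_1 + \lambda\alpha_2}\), and hence, from the formula just proved together with the \(\bR\)-linearity of \(\dt{B}\) and of \(m \mapsto \ad_m\), \(\bF_{\rel}[\bA_1 + \lambda\bA_2] = \ad_{-\iu{}\dt{B}(\alpha_1 + \lambda\alpha_2)} = \bF_{\rel}[\bA_1] + \lambda\,\bF_{\rel}[\bA_2]\). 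For \(\pr{\fr{G}}\)-equivariance, with \(\pr{\fr{G}}\) acting on \(\Der^H(P,\Omega^2_{P,\hor})\) by conjugation \(\phi \act D \coloneqq \phi_\ast \circ D \circ \phi^{-1}\): since \(\Inn(\pr{\fr{at}})\) consists of \(\pr{\fr{G}}\)-fixed vectors by Proposition~\ref{innprop}, it is enough to show each \(\bF_{\rel}[\bA]\) is \(\pr{\fr{G}}\)-fixed. Fixing any \(\nabla \in \pr{\fr{At}}\) and using the \(\pr{\fr{G}}\)-equivariance of \(\bF\) from Proposition~\ref{fieldstrengthprop} together with the affine-linearity of the \(\pr{\fr{G}}\)-action on \(\pr{\fr{At}}\) --- so that \(\phi \act (\nabla + \bA) = \phi \act \nabla + \phi \act \bA = \phi \act \nabla + \bA\) --- I would compute \(\phi \act \bF_{\rel}[\bA] = \phi_\ast \circ (\bF[\nabla + \bA] - \bF[\nabla]) \circ \phi^{-1} = \bF[\phi \act \nabla + \bA] - \bF[\phi \act \nabla] = \bF_{\rel}[\bA]\), the last step being the defining property of \(\bF_{\rel}[\bA]\) applied to the gauge potential \(\phi \act \nabla\). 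Finally, \(\Inn(\fr{G}) \leq \pr{\fr{G}}\) acts trivially on \(\Inn(\pr{\fr{at}})\) and, by what was just shown, on the image of \(\bF_{\rel}\), so the equivariant map \(\bF_{\rel}\) is \(\Inn(\fr{G})\)-invariant, i.e.\ it descends to a map of \(\Out(\pr{\fr{G}})\)-spaces.

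The main obstacle is essentially nil: all of the analytic content was discharged in Proposition~\ref{innpot}, and the one point requiring any care is the verification that \(\ad_{-\iu{}\dt{B}(\alpha)}\) is a bona fide object of \(\Der^H(P,\Omega^2_{P,\hor})\) vanishing on \(B\) --- which, as noted, is itself a one-line consequence of~\eqref{innpoteq} and \(\rest{\bF[\nabla]}{B} = 0\). Beyond that, the corollary is a formal repackaging of Propositions~\ref{innpot}, \ref{fieldstrengthprop}, and~\ref{innprop}.
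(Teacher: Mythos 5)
Your proposal is correct and follows exactly the route the paper intends: the identity \(\bF_{\rel}[\ad_\alpha]=\ad_{-\iu\dt{B}(\alpha)}\) is read off from \eqref{innpoteq} together with the uniqueness clause in the definition of relative field strength (the paper states the corollary without proof for precisely this reason), and your verifications that \(\ad_{-\iu\dt{B}(\alpha)}\) is a left \(H\)-covariant \(\ast\)-derivation vanishing on \(B\), plus the deductions of \(\bR\)-linearity from the surjection in Proposition~\ref{innpot} and of \(\pr{\fr{G}}\)-equivariance from Propositions~\ref{fieldstrengthprop} and~\ref{innprop}, are exactly the bookkeeping the paper leaves implicit. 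No gaps.
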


\begin{example}\label{relex}
	Let \(\phi \in \Inn(\fr{G})\); we claim that
	\[
		\forall \nabla \in \pr{\fr{At}}, \quad \bF_{\rel}[\phi \act \nabla - \nabla] = 0.
	\]
	Indeed, by Proposition~\ref{innergaugeprop1}, choose \(v \in \Unit(\Cent_B(B\oplus\Omega^1_B))\), such that \(\phi = \Ad_v\) and \(\phi_\ast = \Ad_v\), and set \(\alpha \coloneqq -v^\ast \cdot \dt{B}(v)\);	by Proposition~\ref{inducedprop}, it follows that
	\[
		\forall \nabla \in \pr{\fr{At}}, \quad \phi \act \nabla - \nabla = \ad_{\alpha}.
	\]
	Hence, by Corollary~\ref{relcor}, it suffices to show that \(\dt{B}(\alpha) = 0\). Since \(v \in \Unit(\Cent_B(B\oplus\Omega^1_B))\),
	\[
		\dt{B}(\alpha) = \dt{B}(v^\ast \cdot \dt{B}(v)) = \dt{B}(v^\ast) \wedge \dt{B}(v) = - v^\ast \cdot \dt{B}(v) \cdot v^\ast \wedge \dt{B}(v) = 0.
	\]
\end{example}

\subsection{Reconstruction of quantum principal bundles to second order}

Recall that \(H\) is a Hopf \(\ast\)-algebra; let \(P\) be a principal left \(H\)-comodule \(\ast\)-algebra with \(\ast\)-subalgebra of coinvariants \(B \coloneqq \coinv{H}{P}\). Given a second-order horizontal calculus \((\Omega_B,\dt{B};\Omega_{P,\hor})\) on \(P\) and a bicovariant \(\ast\)-differential calculus \((\Omega_H,\dt{H})\) on \(H\) whose \fodc{} is locally freeing for \(P\), we consider \emph{all} possible strongly \((H;\Omega_H,\dt{H})\)-principal \sodc{} on \(P\) inducing the second-order horizontal calculus \((\Omega_B,\dt{B};\Omega_{P,\hor})\). This will permit us to justify our notions of prolongable gauge transformation, prolongable gauge potential, and field strength by refining the equivalence of groupoids of Proposition~\ref{equiv1}. Furthermore, when \((\Omega_H,\dt{H})\) is Woronowicz's canonical prolongation~\cite{Woronowicz}, this will yield a gauge-equivariant moduli space of strongly \((H;\Omega_H,\dt{H})\)-principal \sodc{} on \(P\) inducing \((\Omega_B,\dt{B};\Omega_{P,\hor})\). Again, for relevant definitions from the basic theory of groupoids, see Appendix~\ref{Groupoids}.

Let us now fix a second-order horizontal calculus \((\Omega_B,\dt{B};\Omega_{P,\hor},\iota)\) on the principal left \(H\)-comodule \(\ast\)-algebra \(P\). Given a bicovariant \(\ast\)-differential calculus \((\Omega_H,\dt{H})\) on \(H\) whose \fodc{} is locally freeing for \(P\), we construct the groupoid analogous to the groupoid \(\cG[\Omega^1_H]\) of Definition~\ref{qpbdef} whose objects are strongly \((H;\Omega_H,\dt{H})\)-principal \sodc{} on \(P\) and whose arrows give a abstract notion of gauge transformation adapted to general \sodc{}. In what follows, let \((\Omega^{\leq 2}_H,\dt{H})\) denote the truncation of \((\Omega_H,\dt{H})\) to a bicovariant \sodc{} on \(H\).

\begin{definition}
	Let \((\Omega_H,\dt{H})\) be a bicovariant \(\ast\)-differential calculus on \(H\) whose \fodc{} is locally freeing for \(P\). We define the \emph{prolongable abstract gauge groupoid} \(\cG[\Omega^{\leq 2}_H]\) with respect to \((\Omega_H,\dt{H})\) as follows:
	\begin{enumerate}
		\item an object is a strongly \((H;\Omega_H,\dt{H})\)-principal \textsc{sodc} \((\Omega_P,\dt{P})\) on \(P\), such that the resulting strong quantum principal \(H;\Omega_H,\dt{H})\)-principal bundle \((P;\Omega_P,\dt{P})\) admits prolongable bimodule connections, and
		\[
			(\ker\ver[\dt{P}], \dt{B}(b) \mapsto \dt{P}(b))
		\]
		defines a horizontal lift of \(\Omega^1_B\) admitting a (necessarily unique) left \(H\)-covariant isomorphism 
		\[
			C[\Omega_P] : P \oplus \ker\ver[\dt{P}] \oplus \left(\ker(\ver^{2,2}[\dt{P}]) \cap \ker(\ver^{2,1}[\dt{P}])\right) \to \Omega_{P,\hor}
		\]
		of graded \(\ast\)-algebras satisfying \(\rest{C[\Omega_P]}{P} = \id\) and \(C[\Omega_P] \circ \rest{\dt{P}}{B} = \iota \circ \dt{B}\).
		\item given objects \((\Omega_1,\dt{1})\) and \((\Omega_2,\dt{2})\), an arrow \(f : (\Omega_1,\dt{1}) \to (\Omega_2,\dt{2})\) consists of a left \(H\)-covariant \(\ast\)-automorphism \(f : P \to P\), such that \(\rest{f}{B}=\id\) and
		\[
			f_\ast : \Omega^1_1 \to \Omega^1_2, \quad p \cdot \dt{1}(\p{p}) \cdot \pp{p} \mapsto f(p) \cdot \dt{2}(f(\p{p})) \cdot f(\pp{p})
		\]
		is well-defined and extends multiplicatively to a bijection \(f_\ast : \Omega_1 \to \Omega_2\);
		\item  composition of arrows is induced by composition of \(\ast\)-automorphisms of \(P\), and the identity of an object \((\Omega,\dt{})\) is given by \(\id_{(\Omega,\dt{})} \coloneqq \left(\id_P : (\Omega,\dt{}) \to (\Omega,\dt{})\right)\).
	\end{enumerate}
	Moreover, we define the star-injective homomorphism \(\mu[\Omega^{\leq 2}_H] : \cG[\Omega^{\leq 2}_H] \to \Aut(P)\) by
	\[
		\forall \left(f : (\Omega_1,\dt{1}) \to (\Omega_2,\dt{2}) \right) \in \cG[\Omega^{\leq 2}_H], \quad \mu[\Omega^{\leq 2}_H]\mleft(f : (\Omega_1,\dt{1}) \to (\Omega_2,\dt{2}) \mright) \coloneqq f.
	\]
\end{definition}

\begin{remark}
	Thus, the canonical second-order horizontal calculus \((\Omega_P,\dt{P})\) of \(\cG[\Omega^{\leq 2}_H]\) induces the second-order horizontal calculus \((\Omega_B,\dt{B};\Omega_{P,\hor})\) up to the canonical isomorphism \(C[\Omega_P]\) of graded left \(H\)-comodule \(\ast\)-algebras.
\end{remark}

Just as in the first-order case, given a bicovariant \(\ast\)-differential calculus \((\Omega_H,\dt{H})\) on \(H\), we simultaneously consider prolongable bimodule connections on all strong quantum principal \((H;\Omega^1_H,\dt{H})\)-bundles induced from \(P\) by \sodc{} in \(\Ob(\cG[\Omega^{\leq 2}_H])\). Once more, it is straightforward to check that the prolongable abstract gauge groupoid admits a canonical action on this set of bimodule connections.

\begin{propositiondefinition}
	Let \((\Omega_H,\dt{H})\) be a bicovariant \(\ast\)-differential calculus on \(H\) whose \fodc{} is locally freeing for \(P\). Let \(\cA[\Omega^{\leq 2}_H])\) be the set of all triples \((\Omega_P,\dt{P};\Pi)\), where \((\Omega_P,\dt{P}) \in \Ob(\cG[\Omega^{\leq 2}_H]\) and \(\Pi\) is a prolongable bimodule connection on the strong quantum principal \((H;\Omega_H,\dt{H})\)-bundle \((P;\Omega_P,\dt{P})\); hence, let \(p[\Omega^{\leq 2}_H] : \cA[\Omega^{\leq 2}_H] \to \Ob(\cG[\Omega^{\leq 2}_H])\) be the canonical surjection given by
	\[
		\forall (\Omega_P,\dt{P};\Pi) \in \cA[\Omega^{\leq 2}_H], \quad p[\Omega^{\leq 2}_H](\Omega_P,\dt{P};\Pi) \coloneqq (\Omega_P,\dt{P}).
	\]
	Then the \emph{abstract gauge action} is the action of \(\cG[\Omega^{\leq 2}_H]\) on \(\cA[\Omega^{\leq 2}_H]\) via \(p[\Omega^{\leq 2}_H]\) defined by
	\begin{multline}
		\forall \left(f : (\Omega_1,\dt{1}) \to (\Omega_2,\dt{2})\right) \in \cG[\Omega^{\leq 2}_H], \, \forall (\Omega_1,\dt{1};\Pi) \in p[\Omega^1_H]^{-1}(\Omega_1,\dt{1}),\\ \left(f : (\Omega_1,\dt{1}) \to (\Omega_2,\dt{2})\right) \act (\Omega_1,\dt{1};\Pi) \coloneqq (\Omega_2,\dt{2};f_\ast \circ \Pi \circ f_\ast^{-1}).
	\end{multline}
	Hence, the canonical covering \(\pi[\Omega^{\leq 2}_H] : \cG[\Omega^{\leq 2}_H] \ltimes \cA[\Omega^{\leq 2}_H] \to \cG[\Omega^{\leq 2}_H]\) is given by
	\begin{multline}
		\forall \left(\left(f : (\Omega_1,\dt{1}) \to (\Omega_2,\dt{2})\right),(\Omega_1,\dt{1};\Pi)\right) \in \cG[\Omega^{\leq 2}_H] \ltimes \cA[\Omega^{\leq 2}_H], \\ \pi[\Omega^{\leq 2}_H]\mleft(\left(f : (\Omega_1,\dt{1}) \to (\Omega_2,\dt{2})\right),(\Omega_1,\dt{1};\Pi)\mright) \coloneqq \left(f : (\Omega_1,\dt{1}) \to (\Omega_2,\dt{2})\right).
	\end{multline}
\end{propositiondefinition}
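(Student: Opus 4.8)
The plan is to show that conjugation by the induced map $f_\ast$ transports every relevant structure, so that the formula $(\Omega_1,\dt{1};\Pi)\mapsto(\Omega_2,\dt{2};f_\ast\circ\Pi\circ f_\ast^{-1})$ genuinely lands in $\cA[\Omega^{\leq 2}_H]$ and is functorial; the action groupoid and its canonical covering are then automatic. First I would record the properties of an arrow $f:(\Omega_1,\dt{1})\to(\Omega_2,\dt{2})$ in $\cG[\Omega^{\leq 2}_H]$ that we use. By definition $f_\ast:\Omega_1\to\Omega_2$ is a well-defined multiplicative bijection; a short computation using that $f$ is a left $H$-covariant $\ast$-automorphism and each $\dt{i}$ is a $\ast$-derivation shows that $f_\ast$ is in fact a left $H$-covariant isomorphism of graded $\ast$-algebras with $f_\ast\circ\dt{1}=\dt{2}\circ f_\ast$. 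Since $\rest{f}{B}=\id_B$, it restricts to the identity between the intrinsic basic subcalculi: writing $\Omega^1_{B,i}\coloneqq B\cdot\dt{i}(B)$ and $\Omega^2_{B,i}\coloneqq\Omega^1_{B,i}\wedge\Omega^1_{B,i}$, one has $f_\ast(\Omega^k_{B,1})=\Omega^k_{B,2}$ for $k=1,2$, hence $f_\ast$ carries $\Omega^1_1\wedge\Omega^1_{B,1}$ and $P\cdot\Omega^2_{B,1}$ onto $\Omega^1_2\wedge\Omega^1_{B,2}$ and $P\cdot\Omega^2_{B,2}$, respectively. Finally, Proposition~\ref{astver} applied to the vertical automorphism $f$ gives $\ver[\dt{2}]\circ f_\ast=f_{\ast,\ver}\circ\ver[\dt{1}]$ with $f_{\ast,\ver}\coloneqq\id_{\Lambda^1_H}\otimes f$ injective (and, via the second-order vertical calculus, analogous intertwiners for $\ver^{2,2}$ and $\ver^{2,1}$, which we shall not need directly); comparing both sides on the generators $p\cdot\dt{1}(\p{p})\cdot\pp{p}$ yields the functoriality $(g\circ f)_\ast=g_\ast\circ f_\ast$ together with $(\id_P)_\ast=\id$.

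With these facts in hand, set $\Pi'\coloneqq f_\ast\circ\Pi\circ f_\ast^{-1}$ for $(\Omega_1,\dt{1};\Pi)\in p[\Omega^{\leq 2}_H]^{-1}(\Omega_1,\dt{1})$. Left $H$-covariance, left and right $P$-linearity, $\ast$-preservation, and idempotency of $\Pi'$ are immediate, since each of these is conjugation-invariant and $f_\ast$ intertwines the $P$-bimodule and $\ast$-structures via $f$; moreover $\ker\Pi'=f_\ast(\ker\Pi)=f_\ast(\ker\ver[\dt{1}])=\ker\ver[\dt{2}]$ by bijectivity of $f_\ast$ and injectivity of $f_{\ast,\ver}$. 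Thus $\Pi'$ is a bimodule connection on $(P;\Omega^1_2,\dt{2})$.

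For prolongability I conjugate conditions~\ref{prolongcon1} and~\ref{prolongcon2} of Definition~\ref{prolongcon}. Because $f_\ast$ is a graded $\ast$-algebra isomorphism, $\Pi'(\alpha)\wedge\Pi'(\beta)=f_\ast\bigl(\Pi(f_\ast^{-1}\alpha)\wedge\Pi(f_\ast^{-1}\beta)\bigr)$ depends only on $\alpha\wedge\beta$ exactly because $\Pi\wedge\Pi$ is well-defined on $\Omega^2_1$, so $\Pi'\wedge\Pi'=f_\ast\circ(\Pi\wedge\Pi)\circ f_\ast^{-1}$ is well-defined, and likewise $\Pi'\wedge\id+\id\wedge\Pi'=f_\ast\circ(\Pi\wedge\id+\id\wedge\Pi)\circ f_\ast^{-1}$. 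The kernel conditions then transport because $f_\ast(\Omega^1_1\wedge\Omega^1_{B,1})=\Omega^1_2\wedge\Omega^1_{B,2}$ and $f_\ast(P\cdot\Omega^2_{B,1})=P\cdot\Omega^2_{B,2}$, using $f(P)=P$ and $f_\ast(\Omega^k_{B,1})=\Omega^k_{B,2}$. Hence $\Pi'$ satisfies Definition~\ref{prolongcon}, i.e., $(\Omega_2,\dt{2};\Pi')\in\cA[\Omega^{\leq 2}_H]$ with $p[\Omega^{\leq 2}_H](\Omega_2,\dt{2};\Pi')=(\Omega_2,\dt{2})$, as an action via the anchor $p[\Omega^{\leq 2}_H]$ requires.

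It remains to check the action axioms: $(\id_P)_\ast=\id$ shows identity arrows act trivially, and functoriality gives $(g\circ f)_\ast\circ\Pi\circ(g\circ f)_\ast^{-1}=g_\ast\circ\bigl(f_\ast\circ\Pi\circ f_\ast^{-1}\bigr)\circ g_\ast^{-1}$, so composition is respected. Consequently $\cG[\Omega^{\leq 2}_H]$ acts on $\cA[\Omega^{\leq 2}_H]$ over $p[\Omega^{\leq 2}_H]$, whence the action groupoid $\cG[\Omega^{\leq 2}_H]\ltimes\cA[\Omega^{\leq 2}_H]$ and its canonical covering $\pi[\Omega^{\leq 2}_H]$ are well-defined with the stated formulas. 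The only genuine work is the first step—verifying that the first-order recipe for $f_\ast$ automatically upgrades to a graded $\ast$-algebra isomorphism that is $\ast$-preserving, $H$-covariant, intertwines the differentials and the (second-order) vertical maps, and fixes the intrinsic basic forms; everything downstream is then a purely formal consequence of conjugating the defining conditions of Definition~\ref{prolongcon} and of Proposition~\ref{analysisthm}, exactly as for the first-order abstract gauge action.
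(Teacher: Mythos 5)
Your proposal is correct and fills in exactly the routine verification the paper leaves implicit (the text simply asserts that "it is straightforward to check" that the prolongable abstract gauge groupoid acts on this set of connections): you transport the bimodule-connection axioms and the two prolongability conditions of Definition~\ref{prolongcon} along the graded $\ast$-algebra isomorphism $f_\ast$, using $\rest{f}{B}=\id_B$ to match the intrinsic basic and horizontal forms and the intertwining with the vertical maps to match kernels, and then functoriality of $f\mapsto f_\ast$ gives the action axioms. This is the same argument the paper intends, just written out.
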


Once more, as a convenient abuse of notation, we will denote an arrow
\[
	\left(\left(f : (\Omega_1,\dt{1}) \to (\Omega_2,\dt{2})\right),(\Omega_1,\dt{1};\Pi)\right)
\]
of the action groupoid \(\cG[\Omega^{\leq 2}_H] \ltimes \cA[\Omega^{\leq 2}_H]\) by
\[
	f: (\Omega_1,\dt{1};\Pi_1) \to (\Omega_2,\dt{2};\Pi_2),
\]
where \(\Pi_2 \coloneqq \rest{f_\ast \circ \Pi \circ f_\ast^{-1}}{\Omega^1_2}\), so that, in particular,
\[
	\pi[\Omega^{\leq 2}_H]\mleft(f: (\Omega_1,\dt{1};\Pi_1) \to (\Omega_2,\dt{2};\Pi_2)\mright) \coloneqq \left(f : (\Omega_1,\dt{1}) \to (\Omega_2,\dt{2})\right).
\]

\begin{remark}
There are an obvious star-injective homomorphism \(\cG[\Omega^{\leq 2}_H] \to \cG[\Omega^{1}_H]\) and an obvious surjection \(\cA[\Omega^{\leq 2}_H] \surj \cA[\Omega^1_H]\) defined by restricting \textsc{sodc} to \textsc{fodc}, which, in turn, yield the following commutative diagram of groupoid homomorphisms:
\[
	\begin{tikzcd}
		\cG[\Omega^{\leq 2}_H] \ltimes \cA[\Omega^{\leq 2}_H]\arrow[r] \arrow[d, "\pi\lbrack\Omega^{\leq 2}_H\rbrack"']   &\cG[\Omega^1_H] \ltimes \cA[\Omega^1_H] \arrow[d,"\pi\lbrack\Omega^1_H\rbrack"] \\
		\cG[\Omega^{\leq 2}_H] \arrow[r] \arrow[d, "\mu\lbrack\Omega^{\leq 2}_H\rbrack"']  &\cG[\Omega^1_H] \arrow[d, "\mu\lbrack\Omega^1_H\rbrack"] \\
		\Aut(P) \arrow[r, hookleftarrow] &\fr{G} 
	\end{tikzcd}
\]
here, \(\fr{G}\) is the gauge group of the principal left \(H\)-comodule \(\ast\)-algebra \(P\) with respect to the first-order horizontal calculus \((\Omega^1_B,\dt{B};\Omega^1_{P,\hor})\). 
Hence, it follows that \(\ran \mu[\Omega^{\leq 2}_H] \subseteq \fr{G}\).
\end{remark}

Now, let \(\pr{\fr{G}}\) and \(\pr{\fr{At}}\) respectively denote the prolongable gauge group and prolongable Atiyah space of the principal left \(H\)-comodule \(\ast\)-algebra \(P\) with respect to the second-order horizontal calculus \((\Omega_B,\dt{B};\Omega_{P,\hor})\). Our goal is to promote the isomorphism of Proposition~\ref{isothm} to an explicit equivalence of categories that suitably refines the equivalence of Proposition~\ref{equiv1}. This will first require a characterisation of those prolongable gauge potentials on \(P\) with respect to \((\Omega_B,\dt{B};\Omega_{P,\hor})\) that correctly induce elements of \(\cA[\Omega^{\leq 2}_H]\). In what follows, let \((\Omega_{P,\ver},\dv{P})\) denote the vertical calculus on \(P\) induced by a bicovariant \(\ast\)-differential calculus on \(H\) whose \fodc{} is locally freeing for \(P\).

\begin{definition}
	Let \((\Omega^1_H,\dt{H})\) be a bicovariant \fodc{} on \(H\) that is locally freeing for \(P\). We say that a prolongable gauge potential \(\nabla \in \dva{\fr{At}}\) on \(P\) with respect to \((\Omega_B,\dt{B};\Omega_{P,\hor})\) is \emph{\((\Omega^1_H,\dt{H})\)-adapted} whenever its field strength \(\bF[\nabla]\) is given by 
	\[\bF[\nabla] = F[\nabla] \circ \dv{P}\]
	for a (necessarily unique) left \(H\)-covariant morphism \(F[\nabla] : \Omega^1_{P,\ver} \to \Omega^2_{P,\hor}\) of \(P\)-\(\ast\)-bimod\-ules; in this case, we call \(F[\nabla]\) the \emph{curvature \(2\)-form} of \(\nabla\). We define the \emph{\((\Omega^1_H,\dt{H})\)-adapted prolongable Atiyah space} of \(P\) with respect to \((\Omega_B,\dt{B};\Omega_{P,\hor})\) to be the subset \(\pr{\fr{At}}[\Omega^1_H]\)of all \((\Omega^1_H,\dt{H})\)-adapted prolongable gauge potentials on \(P\).
\end{definition}

\begin{proposition}[cf.\ \DJ{}ur\dj{}evi\'{c}~\cite{Dj10}*{Prop. 26 and 27}]\label{sodciso}
	Let \((\Omega_H,\dt{H})\) be a bicovariant \(\ast\)-differen\-tial calculus on \(H\) whose \fodc{} is locally freeing for \(P\); let \(\Lambda_H\) be the resulting graded left crossed \(H\)-module \(\ast\)-algebra of right \(H\)-coinvariant forms, and set \(\Omega_{P,\oplus} \coloneqq \Lambda_H \dvatimes \Omega_{P,\hor}\). Let \(\nabla\) be a gauge potential on \(P\) with respect to \((\Omega^1_B,\dt{B};\Omega^1_{P,\hor})\). The following are equivalent:
	\begin{enumerate}
		\item\label{sodciso1} the left \(H\)-covariant \(\ast\)-derivation \(\dt{P,\nabla} : P \to \Omega^1_{P,\oplus}\) given by
		\[
			\forall p \in P, \quad \dt{P,\nabla}(p) \coloneqq \dv{P}(p) + \nabla(p)
		\]
		admits a (necessarily unique) extension to a left \(H\)-covariant degree \(1\) \(\ast\)-derivation \[\dt{P,\nabla} : \Omega_{P,\oplus} \to \Omega_{P,\oplus},\] such that \((\Omega_{P,\oplus},\dt{P,\nabla},\Pi_\oplus) \in \Ob(\cG[\Omega^{\leq 2}_H] \ltimes \cA[\Omega^{\leq 2}_H])\), where \(\Pi_\oplus : \Omega^1_{P,\oplus} \to \Omega^1_{P,\oplus}\) is the projection onto \(\Omega^1_{P,\ver}\) along \(\Omega^1_{P,\hor}\);
		\item\label{sodciso2} the gauge potential \(\nabla\) is prolongable with respect to \((\Omega_B,\dt{B};\Omega_{P,\hor})\) and \((\Omega^1_H,\dt{H})\)-adapted.
	\end{enumerate}
	If either (and hence both) of the above conditions are satisfied, the prolongable bimodule connection \(\Pi_{\oplus}\) on the strong quantum principal \((H;\Omega_H,\dt{H})\)-bundle \((P;\Omega_{P,\oplus},\dt{P,\oplus})\) satisfies
	\[
		\nabla_{\Pi_\oplus} = \nabla, \quad F_{\Pi_\oplus} = F[\nabla].
	\]
\end{proposition}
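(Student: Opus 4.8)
The plan is to derive both implications, together with the closing identities $\nabla_{\Pi_\oplus}=\nabla$ and $F_{\Pi_\oplus}=F[\nabla]$, from Proposition~\ref{isothm}, after first observing that for the split calculus $\Omega_{P,\oplus}=\Lambda_H\dvatimes\Omega_{P,\hor}$ the structural isomorphism of that proposition degenerates to the identity. Concretely, in degrees $0,1,2$ the graded $\ast$-algebra $\Omega_{P,\oplus}$ is $P$, $\Omega^1_{P,\ver}\oplus\Omega^1_{P,\hor}$, and $\Omega^2_{P,\ver}\oplus(\Lambda^1_H\otimes\Omega^1_{P,\hor})\oplus\Omega^2_{P,\hor}$, with $\Pi_\oplus$ the projection onto the first summand in degree~$1$. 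Whenever $\dt{P,\nabla}$ extends as in~\ref{sodciso1}, its first-order part already makes $(\Omega^1_{P,\oplus},\dt{P,\nabla};\Pi_\oplus)$ an element of $\cA[\Omega^1_H]$ with $\ver[\dt{P,\nabla}]=\Pi_\oplus$ and $\nabla_{\Pi_\oplus}=\nabla$ (this is the first-order example preceding Proposition~\ref{equiv1}); a short computation from the definitions of $\ver^{2,2}[\dt{P,\nabla}]$, $\ver^{2,1}[\dt{P,\nabla}]$ and of the idempotents $\Pi_{2,2},\Pi_{2,1},\Pi_{2,0}$ then identifies all of these with the evident projections and maps attached to the triple direct-sum decomposition of $\Omega^2_{P,\oplus}$. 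Hence $\psi_{\Pi_\oplus}=\id_{\Omega_{P,\oplus}}$, the canonical isomorphism $C[\Omega_{P,\oplus}]$ is $\id$, and $\Pi_\oplus$ is automatically a prolongable bimodule connection in the sense of Definition~\ref{prolongcon}.

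Granting this, the implication \ref{sodciso1}$\Rightarrow$\ref{sodciso2} is immediate: Proposition~\ref{isothm} applied to $\Pi_\oplus$ (with $\psi_{\Pi_\oplus}=\id$) says exactly that $\rest{\dt{P,\nabla}}{P}=\dv{P}(\cdot)+\nabla_{\Pi_\oplus}(\cdot)$, that $\rest{\dt{P,\nabla}}{\Omega^1_{P,\hor}}(\alpha)=\varpi_H(\ca{\alpha}{-1})\wedge\ca{\alpha}{0}+\nabla_{\Pi_\oplus}(\alpha)$, and that $\rest{\dt{P,\nabla}}{\Lambda^1_H}(\omega)=\dt{H}(\omega)-\iu{}\,F_{\Pi_\oplus}(\omega)$. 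The first formula gives $\rest{\nabla_{\Pi_\oplus}}{P}=\nabla$; the second exhibits $\rest{\nabla_{\Pi_\oplus}}{\Omega^1_{P,\hor}}$ as a second-order prolongation of $\nabla$ restricting to $\dt{B}$ on $\Omega^1_B$, so by Proposition-Definition~\ref{canprol} the gauge potential $\nabla$ is prolongable with $\pr{\nabla}=\rest{\nabla_{\Pi_\oplus}}{\Omega^1_{P,\hor}}$; and then $\bF[\nabla]=-\iu{}\,\pr{\nabla}\circ\nabla=-\iu{}\,\rest{\nabla_{\Pi_\oplus}^2}{P}=F_{\Pi_\oplus}\circ\dv{P}$ by the defining relation $\rest{\nabla_{\Pi_\oplus}^2}{P}=\iu{}\,F_{\Pi_\oplus}\circ\dv{P}$, which shows $\nabla$ is $(\Omega^1_H,\dt{H})$-adapted with $F[\nabla]=F_{\Pi_\oplus}$. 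These computations already prove the two closing identities.

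For \ref{sodciso2}$\Rightarrow$\ref{sodciso1} I would run this in reverse: define $\dt{P,\nabla}$ on $\Omega_{P,\oplus}$ by the same three formulas, now with $F[\nabla]$ and $\pr{\nabla}$ in place of $F_{\Pi_\oplus}$ and $\nabla_{\Pi_\oplus}$, and by $0$ in degree~$2$, and then check that (i) these formulas are compatible with the relations presenting $\Omega_{P,\oplus}$ (the internal relations of $\Lambda_H$ and of $\Omega_{P,\hor}$, and the braided graded commutation relations $\alpha\wedge\omega=(-1)^{\abs{\alpha}\abs{\omega}}\ca{\alpha}{-1}\act\omega\wedge\ca{\alpha}{0}$), hence define a left $H$-covariant degree~$1$ $\ast$-derivation; (ii) $\dt{P,\nabla}^2=0$; (iii) $\dt{P,\nabla}(P)$ generates $\Omega_{P,\oplus}$; and (iv) conditions~\ref{socd1}--\ref{socd3} of Definition~\ref{sodc} hold and $\Pi_\oplus$ is a prolongable bimodule connection with $C[\Omega_{P,\oplus}]=\id$. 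Steps (iii) and (iv) are routine consequences of the direct-sum structure of $\Omega_{P,\oplus}$ together with the facts that $(\Omega^1_{P,\hor},\nabla)$ is a projectable horizontal lift of $(\Omega^1_B,\dt{B})$ (so $\Omega^1_{P,\hor}=P\cdot\dt{B}(B)$) and that $\Omega^1_{P,\ver}=P\cdot\dv{P}(P)$. For (ii) the only nontrivial verification is $\dt{P,\nabla}^2|_P=0$, and this is precisely where $(\Omega^1_H,\dt{H})$-adaptedness enters: expanding $\dt{P,\nabla}^2(p)$ with the Leibniz rule, the two $\varpi_H(\ca{p}{-1})\wedge\nabla(\ca{p}{0})$ terms cancel, the purely vertical remainder equals $\dv{P}^2(p)=0$, and the remaining pair collapses via $F[\nabla](\dv{P}(p))=\bF[\nabla](p)$ (using adaptedness and the right $P$-linearity of $F[\nabla]$) against $\pr{\nabla}(\nabla(p))=\iu{}\,\bF[\nabla](p)$ (using $\bF[\nabla]=-\iu{}\,\pr{\nabla}\circ\nabla$).

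The main obstacle is step~(i): patching the three formulas into a genuine derivation on $\Omega_{P,\oplus}$. Compatibility with the $\Lambda_H$-relations is controlled by the Maurer--Cartan equation $\dt{H}(\varpi_H(h))=\varpi_H(\cm{h}{1})\wedge\varpi_H(\cm{h}{2})$ and by equation~\eqref{deg2eq}; compatibility with the braided commutation relations and with the canonical prolongation $\pr{\nabla}$ requires the Leibniz identity defining $\pr{\nabla}$ and, crucially, the fact that $F[\nabla]$ is a morphism of $P$-$\ast$-bimodules, so that the curvature correction $-\iu{}\,F[\nabla]$ on $\Lambda^1_H$ is compatible with the $P$-action on $\Lambda^1_H$ built into those relations. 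This is the ``twisted tensor product of calculi'' computation of \DJ{}ur\dj{}evi\'{c}~\cite{Dj97}*{Thm.\ 4.12} (cf.\ Majid--Tao's reconstruction of $\Omega_H$ from $\Lambda_H$~\cite{MT}*{\S 3.1}), and it runs \emph{mutatis mutandis} to the proof of Proposition~\ref{isothm}, the projected Cartan magic formula in particular; once it is in place, uniqueness of the extension in~\ref{sodciso1} follows from the fact that $\dt{P,\nabla}(P)$ generates $\Omega_{P,\oplus}$.
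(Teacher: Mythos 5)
Your proposal is correct and follows essentially the same route as the paper: the same explicit formulas for \(\dt{P,\nabla}\) on \(\Lambda^1_H\) and \(\Omega^1_{P,\hor}\) in the direction \ref{sodciso2}\(\Rightarrow\)\ref{sodciso1}, the same use of \((\Omega^1_H,\dt{H})\)-adaptedness to force \(\dt{P,\nabla}^2|_P=0\), and the same verification of Definition~\ref{sodc} and prolongability of \(\Pi_\oplus\) via the direct-sum decomposition. Your only cosmetic deviation is routing \ref{sodciso1}\(\Rightarrow\)\ref{sodciso2} through Proposition~\ref{isothm} with \(\psi_{\Pi_\oplus}=\id\), where the paper argues directly from the Leibniz rule for the exterior covariant derivative \(\nabla_{\Pi_\oplus}\) and the defining relation \(\rest{\nabla_{\Pi_\oplus}^2}{P}=\iu\,F_{\Pi_\oplus}\circ\dv{P}\); the content is the same.
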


\begin{proof}
	First, suppose that \(\dt{P,\nabla}\) extends to \(\Omega_{P,\oplus}\) and that \((\Omega_{P,\oplus},\dt{P,\nabla};\Pi_\oplus)\) defines an object of \(\cG[\Omega^1_H] \ltimes \cA[\Omega^1_H]\), so that \(\nabla_{\Pi} = \nabla\) and
	\[
		\ver[\dt{P,\nabla}] = \Proj_1 : \Omega^1_{P,\oplus}\ = \Omega^1_{P,\ver} \oplus \Omega^1_{P,\hor} \to \Omega^1_{P,\ver}.
	\]
	For every \(p,q \in P\) and \(b \in B\),
	\begin{align*}
		\nabla_{\Pi_\oplus}(p \cdot \dt{B}(b) \cdot q) &= \nabla_{\Pi_\oplus}(p) \wedge \dt{B}(b) \cdot q + p \cdot \nabla_{\Pi_\oplus}(\dt{B}(b)) \cdot q - p\cdot\dt{B}(b) \wedge \nabla_{\Proj_1}(q)\\
		&= \nabla(p) \wedge \dt{B}(b) \cdot q - p \cdot \dt{B}(b) \wedge \nabla(q)
	\end{align*}
	so that \(\nabla\) is prolongable with \(\dva{\nabla} = \rest{\nabla_{\Pi_\oplus}}{\Omega^1_{P,\hor}}\) and \((\Omega^1_H,\dt{H})\)-adapted with \(F[\nabla] = F_{\Pi_\oplus}\).
	
	Now, suppose that \(\nabla\) is prolongable and \((\Omega^1_H,\dt{H})\)-adapted. We know that \((P;\Omega^1_{P,\oplus};\Pi_\oplus)\) defines an element of \(\cA[\Omega^1_H]\) with
	\[
		\ver[\dt{P,\nabla}] = \Proj_1 : \Omega^1_{P,\oplus} = \Omega^1_{P,\ver} \oplus \Omega^1_{P,\hor} \to \Omega^1_{P,\ver}.
	\]
	By mild abuse of notation, let \(\dt{P,\nabla}: \Omega^1_{P,\oplus} \to \Omega^2_{P,\oplus}\) be the left \(H\)-covariant degree \(1\) map defined by
	\begin{multline*}
		\forall \omega \in \Lambda^1_H, \, \forall p \in P, \, \forall \alpha \in \Omega^1_{P,\hor}, \\ \dt{P,\nabla}(\omega \cdot p + \alpha) \coloneqq  \dv{P}(\omega \cdot p) - \iu{}\,F[\nabla](\omega \cdot p)- \omega \wedge \nabla(p) + \varpi(\ca{\alpha}{-1}) \wedge \ca{\alpha}{0} + \pr{\nabla}(\alpha).
	\end{multline*}
	While a routine calculation using the properties of \(\dv{P}\), \(\nabla\) and \(\varpi\) shows that \(\dt{P,\nabla}\) as defined on \(P\) and \(\Omega^1_{P,\oplus}\), respectively, yields a \(\ast\)-derivation \(\dt{P,\nabla} : \Omega_{P,\oplus} \to \Omega_{P,\oplus}\) of degree \(1\), but it is less obvious that \(\dt{P,\nabla}^2 = 0\). However, for all \(p \in P\),
	\begin{align*}
		\dt{P,\nabla}^2(p) &= \dt{P,\nabla}(\varpi(\ca{p}{-1}) \cdot \ca{p}{0} + \nabla(p))\\
		&= \dv{P}(\varpi(\ca{p}{-1}) \cdot \ca{p}{0}) - \iu{}\,F[\nabla](\varpi(\ca{p}{-1}) \cdot p) - \varpi(\ca{p}{-1}) \wedge \nabla(\ca{p}{0})\\&\quad\quad + \varpi(\ca{\nabla(p)}{-1}) \wedge \ca{\nabla(p)}{0} + \pr{\nabla} \circ \nabla(p)\\
		&= \dv{P}(\dv{P}(p)) -\iu{}\,F[\nabla](\dv{P}(p))+\iu{}\,F[\nabla](\dv{P}(p)) = 0,
	\end{align*}
	as required. Given the explicit form of \(\ver[\dt{P,\nabla}]\) and  \(\rest{\dt{P,\nabla}}{\Omega^1_{P,\oplus}}\), one can now check that
	\begin{gather*}
		\ver^{2,2}[\dt{P,\nabla}] = \Proj_1 : \Omega^2_{P,\oplus} = \Omega^2_{P,\ver} \oplus \Lambda^1_H \otimes \Omega^1_{P,\hor} \oplus \Omega^2_{P,\hor} \to \Omega^2_{P,\ver},\\
		\ver^{2,1}[\dt{P,\nabla}] = \Proj_2 : \Omega^2_{P,\oplus} = \Omega^2_{P,\ver} \oplus \Lambda^1_H \otimes \Omega^1_{P,\hor} \oplus \Omega^2_{P,\hor} \to \Lambda^1_H \otimes \Omega^1_{P,\hor},
	\end{gather*}
	so that \((\Omega_{P,\oplus},\dt{P,\nabla})\) defines an object of \(\cG[\Omega^{\leq 2}_H]\); given the explicit form of \(\Pi_\oplus\), it now follows that the connection \(\Pi\) is totally prolongable with
	\[
		\Pi_\oplus \wedge \Pi_\oplus = \Proj_1 \oplus{} 0 \oplus{} 0, \quad \Pi_\oplus \wedge \id + \id \wedge \Pi_\oplus = \Proj_1 \oplus \Proj_2 \oplus{} 0
	\]
	with respect to the decomposition \(\Omega^2_{P,\oplus} =  \Omega^2_{P,\ver} \oplus \Lambda^1_H \otimes \Omega^1_{P,\hor} \oplus \Omega^2_{P,\hor}\).
\end{proof}

Given a bicovariant \fodc{} \((\Omega^1_H,\dt{H})\) on \(H\) that is locally freeing for \(P\), Propositions~\ref{fieldstrengthprop} and~\ref{astver} now guarantee that the \((\Omega^1_H,\dt{H})\)-adapted prolongable Atiyah space \(\pr{\fr{At}}[\Omega^{1}_H]\) is a \(\pr{\fr{G}}\)-invariant subset of the prolongable Atiyah space \(\pr{\fr{At}}\) on which the assignment of curvature \(2\)-forms defines a \(\pr{\fr{G}}\)-equivariant map.

\begin{proposition}
	Let \((\Omega^1_H,\dt{H})\) be a bicovariant \fodc{} on \(H\) that is locally freeing for \(P\). The \((\Omega^1_H,\dt{H})\)-adapted prolongable Atiyah space \(\pr{\fr{At}}[\Omega^{1}_H]\) is a \(\dva{\fr{G}}\)-invariant subset of the prolongable Atiyah space \(\pr{\fr{At}}\), and the assignment
	\[
		F \coloneqq (\nabla \mapsto F[\nabla]) : \pr{\fr{at}}_{\can}[\Omega^1_H] \to \Hom_P^H(\Omega^1_{P,\ver},\Omega^2_{P,\hor})
	\]
	defines a \(\dva{\fr{G}}\)-equivariant map.
\end{proposition}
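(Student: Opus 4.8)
The plan is to obtain everything as a formal consequence of the $\dva{\fr{G}}$-equivariance of the field strength $\bF$ established in Proposition~\ref{fieldstrengthprop}, combined with the intertwining identity $\dv{P} \circ f = f_{\ast,\ver} \circ \dv{P}$ from Proposition~\ref{astver}. The conceptual point is that $(\Omega^1_H,\dt{H})$-adaptedness of $\nabla$ means precisely that $\bF[\nabla]$ factors through $\dv{P}$, and that this factorisation property is transported by the gauge action because a prolongable gauge transformation $\phi$ intertwines $\dv{P}$ with itself along $\phi_{\ast,\ver}$ on $\Omega^1_{P,\ver}$ and along $\phi_{\ast,\hor}$ on $\Omega^2_{P,\hor}$.

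First I would fix $\phi \in \dva{\fr{G}}$ and $\nabla \in \dva{\fr{At}}[\Omega^1_H]$, so that $\bF[\nabla] = F[\nabla] \circ \dv{P}$. Recalling from Proposition~\ref{fieldstrengthprop} that $\dva{\fr{At}}$ is $\dva{\fr{G}}$-invariant and that $\bF[\phi \act \nabla] = \phi_{\ast,\hor} \circ \bF[\nabla] \circ \phi^{-1}$, and using that $(f \mapsto f_{\ast,\ver})$ is a group homomorphism (so $(\phi^{-1})_{\ast,\ver} = (\phi_{\ast,\ver})^{-1}$) together with $\dv{P} \circ \phi^{-1} = (\phi_{\ast,\ver})^{-1} \circ \dv{P}$, I compute
\[
	\bF[\phi \act \nabla] = \phi_{\ast,\hor} \circ F[\nabla] \circ \dv{P} \circ \phi^{-1} = \mleft(\phi_{\ast,\hor} \circ F[\nabla] \circ (\phi_{\ast,\ver})^{-1}\mright) \circ \dv{P}.
\]
The next step is to check that $G \coloneqq \phi_{\ast,\hor} \circ F[\nabla] \circ (\phi_{\ast,\ver})^{-1} : \Omega^1_{P,\ver} \to \Omega^2_{P,\hor}$ is again a left $H$-covariant morphism of $P$-$\ast$-bimodules: left $H$-covariance and $\ast$-preservation are immediate from the corresponding properties of the three factors, and $P$-bilinearity follows by a one-line computation from the bimodule-covariance identities $\phi_{\ast,\hor}(p \cdot \eta \cdot \p{p}) = \phi(p) \cdot \phi_{\ast,\hor}(\eta) \cdot \phi(\p{p})$ and $\phi_{\ast,\ver}(p \cdot \xi \cdot \p{p}) = \phi(p) \cdot \phi_{\ast,\ver}(\xi) \cdot \phi(\p{p})$. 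This exhibits $\phi \act \nabla$ as $(\Omega^1_H,\dt{H})$-adapted, so $\dva{\fr{At}}[\Omega^1_H]$ is $\dva{\fr{G}}$-invariant; and since $\Omega^1_{P,\ver} = P \cdot \dv{P}(P)$ forces the curvature $2$-form to be the unique such factorisation, I conclude $F[\phi \act \nabla] = \phi_{\ast,\hor} \circ F[\nabla] \circ (\phi_{\ast,\ver})^{-1}$, i.e.\ $F$ is $\dva{\fr{G}}$-equivariant for the conjugation action $\phi \act G = \phi_{\ast,\hor} \circ G \circ (\phi_{\ast,\ver})^{-1}$ on $\Hom_P^H(\Omega^1_{P,\ver},\Omega^2_{P,\hor})$.

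I do not expect a genuine obstacle: the argument is entirely a matter of chaining together the equivariance and intertwining properties already in hand. The only point requiring any care is the appeal to uniqueness of the curvature $2$-form — that is, to local freeness $\Omega^1_{P,\ver} = P \cdot \dv{P}(P)$ — in order to upgrade the displayed factorisation of $\bF[\phi \act \nabla]$ to the equality of curvature $2$-forms, rather than merely reproving the (already known) equivariance of $\bF$.
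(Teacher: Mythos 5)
Your argument is correct and is exactly the one the paper intends: the paper leaves this proposition without a written proof, pointing instead to Propositions~\ref{fieldstrengthprop} and~\ref{astver}, and your chain \(\bF[\phi\act\nabla]=\phi_{\ast,\hor}\circ F[\nabla]\circ(\phi_{\ast,\ver})^{-1}\circ\dv{P}\) together with the uniqueness of the factorisation through \(\dv{P}\) (via \(\Omega^1_{P,\ver}=P\cdot\dv{P}(P)\)) is the same computation the paper carries out in the first-order setting for \(\fr{at}[\Omega^1_H]\) in the proof of Theorem~\ref{firstorderclassification}. Note only that the displayed domain \(\pr{\fr{at}}_{\can}[\Omega^1_H]\) in the statement is a typo for \(\pr{\fr{At}}[\Omega^1_H]\), which is how you (correctly) read it.
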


\begin{remark}
	The \(\pr{\fr{G}}\)-invariant subset \(\pr{\fr{At}}[\Omega^{1}_H]\) of the affine space \(\pr{\fr{At}}\) is defined by an affine-quadratic constraint, and as such can be viewed as a \(\pr{\fr{G}}\)-invariant affine quadric subset of \(\pr{\fr{At}}\). In general, one should not expect \(\pr{\fr{At}}[\Omega^1_H]\) to be an affine-linear subspace of \(\pr{\fr{At}}\).
\end{remark}

\begin{remark}
	It follows that the resulting action groupoid \(\pr{\fr{G}} \ltimes \pr{\fr{At}}[\Omega^1_H]\) defines a subgroupoid of the action groupoid \(\fr{G} \ltimes \fr{At}\), where \(\fr{G}\) and \(\fr{At}\) are, respectively, the gauge group and Atiyah space of \(P\) with respect to the first-order horizontal calculus \((\Omega^1_B,\dt{B};\Omega^1_{P,\hor})\).
\end{remark}

Suppose that \((\Omega^1_H,\dt{H})\) is a bicovariant \fodc{} on \(H\) that is locally freeing for \(P\), and let \((\Omega_H,\dt{H})\) be any bicovariant prolongation of \((\Omega^1_H,\dt{H})\). We now promote Proposition~\ref{isothm} to an explicit equivalence of categories---refining the equivalence of Proposition~\ref{equiv1}---that realises the concrete action groupoid \(\pr{\fr{G}} \ltimes \pr{\fr{At}}[\Omega^1_H]\), which is independent of the choice of bicovariant prolongation \((\Omega_H,\dt{H})\), as a deformation retraction of the action groupoid \(\cG[\Omega^{\leq 2}_H] \ltimes \cA[\Omega^{\leq 2}_H]\) of the abstract gauge action on prolongable bimodule connections. This rigorously justifies identifying the action of \(\pr{\fr{G}}\) on \(\pr{\fr{At}}[\Omega^1_H]\) as the affine action of global gauge transformations on principal connections for the quantum principal \(H\)-bundle \(P\) with respect to the bicovariant \(\ast\)-differential calculus \((\Omega_H,\dt{H})\) and the second-order horizontal calculus \((\Omega_B,\dt{B};\Omega_{P,\hor})\). This follows from the proof of Proposition~\ref{equiv1}, \emph{mutatis mutandis}, together with Proposition~\ref{sodciso} and the proof of Proposition~\ref{isothm}.

\begin{proposition}\label{sodcequiv}
	Let \((\Omega_H,\dt{H})\) be a bicovariant \(\ast\)-differential calculus on \(H\) whole \fodc{} is locally freeing for \(P\). The groupoid homomorphism
	\[
		\Sigma[\Omega^{\leq 2}_H] : \pr{\fr{G}} \ltimes \pr{\fr{At}}[\Omega^{1}_H] \to \cG[\Omega^{\leq 2}_H] \ltimes \cA[\Omega^{\leq 2}_H]
	\]
	given by
	\begin{multline}
		\forall (\phi,\nabla) \in \pr{\fr{G}} \ltimes \pr{\fr{At}}[\Omega^{1}_H], \\ \Sigma[\Omega^{\leq 2}_H](\phi, \nabla) \coloneqq \left(\phi : (\Omega_{P,\oplus},\dt{P,\nabla},\Pi_{\oplus})  \to (\Omega_{P,\oplus},\dt{P,\phi\act\nabla},\Pi_{\oplus})\right)
	\end{multline}
	is an equivalence of groupoids with left inverse and homotopy inverse \(A[\Omega^{\leq 2}_H]\) given by
	\begin{multline}
		\forall \left(f : (\Omega_1,\dt{1};\Pi_1) \to (\Omega_2,\dt{2};\Pi_2)\right) \in \cG[\Omega^{\leq 2}_H] \ltimes \cA[\Omega^{\leq 2}_H],\\
				A[\Omega^{\leq 2}_H]\mleft(f : (\Omega_1,\dt{1};\Pi_1) \to (\Omega_2,\dt{2};\Pi_2)\mright) \coloneqq \left(f, C[\Omega_1] \circ \nabla_{\Pi_1}\right).
	\end{multline}
	In particular, there exists a homotopy \(\eta[\Omega^{\leq 2}_H] : \id_{\cG[\Omega^{\leq 2}_H] \ltimes \cA[\Omega^{\leq 2}_H]} \Rightarrow \Sigma[\Omega^{\leq 2}_H] \circ A[\Omega^{\leq 2}_H]\), which is necessarily unique, such that
	\begin{equation}
		\forall (\Omega_P,\dt{P};\Pi) \in \Ob(\cG[\Omega^{\leq 2}_H] \ltimes \cA[\Omega^{\leq 2}_H]), \quad \mu[\Omega^{\leq 2}_H] \circ \pi[\Omega^{\leq 2}_H]\mleft(\eta[\Omega^{\leq 2}_H]_{(\Omega_P,\dt{P};\Pi)}\mright) = \id_P.
	\end{equation}
\end{proposition}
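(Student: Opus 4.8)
The plan is to transcribe the proof of Proposition~\ref{equiv1} essentially verbatim at the level of the underlying \fodc{} and vertical $\ast$-automorphisms, and to control the passage from degree $1$ to degree $2$ entirely through Propositions~\ref{sodciso} and~\ref{isothm}. Throughout, fix the graded left $H$-comodule $\ast$-algebra $\Omega_{P,\oplus} \coloneqq \Lambda_H \dvatimes \Omega_{P,\hor}$ and the bimodule projection $\Pi_\oplus : \Omega^1_{P,\oplus} \to \Omega^1_{P,\oplus}$ onto $\Omega^1_{P,\ver}$ along $\Omega^1_{P,\hor}$, exactly as in Proposition~\ref{sodciso}.

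First I would check that $\Sigma[\Omega^{\leq 2}_H]$ is well-defined. That $(\Omega_{P,\oplus},\dt{P,\nabla},\Pi_\oplus)$ is an object of $\cG[\Omega^{\leq 2}_H] \ltimes \cA[\Omega^{\leq 2}_H]$ for $\nabla \in \pr{\fr{At}}[\Omega^1_H]$ is precisely Proposition~\ref{sodciso}, and $\phi \act \nabla$ lies again in $\pr{\fr{At}}[\Omega^1_H]$ by Proposition~\ref{fieldstrengthprop} (which gives $\pr{\fr{G}}$-invariance of $\pr{\fr{At}}$ and the naturality of field strengths) together with the transformation law $F[\phi\act\nabla] = \phi_{\ast,\hor} \circ F[\nabla] \circ \phi_{\ast,\ver}^{-1}$ for curvature $2$-forms. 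For the arrow part, a prolongable gauge transformation $\phi$ induces the degree-$2$ graded $\ast$-algebra automorphism $\phi_{\ast,\hor}$ of $\Omega_{P,\hor}$, while $\phi_{\ast,\ver} = \id_{\Lambda_H} \otimes \phi$ extends canonically (by the definition of $\dvatimes$) to a graded automorphism of $\Lambda_H \dvatimes P$; these assemble to a left $H$-covariant graded $\ast$-algebra automorphism $\phi_\ast$ of $\Omega_{P,\oplus}$ — the braided commutation relations are preserved because $\phi$ is $H$-covariant — and one checks, using the explicit degree-$2$ formula for $\dt{P,\nabla}$ extracted from the proof of Proposition~\ref{sodciso} and the transformation laws for $\nabla$ and $F[\nabla]$, that $\phi_\ast \circ \dt{P,\nabla} = \dt{P,\phi\act\nabla} \circ \phi_\ast$ and $\phi_\ast \circ \Pi_\oplus = \Pi_\oplus \circ \phi_\ast$. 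That $\Sigma[\Omega^{\leq 2}_H]$ is a groupoid homomorphism then follows, as in the first-order case, from the facts that $\phi \mapsto \phi_{\ast,\ver}$ and $\phi \mapsto \phi_{\ast,\hor}$ are group homomorphisms into the relevant graded automorphism groups.

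Next I would verify that $A[\Omega^{\leq 2}_H]$ is well-defined: given an arrow $f : (\Omega_1,\dt{1};\Pi_1) \to (\Omega_2,\dt{2};\Pi_2)$, the exterior covariant derivative $\nabla_{\Pi_i}$ of the Proposition-Definition preceding Proposition~\ref{isothm} is a prolongable gauge potential whose field strength factors as $\iu{}\,F_{\Pi_i} \circ \dv{P}$, so $C[\Omega_i] \circ \nabla_{\Pi_i} \in \pr{\fr{At}}[\Omega^1_H]$; that $f \in \pr{\fr{G}}$ — because $f_\ast$ extends multiplicatively in $\cG[\Omega^{\leq 2}_H]$, hence so does $f_{\ast,\hor} = C[\Omega_2] \circ f_\ast \circ C[\Omega_1]^{-1}$ — and that $C[\Omega_2] \circ \nabla_{\Pi_2} = f \act (C[\Omega_1] \circ \nabla_{\Pi_1})$ are then the same computations as in the proof of Proposition~\ref{equiv1}. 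That $A[\Omega^{\leq 2}_H]$ is a groupoid homomorphism and a left inverse of $\Sigma[\Omega^{\leq 2}_H]$ — via $C[\Omega_{P,\oplus}] = \id_{\Omega_{P,\hor}}$ by uniqueness and $\nabla_{\Pi_\oplus} = \nabla$ from Proposition~\ref{sodciso} — is immediate. The homotopy $\eta[\Omega^{\leq 2}_H]$ is built from the isomorphism $\psi_\Pi$ of Proposition~\ref{isothm}: its component at $(\Omega_P,\dt{P};\Pi)$ is $\id_P$ with $(\id_P)_\ast \coloneqq \psi_\Pi$, which is precisely a well-defined arrow $(\Omega_P,\dt{P};\Pi) \to A[\Omega^{\leq 2}_H] \circ \Sigma[\Omega^{\leq 2}_H](\Omega_P,\dt{P};\Pi)$ in $\cG[\Omega^{\leq 2}_H] \ltimes \cA[\Omega^{\leq 2}_H]$, since Proposition~\ref{isothm} identifies $\psi_\Pi$ as a left $H$-covariant graded $\ast$-algebra isomorphism intertwining $\dt{P}$ with $\dt{P,\nabla_\Pi}$ and satisfying $\psi_\Pi \circ \Pi = \Pi_\oplus \circ \psi_\Pi$; naturality and the uniqueness statement normalised by $\mu[\Omega^{\leq 2}_H] \circ \pi[\Omega^{\leq 2}_H](\eta[\Omega^{\leq 2}_H]) = \id_P$ then follow from star-injectivity of $\mu[\Omega^{\leq 2}_H]$, exactly as in Proposition~\ref{equiv1}.

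The main obstacle I anticipate is the degree-$2$ intertwining check in the first step — confirming that $\phi_\ast$ conjugates the twisted differential $\dt{P,\nabla}$, whose degree-$2$ part mixes $\varpi_H$, $\nabla$, $F[\nabla]$, and $\dv{P}$ according to the explicit formula in the proof of Proposition~\ref{sodciso}, into $\dt{P,\phi\act\nabla}$; this hinges on the transformation law for curvature $2$-forms under $\pr{\fr{G}}$ and on keeping the braided commutation relations of $\Lambda_H \dvatimes \Omega_{P,\hor}$ straight. A secondary point requiring care is that $A[\Omega^{\leq 2}_H]$ genuinely lands in $\pr{\fr{At}}[\Omega^1_H]$ and not merely in $\fr{At}$, which is where the prolongability and $(\Omega^1_H,\dt{H})$-adaptedness supplied by the exterior covariant derivative and its curvature are used. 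Everything else is bookkeeping that transcribes the first-order argument.
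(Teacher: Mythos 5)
Your proposal is correct and follows essentially the same route as the paper, which proves Proposition~\ref{sodcequiv} by transcribing the proof of Proposition~\ref{equiv1} \emph{mutatis mutandis}, using Proposition~\ref{sodciso} to verify that \(\Sigma[\Omega^{\leq 2}_H]\) lands in \(\cG[\Omega^{\leq 2}_H] \ltimes \cA[\Omega^{\leq 2}_H]\) and the isomorphism \(\psi_\Pi\) from the proof of Proposition~\ref{isothm} (transported along \(C[\Omega_P]\)) to supply the components of the homotopy \(\eta[\Omega^{\leq 2}_H]\). You also correctly isolate the two genuinely second-order checks—the degree-\(2\) intertwining of the twisted differentials under \(\phi_\ast\) and the fact that \(A[\Omega^{\leq 2}_H]\) lands in \(\pr{\fr{At}}[\Omega^1_H]\) via prolongability and \((\Omega^1_H,\dt{H})\)-adaptedness of \(\nabla_\Pi\)—which is exactly where the cited propositions are invoked.
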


This justifies the identification of the prolongable gauge group \(\pr{\fr{G}}\) as \emph{the} gauge group of the quantum principal \(H\)-bundle \(P\) with respect to the bicovariant \(\ast\)-differen\-tial calculus \((\Omega_H,\dt{H})\) on \(H\) and the second-order horizontal calculus \((\Omega_B,\dt{B};\Omega_{P,\hor})\).

\begin{corollary}
	Let \((\Omega_H,\dt{H})\) be a bicovariant \(\ast\)-differential calculus on \(H\) whole \fodc{} is locally freeing for \(P\). The star-injective groupoid homomorphism \(\mu[\Omega^{\leq 2}_H] : \cG[\Omega^{\leq 2}_H] \to \Aut(P)\) has range \(\pr{\fr{G}}\), so that, after restriction of codomain,
	\[
		\mu[\Omega^{\leq 2}_H] : \cG[\Omega^{\leq 2}_H] \to \pr{\fr{G}}, \quad \mu[\Omega^{\leq 2}_H] \circ \pi[\Omega^{\leq 2}_H] : \cG[\Omega^{\leq 2}_H] \ltimes \cA[\Omega^{\leq 2}_H] \to \pr{\fr{G}}
	\]
	both define coverings of groupoids.
\end{corollary}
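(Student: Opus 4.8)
The plan is to deduce the corollary from Proposition~\ref{sodcequiv} in exactly the manner in which Corollary~\ref{equiv1cor} is deduced from Proposition~\ref{equiv1}. Once one knows that the range of the star-injective homomorphism $\mu[\Omega^{\leq 2}_H]$ is precisely $\pr{\fr{G}}$, the covering property of $\mu[\Omega^{\leq 2}_H] : \cG[\Omega^{\leq 2}_H] \to \pr{\fr{G}}$ amounts to star-surjectivity onto $\pr{\fr{G}}$, which Proposition~\ref{sodcequiv} supplies through the homotopy inverse $A[\Omega^{\leq 2}_H]$ and the homotopy $\eta[\Omega^{\leq 2}_H]$; the covering property of $\mu[\Omega^{\leq 2}_H] \circ \pi[\Omega^{\leq 2}_H]$ then follows because $\pi[\Omega^{\leq 2}_H]$ is a covering and a composite of coverings is a covering. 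So the real task is to identify $\ran\mu[\Omega^{\leq 2}_H]$ with $\pr{\fr{G}}$.

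For the inclusion $\ran\mu[\Omega^{\leq 2}_H] \subseteq \pr{\fr{G}}$, I would take an arrow $f : (\Omega_1,\dt{1}) \to (\Omega_2,\dt{2})$ of $\cG[\Omega^{\leq 2}_H]$ and argue that $f$, regarded as a vertical $\ast$-automorphism of $P$, already lies in $\pr{\fr{G}}$. Because $\rest{f}{B} = \id$, the multiplicative bijection $f_\ast : \Omega_1 \to \Omega_2$ carries $\dt{1}(B)$ onto $\dt{2}(B)$, hence restricts to a left $H$-covariant graded $\ast$-algebra isomorphism between the horizontal graded $\ast$-subalgebras $P \oplus \ker\ver[\dt{1}] \oplus (\ker\ver^{2,2}[\dt{1}] \cap \ker\ver^{2,1}[\dt{1}])$ and $P \oplus \ker\ver[\dt{2}] \oplus (\ker\ver^{2,2}[\dt{2}] \cap \ker\ver^{2,1}[\dt{2}])$. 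Conjugating by the canonical graded $\ast$-algebra isomorphisms then yields a left $H$-covariant graded $\ast$-algebra automorphism $C[\Omega_2] \circ f_\ast \circ C[\Omega_1]^{-1}$ of $\Omega_{P,\hor}$ that restricts to $f$ on $\Omega^0_{P,\hor} = P$; and the identities $C[\Omega_i] \circ \rest{\dt{i}}{B} = \iota \circ \dt{B}$ together with $\rest{f}{B} = \id$ show it fixes $\iota(\dt{B}(B))$, hence all of the generated subalgebra $\iota(\Omega_B)$. By the very definition of $\pr{\fr{G}}$ this means $f \in \pr{\fr{G}}$ (and its degree-$1$ part is the element of $\fr{G}$ already produced by restriction to first order).

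For the reverse inclusion, I would fix $\phi \in \pr{\fr{G}}$ and any object $(\Omega_P,\dt{P})$ of $\cG[\Omega^{\leq 2}_H]$, which exists precisely when $\pr{\fr{At}}[\Omega^1_H] \ne \emptyset$ --- the only case in which the corollary has content. Choosing a prolongable bimodule connection $\Pi$ on $(P;\Omega_P,\dt{P})$ gives $(\Omega_P,\dt{P};\Pi) \in \cA[\Omega^{\leq 2}_H]$; write $\nabla \coloneqq C[\Omega_P] \circ \nabla_\Pi \in \pr{\fr{At}}[\Omega^1_H]$ for the gauge-potential part of $A[\Omega^{\leq 2}_H]\mleft(\id_P : (\Omega_P,\dt{P};\Pi) \to (\Omega_P,\dt{P};\Pi)\mright)$. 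By Proposition~\ref{sodcequiv} the homotopy component $\eta[\Omega^{\leq 2}_H]_{(\Omega_P,\dt{P};\Pi)}$ is an arrow from $(\Omega_P,\dt{P};\Pi)$ to $\Sigma[\Omega^{\leq 2}_H](\id_P,\nabla)$ whose image under $\mu[\Omega^{\leq 2}_H] \circ \pi[\Omega^{\leq 2}_H]$ is $\id_P$, while $\pi[\Omega^{\leq 2}_H] \circ \Sigma[\Omega^{\leq 2}_H](\phi,\nabla)$ is an arrow of $\cG[\Omega^{\leq 2}_H]$ out of $(\Omega_{P,\oplus},\dt{P,\nabla})$ with $\mu[\Omega^{\leq 2}_H]$-image $\phi$; composing the two (using that $\mu[\Omega^{\leq 2}_H]$ is a homomorphism) produces an arrow of $\cG[\Omega^{\leq 2}_H]$ out of $(\Omega_P,\dt{P})$ with $\mu[\Omega^{\leq 2}_H]$-image $\phi$. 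This gives $\ran\mu[\Omega^{\leq 2}_H] = \pr{\fr{G}}$ and, verbatim, star-surjectivity of $\mu[\Omega^{\leq 2}_H]$ onto $\pr{\fr{G}}$; together with the built-in star-injectivity, this makes $\mu[\Omega^{\leq 2}_H] : \cG[\Omega^{\leq 2}_H] \to \pr{\fr{G}}$ a covering, and hence so is its composite with the covering $\pi[\Omega^{\leq 2}_H]$.

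The main obstacle is the first inclusion: one must check that an \emph{abstract} second-order gauge transformation is automatically prolongable in the sense of \S\ref{sec3}, i.e., that conjugating $f_{\ast,\hor}$ by the canonical isomorphisms upgrades the degree-$1$ bimodule automorphism to an honest graded $\ast$-algebra automorphism of $\Omega_{P,\hor}$ respecting $\iota$. This is not formal: it relies on $C[\Omega_P]$ being an isomorphism of \emph{graded} $\ast$-algebras rather than merely of degree-$1$ $P$-$\ast$-bimodules, and on the multiplicativity of $f_\ast$ --- both of which are part of the respective definitions --- after which the verification is routine bookkeeping. Everything else is a transcription of the groupoid-theoretic argument underlying Corollary~\ref{equiv1cor}.
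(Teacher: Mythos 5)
Your proposal is correct and follows essentially the same route as the paper, which states the corollary as an immediate consequence of Proposition~\ref{sodcequiv}: the inclusion \(\ran\mu[\Omega^{\leq 2}_H] \subseteq \pr{\fr{G}}\) is exactly the content of the well-definedness of \(A[\Omega^{\leq 2}_H]\) (the second-order analogue of the corresponding step in the proof of Proposition~\ref{equiv1}), and the reverse inclusion together with star-surjectivity comes from \(\Sigma[\Omega^{\leq 2}_H]\) and the homotopy \(\eta[\Omega^{\leq 2}_H]\), just as Corollary~\ref{equiv1cor} is deduced from Proposition~\ref{equiv1}. Your flagging of the degenerate case \(\pr{\fr{At}}[\Omega^1_H] = \emptyset\) is a reasonable reading of the standing nonemptiness assumption.
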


\begin{remark}
	The groupoid equivalences \(\Sigma[\Omega^1_H]\), \(\Sigma[\Omega^{\leq 2}_H]\), \(A[\Omega^1_H]\), and \(A[\Omega^{\leq 2}_H]\), the subgroupoid inclusion \(\pr{\fr{G}} \ltimes \pr{\fr{At}}[\Omega^1_H] \inj \fr{G} \ltimes \fr{At}\), and the canonical star-injective groupoid homomorphisms \(\cG[\Omega^{\leq 2}_H] \ltimes \cA[\Omega^{\leq 2}_H] \to \cG[\Omega^1_H] \ltimes \cA[\Omega^1_H]\) and \(\cG[\Omega^{\leq 2}_H] \to \cG[\Omega^1_H]\) fit into the following commutative diagrams in the category of groupoids:
	\[
		\begin{tikzcd}[column sep=small]
			\pr{\fr{G}} \ltimes \pr{\fr{At}}[\Omega^{1}_H] \arrow[r, hookrightarrow]  \arrow[d,"\Sigma\lbrack\Omega^{\leq 2}_H\rbrack"'] & \fr{G} \ltimes \fr{At} \arrow[d,"\Sigma\lbrack\Omega^1_H\rbrack"] \\ \cG[\Omega^{\leq 2}_H] \ltimes \cA[\Omega^{\leq 2}_H] \arrow[r]  & \cG[\Omega^1_H] \ltimes \cA[\Omega^1_H]
		\end{tikzcd}\hfill
		\begin{tikzcd}[column sep=small]
			\pr{\fr{G}} \ltimes \pr{\fr{At}}[\Omega^1_H] \arrow[r, hookrightarrow]  \arrow[d,leftarrow,"A\lbrack\Omega^{\leq 2}_H\rbrack"'] & \fr{G} \ltimes \fr{At} \arrow[d,leftarrow,"A\lbrack\Omega^1_H\rbrack"]\\ \cG[\Omega^{\leq 2}_H] \ltimes \cA[\Omega^{\leq 2}_H] \arrow[r]  & \cG[\Omega^1_H] \ltimes \cA[\Omega^1_H]
		\end{tikzcd}
	\]
	Moreover, the homotopies \[\eta[\Omega^1_H] : \id_{\cG[\Omega^1_H] \ltimes \cA[\Omega^1_H]} \Rightarrow \Sigma[\Omega^1_H] \circ A[\Omega^1_H], \quad \eta[\Omega^{\leq 2}_H] : \id_{\cG[\Omega^{\leq 2}_H] \ltimes \cA[\Omega^{\leq 2}_H]} \Rightarrow \Sigma[\Omega^{\leq 2}_H] \circ A[\Omega^{\leq 2}_H]\]
	and the natural transformations
	\[
		\id_{\cG[\Omega^{\leq 2}_H] \ltimes \cA[\Omega^{\leq 2}_H]} \Rightarrow \id_{\cG[\Omega^1_H] \ltimes \cA[\Omega^1_H]}, \quad \Sigma[\Omega^{\leq 2}_H] \circ A[\Omega^{\leq 2}_H] \Rightarrow \Sigma[\Omega^1_H] \circ A[\Omega^1_H]
	\]
	induced by the subgroupoid inclusion \(\pr{\fr{G}} \ltimes \pr{\fr{At}}[\Omega^1_H] \inj \fr{G} \ltimes \fr{At}\) and the canonical star-injective groupoid homomorphisms \(\cG[\Omega^{\leq 2}_H] \ltimes \cA[\Omega^{\leq 2}_H] \to \cG[\Omega^1_H] \ltimes \cA[\Omega^1_H]\) fit into the following commutative diagram in the category of functors:
	\[
		\begin{tikzcd}
			\id_{\cG[\Omega^{\leq 2}_H] \ltimes \cA[\Omega^{\leq 2}_H]} \arrow[r, Rightarrow] \arrow[d,Rightarrow,"\eta\lbrack\Omega^{\leq 2}_H\rbrack"'] & \id_{\cG[\Omega^1_H] \ltimes \cA[\Omega^1_H]} \arrow[d,Rightarrow,"\eta\lbrack \Omega^1_H \rbrack"]\\
			\Sigma[\Omega^{\leq 2}_H] \circ A[\Omega^{\leq 2}_H] \arrow[r,Rightarrow] & \Sigma[\Omega^1_H] \circ A[\Omega^1_H]
		\end{tikzcd}
	\]
\end{remark}

Given a bicovariant \(\ast\)-differential calculus \((\Omega_H,\dt{H})\) on \(H\) whose \fodc{} is locally freeing for \(P\), we now use the weak equivalence \(\Sigma[\Omega^{\leq 2}_H]\) of Proposition~\ref{sodcequiv} to construct a \(\pr{\fr{G}}\)-equivariant moduli space of  strongly \((H;\Omega_H,\dt{H})\)-principal \sodc{} on \(P\) inducing the second-order horizontal calculus \((\Omega_B,\dt{B};\Omega_{P,\hor})\) on \(P\). To do so, we shall need \((\Omega_H,\dt{H})\) to be Woronowicz's \emph{canonical} prolongation~\cite{Woronowicz}*{\S\S 3--4} of its \fodc{} \((\Omega^1_H,\dt{H})\), so that
\begin{equation}\label{wor}
	\Lambda^2_H \coloneqq (\Omega^2_H)^{\operatorname{co}H} = \frac{\Lambda^1_H \otimes_{\mathbb{C}} \Lambda^1_H}{\Span\set{ \mu \otimes \nu + \ca{\mu}{-1} \act \nu \otimes \ca{\mu}{0} \given \mu,\nu \in \Lambda^1_H}}
\end{equation}
by~\cite{Majid17}*{\S 2}. For us, the distinguishing feature of this particular prolongation (as opposed to, e.g., the maximal prolongation) will be the \(H\)-equivariance of \(\dt{H}\) when restricted to  \(\Lambda \coloneqq \Omega^{\operatorname{co}H}\), which one can view as \(\Ad\)-equivariance of the dualised Lie bracket.

\begin{proposition}\label{worprop}
	Let \((\Omega^1_H,\dt{H})\) be a bicovariant \fodc{} on \(H\), let \((\Omega,\dt{})\) be a bicovariant prolongation of \((\Omega^1_H,\dt{H})\), and let \(\Lambda \coloneqq \Omega^{\operatorname{co}H}\) denote the corresponding graded left crossed \(H\)-module \(\ast\)-algebra of right coinvariant forms, where \(\Lambda^0 = \bC\). The following are equivalent:
	\begin{enumerate}
		\item\label{wor1} the restriction of \(\dt{}\) to \(\Lambda^1 = \Lambda^1_H\) satisfies
		\[
			\forall h \in H, \, \forall \mu \in \Lambda^1_H, \quad \dt{H}(h \act \mu) = h \act \dt{H}(\mu);
		\]
		\item\label{wor2} the product \(\Lambda^1 \otimes_{\bC} \Lambda^1 \to \Lambda^2\) satisfies the braided commutation relation
		\[
			\forall \mu,\nu \in \Lambda^1_H, \quad \mu \wedge \nu + \ca{\mu}{-1} \act \nu \wedge \ca{\mu}{0} = 0.
		\]
	\end{enumerate}
\end{proposition}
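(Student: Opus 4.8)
The plan is to reduce both \ref{wor1} and \ref{wor2} to the single assertion that one bilinear map on $\Lambda^1_H \otimes_{\bC} \Lambda^1_H$ vanishes. Write $\sigma \colon \Lambda^1_H \otimes_{\bC} \Lambda^1_H \to \Lambda^1_H \otimes_{\bC} \Lambda^1_H$ for the $\bC$-linear map $\sigma(\mu \otimes \nu) \coloneqq \ca{\mu}{-1}\act\nu \otimes \ca{\mu}{0}$, and let $w \colon \Lambda^1_H \otimes_{\bC} \Lambda^1_H \to \Lambda^2$ be the $\bC$-linear map induced by the wedge product, $\mu \otimes \nu \mapsto \mu \wedge \nu$. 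Then \ref{wor2} says exactly that $w \circ (\id + \sigma) = 0$. On the other hand, since $\varpi_H$ is surjective with $\varpi_H(1_H) = 0$, condition \ref{wor1} — that $\rest{\dt{H}}{\Lambda^1_H}$ be left $H$-linear — holds if and only if the right-hand side of \eqref{deg2eq} vanishes for all $h,k \in H$ (write an arbitrary element of $\Lambda^1_H$ as $\varpi_H(k)$ and apply \eqref{deg2eq}); I would begin by recording this reduction.

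The first substantive step is the computation, valid in any bicovariant prolongation, that
\[
 \sigma\mleft(\varpi_H(\cm{h}{1}) \otimes \cm{h}{2}\act\varpi_H(k)\mright) = \cm{h}{1}\act\varpi_H(k) \otimes \varpi_H(\cm{h}{2}) \qquad (h,k \in H),
\]
obtained by feeding the $\Ad$-covariance formula $\delta(\varpi_H(g)) = \cm{g}{1}S(\cm{g}{3}) \otimes \varpi_H(\cm{g}{2})$ into the first tensor leg and then collapsing $S(\cm{h}{3})\cm{h}{4} = \epsilon(\cm{h}{3})1_H$. Applying $w$ then yields
\[
 w \circ (\id + \sigma)\mleft(\varpi_H(\cm{h}{1}) \otimes \cm{h}{2}\act\varpi_H(k)\mright) = \cm{h}{1}\act\varpi_H(k) \wedge \varpi_H(\cm{h}{2}) + \varpi_H(\cm{h}{1}) \wedge \cm{h}{2}\act\varpi_H(k),
\]
i.e.\ precisely the right-hand side of \eqref{deg2eq}. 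In particular, \ref{wor2} forces that right-hand side to vanish identically, so the implication \ref{wor2} $\Rightarrow$ \ref{wor1} is immediate.

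For the reverse implication I would prove the spanning lemma
\[
 \Span\set{\varpi_H(\cm{h}{1}) \otimes \cm{h}{2}\act\varpi_H(k) \given h,k \in H} = \Lambda^1_H \otimes_{\bC} \Lambda^1_H.
\]
Because $\varpi_H(1_H) = 0$ one may restrict to $k \in \ker\epsilon$, in which case $\cm{h}{2}\act\varpi_H(k) = \varpi_H(\cm{h}{2}k)$; the claim then follows from the fact that the ``coproduct-then-multiply'' map $\Theta \colon H \otimes H \to H \otimes H$, $h \otimes k \mapsto \cm{h}{1} \otimes \cm{h}{2}k$, is a linear bijection (with inverse $x \otimes y \mapsto \cm{x}{1} \otimes S(\cm{x}{2})y$) that commutes with $\id_H \otimes \epsilon$ and hence restricts to a bijection of $H \otimes \ker\epsilon$, together with surjectivity of $\varpi_H \otimes \varpi_H$ and $\varpi_H(\ker\epsilon) = \Lambda^1_H$. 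Granting the lemma, the elements $\varpi_H(\cm{h}{1}) \otimes \cm{h}{2}\act\varpi_H(k)$ span $\Lambda^1_H \otimes_{\bC} \Lambda^1_H$, so by the previous paragraph the right-hand sides of \eqref{deg2eq} span $w\mleft(\ran(\id + \sigma)\mright)$; hence if they all vanish then $w \circ (\id + \sigma) = 0$, which is \ref{wor2}. Combining the two implications with the reduction of \ref{wor1} yields the equivalence.

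The routine content is the Sweedler-notation bookkeeping in the two displayed computations; the only step needing genuine care is the spanning lemma, whose crux is simply the invertibility and $\epsilon$-equivariance of $\Theta$. I do not anticipate a serious obstacle beyond keeping the antipode placements straight, since every ingredient — $\Ad$-covariance and surjectivity of $\varpi_H$, the antipode axiom, and \eqref{deg2eq} itself — is already available.
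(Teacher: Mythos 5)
Your proposal is correct. The implication \ref{wor2} \(\Rightarrow\) \ref{wor1} is essentially the paper's: your identity \(\sigma\mleft(\varpi_H(\cm{h}{1}) \otimes \cm{h}{2}\act\varpi_H(k)\mright) = \cm{h}{1}\act\varpi_H(k) \otimes \varpi_H(\cm{h}{2})\) is exactly the paper's step \(\varpi_H(\cm{h}{1}) \wedge \cm{h}{2}\act\varpi_H(k) = -\cm{h}{1}S(\cm{h}{3})\cm{h}{4}\act\varpi_H(k) \wedge \varpi_H(\cm{h}{2})\), i.e.\ \(\Ad\)-covariance of \(\varpi_H\) followed by the antipode collapse. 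The other direction is where you genuinely diverge: the paper proves \ref{wor1} \(\Rightarrow\) \ref{wor2} by a direct Sweedler computation, rewriting \(\varpi_H(\cm{h}{2}) = -\cm{h}{2}\act\varpi_H(S(\cm{h}{3}))\) via the cocycle identity and applying the vanishing of the right-hand side of \eqref{deg2eq} twice to transform \(\ca{\varpi_H(h)}{-1}\act\varpi_H(k) \wedge \ca{\varpi_H(h)}{0}\) into \(-\varpi_H(h)\wedge\varpi_H(k)\); you instead prove the spanning lemma that the tensors \(\varpi_H(\cm{h}{1}) \otimes \cm{h}{2}\act\varpi_H(k)\) exhaust \(\Lambda^1_H \otimes_{\bC} \Lambda^1_H\), using that \(\Theta(h \otimes k) = \cm{h}{1} \otimes \cm{h}{2}k\) is a bijection preserving \(H \otimes \ker\epsilon\) together with \(\varpi_H(\ker\epsilon) = \Lambda^1_H\), and then conclude \(w \circ (\id + \sigma) = 0\) by linearity. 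Both arguments are sound; yours buys symmetry (both implications reduce to the single identification of \(w \circ (\id+\sigma)\) on the spanning family with the right-hand side of \eqref{deg2eq}) and replaces the paper's antipode gymnastics with a transparent linear-algebra step, at the cost of introducing and verifying the auxiliary bijection \(\Theta\), which the paper's self-contained computation avoids.
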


\begin{proof}
	On the one hand, suppose that Condition~\ref{wor1} holds. Then for all \(h,k \in H\), by \eqref{deg2eq},
	\begin{align*}
		\ca{\varpi(h)}{-1}\act\varpi(k) \wedge \ca{\varpi(h)}{0}
		&= \cm{h}{1}S(\cm{h}{3}) \act \varpi(k) \wedge \varpi(\cm{h}{2})\\
		&= - \cm{h}{1}S(\cm{h}{4}) \act \varpi(k) \wedge \cm{h}{2} \act \varpi(S(\cm{h}{3}))\\
		&= - \cm{h}{1} \act \left(S(\cm{h}{3}) \act \varpi(k) \wedge \varpi(S(\cm{h}{2}))\right)\\
		&= \cm{h}{1}\act \left(\varpi(S(\cm{h}{3})) \wedge S(\cm{h}{2}) \act \varpi(k)\right)\\
		&= \cm{h}{1} \act \varpi(S(\cm{h}{4})) \wedge \cm{h}{2} S(\cm{h}{3}) \act \varpi(k)\\
		&= - \varpi(h) \wedge \varpi(k),
	\end{align*}
	so that Condition~\ref{wor2} is satisfied. On the other hand, suppose that Condition~\ref{wor2} is satisfied. Then for all \(h, k \in H\), by \eqref{deg2eq},
	\begin{align*}
		\dt{H}(h \act \varpi(k)) 
		- h \act \dt{H}\varpi(k) &= \cm{h}{1}\act\varpi(k) \wedge\varpi(\cm{h}{2}) + \varpi(\cm{h}{1}) \wedge \cm{h}{2}\act\varpi(k)\\
		&= \cm{h}{1}\act\varpi(k) \wedge\varpi(\cm{h}{2}) - \cm{h}{1}S(\cm{h}{3})\cm{h}{4}\act\varpi(k)\wedge \varpi(\cm{h}{2})\\
		&= 0 
	\end{align*}
	so that Condition~\ref{wor1} is satisfied.
\end{proof}

In the case that \((\Omega_H,\dt{H})\) is the canonical prolongation of \((\Omega^1_H,\dt{H})\), we shall construct the aforementioned \(\pr{\fr{G}}\)-equivariant moduli space using the following distinguished set of relative gauge potentials, which will indeed turn out to be a \(\pr{\fr{G}}\)-invariant subspace of prolongable relative gauge potentials.

\begin{definition}
	Let \((\Omega^1_H,\dt{H})\) be a bicovariant \fodc{} on \(H\) that is locally freeing for \(P\). Let \(\bA\) be a prolongable relative gauge potential on \(P\) with respect to the second-order horizontal calculus \((\Omega_B,\dt{B};\Omega_{P,\hor})\). We say that \(\bA\) is \emph{canonically \((\Omega_H^{1},\dt{H})\)-adapted} if and only if it is \((\Omega^1_H,\dt{H})\)-adapted and satisfies both of the following:
	\begin{gather}
		\forall \mu \in \Lambda^1_H, \, \forall \alpha \in \Omega^1_B, \quad \left[\omega[\bA](\mu),\alpha\right] = 0,\label{toteq1}\\
		\forall \mu,\nu \in \Lambda^1_H, \quad \omega[\bA](\mu) \wedge \omega[\bA](\nu) + \omega[\bA](\ca{\mu}{-1}\act\nu) \wedge \omega[\bA](\ca{\mu}{0}) = 0.\label{toteq2}
	\end{gather}
	We denote by \(\pr{\fr{at}}_{\can}[\Omega^{1}_H]\) the subset of all canonically \((\Omega^{1}_H,\dt{H})\)-adapted relative gauge potentials on \(P\) with respect to \((\Omega_B,\dt{B};\Omega_{P,\hor})\).
\end{definition}

Now, let \((\Omega_1,\dt{1})\), \((\Omega_2,\dt{2}) \in \Ob(\cG[\Omega^{\leq 2}_H])\), where \((\Omega_H,\dt{H})\) is the canonical prolongation of a bicovariant \fodc{} \((\Omega^1_H,\dt{H})\) on \(H\) that is locally freeing for \(P\). Observe that the left \(H\)-covariant \textsc{sodc} \((\Omega_1,\dt{1})\) and \((\Omega_2,\dt{2})\) on the principal left \(H\)-comodule \(\ast\)-algebra \(P\) are isomorphic if and only if \(\id_P : (\Omega_1,\dt{1}) \to (\Omega_2,\dt{2})\) is an arrow in \(\cG[\Omega^{\leq 2}_H]\). Since the subgroupoid of such arrows is precisely \(\ker\mu[\Omega^{\leq 2}_H]\), it follows that \((\Omega_1,\dt{1})\) and \((\Omega_2,\dt{2})\) are isomorphic if and only if they define the same object in the quotient \(\cG[\Omega^{\leq 2}_H]/\ker\mu[\Omega^{\leq 2}_H]\), which will turn out to be well-defined and canonically isomorphic to the action groupoid \(\pr{\fr{G}} \ltimes \left(\pr{\fr{At}}[\Omega^1_H]/\pr{\fr{at}}_{\can}[\Omega^{1}_H]\right)\). Thus, the quadric subset \(\pr{\fr{At}}[\Omega^{\leq 1}_H]/\pr{\fr{at}}_{\can}[\Omega^{1}_H]\) of the quotient affine space \(\pr{\fr{At}}/\pr{\fr{at}}_{\can}[\Omega^{1}_H]\) yields the desired \(\pr{\fr{G}}\)-equivariant affine moduli space of strongly \((\Omega_H,\dt{H})\)-principal \sodc{} on \(P\) inducing \((\Omega_B,\dt{B};\Omega_{P,\hor})\).

\begin{theorem}\label{qpbthm}
	Let \((\Omega^1_H,\dt{H})\) be a bicovariant \fodc{} on \(H\) that is locally freeing for \(P\), and let \((\Omega_H,\dt{H})\) be its canonical prolongation. Suppose that \(\pr{\fr{At}}[\Omega^1_H]\) is non-empty. Let \(\pr{\fr{at}}\) be the space of prolongable gauge potentials on \(P\) with respect to \((\Omega_B,\dt{B};\Omega_{P,\hor})\), and let \(\fr{at}[\Omega^1_H]\) be subspace of \((\Omega^1_H,\dt{H})\)-adapted elements. The set \(\pr{\fr{at}}_{\can}[\Omega^{1}_H]\) defines a \(\pr{\fr{G}}\)-invariant subspace of \(\fr{at}[\Omega^1_H] \cap \pr{\fr{at}}\), such that \(\pr{\fr{At}}[\Omega_H]\) is invariant under translation by \(\pr{\fr{at}}_{\can}[\Omega^{1}_H]\); the subgroupoid \(\ker\mu[\Omega^{\leq 2}_H]\) of \(\cG[\Omega^{\leq 2}_H]\) is wide and has trivial isotropy groups, so that the quotient groupoid \(\cG[\Omega^{\leq 2}_H]/\ker\mu[\Omega^{\leq 2}_H]\) is well-defined; and there exists a unique isomorphism  \[\pr{\tilde{\Sigma}}[\Omega^{\leq 2}_H] : \pr{\fr{G}} \ltimes \left(\pr{\fr{At}}[\Omega^1_H]/\pr{\fr{at}}_{\can}[\Omega^{1}_H]\right) \to \cG[\Omega^{\leq 2}_H]/\ker\mu[\Omega^{\leq 2}_H],\] such that
	\begin{multline}
		\forall (\phi,\nabla) \in \pr{\fr{G}} \ltimes \pr{\fr{At}}[\Omega^1_H], \\ \pr{\tilde{\Sigma}}[\Omega^{\leq 2}_H]\mleft(\phi,\nabla+\pr{\fr{at}}_{\can}[\Omega^{1}_H]\mright) = \left[\pi[\Omega^{\leq 2}_H] \circ \Sigma[\Omega^{\leq 2}_H](\phi,\nabla)\right]_{\ker\mu[\Omega^{\leq 2}_H]}.
	\end{multline}
\end{theorem}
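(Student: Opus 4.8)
The plan is to run, at second order, the argument used for Theorem~\ref{firstorderclassification}. I would feed the data $\bigl(\pi[\Omega^{\leq 2}_H],\mu[\Omega^{\leq 2}_H],\Sigma[\Omega^{\leq 2}_H],A[\Omega^{\leq 2}_H]\bigr)$ supplied by Proposition~\ref{sodcequiv}---a surjective covering, a star-injective homomorphism, an injective homomorphism, and a left inverse of the latter---into Lemma~\ref{groupoidlemma}. As in the first-order case, this instantly yields the purely groupoid-theoretic conclusions: $\ker\mu[\Omega^{\leq 2}_H]$ is wide with trivial isotropy groups, the quotient $\cG[\Omega^{\leq 2}_H]/\ker\mu[\Omega^{\leq 2}_H]$ is well-defined, the equivalence kernel $\sim$ of the map $\nabla\mapsto\bigl[\pi[\Omega^{\leq 2}_H]\circ\Sigma[\Omega^{\leq 2}_H](\id_P,\nabla)\bigr]_{\ker\mu[\Omega^{\leq 2}_H]}$ on $\pr{\fr{At}}[\Omega^1_H]$ is a $\pr{\fr{G}}$-invariant equivalence relation, and there is a unique groupoid isomorphism $\pr{\tilde{\Sigma}}[\Omega^{\leq 2}_H]\colon\pr{\fr{G}}\ltimes\bigl(\pr{\fr{At}}[\Omega^1_H]/\mathord{\sim}\bigr)\to\cG[\Omega^{\leq 2}_H]/\ker\mu[\Omega^{\leq 2}_H]$ with the stated compatibility. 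Everything then reduces to showing that $\sim$ is exactly the orbit equivalence of translation by $\pr{\fr{at}}_{\can}[\Omega^1_H]$, and that $\pr{\fr{at}}_{\can}[\Omega^1_H]$ has the asserted properties.

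The heart of the matter is the identification of $\sim$. Fix $\nabla_1,\nabla_2\in\pr{\fr{At}}[\Omega^1_H]$; by Proposition~\ref{sodciso} the triples $(\Omega_{P,\oplus},\dt{P,\nabla_i},\Pi_\oplus)$ are objects of $\cG[\Omega^{\leq 2}_H]\ltimes\cA[\Omega^{\leq 2}_H]$, and $\nabla_1\sim\nabla_2$ means precisely that $\id_P$ defines an arrow $(\Omega_{P,\oplus},\dt{P,\nabla_1})\to(\Omega_{P,\oplus},\dt{P,\nabla_2})$ in $\cG[\Omega^{\leq 2}_H]$. Restricting such an arrow along the canonical homomorphism $\cG[\Omega^{\leq 2}_H]\to\cG[\Omega^1_H]$ and invoking the first-order computation inside the proof of Theorem~\ref{firstorderclassification} already forces $\bA\coloneqq\nabla_2-\nabla_1$ to lie in $\fr{at}[\Omega^1_H]$, with $\rest{(\id_P)_\ast}{\Omega^1_{P,\oplus}}=\id+\omega[\bA]\circ\Pi_\oplus$; moreover $\bA\in\pr{\fr{at}}$ automatically, with $\pr{\bA}=\pr{\nabla_2}-\pr{\nabla_1}$, since $\nabla_1,\nabla_2\in\pr{\fr{At}}$. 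The genuinely new content is that $(\id_P)_\ast$ must extend to a \emph{multiplicative} bijection of $\Omega^2_{P,\oplus}$ intertwining $\dt{P,\nabla_1}$ and $\dt{P,\nabla_2}$. Working in the presentation of $\Omega_{P,\oplus}=\Lambda_H\dvatimes\Omega_{P,\hor}$ by the graded $\ast$-subalgebras $\Lambda_H$ and $\Omega_{P,\hor}$ together with their braided commutation relations, I expect well-definedness of the multiplicative extension on the $\Lambda^1_H\otimes\Omega^1_{P,\hor}$-component to be equivalent to~\eqref{toteq1}, and well-definedness on the $\Lambda^2_H\otimes P$-component---where $\Lambda^2_H$ is presented by the Woronowicz relation~\eqref{wor}, which the hypothesis makes available through Proposition~\ref{worprop}---to be equivalent to~\eqref{toteq2}; compatibility with the differentials, together with the field-strength transformation law of Proposition~\ref{fieldstrengthprop}, then both pins $(\id_P)_\ast$ down uniquely and is seen to be automatic once~\eqref{toteq1} and~\eqref{toteq2} hold. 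Conversely, given $\bA\in\pr{\fr{at}}_{\can}[\Omega^1_H]$ and $\nabla\in\pr{\fr{At}}[\Omega^1_H]$, I would define $(\id_P)_\ast$ on $\Omega^1_{P,\oplus}$ by $\id+\omega[\bA]\circ\Pi_\oplus$ and on $\Omega^2_{P,\oplus}$ by the formula it induces---using~\eqref{toteq1} and~\eqref{toteq2} to check this is well-defined---and verify, exactly as in the proof of Proposition~\ref{sodciso}, that it is a bijective graded $\ast$-algebra automorphism of $\Omega_{P,\oplus}$ intertwining $\dt{P,\nabla}$ with $\dt{P,\nabla+\bA}$. This exhibits $\id_P$ as an arrow of $\ker\mu[\Omega^{\leq 2}_H]$, so $\nabla\sim\nabla+\bA$; since arrows of $\cG[\Omega^{\leq 2}_H]$ go between objects, it simultaneously shows (via Proposition~\ref{sodciso}) that $\nabla+\bA\in\pr{\fr{At}}[\Omega^1_H]$, i.e.\ that $\pr{\fr{At}}[\Omega^1_H]$ is stable under translation by $\pr{\fr{at}}_{\can}[\Omega^1_H]$.

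The remaining structural claims are then formal. Using non-emptiness of $\pr{\fr{At}}[\Omega^1_H]$ together with the two implications just described, the fact that $\sim$ is an equivalence relation on the translation-stable set $\pr{\fr{At}}[\Omega^1_H]$ forces $\pr{\fr{at}}_{\can}[\Omega^1_H]$ to be closed under negation and addition---hence a subgroup of $(\fr{at},+)$---while, since conditions~\eqref{toteq1} and~\eqref{toteq2} are homogeneous in $\omega[\bA]$ of degrees $1$ and $2$ respectively, $\pr{\fr{at}}_{\can}[\Omega^1_H]$ is closed under real scalars, so that it is an $\bR$-subspace, visibly contained in $\fr{at}[\Omega^1_H]\cap\pr{\fr{at}}$. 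Its $\pr{\fr{G}}$-invariance follows from the $\pr{\fr{G}}$-invariance of $\sim$ furnished by Lemma~\ref{groupoidlemma}, or directly from the identity $\omega[\phi\act\bA]=\phi_{\ast,\hor}\circ\omega[\bA]\circ\phi_{\ast,\ver}^{-1}$ and the fact that $\phi_{\ast,\hor}$ is a graded $\ast$-algebra automorphism and $\phi_{\ast,\ver}$ a crossed-module automorphism, so that~\eqref{toteq1} and~\eqref{toteq2} are preserved. Finally, $\pr{\tilde{\Sigma}}[\Omega^{\leq 2}_H]$ is the isomorphism produced by Lemma~\ref{groupoidlemma} after substituting $\pr{\fr{At}}[\Omega^1_H]/\mathord{\sim}=\pr{\fr{At}}[\Omega^1_H]/\pr{\fr{at}}_{\can}[\Omega^1_H]$. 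I expect the main obstacle to be the degree-$2$ bookkeeping in the forward direction---isolating \emph{exactly}~\eqref{toteq1} and~\eqref{toteq2}, and nothing stronger or weaker, as the obstruction to extending $(\id_P)_\ast=\id+\omega[\bA]\circ\Pi_\oplus$ multiplicatively and bijectively over $\Omega^2_{P,\oplus}$, while correctly tracking how the Woronowicz presentation~\eqref{wor} of $\Lambda^2_H$ enters. This is the only place the hypothesis that $(\Omega_H,\dt{H})$ be the \emph{canonical} prolongation is used, via the $\Ad$-equivariance of $\rest{\dt{H}}{\Lambda^1_H}$ recorded in Proposition~\ref{worprop}.
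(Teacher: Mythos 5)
Your proposal is correct and follows essentially the same route as the paper: Lemma~\ref{groupoidlemma} applied to the data of Proposition~\ref{sodcequiv}, combined with a characterisation of $\pr{\fr{at}}_{\can}[\Omega^1_H]$ as exactly those relative potentials $\bA$ for which $\id_P$ induces a $\ker\mu[\Omega^{\leq 2}_H]$-arrow from $\pi\circ\Sigma(\nabla)$ to $\pi\circ\Sigma(\nabla+\bA)$, with conditions~\eqref{toteq1}--\eqref{toteq2} emerging (via the Woronowicz presentation~\eqref{wor} and Proposition~\ref{worprop}) as precisely the obstruction to extending $\id+\omega[\bA]\circ\Pi_\oplus$ to a multiplicative automorphism of $\Omega_{P,\oplus}$ intertwining the differentials---this is the content of the paper's Lemmata~\ref{qpblem1}--\ref{qpblem3}. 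Your minor reorderings (deducing adaptedness of $\nabla+\bA$ and the subspace structure of $\pr{\fr{at}}_{\can}[\Omega^1_H]$ from transitivity of $\sim$ rather than beforehand) are cosmetic; the paper obtains additive closure by composing the corresponding $\ker\mu[\Omega^{\leq 2}_H]$-arrows, which is the same argument.
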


As a preliminary to the proof of this theorem, we shall prove the following sequence of lemmata. In what follows, we shall assume the hypotheses of Theorem~\ref{qpbthm}; in particular, \((\Omega_{P,\ver},\dv{P})\) will denote the second-order vertical calculus induced by \((\Omega_H,\dt{H})\).

\begin{lemma}\label{qpblem1}
	Let \(N : \Omega^1_{P,\ver} \to \Omega^1_{P,\hor}\) be a left \(H\)-covariant morphism of \(P\)-\(\ast\)-bimodules satisfying
	\[
		\forall \mu \in \Lambda^1_H, \, \forall \alpha \in \Omega^1_B, \quad [N(\mu),\alpha] = 0.
	\]
	Then \(N \circ \dv{P} \in \pr{\fr{at}}\). Moreover, for all \(\nabla \in \pr{\fr{At}}\), the map \(\nabla N : \Omega^1_{P,\ver} \to \Omega^2_{P,\hor}\) given by
	\[
		\forall \mu \in \Lambda^1_H, \, \forall p \in P, \quad (\nabla N)(\mu \cdot p) \coloneqq \nabla(N(\mu)) \cdot p
	\]
	defines a left \(H\)-covariant morphism of \(P\)-bimodules, such that \(-\iu{}\nabla N\) is \(\ast\)-preserving.
\end{lemma}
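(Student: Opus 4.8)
The plan is to establish the three assertions in turn, in each case rewriting the relevant formula so that it manifestly depends only on the underlying element of $\Omega^1_{P,\hor}$ (resp.\ $\Omega^1_{P,\ver}$); the crucial input throughout is the hypothesis $[N(\mu),\alpha]=0$, used in tandem with the left $P$-linearity of $N$ and the cocycle and counit identities for $\varpi_H$.

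\emph{First, that $N\circ\dv{P}\in\pr{\fr{at}}$.} Since $\dv{P}\colon P\to\Omega^1_{P,\ver}$ is a left $H$-covariant $\ast$-derivation vanishing on $B$ (recall $\varpi_H(1)=0$ and $\delta|_{B}=1\otimes\id$) and $N$ is a left $H$-covariant $\ast$-preserving morphism of $P$-$\ast$-bimodules, the composite $N\circ\dv{P}$ is a left $H$-covariant $\ast$-derivation $P\to\Omega^1_{P,\hor}$ vanishing on $B$; as it is manifestly of the form $\omega[\,\cdot\,]\circ\dv{P}$, it lies in $\fr{at}[\Omega^1_H]\subseteq\fr{at}$ with relative connection $1$-form $N$. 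For prolongability I would introduce the map $T\colon\Omega^1_{P,\hor}\to\Omega^2_{P,\hor}$, $\eta\mapsto N(\varpi_H(\ca{\eta}{-1})\otimes 1_P)\wedge\ca{\eta}{0}$, which is well-defined and left $H$-covariant by construction, and check that on a generator $p\cdot\dt{B}(b)\cdot\p{p}$ it agrees with the canonical-prolongation formula $(N\circ\dv{P})(p)\wedge\dt{B}(b)\cdot\p{p}-p\cdot\dt{B}(b)\wedge(N\circ\dv{P})(\p{p})$. Indeed, writing $(N\circ\dv{P})(p)=N(\varpi_H(\ca{p}{-1}))\cdot\ca{p}{0}$, using the hypothesis in the form $\dt{B}(b)\wedge N(\mu)=-N(\mu)\wedge\dt{B}(b)$ and left $P$-linearity of $N$ to pull both $N(\varpi_H({-}))$-terms in front of the basic form, and then invoking $\ca{p}{-1}\act\varpi_H(\ca{\p{p}}{-1})=\varpi_H(\ca{p}{-1}\ca{\p{p}}{-1})-\varpi_H(\ca{p}{-1})\epsilon(\ca{\p{p}}{-1})$, the $\epsilon$-term cancels against one of the two pieces and what remains is precisely $N(\varpi_H(\ca{p}{-1}\ca{\p{p}}{-1})\otimes 1_P)\wedge(\ca{p}{0}\cdot\dt{B}(b)\cdot\ca{\p{p}}{0})=T(p\cdot\dt{B}(b)\cdot\p{p})$. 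Hence the canonical-prolongation formula is well-defined (being equal to $T$), so $N\circ\dv{P}$ is prolongable. This bookkeeping of the two coactions — and, similarly, the displayed identity below — is the main computational obstacle.

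\emph{Next, the map $\nabla N$.} Since $\Omega^1_{P,\ver}=\Lambda^1_H\otimes P$ carries no relations beyond $\bC$-bilinearity, $\mu\otimes p\mapsto\pr{\nabla}(N(\mu))\cdot p$ is a well-defined $\bC$-linear map, visibly right $P$-linear, and left $H$-covariant because $N$, $\nabla$ and hence $\pr{\nabla}$ are. Left $P$-linearity is the content: since $N$ is a bimodule morphism one has $N(\ca{q}{-1}\act\mu)\cdot\ca{q}{0}=q\cdot N(\mu)$, and applying $\pr{\nabla}$ to this identity and using the graded Leibniz rule for $\pr{\nabla}$ reduces left $P$-linearity of $\nabla N$ to showing $\nabla(q)\wedge N(\mu)+N(\ca{q}{-1}\act\mu)\wedge\nabla(\ca{q}{0})=0$ for all $q\in P$ and $\mu\in\Lambda^1_H$. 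To prove the latter, write $\nabla(q)=\sum_i p_i\cdot\dt{B}(c_i)$ (possible because $\Omega^1_{P,\hor}=P\cdot\dt{B}(B)$), use the hypothesis to replace each $\dt{B}(c_i)\wedge N(\mu)$ by $-N(\mu)\wedge\dt{B}(c_i)$ and then left $P$-linearity of $N$ to write $p_i\cdot N(\mu)=N(\ca{p_i}{-1}\act\mu)\cdot\ca{p_i}{0}$, yielding $\nabla(q)\wedge N(\mu)=-\sum_i N(\ca{p_i}{-1}\act\mu)\wedge(\ca{p_i}{0}\cdot\dt{B}(c_i))$; since applying $\delta$ to $\nabla(q)=\sum_i p_i\cdot\dt{B}(c_i)$ gives $\ca{q}{-1}\otimes\nabla(\ca{q}{0})=\sum_i\ca{p_i}{-1}\otimes\ca{p_i}{0}\cdot\dt{B}(c_i)$, feeding this equality into the well-defined $\bC$-bilinear operation $h\otimes\xi\mapsto N((h\act\mu)\otimes 1_P)\wedge\xi$ on $H\otimes\Omega^1_{P,\hor}$ converts the right-hand sum into $-N(\ca{q}{-1}\act\mu)\wedge\nabla(\ca{q}{0})$, as required.

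\emph{Finally, that $-\iu{}\nabla N$ is $\ast$-preserving,} i.e.\ that $\nabla N(\xi^\ast)=-\nabla N(\xi)^\ast$. I would decompose $\nabla N=\pr{\nabla}\circ N+S$, where $S(\mu\otimes p)\coloneqq N(\mu)\wedge\nabla(p)$, which follows from the Leibniz rule $\pr{\nabla}(N(\mu)\cdot p)=\pr{\nabla}(N(\mu))\cdot p-N(\mu)\wedge\nabla(p)$. Then $\pr{\nabla}\circ N$ is $\ast$-anti-preserving because $N$ is $\ast$-preserving and $\pr{\nabla}$ is a degree $1$ $\ast$-derivation, while for $S$, using the $\ast$-structure $(\mu\otimes p)^\ast=(\ca{p}{-1})^\ast\act\mu^\ast\otimes(\ca{p}{0})^\ast$ on $\Omega^1_{P,\ver}$ together with $\delta(p^\ast)=(\ca{p}{-1})^\ast\otimes(\ca{p}{0})^\ast$, the identity $S(\xi)^\ast=-S(\xi^\ast)$ is exactly the identity of the previous paragraph evaluated at $q=p^\ast$ with $\mu$ replaced by $\mu^\ast$. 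Hence $\nabla N$ is $\ast$-anti-preserving and $-\iu{}\nabla N$ is $\ast$-preserving, which completes the plan.
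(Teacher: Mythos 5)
Your proposal is correct and takes essentially the same route as the paper: the hypothesis \([N(\mu),\alpha]=0\) together with the cocycle identity for \(\varpi_H\) identifies the canonical prolongation of \(N\circ\dv{P}\) with the manifestly well-defined map \(\alpha\mapsto N(\varpi_H(\ca{\alpha}{-1}))\wedge\ca{\alpha}{0}\), and left \(P\)-linearity of \(\nabla N\) rests on the same key identity \(\alpha\wedge N(\mu)+N(\ca{\alpha}{-1}\act\mu)\wedge\ca{\alpha}{0}=0\), which you prove exactly as the paper does and then specialise to \(\alpha=\nabla(q)\). Your decomposition \(\nabla N=\pr{\nabla}\circ N+S\) correctly fills in the \(\ast\)-preservation step that the paper dismisses as ``a similar calculation,'' again by reusing that identity at \(q=p^\ast\), \(\mu\mapsto\mu^\ast\).
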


\begin{proof}
	Let us first show that \(\bA \coloneqq N \circ \dv{P} \in \fr{at}[\Omega^1_H]\) is prolongable. Indeed, for all \(p, q \in P\) and \(b \in B\), we have
	\begin{align*}
		\bA(p) \cdot \dt{B}(b) \cdot q - p \cdot \dt{B} \cdot \bA(q) &= N(\varpi(\ca{p}{-1})) \wedge \ca{p}{0} \dt{B}(b)	 \cdot q - p \cdot \dt{B}(b) \cdot \wedge \varpi(\ca{q}{-1}) \cdot \ca{q}{0}\\
		&= N\mleft(\varpi(\ca{p}{-1})\epsilon(\ca{q}{-1})+\ca{p}{-1}\act\varpi(\ca{q}{-1})\mright)\wedge \ca{p}{0}\dt{B}(b)\ca{q}{0}\\
		&= N\mleft(\varpi(\ca{p}{-1}\ca{q}{-1})\mright) \wedge \ca{p}{0} \cdot \dt{B}(b) \cdot \ca{q}{0}\\
		&= N\mleft(\varpi(\ca{(p \cdot \dt{B}(b) \cdot q)}{-1})\mright) \wedge \ca{(p \cdot \dt{B}(b) \cdot q)}{0},
	\end{align*}
	so that \(\bA\) is prolongable with \(\pr{\bA}\) given by
	\[
		\forall \alpha \in \Omega^1_{P,\hor}, \quad \pr{\bA}(\alpha) = N(\varpi(\ca{\alpha}{-1})) \wedge \ca{\alpha}{0}.
	\]
	Now, given \(\nabla \in \pr{\fr{At}}\), let us show that the left \(H\)-covariant right \(P\)-module map \(\nabla N\) is left \(P\)-linear and \(\ast\)-preserving. First, since \([N(\Lambda^1_H),\Omega^1_B]=\set{0}\), it follows that for all \(\mu \in \Lambda^1_H\), \(p \in P\), and \(\beta \in \Omega^1_B\),
	\begin{multline*}
		p \cdot \dt{B}(b) \wedge N(\mu) = - p \cdot N(\mu) \wedge \dt{B}(b) = -N(\ca{p}{-1}\act\mu) \cdot \ca{p}{0} \wedge \dt{B}(b)\\ = -N(\ca{(p \cdot \dt{B}(b))}{-1} \act \mu) \wedge \ca{(p \cdot\dt{B}(b))}{0},
	\end{multline*}
	so that
	\[
		\forall \mu \in \Lambda^1_H, \, \forall \alpha \in \Omega^1_{P,\hor}, \quad \alpha \wedge N(\mu) + N(\ca{\alpha}{-1}\act\mu) \wedge \ca{\alpha}{0} = 0;	
	\]
	Hence, for all \(p \in P\) and \(\mu \in \Lambda^1_H\),
	\begin{align*}
		p \cdot (\nabla N)(\mu) 
		&= \pr{\nabla}\mleft(p \cdot N(\mu)\mright) - \nabla(p) \wedge N(\mu)\\
		&= \pr{\nabla}\mleft(N(\ca{p}{-1}\act\mu)\mright) \cdot \ca{p}{0} - N(\ca{p}{-1}\act\mu) \wedge \nabla(\ca{p}{0}) - \nabla(p) \wedge N(\mu)\\
		&= (\nabla N)(p \cdot \mu) - N(\ca{p}{-1}\act\mu) \wedge \nabla(\ca{p}{0}) + N(\ca{\nabla(p)}{-1} \act \mu) \wedge \ca{\nabla(p)}{0}\\
		&= (\nabla N)(p \cdot \mu),
	\end{align*}
	which shows that \(\nabla N\) is left \(P\)-linear; a similar calculation then shows that the map \(-\iu{}\nabla N\) is also \(\ast\)-preserving.
\end{proof}

\begin{lemma}\label{qpblem2}
	Recall that \((\Omega_H,\dt{H})\) is the canonical prolongation of \((\Omega^1_H,\dt{H})\). Suppose that \(N : \Omega^1_{P,\ver} \to \Omega^1_{P,\hor}\) be a left \(H\)-covariant morphism of \(P\)-\(\ast\)-bimodules satisfying
	\begin{equation}\label{qpblem2eq}
		\forall \mu,\nu \in \Lambda^1_H, \quad N(\mu) \wedge N(\nu) + N(\ca{\mu}{-1}\act\nu) \wedge N(\ca{\mu}{0}) = 0,
	\end{equation}
	Then, the map \([N,N] : \Omega^1_{P,\ver} \to \Omega^2_{P,\hor}\) defined by
	\[
		\forall h \in H, \, \forall p \in P, \quad [N,N](\varpi(h) \cdot p) \coloneqq N(\varpi(\cm{h}{1})) \wedge N(\varpi(\cm{h}{2})) \cdot p
	\]
	is a left \(H\)-covariant morphism of \(P\)-bimodules, such that \(-\iu{}[N,N]\) is \(\ast\)-preserving.
\end{lemma}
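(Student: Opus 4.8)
The plan is to realise $[N,N]$ as the composite of three well-understood maps and to deduce all three assertions from properties of those maps; the r\^{o}le of the canonical-prolongation hypothesis is precisely to make this factorisation exist. Consider the $\bC$-bilinear assignment $\mathbf{N} : \Lambda^1_H \otimes_{\bC} \Lambda^1_H \to \Omega^2_{P,\hor}$, $\mu \otimes \nu \mapsto N(\mu) \wedge N(\nu)$; by the computation in the proof of Proposition~\ref{projectprop}, $N(\mu) \wedge N(\nu) \in \Omega^1_{P,\hor} \wedge \Omega^1_{P,\hor} = \Omega^2_{P,\hor}$, so this is well-defined. Hypothesis~\eqref{qpblem2eq} states exactly that $\mathbf{N}$ annihilates $\Span\set{\mu \otimes \nu + \ca{\mu}{-1}\act\nu \otimes \ca{\mu}{0} \given \mu,\nu \in \Lambda^1_H}$, so by the description~\eqref{wor} of $\Lambda^2_H$ for Woronowicz's canonical prolongation, $\mathbf{N}$ descends through the wedge product $\Lambda^1_H \otimes_{\bC}\Lambda^1_H \surj \Lambda^2_H$ to a unique $\bC$-linear map $\bar{\mathbf{N}} : \Lambda^2_H \to \Omega^2_{P,\hor}$. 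Using the Maurer--Cartan equation $\dt{H}(\varpi(h)) = \varpi(\cm{h}{1}) \wedge \varpi(\cm{h}{2})$ one then has $\bar{\mathbf{N}}(\dt{H}(\varpi(h))) = N(\varpi(\cm{h}{1})) \wedge N(\varpi(\cm{h}{2}))$; since $\Omega^1_{P,\ver} = \Lambda^1_H \otimes P$, $\varpi$ is surjective, and the assignment $\omega \otimes p \mapsto \bar{\mathbf{N}}(\dt{H}(\omega)) \cdot p$ is a manifestly well-defined $\bC$-linear map on $\Lambda^1_H \otimes_\bC P$ agreeing with the prescribed formula for $[N,N]$ on the spanning set $\set{\varpi(h)\cdot p}$, this shows at once that $[N,N]$ is well-defined and that $[N,N] = \bigl( \omega \otimes p \mapsto \bar{\mathbf{N}}(\dt{H}(\omega)) \cdot p \bigr)$.

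It then remains to read off the three structural properties from this factorisation. Left $H$-covariance holds because $\dt{H}|_{\Lambda^1_H} : \Lambda^1_H \to \Lambda^2_H$ is a morphism of left $H$-comodules (as $(\Omega_H,\dt{H})$ is bicovariant), because $\bar{\mathbf{N}}$ is a morphism of left $H$-comodules (as $N$, the wedge product, and the quotient $\Lambda^1_H \otimes_\bC \Lambda^1_H \surj \Lambda^2_H$ all are, and the relevant subspace is therefore a subcomodule), and because the right $P$-action on $\Omega^2_{P,\hor}$ satisfies $\delta(\xi \cdot p) = \ca{\xi}{-1}\ca{p}{-1} \otimes \ca{\xi}{0}\cdot\ca{p}{0}$; combining these gives $\delta \circ [N,N] = (\id_H \otimes [N,N]) \circ \delta$. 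Right $P$-linearity is immediate from $[N,N]((\omega\otimes p)\cdot q) = \bar{\mathbf{N}}(\dt{H}(\omega))\cdot pq$. For left $P$-linearity, the key identity is that $N$, being a morphism of $P$-$\ast$-bimodules out of $\Omega^1_{P,\ver} = \Lambda^1_H \otimes P$ with its crossed-module-twisted bimodule structure, satisfies $q \cdot N(\mu) = N(\ca{q}{-1}\act\mu)\cdot\ca{q}{0}$; iterating this inside a wedge product and using that $\Lambda_H$ is a graded left $H$-module $\ast$-algebra (so that the diagonal $H$-action on $\Lambda^1_H \otimes_\bC \Lambda^1_H$ descends to the $H$-action on $\Lambda^2_H$) yields $q \cdot \bar{\mathbf{N}}(\zeta) = \bar{\mathbf{N}}(\ca{q}{-1}\act\zeta)\cdot\ca{q}{0}$ for all $\zeta \in \Lambda^2_H$. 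Combining this with condition~\ref{wor1} of Proposition~\ref{worprop}, namely $\dt{H}(\ca{q}{-1}\act\omega) = \ca{q}{-1}\act\dt{H}(\omega)$, and with the twisted left $P$-action $q \cdot (\omega \otimes p) = \ca{q}{-1}\act\omega \otimes \ca{q}{0}p$ on $\Omega^1_{P,\ver}$, one obtains $[N,N](q\cdot(\omega\otimes p)) = q\cdot[N,N](\omega\otimes p)$.

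Finally, the verification that $-\iu{}\,[N,N]$ is $\ast$-preserving proceeds exactly as the corresponding step in the proof of Lemma~\ref{qpblem1}: using $N(\omega^\ast) = N(\omega)^\ast$, the graded $\ast$-rule $(\alpha \wedge \beta)^\ast = -\beta^\ast \wedge \alpha^\ast$ on degree-$1$ forms (whence $\bar{\mathbf{N}}$ is $\ast$-preserving), the identity $\varpi(h)^\ast = \varpi(S(h)^\ast)$, the fact that $\dt{H}$ is a degree-$1$ $\ast$-derivation (so $(\dt{H}\omega)^\ast = -\dt{H}(\omega^\ast)$), and the $\ast$-structure $(\omega\otimes p)^\ast = \ca{p}{-1}^\ast \act \omega^\ast \otimes \ca{p}{0}^\ast$ on $\Omega^1_{P,\ver}$, one reduces $\bigl([N,N](\omega\otimes p)\bigr)^\ast$ to $-[N,N]\bigl((\omega\otimes p)^\ast\bigr)$ by invoking the left $P$-linearity identity established above; this is equivalent to the stated claim.

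I expect the main obstacle to be the first paragraph: pinning down the correct factorisation $[N,N] = \bigl(\omega\otimes p \mapsto \bar{\mathbf{N}}(\dt{H}\omega)\cdot p\bigr)$ and justifying the descent to $\bar{\mathbf{N}}$, since this is exactly where hypothesis~\eqref{qpblem2eq} and the canonical nature of the prolongation~\eqref{wor} are genuinely used and where all of the content resides. Once it is in place, the covariance, bimodule-linearity, and $\ast$-claims are essentially bookkeeping with the crossed-module structure, directly parallel to the proofs of Lemma~\ref{qpblem1} and Proposition~\ref{worprop}.
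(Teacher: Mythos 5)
Your proposal is correct and follows essentially the same route as the paper: hypothesis \eqref{qpblem2eq} together with \eqref{wor} is used to descend $\mu \otimes \nu \mapsto N(\mu) \wedge N(\nu)$ to a map $\Lambda^2_H \to \Omega^2_{P,\hor}$ (the paper's $\tilde{N}$, your $\bar{\mathbf{N}}$), the Maurer--Cartan equation identifies $[N,N]$ with $\omega \cdot p \mapsto \tilde{N}(\dt{H}\omega)\cdot p$, and left $P$-linearity is then checked exactly as in the paper by iterating $q \cdot N(\mu) = N(\ca{q}{-1}\act\mu)\cdot\ca{q}{0}$ and invoking Proposition~\ref{worprop}, with the $\ast$-statement following by the analogous computation. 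Your slightly more explicit bookkeeping of covariance and of the sign in the $\ast$-claim is consistent with the paper's "qualitatively identical calculation."
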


\begin{proof}
	First, by \eqref{qpblem2eq} together with \eqref{wor}, the map
	\[
		(\mu \otimes \nu \mapsto N(\mu) \wedge N(\nu)) : \Lambda^1_H \otimes_{\bC} \Lambda^1_H \to \Omega^2_{P,\hor}
	\]
	descends to a map \(\tilde{N} : \Lambda^2_H \to \Omega^2_{P,\hor}\), so that \([N,N] : \Omega^1_{P,\ver} \to \Omega^2_{P,\hor}\) is well-defined as left \(H\)-covariant right \(P\)-linear map and given by
	\[
		\forall \mu \in \Lambda^1_H, \, \forall p \in P, \quad [N,N](\mu \cdot p) \coloneqq \tilde{N}(\dt{H}\mu) \cdot p.
	\]
	Now, for all \(p \in P\) and \(h \in H\),
	\begin{align*}
		p \cdot [N,N](\varpi(h)) &= p \cdot N(\varpi(\cm{h}{1})) \wedge N(\varpi(\cm{h}{2}))\\
		&= N(\ca{p}{-2} \act \varpi(\cm{h}{1})) \wedge N(\ca{p}{-1} \act \varpi(\cm{h}{2})) \cdot \ca{p}{0}\\
		&= \tilde{N}\mleft(\ca{p}{-2}\act\varpi(\cm{h}{1}) \wedge \ca{p}{-1} \act\varpi(\cm{h}{2})\mright) \cdot \ca{p}{0}\\
		&= \tilde{N}\mleft(\ca{p}{-1}\act\dt{H}\varpi(h)\mright) \cdot \ca{p}{0}\\
		&= \tilde{N}\mleft(\dt{H}(\ca{p}{-1}\act\varpi(h))\mright) \cdot \ca{p}{0}\\
		&= [N,N](p \cdot \varpi(h))
	\end{align*}
	by Proposition~\ref{wor}, so that \([N,N]\) is left \(P\)-linear. A qualitatively identical calculation now shows that \(-\iu{}[N,N]\) is \(\ast\)-preserving.
\end{proof}

\begin{lemma}\label{qpblem3}
	Recall that \((\Omega_H,\dt{H})\) is the canonical prolongation of \((\Omega^1_H,\dt{H})\) and that the subset \(\pr{\fr{At}}[\Omega^1_H]\) of \(\pr{\fr{At}}\) is assumed to be non-empty. Let \(\bA\) be a gauge potential on \(P\) with respect to the first-order horizontal calculus \((\Omega^1_B,\dt{B};\Omega^1_{P,\hor})\). The following are equivalent:
	\begin{enumerate}
		\item\label{qpblem3c} the relative gauge potential \(\bA\) is prolongable and canonically \((\Omega^{1}_H,\dt{H})\)-adapted;
		\item\label{qpblem3a} for every \(\nabla \in \pr{\fr{At}}[\Omega^{1}_H]\), the gauge potential \(\nabla + \bA\) is prolongable and \((\Omega^1_H,\dt{H})\)-adapted, and \(\id_P\) induces an arrow
	\[
		\left(\id_P : \pi[\Omega^{\leq 2}_H]\circ\Sigma[\Omega^{\leq 2}_H](\nabla) \to \pi[\Omega^{\leq 2}_H]\circ\Sigma[\Omega^{\leq 2}_H](\nabla+\bA)\right) \in \cG[\Omega^{\leq 2}_H];
	\]
		\item\label{qpblem3b} there exists \(\nabla \in \pr{\fr{At}}[\Omega^{1}_H]\), such that \(\nabla+\bA \in \pr{\fr{At}}[\Omega^{1}_H]\) and \(\id_P\) induces an arrow
	\[
		\left(\id_P : \pi[\Omega^{\leq 2}_H]\circ\Sigma[\Omega^{\leq 2}_H](\nabla) \to \pi[\Omega^{\leq 2}_H]\circ\Sigma[\Omega^{\leq 2}_H](\nabla+\bA)\right) \in \cG[\Omega^{\leq 2}_H].
	\]
	\end{enumerate}
	\end{lemma}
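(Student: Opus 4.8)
The plan is to establish the cyclic chain of implications \ref{qpblem3c} $\Rightarrow$ \ref{qpblem3a} $\Rightarrow$ \ref{qpblem3b} $\Rightarrow$ \ref{qpblem3c}; here \ref{qpblem3a} $\Rightarrow$ \ref{qpblem3b} is immediate, since $\pr{\fr{At}}[\Omega^1_H]$ is assumed non-empty. Throughout, write $\Omega_{P,\oplus} \coloneqq \Lambda_H \dvatimes \Omega_{P,\hor}$ and let $\Pi_\oplus : \Omega^1_{P,\oplus} \to \Omega^1_{P,\oplus}$ be the projection onto $\Omega^1_{P,\ver}$ along $\Omega^1_{P,\hor}$. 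The first observation is that the degree-$1$ part of any arrow $\id_P : (\Omega_{P,\oplus},\dt{P,\nabla}) \to (\Omega_{P,\oplus},\dt{P,\nabla+\bA})$ in $\cG[\Omega^{\leq 2}_H]$ is forced: since $\dt{P,\nabla+\bA}(p) = \dt{P,\nabla}(p) + \bA(p)$ and---once $\bA$ is known to be $(\Omega^1_H,\dt{H})$-adapted as a relative gauge potential, with $N \coloneqq \omega[\bA]$---$\bA = N \circ \dv{P} = N \circ \Pi_\oplus \circ \dt{P,\nabla}$, the degree-$1$ component must equal $\Phi \coloneqq \id + N \circ \Pi_\oplus$, which is automatically bijective (with inverse $\id - N\circ\Pi_\oplus$) because $N \circ \Pi_\oplus$ is a square-zero nilpotent. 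By Proposition~\ref{sodciso}, the endpoints $(\Omega_{P,\oplus},\dt{P,\nabla})$ and $(\Omega_{P,\oplus},\dt{P,\nabla+\bA})$ are objects of $\cG[\Omega^{\leq 2}_H]$ precisely when $\nabla$ and $\nabla+\bA$ lie in $\pr{\fr{At}}[\Omega^1_H]$; granting this, ``$\id_P$ induces an arrow in $\cG[\Omega^{\leq 2}_H]$'' is equivalent to the single assertion that $\Phi$ extends multiplicatively to $\Omega_{P,\oplus}$.

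The technical heart of the argument is the claim that, for an $(\Omega^1_H,\dt{H})$-adapted relative gauge potential $\bA$ with $N = \omega[\bA]$, the map $\Phi$ extends multiplicatively to a (necessarily bijective) graded $\ast$-algebra automorphism of $\Omega_{P,\oplus}$ if and only if $N$ satisfies \eqref{toteq1} and \eqref{toteq2}, i.e.\ if and only if $\bA$ is canonically $(\Omega^1_H,\dt{H})$-adapted. To prove this, I would use the presentation of $\Omega_{P,\oplus}$, truncated at degree $2$, as the graded $\ast$-algebra generated by $\Lambda_H$ and $\Omega_{P,\hor}$ subject to $1_{\Lambda_H} = 1_{\Omega_{P,\hor}}$, the braided graded commutation relations $\alpha \wedge \omega = (-1)^{\abs{\alpha}\abs{\omega}}\ca{\alpha}{-1}\act\omega\wedge\ca{\alpha}{0}$ between $\Omega_{P,\hor}$ and $\Lambda_H$, and---since $(\Omega_H,\dt{H})$ is Woronowicz's canonical prolongation---the presentation \eqref{wor} of $\Lambda^2_H$. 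Since $\Phi$ fixes $\Omega_{P,\hor}$ pointwise and sends $\mu \mapsto \mu + N(\mu)$ for $\mu \in \Lambda^1_H$, it respects the relations internal to $\Omega^2_{P,\hor}$ automatically, while the image under the prospective extension of a defining relation of $\Lambda^2_H$ has vanishing $\Lambda^2_H$-component (again by \eqref{wor}) and $\Omega^2_{P,\hor}$-component $N(\mu)\wedge N(\nu) + N(\ca{\mu}{-1}\act\nu)\wedge N(\ca{\mu}{0})$, which is precisely the left-hand side of \eqref{toteq2}; similarly, compatibility with the braided commutation relations forces $\alpha \wedge N(\mu) + N(\ca{\alpha}{-1}\act\mu)\wedge\ca{\alpha}{0} = 0$ for all $\alpha\in\Omega^1_{P,\hor}$ and $\mu\in\Lambda^1_H$, which---using $\Omega^1_{P,\hor} = P\cdot\Omega^1_B$ together with left $P$-linearity and $H$-covariance of $N$, exactly as in the proof of Lemma~\ref{qpblem1}---is equivalent to \eqref{toteq1}. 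The remaining mixed-degree, $\Lambda^1_H \otimes \Omega^1_{P,\hor}$-valued consistency conditions reduce, via the crossed-module axioms, $H$-covariance of $N$, and Proposition~\ref{worprop}, to consequences of \eqref{toteq1} and \eqref{toteq2}; conversely, if $\Phi$ is multiplicative, then specialising to $\alpha\in\Omega^1_B$ and reading off the $\Omega^2_{P,\hor}$-component of the relevant relations yields \eqref{toteq1} and \eqref{toteq2} directly. I expect this mutual well-definedness statement---the ``quadratic'' refinement of the first-order computation in the proof of Theorem~\ref{firstorderclassification}---to be the main obstacle, precisely because it is where the structure of the Woronowicz prolongation, through Proposition~\ref{worprop}, genuinely enters.

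With this in hand, \ref{qpblem3c} $\Rightarrow$ \ref{qpblem3a} proceeds as follows. Given $\bA$ prolongable and canonically $(\Omega^1_H,\dt{H})$-adapted, $\bA\in\pr{\fr{at}}$ and $N = \omega[\bA]$ satisfies \eqref{toteq1} and \eqref{toteq2}. Fix $\nabla\in\pr{\fr{At}}[\Omega^1_H]$. Then $\nabla+\bA\in\pr{\fr{At}}$ since $\bA\in\pr{\fr{at}}$; and by Proposition~\ref{fieldstrengthprop}(2), $\bF[\nabla+\bA]-\bF[\nabla] = -\iu{}\mleft(\dva{\nabla}\circ\bA + \dva{\bA}\circ\nabla + \dva{\bA}\circ\bA\mright)$, which---rewriting the right-hand side via $\bA = N\circ\dv{P}$, the prolongation formula $\dva{\bA}(\alpha) = N(\varpi_H(\ca{\alpha}{-1}))\wedge\ca{\alpha}{0}$, the map $\nabla N$ of Lemma~\ref{qpblem1}, and the map $[N,N]$ of Lemma~\ref{qpblem2}---factors as a (necessarily unique) left $H$-covariant morphism of $P$-$\ast$-bimodules composed with $\dv{P}$; since $\bF[\nabla] = F[\nabla]\circ\dv{P}$, this exhibits $F[\nabla+\bA]$ and shows $\nabla+\bA$ is $(\Omega^1_H,\dt{H})$-adapted, hence $\nabla+\bA\in\pr{\fr{At}}[\Omega^1_H]$. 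Finally, \eqref{toteq1} and \eqref{toteq2} make $\Phi$ extend multiplicatively, so $\id_P$ induces the required arrow in $\cG[\Omega^{\leq 2}_H]$.

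For \ref{qpblem3b} $\Rightarrow$ \ref{qpblem3c}: given such a $\nabla$, the affine structure of $\pr{\fr{At}}$ gives $\bA = (\nabla+\bA)-\nabla\in\pr{\fr{at}}$, so $\bA$ is prolongable; restricting the arrow $\id_P$ along the canonical homomorphism $\cG[\Omega^{\leq 2}_H] \to \cG[\Omega^1_H]$ produces an arrow $\id_P\in\ker\mu[\Omega^1_H]$ between the first-order truncations, whence by the proof of Theorem~\ref{firstorderclassification} the relative gauge potential $\bA$ is $(\Omega^1_H,\dt{H})$-adapted and $(\id_P)_\ast$ restricts on $\Omega^1_{P,\oplus}$ to $\Phi = \id + N\circ\Pi_\oplus$ with $N = \omega[\bA]$; and the hypothesis that $\id_P$ extends to $\cG[\Omega^{\leq 2}_H]$ says exactly that $\Phi$ extends multiplicatively to $\Omega_{P,\oplus}$, which by the well-definedness statement forces \eqref{toteq1} and \eqref{toteq2}, i.e.\ $\bA$ is canonically $(\Omega^1_H,\dt{H})$-adapted. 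This closes the cycle.
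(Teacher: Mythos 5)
Your proposal is correct and takes essentially the same route as the paper's own proof: the same cycle \ref{qpblem3c} \(\Rightarrow\) \ref{qpblem3a} \(\Rightarrow\) \ref{qpblem3b} \(\Rightarrow\) \ref{qpblem3c}, the same use of Lemmata~\ref{qpblem1} and~\ref{qpblem2} together with Proposition~\ref{fieldstrengthprop} to show \(\nabla+\bA\) is \((\Omega^1_H,\dt{H})\)-adapted with \(F[\nabla+\bA]=F[\nabla]-\iu(\nabla N+[N,N])\), and the same generators-and-relations extension of \(\Phi=\id+\omega[\bA]\circ\Pi_\oplus\) to \(\Omega_{P,\oplus}\), with \eqref{toteq1} and \eqref{toteq2} read off from multiplicativity of \((\id_P)_\ast\) in the converse direction. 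The only cosmetic difference is that in \ref{qpblem3b} \(\Rightarrow\) \ref{qpblem3c} you import the degree-\(1\) identification \((\id_P)_\ast = \id + N\circ\Pi_\oplus\) and the adaptedness of \(\bA\) from the proof of Theorem~\ref{firstorderclassification} (and get prolongability from the affine structure of \(\pr{\fr{At}}\)), where the paper re-derives these facts directly from the properties \(\Proj_1\circ(\id_P)_\ast=\Proj_1\) and \((\id_P)_\ast\circ\dt{P,\nabla}=\dt{P,\nabla+\bA}\); this is harmless.
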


\begin{proof}
	First, suppose that condition~\ref{qpblem3c} holds; let \(N \coloneqq \omega[\bA]\), so that \(\bA = N \circ \dv{P}\). Let \(\nabla \in \pr{\fr{At}}[\Omega^1_H]\) be given. First, by Lemma~\ref{qpblem1}, \(\bA \in \dva{\fr{at}}\), so that \(\nabla + \bA \in \pr{\fr{At}}\). Next, by Lemma~\ref{qpblem1}, \(\pr{\bA}\) is given by
	\[
		\forall \alpha \in \Omega^1_{P,\hor}, \quad \pr{\bA}(\alpha) = N(\varpi(\ca{\alpha}{-1})) \wedge \ca{\alpha}{0},
	\]
	while by Lemmata~\ref{qpblem1} and~\ref{qpblem2}, respectively, \(-\iu{}\nabla N\) and \(-\iu{}[N,N]\) are well-defined left \(H\)-covariant morphisms of \(P\)-\(\ast\)-algebras; hence, for all \(p \in P\),
	\begin{align*}
		\iu{}\left(\bF[\nabla+\bA]-\bF[\nabla]\right)(p) &= \left(\pr{\nabla} \circ \bA + \pr{\bA} \circ \nabla + \pr{\bA} \circ \bA\right)(p)\\
		&= \pr{\nabla}\mleft(N(\varpi(\ca{p}{-1})) \cdot \ca{p}{0}\mright) + \pr{\bA}\mleft(\nabla(p)\mright) + \pr{\bA}\mleft(\bA(p)\mright)\\
		&\begin{multlined}= \pr{\nabla} \circ N\mleft(\varpi(\ca{p}{-1})\mright) \cdot \ca{p}{0} - N\mleft(\varpi(\ca{p}{-1})\mright) \wedge \nabla(\ca{p}{0}) \\+ N\mleft(\varpi(\ca{\nabla(p)}{-1})\mright) \wedge \ca{\nabla(p)}{0} + N\mleft(\varpi(\ca{\bA(p)}{-1})\mright) \wedge  \ca{\bA(p)}{0}\end{multlined}\\
		&= \nabla N\mleft(\varpi(\ca{p}{0})\mright) \cdot \ca{p}{0} + N(\varpi(\ca{p}{-2})) \wedge N(\varpi(\ca{p}{-1})) \cdot \ca{p}{0}\\
		&= \left(\nabla N + [N,N]\right) \circ \dv{P}(p),
	\end{align*}
	so that \(\nabla + \bA \in \pr{\fr{At}}[\Omega^{1}_H]\) with curvature \(2\)-form
	\[
		F[\nabla+\bA] = F[\nabla]-\iu{}\left(\nabla N + [N,N]\right).
	\]
	
	Let us now show that \(\id_P\) induces a morphism \[\left(\id_P : \pi[\Omega^{\leq 2}_H]\circ\Sigma[\Omega^{\leq 2}_H](\nabla) \to \pi[\Omega^{\leq 2}_H]\circ\Sigma[\Omega^{\leq 2}_H](\nabla+\bA)\right) \in \cG[\Omega^{\leq 2}_H].\] Since \(\Omega_{P,\oplus} \coloneqq \Lambda_H \dvatimes \Omega_{P,\hor}\) is generated as a left \(H\)-covariant graded \(\ast\)-algebra over \(P\) by \(\Lambda^1_H\) and \(\Omega^1_{P,\hor}\) subject to the relations
	\begin{gather*}
		\forall \mu \in \Lambda^1_H, \, \forall \alpha \in \Omega^1_{P,\hor}, \quad \alpha \wedge \mu + \ca{\alpha}{-1} \act \mu \wedge \ca{\alpha}{0} = 0,\\
		\forall \mu,\nu \in \Lambda^1_H, \quad \forall \mu \wedge \nu + \ca{\mu}{-1}\act\nu \wedge \ca{\mu}{0},
	\end{gather*}
	and since, by the proof of Lemma~\ref{qpblem2}, the map \(N : \Omega^1_{P,\ver} \to \Omega^1_{P,\hor}\) satisfies
	\begin{gather*}
		\forall \mu \in \Lambda^1_H, \, \forall \alpha \in \Omega^1_{P,\hor}, \quad \alpha \wedge N(\mu) + N(\ca{\alpha}{-1} \act \mu) \wedge \ca{\alpha}{0} = 0,\\
		\forall \mu,\nu \in \Lambda^1_H, \quad N(\mu) \wedge N(\nu) + N(\ca{\mu}{-1}\act\nu) \wedge N(\ca{\mu}{0}),
	\end{gather*}
	the left \(H\)-covariant morphisms \(\phi : \Omega^1_{P,\oplus} \to \Omega^1_{P,\oplus}\) and \(\psi : \Omega^1_{P,\oplus} \to \Omega^1_{P,\oplus}\) given by
	\begin{gather*}
		\forall \omega \in \Omega^1_{P,\ver}, \, \forall \alpha \in \Omega^1_{P,\hor}, \quad \phi(\omega+\alpha) \coloneqq \omega + N(\omega) + \alpha,\\
		\forall \omega \in \Omega^1_{P,\ver}, \, \forall \alpha \in \Omega^1_{P,\hor}, \quad \psi(\omega+\alpha) \coloneqq \omega - N(\omega) + \alpha
	\end{gather*}
	respectively, extend uniquely to left \(H\)-covariant graded \(\ast\)-endomorphisms of \(\Omega_{P,\oplus}\), such that \(\rest{\phi}{P} = \rest{\psi}{P} = \id_P\). Moreover, for all \(\omega \in \Omega^1_{P,\ver}\) and \(\alpha \in \Omega^1_{P,\hor}\),
	\begin{gather*}
		\psi \circ \phi(\omega+\alpha) = \psi(\omega+N(\omega)+\alpha) = \omega - N(\omega) + N(\omega) + \alpha = \omega + \alpha,\\
		\phi \circ \psi(\omega+\alpha) = \phi(\omega-N(\omega)+\alpha) = \omega+N(\omega)-N(\omega)+\alpha=\omega+\alpha,
	\end{gather*}
	so that \(\phi\) and \(\psi\) are automorphisms with \(\phi^{-1}=\psi\). Finally, on the one hand,
	\[
		\ver[\dt{P,\nabla+\bA}] \circ \phi = \Proj_1 \circ \phi = \Proj_1 = \ver[\dt{P,\nabla}],
	\]
	where \(\Proj_1 : \Omega^1_{P,\oplus} \to \Omega^1_{P,\ver}\) is the projection onto \(\Omega^1_{P,\ver}\) along \(\Omega^1_{P,\hor}\), while on the other, for all \(p \in P\),
	\begin{multline*}
		\phi \circ \dt{P,\nabla}(p) = \phi(\dv{P}+\nabla(p)) = \dv{P}(p) + N(\dv{P}(p)) + \nabla(p)\\ = \dv{P}(p) + \bA(p) + \nabla(p) = \dt{P,\nabla+\bA} \circ \phi(p).
	\end{multline*}
	Hence, \(\id_P\) induces an arrow \(\id_P : \pi[\Omega^{\leq 2}_H]\circ \Sigma[\Omega^{\leq 2}_H](\nabla) \to \pi[\Omega^{\leq 2}_H]\circ \Sigma[\Omega^{\leq 2}_H](\nabla+\bA)\) in \(\cG[\Omega^{\leq 2}_H]\) with \((\id_P)_\ast = \phi\), as required. Since \(\nabla \in \pr{\fr{At}}[\Omega^1_H]\) was arbitrary, condition~\ref{qpblem3a} is satsified. 
	
	Now, condition~\ref{qpblem3a} trivially implies condition~\ref{qpblem3b}, so suppose that condition~\ref{qpblem3b} is satisfied. Fix \(\nabla \in \pr{\fr{At}}[\Omega^{1}_H]\), such that \(\id_P\) induces an arrow
	\[
		\left(\id_P : \pi[\Omega^{\leq 2}_H]\circ\Sigma[\Omega^{\leq 2}_H](\nabla) \to \pi[\Omega^{\leq 2}_H]\circ\Sigma[\Omega^{\leq 2}_H](\nabla+\bA)\right) \in \cG[\Omega^{\leq 2}_H].
	\]
	Hence, let \(\Phi \coloneqq (\id_P)_\ast : \Omega_{P,\oplus} \to \Omega_{P,\oplus}\), so that \(\Phi\) is an automorphism of the graded \(H\)-comodule \(\ast\)-algebra \(\Omega_{P,\oplus}\), such that \(\rest{\Phi}{P} = \id_P\) and
	\begin{gather*}
		\Proj_1 \circ \rest{\Phi}{\Omega^1_{P,\oplus}} = \ver[\dt{P,\nabla+\bA}] \circ (\id_P)_\ast = (\id_P)_{\ast,\ver} \circ \ver[\dt{P,\nabla}] = \Proj_1,\\
		\dt{P,\nabla+\bA} = \dt{P,\nabla+\bA} \circ \id_P = (\id_P)_\ast \circ \dt{P,\nabla} = \Phi \circ \dt{P,\nabla},
	\end{gather*}
	where \(\Proj_1 : \Omega^1_{P,\oplus} \to \Omega^1_{P,\ver}\) is the projection onto \(\Omega^1_{P,\ver}\) along \(\Omega^1_{P,\hor}\). On the one hand, since \(\Proj_1 \circ \rest{\Phi}{\Omega^1_{P,\oplus}} = \Proj_1\), it follows that \((\id-\Phi)(\Omega^1_{P,\ver}) \subseteq \ker \Proj_1 = \Omega^1_{P,\hor}\). On the other hand, for all \(b \in B\), we see that
	\[
		\Phi(\dt{B}(b)) = \Phi \circ \dt{P,\nabla}(b) = \dt{P,\nabla+\bA}(b) = \dt{B}(b),
	\]
	so that, in turn,
	\(
		\rest{\Phi}{\Omega^1_{P,\hor}} = \id_{\Omega^1_{P,\hor}}
	\).
	Thus we can define a left \(H\)-covariant morphism \(N : \Omega^1_{P,\ver} \to \Omega^1_{P,\hor}\) of \(P\)-\(\ast\)-bimodules by
	\[
		\forall \omega \in \Omega^1_{P,\ver}, \quad N(\omega) \coloneqq \omega - \Phi(\omega);
	\]
	we claim that \(\bA = N \circ \dv{P} \in \pr{\fr{at}}[\Omega_H]\). First, for every \(p \in P\),
	\begin{multline*}
		N \circ \dv{P}(p) = \dv{P}(p) - \Phi(\dv{P}(p)) = \dv{P}(p) - \Phi(\dt{P,\nabla}(p)-\nabla(p))\\ = \dv{P}(p) - \dt{P,\nabla+\bA}(p) - \nabla(p) = \bA(p),
	\end{multline*}
	so that \(\bA = N \circ \dv{P} \in \fr{at}[\Omega^1_H]\). Next, for all \(\mu \in \Lambda^1_H\) and \(\alpha \in \Omega^1_B\),
	\[
		N(\mu) \wedge \alpha = (\mu - \Phi(\mu)) \wedge \alpha = \mu \wedge \alpha - \Phi(\mu \wedge \alpha) = -\alpha \wedge \mu + \Phi(\alpha \wedge \mu) = -\alpha \wedge (\id-\Phi)(\mu) = -\alpha \wedge N(\mu).
	\]
	Finally, for all \(\mu,\nu \in \Lambda^1_H\),
	\begin{align*}
		N(\mu) \wedge N(\nu) &= \mu \wedge \nu  - \Phi(\mu) \wedge \nu - \mu \wedge \Phi(\nu) + \Phi(\mu) \wedge \Phi(\nu)\\
		&= \mu \wedge \nu - \Phi(\mu) \wedge \nu - \Phi(\Phi^{-1}(\mu) \wedge \nu) + \Phi(\mu \wedge \nu)\\
		&= -\ca{\mu}{-1}\act\nu \wedge \ca{\mu}{0} + \ca{\Phi(\mu)}{-1}\act\nu \wedge \ca{\Phi(\mu)}{0} + \Phi\mleft(\ca{\Phi^{-1}(\mu)}{-1}\act\nu \wedge \ca{\Phi^{-1}(\mu)}{0}\mright)\\&\quad\quad + \Phi\mleft(\ca{\mu}{-1}\act\nu \wedge \ca{\mu}{0}\mright)\\
		&= -\ca{\mu}{-1}\act\nu \wedge \ca{\mu}{0} + \ca{\mu}{-1}\act \nu \wedge \Phi(\ca{\mu}{0}) + \Phi\mleft(\ca{\mu}{-1} \act \nu \wedge \inv{\Phi}(\ca{\mu}{0})\mright)\\&\quad\quad + \Phi\mleft(\ca{\mu}{-1}\act\nu \wedge \ca{\mu}{0}\mright)\\
		&= -N(\ca{\mu}{-1}\act\nu) \wedge N(\ca{\mu}{0}).
	\end{align*}
	Hence, \(\bA = N \circ \dv{P} \in \pr{\fr{at}}_{\can}[\Omega^{1}_H]\), as required. Thus, condition~\ref{qpblem3c} holds.
\end{proof}

\begin{proof}[Proof of Theorem~\ref{qpbthm}]

Let us first show that the subset \(\pr{\fr{at}}_{\can}[\Omega^{1}_H]\) is \(\pr{\fr{G}}\)-invariant. Let \(\bA \in \pr{\fr{at}}_{\can}[\Omega^{1}_H]\) and \(N \coloneqq \omega[\bA]\); let \(\phi \in \pr{\fr{G}}\), so that \(\phi \act \bA = (\phi \act N) \circ \dv{P} \in \fr{at}[\Omega^1_H]\) for 
\[
	\phi \act N \coloneqq \phi_\ast \circ N \circ (\id \otimes \inv{\phi}).
\]
Then, for all \(\mu \in \Lambda^1_H\) and \(\alpha \in \Omega^1_B\),
\[
	(\phi \act N)(\mu) \wedge \alpha + \alpha \wedge (\phi \act N)(\nu) = \phi_\ast \mleft(N(\mu)\wedge \alpha + \alpha \wedge N(\nu)\mright) = 0,
\]
while for all \(\mu,\nu \in \Lambda^1_H\),
\begin{multline*}
	(\phi \act N)(\mu) \wedge (\phi \act N)(\nu) + (\phi \act N)(\ca{\mu}{-1}\act\nu) \wedge (\phi \act N)(\ca{\mu}{0})\\ = \phi_\ast\mleft(N(\mu) \wedge N(\nu) + N(\ca{\mu}{-1}\act\nu) \wedge N(\ca{\mu}{0}) \mright) = 0,
\end{multline*}
so that \(\phi \act \bA \in \pr{\fr{at}}_{\can}[\Omega^{1}_H]\).

Next, let us show that \(\pr{\fr{at}}_{\can}[\Omega^1_H]\) is a subspace of \(\fr{at}[\Omega^1_H] \cap \pr{\fr{at}}\). First, by Lemma~\ref{qpblem1}, \(\pr{\fr{at}}_{\can}[\Omega^1_H] \subset \fr{at}[\Omega^1_H] \cap \pr{\fr{at}}\). Next, since \(0 \in \pr{\fr{at}}_{\can}[\Omega^1_H]\) and since equations \eqref{toteq1} and \eqref{toteq2} are homogeneous of degree \(1\) and \(2\) respectively, it follows that \(\pr{\fr{at}}_{\can}[\Omega^1_H]\) is non-empty and closed under scalar multiplication by \(\bR\).  Finally, let \(\bA_1\), \(\bA_2 \in \pr{\fr{at}}_{\can}[\Omega^1_H]\). Let \(\nabla \in \pr{\fr{At}}[\Omega^1_H]\) be given. By Lemma~\ref{qpblem3}, \(\nabla+\bA_1 \in \pr{\fr{At}}[\Omega^1_H]\), so that \[\left(\id_P : \pi[\Omega^{\leq 2}_H] \circ \Sigma[\Omega^{\leq 2}_H](\nabla) \to \pi[\Omega^{\leq 2}_H] \circ \Sigma[\Omega^{\leq 2}_H](\nabla+\bA_1)\right) \in \cG[\Omega^{\leq 2}_H];\] let \(\Phi \coloneqq (\id_P)_\ast\) be the unique left \(H\)-covariant automorphism of the \(P\)-\(\ast\)-bimodule \(\Omega^1_{P,\oplus}\), such that \(\Phi \circ \dt{P,\nabla} = \dt{P,\nabla+\bA_1}\). Hence, by Lemma~\ref{qpblem3}, \(\nabla+\bA_1 + \bA_2 \in \pr{\fr{At}}[\Omega^1_H]\), so that \[\left(\id_P : \pi[\Omega^{\leq 2}_H]\circ\Sigma[\Omega^{\leq 2}_H](\nabla+\bA_1) \to \pi[\Omega^{\leq 2}_H]\circ\Sigma[\Omega^{\leq 2}_H](\nabla+\bA_1+\bA_2)\right) \in \cG[\Omega^{\leq 2}_H];\] let \(\Psi \coloneqq (\id_P)_\ast\) be the unique left \(H\)-covariant automorphism of the \(P\)-\(\ast\)-bimodule \(\Omega^1_{P,\oplus}\), such that \(\Psi \circ \dt{P,\nabla+\bA_1} = \dt{P,\nabla+\bA_1+\bA_2}\). Then \(\Psi \circ \Phi\) is a left \(H\)-covariant automorphism of the \(P\)-\(\ast\)-bimodule \(\Omega^1_{P,\oplus}\) satisfying
\begin{gather*}
	(\Psi \circ \Phi) \circ \dt{P,\nabla} = \Phi \circ \dt{P,\nabla+\bA_1} = \dt{P,\nabla+\bA_1+\bA_2},\\
	\ver[\dt{P,\nabla+\bA_1+\bA_2}] \circ (\Psi \circ \Phi) = \ver[\dt{P,\nabla+\bA_1}] \circ \Phi = \ver[\dt{P,\nabla}],
\end{gather*}
so that \(\id_P : \pi[\Omega^{\leq 2}_H] \circ \Sigma[\Omega^{\leq 2}_H](\nabla) \to \pi[\Omega^{\leq 2}_H]\circ\Sigma[\Omega^{\leq 2}_H](\nabla+\bA_1+\bA_2)\) defines an arrow in \(\cG[\Omega^{\leq 2}_H]\). Hence, by Lemma~\ref{qpblem3}, \(\bA_1+\bA_2 \in \pr{\fr{at}}_{\can}[\Omega^1_H]\). Thus, \(\pr{\fr{at}}_{\can}[\Omega^1_H]\) is a subspace of \(\pr{\fr{at}} \cap \fr{at}[\Omega^1_H]\). Note that \(\pr{\fr{At}}[\Omega^1_H] \subset \pr{\fr{At}}\) is invariant under translation by the subspace \(\pr{\fr{at}}_{\can}[\Omega^1_H] \subset \pr{\fr{at}}\) by Lemma~\ref{qpblem3}. 

Finally, by Proposition~\ref{sodcequiv}, the covering \(\pi[\Omega^{\leq 2}_H] : \cG[\Omega^{\leq 2}_H] \ltimes \cA[\Omega^{\leq 2}_H] \to \cG[\Omega^{\leq 2}_H]\), the star-injective groupoid homomorphism \(\mu[\Omega^{\leq 2}_H] : \cG[\Omega^{\leq 2}_H] \to \Aut(P)\), the injective groupoid homomorphism \(\Sigma[\Omega^{\leq 2}_H] : \pr{\fr{G}} \ltimes \pr{\fr{At}}[\Omega^1_H] \to \cG[\Omega^{\leq 2}_H] \ltimes \cA[\Omega^{\leq 2}_H]\), and the left inverse \(A[\Omega^{\leq 2}_H]\) of \(\Sigma[\Omega^{\leq 2}_H]\)  satisfy the hypotheses of Lemma~\ref{groupoidlemma}. Thus, the equivalence kernel \(\sim\) of the map
\[
	\left(\nabla \mapsto \left[\pi[\Omega^{\leq 2}_H] \circ \Sigma[\Omega^{\leq 2}_H](\id_P,\nabla)\right]_{\ker\mu[\Omega^{\leq 2}_H]}\right) : \pr{\fr{At}}[\Omega^1_H] \to \Ob(\cG[\Omega^{\leq 2}_H]/\ker\mu[\Omega^{\leq 2}_H])
\]
is a \(\pr{\fr{G}}\)-invariant equivalence relation, the subgroupoid \(\ker\mu[\Omega^{\leq 2}_H]\) is wide and has trivial isotropy groups, the quotient groupoid \(\cG[\Omega^{\leq 2}_H]/\ker\mu[\Omega^{\leq 2}_H]\) is well-defined, and there exists a unique isomorphism \(\pr{\tilde{\Sigma}}[\Omega^{\leq 2}_H] : \pr{\fr{G}} \ltimes \pr{\fr{At}}[\Omega^1_H]/\sim \, \iso \cG[\Omega^{\leq 2}_H]/\ker\mu[\Omega^{\leq 2}_H]\), such that
\[
	\forall (\phi,\nabla) \in \pr{\fr{G}} \ltimes \pr{\fr{At}}[\Omega^1_H], \quad \pr{\tilde{\Sigma}}[\Omega^{\leq 2}_H](\phi,[\nabla]_\sim) = \left[\pi[\Omega^{\leq 2}_H] \circ \Sigma[\Omega^{\leq 2}_H](\phi,\nabla)\right]_{\ker\mu[\Omega^{\leq 2}_H]}.
\]
Lemma~\ref{qpblem3} now implies that \(\sim\) is the orbit equivalence relation with respect to the translation action of \(\pr{\fr{at}}_{\can}[\Omega^1_H] \leq \pr{\fr{at}}\) on \(\pr{\fr{At}}[\Omega^1_H] \subset \pr{\fr{At}}\).
\end{proof}

Finally, observe that the subspace \(\Inn(\pr{\fr{at}})\) of inner prolongable gauge potentials acts by translation on the affine space \(\pr{\fr{At}}\). Given a bicovariant \fodc{} \((\Omega^1_H,\dt{H})\) on \(H\) that is locally freeing for \(P\), we can now characterise the stabiliser subgroup in \(\Inn(\pr{\fr{at}})\) of the affine quadric subset \(\pr{\fr{At}}[\Omega^1_H]\) of \(\pr{\fr{At}}\). This will turn out to be the \(\pr{\fr{G}}\)-invariant \(\bR\)-linear subspace of all inner prolongable relative gauge potentials of the following form.

\begin{definition}
	Let \((\Omega^1_H,\dt{H})\) be a bicovariant \fodc{} on \(H\) that is locally freeing for \(P\). Let \(\bA\) be an inner prolongable relative gauge potential on \(P\) with respect to the second-order horizontal calculus \((\Omega_B,\dt{B};\Omega_{P,\hor})\). We say that \(\bA\) is \emph{\((\Omega^1_H,\dt{H})\)-semi-adapted} whenever
	\[
		\bF_{\rel}[\bA] = F_{\rel}[\bA] \circ \dv{P},
	\]
	for some (necessarily unique) left \(H\)-covariant morphism \(F_{\rel}[\bA] : \Omega^1_{P,\ver} \to \Omega^2_{P,\hor}\) of \(P\)-\(\ast\)-bimodules, in which case, we call \(F_{\rel}[\bA]\) the \emph{relative curvature \(2\)-form} of \(\bA\). We denote by \(\Inn(\pr{\fr{at}};\Omega^1_H)\) the subspace of all \((\Omega^1_H,\dt{H})\)-semi-adapted inner prolongable gauge potentials on \(P\) with respect to \((\Omega_B,\dt{B};\Omega_{P,\hor})\).
\end{definition}

\begin{example}
	By Example~\ref{relex}, for every \(\phi \in \Inn(\fr{G})\) and \(\nabla \in \pr{\fr{At}}\), we have
	\[
		\phi \act \nabla -\nabla \in \Inn(\pr{\fr{at}};\Omega^1_H), \quad F_{\rel}[\phi \act \nabla-\nabla] = 0.
	\] 
\end{example}

\begin{proposition}\label{semiadapt}
	Let \((\Omega^1_H,\dt{H})\) be a bicovariant \fodc{} on \(H\) that is locally freeing for \(P\). Suppose that the \((\Omega^1_H,\dt{H})\)-adapted Atiyah space \(\pr{\fr{At}}[\Omega^1_H]\) is non-empty. Let \(\bA\) be an inner prolongable relative gauge potential on \(P\) with respect to \((\Omega_B,\dt{B};\Omega_{P,\hor})\). The following are equivalent:
	\begin{enumerate}
		\item the inner prolongable relative gauge potential \(\bA\) is \((\Omega^1_H,\dt{H})\)-semi-adapted;
		\item for all \(\nabla \in \pr{\fr{At}}[\Omega^1_H]\), the prolongable gauge potential \(\nabla+\bA\) is also \((\Omega^1_H,\dt{H})\)-adapted;
		\item there exists \(\nabla \in \pr{\fr{At}}[\Omega^1_H]\), such that \(\nabla+\bA \in \pr{\fr{At}}\) is also \((\Omega^1_H,\dt{H})\)-adapted.
	\end{enumerate}
	Thus, in particular, the affine quadric subset \(\pr{\fr{At}}[\Omega^1_H]\) of the affine space \(\pr{\fr{At}}\) is invariant under translation by \(\Inn(\pr{\fr{at}};\Omega^1_H)\).
\end{proposition}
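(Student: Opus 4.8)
The plan is to reduce the whole statement to the defining identity
\[
	\forall \nabla \in \pr{\fr{At}}, \quad \bF[\nabla + \bA] - \bF[\nabla] = \bF_{\rel}[\bA],
\]
which holds by the very definition of the relative field strength of the inner prolongable relative gauge potential $\bA$ (its $\nabla$-independence being guaranteed by Proposition~\ref{innpot}/Corollary~\ref{relcor}), together with one structural observation: since $(\Omega^1_H,\dt{H})$ is locally freeing for $P$, one has $\Omega^1_{P,\ver} = P \cdot \dv{P}(P)$, so any left $H$-covariant morphism of $P$-$\ast$-bimodules out of $\Omega^1_{P,\ver}$ is determined by its precomposition with $\dv{P}$. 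Consequently the relative curvature $2$-form $F_{\rel}[\bA]$ is unique when it exists; the property ``factors as $(\text{morphism of }P\text{-}\ast\text{-bimodules}) \circ \dv{P}$'' is additive on the real vector space of left $H$-covariant $\ast$-derivations $P \to \Omega^2_{P,\hor}$ vanishing on $B$; and it is preserved under adding any map of the form $F \circ \dv{P}$ with $F$ such a morphism.

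With this in hand, I would first prove that condition (1) implies condition (2). Fix an arbitrary $\nabla \in \pr{\fr{At}}[\Omega^1_H]$; since $\bA \in \Inn(\pr{\fr{at}}) \subseteq \pr{\fr{at}}$, Proposition~\ref{fieldstrengthprop} gives $\nabla + \bA \in \pr{\fr{At}}$, and combining the displayed identity with $\bF[\nabla] = F[\nabla] \circ \dv{P}$ and the hypothesis $\bF_{\rel}[\bA] = F_{\rel}[\bA] \circ \dv{P}$ yields $\bF[\nabla + \bA] = (F[\nabla] + F_{\rel}[\bA]) \circ \dv{P}$, so $\nabla + \bA$ is $(\Omega^1_H,\dt{H})$-adapted with curvature $2$-form $F[\nabla] + F_{\rel}[\bA]$; as $\nabla$ was arbitrary, (2) holds. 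The implication (2) $\Rightarrow$ (3) is immediate because $\pr{\fr{At}}[\Omega^1_H]$ is assumed non-empty. For (3) $\Rightarrow$ (1), given $\nabla \in \pr{\fr{At}}[\Omega^1_H]$ with $\nabla + \bA \in \pr{\fr{At}}$ also $(\Omega^1_H,\dt{H})$-adapted, I subtract: $\bF_{\rel}[\bA] = \bF[\nabla+\bA] - \bF[\nabla] = (F[\nabla+\bA] - F[\nabla]) \circ \dv{P}$, and $F[\nabla+\bA] - F[\nabla]$ is again a left $H$-covariant morphism of $P$-$\ast$-bimodules, so $\bA$ is $(\Omega^1_H,\dt{H})$-semi-adapted (with $F_{\rel}[\bA] = F[\nabla+\bA] - F[\nabla]$ by the uniqueness noted above). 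The ``in particular'' clause then follows at once: for $\bA \in \Inn(\pr{\fr{at}};\Omega^1_H)$ and $\nabla \in \pr{\fr{At}}[\Omega^1_H]$, one has $\nabla + \bA \in \pr{\fr{At}}$ by Proposition~\ref{fieldstrengthprop}, and $\nabla + \bA$ is $(\Omega^1_H,\dt{H})$-adapted by the implication (1) $\Rightarrow$ (2), whence $\nabla + \bA \in \pr{\fr{At}}[\Omega^1_H]$.

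There is no serious obstacle: the computational substance was already carried out in Propositions~\ref{fieldstrengthprop} and~\ref{innpot} and Proposition-Definition~\ref{canprol}. The only point requiring genuine care is the bookkeeping in the first paragraph, namely using $\Omega^1_{P,\ver} = P \cdot \dv{P}(P)$ to see that ``factoring through $\dv{P}$'' is a linear condition and that the relative curvature $2$-form is unambiguous — this is also what makes $\Inn(\pr{\fr{at}};\Omega^1_H)$ an honest $\bR$-linear subspace, consistently with the $\bR$-linearity of $\bF_{\rel}$ recorded in Corollary~\ref{relcor}.
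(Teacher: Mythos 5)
Your proposal is correct and follows essentially the same route as the paper's own proof: both directions reduce to the defining identity \(\bF[\nabla+\bA]-\bF[\nabla]=\bF_{\rel}[\bA]\), adding the factorisations \(F[\nabla]\circ\dv{P}\) and \(F_{\rel}[\bA]\circ\dv{P}\) for (1)\(\Rightarrow\)(2) and subtracting \(F[\nabla+\bA]\circ\dv{P}-F[\nabla]\circ\dv{P}\) for (3)\(\Rightarrow\)(1), with (2)\(\Rightarrow\)(3) immediate from non-emptiness. Your extra remarks on uniqueness of the factoring morphism via \(\Omega^1_{P,\ver}=P\cdot\dv{P}(P)\) and on linearity of the factorisation condition are consistent with, and only make explicit, what the paper leaves implicit.
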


\begin{proof}
	First, suppose that \(\bA \in \Inn(\pr{\fr{at}};\Omega^1_H)\). Let \(\nabla \in \pr{\fr{At}}[\Omega^1_H]\). Then \(\nabla + \bA \in \pr{\fr{At}}\) with
	\[
		\bF[\nabla+\bA] = \bF[\nabla] + \bF_{\rel}[\bA] = F[\nabla] \circ \dv{P} + F_{\rel}[\bA] \circ \dv{P} = \left(F[\nabla] + F_{\rel}[\bA]\right) \circ \dv{P},
	\]
	so that \(\nabla + \bA \in \pr{\fr{At}}[\Omega^1_H]\) with \(F[\nabla+\bA] = F[\nabla]+F_{\rel}[\bA]\). Now, suppose that there exists \(\nabla \in \pr{\fr{At}}[\Omega^1_H]\), such that \(\nabla+\bA \in \pr{\fr{At}}[\Omega^1_H]\). Then
	\[
		\bF_{\rel}[\bA] = \bF[\nabla+\bA] - \bF[\nabla] = F[\nabla+\bA] \circ \dv{P} - F[\nabla] \circ \dv{P} = \left(F[\nabla+\bA]-F[\nabla]\right)\circ\dv{P},
	\]
	so that \(\bA \in \Inn(\pr{\fr{at}};\Omega^1_H)\) with \(F_{\rel}[\bA] = F[\nabla+\bA]-F[\nabla]\).
\end{proof}

We now summarise the basic properties of the subspace \(\Inn(\pr{\fr{at}};\Omega^1_H)\) and observe that the action of \(\pr{\fr{G}}\) on \(\pr{\fr{At}}[\Omega^1_H]\) descends to a second-order analogue of the action of \(\Out(\fr{G})\) on \(\Out(\fr{At})\); this all follows, \emph{mutatis mutandis}, from the proof of Proposition~\ref{innprop}.

\begin{proposition}
	Let \((\Omega^1_H,\dt{H})\) be a bicovariant \fodc{} on \(H\) that is locally freeing for \(P\). Suppose that \(\pr{\fr{At}}[\Omega^1_H]\) is non-empty. The subspace \(\Inn(\pr{\fr{at}};\Omega^1_H)\) of \(\Inn(\pr{\fr{at}})\) consists of \(\pr{\fr{G}}\)-invariant vectors, so that the affine action of \(\pr{\fr{G}}\) on the affine space \(\pr{\fr{At}}\) descends to an affine action of \(\pr{\fr{G}}\) on the quotient affine space
	\(
		\pr{\fr{At}}/\Inn(\pr{\fr{at}};\Omega^1_H)
	\).
	Furthermore, the inner gauge group \(\Inn(\fr{G})\) acts trivially on \(\pr{\fr{At}}/\Inn(\pr{\fr{at}};\Omega^1_H)\), so that the action of \(\pr{\fr{G}}\) on \(\pr{\fr{At}}/\Inn(\pr{\fr{at}};\Omega^1_H)\) further descends to an affine action of the outer prolongable gauge group \(\Out(\pr{\fr{G}})\) on \(\pr{\fr{At}}/\Inn(\pr{\fr{at}};\Omega^1_H)\) that restricts, in turn, to an action on	 the  quadric subset\[
		\Out(\pr{\fr{At}}[\Omega^1_H]) \coloneqq \pr{\fr{At}}[\Omega^1_H]/\Inn(\pr{\fr{at}};\Omega^1_H).
	\]
\end{proposition}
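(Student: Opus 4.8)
The plan is to derive this proposition as a direct refinement of Proposition~\ref{innprop}, using Proposition~\ref{semiadapt} and Example~\ref{relex} to track the slightly smaller subspace \(\Inn(\pr{\fr{at}};\Omega^1_H) \leq \Inn(\pr{\fr{at}})\). First I would confirm that \(\Inn(\pr{\fr{at}};\Omega^1_H)\) is genuinely an \(\bR\)-linear subspace and not merely a subset: it is cut out of the linear subspace \(\Inn(\pr{\fr{at}})\) by the condition that \(\bF_{\rel}[\bA]\) factor through \(\dv{P}\), and since the factoring morphism \(F_{\rel}[\bA]\) is unique when it exists, this condition is homogeneous and additive in \(\bA\), exactly as linearity of \(\omega[\cdot]\) on \(\fr{at}[\Omega^1_H]\) and of the analogous maps on \(\pr{\fr{at}}_{\can}[\Omega^1_H]\) was argued. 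Granting this, the first assertion is immediate: by Proposition~\ref{innprop} all of \(\Inn(\pr{\fr{at}})\) consists of \(\pr{\fr{G}}\)-invariant vectors, so \emph{a fortiori} so does its subspace \(\Inn(\pr{\fr{at}};\Omega^1_H)\), and hence the affine action of \(\pr{\fr{G}}\) on \(\pr{\fr{At}}\) (Proposition~\ref{fieldstrengthprop}) descends to an affine action on \(\pr{\fr{At}}/\Inn(\pr{\fr{at}};\Omega^1_H)\).

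Next I would show that \(\Inn(\fr{G})\) acts trivially on this quotient. Let \(\phi \in \Inn(\fr{G})\) and \(\nabla \in \pr{\fr{At}}\). By the proof of Proposition~\ref{innprop} we have \(\phi \act \nabla - \nabla \in \Inn(\pr{\fr{at}})\), and by Example~\ref{relex} its relative field strength vanishes, \(\bF_{\rel}[\phi \act \nabla - \nabla] = 0 = 0 \circ \dv{P}\). Hence \(\phi \act \nabla - \nabla\) is \((\Omega^1_H,\dt{H})\)-semi-adapted with zero relative curvature \(2\)-form, i.e.\ \(\phi \act \nabla - \nabla \in \Inn(\pr{\fr{at}};\Omega^1_H)\), so \(\phi\) fixes the class \(\nabla + \Inn(\pr{\fr{at}};\Omega^1_H)\). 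Since \(\Inn(\fr{G})\) is a central—hence normal—subgroup of \(\pr{\fr{G}}\) (by the Proposition-Definition of \(\pr{\fr{G}}\) and Proposition~\ref{innergaugeprop1}) acting trivially, the affine action of \(\pr{\fr{G}}\) descends further to an affine action of \(\Out(\pr{\fr{G}}) = \pr{\fr{G}}/\Inn(\fr{G})\) on \(\pr{\fr{At}}/\Inn(\pr{\fr{at}};\Omega^1_H)\).

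Finally, for the restriction to the quadric subset I would invoke two stability facts already established: \(\pr{\fr{At}}[\Omega^1_H]\) is \(\pr{\fr{G}}\)-invariant (the proposition asserting \(\pr{\fr{G}}\)-invariance of the \((\Omega^1_H,\dt{H})\)-adapted prolongable Atiyah space), and by the last clause of Proposition~\ref{semiadapt} it is invariant under translation by \(\Inn(\pr{\fr{at}};\Omega^1_H)\). Together these say that the image of \(\pr{\fr{At}}[\Omega^1_H]\) in \(\pr{\fr{At}}/\Inn(\pr{\fr{at}};\Omega^1_H)\)—namely \(\Out(\pr{\fr{At}}[\Omega^1_H])\)—is well-defined and stable under the \(\Out(\pr{\fr{G}})\)-action constructed in the previous step; it is a quadric subset because \(\pr{\fr{At}}[\Omega^1_H]\) is an affine quadric subset of \(\pr{\fr{At}}\) and this feature is preserved under passage to an affine quotient.

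The argument is essentially bookkeeping layered on top of Proposition~\ref{innprop}; the one genuinely new input is the vanishing \(\bF_{\rel}[\phi \act \nabla - \nabla] = 0\) for \(\phi \in \Inn(\fr{G})\) from Example~\ref{relex}, which is what promotes "trivial action on \(\Out(\pr{\fr{At}}) = \pr{\fr{At}}/\Inn(\pr{\fr{at}})\)" to "trivial action on the finer quotient \(\pr{\fr{At}}/\Inn(\pr{\fr{at}};\Omega^1_H)\)". I expect the only (minor) obstacle to be the verification that \(\Inn(\pr{\fr{at}};\Omega^1_H)\) is a linear subspace—i.e.\ that the semi-adapted condition is closed under sums—which reduces to uniqueness of the relative curvature \(2\)-form exactly as in the treatment of \(\fr{at}[\Omega^1_H]\); everything else follows \emph{mutatis mutandis} from the template used throughout Section~\ref{sec3}.
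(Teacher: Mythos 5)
Your proposal is correct and follows essentially the same route as the paper, which derives the statement \emph{mutatis mutandis} from Proposition~\ref{innprop} together with the observation (via Example~\ref{relex}) that \(\phi \act \nabla - \nabla \in \Inn(\pr{\fr{at}};\Omega^1_H)\) for inner \(\phi\) and the translation-invariance of \(\pr{\fr{At}}[\Omega^1_H]\) from Proposition~\ref{semiadapt}. Your explicit check that \(\Inn(\pr{\fr{at}};\Omega^1_H)\) is a linear subspace, via \(\bR\)-linearity of \(\bF_{\rel}\) and uniqueness of the factoring morphism, is a sound piece of bookkeeping that the paper leaves implicit.
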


Finally, we record the basic properties of the relative curvature \(2\)-form.

\begin{corollary}
	Let \((\Omega^1_H,\dt{H})\) be a bicovariant \fodc{} on \(H\) that is locally freeing for \(P\). The map \(F_{\rel} \coloneqq (\bA \mapsto F_{\rel}[\bA]) : \Inn(\pr{\fr{at}};\Omega^1_H) \to \Hom_P^H(\Omega^1_{P,\ver},\Omega^2_{P,\hor})\) is linear with range contained in \(\Hom_P^H(\Omega^1_{P,\ver},\Omega^2_{P,\hor})^{\pr{\fr{G}}}\). In particular, it satisfies
	\[
		\forall \nabla \in \pr{\fr{At}}[\Omega^1_H], \, \forall \bA \in \Inn(\pr{\fr{at}};\Omega^1_H), \quad F[\nabla+\bA] = F[\nabla] + F_{\rel}[\bA].
	\]
\end{corollary}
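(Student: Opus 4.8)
The plan is to extract all three assertions---the $\bR$-linearity of $F_{\rel}$, the containment of its range in the $\pr{\fr{G}}$-invariants, and the displayed additive formula---from ingredients already in hand: the $\bR$-linearity and $\pr{\fr{G}}$-equivariance of the relative field strength $\bF_{\rel}$ on $\Inn(\pr{\fr{at}})$ from Corollary~\ref{relcor}; the identity $F[\nabla+\bA] = F[\nabla] + F_{\rel}[\bA]$ obtained in the course of proving Proposition~\ref{semiadapt}; and the uniqueness principle furnished by local freeness, namely that $\Omega^1_{P,\ver} = P \cdot \dv{P}(P)$ forces any two left $H$-covariant morphisms $\Omega^1_{P,\ver} \to \Omega^2_{P,\hor}$ of $P$-$\ast$-bimodules that agree after precomposition with $\dv{P}$ to coincide. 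The displayed formula then needs no new argument, being a verbatim restatement of what the proof of Proposition~\ref{semiadapt} already yields for $\bA \in \Inn(\pr{\fr{at}};\Omega^1_H)$ and $\nabla \in \pr{\fr{At}}[\Omega^1_H]$.

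For linearity I would first observe that $\Inn(\pr{\fr{at}};\Omega^1_H)$ is an $\bR$-subspace of $\Inn(\pr{\fr{at}})$, being the preimage under the $\bR$-linear map $\bF_{\rel}$ of the $\bR$-subspace of $\Der^H(P,\Omega^2_{P,\hor})$ consisting of those $\ast$-derivations that factor through $\dv{P}$. Then, for $\bA_1,\bA_2 \in \Inn(\pr{\fr{at}};\Omega^1_H)$ and $s,t \in \bR$, Corollary~\ref{relcor} gives
\[
	\left(s\,F_{\rel}[\bA_1] + t\,F_{\rel}[\bA_2]\right) \circ \dv{P} = s\,\bF_{\rel}[\bA_1] + t\,\bF_{\rel}[\bA_2] = \bF_{\rel}[s\bA_1 + t\bA_2] = F_{\rel}[s\bA_1 + t\bA_2] \circ \dv{P},
\]
whence $F_{\rel}[s\bA_1 + t\bA_2] = s\,F_{\rel}[\bA_1] + t\,F_{\rel}[\bA_2]$ by the uniqueness principle.

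For the containment of the range in the $\pr{\fr{G}}$-invariants I would recall that $\pr{\fr{G}}$ acts on $\Hom_P^H(\Omega^1_{P,\ver},\Omega^2_{P,\hor})$ by $\phi \act F \coloneqq \phi_{\ast,\hor} \circ F \circ \phi_{\ast,\ver}^{-1}$, where by Proposition~\ref{astver} the vertical part $\phi_{\ast,\ver} = \id_{\Lambda^1_H} \otimes \phi$ satisfies $\dv{P} \circ \phi = \phi_{\ast,\ver} \circ \dv{P}$. Fixing $\bA \in \Inn(\pr{\fr{at}};\Omega^1_H)$ and $\phi \in \pr{\fr{G}}$, I would use that $\bA$ is a $\pr{\fr{G}}$-invariant vector (by the immediately preceding proposition) together with $\pr{\fr{G}}$-equivariance of $\bF_{\rel}$ to get $\bF_{\rel}[\bA] = \phi_{\ast,\hor} \circ \bF_{\rel}[\bA] \circ \phi^{-1}$, rewrite the right-hand side as $\left(\phi \act F_{\rel}[\bA]\right) \circ \dv{P}$ by commuting $\phi^{-1}$ past $\dv{P}$ via Proposition~\ref{astver}, and conclude $\phi \act F_{\rel}[\bA] = F_{\rel}[\bA]$ by uniqueness once more; equivalently, one derives the same conclusion by applying $\pr{\fr{G}}$-equivariance of the curvature $2$-form map $F$ on $\pr{\fr{At}}[\Omega^1_H]$ to the additive formula at any chosen $\nabla \in \pr{\fr{At}}[\Omega^1_H]$.

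I do not anticipate a genuine obstacle: this corollary is a bookkeeping consequence of earlier results, entirely parallel to how Corollary~\ref{relcor} was deduced from Proposition~\ref{innpot}. The one point deserving care is the compatibility used in the last paragraph---that conjugation by $\phi$ on $\Hom_P^H(\Omega^1_{P,\ver},\Omega^2_{P,\hor})$ corresponds, under $\eta \mapsto \eta \circ \dv{P}$, to conjugation of $\ast$-derivations $P \to \Omega^2_{P,\hor}$---which is exactly the intertwining relation supplied by Proposition~\ref{astver}, so it is already available.
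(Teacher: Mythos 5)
Your proposal is correct and follows essentially the same route as the paper: linearity and $\pr{\fr{G}}$-equivariance of $F_{\rel}$ are deduced from Corollary~\ref{relcor} together with the intertwining/uniqueness afforded by $\dv{P}$ (Proposition~\ref{astver} and $\Omega^1_{P,\ver} = P \cdot \dv{P}(P)$), invariance of the range then follows from the trivial $\pr{\fr{G}}$-action on $\Inn(\pr{\fr{at}};\Omega^1_H)$, and the additive formula is read off from the proof of Proposition~\ref{semiadapt}. Your write-up merely makes explicit the uniqueness step that the paper leaves implicit.
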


\begin{proof}
	The map \(F_{\rel}\) is \(\bR\)-linear and \(\pr{\fr{G}}\)-equivariant by Corollary~\ref{relcor} and \(\pr{\fr{G}}\)-equivar\-iance of \(\dv{P}\). Since \(\pr{\fr{G}}\) acts trivially on \(\Inn(\pr{\fr{at}};\Omega^1_H)\), it follows that the range of \(F_{\rel}\) is contained in \(\Hom_P^H(\Omega^1_{P,\ver},\Omega^2_{P,\hor})^{\pr{\fr{G}}}\). The relation between the maps \(F\) on \(\pr{\fr{At}}[\Omega^1_H]\) and \(F_\rel\) on \(\Inn(\pr{\fr{at}};\Omega^1_H)\) now follows by the proof of Proposition~\ref{semiadapt}.
\end{proof}

\section{Gauge theory on crossed products as lazy cohomology}\label{sec4}

Unlike in the commutative case, trivial quantum principal bundles can encode non-trivial dynamical information. As \'{C}a\'{c}i\'{c}--Mesland have already observed in the context of spectral triples~\cite{CaMe}*{\S 3.4}, the noncommutative \(\bT^m\)-gauge theory of crossed products by \(\bZ^m\) can be related to the degree \(1\) group cohomology of \(\bZ^m\) with certain geometrically meaningful coefficients. In light of the groupoid equivalences of Proposition~\ref{equiv1} and~\ref{sodcequiv}, we shall now similarly relate the gauge theory of (non-twisted) crossed product algebras to certain generalisations of group cohomology in degree \(1\).

\subsection{Cohomological preliminaries}

In effect, \'{C}a\'{c}i\'{c}--Mesland  computed the gauge group \(\fr{G}\) and Atiyah space \(\fr{At}\) of a crossed product by \(\bZ^m\) in terms of the degree \(1\) group cohomology of \(\bZ^m\) with coefficients in a certain group of unitaries and a certain \(\mathbf{R}[\bZ^m]\)-module of noncommutative \(1\)-forms, respectively~\cite{CaMe}*{Thm.\ 3.36}. In this section, we construct suitable generalisations of these two distinct but interrelated cases of degree \(1\) group cohomology to arbitrary Hopf \(\ast\)-algebras; in the process, we provide an \emph{ad hoc} generalisation of degree \(1\) Sweedler cohomology to not-necessarily-cocommutative Hopf \(\ast\)-algebras and non-trivial coefficients. From now on, let \(H\) be a Hopf \(\ast\)-algebra.

We begin by recalling more-or-less standard constructions of convolution algebras and bimodules on \(H\) with suitable coefficients.

\begin{definition}
	Let \(B\) be a [graded] \(\ast\)-algebra. The \emph{\(B\)-valued convolution algebra on \(H\)} is the [graded] unital \(\ast\)-algebra \(\cC(H;B)\) defined by endowing \(\Hom_{\bC}(H;B)\) with the product and \(\ast\)-structure defined by
	\begin{gather}
		\forall f,g \in \cC(H;B), \, \forall h \in H, \quad (f \star g)(h) \coloneqq f(\cm{h}{1})g(\cm{h}{2}),\\
		\forall f \in \cC(H;B),\, \forall h \in H, \quad f^\ast(h) \coloneqq f(S(h)^\ast)^\ast,
	\end{gather}
	respectively, and the unit \(1_{\cC(H,B)} \coloneqq \epsilon(\cdot)1_B\).
\end{definition}

\begin{definition}
	Let \(B\) be a \(\ast\)-algebra, and let \(M\) be a \(B\)-\(\ast\)-bimodule. The \emph{\(M\)-valued convolution bimodule on \(H\)} is the \(\cC(H,B)\)-\(\ast\)-bimodule \(\cC(H;M)\) defined by endowing the \(\bC\)-vector space \(\Hom_{\bC}(H,M)\) with the left \(\cC(H,B)\)-module structure, right \(\cC(H,B)\)-module structure, and \(\ast\)-structure defined, respectively, by
	\begin{gather}
		\forall f \in \cC(H,B), \, \forall \mu \in \cC(H,M), \, \forall h \in H, \quad (f \star \mu)(h) \coloneqq f(\cm{h}{1}) \cdot \mu(\cm{h}{2}),\\
		\forall f \in \cC(H,B), \, \forall \mu \in \cC(H,M), \, \forall h \in H, \quad (\mu \star f)(h) \coloneqq \mu(\cm{h}{1}) \cdot f(\cm{h}{2}),\\
		\forall \mu \in \cC(H,M), \, \forall h \in H, \quad \mu^\ast(h) \coloneqq \mu(S(h)^\ast)^\ast.
	\end{gather}
\end{definition}

In the case of \(H\)-module \(\ast\)-algebras and \(H\)-equivariant \(\ast\)-bimodules, one can canonically embed coefficient algebras and bimodules into the resulting convolution algebras and bimodules, respectively, in a manner respecting \(\ast\)-bimodule structures. When \(H\) is no longer a group algebra, this will yield subtler notions of commutation than pointwise commutation.

\begin{proposition}
	Let \(B\) be a [graded] \(H\)-module \(\ast\)-algebra. Then \(\rho_B : B \to \cC(H;B)\) defined by
	\[
		\forall b \in B, \, \forall h \in H, \quad \rho_B(b)(h) \coloneqq b \ract h
	\]	
	is an injective [graded] \(\ast\)-homomorphism.
\end{proposition}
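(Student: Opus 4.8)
The plan is to verify in turn the properties making $\rho_B$ a graded $\ast$-homomorphism, each of which reduces to one of the axioms of a graded $H$-module $\ast$-algebra together with a Hopf-algebraic identity, and then to read off injectivity for free.

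First I would observe that $\rho_B$ is well defined: for $b \in B$ the assignment $h \mapsto b \ract h$ is $\bC$-linear by definition of a module, so $\rho_B(b) \in \Hom_{\bC}(H;B) = \cC(H;B)$; moreover, since the $H$-action on $B$ preserves degree, $\rho_B$ sends a homogeneous element of degree $k$ to a map valued in $B^k$, hence to a homogeneous element of $\cC(H;B)$ of degree $k$, so $\rho_B$ is graded. Next I would check multiplicativity by evaluating
\[
	(\rho_B(b_1) \star \rho_B(b_2))(h) = \rho_B(b_1)(\cm{h}{1})\,\rho_B(b_2)(\cm{h}{2}) = (b_1 \ract \cm{h}{1})(b_2 \ract \cm{h}{2})
\]
and recognising the right-hand side, via the module-algebra axiom, as $(b_1 b_2) \ract h = \rho_B(b_1 b_2)(h)$; unitality then follows from $1_B \ract h = \epsilon(h)1_B = 1_{\cC(H;B)}(h)$.

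For the $\ast$-structure I would compute
\[
	\rho_B(b)^\ast(h) = \bigl(\rho_B(b)(S(h)^\ast)\bigr)^\ast = (b \ract S(h)^\ast)^\ast,
\]
and then apply the defining $\ast$-compatibility $(b \ract k)^\ast = b^\ast \ract S(k)^\ast$ of a right $H$-module $\ast$-algebra with $k = S(h)^\ast$, together with the Hopf $\ast$-algebra identity $S(S(h)^\ast)^\ast = h$, to conclude that this equals $b^\ast \ract h = \rho_B(b^\ast)(h)$. Finally, injectivity is immediate: evaluating $\rho_B(b)$ at $1_H$ recovers $b \ract 1_H = b$, since $B$ is a unital $H$-module, so $\ker \rho_B = 0$.

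The only step requiring genuine care is the $\ast$-preservation computation, specifically arranging the antipode-and-star composition to cancel correctly through the identity $(\ast \circ S)^2 = \id$ on $H$; every other step is a direct transcription of the module-$\ast$-algebra axioms and carries no obstruction.
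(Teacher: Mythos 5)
Your proof is correct, and since the paper states this proposition without proof (it is left as a routine verification), your argument is exactly the intended one: the module-algebra axiom gives multiplicativity, the compatibility \((b \ract h)^\ast = b^\ast \ract S(h)^\ast\) together with \((S \circ \ast)^2 = \id_H\) gives \(\ast\)-preservation, and evaluation at \(1_H\) gives injectivity. Your use of these conventions is consistent with how the paper itself computes elsewhere, e.g.\ the identity \((\upsilon \ract S(\cm{h}{2})^\ast)^\ast = \upsilon^\ast \ract \cm{h}{2}\) in the proof of the lazy Sweedler proposition.
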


\begin{proposition}
	Let \(B\) be a  right \(H\)-module \(\ast\)-algebra, and let \(M\) be a right \(H\)-equivariant \(B\)-\(\ast\)-bimodule. Then the map \(\rho_M : M \to \cC(H;M)\) defined by
	\[
		\forall m \in M, \, \forall h \in H, \quad \rho_M(m)(h) \coloneqq m \ract h
	\]
	is an injective \(\ast\)-preserving \(\bC\)-linear map satisfying
	\[
		\forall a,b \in B, \, \forall m \in M, \quad \rho_B(a) \star \rho_M(m) \star \rho_B(b) = \rho_M(a \cdot m \cdot b).
	\]
	In particular, \(\cC(H;M)\) defines a \(B\)-\(\ast\)-bimodule with respect to \(\rho_B\).
\end{proposition}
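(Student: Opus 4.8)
The plan is to check each clause by a direct unwinding of the definitions of $\cC(H;B)$ and $\cC(H;M)$, using only the counit, the antipode-$\ast$ compatibility of a Hopf $\ast$-algebra, and the equivariance axioms satisfied by $M$. The map $\rho_M$ is $\bC$-linear by inspection, and it is injective because evaluation at $1_H$ recovers its argument: since $M$ is a unital right $H$-module, $\rho_M(m)(1_H) = m \ract 1_H = m$.

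For $\ast$-preservation I would compute, for $h \in H$,
\[
	\rho_M(m)^\ast(h) = \bigl(\rho_M(m)(S(h)^\ast)\bigr)^\ast = \bigl(m \ract S(h)^\ast\bigr)^\ast = m^\ast \ract \bigl(S(S(h)^\ast)\bigr)^\ast,
\]
where the last equality is the $\ast$-compatibility built into the notion of a right $H$-equivariant $B$-$\ast$-bimodule, namely $(m \ract k)^\ast = m^\ast \ract S(k)^\ast$, applied with $k = S(h)^\ast$. The standard Hopf $\ast$-algebra identity $({\ast}\circ S\circ{\ast}\circ S) = \id_H$, i.e.\ $\bigl(S(S(h)^\ast)\bigr)^\ast = h$, then yields $\rho_M(m)^\ast(h) = m^\ast \ract h = \rho_M(m^\ast)(h)$. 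For the $B$-bimodule relation I would expand the triple convolution product along the coproduct, $(\rho_B(a) \star \rho_M(m) \star \rho_B(b))(h) = (a \ract \cm{h}{1}) \cdot (m \ract \cm{h}{2}) \cdot (b \ract \cm{h}{3})$, and then invoke that the left and right $B$-actions on $M$ are morphisms of right $H$-modules (with $B \otimes M$ carrying the tensor-product action via the coproduct)---this being precisely what ``$H$-equivariant $B$-bimodule'' means---so that the right-hand side equals $(a \cdot m \cdot b) \ract h = \rho_M(a \cdot m \cdot b)(h)$.

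Finally, to conclude that $\cC(H;M)$ is a $B$-$\ast$-bimodule with respect to $\rho_B$, I would set $b_1 \cdot \mu \cdot b_2 \coloneqq \rho_B(b_1) \star \mu \star \rho_B(b_2)$: associativity and unitality of this $B$-bimodule structure are inherited from the $\cC(H;B)$-bimodule $\cC(H;M)$ together with the fact (established in the preceding proposition) that $\rho_B$ is a unital algebra homomorphism, while the identity $(b_1 \cdot \mu \cdot b_2)^\ast = b_2^\ast \cdot \mu^\ast \cdot b_1^\ast$ follows from the $\ast$-bimodule axioms of $\cC(H;M)$ over $\cC(H;B)$ combined with $\rho_B$ being $\ast$-preserving. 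The main---and really the only---point requiring care is the interplay of $S$ and $\ast$ encapsulated in $\bigl(S(S(h)^\ast)\bigr)^\ast = h$; once this is in hand, everything reduces to routine bookkeeping with Sweedler notation.
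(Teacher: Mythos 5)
Your proof is correct, and it is precisely the routine verification that the paper leaves implicit (this proposition is stated there without proof): injectivity via evaluation at \(1_H\), \(\ast\)-preservation from \((m \ract k)^\ast = m^\ast \ract S(k)^\ast\) together with \(\bigl(S(S(h)^\ast)\bigr)^\ast = h\), and the bimodule identity from expanding the triple convolution along the coproduct and using \(H\)-equivariance of the actions. The conventions you invoke are exactly those the paper uses in its own later computations (e.g.\ the manipulation \(\bigl(\upsilon \ract S(\cm{h}{2})^\ast\bigr)^\ast = \upsilon^\ast \ract \cm{h}{2}\) and identities like \((m\cdot b)\ract h = (m \ract \cm{h}{1})\cdot(b \ract \cm{h}{2})\) in the lazy Sweedler and Hochschild proofs), so there is nothing to add.
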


We can now provide a suitable generalisation of the degree \(1\) group cohomology of a group \(\Gamma\) with coefficients in the unitary group of a commutative \(\Gamma\)-\(\ast\)-algebra. It can be viewed as an \emph{ad hoc} generalisation of degree \(1\) Sweedler cohomology~\cite{Sweedler} in the spirit of Bichon--Carnovale's `lazy' cohomology~\cite{BC} to the case of non-cocommutative Hopf algebras and non-commutative coefficient algebras. We shall use it to compute the gauge group of a crossed product by \(H\).

\begin{propositiondefinition}[cf.\ Sweedler~\cite{Sweedler}]\label{lazysweedler}
	Let \(B\) be a right \(H\)-module \(\ast\)-algebra, and let \(M\) be a right \(H\)-equivariant \(B\)-\(\ast\)-bimodule.
	\begin{enumerate}
		\item A \emph{lazy \((B,M)\)-valued Sweedler \(0\)-cochain on \(H\)} is an element of \[\CS_\ell^0(H;B,M) \coloneqq \Unit(\Cent_B(B\oplus M)) \leq \Unit(B).\]
		\item A \emph{lazy \((B,M)\)-valued Sweedler \(1\)-cocycle on \(H\)} is \(\sigma \in \Unit\mleft(\Cent_{\cC(H,B)}(\rho_B(B) \oplus \rho_M(M))\mright)\), such that \(\sigma(1) = 1\) and
		\begin{equation}\label{scocycle}
			\forall h,k \in H, \quad \sigma(hk) = \left(\sigma(h) \ract \cm{k}{1}\right)\sigma(\cm{k}{2});
		\end{equation}
		we denote by \(\ZS_\ell^1(H;B,M)\) the set of all lazy \((B,M)\)-valued Sweedler \(1\)-cocycles on \(H\), which defines a subgroup of \(\Unit(\cC(H,B))\).
		\item The \emph{coboundary map} is the homomorphism \(D: \ZS_\ell^0(H;B,M) \to \ZS_\ell^1(H;B,M)\) defined by
		\begin{equation}
			\forall \upsilon \in \CS_\ell^0(H;B,M), \, \forall h \in H, \quad D \upsilon(h) \coloneqq (\upsilon \ract h) \cdot \upsilon^\ast;
		\end{equation}
		thus, a \emph{lazy \((B,M)\)-valued Sweedler \(1\)-coboundary on \(H\)} is an element of
		\[
			\BS_\ell^1(H;B,M) \coloneqq D\mleft(\CS_\ell^0(H;B,M)\mright) \leq \Zent\mleft(\ZS^1_\ell(H;B,M)\mright).
		\]
		\item The \emph{lazy degree \(1\) Sweedler cohomology of \(H\) with coefficients in \((B,M)\)} is the group
		\[
			\HS_\ell^1(H;B,M) \coloneqq \ZS_\ell^1(H;B,M)/\BS_\ell^1(H;B,M).
		\]
	\end{enumerate}
\end{propositiondefinition}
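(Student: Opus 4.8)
The plan is to discharge the statement as a sequence of direct verifications in the order (a) $\CS_\ell^0(H;B,M)$, and later $\BS_\ell^1(H;B,M)$, are groups; (b) $\ZS_\ell^1(H;B,M)$ is a subgroup of $\Unit(\cC(H,B))$; (c) $D$ is a well-defined group homomorphism; (d) its image is central in $\ZS_\ell^1(H;B,M)$, whence the quotient $\HS_\ell^1(H;B,M)$ is a well-defined group. For (a), note that $\Cent_B(B\oplus M) = \Zent(B) \cap \Cent_B(M)$ is a unital $\ast$-subalgebra of $B$ stable under inversion, so $\CS_\ell^0(H;B,M)$ is a subgroup of $\Unit(B)$; granting (c), the image $\BS_\ell^1(H;B,M) = D(\CS_\ell^0(H;B,M))$ is automatically a subgroup of $\ZS_\ell^1(H;B,M)$, so the only outstanding point about it is its centrality, which is (d).

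For (b): the unit $1_{\cC(H,B)} = \epsilon(\cdot)1_B$ is central in $\cC(H,B)$, satisfies $1_{\cC(H,B)}(1) = 1_B$, and is a lazy cocycle since $1_B \ract h = \epsilon(h)1_B$. If $\sigma,\tau \in \ZS_\ell^1(H;B,M)$, then $\sigma\star\tau$ again lies in the subalgebra $\Cent_{\cC(H,B)}(\rho_B(B)\oplus\rho_M(M))$ and satisfies $(\sigma\star\tau)(1) = 1_B$; to obtain the cocycle identity \eqref{scocycle} for $\sigma\star\tau$, expand $(\sigma\star\tau)(hk)$ via the coproduct of $hk$, apply \eqref{scocycle} to $\sigma$ and to $\tau$ separately, and then reorder the two interior factors using the laziness of $\sigma$, namely $\sigma\star\rho_B(b) = \rho_B(b)\star\sigma$ applied with $b = \tau(\cm{h}{2}) \in B$. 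For closure under convolution inverse, observe that the inverse of $\sigma \in \ZS_\ell^1(H;B,M)$ has the closed form
\[
	\forall h \in H, \quad \sigma^{-1}(h) = \sigma(S(\cm{h}{1})) \ract \cm{h}{2},
\]
which one checks is a two-sided inverse from $S(\cm{h}{1})\cm{h}{2} = \epsilon(h)1_H$ together with \eqref{scocycle}; from this formula one reads off that $\sigma^{-1}$ is central, normalised, and a cocycle — the last point again using laziness of $\sigma$ to reorder factors, exactly as commutativity of $B$ would be used when $H = \bC[\Gamma]$.

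For (c), one computes $D\upsilon(1) = (\upsilon\ract 1)\upsilon^\ast = \upsilon\upsilon^\ast = 1_B$; $D\upsilon$ is central because $\upsilon \in \Zent(B)\cap\Cent_B(M)$ forces $(\upsilon \ract \cm{h}{1})(b \ract \cm{h}{2}) = (\upsilon b) \ract h = (b\upsilon) \ract h = (b \ract \cm{h}{1})(\upsilon \ract \cm{h}{2})$ for $b \in B$, and symmetrically on $M$; and the cocycle identity for $D\upsilon$ follows from $\upsilon \ract hk = (\upsilon \ract h) \ract k$ and the contraction $(\upsilon^\ast \ract \cm{k}{1})(\upsilon \ract \cm{k}{2}) = \epsilon(k)1_B$. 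That $D(\upsilon_1\upsilon_2) = D\upsilon_1 \star D\upsilon_2$ then holds because $\upsilon_1^\ast = \upsilon_1^{-1} \in \Zent(B)$ commutes past the factor $\upsilon_2 \ract \cm{h}{2}$; in particular each $D\upsilon$ is invertible in $\cC(H,B)$, so $D\upsilon \in \ZS_\ell^1(H;B,M)$ as required. Finally, for (d): given $\upsilon \in \CS_\ell^0(H;B,M)$ and $\sigma \in \ZS_\ell^1(H;B,M)$, centrality of $\upsilon^\ast$ in $B$ reduces $D\upsilon\star\sigma = \sigma\star D\upsilon$ to $\rho_B(\upsilon)\star\sigma = \sigma\star\rho_B(\upsilon)$, which is part of the laziness of $\sigma$; thus $\BS_\ell^1(H;B,M)$ is central, hence normal, in $\ZS_\ell^1(H;B,M)$, and the quotient is a well-defined group.

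I expect the main obstacle to be step (b): the verification that $\sigma\star\tau$ and $\sigma^{-1}$ remain cocycles. This is where one must keep scrupulous track of iterated coproducts and, crucially, resist moving factors past one another until the appropriate laziness hypothesis has been invoked — which is precisely why one works with the lazy (centralising) conditions rather than the naive Sweedler conditions as soon as $H$ fails to be cocommutative or $B$ fails to be commutative. Everything else reduces to the (co)module-algebra axioms and the antipode identities.
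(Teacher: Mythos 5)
Your overall plan and most of your computations coincide with the paper's own proof: the identity and product-closure of \(\ZS^1_\ell(H;B,M)\) are verified exactly as you describe (expand \(\sigma\star\tau(hk)\), apply the cocycle identity to each factor, then use laziness of \(\sigma\) against the fixed element \(\tau(\cm{h}{2})\) to reorder the interior factors), \(D\) is checked to be normalised, centralising and a cocycle from the same three ingredients you cite, and centrality of \(\BS^1_\ell\) follows from laziness of \(\sigma\) plus centrality of \(\upsilon^\ast\), again as in the paper. The one sub-step where you genuinely diverge is closure under inverses: you work with the antipode formula \(\sigma^{-1}(h) = \sigma(S(\cm{h}{1}))\ract\cm{h}{2}\), whereas the paper exploits unitarity, writing \(\sigma^{-1} = \sigma^\ast\) with \(\sigma^\ast(h) = \sigma(S(h)^\ast)^\ast\) and transporting the cocycle identity through \(\ast\). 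Both routes work, and yours stays closer to the group-algebra intuition, but be aware that ``one reads off \dots\ a cocycle'' understates the cost: verifying the cocycle identity for \(h\mapsto\sigma(S(\cm{h}{1}))\ract\cm{h}{2}\) is its own Sweedler computation (anti-comultiplicativity of \(S\), the module-algebra axiom, and laziness), of about the same length as the paper's \(\ast\)-based check; only centrality and normalisation of \(\sigma^{-1}\) are genuinely immediate. Also, to know your left inverse is the two-sided inverse you should either compute \(\sigma\star\tau\) as well or simply invoke the invertibility of \(\sigma\) already contained in the hypothesis \(\sigma\in\Unit(\cdot)\).

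The one concrete loose end is in step (c): membership of \(D\upsilon\) in \(\ZS^1_\ell(H;B,M)\) requires \(D\upsilon\) to be a \emph{unitary} element of \(\Cent_{\cC(H,B)}(\rho_B(B)\oplus\rho_M(M))\) --- throughout the paper \(\Unit\) denotes the unitary group, not the group of units --- and you only argue invertibility of \(D\upsilon\) via multiplicativity of \(D\). The paper devotes an explicit displayed computation to this point, checking \(D\upsilon\star(D\upsilon)^\ast = (D\upsilon)^\ast\star D\upsilon = \epsilon(\cdot)1_B\) using \((\upsilon\ract h)^\ast = \upsilon^\ast\ract S(h)^\ast\). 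In your framework the gap is filled in one line: \((D\upsilon)^\ast = D(\upsilon^\ast)\) (using centrality of \(\upsilon\) and the \(\ast\)-compatibility of the action together with \(S(S(h)^\ast)^\ast = h\)), and \(D(\upsilon^\ast) = D(\upsilon^{-1}) = (D\upsilon)^{-1}\) by multiplicativity of \(D\); but as written, ``invertible, so \(D\upsilon\in\ZS^1_\ell\)'' does not suffice and should be repaired.
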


The proof of this result---and several others besides---will require the following straightforward technical lemma.

\begin{lemma}\label{centrallem}
	Let \(B\) be a  right \(H\)-module \(\ast\)-algebra, and let \(M\) be a right \(H\)-equivariant \(B\)-\(\ast\)-bimodule.
	\begin{enumerate}
		\item Let \(b \in B\) and let \(\mu \in \cC(H,M)\). Then \([\rho_B(b),\mu] = 0\) if and only if
		\[
			\forall h \in H, \quad (b \ract \cm{h}{1}) \cdot \mu(\cm{h}{2}) = \mu(\cm{h}{1}) \cdot (b \ract \cm{h}{2}).
		\]
		\item Let \(\beta \in \cC(H,B)\) and let \(m \in M\). Then \([\beta,\rho_M(m)] = 0\) if and only if
		\[
			\forall h \in H, \quad \beta(\cm{h}{1}) \cdot (m \ract \cm{h}{2}) = (m \ract \cm{h}{1}) \cdot \beta(\cm{h}{2}).
		\]
	\end{enumerate}
\end{lemma}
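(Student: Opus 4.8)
The plan is to prove both parts by the same computation, exploiting the symmetry between left and right module structures on \(M\) together with the left and right \(\cC(H,B)\)-module structures on \(\cC(H,M)\); in fact part (2) will follow from part (1) by applying \(\ast\), so the real work is in part (1).

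First I would unwind the definitions. Given \(b \in B\) and \(\mu \in \cC(H,M)\), the element \(\rho_B(b) \in \cC(H,B)\) is the map \(h \mapsto b \ract h\). By the formulas for the left and right \(\cC(H,B)\)-actions on \(\cC(H,M)\), for every \(h \in H\),
\[
	(\rho_B(b) \star \mu)(h) = (b \ract \cm{h}{1}) \cdot \mu(\cm{h}{2}), \qquad (\mu \star \rho_B(b))(h) = \mu(\cm{h}{1}) \cdot (b \ract \cm{h}{2}).
\]
Hence \([\rho_B(b),\mu] = \rho_B(b)\star\mu - \mu\star\rho_B(b) = 0\) as an element of \(\cC(H,M) = \Hom_\bC(H,M)\) if and only if these two maps agree pointwise, which is exactly the asserted identity. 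This is essentially immediate once the convolution formulas are substituted; the only thing to be careful about is keeping the Sweedler indices on the two sides consistent, which is automatic here since both sides use the same coproduct \(\cm{h}{1} \otimes \cm{h}{2}\).

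For part (2), given \(\beta \in \cC(H,B)\) and \(m \in M\), I would similarly compute
\[
	(\beta \star \rho_M(m))(h) = \beta(\cm{h}{1}) \cdot (m \ract \cm{h}{2}), \qquad (\rho_M(m) \star \beta)(h) = (m \ract \cm{h}{1}) \cdot \beta(\cm{h}{2}),
\]
using the \(B\)-\(\ast\)-bimodule structure on \(\cC(H,M)\) induced by \(\rho_B\); then \([\beta,\rho_M(m)]=0\) unwinds pointwise to the stated identity in the same way. Alternatively, one can deduce (2) from (1): apply the \(\ast\)-structures, noting that \(\rho_M(m)^\ast = \rho_M(m^\ast)\), \(\rho_B(b)^\ast = \rho_B(b^\ast)\) (since \(\rho_B\) and \(\rho_M\) are \(\ast\)-preserving), and that \([\,\cdot\,,\,\cdot\,]^\ast\) reverses order; but the direct computation is cleaner and no harder.

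There is essentially no obstacle here: the lemma is a bookkeeping statement that names, for later use, the pointwise conditions characterising when a coefficient element commutes with a convolution cochain. The only mild subtlety is that \(\cC(H,M)\) carries \emph{two} bimodule structures — the \(\cC(H,B)\)-bimodule structure and, via \(\rho_B\), the \(B\)-bimodule structure — and one must use the correct one in each part; but since \(\rho_B\) is a \(\ast\)-homomorphism and the \(B\)-actions are by definition the pullbacks of the \(\cC(H,B)\)-actions along \(\rho_B\), no inconsistency can arise. I would present the proof as two short displays of the convolution formulas followed by the observation that equality of convolution products is equality of functions \(H \to M\). \qed
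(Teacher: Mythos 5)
Your main argument is correct and is essentially the paper's own proof: both simply expand \([\rho_B(b),\mu](h)\) and \([\beta,\rho_M(m)](h)\) using the convolution module structures and observe that vanishing in \(\cC(H,M)\) means pointwise vanishing on \(H\). One caveat: your aside that part (2) "will follow from part (1) by applying \(\ast\)" does not actually work --- applying \(\ast\) to \([\beta,\rho_M(m)]\) exchanges the left and right actions but still produces a commutator of an element of \(\cC(H,B)\) with \(\rho_M\) of an element of \(M\) (namely \(-[\beta^\ast,\rho_M(m^\ast)]\)), i.e.\ a statement again of type (2), not of type (1); this is harmless here only because you carry out the direct computation for (2) anyway.
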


\begin{proof}
	On the one hand, for all \(b \in B\), \(\mu \in \cC(H,M)\), and \(h \in H\),
	\begin{align*}
		[\rho_B(b),\mu](h) &= \rho_B(b)(\cm{h}{1}) \cdot \mu(\cm{h}{2}) - \mu(\cm{h}{1}) \cdot \rho_B(b)(\cm{h}{2})\\ &= (b \ract \cm{h}{1}) \cdot \mu(\cm{h}{2}) - \mu(\cm{h}{1}) \cdot (b \ract \cm{h}{2}).
	\end{align*}
	On the other hand, for all \(\beta \in \cC(H,B)\), \(m \in M\), and \(h \in H\),
	\begin{align*}
		[\beta,\rho_M(m)](h) &= \beta(\cm{h}{1}) \cdot \rho_M(m)(\cm{h}{2}) - \rho_M(m)(\cm{h}{1}) \cdot \beta(\cm{h}{2})\\
		&= 	\beta(\cm{h}{1}) \cdot (m \ract \cm{h}{2}) - (m \ract \cm{h}{1}) \cdot \beta(\cm{h}{2}). \qedhere
	\end{align*}
\end{proof}

\begin{proof}[Proof of Proposition~\ref{lazysweedler}]
	Before continuing, note that by Lemma~\ref{centrallem}, \(f \in \cC(H,B)\) centralises the subset \(\rho_B(B) \oplus \rho_M(B)\) of  \(\cC(H;B\oplus M)\) if and only if
	\begin{gather*}
		\forall b \in B,\,\forall h \in H, \quad f(\cm{h}{1}) \cdot (b \ract \cm{h}{2}) = (b \ract \cm{h}{1}) \cdot f(\cm{h}{2}),\\
		\forall m \in M,\,\forall h \in H, \quad f(\cm{h}{1}) \cdot (m \ract \cm{h}{2}) = (m \ract \cm{h}{1}) \cdot f(\cm{h}{2}).
	\end{gather*}

	We first show that \(\ZS_\ell^1(H;B,M)\) is a subgroup of \(\Unit\mleft(\Cent_{\cC(H,B)}(\rho_B(B) \oplus \rho_M(M))\mright)\). First, observe that \(\ZS_\ell^1(H;B,M) \ni 1_{\cC(H,B)}\). Next, let \(\sigma_1,\sigma_2 \in \ZS_\ell^1(H;B,M)\). Then for all \(h, k \in H\),
	\begin{align*}
		\sigma_1 \star \sigma_2(hk) &= \sigma_1(\cm{h}{1}\cm{k}{1}) \cdot \sigma_2(\cm{h}{2}\cm{k}{2})\\
		&= (\sigma_1(\cm{h}{1}) \ract \cm{k}{1}) \cdot \sigma_1(\cm{k}{2}) \cdot (\sigma_2(\cm{h}{2}) \ract \cm{k}{3}) \cdot \sigma_2(\cm{k}{4})\\
		&= (\sigma_1(\cm{h}{1}) \ract \cm{k}{1}) \cdot (\sigma_2(\cm{h}{2}) \ract \cm{k}{2}) \cdot \sigma_1(\cm{k}{3}) \cdot \sigma_2(\cm{k}{4})\\
		&= (\sigma_1 \star \sigma_2(h) \ract \cm{k}{1}) \cdot \sigma_1\star\sigma_2(\cm{k}{2}),
	\end{align*}
	while \(\sigma_1 \star \sigma_2(1) = \sigma_1(1) \cdot \sigma_2(1)=1\), so that, indeed, \(\sigma_1 \star \sigma_2 \in \ZS^1_\ell(H;B,M)\). Finally, let \(\sigma \in \ZS^1_\ell(H;B,M)\); note that \(\sigma^{-1} \in \Unit\mleft(\Cent_{\cC(H,B)}(\rho_B(B) \oplus \rho_M(M))\mright)\). Then, for all \(h,k \in H\),
	\begin{align*}
		\sigma^{-1}(hk) &= \sigma(S(h)^\ast S(k)^\ast)^\ast\\
		&= \left((\sigma(S(h)^\ast) \ract \cm{(S(k)^\ast)}{1}) \cdot \sigma(\cm{(S(k)^\ast)}{2}) \right)^\ast\\
		&=\left(\sigma(\cm{(S(k)^\ast)}{1}) \cdot (\sigma(S(h)^\ast) \ract \cm{(S(k)^\ast)}{2})\right)^\ast\\
		&= (\sigma^\ast(h) \ract \cm{k}{1})\cdot \sigma^\ast(\cm{k}{2}),
	\end{align*}
	while \(\sigma^\ast(1) = \sigma(1)^\ast = 1\), so that \(\sigma^{-1} \in \ZS^1_\ell(H;B,M)\).
	
	Next, we show that \(D: \ZS_\ell^0(H;B,M) \to \ZS_\ell^1(H;B,M)\) is well-defined as a function. Let \(\upsilon \in \CS^0_\ell(H;B,M)\), and set \(\sigma \coloneqq D\upsilon\). First, \(\sigma\) is unitary, since for all \(h \in H\),
	\begin{gather*}
		\sigma \star \sigma^\ast (h) = (\upsilon \ract \cm{h}{1}) \cdot \upsilon^\ast \cdot \left((\upsilon \ract S(\cm{h}{2})^\ast) \cdot \upsilon^\ast \right)^\ast  = (\upsilon \ract \cm{h}{1}) \cdot \upsilon^\ast \cdot \upsilon \cdot (\upsilon^\ast \ract \cm{h}{2}) = \epsilon(h) 1_B,\\
		\sigma^\ast \star \sigma(h) = \left((\upsilon \ract S(\cm{h}{1})^\ast) \cdot \upsilon^\ast \right)^\ast \cdot (\upsilon \ract \cm{h}{2}) \cdot \upsilon^\ast = \upsilon \cdot (\upsilon^\ast \ract \cm{h}{1}) \cdot (\upsilon \ract \cm{h}{2}) \cdot \upsilon^\ast = \epsilon(h) 1_B;
	\end{gather*}
	note, moreover, that \(\sigma(1) = (\upsilon \ract 1) \cdot \upsilon^\ast  = \upsilon \cdot \upsilon^\ast  = 1\). Next, \(\sigma \in \Cent_{\cC(H,B)}(\rho_B(B) \oplus \rho_M(M))\) since for all \(b \in B\), \(m \in M\), and \(h \in H\),
	\begin{align*}
		\sigma(\cm{h}{1}) \cdot (b \ract \cm{h}{2}) - (b \ract \cm{h}{1}) \cdot \sigma(\cm{h}{2}) &= (\upsilon \ract \cm{h}{1}) \cdot \upsilon^\ast  \cdot (b \ract \cm{h}{2}) - (b \ract \cm{h}{1}) \cdot (\upsilon \ract \cm{h}{2}) \cdot \upsilon^\ast\\
		&= \left((\upsilon \cdot b - b \cdot \upsilon) \ract h \right)\cdot \upsilon^\ast\\
		&= 0,
	\end{align*}
	and hence, \emph{mutatis mutandis}, \(\sigma(\cm{h}{1})\cdot(m\ract\cm{h}{2}) - (m \ract \cm{h}{1}) \cdot \sigma(\cm{h}{2})\). Finally, for all \(h,k\in H\),
	\[
		\sigma(hk) = (\upsilon \ract hk) \cdot \upsilon^\ast = \left((\upsilon \ract h) \ract \cm{k}{1}\right) \cdot (\upsilon^\ast \ract \cm{k}{2}) \cdot (\upsilon \ract \cm{k}{3}) \cdot \upsilon^\ast = (\sigma(h) \ract \cm{k}{1}) \cdot \sigma(\cm{k}{2}).
	\]
	Hence, \(D\upsilon \eqqcolon \sigma \in \ZS^1_{\ell}(H;B,M)\).
	
	Next, we show that \(D\) is a group homomorphism. First, note that 
	\[
		D1_B = \left(h \mapsto (1_B \ract h) \cdot 1_B\right) = \epsilon(\cdot)1_B = 1_{\cC(H,B)}.
	\]
	Now, let \(\upsilon_1,\upsilon_2 \in \CS^0_\ell(H,B;M)\). Then for all \(h \in H\),
	\begin{align*}
		D(\upsilon_1 \cdot \upsilon_2)(h) &= \left((\upsilon_1 \cdot \upsilon_2) \ract h\right) \cdot (\upsilon_1 \cdot \upsilon_2)^\ast \\
		&= (\upsilon_1 \ract \cm{h}{1}) \cdot (\upsilon_2 \ract \cm{h}{2}) \cdot \upsilon_2^\ast \upsilon_1^\ast \\
		&= D\upsilon_1(\cm{h}{1}) \cdot D\upsilon_2(\cm{h}{2}),
	\end{align*}
	so that, indeed, \(D(\upsilon_1 \cdot \upsilon_2) = D\upsilon_1 \star D\upsilon_2\).
	
	Finally, we show that \(\BS_\ell^1(H;B,M)\) is central in \(\ZS_\ell^1(H;B,M)\). Let \(\upsilon \in \CS^0_\ell(H;B,M)\) and \(\sigma \in \ZS^1_\ell(H;B,M)\). Then, for all \(h \in H\),
	\begin{multline*}
		\sigma \star D\upsilon \star \sigma^{-1} (h) = \sigma(\cm{h}{1}) \cdot (\upsilon \ract \cm{h}{2}) \cdot \upsilon^\ast \cdot \sigma^{-1}(\cm{h}{3}) = (\upsilon \ract \cm{h}{1}) \cdot \upsilon^\ast \cdot \sigma(\cm{h}{2}) \cdot \sigma^{-1}(\cm{h}{3}) = D\upsilon(h),
	\end{multline*}
	so that, indeed, \(\sigma \star D\upsilon \star \sigma^{-1} = D\upsilon\).
\end{proof}

We now provide a suitable generalisation of the degree \(1\) group cohomology of a group \(\Gamma\) with coefficients in an \(\bR\)-linear representation of \(\Gamma\). That degree \(1\) group cohomology can be generalised using the appropriate Hochschild cohomology was first observed by Sch\"{u}rmann~\cite{Schurmann}*{\S 3}; this observation has been used, for instance, to characterise the Haagerup property on locally compact quantum groups~\cite{DFSW}. We shall use the following variation on this theme to compute the Atiyah space of a crossed product by \(H\).

\begin{propositiondefinition}
	Let \(B\) be a right \(H\)-module \(\ast\)-algebra, and let \(M\) be a right \(H\)-equivariant \(B\)-\(\ast\)-bimodule.
	\begin{enumerate}
		\item A \emph{lazy \(M\)-valued Hochschild \(0\)-cochain on \(H\)} is an element of 
		\[
			\CH^0_\ell(H;M) \coloneqq \Zent_B(M)_{\sa}.
		\]
		\item A \emph{lazy \(M\)-valued Hochschild \(1\)-cocycle on \(H\)} is
		\(
			\mu \in \Zent_B\mleft(\cC(H,M)\mright)_{\sa}
		\),
		such that 
		\begin{equation}\label{hcocycle}
			\forall h,k \in H, \quad \mu(hk) = \mu(h) \ract k + \epsilon(h)\mu(k)
		\end{equation}
		and \(\mu(1) = 0\); we denote by \(\ZH^1_\ell(H;M)\) the set of all lazy \(M\)-valued Hochschild \(1\)-cocycles on \(H\), which defines a \(\bR\)-subspace of \(\cC(H,M)\).
		\item The \emph{coboundary map} is the homomorphism \(D : \CH^0_\ell(H;M) \to \ZH^1_\ell(H;M)\) with
		\begin{equation}
			\forall m \in \CH^0_\ell(H;M), \, \forall h \in H, \quad D(m)(h) \coloneqq m \ract h - m \epsilon(h);
		\end{equation}
		thus, a \emph{lazy \(M\)-valued Hochschild \(1\)-coboundary on \(H\)} is an element of
		\[
			\BH^1_\ell(H;M) \coloneqq D(\CH^0_\ell(H;M)) \leq \ZH^1_\ell(H;M).
		\]
		\item The \emph{lazy degree \(1\) Hochschild cohomology of \(H\) with coefficients in \(M\)} is the group
		\[
			\HH^1_\ell(H;M) \coloneqq \ZH^1_\ell(H;M)/\BH^1_\ell(H;M).
		\]
	\end{enumerate}	
\end{propositiondefinition}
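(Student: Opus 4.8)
The plan is to argue in close analogy with the proof of Proposition-Definition~\ref{lazysweedler}, exploiting the fact that here everything is additive rather than multiplicative, so that the argument is substantially lighter. First I would check that \(\ZH^1_\ell(H;M)\) is an \(\bR\)-subspace of \(\cC(H,M)\). The self-adjoint part \(\Zent_B(\cC(H,M))_{\sa}\) of the centraliser is visibly an \(\bR\)-subspace, while the Hochschild cocycle identity~\eqref{hcocycle} and the normalisation \(\mu(1) = 0\) are each additive and \(\bR\)-homogeneous in \(\mu\); since \(0\) satisfies all three conditions, \(\ZH^1_\ell(H;M)\) is a non-empty, hence genuine, \(\bR\)-subspace.

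Next I would verify that \(D\) carries \(\CH^0_\ell(H;M) = \Zent_B(M)_{\sa}\) into \(\ZH^1_\ell(H;M)\). Fix \(m \in \Zent_B(M)_{\sa}\) and set \(\mu \coloneqq D(m)\), so that \(\mu(h) = m \ract h - m\,\epsilon(h)\). The normalisation \(\mu(1) = 0\) is immediate from \(m \ract 1 = m\) and \(\epsilon(1) = 1\), and~\eqref{hcocycle} follows by expanding both sides using associativity of the \(H\)-action and multiplicativity of \(\epsilon\), each reducing to \((m \ract h) \ract k - m\,\epsilon(h)\epsilon(k)\). For self-adjointness of \(\mu\) in \(\cC(H,M)\), i.e.\ \(\mu(S(h)^\ast)^\ast = \mu(h)\), I would invoke \(m^\ast = m\), the Hopf \(\ast\)-algebra identities \((S \circ {\ast})^2 = \id\), \(\epsilon \circ S = \epsilon\), and \(\epsilon(h^\ast) = \overline{\epsilon(h)}\), together with the defining relation \((m \ract h)^\ast = m^\ast \ract S(h)^\ast\) of an \(H\)-equivariant \(\ast\)-bimodule. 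For membership in \(\Zent_B(\cC(H,M))\), by Lemma~\ref{centrallem}(1) it suffices to verify \((b \ract \cm{h}{1}) \cdot \mu(\cm{h}{2}) = \mu(\cm{h}{1}) \cdot (b \ract \cm{h}{2})\) for all \(b \in B\) and \(h \in H\); substituting the formula for \(\mu\) and using the equivariance relation \((b \cdot m) \ract h = (b \ract \cm{h}{1}) \cdot (m \ract \cm{h}{2})\), both sides collapse to \((m \cdot b) \ract h - m \cdot (b \ract h)\), where centrality of \(m\) in \(M\) with respect to \(B\) is used to rewrite \(b \cdot m\) as \(m \cdot b\) and \((b \ract h) \cdot m\) as \(m \cdot (b \ract h)\).

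Finally, \(D : \CH^0_\ell(H;M) \to \ZH^1_\ell(H;M)\) is manifestly \(\bR\)-linear, since \(D(m_1 + m_2)(h) = (m_1 + m_2) \ract h - (m_1 + m_2)\epsilon(h) = D(m_1)(h) + D(m_2)(h)\) and likewise for real scalars; hence \(\BH^1_\ell(H;M) = D(\CH^0_\ell(H;M))\) is an \(\bR\)-subspace of \(\ZH^1_\ell(H;M)\), and \(\HH^1_\ell(H;M)\) is a well-defined \(\bR\)-vector space, in particular an abelian group. I expect no serious obstacle: the only genuine bookkeeping is the self-adjointness verification for \(D(m)\), which is precisely where the Hopf \(\ast\)-algebra axioms enter, and the centrality verification, which is precisely where \(H\)-equivariance of \(M\) is used — both routine once the sign and antipode conventions are fixed.
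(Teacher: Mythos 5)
Your proof is correct and follows essentially the same route as the paper's: the only substantive verification in either argument is that \(D(m)\) centralises \(\rho_B(B)\), which you carry out via Lemma~\ref{centrallem} exactly as the paper does, both computations reducing to \([m,b] \ract h - [m,b \ract h] = 0\). The paper merely compresses the remaining routine checks (subspace property, cocycle identity, normalisation, self-adjointness of \(D(m)\)) by viewing \(M\) as an \(H\)-bimodule with trivial left action and citing ordinary Hochschild cohomology, whereas you verify them by hand; this is a difference of presentation, not of method.
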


\begin{proof}
	Note that the right \(H\)-module \(M\) becomes an \(H\)-bimodule with respect to the trivial left \(H\)-action. From this perspective, the set \(\ZH^1_\ell(H;B,M)\) is an \(\bR\)-subspace of the \(\bC\)-vector space of \(M\)-valued Hochschild \(1\)-cocycles on \(H\), the set \(\BH^1_\ell(H;B,M)\) is an \(\bR\)-subspace of the \(\bC\)-vector space of \(M\)-valued Hochschild \(1\)-coboundaries on \(H\), and the map \(D\) is the restriction of the usual Hochschild coboundary operator. Thus, it suffices to check the inclusion
	\(
		\BH^1_\ell(H;B,M) \subseteq \Zent_B(\cC(H,M))_{\sa}
	\).
	Let \(m \in \CH^0_\ell(H;M)\), and set \(\mu \coloneqq Dm\). Then, for all \(b \in B\) and \(h \in H\),
	\begin{align*}
		[\mu,\rho_B(b)](h) &= (m \ract \cm{h}{1} - m \epsilon(\cm{h}{1})) \cdot (b \ract \cm{h}{2}) - (b \ract \cm{h}{1}) \cdot 	(m \ract \cm{h}{2} - m \epsilon(\cm{h}{2})) \\
		&= [m,b] \ract h - [m,b \ract h] =0.
	\end{align*}
	Hence, indeed, \([\mu,\rho_B(b)] = 0\).
\end{proof}

We now show that the lazy Sweedler cohomology of \(H\) admits a canonical \(\bR\)-linear action on the lazy Hochschild cohomology of \(H\) with relevant coefficients, with respect to which \(H\)-equvariant \(\ast\)-derivations canonically yield \(1\)-cocycles in the sense of ordinary group cohomology. We shall compute the gauge action of gauge transformations on gauge potentials for a crossed product by \(H\) in terms of the resulting \(\bR\)-affine actions of lazy Sweedler cohomology on lazy Hochschild cohomology.

\begin{theorem}\label{lazymcthm}
	Let \(B\) be a right \(H\)-module \(\ast\)-algebra, and let \(M\) be a right \(H\)-equivariant \(B\)-\(\ast\)-bimodule. 
	\begin{enumerate}
		\item The group \(\ZS^1_\ell(H;B,M)\) acts \(\bR\)-linearly on \(\ZH^1_\ell(H;M)\) via conjugation, i.e.,
		\begin{equation}
			\forall \sigma \in \ZS^1_\ell(H;B,M), \, \forall \mu \in \ZH^1_\ell(H;M), \quad \sigma \act \mu \coloneqq \sigma \star \mu \star \sigma^{-1}.
		\end{equation}
		\item The subgroup \(\BS^1_\ell(H;B,M)\) acts trivially on \(\ZH^1_\ell(H;M)\) and the subspace \(\BH^1_\ell(H;M)\) consists of \(\ZS^1_\ell(H;B,M)\)-invariant vectors, so that the linear action of \(\ZS^1_\ell(H;B,M)\) on \(\ZH^1_\ell(H;M)\) descends to a linear action of \(\HS^1_\ell(H;B,M)\) on \(\HH^1_\ell(H;M)\).
		\item Let \(\partial : B \to M\) be an \(H\)-equivariant \(\ast\)-derivation. The action of \(\ZS^1_{\ell}(H;B,M)\) on \(\ZH^1_\ell(H;M)\) admits the \(1\)-cocycle \(\MC[\partial] : \ZS^1_\ell(H;B,M) \to \ZH^1_\ell(H;M)\) given by
	\begin{equation}
		\forall \sigma \in \ZS^1_\ell(H;B,M), \, \forall h \in H, \quad \MC[\partial](\sigma)(h) \coloneqq -\partial\sigma(\cm{h}{1}) \cdot \sigma^\ast(\cm{h}{2}).
	\end{equation}
	Furthermore, the \(1\)-cocycle \(\MC[\partial]\) satisfies
	\begin{equation}\label{coboundaryeq}
		\forall \upsilon \in \CS^0_\ell(H;B,M), \quad \MC[\partial](D\upsilon) = D(-\partial(\upsilon)\upsilon^\ast),
	\end{equation}
	and hence descends to a \(1\)-cocycle \(\tilde{\MC}[\partial] : \HS^1_\ell(H;B,M) \to \HH^1_\ell(H;M)\) for the induced action of \(\HS^1_{\ell}(H;B,M)\) on \(\HH^1_\ell(H;M)\).
	\end{enumerate}
\end{theorem}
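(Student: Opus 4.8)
The plan is to deduce all of (1)--(3) at once from a single structural device: the lazy Sweedler and lazy Hochschild data of the pair \((B,M)\) sit inside the lazy Sweedler cohomology of the square-zero \(\ast\)-extension of \(B\) by \(M\). First I would form \(A \coloneqq B \oplus M\), with product \((b_1,m_1)(b_2,m_2) \coloneqq (b_1b_2,\,b_1\cdot m_2 + m_1\cdot b_2)\), diagonal right \(H\)-action, and the \emph{twisted} \(\ast\)-structure \((b,m)^\ast \coloneqq (b^\ast,-m^\ast)\); a routine check shows \(A\) is a right \(H\)-module \(\ast\)-algebra with \(M\) a square-zero ideal, and the sign twist is exactly what will make unitarity in \(\cC(H,A)\) correspond to self-adjointness of the \(M\)-component. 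Two maps \(B \to A\) will matter: the inclusion \(\jmath_0(b) \coloneqq (b,0)\), and, for any \(H\)-equivariant \(\ast\)-derivation \(\partial\colon B\to M\), the map \(\jmath_\partial(b) \coloneqq (b,\partial(b))\); the Leibniz rule together with \(\partial(b^\ast)=-\partial(b)^\ast\) and the twist on \(A\) make \emph{both} \(\jmath_0\) and \(\jmath_\partial\) into \(H\)-equivariant \(\ast\)-algebra homomorphisms. These induce unital \(\ast\)-homomorphisms \((\jmath_0)_\star,(\jmath_\partial)_\star\colon\cC(H,B)\to\cC(H,A)\) (and analogous maps on \(0\)-cochains) compatible with the coboundary homomorphisms \(D\), and there is a \(\ast\)-homomorphism \(r\colon\cC(H,A)\to\cC(H,B)\) killing the \(M\)-component with \(r\circ(\jmath_0)_\star=\id\).

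Next I would identify the pieces using Lemma~\ref{centrallem}. The condition that \(\sigma\colon H\to B\) centralise \(\rho_B(B)\) \emph{and} \(\rho_M(M)\) is precisely the condition that \((\sigma,0)\) centralise \(\rho_A(A)\), and the twist matches the respective unitarity conditions; hence \((\jmath_0)_\star\) restricts to a group isomorphism from \(\ZS^1_\ell(H;B,M)\) onto the subgroup of \(B\)-valued elements of \(\ZS^1_\ell(H;A,0)\). Dually, since \(M\) is square-zero and \(\mu\) is self-adjoint, \(\epsilon(\cdot)1_A\) plus the \(M\)-component inclusion of \(\mu\) is automatically unitary, with inverse \(\epsilon(\cdot)1_A\) minus that inclusion; one checks via Lemma~\ref{centrallem} that this gives a group isomorphism from the \emph{additive} group \(\ZH^1_\ell(H;M)\) onto \(K\coloneqq\ker\bigl(r|_{\ZS^1_\ell(H;A,0)}\bigr)\). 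Thus \(r\) yields a short exact sequence \(1\to\ZH^1_\ell(H;M)\to\ZS^1_\ell(H;A,0)\to\ZS^1_\ell(H;B,M)\to1\) split by \((\jmath_0)_\star\), realising \(\ZS^1_\ell(H;A,0)\) as a semidirect product \(\ZS^1_\ell(H;B,M)\ltimes\ZH^1_\ell(H;M)\). Under this identification the conjugation action of the first factor on \(K\) is exactly \(\mu\mapsto\sigma\star\mu\star\sigma^{-1}\); it is \(\bR\)-linear because it is simultaneously a group automorphism of the additive group \(\ZH^1_\ell(H;M)\) and the \(\bC\)-linear map ``conjugate by a convolution unit'' on \(\cC(H,M)\). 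This is (1). For (2): \((\jmath_0)_\star\) intertwines \(D\) with \(D\), so it carries \(\BS^1_\ell(H;B,M)\) into \(\BS^1_\ell(H;A,0)\), which is central in \(\ZS^1_\ell(H;A,0)\) by Proposition-Definition~\ref{lazysweedler}; hence it commutes with \(K\), i.e.\ lazy Sweedler coboundaries act trivially on \(\ZH^1_\ell(H;M)\). Dually, for \(m_0\in\CH^0_\ell(H;M)=\Zent_B(M)_{\sa}\) the constant \((1_B,m_0)\in\CS^0_\ell(H;A,0)\) satisfies \(D(1_B,m_0)=\epsilon(\cdot)1_A+(Dm_0\ \text{in the }M\text{-component})\), so \(\BH^1_\ell(H;M)\) corresponds to a subset of \(\BS^1_\ell(H;A,0)\cap K\), whose centrality forces \(\BH^1_\ell(H;M)\) to consist of \(\ZS^1_\ell(H;B,M)\)-fixed vectors. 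Both quotients are therefore legitimate and the action descends to \(\HS^1_\ell(H;B,M)\) acting on \(\HH^1_\ell(H;M)\).

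Finally, for (3) I would characterise \(\MC[\partial]\) inside \(A\): square-zero arithmetic gives \((\jmath_0)_\star(\sigma)\star\bigl((\jmath_\partial)_\star(\sigma)\bigr)^{-1}=\epsilon(\cdot)1_A+\bigl(-(\partial\circ\sigma)\star\sigma^{-1}\ \text{in the }M\text{-component}\bigr)\), and the \(M\)-component here is exactly \(\MC[\partial](\sigma)\) (using \(\sigma^\ast=\sigma^{-1}\) and the Leibniz rule for \(\partial\circ(-)\) on convolution). Both factors on the left lie in \(\ZS^1_\ell(H;A,0)\): the first obviously; the second because \((\jmath_\partial)_\star\) is a \(\ast\)-homomorphism sending a cocycle to a cocycle, the only nontrivial point being that \((\jmath_\partial)_\star(\sigma)\) centralises \(\rho_A(A)\), which follows from \(\sigma\) centralising \(\rho_M(M)\) together with the identity \(\partial\bigl(\rho_B(b)(h)\bigr)=\rho_M(\partial(b))(h)\) coming from \(H\)-equivariance of \(\partial\). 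So the left side lies in \(K\), whence \(\MC[\partial](\sigma)\in\ZH^1_\ell(H;M)\). The \(1\)-cocycle identity \(\MC[\partial](\sigma\star\tau)=\MC[\partial](\sigma)+\sigma\act\MC[\partial](\tau)\) then drops out of multiplicativity of \((\jmath_0)_\star\) and \((\jmath_\partial)_\star\) by telescoping \((\jmath_0)_\star(\sigma\star\tau)\bigl((\jmath_\partial)_\star(\sigma\star\tau)\bigr)^{-1}\). And for \eqref{coboundaryeq}, one has \(\jmath_0(\upsilon)\,\jmath_\partial(\upsilon)^{-1}=(1_B,\,-\partial(\upsilon)\upsilon^\ast)=(1_B,m_0)\) with \(m_0\coloneqq-\partial(\upsilon)\upsilon^\ast\); differentiating \(\upsilon b=b\upsilon\) shows \(m_0\in\Zent_B(M)\) and differentiating \(\upsilon\upsilon^\ast=1\) shows \(m_0^\ast=m_0\), so \(m_0\in\CH^0_\ell(H;M)\); since \(D\) is a homomorphism on the abelian group \(\CS^0_\ell(H;A,0)\) commuting with \((\jmath_0)_\star\) and \((\jmath_\partial)_\star\), one gets \(\epsilon(\cdot)1_A+(\MC[\partial](D\upsilon)\ \text{in the }M\text{-component})=D(\jmath_0(\upsilon))\,D(\jmath_\partial(\upsilon))^{-1}=D\bigl((1_B,m_0)\bigr)=\epsilon(\cdot)1_A+(Dm_0\ \text{in the }M\text{-component})\), hence \(\MC[\partial](D\upsilon)=Dm_0=D(-\partial(\upsilon)\upsilon^\ast)\in\BH^1_\ell(H;M)\). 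Thus \(\MC[\partial]\) carries lazy Sweedler coboundaries to lazy Hochschild coboundaries, and the cocycle identity descends to the asserted \(\widetilde{\MC}[\partial]\colon\HS^1_\ell(H;B,M)\to\HH^1_\ell(H;M)\).

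The main obstacle is bookkeeping rather than ideas: one must handle the \emph{twisted} \(\ast\)-structure on \(A\) with care, since it is precisely the sign twist that aligns self-adjointness of lazy Hochschild data with unitarity of the corresponding lazy Sweedler data, and getting that sign wrong breaks the identification \(\ZH^1_\ell(H;M)\cong K\). A close second is verifying that \((\jmath_\partial)_\star\) really maps \(\ZS^1_\ell(H;B,M)\) into \(\ZS^1_\ell(H;A,0)\): this is the one step that genuinely uses both that a lazy Sweedler \(1\)-cocycle centralises \(\rho_M(M)\), not merely \(\rho_B(B)\), and that \(\partial\) is \(H\)-equivariant. One could instead prove (1)--(3) by direct computation with Lemma~\ref{centrallem} and the explicit formulas, but the square-zero picture exposes the underlying structure --- \(\ZS^1_\ell(H;A,0)\) as the semidirect product, and \(\MC[\partial]\) as the difference between the two canonical splittings \((\jmath_0)_\star\) and \((\jmath_\partial)_\star\).
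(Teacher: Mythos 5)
Your argument is correct, but it takes a genuinely different route from the paper's proof, which establishes (1)--(3) by direct convolution computations using Lemma~\ref{centrallem}. You instead pass to the square-zero extension \(A = B \oplus M\) with the sign-twisted \(\ast\)-structure, identify \(\ZS^1_\ell(H;B,M)\) with the \(B\)-valued cocycles in \(\ZS^1_\ell(H;A,0)\) via \((\jmath_0)_\star\), and identify the additive group \(\ZH^1_\ell(H;M)\) with the kernel \(K\) of the projection \(r\) --- here the twist is indeed exactly what converts self-adjointness of \(\mu\) into unitarity of \(\epsilon(\cdot)1_A + \iota_M(\mu)\), and square-zeroness converts the Hochschild cocycle identity into the Sweedler one --- so that \(\ZS^1_\ell(H;A,0)\) splits as \(\ZS^1_\ell(H;B,M) \ltimes \ZH^1_\ell(H;M)\). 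Part (1) then becomes conjugation inside the larger group; part (2) follows from Proposition-Definition~\ref{lazysweedler} applied to \((A,0)\), since \((\jmath_0)_\star\) and \(m_0 \mapsto (1_B,m_0)\) intertwine the coboundary maps and \(\BS^1_\ell(H;A,0)\) is central; and part (3) realises \(\MC[\partial](\sigma)\) as the \(M\)-component of \((\jmath_0)_\star(\sigma) \star \left((\jmath_\partial)_\star(\sigma)\right)^{-1}\), the difference of the two splittings, from which both the cocycle identity and \eqref{coboundaryeq} are formal consequences. I checked the two points you flag as delicate and they do go through: \((\jmath_\partial)_\star(\sigma)\) centralises \(\rho_A(A)\) precisely because \(\sigma\) centralises \(\rho_M(M)\) (not merely \(\rho_B(B)\)) together with \(H\)-equivariance of \(\partial\), and \(-\partial(\upsilon)\upsilon^\ast \in \Zent_B(M)_\sa\) follows as you indicate (the paper makes the same observation, that \(\partial\) maps \(\Cent_B(B \oplus M)\) into \(\Zent_B(M)\), inside its direct verification of \eqref{coboundaryeq}). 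What your approach buys is conceptual clarity: the action, the triviality of the coboundary actions, the sign conventions, and the crossed-homomorphism identity for \(\MC[\partial]\) all become instances of general facts about a split extension with abelian kernel and two sections. What the paper's computation buys is self-containment: your route must first set up and verify the dictionary (that \(A\) is an \(H\)-module \(\ast\)-algebra, that \(r\) restricted to \(\ZS^1_\ell(H;A,0)\) is split surjective onto \(\ZS^1_\ell(H;B,M)\) with kernel exactly \(K \cong \ZH^1_\ell(H;M)\), and that \((\jmath_\partial)_\star\) lands in \(\ZS^1_\ell(H;A,0)\)), which is where essentially the same calculations reappear, albeit in more structured form.
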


\begin{proof}
	Let us first show that \(\ZH^1_\ell(H;M)\) is invariant under conjugation by \(\ZS^1_\ell(H;B,M)\); note that
	\(
		\set{\sigma \act \mu \given \sigma \in \ZS^1_\ell(H;B,M), \, \mu \in \ZH^1_\ell(H;M)} \subset Z_B(M)
	\) since \(\ZS^1_\ell(H,M)\) centralises \(\rho_B(B)\). Let \(\sigma \in \ZS^1_\ell(H;B,M)\) and let \(\mu \in \ZH^1_\ell(H;M)\). First,
	\[
		(\sigma \act \mu)(1) = \sigma(1) \cdot \mu(1) \cdot \inv{\sigma}(1) = 0.
	\] Next, since \(\sigma\) is unitary in the \(\ast\)-algebra \(\cC(H;B)\) and \(\mu\) is self-adjoint in the \(\cC(H;B)\)-\(\ast\)-bimodule \(\cC(H;M)\), it follows that \(\sigma \act \mu = \sigma \star \mu \star \sigma^\ast\) is self-adjoint. Finally, for all \(h,k \in H\),
	\begin{align*}
		&(\sigma \act \mu)(hk)\\ &\quad= \sigma(\cm{h}{1}\cm{k}{1}) \cdot \mu(\cm{h}{2}\cm{k}{2}) \cdot \inv{\sigma}(\cm{h}{3}\cm{k}{3})\\
		&\quad= (\sigma(\cm{h}{1}) \ract \cm{k}{1}) \cdot \sigma(\cm{k}{2}) \cdot \left(\mu(\cm{h}{2}) \ract \cm{k}{3} - \epsilon(\cm{h}{2})\mu(\cm{k}{3}) \right) \cdot (\inv{\sigma}(\cm{h}{3}) \ract \cm{k}{4}) \cdot \inv{\sigma}(\cm{k}{5})\\
		&\quad=  (\sigma(\cm{h}{1}) \ract \cm{k}{1}) \cdot (\mu(\cm{h}{2}) \ract \cm{k}{2}) \cdot \sigma(\cm{k}{3}) \cdot (\inv{\sigma}(\cm{h}{3})\ract\cm{k}{4}) \cdot \inv{\sigma}(\cm{k}{5})\\
		&\quad\quad\quad\quad -(\sigma(\cm{h}{1}) \ract \cm{k}{1}) \cdot \sigma(\cm{k}{2}) \cdot (\inv{\sigma}(\cm{h}{2} \ract \cm{k}{3}) \cdot \mu(\cm{k}{4}) \cdot \inv{\sigma}(\cm{k}{5})\\
		&\quad= (\sigma(\cm{h}{1}) \ract \cm{k}{1}) \cdot (\mu(\cm{h}{2}) \ract \cm{k}{2}) \cdot (\inv{\sigma}(\cm{h}{3})\ract\cm{k}{3}) \cdot \sigma(\cm{k}{4}) \cdot \inv{\sigma}(\cm{k}{5})\\
		&\quad\quad\quad\quad - (\sigma(\cm{h}{1}) \ract \cm{k}{1}) \cdot (\inv{\sigma}(\cm{h}{2} \ract \cm{k}{2}) \cdot \sigma(\cm{k}{3}) \cdot \mu(\cm{k}{4}) \cdot \inv{\sigma}(\cm{k}{5})\\
		&\quad= (\sigma\act\mu)(h)\ract k - \epsilon(h)(\sigma\act\mu)(k)
	\end{align*}
	by repeated application of Lemma~\ref{centrallem}. Hence, \(\sigma \act \mu \in \ZH^1_\ell(H;M)\).
	
	Next, let us show that \(\BS^1_\ell(H;B,M)\) acts trivially by conjugation on \(\ZH^1_\ell(H;M)\) and that \(\BH^1_\ell(H;M)\) consists of \(\ZS^1_\ell(H;B,M)\)-invariant vectors. First, let \(\upsilon \in \CS^0_\ell(H;B,M)\) and let \(\mu \in \ZH^1_\ell(H;M)\). For all \(h \in H\), by Lemma~\ref{centrallem},
	\begin{multline*}
		D\upsilon \act \mu(h) = (\upsilon \ract \cm{h}{1})\upsilon^\ast \cdot \mu(\cm{h}{2}) \cdot (\upsilon^\ast \ract \cm{h}{3})\upsilon
		= (\upsilon \ract \cm{h}{1})\upsilon^\ast (\upsilon^\ast \ract \cm{h}{2}) \cdot \mu(\cm{h}{3})\cdot\upsilon \\
		= (\upsilon \ract \cm{h}{1})(\upsilon^\ast \ract \cm{h}{2}) \cdot \mu(\cm{h}{3}) \cdot \upsilon^\ast\upsilon = \epsilon(\cm{h}{1})\mu(\cm{h}{2}) = \mu(h).
	\end{multline*}
	so that, indeed, \(D\upsilon \act \mu = \mu\). Now, let \(m \in \CH^0_\ell(H;M)\) and \(\sigma \in \ZS^1_\ell(H;B,M)\). Then for all \(h \in H\), by Lemma~\ref{centrallem},
	\begin{multline*}
		(\sigma \act Dm)(h) = \sigma(\cm{h}{1}) \cdot (m \ract \cm{h}{2} - \epsilon(\cm{h}{2})m) \cdot \inv{\sigma}(\cm{h}{3})\\
		= (m \ract \cm{h}{1}) \cdot \sigma(\cm{h}{2})\inv{\sigma}(\cm{h}{3}) - \sigma(\cm{h}{1})\inv{\sigma}(\cm{h}{2}) \cdot m = m \ract h - \epsilon(h)m = Dm(h). 
	\end{multline*}
	
	From now on, let \(\partial : B \to M\) be an \(H\)-equivariant \(\ast\)-derivation. Let us now show that \(\MC[\partial]\) is a well-defined function. Let \(\sigma \in \ZS^1_\ell(H;B,M)\), and let \(\mu \coloneqq \MC[\partial](\sigma) \in \cC(H,M)\). First, \(\mu(1) = -\partial\sigma(1) \cdot \sigma^\ast(1) = 0\). Next, for all \(b \in B\) and \(h \in H\), by Lemma~\ref{centrallem},
	\begin{align*}
		&(b \ract \cm{h}{1}) \cdot \mu(\cm{h}{2})\\ &\quad= - (b \ract \cm{h}{1}) \cdot \partial\sigma(\cm{h}{2}) \cdot \sigma^\ast(\cm{h}{3})\\
		&\quad= -\partial\mleft((b\ract\cm{h}{1}) \cdot \sigma(\cm{h}{2})\mright) \cdot \sigma^\ast(\cm{h}{3}) + \partial(b \ract \cm{h}{1}) \cdot \sigma(\cm{h}{2})\sigma^\ast(\cm{h}{3})\\
		&\quad= -\partial\mleft(\sigma(\cm{h}{1}) \cdot (b\ract\cm{h}{2})\mright) \cdot \sigma^\ast(\cm{h}{3}) + \partial(b) \ract h\\
		&\quad= -\partial\sigma(\cm{h}{1}) \cdot (b \ract \cm{h}{2})\sigma^\ast(\cm{h}{3}) -\sigma(\cm{h}{1}) \cdot \left(\partial(b) \ract \cm{h}{2}\right) \cdot \sigma^\ast(\cm{h}{3}) + \partial(b) \ract h\\
		&\quad= - \partial\sigma(\cm{h}{1}) \cdot \sigma^\ast(\cm{h}{2}) (b \ract \cm{h}{3}) - \sigma(\cm{h}{1})\sigma^\ast(\cm{h}{2}) \cdot (\partial(b) \ract h) +\partial(b)\ract h\\
		&\quad= \mu(\cm{h}{1}) \cdot (b \ract \cm{h}{2}),
	\end{align*}
	so that \(\mu \in \Zent_B(\cC(H,M))\). Next, for all \(h \in H\), since \(\sigma(\cm{h}{1})\sigma^\ast(\cm{h}{2}) = \epsilon(h)1_B\),
	\[
		\mu^\ast(h) = \left(-\partial\sigma(\cm{S(h)}{1}^\ast) \cdot \sigma(S(\cm{S(h)}{2}^\ast)^\ast)^\ast \right)^\ast	
		= \sigma(\cm{h}{1}) \cdot \partial\sigma^\ast(\cm{h}{2}) = -\partial \sigma(\cm{h}{1}) \cdot \sigma^\ast(\cm{h}{2}) = \mu(h),
	\]
	so that \(\mu^\ast = \mu\). Finally, for all \(h, k \in H\), by Lemma~\ref{centrallem} and the fact that \(\sigma \in \Unit(\cC(H,B))\),
	\begin{align*}
		&\mu(hk)\\ &\quad= -\partial\sigma(\cm{h}{1}\cm{k}{1}) \cdot \sigma^\ast(\cm{h}{2}\cm{k}{2})\\
		&\quad= -\partial\mleft((\sigma(\cm{h}{1}) \ract \cm{k}{1}) \cdot \sigma(\cm{k}{2})\mright) \cdot (\sigma^\ast(\cm{h}{2}) \ract \cm{k}{3}) \sigma^\ast(\cm{k}{4})\\
		&\quad= -\partial\mleft((\sigma(\cm{h}{1}) \ract \cm{k}{1}) \cdot \sigma(\cm{k}{2})\mright) \cdot \sigma^\ast(\cm{k}{3}) (\sigma^\ast(\cm{h}{2}) \ract \cm{k}{4})\\
		&\quad= -(\partial\sigma(\cm{h}{1}) \ract \cm{k}{1}) \cdot \sigma(\cm{k}{2}) \sigma^\ast(\cm{k}{3}) (\sigma^\ast(\cm{h}{2}) \ract \cm{k}{4}) \\
		&\quad\quad\quad\quad - (\sigma(\cm{h}{1})\ract \cm{k}{1}) \cdot \partial\sigma(\cm{k}{2}) \cdot \sigma^\ast(\cm{k}{3}) (\sigma^\ast(\cm{h}{2}) \ract \cm{k}{4})\\
		&\quad= -(\partial\sigma(\cm{h}{1}) \ract \cm{k}{1}) \cdot \epsilon(\cm{k}{2})(\sigma^\ast(\cm{h}{2}) \ract \cm{k}{3}) - (\sigma(\cm{h}{1}) \ract \cm{k}{1}) \cdot\mu(\cm{k}{2}) \cdot (\sigma^\ast(\cm{h}{2}) \ract \cm{k}{3})\\
		&\quad= -(\partial\sigma(\cm{h}{1}) \ract \cm{k}{1}) (\sigma^\ast(\cm{h}{2}) \ract \cm{k}{2}) - (\sigma(\cm{h}{1}) \ract \cm{k}{1})(\sigma^\ast(\cm{h}{2}) \ract \cm{k}{2})\cdot \mu(\cm{k}{3})\\
		&\quad = \mu(h)\ract k + \epsilon(h)\mu(k).
	\end{align*}
	Hence, \(\MC[\partial](\sigma) \eqqcolon \mu \in \ZH^1_\ell(H;M)\).
	
	Next, let us show that \(\MC[\partial]\) is a \(1\)-cocycle for the conjugation action of \(\ZS^1_\ell(H;B,M)\) on \(\ZH^1_\ell(H;M)\). Let \(\sigma,\tau \in \ZS^1_\ell(H;B,M)\). Then, for all \(h \in H\),
	\begin{align*}
		\MC[\partial](\sigma \star \tau)	(h) &= -\partial\mleft(\sigma(\cm{h}{1})\tau(\cm{h}{2})\mright) \cdot \tau^\ast(\cm{h}{3})\sigma^\ast(\cm{h}{4})\\
		&= -\partial\sigma(\cm{h}{1}) \cdot \tau(\cm{h}{2})\tau^\ast(\cm{h}{3})\sigma^\ast(\cm{h}{4}) - \sigma(\cm{h}{1}) \cdot \partial\tau(\cm{h}{2}) \cdot \tau^\ast(\cm{h}{3})\sigma^\ast(\cm{h}{4})\\
		&= \MC[\partial](\sigma)(h) + \sigma \act \MC[\partial](\tau)(h),
	\end{align*}
	so that, indeed, \(\MC[\partial](\sigma \star \tau) = \MC[\partial](\sigma) + \sigma \act \MC[\partial](\tau)\).

	Finally, let us show that \(\MC[\partial]\) satisfies \eqref{coboundaryeq}. First, observe that \(\partial(\Cent_B(M)) \subset \Zent_B(M)\), since for all \(b \in \Cent_B(M)\) and \(c \in B\),
	\[
		\partial(b) \cdot c = \partial(bc) - b \cdot \partial(c) = \partial(cb) - b \cdot \partial(c) = \partial(c)\cdot b + c \cdot \partial(b) - b \cdot \partial(c) = c \cdot \partial(b).
	\]
	Now, let \(\upsilon \in \CS^0_\ell(H;B,M)\). Then, for all \(h \in H\), by the above observation,
	\begin{align*}
		\MC[D\upsilon](h) &= - \partial\mleft((\upsilon \ract \cm{h}{1}) \cdot \upsilon^\ast\mright) \cdot \left((\upsilon^\ast \ract \cm{h}{2}) \cdot \upsilon\right)\\
		&= -(\partial(\upsilon) \ract \cm{h}{1}) \cdot \upsilon^\ast (\upsilon^\ast \ract \cm{h}{2}) \cdot \upsilon - (\upsilon \ract \cm{h}{1}) \cdot \partial(\upsilon^\ast) \cdot (\upsilon^\ast \ract \cm{h}{2}) \cdot \upsilon\\
		&= -(\partial(\upsilon) \cdot \upsilon^\ast) \ract h - \epsilon(h)\upsilon\cdot\partial(\upsilon^\ast)\\
		&= -(\partial(\upsilon) \cdot \upsilon^\ast) \ract h + \epsilon(h) \partial(\upsilon) \cdot \upsilon^\ast\\
		&= D(-\partial(\upsilon) \cdot \upsilon^\ast)(h),
	\end{align*}
	as was claimed.
\end{proof}

\begin{definition}
	Let \(B\) be a right \(H\)-module \(\ast\)-algebra, let \(M\) be a right \(H\)-equivariant \(B\)-\(\ast\)-bimodule, and let \(\partial : B \to M\) be a right \(H\)-equivariant \(\ast\)-derivation. We define the \emph{Maurer--Cartan \(1\)-cocycle} of \(\partial\) to be the \(1\)-cocycle \(\MC[\partial] : \ZS^1(H;B,M) \to \ZH^1(H;M)\) of Theorem~\ref{lazymcthm} induced by \(\partial\). 
\end{definition}

When \(H\) is a group algebra, our constructions reduce to degree \(1\) group cohomology in a manner compatible, e.g., with the  results of \'{C}a\'{c}i\'{c}--Mesland~\cite{CaMe}*{Thm.\ 3.11}.

\begin{proposition}[cf.\ Sweedler~\cite{Sweedler}*{Thm.\ 3.1}]\label{lazygrp}
	Suppose that \(H = \bC[\Gamma]\) for \(\Gamma\) a group. Let \(B\) be a right \(\Gamma\)-\(\ast\)-algebra, and let \(M\) be a right \(\Gamma\)-equivariant \(B\)-\(\ast\)-bimodule, so that \(\Unit(\Cent_B(B \oplus M))\) defines a multiplicative \(\Gamma\)-module and \(Z_B(M)_{\sa}\) defines a \(\bR\)-linear representation of \(\Gamma\). 
	\begin{enumerate}
		\item The map
		\(
			r_B \coloneqq (f \mapsto \rest{f}{\Gamma}) : \ZS^1_\ell(H;B,M) \to \operatorname{Z}^1\mleft(\Gamma,\Unit(\Cent_B(B \oplus M))\mright)
		\)
		is a group isomorphism, such that \(r_B \circ D = \dt{\Gamma}\) and
		\(
			r_B\mleft(\BS^1_\ell(H;B,M)\mright) = \operatorname{B}^1\mleft(\Gamma,\Unit(\Cent_B(B \oplus M))\mright);
		\)
		hence, in particular, it descends to an isomorphism of Abelian groups
		\[
			\HS^1_\ell(H;M) \iso \operatorname{H}^1\mleft(\Gamma,\Unit(\Cent_B(B\oplus M))\mright).
		\]
		\item The map
		\(
			r_M \coloneqq (\mu \mapsto \rest{\mu}{\Gamma}) : \ZH^1_\ell(H;M) \to \operatorname{Z}^1(\Gamma,\Zent_B(M)_\sa)
		\)
		defines an isomorphism of \(\bR\)-vector spaces, such that \(r_M \circ D = \dt{\Gamma}\) and 
		\(
			 r_M\mleft(\BH^1_\ell(H;M)\mright) = \operatorname{B}^1(\Gamma,\Zent_B(M)_\sa);
		\)
		hence, in particular, it descends to an isomorphism of Abelian groups
		\[
			\HH^1_\ell(H;M) \iso \operatorname{H}^1(\Gamma,\Zent_B(M)_\sa).
		\]
		\item The conjugation action of \(\ZS^1_\ell(H;B,M)\) on \(\ZH^1_\ell(H;M)\) is trivial. Thus, for every \(\Gamma\)-equivariant \(\ast\)-derivation \(\partial : B \to M\), the induced map
		\[
			\MC_\Gamma \coloneqq r_M \circ \MC[\partial] \circ r_B^{-1} : \operatorname{Z}^1\mleft(\Gamma,\Unit(\Cent_B(B \oplus M))\mright)
 \to \operatorname{Z}^1(\Gamma,\Zent_B(M)_\sa)
		\]
		is a group homomorphism satisfying
		\begin{gather*}
			\forall \sigma \in 	\operatorname{Z}^1\mleft(\Gamma,\Unit(\Cent_B(B \oplus M))\mright), \,\forall \gamma \in \Gamma, \quad \MC_\Gamma(\sigma)(\gamma) = -\partial \sigma(\gamma) \cdot \sigma(\gamma)^\ast,\\
			\forall \upsilon \in \Unit(\Cent_B(B \oplus M)), \quad 
			\MC_\Gamma(\dt{\Gamma}\upsilon) = \dt{\Gamma}(-\partial(\upsilon) \cdot \upsilon^\ast),
		\end{gather*}
		so that, in particular, it descends to a group homomorphism \[\operatorname{H}^1\mleft(\Gamma,\Unit(\Cent_B(B \oplus M))\mright)
 \to \operatorname{H}^1(\Gamma,\Zent_B(M)_\sa).\]
	\end{enumerate}
\end{proposition}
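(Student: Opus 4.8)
The plan is to exploit that $\Gamma$ is a linear basis of $H = \bC[\Gamma]$ consisting of grouplike elements, so that every structure in sight becomes pointwise. First I would record the relevant facts about the group Hopf $\ast$-algebra: $\Delta(\gamma) = \gamma \otimes \gamma$, $\epsilon(\gamma) = 1$, $S(\gamma) = \gamma^{-1}$, and $\gamma^\ast = \gamma^{-1}$, so that $S(\gamma)^\ast = \gamma$ for all $\gamma \in \Gamma$. It follows that $\cC(H;B) = \Hom_{\bC}(H,B)$ is canonically the function algebra $\operatorname{Map}(\Gamma,B) = \prod_{\gamma \in \Gamma} B$ with \emph{pointwise} product, unit (the constant function $1_B$, since $\epsilon(\gamma) = 1$), and $\ast$-operation (since $f^\ast(\gamma) = f(S(\gamma)^\ast)^\ast = f(\gamma)^\ast$); likewise $\cC(H;M) = \operatorname{Map}(\Gamma,M)$ with pointwise $\cC(H;B)$-bimodule and $\ast$-structures. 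In particular an element of $\cC(H;B)$ is a unit (resp.\ self-adjoint) precisely when each of its values is, and its convolution inverse is computed valuewise; also the right $\Gamma$-action by $\ast$-automorphisms visibly preserves $\Unit(\Cent_B(B\oplus M))$ and $\Zent_B(M)_{\sa}$, so the preliminary claims of the statement hold, and for fixed $\gamma \in \Gamma$ the element $b \ract \gamma$ runs over all of $B$ as $b$ does (and $m \ract \gamma$ over all of $M$).

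The heart of the argument is then the observation that each condition cutting out the lazy $0$-cochains, $1$-cocycles, and $1$-coboundaries is $\bC$-(bi)linear in the arguments over which it is quantified, so a $\bC$-linear map out of $H$ satisfies it if and only if its restriction to the basis $\Gamma$ does. Concretely, for $\sigma \in \cC(H;B)$: evaluating the cocycle identity \eqref{scocycle} on $\gamma, \delta \in \Gamma$ gives $\sigma(\gamma\delta) = (\sigma(\gamma) \ract \delta)\,\sigma(\delta)$, the inhomogeneous (right-action) $1$-cocycle identity for the $\Gamma$-module $\Unit(\Cent_B(B\oplus M))$; by Lemma~\ref{centrallem}, centrality of $\sigma$ over $\rho_B(B)\oplus\rho_M(M)$ evaluates on $\gamma$ to $(b\ract\gamma)\cdot\sigma(\gamma) = \sigma(\gamma)\cdot(b\ract\gamma)$ and $(m\ract\gamma)\cdot\sigma(\gamma) = \sigma(\gamma)\cdot(m\ract\gamma)$, i.e.\ $\sigma(\gamma) \in \Cent_B(B\oplus M)$; unitarity of $\sigma$ evaluates, via $S(\gamma)^\ast = \gamma$, to $\sigma(\gamma) \in \Unit(B)$; and $\sigma(1_H) = 1$ is $\sigma(e) = 1$. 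The same analysis---verbatim, after replacing products by the corresponding Leibniz terms---treats $\mu \in \cC(H;M)$: \eqref{hcocycle} becomes $\mu(\gamma\delta) = \mu(\gamma)\ract\delta + \mu(\delta)$, $\mu \in \Zent_B(\cC(H;M))$ becomes $\mu(\gamma) \in \Zent_B(M)$, and self-adjointness becomes $\mu(\gamma) \in \Zent_B(M)_{\sa}$. Hence $r_B$ and $r_M$ are well defined and bijective---a classical cocycle extends $\bC$-linearly to a lazy one and conversely---they are respectively a group isomorphism and an $\bR$-linear isomorphism because convolution and the module structure are pointwise on $\Gamma$, and, since $D\upsilon(\gamma) = (\upsilon\ract\gamma)\,\upsilon^\ast = (\upsilon\ract\gamma)\,\upsilon^{-1}$ while $Dm(\gamma) = m\ract\gamma - m$, they satisfy $r_B \circ D = \dt{\Gamma}$ and $r_M \circ D = \dt{\Gamma}$. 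Therefore they carry lazy coboundaries onto classical coboundaries and descend to the asserted isomorphisms $\HS^1_\ell(H;M) \iso \operatorname{H}^1(\Gamma,\Unit(\Cent_B(B\oplus M)))$ and $\HH^1_\ell(H;M) \iso \operatorname{H}^1(\Gamma,\Zent_B(M)_{\sa})$.

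For the third part I would first observe that the conjugation action is trivial: for $\sigma \in \ZS^1_\ell(H;B,M)$, $\mu \in \ZH^1_\ell(H;M)$, and $\gamma \in \Gamma$, pointwise multiplication gives $(\sigma\act\mu)(\gamma) = \sigma(\gamma)\,\mu(\gamma)\,\sigma(\gamma)^{-1}$, and since $\mu(\gamma) \in \Zent_B(M)$ commutes with $\sigma(\gamma) \in \Unit(B) \subseteq B$, this equals $\mu(\gamma)$; as maps out of $H$ are determined by their values on $\Gamma$, $\sigma\act\mu = \mu$, whence also the descended action of $\HS^1_\ell$ on $\HH^1_\ell$ is trivial. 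Given an $H$-equivariant $\ast$-derivation $\partial : B \to M$, evaluating the Maurer--Cartan $1$-cocycle of Theorem~\ref{lazymcthm} on $\gamma$ and using $\sigma^\ast(\gamma) = \sigma(\gamma)^\ast$ yields $\MC[\partial](\sigma)(\gamma) = -\partial\sigma(\gamma)\cdot\sigma(\gamma)^\ast$; transporting the cocycle identity for $\MC[\partial]$ and the identity \eqref{coboundaryeq} through the isomorphisms $r_B$, $r_M$ (using $r_B \circ D = \dt{\Gamma}$ and $r_M \circ D = \dt{\Gamma}$) recovers the two identities displayed in the statement, and because the target action is trivial a $1$-cocycle is the same thing as a homomorphism, so $\MC_\Gamma$ is a group homomorphism descending to $\operatorname{H}^1$. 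The only step that calls for genuine care is the faithfulness of restriction to $\Gamma$, i.e.\ confirming that every defining condition really is (bi)linear in its quantified variables---Lemma~\ref{centrallem} is precisely the tool that puts the centrality conditions into the form ``for all $h$, [expression linear in $h$] vanishes''---together with the bookkeeping of the $\ast$-structure through $S(\gamma)^\ast = \gamma$; everything else is a direct translation that I would not spell out in full.
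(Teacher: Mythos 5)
Your proposal is correct and follows essentially the same route as the paper: restriction to the basis $\Gamma$ of group-like elements identifies the convolution algebra $\cC(H,B)$ (resp.\ bimodule $\cC(H,M)$) with the pointwise function algebra $\cF(\Gamma,B)$ (resp.\ $\cF(\Gamma,M)$), under which all the lazy conditions become the classical group-cohomological ones, with the remaining verifications routine. Your write-up merely spells out more of the pointwise evaluations (cocycle identities, centrality via Lemma~\ref{centrallem}, triviality of the conjugation action, the Maurer--Cartan formulas) that the paper dispatches as ``routine calculations.''
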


\begin{proof}
Given sets \(X\) and \(Y\), let \(\cF(X,Y)\) be the set of all functions from \(X\) to \(Y\). It follows that \(\cF(\Gamma,B)\) is a unital \(\ast\)-algebra with respect to the pointwise operations, that any unital \(\ast\)-subalgebra \(A\) of \(B\) yields a unital \(\ast\)-subalgebra \(\cF(\Gamma,A)\) of \(\cF(\Gamma,B)\), that \(\cF(\Gamma,M)\) defines a \(\cF(\Gamma,B)\)-\(\ast\)-bimodule with respect to pointwise operations, and that any \(B\)-\(\ast\)-sub-bimodule \(L\) of \(M\) yields a \(\cF(\Gamma,B)\)-\(\ast\)-sub-bimodule of \(\cF(\Gamma,M)\). By abuse of notation, let
\[
	r_B \coloneqq (f \mapsto \rest{f}{\Gamma}) : \cC(H,B) \to \cF(\Gamma,B), \quad r_M \coloneqq (\mu \mapsto \rest{\mu}{\Gamma}) : \cC(H,M) \to \cF(\Gamma,M).
\]
Since \(\Gamma\) is a basis for \(H\), it follows that \(r_B\) and \(r_M\) are bijections, and since \(\Gamma\) consists of group-like elements and satisfies \(\rest{S \circ \ast}{\Gamma} = \id_\Gamma\), it follows that \(r_B\) is a \(\ast\)-isomorphism and that \(r_M\) is a \(\ast\)-preserving \(\bC\)-linear map satisfying
\[
	\forall \sigma,\tau \in \cC(H,B), \, \forall \mu \in \cC(H,M), \quad r_M(\sigma \star \mu \star \tau) = r_B(\sigma) \cdot r_M(\mu) \cdot r_B(\tau).
\]
The fact that \(\Gamma\) consists of group-like elements now implies that
\begin{gather*}
	r_B\mleft(\Unit(\Cent_{\cC(H,B)}(\rho_B(B) \oplus \rho_M(M)))\mright) = \cF(\Gamma,\Unit(\Cent_B(B\oplus M))),\\
	r_M\mleft(\Zent_B(\cC(H,M))_\sa\mright) = \cF(\Gamma,\Zent_B(M)_\sa),
\end{gather*}
from which our claims now follow by routine calculations.
\end{proof}

Similarly, when \(H\) is the universal enveloping algebra of a real Lie algebra, our constructions reduce to standard degree \(1\) Lie cohomology (with a small caveat related to lazy Sweedler cohomology) when the coefficient algebra satisfies the following condition.

\begin{definition}
	Let \(\fg\) be a real Lie algebra, and let \(A\) be a commutative right \(\fg\)-module \(\ast\)-algebra. We say that \(A\) \emph{has \(\fg\)-equivariant exponentials} if for every \(a \in A_\sa\) there exists a unitary \(\upsilon_a \in \Unit(A)\), such that
	\[
		\forall X \in \fg, \quad (\upsilon_a) \ract X = \iu{} (a \ract X) \cdot \upsilon_a,
	\]
	in which case we call \(\upsilon_a\) a \emph{\(\fg\)-equivariant exponential} of \(a\).
\end{definition}

\begin{proposition}[cf.\ Sweedler~\cite{Sweedler}*{Prop.\ 4.2}]
	Suppose that \(H = \cU(\fg)\) for \(\fg\) a real Lie algebra. Let \(B\) be a right \(\fg\)-module \(\ast\)-algebra, and let \(M\) be a right \(\fg\)-equivariant \(B\)-\(\ast\)-bimodule, so that \(\Cent_B(B \oplus M)_\sa\) and \(Z_B(M)_\sa\) both define right \(\fg\)-modules over \(\bR\).
	\begin{enumerate}
		\item The map
		\(
			r_B \coloneqq (f \mapsto \rest{-\iu{}f}{\fg}) : \ZS^1_\ell(H;B,M) \to \operatorname{Z}^1\mleft(\fg,\Cent_B(B \oplus M)_{\sa}\mright)
		\)
		defines an isomorphism of Abelian groups, such that 
		\[
			r_B\mleft(\BS^1_\ell(H;B,M)\mright) = \set{(X \mapsto -\iu{}(\upsilon \ract X)\upsilon^\ast) \given \upsilon \in \Unit(\Cent_B(B \oplus M))}.
		\]
		Moreover, if \(\Cent_B(B \oplus M)\) has \(\fg\)-equivariant exponentials, then for every \(b \in \Cent_B(B \oplus M)_\sa\) and every \(\fg\)-equivariant exponential \(\upsilon_b \in \Unit(\Cent_B(B \oplus M))\) of \(b\),
		\[
			r_B \circ D(\upsilon_b) = \dt{\fg}(b),
		\]
		so that \(B^1\mleft(\fg,\Cent_B(B \oplus M)_{\sa}\mright) \leq r_B\mleft(\BS^1_\ell(H;B,M)\mright)\), and 
		hence, \(r_B\) induces a short exact sequence of Abelian groups
		\[
			0 \to \frac{r_B\mleft(\BS^1_\ell(H;B,M)\mright)}{B^1\mleft(\fg,\Cent_B(B \oplus M)_{\sa}\mright)} \to \operatorname{H}^1\mleft(\fg,\Cent_B(B \oplus M)_{\sa}\mright) \to \HS^1_\ell(H;B,M) \to 0.
		\]
		\item The map
		\(
			r_M \coloneqq (\mu \mapsto \rest{\mu}{\fg}) : \ZH^1_\ell(H;M) \to \operatorname{Z}^1(\fg,\Zent_B(M)_\sa)
		\)
		is an isomorphism of \(\bR\)-vector spaces, such that \(r_M \circ D = \dt{\fg}\) and
		\(
			r_M\mleft(\BH^1_\ell(H;M)\mright) = \operatorname{B}^1(\fg,\Zent_B(M)_\sa)
		\);
		hence, in particular, \(r_M\) descends to an isomorphism of Abelian groups
		\[
			\HH^1_\ell(H;M) \iso \operatorname{H}^1(\fg,\Zent_B(M)_\sa).
		\]
		\item The conjugation action of \(\ZS^1_\ell(H;B,M)\) on \(\ZH^1_\ell(H;M)\) is trivial. Thus, for every \(\fg\)-equivariant \(\ast\)-derivation \(\partial : B \to M\), the induced map
		\[
			\MC_\fg \coloneqq r_M \circ \MC[\partial] \circ r_B^{-1} :  \operatorname{Z}^1\mleft(\fg,\Cent_B(B \oplus M)_{\sa}\mright) \to \operatorname{Z}^1(\fg,\Zent_B(M)_\sa)
		\]
		is a group homomorphism satisfying
		\begin{gather*}
			\forall c \in 	 \operatorname{Z}^1\mleft(\fg,\Cent_B(B \oplus M)_{\sa}\mright), \,\forall X \in \fg, \quad \MC_\fg(\sigma)(X) = -\iu{}\partial c(X),\\
			\forall b \in \Cent_B(B \oplus M)_\sa, \, \quad \MC_\fg(\dt{\fg}b) = \dt{\fg}(-\iu{}\partial b),
		\end{gather*}
		so that, in particular, it descends to a group homomorphism \[\operatorname{H}^1\mleft(\fg,\Cent_B(B \oplus M)_{\sa}\mright) \to \operatorname{H}^1(\fg,\Zent_B(M)_\sa).\]

	\end{enumerate}
	
\end{proposition}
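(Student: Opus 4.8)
The plan is to mirror the proof of Proposition~\ref{lazygrp}, with the Poincar\'{e}--Birkhoff--Witt basis of \(\cU(\fg)\) playing the role of the basis \(\Gamma\) of \(\bC[\Gamma]\) and with the reconstruction of a convolution cocycle from its restriction to \(\fg \subset \cU(\fg)\) playing the role of evaluation at group elements. Throughout one uses that \(\fg\) consists of primitive elements with \(S(X) = -X\), \(\epsilon(X) = 0\), and \(X^\ast = -X\) (so \(S(X)^\ast = X\)), and that \(\cU(\fg)\) is generated as an algebra by \(\fg\). First I would record the preliminary observation asserted in the statement: evaluating the equivariance identity \((m\cdot b)\ract X = (m\ract X)\cdot b + m\cdot(b\ract X)\) at a primitive \(X\) shows that \(\Cent_B(B\oplus M) = \Zent(B)\cap\Cent_B(M)\) and \(\Zent_B(M)\) are right \(\fg\)-submodules of \(B\) and \(M\), and that they are \(\ast\)-invariant, so their self-adjoint parts are right \(\fg\)-modules over \(\bR\).

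The technical heart is an \emph{extension lemma}: for any right \(\cU(\fg)\)-module \(V\) over \(\bC\), a \(\bC\)-linear map \(c:\fg\to V\) extends uniquely to a \(\bC\)-linear map \(\hat c:\cU(\fg)\to V\) with \(\hat c(1)=0\), \(\rest{\hat c}{\fg}=c\), and \(\hat c(hX)=\hat c(h)\ract X+\epsilon(h)c(X)\); moreover \(\hat c\) satisfies the Hochschild identity \eqref{hcocycle} if and only if \(c([X,Y]) = c(X)\ract Y - c(Y)\ract X\). Existence and uniqueness follow from PBW together with the check that the recursion is consistent with the relations \(XY-YX=[X,Y]\) precisely when \(c\) is a Lie \(1\)-cocycle. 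A parallel \emph{twisted} extension lemma applies to the Sweedler recursion \(\hat\sigma(hX)=\hat\sigma(h)\ract X+\hat\sigma(h)\hat\sigma(X)\) with \(\hat\sigma(1)=1\): when \(\hat\sigma(\fg)\) lies in the commutative subalgebra \(\Cent_B(B\oplus M)\), the unique \(\bC\)-linear extension satisfies \eqref{scocycle} if and only if \(X\mapsto -\iu{}\hat\sigma(X)\) is a Lie algebra \(1\)-cocycle, the would-be commutator correction term vanishing by commutativity.

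For part (2), one notes that \(M\) with trivial left action is an \(H\)-bimodule, that \(\ZH^1_\ell(H;M)\), \(\BH^1_\ell(H;M)\), and \(D\) are the ordinary \(M\)-valued Hochschild \(1\)-cocycles, coboundaries, and differential restricted to those \(\mu\) with \(\mu=\mu^\ast\) and \([\mu,\rho_B(b)]=0\) for all \(b\), and that \(\CH^0_\ell(H;M)=\Zent_B(M)_\sa\). By the extension lemma, \(\mu\mapsto\rest{\mu}{\fg}\) is a bijection from \(M\)-valued Hochschild \(1\)-cocycles to Lie \(1\)-cocycles. I would then match the extra conditions: using \(S(X)^\ast=X\), the map \(h\mapsto\mu(S(h)^\ast)^\ast\) is again a Hochschild \(1\)-cocycle agreeing with \(\mu\) on \(\fg\) whenever \(\mu(\fg)\subseteq M_\sa\), hence equals \(\mu\) by uniqueness; and using Lemma~\ref{centrallem} at primitives plus a PBW induction on the commutation defect, \([\mu,\rho_B(b)]=0\) for all \(b\) is equivalent to \(\mu(\fg)\subseteq\Zent_B(M)\). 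Thus \(r_M\) is an \(\bR\)-linear isomorphism onto \(\operatorname{Z}^1(\fg,\Zent_B(M)_\sa)\); since \(r_M\) is the identity in degree \(0\) and \(r_M(Dm)(X)=m\ract X=\dt{\fg}(m)(X)\), one gets \(r_M\circ D=\dt{\fg}\), \(r_M(\BH^1_\ell(H;M))=\operatorname{B}^1(\fg,\Zent_B(M)_\sa)\), and the quotient isomorphism \(\HH^1_\ell(H;M)\iso\operatorname{H}^1(\fg,\Zent_B(M)_\sa)\).

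For part (1), evaluating the defining conditions of a lazy Sweedler \(1\)-cocycle \(\sigma\) at a primitive \(X\) gives \(\sigma(X)+\sigma(X)^\ast=\epsilon(X)1_B=0\) (from unitarity) and \(\sigma(X)\in\Zent(B)\cap\Cent_B(M)\) (from the centralising conditions, via Lemma~\ref{centrallem}), so \(-\iu{}\rest{\sigma}{\fg}\) takes values in \(\Cent_B(B\oplus M)_\sa\) and is a Lie \(1\)-cocycle by the twisted extension lemma; hence \(r_B\) is a well-defined group homomorphism (sending convolution to addition, immediately at primitives), injective by uniqueness of the extension. Surjectivity is the crux: given \(c\in\operatorname{Z}^1(\fg,\Cent_B(B\oplus M)_\sa)\), extend \(\iu{}c\) via the twisted recursion to \(\sigma\in\cC(H,B)\); one must verify that \(\sigma\) centralises \(\rho_B(B)\oplus\rho_M(M)\) and satisfies \eqref{scocycle} (PBW induction plus the twisted extension lemma) and, above all, that \(\sigma\) is \emph{unitary}---which I would establish by extending \(-c\) in the same fashion and checking \(\sigma\star\sigma^\ast=\sigma^\ast\star\sigma=\epsilon(\cdot)1_B\) by induction on PBW degree, where the commutativity and \(\ast\)-invariance of \(\Cent_B(B\oplus M)\) are essential. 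From \(D\upsilon(h)=(\upsilon\ract h)\upsilon^\ast\) it then follows that \(r_B(\BS^1_\ell(H;B,M))=\set{X\mapsto -\iu{}(\upsilon\ract X)\upsilon^\ast\given\upsilon\in\Unit(\Cent_B(B\oplus M))}\). The caveat is that this subgroup need not contain \(\operatorname{B}^1(\fg,\Cent_B(B\oplus M)_\sa)\): a Lie coboundary \(\dt{\fg}(b)\) lies in it exactly when \(b\) admits a \(\fg\)-equivariant exponential \(\upsilon_b\), in which case \(r_B\circ D(\upsilon_b)=\dt{\fg}(b)\); granting such exponentials for all of \(\Cent_B(B\oplus M)\), the stated short exact sequence is the third isomorphism theorem for \(\operatorname{B}^1(\fg,\ldots)\leq r_B(\BS^1_\ell)\leq\operatorname{Z}^1(\fg,\ldots)\) transported along \(r_B\). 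Part (3) follows as in Proposition~\ref{lazygrp}: every lazy Sweedler \(1\)-cocycle is valued in \(\Cent_B(B\oplus M)\) (PBW induction from the primitive case, using its \(\fg\)-invariance), so conjugation fixes \(\ZH^1_\ell(H;M)\) pointwise, and the induced homomorphism \(\MC_\fg\) satisfies \(\MC_\fg(c)(X)=-\iu{}\partial c(X)\) by evaluating \(\MC[\partial]\) at primitives and \(\MC_\fg(\dt{\fg}b)=\dt{\fg}(-\iu{}\partial b)\) by \(\fg\)-equivariance of \(\partial\) and \(\partial(\Cent_B(B\oplus M))\subseteq\Zent_B(M)\) from the proof of Theorem~\ref{lazymcthm}. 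The main obstacle I anticipate is precisely the surjectivity of \(r_B\)---showing that the twisted recursive extension of a Lie \(1\)-cocycle is automatically a unitary convolution cocycle---together with pinning down \(r_B(\BS^1_\ell)\) inside \(\operatorname{Z}^1(\fg,\Cent_B(B\oplus M)_\sa)\).
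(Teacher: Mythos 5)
Your proposal is correct and follows essentially the same route as the paper's proof: restriction to the primitives \(\fg\), induction along the filtration/PBW structure of \(\cU(\fg)\) for injectivity, the recursive construction of the inverse cocycle (with the would-be commutator correction killed by commutativity of \(\Cent_B(B\oplus M)\)), the identification of \(r_B\mleft(\BS^1_\ell(H;B,M)\mright)\) together with the role of \(\fg\)-equivariant exponentials, and the \emph{mutatis mutandis} treatment of the Hochschild side. Your additional verifications that the recursively extended \(\sigma\) is unitary and centralising, and your derivation of the triviality of the conjugation action from \(\sigma(H)\subseteq\Cent_B(B\oplus M)\) (which tacitly uses cocommutativity of \(\cU(\fg)\)) rather than from evaluation at primitives plus injectivity of \(r_M\), merely elaborate steps the paper leaves to the universal property of \(\cU(\fg)\).
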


\begin{proof}
	Recall that \(H = \cU(\fg)\) inherits a filtration from the complexified tensor algebra of \(\fg\).
	Let \(r_B \coloneqq (\sigma \mapsto \rest{-\iu{}\sigma}{\fg}) : \ZS^1_\ell(H;B,M) \to \Hom(\fg,B)\). First, given \(\sigma \in \ZS^1_\ell(H;B,M)\), for all \(k \in \bN \cup \set{0}\), \(h \in H_k\), and \(X \in \fg\), 
	\[
		\sigma(h \cdot X) = \left(\sigma(h) \ract \cm{X}{1}\right)\sigma(\cm{X}{2}) = \left(\sigma(h) \ract X\right)\sigma(1) + \left(\sigma(h) \ract 1\right)\sigma(X) = \sigma(h) \ract X + \sigma(h)\sigma(X),
	\]
	so that by induction on \(k \in \bN \cup \set{0}\), the map \(\sigma\) is uniquely determined on \(H = \bigcup_{k \geq 0} H_k\) by \(r_B(\sigma)\); hence, \(r_B\) is injective. Next, note that for all \(\sigma, \tau \in \ZS^1_\ell(H;B,M)\) and \(X \in \fg\),
	\[
		(\sigma \star \tau)(X) = \sigma(\cm{X}{1})\tau(\cm{X}{2}) = \sigma(X)\tau(1) + \sigma(1)\tau(X) = \sigma(X) + \tau(X),
	\]
	and that \(\rest{1_{\cC(H,B)}}{\fg} = \rest{-\iu{}\epsilon(\cdot)1_B}{\fg} = 0\); hence, \(r_B\) is an injective group homomorphism from \(\ZS^1_\ell(H;B,M)\) to the additive group \(\Hom(\fg,B)\). Next, for \(\sigma \in \ZS^1_\ell(H;B,M)\) and \(X \in \fg\),
	\[
		0 = \epsilon(X)1_B = \sigma(\cm{X}{1})\sigma^\ast(\cm{X}{2}) = \sigma(X)\sigma(S(1)^\ast)^\ast + \sigma(1)\sigma(S(X)^\ast)^\ast = \sigma(X) + \sigma(X)^\ast;
	\]
	hence, \(r_B\) maps into \(\Hom(\fg,B_\sa)\). Next, for \(\sigma \in \ZS^1_\ell(H;B,M)\), \(b \in B\), and \(X \in \fg\),
	\begin{align*}
		0 &= [\sigma,\rho_B(b)](X)\\
		&= \sigma(\cm{X}{1})(b \ract \cm{X}{2}) - (b \ract \cm{X}{1})\sigma(\cm{X}{2})\\
		&= \sigma(X)(b \ract 1) - (b \ract X)\sigma(1) +\sigma(1)(b \ract X) - (b \ract 1)\sigma(X)\\
		&= [\sigma(X),b],
	\end{align*}
	so that \(r_B\) maps into \(\Hom(\fg,\Zent(B)_\sa)\); in fact, \emph{mutatis mutandis}, this shows that \(r_B\) actually maps into \(\Hom(\fg,\Cent_B(B \oplus M)_\sa)\). Finally, for all \(\sigma \in \ZS^1_\ell(H;B,M)\) and \(X,Y \in \fg\),
	\begin{align*}
		0 &= \sigma(XY-YX) - \sigma([X,Y])\\
		&= \left(\sigma(X) \ract \cm{Y}{1}\right)\sigma(\cm{Y}{2}) -\left(\sigma(Y) \ract \cm{X}{1}\right)\sigma(\cm{X}{2}) - \sigma([X,Y])\\
		&= \left(\sigma(X)\ract Y\right)\sigma(1) + \left(\sigma(X)\ract 1\right)\sigma(Y) - \left(\sigma(Y) \ract X\right)\sigma(1) - \left(\sigma(Y) \ract 1\right)\sigma(X) - \sigma([X,Y])\\
		&= \sigma(X) \ract Y - \sigma(Y) \ract X - \sigma([X,Y])\\
		&= \iu{}\dt{\fg}(r_B(\sigma))
	\end{align*}
	so that \(r_B\) maps into \(\operatorname{Z}^1(\fg,\Cent_B(B \oplus M)_\sa\). Conversely, given \(c \in \operatorname{Z}^1(\fg,\Cent_B(B \oplus M)_\sa\), one can construct \(r_B^{-1}(c) \in \ZS^1_\ell(H;B,M)\) inductively by setting \(r_B^{-1}(c)(1) \coloneqq 1\) and setting
	\[
		\forall k \in \bN \cup \set{0}, \, \forall h \in H_k, \, \forall X \in \fg, \quad r_B^{-1}(c)(h \cdot X) \coloneqq r_B^{-1}(c)(h) \ract X+ \iu{} r_B^{-1}(c)(h) \cdot c(X);
	\]
	in particular, that \(r_B^{-1}(c)\) is well-defined on \(H = \cU(\fg)\) follows from the \(1\)-cocycle identity
	\[
		\forall X,Y \in \fg, \quad c(X) \ract Y - c(Y) \ract X - c([X,Y]) = 0
	\]
	together with the universal property of \(\cU(\fg)\). Thus, \(r_B : \ZS^1_\ell(H;B,M) \to \operatorname{Z}^1(\fg,\Cent_B(B\oplus M)_\sa)\) is a group isomorphism.
	
	Next, by definition of \(D : \CS^0_\ell(H;B,M) \to \ZS^1_\ell(H;B,M)\) and of \(r_B\), it follows that
	\[
		r_B(\BS^1_\ell(H;B,M)) = \set{(X \mapsto -\iu{}(\upsilon \ract X)\upsilon^\ast \given \upsilon \in \Unit(\Cent_B(B\oplus M)) } \leq \operatorname{Z}(\fg,\Cent_B(B\oplus M)_\sa).
	\]
	Suppose, now, that \(B\) has \(\fg\)-equivariant exponentials. Given \(b \in \Cent_B(B \oplus M)_\sa\), for every \(\fg\)-equivariant exponential \(\upsilon_b \in \Unit(\Cent_B(B\oplus M))\) of \(b\) and every \(X \in \fg\),
	\begin{align*}
		D\upsilon_b(X) = (\upsilon_b \ract X)\upsilon_b^\ast = \iu{}(b \ract X)\upsilon_b\upsilon_b^\ast = \iu{}(b\ract X) = \iu{}\dt{\fg}(b)(X),
	\end{align*}
	so that \(\dt{\fg}(b) = r_B \circ D(\upsilon_b) \in r_B(\ZS^1_\ell(H;B,M))\); hence, as additive groups,
	\[
		\operatorname{B}^1(\fg;\Cent_B(B \oplus M)_\sa) \leq r_B\mleft(\BS^1_\ell(H;B,M)\mright).
	\]

	Now, the above proof that \(r_B : \ZS^1_\ell(H;B,M) \to \operatorname{Z}^1(\fg,\Cent_B(B\oplus M)_\sa)\) is a group isomorphism implies, \emph{mutatis mutandis}, that 
	\[
		r_M \coloneqq (\mu \mapsto \rest{-\iu{}\mu}{\fg}) : \ZH^1_\ell(H;M) \to \operatorname{Z}^1(\fg,\Zent_B(M)_\sa)
	\] 
	is an isomorphism of \(\bR\)-vector spaces; indeed, for all \(m \in \Zent_B(M)_\sa\) and \(X \in \fg\),
	\[
		Dm(X) = m \ract X - \epsilon(X)m = m \ract X = \dt{\fg}(m)(X),
	\]
	so that \(r_M \circ D = \dt{\fg}\), and hence \(\operatorname{B}^1(\fg,\Zent_B(M)_\sa) = r_M\mleft(\BH^1_\ell(H;M)\mright)\). Moreover, \(\ZS^1_\ell(H;B,M)\) acts trivially on \(\ZH^1_\ell(H;M)\), since for all \(\sigma \in \ZS^1_\ell(H;B,M)\), \(\mu \in \ZH^1_\ell(H;M)\), and \(X \in \fg\),
	\begin{align*}
		\sigma \star \mu \star \inv{\sigma}(X) &= \sigma(\cm{X}{1})\mu(\cm{X}{2})\inv{\sigma}(\cm{X}{3})\\
		&= \sigma(X)\mu(1)\inv{\sigma}(1) + \sigma(1)\mu(X)\inv{\sigma}(1) + \sigma(1)\mu(1)\inv{\sigma}(X)\\
		&= \mu(X).
	\end{align*}
	
	Finally, let \(\partial : B \to M\) be a \(\fg\)-equivariant \(\ast\)-derivation. For all \(c \in \operatorname{Z}^1(\fg,\Zent_B(M)_\sa)\)  and \(X \in \fg\), we find that
	\begin{align*}
		\MC_\fg[\partial](c)(X) &= -\partial\mleft(r_B^{-1}(c)(\cm{X}{1})\mright) \cdot r_B^{-1}(c)^\ast(\cm{X}{2})\\
		&= -\partial\mleft(r_B^{-1}(c)(X)\mright) \cdot r_B^{-1}(c)^\ast(1) -\partial\mleft(r_B^{-1}(c)(1)\mright) \cdot r_B^{-1}(c)^\ast(X)\\
		&= -\iu{}\partial c(X).
	\end{align*}
	Hence, in particular, for all \(b \in B\) and \(X \in \fg\),
	\[
		\MC_\fg[\partial](\dt{\fg}(b))(X) = -\iu{}\partial(b \ract X) = \partial(-\iu{}b) \ract X = \dt{\fg}(-\iu{}\partial b). \qedhere
	\]
\end{proof}

We now refine our construction of lazy Hochschild cohomology in the manner that, as we shall see, will encode prolongability of (relative) gauge potentials.

\begin{propositiondefinition}
	Let \(B\) be a right \(H\)-module \(\ast\)-algebra, let \(M\) be a right \(H\)-equivariant \(B\)-\(\ast\)-module, and let \(\Omega\) be a right \(H\)-module graded \(\ast\)-algebra \(\Omega\) with \(\Omega^0 = B\) and \(\Omega^1 = M\) that is generated over \(B\) by \(M\). Then
	\begin{gather*}
		\ZH^1_\ell(H;M,\Omega) \coloneqq \ZH^1_\ell(H;M) \cap \Cent_{\cC(H;\Omega)}(\rho_\Omega(\Omega)),\\
		\BH^1_\ell(H;M,\Omega) \coloneqq D(\Zent_{B}(M)_\sa \cap \Zent(\Omega)) \subset \BH^1_\ell(H;M) \cap \Cent_{\cC(H;\Omega)}(\rho_\Omega(\Omega)),\\
		\HH^1_\ell(H;M,\Omega) \coloneqq \ZH^1_\ell(H;M,\Omega)/\BH^1_\ell(H;M,\Omega).
	\end{gather*}
	Moreover, \(\ZH^1_\ell(H;M,\Omega)\) is a \(\ZS^1_\ell(H;B,M)\)-invariant subspace of \(\ZH^1_\ell(H;M)\), and hence the conjugation action of \(\ZS^1_\ell(H;B,M)\) on \(\ZH^1_\ell(H;M,\Omega)\) descends to a \(\bR\)-linear action on the quotient space \(\HH^1_\ell(H;M,\Omega)\).
\end{propositiondefinition}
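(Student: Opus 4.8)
The plan is to derive everything from three structural facts: that the embedding \(\rho_\Omega : \Omega \to \cC(H;\Omega)\) is an injective graded \(\ast\)-homomorphism restricting to \(\rho_B\) in degree \(0\) and to \(\rho_M\) in degree \(1\); that \(\rho_\Omega(\Omega)\), being the image of an algebra generated over \(\Omega^0 = B\) by \(\Omega^1 = M\), is generated as an algebra by \(\rho_B(B) \cup \rho_M(M)\); and that conjugation by a lazy Sweedler \(1\)-cocycle \(\sigma\) (equivalently, by \(\sigma^{-1}\)) separately preserves the condition of commuting with \(\rho_B(B)\) and the condition of commuting with \(\rho_M(M)\). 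Thus the whole argument reduces commutation with \(\rho_\Omega(\Omega)\) to two commutation conditions that are easy to track through \(\sigma(-)\sigma^{-1}\).

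First I would record that \(\ZH^1_\ell(H;M,\Omega)\) is a well-defined \(\bR\)-subspace. Applying the proposition on \(\rho_B\) to the graded \(H\)-module \(\ast\)-algebra \(\Omega\) shows that \(\Cent_{\cC(H;\Omega)}(\rho_\Omega(\Omega))\) is a graded \(\ast\)-subalgebra of \(\cC(H;\Omega)\); its degree-\(1\) component is then a \(\ast\)-closed \(\bC\)-subspace of \(\cC(H;\Omega^1) = \cC(H;M)\), so its intersection with the \(\bR\)-subspace \(\ZH^1_\ell(H;M)\) is an \(\bR\)-subspace. I would also note, for later use, that since \(\Omega\) is truncated at degree \(2\), a degree-\(1\) element \(\mu \in \cC(H;M)\) automatically \(\star\)-commutes with \(\rho_\Omega(\Omega^2)\) (both products land in \(\cC(H;\Omega^3) = 0\)), so that the only substantive constraint cut out by \(\Cent_{\cC(H;\Omega)}(\rho_\Omega(\Omega))\) beyond membership in \(\Zent_B(\cC(H,M))_\sa\) is that \(\mu\) commute with \(\rho_M(M)\) inside \(\cC(H;\Omega^2)\).

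Next I would check the coboundary claim. For \(m \in \Zent_B(M)_\sa \cap \Zent(\Omega)\) we already know \(Dm \in \BH^1_\ell(H;M) \subseteq \ZH^1_\ell(H;M)\), so it remains to see that \(Dm\) centralizes \(\rho_\Omega(\Omega)\); by the generation fact it suffices to treat \(\rho_B(B)\) and \(\rho_M(M)\). Commutation with \(\rho_B(B)\) is exactly the computation \([Dm,\rho_B(b)] = 0\) already carried out in establishing \(\BH^1_\ell(H;M) \subseteq \Zent_B(\cC(H,M))_\sa\). Commutation with \(\rho_\Omega(\omega)\), \(\omega \in \Omega^1\), is a two-line Sweedler computation: using that \(\ract\) is a right \(H\)-module action on the graded algebra \(\Omega\) one obtains
\[
	[Dm,\rho_\Omega(\omega)](h) = \bigl(m \wedge \omega - \omega \wedge m\bigr)\ract h \;-\; \bigl(m \wedge (\omega \ract h) - (\omega \ract h) \wedge m\bigr),
\]
which vanishes because \(m\) is central in \(\Omega\). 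Hence \(\BH^1_\ell(H;M,\Omega) \subseteq \BH^1_\ell(H;M) \cap \Cent_{\cC(H;\Omega)}(\rho_\Omega(\Omega)) \subseteq \ZH^1_\ell(H;M,\Omega)\); since \(D\) is \(\bR\)-linear and \(\Zent_B(M)_\sa \cap \Zent(\Omega)\) is an \(\bR\)-subspace of \(\CH^0_\ell(H;M)\), the set \(\BH^1_\ell(H;M,\Omega)\) is an \(\bR\)-subspace, so \(\HH^1_\ell(H;M,\Omega)\) is a well-defined quotient.

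Finally, for \(\ZS^1_\ell(H;B,M)\)-invariance, let \(\sigma \in \ZS^1_\ell(H;B,M)\) and \(\mu \in \ZH^1_\ell(H;M,\Omega)\). Theorem~\ref{lazymcthm} already gives \(\sigma \act \mu = \sigma \star \mu \star \sigma^{-1} \in \ZH^1_\ell(H;M)\) --- in particular \(\sigma \act \mu\) centralizes \(\rho_B(B)\) --- so again by the generation fact it remains to check that \(\sigma \act \mu\) commutes with \(\rho_M(M)\). This is purely formal: since \(\sigma\) and \(\sigma^{-1}\) lie in \(\Cent_{\cC(H,B)}(\rho_B(B) \oplus \rho_M(M))\) and \(\mu\) commutes with \(\rho_M(M)\), one slides \(\rho_M(\omega)\) successively past \(\sigma^{-1}\), past \(\mu\), and past \(\sigma\) to obtain \((\sigma \act \mu) \star \rho_M(\omega) = \rho_M(\omega) \star (\sigma \act \mu)\). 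Descent to \(\HH^1_\ell(H;M,\Omega)\) is then immediate: \(\BS^1_\ell(H;B,M)\) acts trivially on all of \(\ZH^1_\ell(H;M)\) and \(\BH^1_\ell(H;M,\Omega) \subseteq \BH^1_\ell(H;M)\) consists of \(\ZS^1_\ell(H;B,M)\)-invariant vectors, both by Theorem~\ref{lazymcthm}. I do not expect a genuinely hard step here; the only point needing care is precisely the reduction from centralizing \(\rho_\Omega(\Omega)\) to centralizing \(\rho_B(B)\) and \(\rho_M(M)\) separately, which is what makes the conjugation bookkeeping go through.
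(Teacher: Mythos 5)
Your proposal is correct and follows essentially the same route as the paper: the two substantive checks — that \(D m\) for \(m \in \Zent_B(M)_\sa \cap \Zent(\Omega)\) commutes with \(\rho_M(M)\) via the Sweedler computation \([Dm,\rho_M(\p{m})](h) = [m,\p{m}]\ract h - [m,\p{m}\ract h] = 0\), and that \(\sigma \act \mu\) commutes with \(\rho_M(M)\) by sliding \(\rho_M(m)\) past \(\sigma^{-1}\), \(\mu\), and \(\sigma\) using Lemma~\ref{centrallem} (i.e.\ \([\sigma\act\mu,\rho_M(m)] = \sigma \star [\mu,\rho_M(m)] \star \inv{\sigma}\)) — are exactly the paper's, with your explicit reduction from centralising \(\rho_\Omega(\Omega)\) to centralising \(\rho_B(B)\) and \(\rho_M(M)\) (via \(\rho_\Omega\) being an algebra homomorphism and \(\Omega\) being generated over \(B\) by \(M\)) being the step the paper leaves implicit. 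The only quibble is your aside invoking truncation of \(\Omega\) at degree \(2\), which is not among the hypotheses of the statement; it is harmless, since your generation argument already disposes of all higher degrees without it.
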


\begin{proof}
	The first non-trivial claim to check is the inclusion \(\BH^1_\ell(H;M,\Omega) \subset \Cent_{\cC(H;\Omega)}(\rho_\Omega(\Omega))\). Let \(m \in \Cent_\Omega(M)_\sa\). Then for all \(\p{m} \in M\) and \(h \in H\),
	\begin{align*}
		[Dm,\rho_M(\p{m})](h) &= (m \ract \cm{h}{1} - \epsilon(\cm{h}{1})m) \wedge (\p{m} \ract \cm{h}{2}) + (\p{m} \ract \cm{h}{1}) \wedge (m \ract \cm{h}{2} - \epsilon(\cm{h}{2})m)\\
		&= [m,\p{m}] \ract h - [m, \p{m}\ract h]= 0,
	\end{align*}
	so that, indeed, \(Dm \in \Cent_{\cC(H;\Omega)}(\rho_\Omega(\Omega))\).
	
	The other non-trivial claim to check is \(\ZS^1_\ell(H;B,M)\)-invariance of \(\ZH^1_\ell(H;M,\Omega)\). Let \(\sigma \in \ZS^1_\ell(H;B,M)\) and let \(\mu \in \ZH^1_\ell(H;M,\Omega)\). Then, for all \(m \in M\) and \(h \in H\),
	\begin{align*}
		[\sigma \act \mu,\rho_M(m)](h) &= \sigma(\cm{h}{1}) \cdot \mu(\cm{h}{2}) \cdot \inv{\sigma}(\cm{h}{3}) \wedge m \ract \cm{h}{4}\\
		&\quad\quad\quad - m \ract \cm{h}{1} \wedge \sigma(\cm{h}{2}) \cdot \mu(\cm{h}{3}) \cdot \inv{\sigma}(\cm{h}{4})\\
		&= \sigma(\cm{h}{1}) \cdot \mu(\cm{h}{2}) \wedge m \ract \cm{h}{3} \cdot \inv{\sigma}(\cm{h}{4})\\
		&\quad\quad\quad - \sigma(\cm{h}{1}) \cdot m \ract \cm{h}{2} \wedge \mu(\cm{h}{3}) \cdot \inv{\sigma}(\cm{h}{4})\\
		&= \sigma \star [\mu,\rho_M(m)] \star \inv{\sigma}(h)=0,
	\end{align*}
	so that, indeed, \(\sigma \act \mu \in \Cent_{\cC(H;\Omega)}(\rho_\Omega(\Omega))\).
\end{proof}

We conclude by observing that Maurer--Cartan \(1\)-cocycles are automatically compatible with the above refinement.

\begin{corollary}
	Let \(B\) be a right \(H\)-module \(\ast\)-algebra, and let \((\Omega_B,\dt{B})\) be an \(H\)-equivariant \textsc{sodc} over \(B\). Then
	\[
		\MC[\dt{B}]\mleft(\ZS^1_\ell(H;B,\Omega^1_B)\mright) \subset \ZH^1_\ell(H;\Omega^1_B,\Omega_B), \quad \MC[\dt{B}]\mleft(\BS^1_\ell(H;B,\Omega^1_B)\mright) \subset \BH^1_\ell(H;\Omega^1_B,\Omega_B),
	\]
	so that \(\MC[\dt{B}]\) is a \(1\)-cocycle for the restricted action of \(\ZS^1_\ell(H;B,\Omega^1_B)\) on \(\ZH^1_\ell(H;\Omega^1_B,\Omega_B)\), and hence descends to a \(1\)-cocycle \(\widetilde{\MC}[\dt{B},\Omega_B] : \HS^1_\ell(H;B,\Omega^1_B) \to \HH^1_\ell(H;\Omega^1_B,\Omega_B)\) for the action of \(\HS^1_\ell(H;B,\Omega^1_B)\) on \(\HH^1_\ell(H;\Omega^1_B,\Omega_B)\).
\end{corollary}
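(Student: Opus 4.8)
The plan is to deduce everything from Theorem~\ref{lazymcthm}, which already furnishes the $1$-cocycle $\MC[\dt{B}] : \ZS^1_\ell(H;B,\Omega^1_B) \to \ZH^1_\ell(H;\Omega^1_B)$ for the conjugation action on $\ZH^1_\ell(H;\Omega^1_B)$ together with the formula $\MC[\dt{B}](D\upsilon) = D(-\dt{B}(\upsilon)\cdot\upsilon^\ast)$ for $\upsilon \in \CS^0_\ell(H;B,\Omega^1_B)$, and to promote membership in these unrefined cocycle and coboundary spaces to membership in their refined counterparts $\ZH^1_\ell(H;\Omega^1_B,\Omega_B)$ and $\BH^1_\ell(H;\Omega^1_B,\Omega_B)$. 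Only the first promotion requires real work; the rest is bookkeeping with Sweedler notation plus the defining properties of the refined spaces.

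First I would fix $\sigma \in \ZS^1_\ell(H;B,\Omega^1_B)$, set $\mu \coloneqq \MC[\dt{B}](\sigma) \in \ZH^1_\ell(H;\Omega^1_B)$, and show $\mu \in \Cent_{\cC(H;\Omega_B)}(\rho_{\Omega_B}(\Omega_B))$. Since $\mu$ already commutes with $\rho_B(B)$, since $\rho_{\Omega_B}$ is multiplicative (the $H$-action on $\Omega_B$ being by graded algebra maps), and since $\Omega_B$ is generated over $B$ by $\Omega^1_B$, the graded centraliser of $\rho_{\Omega_B}(\Omega_B)$ coincides with that of $\rho_B(B)$ together with $\rho_{\Omega^1_B}(\Omega^1_B)$, so the claim reduces---via the version of Lemma~\ref{centrallem} internal to the graded convolution algebra $\cC(H;\Omega_B)$---to the single degree-$1$ identity
\[
	\forall \alpha \in \Omega^1_B, \, \forall h \in H, \quad \mu(\cm{h}{1}) \wedge (\alpha \ract \cm{h}{2}) = -(\alpha \ract \cm{h}{1}) \wedge \mu(\cm{h}{2}).
\]
To obtain it I would first record that $\sigma$, $\sigma^\ast$, and (because $\Omega_B$ is generated over $B$ by $\Omega^1_B$) also $\sigma$ against $\rho_{\Omega^2_B}(\dt{B}(\alpha))$ satisfy the corresponding central identities, then apply the $\ast$-derivation $\dt{B}$ to the identity $\sigma(\cm{h}{1})\cdot(\alpha\ract\cm{h}{2}) = (\alpha\ract\cm{h}{1})\cdot\sigma(\cm{h}{2})$. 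Using $H$-equivariance of $\dt{B}$ (i.e.\ $\dt{B}(x\ract h) = \dt{B}(x)\ract h$, valid since $(\Omega_B,\dt{B})$ is $H$-equivariant) together with the graded Leibniz rule, and cancelling the two middle terms by the $\rho_{\Omega^2_B}(\dt{B}(\alpha))$-identity, yields the auxiliary relation
\[
	\forall \alpha \in \Omega^1_B, \, \forall h \in H, \quad \dt{B}(\sigma(\cm{h}{1})) \wedge (\alpha\ract\cm{h}{2}) = -(\alpha\ract\cm{h}{1})\wedge\dt{B}(\sigma(\cm{h}{2})).
\]
Expanding $\mu(k) = -\dt{B}(\sigma(\cm{k}{1}))\cdot\sigma^\ast(\cm{k}{2})$, commuting $\sigma^\ast$ past $\alpha\ract(\cdot)$, applying the auxiliary relation, and recognising $\dt{B}(\sigma(\cm{h}{2}))\cdot\sigma^\ast(\cm{h}{3}) = -\mu(\cm{h}{2})$ then gives the degree-$1$ identity, so $\mu \in \ZH^1_\ell(H;\Omega^1_B,\Omega_B)$.

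Second, for $\upsilon \in \CS^0_\ell(H;B,\Omega^1_B) = \Unit(\Cent_B(B\oplus\Omega^1_B))$, Theorem~\ref{lazymcthm} gives $\MC[\dt{B}](D\upsilon) = D(-\dt{B}(\upsilon)\cdot\upsilon^\ast)$; it remains to verify $-\dt{B}(\upsilon)\cdot\upsilon^\ast \in \Zent_B(\Omega^1_B)_{\sa} \cap \Zent(\Omega_B)$, which is the domain of the coboundary map defining $\BH^1_\ell(H;\Omega^1_B,\Omega_B)$. Differentiating $\upsilon\cdot b = b\cdot\upsilon$ for $b \in B$ and $\upsilon\cdot\alpha = \alpha\cdot\upsilon$ for $\alpha \in \Omega^1_B$---and using that $\upsilon$ is central in $B$ and that $\Cent_B(\Omega^1_B)$ is $\ast$-closed---shows that $\dt{B}(\upsilon)$, hence $\dt{B}(\upsilon)\cdot\upsilon^\ast$, commutes with $B$ and with $\Omega^1_B$, and therefore with all of $\Omega_B$; self-adjointness of $-\dt{B}(\upsilon)\cdot\upsilon^\ast$ follows from $\upsilon\cdot\upsilon^\ast = 1$ exactly as in the proof of Theorem~\ref{lazymcthm}. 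Hence $\MC[\dt{B}](\BS^1_\ell(H;B,\Omega^1_B)) \subset \BH^1_\ell(H;\Omega^1_B,\Omega_B)$.

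Finally, because $\ZH^1_\ell(H;\Omega^1_B,\Omega_B)$ is a $\ZS^1_\ell(H;B,\Omega^1_B)$-invariant subspace of $\ZH^1_\ell(H;\Omega^1_B)$, the cocycle identity of Theorem~\ref{lazymcthm} restricts; and since $\BS^1_\ell(H;B,\Omega^1_B)$ is central, acts trivially, and is carried into the invariant subspace $\BH^1_\ell(H;\Omega^1_B,\Omega_B)$, the standard descent of a $1$-cocycle to the quotients produces the induced $1$-cocycle $\widetilde{\MC}[\dt{B},\Omega_B] : \HS^1_\ell(H;B,\Omega^1_B) \to \HH^1_\ell(H;\Omega^1_B,\Omega_B)$. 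The main obstacle is the auxiliary relation obtained by differentiating the defining commutation identity of $\ZS^1_\ell(H;B,\Omega^1_B)$: it is the one place where $H$-equivariance of $\dt{B}$ and the graded Leibniz rule genuinely interact, and it is what makes the degree-$1$ commutation identity come out with the correct sign. Everything downstream of it is formal.
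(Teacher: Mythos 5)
Your proof is correct and follows essentially the same route as the paper: both containments are obtained by differentiating the centrality identities of lazy Sweedler cocycles and cochains using the \(H\)-equivariance of \(\dt{B}\) and the graded Leibniz rule, with Theorem~\ref{lazymcthm} supplying the unrefined statements and the coboundary formula, and the descent to cohomology being formal. The only (harmless) difference is that you test \(\MC[\dt{B}](\sigma)\) against arbitrary \(1\)-forms, which forces you to invoke centrality of \(\sigma\) against \(\rho_{\Omega^2_B}(\Omega^2_B)\) via multiplicativity of \(\rho_{\Omega_B}\), whereas the paper tests only against exact \(1\)-forms \(\dt{B}(b)\) and uses \(\dt{B}^2=0\).
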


\begin{proof}
	First, let \(\sigma \in \ZS^1_\ell(H;B,\Omega^1_B)\). Then, for all \(b \in B\) and \(h \in H\),
	\begin{align*}
		&[\MC[\dt{B}](\sigma),\rho_M(\dt{B} b)](h)\\ &\quad= -\dt{B} \sigma(\cm{h}{1}) \cdot \inv{\sigma}(\cm{h}{2}) \wedge (\dt{B}(b) \ract \cm{h}{3}) - (\dt{B}(b) \ract \cm{h}{1}) \wedge \dt{B} \sigma(\cm{h}{2}) \cdot \inv{\sigma}(\cm{h}{3})\\
		&\quad = -\dt{B}\sigma(\cm{h}{1}) \wedge \dt{B}(b \ract \cm{h}{2}) \cdot \inv{\sigma}(\cm{h}{3})- \dt{B}(b \ract \cm{h}{1}) \wedge \dt{B} \sigma(\cm{h}{2}) \cdot \inv{\sigma}(\cm{h}{3})\\
		&\quad= - \dt{B}\mleft(\sigma(\cm{h}{1}) \cdot (\dt{B}(b) \ract \cm{h}{2})-(\dt{B}(b)\ract\cm{h}{1}) \cdot \sigma(\cm{h}{2})\mright) \cdot \inv{\sigma}(\cm{h}{3})\\
		&\quad=0,
	\end{align*}
	so that, indeed, \(\MC[\dt{B}](\sigma) \in \Cent_{\cC(H;\Omega_B)}(\rho_{\Omega_B}(\Omega_B))\).
	
	Now, let \(\upsilon \in \CS^0_\ell(H,B,\Omega^1_B) = \Unit(\Cent_B(B\oplus \Omega^1_B))\), so that \(\MC[\dt{B}](D\upsilon) = D(-\dt{B}(\upsilon)\upsilon^\ast)\) by Theorem~\ref{lazymcthm}. Then, for all \(b \in B\),
	\begin{align*}
		[-\dt{B}(\upsilon)\upsilon^\ast,\dt{B}(b)] &= -\dt{B}(\upsilon)\upsilon^\ast \wedge \dt{B}(b) - \dt{B}(b) \wedge 	\dt{B}(\upsilon)\upsilon^\ast \\
		&= -\left(\dt{B}(\upsilon) \wedge \dt{B}(b) + \dt{B}(b) \wedge \dt{B}(\upsilon) \right)\upsilon^\ast\\
		&= -\dt{B}\mleft(\upsilon \cdot \dt{B}(b) - \dt{B}(b)\cdot\upsilon \mright)\upsilon^\ast = 0,
	\end{align*}
	so that, indeed, \(-\dt{B}(\upsilon)\upsilon^\ast \in \Zent(\Omega)\).
\end{proof}

\subsection{Gauge transformations and (relative) gauge potentials}

In this section, we use lazy Sweedler cohomology, lazy Hochschild cohomology, and Maurer--Cartan \(1\)-cocycles to compute the gauge group \(\fr{G}\), the Atiyah space \(\fr{At}\), and the affine action of \(\fr{G}\) on \(\fr{At}\) in the case of a crossed product by the Hopf \(\ast\)-algebra \(H\). This yields a far-reaching generalisation of the computation by \'{C}a\'{c}i\'{c}--Mesland~\cite{CaMe}*{\S 3.4} of the noncommutative \(\bT^m\)-gauge theory in this sense of a crossed product by \(\bZ^m\).

Let \(B\) be a right \(H\)-module \(\ast\)-algebra with fixed right \(H\)-invariant \textsc{sodc} \((\Omega_B,\dt{B})\). Let \[P \coloneqq B \rtimes H,\] so that \(P = H \otimes_\bC B\) together with the multiplication, \(\ast\)-structure, and left \(H\)-coaction defined, respectively, by
\begin{gather*}
	\forall h,\p{h} \in H, \, \forall b,\p{b} \in B, \quad (h \otimes b)(\p{h} \otimes \p{b}) \coloneqq	h\cm{h}{1}^\prime \otimes (b \ract \cm{h}{2}^\prime)\p{b},\\
	\forall h \in H, \, \forall b \in B, \quad (h \otimes b)^\ast \coloneqq \cm{h}{1}^\ast \otimes b^\ast \ract \cm{h}{2}^\ast,\\
	\forall h \in H, \, \forall b \in B, \quad \delta_P(h \otimes b) \coloneqq \cm{h}{1} \otimes \left(\cm{h}{2} \otimes b\right);
\end{gather*}
it follows that \(P\) defines a principal left \(H\)-comodule \(\ast\)-algebra, such that \(\coinv{H}{P} = B\); in fact, \(P\) can be viewed as a globally trivial (\emph{cleft}) principal \(H\)-comodule algebra with trivialisation given by the injective left \(H\)-covariant \(\ast\)-homomorphism
\(
	(h \mapsto h \otimes 1_B) : H \inj P
\). 
There is a canonical second-order horizontal calculus on \(P\), which we shall use exclusively from now on; it can be straightforwardly constructed as follows.

\begin{proposition}
	Let \(\Omega_{P,\hor} \coloneqq \Omega_B \rtimes H\), so that \(\Omega_{P,\hor} = H \otimes_{\bC} \Omega_B\) with the grading 
	\[
		\forall k \in \set{0,1,2}, \quad \Omega^k_{P,\hor} = H \otimes_{\bC} \Omega_B^k
	\] 
	and the multiplication, \(\ast\)-structure, and left \(H\)-coaction defined, respectively, by
	\begin{gather*}
		\forall h,\p{h} \in H,\, \forall \beta,\p{\beta} \in \Omega_B, \quad (h \otimes \beta) \wedge (\p{h} \otimes \p{\beta}) \coloneqq h\cm{h}{1}^\prime \otimes (\beta \ract \cm{h}{2}^\prime)\p{\beta},\\
		\forall h \in H, \, \forall \beta \in \Omega_B, \quad (h \otimes \beta)^\ast \coloneqq \cm{h}{1}^\ast \otimes \beta^\ast \ract \cm{h}{2}^\ast,\\
		\forall h \in H, \, \forall \beta \in B, \quad \delta_{\Omega_{P,\hor}}(h \otimes \beta) \coloneqq \cm{h}{1} \otimes \left(\cm{h}{2} \otimes \beta\right).
	\end{gather*}
	Then \((\Omega_B,\dt{B};\Omega_{P,\hor},\beta\mapsto 1\otimes\beta)\) defines a second-order horizontal calculus on \(P \coloneqq B \rtimes H\).
\end{proposition}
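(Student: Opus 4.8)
The plan is to check, in the stated order, the three defining clauses of a second-order horizontal calculus for the proposed data, the first clause—that \((\Omega_B,\dt{B})\) is a \sodc{} on \(B\)—being part of the hypotheses.

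First I would establish that \(\Omega_{P,\hor} \coloneqq \Omega_B \rtimes H\) is a left \(H\)-comodule graded \(\ast\)-algebra truncated at degree \(2\). This is the standard fact that the crossed product of a right \(H\)-module \(\ast\)-algebra by the Hopf \(\ast\)-algebra \(H\) is a left \(H\)-comodule \(\ast\)-algebra, carried along the \(\bZ\)-grading: since \(\Omega_B\) is a right \(H\)-module graded \(\ast\)-algebra truncated at degree \(2\)—with the \(H\)-action preserving the grading and interacting with \(\ast\) via the antipode in the usual way—associativity of \(\wedge\) on \(\Omega_{P,\hor}\) reduces to associativity in \(\Omega_B\) together with the module and coassociativity axioms; the map \((h\otimes\beta)\mapsto(h\otimes\beta)^\ast\) is a degree-preserving conjugate-linear involution satisfying \((\alpha\wedge\beta)^\ast=(-1)^{\abs{\alpha}\abs{\beta}}\beta^\ast\wedge\alpha^\ast\) by the graded-\(\ast\) axiom for \(\Omega_B\) and the Hopf \(\ast\)-algebra identities for \(H\); and \(\delta_{\Omega_{P,\hor}}\) is a coassociative, counital, \(\ast\)-preserving map of graded algebras because \(H\) is a Hopf \(\ast\)-algebra. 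In degree \(0\) this recovers precisely the crossed product \(\ast\)-algebra \(P=B\rtimes H\) with its coaction \(\delta_P\), so \(\Omega^0_{P,\hor}=P\) as a left \(H\)-comodule \(\ast\)-algebra.

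Next I would verify clause (2), that \(\Omega_{P,\hor}\) is generated over \(\Omega^0_{P,\hor}=P\) by \(\Omega^1_{P,\hor}\). The two identities needed are \((h\otimes 1_B)\wedge(1_H\otimes\beta)=h\otimes\beta\), obtained from the product formula using \(\Delta(1_H)=1_H\otimes1_H\), and \((1_H\otimes\alpha)\wedge(1_H\otimes\alpha')=1_H\otimes\alpha\wedge\alpha'\). The first gives \(P\cdot(1_H\otimes\Omega^k_B)=H\otimes\Omega^k_B=\Omega^k_{P,\hor}\) for each \(k\), using that \(\Omega^k_B\) is a unital \(B\)-bimodule so \(B\cdot\Omega^k_B=\Omega^k_B\); the second, together with \(\Omega^2_B=\Omega^1_B\wedge\Omega^1_B\), gives \(1_H\otimes\Omega^2_B\subseteq\Omega^1_{P,\hor}\wedge\Omega^1_{P,\hor}\), hence \(\Omega^2_{P,\hor}\subseteq P\cdot(\Omega^1_{P,\hor}\wedge\Omega^1_{P,\hor})\). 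For clause (3) I would check that \(\iota\colon\beta\mapsto 1_H\otimes\beta\) is an injective, unital, degree-preserving, \(\ast\)-preserving algebra map—all immediate from the formulas, e.g.\ multiplicativity is the second identity above and \((1_H\otimes\beta)^\ast=1_H\otimes\beta^\ast\) on applying the \(\ast\)-formula with \(h=1_H\)—and that it lands in \(\coinv{H}{\Omega_{P,\hor}}\), since \(\delta_{\Omega_{P,\hor}}(1_H\otimes\beta)=1_H\otimes(1_H\otimes\beta)\).

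Finally, the substantive clause is that \((\Omega^1_{P,\hor},\rest{\iota}{\Omega^1_{P,\hor}})\) is a \emph{projectable} horizontal lift of the \(B\)-\(\ast\)-bimodule \(\Omega^1_B\). That \(\Omega^1_{P,\hor}\) is a left \(H\)-comodule \(P\)-\(\ast\)-bimodule, and that \(\iota\) restricts to an injective morphism of \(B\)-\(\ast\)-bimodules \(\Omega^1_B\inj\coinv{H}{\Omega^1_{P,\hor}}\) (using \(b_1\cdot(1_H\otimes\omega)\cdot b_2=1_H\otimes b_1\omega b_2\)), are already in hand from the first two paragraphs; and \(\Omega^1_{P,\hor}=P\cdot\iota(\Omega^1_B)=P\cdot\iota(\Omega^1_B)\cdot P\) by the generation argument above, so \((\Omega^1_{P,\hor},\rest{\iota}{\Omega^1_{P,\hor}})\) is a horizontal lift. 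Projectability then follows directly from Proposition~\ref{strongprop}, whose criterion is exactly the identity \(\Omega^1_{P,\hor}=P\cdot\iota(\Omega^1_B)\); alternatively one may compute \(\coinv{H}{(H\otimes\Omega^1_B)}=1_H\otimes\Omega^1_B\) by hand, since a sum \(\sum_i h_i\otimes\omega_i\) with the \(\omega_i\) linearly independent is coinvariant iff \(\Delta(h_i)=1_H\otimes h_i\) for each \(i\), forcing \(h_i\in\bC 1_H\). I do not expect a genuine obstacle: the entire statement is bookkeeping for a graded crossed product, and the only place requiring a little care is the interplay of the \(\ast\)-structure and the \(H\)-coaction with the \(\bZ\)-grading and the antipode; appealing to Proposition~\ref{strongprop} for projectability disposes of the one mildly involved computation—the identification of coinvariants—that would otherwise be needed.
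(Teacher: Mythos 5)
Your proposal is correct: the paper states this proposition without proof, treating it as routine crossed-product bookkeeping, and your verification is exactly the intended argument, with the appeal to Proposition~\ref{strongprop} being legitimate since \(P = B \rtimes H\) has already been established to be a principal left \(H\)-comodule \(\ast\)-algebra with \(\coinv{H}{P} = B\). The one computation genuinely needing care—the compatibility of the \(\ast\)-structure and coaction with the grading and antipode—is handled as you indicate, and your direct identification of the coinvariants of \(H \otimes \Omega^1_B\) is a fine alternative to invoking Proposition~\ref{strongprop}.
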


In what follows, we  will find it notationally convenient to view \(\Omega_{P,\hor}\) as the graded left \(H\)-comodule \(\ast\)-algebra generated by the graded \(\ast\)-subal\-gebra \(\Omega_B\) of \(H\)-coinvariants together with the left \(H\)-subcomodule \(\ast\)-subalgebra \(H\) in degree \(0\), subject to the relation \(1_H = 1_{\Omega_B}\) and the braided supercommutation relation
\[
	\forall h \in H, \, \forall \beta \in \Omega_B, \quad \beta h = \cm{h}{1} (\beta \ract \cm{h}{2}).
\]

We begin by expressing the gauge group \(\fr{G}\) of \(P\) with respect to the canonical first-order horizontal calculus \((\Omega^1_B,\dt{B};\Omega^1_{P,\hor})\) in terms of lazy Sweedler cohomology; note that inner gauge transformations will correspond precisely to lazy Sweedler \(1\)-coboundaries.

\begin{proposition}[cf.\ Brzezi\'{n}ski~\cite{Br96}*{Thm.\ 5.4}, \'{C}a\'{c}i\'{c}--Mesland~\cite{CaMe}*{Thm.\ 3.36.(2)}]\label{trivialgaugetrans}
	The function \(\Op : \ZS^1_\ell(H;B,\Omega^1_B) \to \fr{G}\) given by
		\begin{equation}
			\forall \sigma \in \ZS^1_\ell(H;B,\Omega^1_B), \, \forall h \in H, \, \forall b \in B, \quad \Op(\sigma)(hb) \coloneqq \cm{h}{1}\sigma(\cm{h}{2})b
		\end{equation}
		is a group isomorphism. Moreover,
		\begin{equation}
			\forall \upsilon \in \CS^0_\ell(H;B,\Omega^1_B), \quad \Op(D\upsilon) = \Ad_\upsilon,
		\end{equation}
		so that \(\Op(\BS^1_\ell(H;B,\Omega^1_B)) = \Inn(\fr{G})\) is the central subgroup of inner gauge transformations on \(P\) with respect to \((\Omega^1_B,\dt{B};\Omega^1_{P,\hor})\), and hence \(\Op\) descends to a group isomorphism
		\[
			\widetilde{\Op}: \HS^1_\ell(H;B,\Omega^1_B) \iso \fr{G}/\Inn(\fr{G}) \eqqcolon \Out(\fr{G}).
		\]
\end{proposition}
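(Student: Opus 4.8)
The plan is to show that $\Op$ is a well-defined group homomorphism, then exhibit an explicit inverse, and finally track what happens to the respective coboundary/inner subgroups.

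First I would verify that $\Op(\sigma)$ is genuinely a gauge transformation, i.e. an element of $\fr{G} = \Aut(\Omega^1_{P,\hor},\iota)$, for each $\sigma \in \ZS^1_\ell(H;B,\Omega^1_B)$. Using that $P = B \rtimes H$ is spanned by elements $hb$ with $h \in H$, $b \in B$, one checks directly that $\Op(\sigma)$ is a left $H$-covariant $\ast$-automorphism of $P$ restricting to $\id_B$: left $H$-covariance is immediate from the formula $\Op(\sigma)(hb) = \cm{h}{1}\sigma(\cm{h}{2})b$ and the coaction $\delta_P(hb) = \cm{h}{1} \otimes \cm{h}{2}b$; multiplicativity follows from the lazy Sweedler $1$-cocycle identity~\eqref{scocycle} together with the crossed-product relations $\beta h = \cm{h}{1}(\beta \ract \cm{h}{2})$; the $\ast$-property follows from the relation $\sigma^\ast = \sigma^{-1}$ (unitarity in $\cC(H;B)$) and the formula $\sigma(S(h)^\ast)^\ast = \sigma^{-1}(h)$; and invertibility is witnessed by $\Op(\sigma^{-1})$. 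That $\Op(\sigma)$ induces a well-defined bijection $\Op(\sigma)_{\ast,\hor}$ on $\Omega^1_{P,\hor} = \Omega^1_B \rtimes H$ via $\id_{\Omega^1_B}$ uses exactly that $\sigma$ centralises $\rho_B(B) \oplus \rho_{\Omega^1_B}(\Omega^1_B)$ in $\cC(H;B)$—this centrality condition is precisely what is needed to see that $p \cdot (1 \otimes \omega) \cdot \p{p} \mapsto \Op(\sigma)(p) \cdot (1 \otimes \omega) \cdot \Op(\sigma)(\p{p})$ respects the $\Omega^1_{P,\hor}$-relations. Here the computation via Lemma~\ref{centrallem} is the technical crux, and I expect this to be the main obstacle: one must carefully unwind the interaction of the lazy-centrality condition with the braided supercommutation relation $\beta h = \cm{h}{1}(\beta \ract \cm{h}{2})$ in $\Omega^1_{P,\hor}$.

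Next I would construct the inverse. Given $f \in \fr{G}$, since $f$ is left $H$-covariant and $\rest{f}{B} = \id_B$, the element $f(h \otimes 1_B)$ for $h \in H$ must have the form $\cm{h}{1} \otimes \sigma_f(\cm{h}{2})$ for a unique $\sigma_f \in \Hom_{\bC}(H,B)$ with $\sigma_f(1) = 1$; one then checks that $f \in \fr{G}$ forces $\sigma_f \in \ZS^1_\ell(H;B,\Omega^1_B)$ (multiplicativity of $f$ gives~\eqref{scocycle}, the $\ast$-property gives $\sigma_f^\ast = \sigma_f^{-1}$, and well-definedness of $f_{\ast,\hor}$ gives the lazy-centrality condition via Lemma~\ref{centrallem} again). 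The assignment $f \mapsto \sigma_f$ is inverse to $\Op$, and multiplicativity of $\Op$—i.e. $\Op(\sigma_1 \star \sigma_2) = \Op(\sigma_1) \circ \Op(\sigma_2)$—follows from a short direct computation using the crossed-product relations, already essentially carried out in the proof of Proposition-Definition~\ref{lazysweedler}.

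Finally I would identify the subgroups. For $\upsilon \in \CS^0_\ell(H;B,\Omega^1_B) = \Unit(\Cent_B(B \oplus \Omega^1_B))$, one computes
\[
	\Op(D\upsilon)(hb) = \cm{h}{1}\bigl((\upsilon \ract \cm{h}{2}) \cdot \upsilon^\ast\bigr)b = \cm{h}{1}(\upsilon \ract \cm{h}{2})\upsilon^\ast b = \upsilon h \upsilon^\ast b = \upsilon (hb) \upsilon^\ast,
\]
using the crossed-product relation $\upsilon h = \cm{h}{1}(\upsilon \ract \cm{h}{2})$ and $\upsilon \in \Zent(B)$ (so $\upsilon^\ast b = b \upsilon^\ast$, giving $\upsilon(hb)\upsilon^\ast$ after reinserting); hence $\Op(D\upsilon) = \Ad_\upsilon$. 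By Proposition~\ref{innergaugeprop1}, the inner gauge group $\Inn(\fr{G})$ is exactly $\set{\Ad_\upsilon \given \upsilon \in \Unit(\Cent_B(B \oplus \Omega^1_B))}$, so $\Op(\BS^1_\ell(H;B,\Omega^1_B)) = \Inn(\fr{G})$. Since $\BS^1_\ell(H;B,\Omega^1_B)$ is central in $\ZS^1_\ell(H;B,\Omega^1_B)$ and $\Inn(\fr{G})$ is central in $\fr{G}$, the isomorphism $\Op$ descends to the claimed isomorphism $\widetilde{\Op} : \HS^1_\ell(H;B,\Omega^1_B) \iso \Out(\fr{G})$.
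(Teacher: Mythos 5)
Your proposal is correct and takes essentially the same route as the paper's proof: the paper likewise passes through the bijection (\`{a} la Brzezi\'{n}ski) between normalised convolution-invertible maps \(H \to B\) and left \(H\)-covariant right \(B\)-linear automorphisms fixing \(B\), matches multiplicativity, \(\ast\)-preservation, and well-definedness of \(f_{\ast,\hor}\) with the cocycle identity \eqref{scocycle}, unitarity, and centrality of \(\sigma\) via Lemma~\ref{centrallem}, and performs the same computation \(\Op(D\upsilon) = \Ad_\upsilon\) before invoking Proposition~\ref{innergaugeprop1}. One minor bookkeeping point: the centrality of \(\sigma\) against \(\rho_B(B)\) is what is needed for multiplicativity of \(\Op(\sigma)\) on products of the form \(b \cdot h\) (only the centrality against \(\rho_{\Omega^1_B}(\Omega^1_B)\) pertains to \(f_{\ast,\hor}\)), so in your reverse direction that half of the lazy-centrality condition should be extracted from multiplicativity of \(f\) rather than from well-definedness of \(f_{\ast,\hor}\); all the needed facts are still consequences of \(f \in \fr{G}\), so this does not affect the argument.
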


\begin{proof}
	Let \(\Aut_B(P)\) denote the group of all automorphisms \(f : P \to P\) of \(P\) as a left \(H\)-comodule right \(B\)-module, such that \(\rest{f}{B} =\id_B\), and let \(\cA(P)\) denote the group of all invertible elements \(\sigma \in \cC(H,B)\), such that \(\sigma(1)=1\); observe that
	\[
		\Inn(\fr{G}) \leq \fr{G} \leq \Aut_B(P), \quad \BS^1_\ell(H;B,\Omega^1_B) \leq \ZS^1_\ell(H;B,\Omega^1_B) \leq \cA(P).
	\]
	By~\cite{Br96}*{Thm.\ 5.4}, \emph{mutatis mutandis}, the map \(\Op : \cA(P) \to \Aut_B(P)\) defined by
	\[
		\forall \sigma \in \cA(P), \, \forall h \in H, \, \forall b \in B, \quad \Op(\sigma)(hb) \coloneqq \cm{h}{1}\sigma(\cm{h}{2})b
	\]
	is a well-defined group isomorphism with inverse given by
	\[
		\forall f \in \Aut_B(P), \, \forall h \in H, \quad \inv{\Op}(f)(h) \coloneqq S(\cm{h}{1})f(\cm{h}{2}).
	\]
	Note, moreover, that for all \(\sigma \in \cA(P)\), the inverse \(\inv{\sigma}\) is given by
	\[
		\forall h \in H, \quad \inv{\sigma}(h) \coloneqq \sigma(S(\cm{h}{1})) \ract \cm{h}{2},
	\]
	since, for all \(h \in H\),
	\[
		\epsilon(h)1_B = \sigma(S(\cm{h}{1})\cm{h}{2}) = \left(\sigma(S(\cm{h}{1}) \ract \cm{h}{2}\right)\cdot\sigma(\cm{h}{3}).
	\]
	
	Let us first show that \(\Op^{-1}(\fr{G}) = \ZS^1_\ell(H;B,\Omega^1_B)\). Let \(\sigma \in \cA(P)\) and set \(f \coloneqq \Op(\sigma)\). First, for all \(h,k \in H\),
	\begin{align*}
		f(hk)-f(h)f(k) &= \cm{h}{1}\cm{k}{1}\sigma(\cm{h}{2}\cm{k}{2}) - \cm{h}{1}\sigma(\cm{h}{2})\cm{k}{1}\sigma(\cm{k}{2})\\ 
		&= \cm{h}{1}\cm{k}{1}\left(\sigma(\cm{h}{2}\cm{k}{2})\epsilon(\cm{k}{3}) - (\sigma(\cm{h}{2}) \ract \cm{k}{2})\sigma(\cm{k}{3})\right),
	\end{align*}
	while for all \(h \in H\) and \(b \in B\), since \(bh = \cm{h}{1}(b \ract \cm{h}{2})\),
	\begin{align*}
		f(bh) - f(b)f(h) &=	\cm{h}{1}\sigma(\cm{h}{2})(b \ract \cm{h}{3}) - b \cm{h}{1}\sigma(\cm{h}{2})\\ &= \cm{h}{1}\left(\sigma(\cm{h}{2})(b \ract \cm{h}{3}) - (b \ract \cm{h}{2})\sigma(\cm{h}{3})\right),
	\end{align*}
	so that \(f \in \Aut_B(P)\) is an algebra automorphism if and only if \(\sigma \in \Cent_{\cC(H,B)}(\rho_B(B))\) and
	\[
		\forall h,k \in H, \quad \sigma(hk) = (\sigma(h) \ract \cm{k}{1}) \sigma(\cm{k}{2}).
	\]
	In particular, if \(f\) is an algebra automorphism, then \(\inv{\sigma}\) is given by
	\[
		\forall h \in H, \quad \inv{\sigma}(h) \coloneqq \sigma(S(\cm{h}{1})) \ract \cm{h}{2},
	\]
	since, for all \(h \in H\),
	\[
		\epsilon(h)1_B = \sigma(S(\cm{h}{1})\cm{h}{2}) = \left(\sigma(S(\cm{h}{1}) \ract \cm{h}{2}\right)\cdot\sigma(\cm{h}{3}).
	\]

	Now, suppose that \(f\) is an algebra automorphism. Then, for all \(h \in H\),
	\begin{align*}
		f(S(h)^\ast)^\ast - f(S(h)) 
		&= \left(S(\cm{h}{2})^\ast \sigma(S(\cm{h}{1})^\ast)\right)^\ast - S(\cm{h}{2}) \sigma(S(\cm{h}{1}))\\
		&= \sigma(S(\cm{h}{1})^\ast)^\ast S(\cm{h}{2})^\ast - \left(\sigma(S(\cm{h}{1}))^\ast S(\cm{h}{2})^\ast\right)^\ast\\
		&=	\sigma(S(\cm{h}{1})^\ast)^\ast S(\cm{h}{2})^\ast - \left(S(\cm{h}{3})^\ast (\sigma(S(\cm{h}{1}))^\ast \ract S(\cm{h}{2})^\ast)\right)^\ast\\
		&= \left(\sigma(S(\cm{h}{1})^\ast)^\ast \epsilon(\cm{h}{2}) - \sigma(S(\cm{h}{1})) \ract \cm{h}{2}\right)S(\cm{h}{2})^\ast\\
		&= \left(\sigma^\ast(\cm{h}{1}) - \inv{\sigma}(\cm{h}{1})\right) S(\cm{h}{2})^\ast,
	\end{align*}
	so that \(f\) is a \(\ast\)-automorphism if and only if \(\sigma\) is unitary. Finally, suppose that \(f\) is a \(\ast\)-automorphism. Then, for all \(h \in H\) and \(\beta \in B\), since \(\beta h = \cm{h}{1} \cdot (\beta \ract \cm{h}{2})\),
	\begin{align*}
		f(\cm{h}{1}) \cdot (\beta \ract \cm{h}{2}) - \beta \cdot f(h) &= \cm{h}{1}\sigma(\cm{h}{2}) \cdot (\beta \ract \cm{h}{2}) - \beta \cdot \cm{h}{1}\sigma(\cm{h}{2})\\
		&= \cm{h}{1} \cdot \left(\sigma(\cm{h}{2}) \cdot (\beta \ract \cm{h}{2}) - (\beta \ract \cm{h}{2}) \cdot \sigma(\cm{h}{3})\right),
	\end{align*}
	so that \(f \in \fr{G}\) if and only if \(\sigma \in \Cent_{\cC(H,B)}(\rho_{\Omega^1_B}(\Omega^1_B))\).

	Let us now show that \(\Op \circ D = (\upsilon \mapsto \Ad_\upsilon)\) on \(\CS^0_\ell(H;B,\Omega^1_B) = \Unit(\Cent_B(B\oplus\Omega^1_B))\), which will imply the rest of the claim. Let \(\upsilon \in \Unit(\Cent_B(B\oplus\Omega^1_B))\). Then, for all \(h \in H\) and \(b \in B\),
	\[
		\Op(D\upsilon)(hb) = \cm{h}{1}(\upsilon \ract \cm{h}{2})\upsilon^\ast b = \upsilon hb \upsilon^\ast = \Ad_\upsilon(hb).	\qedhere
	\]
\end{proof}

Next, we express the Atiyah space \(\fr{At}\) of \(P\) with respect to the canonical first-order horizontal calculus \((\Omega^1_B,\dt{B};\Omega^1_{P,\hor})\) and its space of translations \(\fr{at}\) in terms of lazy Hochschild cohomology; note that inner relative gauge potentials will correspond precisely to lazy Hoch\-schild \(1\)-coboundaries.

\begin{proposition}[cf.\ \'{C}a\'{c}i\'{c}--Mesland~\cite{CaMe}*{Thm.\ 3.36.(1)}]\label{trivialgaugepot}
	We have an isomorphism of \(\bR\)-affine spaces \(\Op : \ZH^1_\ell(H;\Omega^1_B) \to \fr{At}\) given by
		\begin{equation}
			\forall \mu \in \ZH^1_\ell(H;\Omega^1_B)\, \forall h \in H, \, \forall b \in B, \quad \Op(\mu)(hb) \coloneqq h \cdot \dt{B}(b) + \cm{h}{1} \cdot \mu(\cm{h}{2}) \cdot b,
		\end{equation}
		whose linear part \(\Op_0 : \ZH^1_\ell(H;\Omega^1_B) \iso \fr{at}\) is given by
		\begin{equation}
			\forall \mu \in \ZH^1_\ell(H;\Omega^1_B),\, \forall h \in H, \, \forall b \in B, \quad \Op_0(\mu)(hb) \coloneqq \cm{h}{1} \cdot \mu(\cm{h}{2}) \cdot b.
		\end{equation}
		Moreover,
		\begin{equation}
			\forall \alpha \in \CH^0_\ell(H;\Omega^1_B), \quad 	\Op_0(D\alpha) = \ad_\alpha,
		\end{equation}
		so that \(\Op_0(\BH^1_\ell(H;\Omega^1_B) = \Inn(\fr{at})\) is the subspace of inner relative gauge potentials on \(P\) with respect to \((\Omega^1_B,\dt{B};\Omega^1_{P,\hor})\), and hence \(\Op\) descends to an isomorphism of \(\bR\)-affine spaces
		\[
			\widetilde{\Op} : \HH^1_\ell(H;\Omega^1_B) \iso \fr{At}/\Inn(\fr{at}) \eqqcolon \Out(\fr{At})
		\]
		with linear part \(\widetilde{\Op_0} : \HH^1_\ell(H;\Omega^1_B) \iso \fr{at}/\Inn(\fr{at}) \eqqcolon \Out(\fr{at})\) descending from \(\Op_0\). 
\end{proposition}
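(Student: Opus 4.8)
The plan is to mimic the proof of Proposition~\ref{trivialgaugetrans} (itself modelled on Brzezi\'{n}ski's description of automorphisms of cleft extensions) and to exploit the fact that $P = B\rtimes H$ is, as a left $H$-comodule, isomorphic to $H\otimes B$ with $B$ carrying the trivial coaction. Writing $H\inj P$, $h\mapsto h\otimes 1_B$, for the canonical left $H$-covariant $\ast$-homomorphism, we have $h\otimes b = (h\otimes 1_B)(1_H\otimes b)$ in $P$, so any left $H$-covariant $\ast$-derivation $\nabla : P\to\Omega^1_{P,\hor}$ is determined by $\rest{\nabla}{B}$ and $\rest{\nabla}{H}$ via the Leibniz rule, and, by left $H$-covariance, $\rest{\nabla}{H}$ is in turn recovered from $h\mapsto S(\cm{h}{1})\cdot\nabla(\cm{h}{2}\otimes 1_B)$. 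First I would record that the \emph{canonical flat gauge potential} $\Op(0) : hb\mapsto h\cdot\dt{B}(b)$ is an element of $\fr{At}$: it is manifestly left $H$-covariant and $\ast$-preserving and restricts to $\dt{B}$ on $B$, and it satisfies the Leibniz rule because $\dt{B}$ is right $H$-equivariant and because in $P$ and in $\Omega^1_{P,\hor}$ one has the crossed-product relations $b\,h = \cm{h}{1}(b\ract\cm{h}{2})$ and $\beta\,h = \cm{h}{1}(\beta\ract\cm{h}{2})$. Since $\fr{At}$ is an $\bR$-affine space over $\fr{at}$ and $\Op(\mu) = \Op(0) + \Op_0(\mu)$ with $\Op_0(\mu)(hb)\coloneqq\cm{h}{1}\cdot\mu(\cm{h}{2})\cdot b$, it then suffices to show that $\Op_0 : \ZH^1_\ell(H;\Omega^1_B)\iso\fr{at}$ is a well-defined $\bR$-linear isomorphism, for then $\Op$ is the asserted affine isomorphism $\ZH^1_\ell(H;\Omega^1_B)\iso\fr{At}$ with linear part $\Op_0$.

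Next I would verify that $\Op_0(\mu)\in\fr{at}$ for every lazy Hochschild $1$-cocycle $\mu$. Well-definedness is automatic since $P = H\otimes_{\bC}B$ and the formula is $\bC$-bilinear in $(h,b)$; left $H$-covariance follows from the coaction formulas for $P$ and $\Omega^1_{P,\hor}$; and $\rest{\Op_0(\mu)}{B} = 0$ because $\mu(1) = 0$. The heart of the matter is that $\Op_0(\mu)$ is a $\ast$-derivation: expanding $\Op_0(\mu)\mleft((hb)(\p{h}\p{b})\mright)$ by the crossed-product multiplication and the right $H$-equivariance of $\dt{B}$ and comparing with $\Op_0(\mu)(hb)\cdot\p{h}\p{b} + hb\cdot\Op_0(\mu)(\p{h}\p{b})$, one finds that the Leibniz rule is \emph{equivalent} to the conjunction of the Hochschild $1$-cocycle identity~\eqref{hcocycle} for $\mu$ and the $B$-centrality of $\mu$ in $\cC(H,\Omega^1_B)$ (invoked through Lemma~\ref{centrallem}); similarly, $\Op_0(\mu)(p^\ast) = -\Op_0(\mu)(p)^\ast$ reduces to self-adjointness of $\mu$ together with $\dt{B}(b^\ast) = -\dt{B}(b)^\ast$. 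Conversely, given $\bA\in\fr{at}$, I would set $\mu_\bA(h)\coloneqq S(\cm{h}{1})\cdot\bA(\cm{h}{2}\otimes 1_B)$; left $H$-covariance of $\bA$ and the antipode axioms show $\mu_\bA(h)\in\coinv{H}{\Omega^1_{P,\hor}} = \Omega^1_B$, while the derivation, $\ast$-derivation, and covariance properties of $\bA$ (combined with the crossed-product relations and $\bA(1_P) = 0$) translate back into the defining conditions of a lazy Hochschild $1$-cocycle. A short reconstruction computation (using $\bA(h\otimes b) = \bA(h\otimes 1_B)\cdot b$ and $\cm{h}{1}S(\cm{h}{2}) = \epsilon(h)1$) then gives $\Op_0(\mu_\bA) = \bA$ and $\mu_{\Op_0(\mu)} = \mu$, so $\Op_0$ is a bijection onto $\fr{at}$.

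It remains to identify $\Op_0(\BH^1_\ell(H;\Omega^1_B))$ with $\Inn(\fr{at})$ and to descend to the quotients. For $\alpha\in\CH^0_\ell(H;\Omega^1_B) = \Zent_B(\Omega^1_B)_{\sa}$, the coboundary is $D\alpha(h) = \alpha\ract h - \alpha\,\epsilon(h)$, so $\Op_0(D\alpha)(hb) = \cm{h}{1}(\alpha\ract\cm{h}{2})b - h\,\alpha\,b$; using the braided commutation relation $\alpha\,h = \cm{h}{1}(\alpha\ract\cm{h}{2})$ in $\Omega^1_{P,\hor}$ together with the fact that $\alpha$ commutes with $B$ inside $\Omega_{P,\hor}$, this equals $\alpha(hb) - (hb)\alpha = \ad_\alpha(hb)$, so $\Op_0(D\alpha) = \ad_\alpha$ and hence $\Op_0(\BH^1_\ell(H;\Omega^1_B))\subseteq\Inn(\fr{at})$; conversely, by Proposition~\ref{innerprop} every inner relative gauge potential is $\ad_\alpha$ for some $\alpha\in\Zent_B(\Omega^1_B)_{\sa}$, so equality holds. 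Since $\Op_0$ is injective it restricts to an isomorphism $\BH^1_\ell(H;\Omega^1_B)\iso\Inn(\fr{at})$, and therefore $\Op_0$ and $\Op = \Op(0) + \Op_0(\cdot)$ descend, respectively, to an $\bR$-linear isomorphism $\widetilde{\Op_0} : \HH^1_\ell(H;\Omega^1_B)\iso\fr{at}/\Inn(\fr{at}) = \Out(\fr{at})$ and an affine isomorphism $\widetilde{\Op} : \HH^1_\ell(H;\Omega^1_B)\iso\fr{At}/\Inn(\fr{at}) = \Out(\fr{At})$ with linear part $\widetilde{\Op_0}$.

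The main obstacle is the bookkeeping in the second paragraph: establishing the precise equivalence between the Leibniz and $\ast$-derivation axioms for $\Op_0(\mu)$ and the cocycle, centrality, and self-adjointness axioms for $\mu$, which is where the non-cocommutativity of $H$ and the non-commutativity of $B$ genuinely enter through Lemma~\ref{centrallem} and the crossed-product relations; everything else is either formal or parallels Proposition~\ref{trivialgaugetrans}.
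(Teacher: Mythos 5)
Your proposal is correct and follows essentially the same route as the paper's proof: the flat potential \(\Op(0) : hb \mapsto h\cdot\dt{B}(b)\) is checked directly, the linear part \(\Op_0\) with inverse \(\bA \mapsto \bigl(h \mapsto S(\cm{h}{1})\cdot\bA(\cm{h}{2})\bigr)\) is matched against \(\fr{at}\) by showing the Leibniz and \(\ast\)-derivation axioms correspond exactly to the cocycle identity, \(B\)-centrality (via Lemma~\ref{centrallem}), and self-adjointness of \(\mu\), and coboundaries are identified with inner relative gauge potentials via \(\Op_0(D\alpha)=\ad_\alpha\) and Proposition~\ref{innerprop} before descending to the quotients. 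The only cosmetic difference is that you construct and verify the inverse \(\mu_{\bA}\) directly, whereas the paper first sets up the ambient linear isomorphism \(\cC(H,\Omega^1_B)_0 \cong \Hom_B(P,\Omega^1_{P,\hor})_0\) à la Brzezi\'{n}ski and then restricts.
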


\begin{proof}
	Let \(\Hom_B(P,\Omega^1_{P,\hor})_0\) denote the \(\bR\)-vector space of all morphisms \(f : P \to \Omega^1_{P,\hor}\) of left \(H\)-comodule right \(B\)-modules, such that \(\rest{f}{B} = 0\), and let \(\cC(H,\Omega^1_B)_0\) denote the \(\bR\)-vector space of all \(\mu \in \cC(H,\Omega^1_B)\), such that \(\mu(1) = 0\); observe that
	\[
		\Inn(\fr{at}) \leq \fr{At} \leq \Hom_B(P,\Omega^1_{P,\hor})_0, \quad \BH^1_\ell(H;\Omega^1_B) \leq \ZH^1_\ell(H;\Omega^1_B) \leq \cC(H,\Omega^1_B)_0.
	\]
	By~\cite{Br96}*{Thm.\ 5.4}, \emph{mutatis mutandis}, the map \(\Op_0 : \cC(H,\Omega^1_B)_0 \to \Hom_B(P,\Omega^1_{P,\hor})_0\) defined by
	\[
		\forall \mu \in \cC(H,\Omega^1_B)_0, \, \forall h \in H, \,\forall b \in B, \quad \Op_0(\mu)(hb) \coloneqq \cm{h}{1}\mu(\cm{h}{2}) \cdot b
	\]
	is a well-defined \(\bR\)-linear isomorphism with inverse given by
	\[
		\forall \bA \in \Hom_B(P,\Omega^1_{P,\hor})_0,\, \forall h \in H, \quad \Op_0^{-1}(\bA)(h) \coloneqq S(\cm{h}{1})\bA(\cm{h}{2}).
	\]
	
	Let us first show that \(\Op_0^{-1}(\fr{at}) = \ZH^1_\ell(H;\Omega^1_B)\) and that \(\Op_0 \circ D = (\alpha \mapsto \ad_\alpha)\) on the domain \(\CH^0_\ell(H;\Omega^1_B) = \Zent_B(\Omega^1_B)_\sa\). Let \(\mu \in \cC(H,\Omega^1_B)_0\) and set \(\bA \coloneqq \Op(\mu)\). By the proof of Proposition~\ref{trivialgaugetrans}, \emph{mutatis mutandis}, it follows that \(\bA \in \Hom_B(P,\Omega^1_{P,\hor})_0\) is a \(P\)-bimodule derivation if and only if \(\mu \in \Zent_B(\cC(H,\Omega^1_B))\) and
	\[
		\forall h, k \in H, \quad \mu(hk) = \mu(h) \ract k + \epsilon(h)\mu(k),
	\]
	in which case, for all \(h \in H\),
	\[
		0 = \mu(S(\cm{h}{1})\cm{h}{2}) = \mu(S(\cm{h}{1})) \ract \cm{h}{2} + \epsilon(S(\cm{h}{1}))\mu(\cm{h}{2}) = \mu(S(\cm{h}{1})) \ract \cm{h}{2} + \mu(h).
	\]
	Thus, if \(\bA\) is a \(P\)-bimodule derivation, then, by the proof of Proposition~\ref{trivialgaugetrans}, 
	\[
		\forall h \in H, \quad \bA(S(h)^\ast)^\ast + \bA(S(h)) = \left(\mu^\ast(\cm{h}{1})-\mu(\cm{h}{1})\right)S(\cm{h}{2})^\ast,
	\]
	so that the \(\bA\) is a \(\ast\)-derivation if and only if \(\mu = \mu^\ast\). The rest now follows from the proof of Proposition~\ref{trivialgaugetrans}, \emph{mutatis mutandis}.
	
	Let us now show that \(\Op : \ZH^1_\ell(H;\Omega^1_B) \to \fr{At}\) is a well-defined isomorphism of \(\bR\)-affine spaces with linear part \(\Op_0\); it suffices to show that the map \(\nabla_0 : P \to \Omega^1_{P,\hor}\) defined by
	\[
		\forall h \in H, \, \forall b \in B, \quad \nabla_0(hb) \coloneqq h \cdot \dt{B}(b)
	\]
	is a gauge potential, so that \(\Op(0) = \nabla_0\) is well-defined. First, for all \(h,k \in H\) and \(b,c \in B\), since \(hbkc = h\cm{k}{1}(b\ract\cm{h}{2})c\),
	\begin{align*}
		\nabla_0(hbkc) - \nabla_0(hb)kc - hb\nabla_0(kc) &= h\cm{k}{1}\dt{B}((b \ract \cm{k}{2})c) - h\dt{B}(b)kc - hbk\dt{B}(c)\\
		&= h\cm{k}{1}\cdot\left(\dt{B}((b \ract \cm{k}{2})c) - \dt{B}(b \ract \cm{k}{1}) \cdot (b \ract \cm{k}{1})\cdot\dt{B}(c)\right)\\
		&=0.
	\end{align*}
	Next, for all \(h \in H\) and \(b \in B\), since \((hb)^\ast = \cm{h}{1}^\ast (b^\ast \ract \cm{h}{2}^\ast)\) and \((h\dt{B}(b))^\ast = \cm{h}{1}^\ast \cdot \dt{B}(b)^\ast \ract \cm{h}{2}^\ast\),
	\[
		\nabla_0((hb)^\ast) + \nabla_0(hb)^\ast = \cm{h}{1}^\ast \dt{B}(b^\ast \ract \cm{h}{2}^\ast) +\left(h\dt{B}(b)\right)^\ast 
		= \cm{h}{1}^\ast \cdot \left(\dt{B}(b^\ast) +\dt{B}(b)^\ast\right) \ract \cm{h}{2}^\ast = 0.
	\]
	Finally, by construction, \(\rest{\nabla_0}{B} = \dt{B}\).
\end{proof}

We now relate the affine action of the gauge group \(\fr{G}\) on the Atiyah space \(\fr{At}\) to the affine action of lazy Sweedler cohomology with coefficients in \((B,\Omega^1_B)\) on lazy Hochschild cohomology with coefficients in \(\Omega^1_B\) induced by the Maurer--Cartan \(1\)-cocycle \(\MC[\dt{B}]\) of the derivation \(\dt{B} : B \to \Omega^1_B\).

\begin{proposition}[cf.\ Brzezi\'{n}ski--Majid~\cite{BrM}*{Prop.\ 4.8}, \'{C}a\'{c}i\'{c}--Mesland~\cite{CaMe}*{Thm.\ 3.36.(3)}]\label{trivialgaugepotprop}
	For every \(\sigma \in \ZS^1_\ell(H;B,\Omega^1_B)\) and every \(\mu \in \ZH^1_\ell(H;\Omega^1_B)\), we have
	\begin{gather}
		\Op(\sigma) \act \Op(\mu) = \Op\mleft(\sigma \act \mu + \MC[\dt{B}](\sigma)\mright),\\
		\Op(\sigma) \act \Op_0(\mu) = \Op_0(\sigma \act \mu).
	\end{gather}
	and hence, at the level of cohomology,
	\begin{gather}
		\widetilde{\Op}([\sigma]) \act \widetilde{\Op}([\mu]) = \widetilde{\Op}\mleft([\sigma] \act [\mu] + \widetilde{\MC}[\dt{B}]([\sigma])\mright),\\
		\widetilde{\Op}([\sigma]) \act \widetilde{\Op_0}([\mu]) = \widetilde{\Op_0}([\sigma] \act [\mu]).
	\end{gather}
	Thus, the maps \(\Op \times \Op\) and \(\widetilde{\Op} \times \widetilde{\Op}\) define groupoid isomorphisms
	\begin{gather*}
		\ZS^1_\ell(H;B,\Omega^1_B) \ltimes \ZH^1_\ell(H;\Omega^1_B) \to \fr{G} \ltimes \fr{At}, \\ \HS^1_\ell(H;B,\Omega^1_B) \ltimes \HH^1_\ell(H;\Omega^1_B) \to \Out(\fr{G}) \ltimes \Out(\fr{At}),
	\end{gather*}
	respectively, where \(\ZS^1_\ell(H;B,\Omega^1_B)\) acts affine-linearly on \(\ZH^1_\ell(H;\Omega^1_B)\) with \(1\)-cocycle \(\MC[\dt{B}]\) and \(\HS^1_\ell(H;B,\Omega^1_B)\) acts affine-linearly on \(\HH^1_\ell(H;\Omega^1_B)\)  with \(1\)-cocycle \(\widetilde{\MC}[\dt{B}]\).
\end{proposition}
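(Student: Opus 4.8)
The plan is to establish the two displayed cocycle identities by a direct computation in the crossed product $\Omega^1_{P,\hor} = \Omega^1_B \rtimes H$, then obtain the cohomological statements by passing to quotients, and finally repackage everything as groupoid isomorphisms.

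First I would collect the ingredients. By Proposition~\ref{trivialgaugetrans} the map $\Op$ is a group isomorphism, so $\Op(\sigma)^{-1} = \Op(\sigma^{-1})$ with $\sigma^{-1}$ the convolution inverse of $\sigma$ in $\cC(H,B)$; since $\sigma$ is unitary, $\sigma^{-1} = \sigma^\ast$. By Remark~\ref{asthor} the automorphism $\Op(\sigma)_{\ast,\hor}$ of $\Omega^1_{P,\hor}$ is determined by $p \cdot \iota(\omega) \cdot \p{p} \mapsto \Op(\sigma)(p) \cdot \iota(\omega) \cdot \Op(\sigma)(\p{p})$, which—using $\rest{\Op(\sigma)}{B} = \id_B$ and $\Op(\sigma)(h) = \cm{h}{1}\sigma(\cm{h}{2})$—gives the explicit formula
\[
	\forall h \in H, \, \forall \omega \in \Omega^1_B, \quad \Op(\sigma)_{\ast,\hor}(h \cdot \omega) = \cm{h}{1}\sigma(\cm{h}{2}) \cdot \omega .
\]
I would also record the Leibniz identity coming from $0 = \dt{B}\mleft((\sigma \star \sigma^{-1})(k)\mright)$, namely $\sigma(\cm{k}{1}) \cdot \dt{B}(\sigma^{-1}(\cm{k}{2})) = -\dt{B}(\sigma(\cm{k}{1})) \cdot \sigma^{-1}(\cm{k}{2}) = \MC[\dt{B}](\sigma)(k)$, where the final equality is the defining formula of $\MC[\dt{B}]$ from Theorem~\ref{lazymcthm} together with $\sigma^{-1} = \sigma^\ast$.

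The core step is then to evaluate
\[
	(\Op(\sigma) \act \Op(\mu))(hb) = \Op(\sigma)_{\ast,\hor}\mleft(\Op(\mu)\mleft(\Op(\sigma^{-1})(hb)\mright)\mright)
\]
for $h \in H$, $b \in B$. Applying $\Op(\sigma^{-1})$ gives $\cm{h}{1}\sigma^{-1}(\cm{h}{2})b$; applying the $\ast$-derivation $\Op(\mu)$ and expanding with the Leibniz rules for $\Op(\mu)$ and $\dt{B}$ yields the three summands $\cm{h}{1}\dt{B}(\sigma^{-1}(\cm{h}{2}))b$, $\cm{h}{1}\sigma^{-1}(\cm{h}{2})\dt{B}(b)$ and $\cm{h}{1}\mu(\cm{h}{2})\sigma^{-1}(\cm{h}{3})b$; and applying $\Op(\sigma)_{\ast,\hor}$ via the formula above, together with the fact that $\Op(\sigma)$ is an algebra map restricting to $\id_B$, turns these respectively into $\cm{h}{1}\sigma(\cm{h}{2})\dt{B}(\sigma^{-1}(\cm{h}{3}))b$, $\cm{h}{1}\sigma(\cm{h}{2})\sigma^{-1}(\cm{h}{3})\dt{B}(b)$ and $\cm{h}{1}\sigma(\cm{h}{2})\mu(\cm{h}{3})\sigma^{-1}(\cm{h}{4})b$. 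Using $\sigma \star \sigma^{-1} = \epsilon(\cdot)1_B$, the middle summand collapses to $h \cdot \dt{B}(b)$; the third is $\cm{h}{1}(\sigma \star \mu \star \sigma^{-1})(\cm{h}{2})b = \cm{h}{1}(\sigma \act \mu)(\cm{h}{2})b$; and the first, by the Leibniz identity above, equals $\cm{h}{1}\MC[\dt{B}](\sigma)(\cm{h}{2})b$. Summing and comparing with $\Op(\sigma \act \mu + \MC[\dt{B}](\sigma))(hb) = h\cdot\dt{B}(b) + \cm{h}{1}(\sigma\act\mu)(\cm{h}{2})b + \cm{h}{1}\MC[\dt{B}](\sigma)(\cm{h}{2})b$ gives the first identity. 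The identity $\Op(\sigma) \act \Op_0(\mu) = \Op_0(\sigma \act \mu)$ follows the same way but faster: since $\Op_0(\mu)$ vanishes on $B$, only the $\mu$-summand survives and one is left with $\cm{h}{1}(\sigma \star \mu \star \sigma^{-1})(\cm{h}{2})b$.

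The cohomological identities are then formal. The map $\Op$ descends to $\widetilde{\Op}$ and $\Op_0$ to $\widetilde{\Op_0}$ since, by Propositions~\ref{trivialgaugetrans} and~\ref{trivialgaugepot}, $\Op$ carries $\BS^1_\ell(H;B,\Omega^1_B)$ onto $\Inn(\fr{G})$ and $\BH^1_\ell(H;\Omega^1_B)$ onto $\Inn(\fr{at})$; the conjugation action of $\ZS^1_\ell$ on $\ZH^1_\ell$ descends and $\MC[\dt{B}]$ descends to $\widetilde{\MC}[\dt{B}]$ by Theorem~\ref{lazymcthm}; and $\Inn(\fr{G})$ acts trivially on $\Out(\fr{at})$ by Proposition~\ref{inducedprop}. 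Applying the relevant quotient maps to the two cocycle identities yields the two cohomological identities. Finally, the first cocycle identity says precisely that $\Op \times \Op$ sends the arrow $\mu \to \sigma \act \mu + \MC[\dt{B}](\sigma)$ of the action groupoid $\ZS^1_\ell(H;B,\Omega^1_B) \ltimes \ZH^1_\ell(H;\Omega^1_B)$—where the group acts affine-linearly with $1$-cocycle $\MC[\dt{B}]$ by Theorem~\ref{lazymcthm}—to the arrow $\Op(\mu) \to \Op(\sigma) \act \Op(\mu)$ of $\fr{G} \ltimes \fr{At}$; since $\Op$ is a bijection on objects and on the acting groups, $\Op \times \Op$ is an isomorphism of groupoids, and the same argument with $\widetilde{\Op} \times \widetilde{\Op}$ gives the cohomological groupoid isomorphism. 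The one delicate point is the bookkeeping in the core step: one must handle the crossed-product multiplication, $\ast$-structure and left $H$-coaction on $\Omega^1_{P,\hor} = \Omega^1_B \rtimes H$ correctly and verify the explicit shape of $\Op(\sigma)_{\ast,\hor}$, but this is routine given Remark~\ref{asthor} and the formulas of Propositions~\ref{trivialgaugetrans} and~\ref{trivialgaugepot}.
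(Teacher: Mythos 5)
Your proposal is correct and follows essentially the same route as the paper: both compute \(\Op(\sigma)_\ast \circ \Op(\mu) \circ \Op(\sigma^\ast)\) on elements \(hb\), use \(\sigma \star \sigma^\ast = \epsilon(\cdot)1_B\) and the Leibniz rule to convert \(\sigma(\cm{h}{1})\cdot\dt{B}\sigma^\ast(\cm{h}{2})\) into \(\MC[\dt{B}](\sigma)(h)\), and identify the remaining summands with \(h\cdot\dt{B}(b)\) and \(\cm{h}{1}\cdot(\sigma\act\mu)(\cm{h}{2})\cdot b\), after which the cohomological and groupoid statements follow formally from the earlier propositions. The only (immaterial) difference is organisational: the paper first splits \(\Op(\mu) = \Op(0) + \Op_0(\mu)\) and treats the two pieces separately, whereas you expand everything at once and obtain the same three terms.
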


\begin{proof}
	Let \(\sigma \in \ZS^1_\ell(H;B,\Omega^1_B)\) and \(\mu \in \ZH^1_\ell(H;\Omega^1_B)\); note that
	\[
		\Op(\sigma) \act \Op(\mu) = \Op(\sigma) \act \left(\Op(0) + \Op_0(\mu)\right) = \Op(\sigma) \act \Op(0) + \Op(\sigma) \act \Op_0(\mu).
	\]
	On the one hand, for all \(h \in H\) and \(b \in B\), since \(\sigma \star \sigma^\ast = \epsilon(\cdot)1_B\),
	\begin{align*}
		\Op(\sigma) \act \Op(0)(hb) &= \Op(\sigma)_\ast \circ \Op(0) \circ \Op(\sigma^\ast)(hb)\\
		&= \cm{h}{1}\sigma(\cm{h}{2}) \cdot \left(\dt{B}\sigma^\ast(\cm{h}{3}) \cdot b + \sigma^\ast(\cm{h}{3})\dt{B}(b)\right)\\
		&= -\cm{h}{1} \cdot \dt{B}\sigma(\cm{h}{2}) \cdot \sigma^\ast(\cm{h}{3})b + \cm{h}{1}\sigma(\cm{h}{2})\sigma^\ast(\cm{h}{3}) \cdot \dt{B}(b)\\
		&=  \cm{h}{1} \cdot \MC[\dt{B}](\sigma)(\cm{h}{2}) \cdot b + h \cdot \dt{B}(b)\\
		&= \Op(\MC[\dt{B}](\sigma))(hb),
	\end{align*}
	so that \(\Op(\sigma) \act \Op(0) = \Op(\MC[\dt{B}](\sigma))\). On the other hand, for all \(h \in H\) and \(b \in B\),
	\begin{align*}
		\Op(\sigma) \act \Op_0(\mu)(hb) &= \Op(\sigma)_\ast \circ \Op_0(\mu) \circ \Op(\sigma^\ast)(hb)\\
		&= \cm{h}{1}\sigma(\cm{h}{2}) \cdot \mu(\cm{h}{3}) \cdot \sigma^\ast(\cm{h}{4})b\\
		&= \cm{h}{1} \cdot (\sigma \act \mu)(\cm{h}{2}) \cdot b\\
		&= \Op_0(\sigma \act \mu)(hb),	
	\end{align*}
	so that \(\Op(\sigma) \act \Op_0(\mu) = \Op_0(\sigma \act \mu)\).
\end{proof}

All of the above results yield straightforward characterisations of prolongable gauge transformations and prolongable (relative) gauge potentials in terms of lazy Sweedler cohomology nd lazy Hochschild cohomology, respectively; in particular, we shall find that every gauge transformation is automatically prolongable.

\begin{corollary}\label{trivialprcor}
	Let \(\pr{\fr{G}}\) be the prolongable gauge group of \(P\) with respect to the canonical second-order horizontal calculus \((\Omega_B,\dt{B};\Omega_{P,\hor})\). Let \(\pr{\fr{At}}\) be the prolongable Atiyah space of \(P\) with respect to \((\Omega_B,\dt{B};\Omega_{P,\hor})\), let \(\pr{\fr{at}}\) be its translation space, and let \(\Inn(\pr{\fr{at}}) \subset \pr{\fr{at}}\) be the subspace of all inner prolongable gauge potentials. Then \(\pr{\fr{G}} = \fr{G}\) and
	\begin{gather*}
		\inv{\Op}\mleft(\pr{\fr{At}}\mright) = \Op_0^{-1}\mleft(\pr{\fr{at}}\mright) = \ZH^1_\ell(H;\Omega^1_B,\Omega_B),\\
		\Op_0^{-1}\mleft(\Inn(\pr{\fr{at}})\mright) = \BH^1_\ell(H;\Omega^1_B,\Omega_B),
	\end{gather*}
	hence \(\Op\) induces an isomorphism \(\pr{\widetilde{\Op}} : \HH^1_\ell(H;\Omega^1_B,\Omega_B) \iso \pr{\fr{At}}/\Inn(\pr{\fr{at}}) \eqqcolon \Out(\pr{\fr{At}})\) of \(\bR\)-affine spaces with linear part \(\pr{\widetilde{\Op_0}} : \HH^1_\ell(H;\Omega^1_B,\Omega_B) \iso \pr{\fr{at}}/\Inn(\pr{\fr{at}}) \eqqcolon \Out(\pr{\fr{at}})\) induced by \(\Op_0\). Moreover, for all \(\sigma \in \ZS^1_\ell(H;B,\Omega^1_B,\Omega_B)\) and \(\mu \in \ZH^1_\ell(H;\Omega^1_B,\Omega_B)\),
	\begin{gather}
		\widetilde{\Op}([\sigma]) \act \pr{\widetilde{\Op}}([\mu]) = \pr{\widetilde{\Op}}\mleft([\sigma] \act [\mu] + \widetilde{\MC}[\dt{B},\Omega_B]([\sigma])\mright),\\
		\widetilde{\Op}([\sigma]) \act \pr{\widetilde{\Op_0}}([\mu]) = \pr{\widetilde{\Op_0}}([\sigma] \act [\mu]).
	\end{gather}
	Thus, the maps \(\Op \times \rest{\Op}{\ZH^1_\ell(H;\Omega^1_B,\Omega_B)}\) and \(\widetilde{\Op} \times \pr{\widetilde{\Op}}\) define groupoid isomorphisms
	\begin{gather*}
		\ZS^1_\ell(H;B,\Omega^1_B) \ltimes \ZH^1_\ell(H;\Omega^1_B,\Omega_B) \to \pr{\fr{G}} \ltimes \pr{\fr{At}}, \\ \HS^1_\ell(H;B,\Omega^1_B) \ltimes \HH^1_\ell(H;\Omega^1_B,\Omega_B) \to \Out(\pr{\fr{G}}) \ltimes \Out(\pr{\fr{At}}),
	\end{gather*}
	respectively, where \(\ZS^1_\ell(H;B,\Omega^1_B)\) acts affinely on \(\ZH^1_\ell(H;\Omega^1_B,\Omega_B)\) with \(1\)-cocycle \(\MC[\dt{B}]\) and \(\HS^1_\ell(H;B,\Omega^1_B)\) acts affinely on \(\HH^1_\ell(H;\Omega^1_B,\Omega_B)\) with \(1\)-cocycle \(\widetilde{\MC}[\dt{B}]\).
\end{corollary}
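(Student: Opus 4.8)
The first-order picture is already in hand: by Propositions~\ref{trivialgaugetrans}, \ref{trivialgaugepot} and~\ref{trivialgaugepotprop}, the maps $\Op$ and $\Op_0$ furnish isomorphisms intertwining $\fr{G} \ltimes \fr{At}$ with $\ZS^1_\ell(H;B,\Omega^1_B) \ltimes \ZH^1_\ell(H;\Omega^1_B)$ and their outer quotients, together with the Maurer--Cartan cocycle $\MC[\dt{B}]$. The plan is therefore to identify the prolongable subobjects under these isomorphisms and to check that $\MC[\dt{B}]$ respects the refinement $\ZH^1_\ell(H;\Omega^1_B,\Omega_B)$. Concretely, I would carry out three steps: (i) $\pr{\fr{G}} = \fr{G}$; (ii) $\Op^{-1}(\pr{\fr{At}}) = \Op_0^{-1}(\pr{\fr{at}}) = \ZH^1_\ell(H;\Omega^1_B,\Omega_B)$ and $\Op_0^{-1}(\Inn(\pr{\fr{at}})) = \BH^1_\ell(H;\Omega^1_B,\Omega_B)$; (iii) deduce the displayed action identities, the groupoid isomorphisms, and the induced isomorphisms $\pr{\widetilde{\Op}}$, $\pr{\widetilde{\Op_0}}$ from Proposition~\ref{trivialgaugepotprop} and the concluding corollary of the preceding subsection by restriction.

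For step (i) I would rerun the proof of Proposition~\ref{trivialgaugetrans} one degree higher. Writing $\Omega_{P,\hor} = \Omega_B \rtimes H$ with the braided supercommutation $\beta h = \cm{h}{1}(\beta \ract \cm{h}{2})$, the assignment $h\beta \mapsto \cm{h}{1}\sigma(\cm{h}{2})\beta$ ($\beta \in \Omega_B$) extends $\Op(\sigma)$ to a graded $\ast$-endomorphism of $\Omega_{P,\hor}$, and, exactly as in the degree-zero computation, it is well-defined, multiplicative and bijective precisely when $\sigma$ centralises $\rho_{\Omega_B}(\Omega_B)$ in $\cC(H;\Omega_B)$. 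Since $\Omega_B$ is generated by $\Omega^1_B$ over $B$, this is already implied by $\sigma \in \Cent_{\cC(H,B)}(\rho_B(B) \oplus \rho_{\Omega^1_B}(\Omega^1_B))$, which holds because $\Op(\sigma) \in \fr{G}$; hence every $f \in \fr{G}$ is prolongable, with $f_{\ast,\hor}$ the above extension, and $\pr{\fr{G}} = \fr{G}$.

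For step (ii) I would compute the candidate canonical prolongation of $\nabla = \Op(\mu)$ on a horizontal $1$-form $h\omega$ ($h \in H$, $\omega \in \Omega^1_B$), using $\nabla(h) = \cm{h}{1}\mu(\cm{h}{2})$, and impose compatibility with the two families of relations in $\Omega_{P,\hor}$ beyond degree zero: the Leibniz rule for $\dt{B}$ (automatically respected) and the crossed-product identity $\dt{B}(b)h = \cm{h}{1}\,\dt{B}(b)\ract\cm{h}{2}$. The single obstruction that survives is the vanishing of the graded commutator of $\mu$ with $\rho_{\Omega^1_B}(\Omega^1_B)$; since $\mu \in \ZH^1_\ell(H;\Omega^1_B)$ already lies in $\Zent_B(\cC(H,\Omega^1_B))$ and $\Omega_B$ is generated in degree $\leq 1$ over $B$, this is equivalent to $\mu \in \Cent_{\cC(H;\Omega_B)}(\rho_{\Omega_B}(\Omega_B))$, i.e.\ to $\mu \in \ZH^1_\ell(H;\Omega^1_B,\Omega_B)$. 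The same computation with $\dt{B}$ replaced by $0$ gives $\Op_0^{-1}(\pr{\fr{at}}) = \ZH^1_\ell(H;\Omega^1_B,\Omega_B)$; in particular $\nabla_0 = \Op(0)$ is prolongable, so $\pr{\fr{At}} \neq \emptyset$. Finally, Proposition~\ref{innpot} identifies $\Inn(\pr{\fr{at}})$ with $\{\ad_\alpha : \alpha \in (\Omega^1_B)_\sa \cap \Zent(\Omega_B)\}$, and since $\Op_0 \circ D = (\alpha \mapsto \ad_\alpha)$ on $\CH^0_\ell(H;\Omega^1_B) = \Zent_B(\Omega^1_B)_\sa$ by Proposition~\ref{trivialgaugepot}, we get $\Op_0^{-1}(\Inn(\pr{\fr{at}})) = D(\Zent_B(\Omega^1_B)_\sa \cap \Zent(\Omega_B)) = \BH^1_\ell(H;\Omega^1_B,\Omega_B)$.

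Step (iii) is then formal: the conjugation action of $\ZS^1_\ell(H;B,\Omega^1_B)$ preserves $\ZH^1_\ell(H;\Omega^1_B,\Omega_B)$ by construction of the latter, and the concluding corollary of the preceding subsection shows $\MC[\dt{B}]$ sends $\ZS^1_\ell(H;B,\Omega^1_B)$ into $\ZH^1_\ell(H;\Omega^1_B,\Omega_B)$ and $\BS^1_\ell(H;B,\Omega^1_B)$ into $\BH^1_\ell(H;\Omega^1_B,\Omega_B)$; hence the affine action and the cocycle identities of Proposition~\ref{trivialgaugepotprop} restrict verbatim and descend to cohomology, yielding the stated groupoid isomorphisms and the isomorphisms $\pr{\widetilde{\Op}}$ and $\pr{\widetilde{\Op_0}}$. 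I expect the real work to sit in step (ii): one must check that all the degree-$1$ and degree-$2$ relations of the graded crossed product $\Omega_B \rtimes H$ are respected by the candidate prolongation and that the residual constraint on $\mu$ is exactly the graded-centraliser condition packaged into $\ZH^1_\ell(H;\Omega^1_B,\Omega_B)$ --- this is the one point where the second-order structure genuinely enters rather than reducing formally to the first-order results.
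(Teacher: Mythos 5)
Your proposal is correct and follows essentially the same route as the paper's own proof: extend \(\Op(\sigma)\) and the canonical prolongations through the crossed-product relations one degree higher, identify the sole surviving obstruction as the centraliser condition defining \(\ZH^1_\ell(H;\Omega^1_B,\Omega_B)\) (and its coboundary refinement for the inner case), and then let the action and Maurer--Cartan cocycle identities of Proposition~\ref{trivialgaugepotprop} restrict and descend formally via the \(\ZS^1_\ell\)-invariance of \(\ZH^1_\ell(H;\Omega^1_B,\Omega_B)\) and the corollary on \(\MC[\dt{B}]\). The only cosmetic deviations are that you invoke Proposition~\ref{innpot} for the characterisation of \(\Inn(\pr{\fr{at}})\) where the paper redoes the short commutator computation directly, and you treat \(\Op(\mu)\) in one stroke rather than first checking \(\Op(0)\in\pr{\fr{At}}\) and then passing to relative potentials; both are harmless.
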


\begin{proof}
	Before continuing, note that for all \(h,k \in H\) and \(b,\p{b},c \in B\),
	\[
		hb \cdot \dt{B}(\p{b}) \cdot kc = h\cm{k}{1}(b \ract \cm{k}{2})\cdot \dt{B}(\p{b} \ract \cm{k}{3}) \cdot c.
	\]

	Let us first show that \(\fr{G} = \pr{\fr{G}}\). Let \(\sigma \in \ZS^1_\ell(H;B,\Omega^1_B)\) be given. Then, for all \(h,k \in H\), \(b,c \in B\), and \(\alpha,\beta \in \Omega^1_B\),
	\begin{align*}
		&\Op(\sigma)(hb) \cdot \alpha \wedge \beta \cdot \Op(\sigma)(kc)	\\
		&\quad= \cm{h}{1} \sigma(\cm{h}{2}) b \cdot\alpha \wedge \beta \cdot \cm{k}{1}\sigma(\cm{k}{2})c\\
		&\quad= \cm{h}{1}\cm{k}{1} (\sigma(\cm{h}{2}) \ract \cm{k}{2})(b \ract \cm{k}{3}) \cdot (\alpha \ract \cm{k}{4}) \wedge (\beta \ract \cm{k}{5}) \cdot \sigma(\cm{k}{6})c\\
		&\quad= \cm{h}{1}\cm{k}{1} (\sigma(\cm{h}{2}) \ract \cm{k}{2})\sigma(\cm{k}{3})(b \ract \cm{k}{4}) \cdot (\alpha \ract \cm{k}{5}) \wedge (\beta \ract \cm{k}{6}) \cdot c\\
		&\quad= \cm{h}{1}\cm{k}{1} \sigma(\cm{h}{2}\cm{k}{2})(b \ract \cm{k}{3}) \cdot (\alpha \ract \cm{k}{4}) \wedge (\beta \ract \cm{k}{5}) \cdot c\\
		&\quad= \Op(\sigma)(h\cm{k}{1}) \cdot \left((b \ract \cm{k}{2})\cdot (\alpha \ract \cm{k}{3}) \wedge (\beta \ract \cm{k}{4}) \cdot c\right),
	\end{align*}
	where \(hb \cdot \alpha \wedge \beta \cdot kc = h\cm{k}{1} \cdot \left((b \ract \cm{k}{2})\cdot (\alpha \ract \cm{k}{3}) \wedge (\beta \ract \cm{k}{4}) \cdot c\right)\), so that \(\Op(\sigma) \in \pr{\fr{G}}\) with
	\[
		\forall h \in H, \, \forall \eta \in \Omega^2_B, \quad \Op(\sigma)_\ast(h\eta) = \cm{h}{1}\sigma(\cm{h}{2}) \cdot \eta.
	\]
	
	Next, observe that \(\Op(0) \in \pr{\fr{At}}\). Indeed, for all \(h,k \in H\) and \(b,\p{b},c \in B\),
	\begin{align*}
			&\Op(0)(hb) \wedge \dt{B}(\p{b}) \cdot kc - hb \cdot \dt{B}(\p{b}) \wedge \Op(0)(kc)\\
			&= h \cdot \dt{B}(b) \wedge \dt{B}(\p{b}) \cdot kc - hb \cdot \dt{B}(\p{b}) \wedge k \cdot \dt{B}(c)\\
			&= h \cm{k}{1} \cdot \dt{B}\mleft((b \ract \cm{k}{2}) \cdot \dt{B}(\p{b} \ract \cm{k}{3}) \cdot c  \mright),
	\end{align*}
	where \(hb \cdot \dt{B}(\p{b}) \cdot kc = h\cm{k}{1}(b \ract \cm{k}{2})\cdot \dt{B}(\p{b} \ract \cm{k}{3}) \cdot c\), so that \(\Op(0) \in \pr{\fr{At}}\) with
	\[
		\forall h \in H, \, \forall \beta \in \Omega^1_B, \quad \pr{\Op(0)}(h \cdot \beta) = h \cdot \dt{B}(\beta).
	\]	
	
	Next, let us show that \(\Op_0^{-1}(\pr{\fr{at}}) = \ZH^1_\ell(H;\Omega^1_B,\Omega_B)\); since \(\Op(0) \in \pr{\fr{at}}\), this will also imply that \(\Op^{-1}(\pr{\fr{At}}) = \ZH^1_\ell(H;\Omega^1_B,\Omega_B)\). Let \(\mu \in \ZH^1_\ell(H;\Omega^1_B)\) be given. Then, for all \(h \in H\) and \(\beta \in \Omega^1_B\), so that \(\beta \cdot h = \cm{h}{1} \cdot \beta \ract \cm{h}{2}\), 
	\begin{align*}
		\Op_0(\mu)(\cm{h}{1}) \wedge \beta \ract \cm{h}{2} + \beta \wedge \Op_0(\mu)(h) &= \cm{h}{1}\cdot \mu(\cm{h}{2}) \wedge \beta \ract \cm{h}{3} + \beta \wedge \cm{h}{1} \cdot \mu(\cm{h}{2})\\
		&= \cm{h}{1} \cdot \left(\mu(\cm{h}{2}) \wedge \beta \ract \cm{h}{3} + \beta \ract \cm{h}{2} \wedge \mu(\cm{h}{3})\right)\\
		&= \cm{h}{1} \cdot [\mu,\rho_{\Omega^1_B}(\beta)](\cm{h}{2}).
	\end{align*}
	Thus, \(\Op_0(\mu) \in \pr{\fr{at}}\) if and only if \(\mu \in \ZH^1_\ell(H;\Omega^1_B,\Omega_B)\), in which case
	\[
		\forall h \in H, \, \forall \beta \in \Omega^1_B, \quad \pr{\Op_0(\mu)}(h \cdot \beta) \coloneqq \cm{h}{1} \cdot \mu(\cm{h}{2}) \wedge \beta.
	\]
	
	Finally, let us show that \(\Op_0^{-1}(\Inn(\pr{\fr{at}})) = \BH^1_\ell(H;\Omega^1_B,\Omega_B)\). Let \(\alpha \in \Zent_B(\Omega^1_B)_\sa\). Then, for all \(h \in H\) and \(\beta \in \Omega^1_B\),
	\begin{align*}
		\pr{\Op_0(D\alpha)}(h \cdot \beta) - [\alpha,h \cdot \beta] &= \cm{h}{1} \cdot (\alpha \ract \cm{h}{2} - \epsilon(\cm{h}{2})\alpha) \wedge \beta - \alpha \wedge h \cdot \beta - h \cdot \beta \wedge \alpha\\
		&= - h \cdot [\alpha,\beta]_{\cC(H,\Omega_B)}, 
	\end{align*}
	so that \(\Op_0(D\alpha) \in \Inn(\pr{\fr{at}})\) if and only if \(\alpha \in (\Omega^1_B)_{\sa} \cap \Zent(\Omega_B) = \CH^0_\ell(H;\Omega^1_B,\Omega_B)\).
\end{proof}

We conclude by observing that field strength as a map on the prolongable Atiyah space \(\pr{\fr{At}}\) admits a straightforward reinterpretation as an \(\bR\)-affine quadratic map between spaces of lazy Hochschild \(1\)-cocycles that, in turn, descends to a map between lazy Hochschild cohomology groups. Given a choice of bicovariant \fodc{} \((\Omega^1_H,\dt{H})\) on \(H\), this reinterpretation will prove crucial to characterising \((\Omega^1_H,\dt{H})\)-adapted prolongable gauge potentials in terms of lazy Hochschild cohomology.

\begin{proposition}\label{curveprop}
	The map \(\cF : \ZH^1_\ell(H;\Omega^1_B,\Omega_B) \to \ZH^1_\ell(H;\Omega^2_B)\) defined by
	\begin{equation*}
		\forall h \in H, \, \forall \mu \in \ZH^1_\ell(H;\Omega^1_B,\Omega_B), \quad \cF[\mu](h) \coloneqq S(\cm{h}{1}) \cdot \bF[\Op(\mu)](h)
	\end{equation*}
	is an \(\bR\)-affine quadratic map, satisfying
	\begin{gather*}
		\forall \mu \in \ZH^1_\ell(H;\Omega^1_B,\Omega_B), \quad \cF[\mu] = -\iu{}\left(\dt{B}\mu(\cdot) + \tfrac{1}{2}[\mu,\mu]_{\cC(H,\Omega_B)}\right),\\
		\forall \mu,\nu \in \ZH^1_\ell(H;\Omega^1_B,\Omega_B), \quad \cF[\mu+\nu] - \cF[\mu] -\cF[\nu] = -\iu{}[\mu,\nu]_{\cC(H,\Omega_B)}, \\
		\forall \alpha \in (\Omega^1_B)_\sa \cap \Zent(\Omega_B), \quad \cF[D\alpha] = D(-\iu{}\dt{B}\alpha), \label{coboundcurve}\\
		\forall \sigma \in \ZS^1_\ell(H;B,\Omega^1_B), \, \forall \mu \in \ZH^1_\ell(H;\Omega^1_B,\Omega_B), \quad \cF[\sigma \act \mu + \MC[\dt{B}](\sigma)] = \sigma \act \cF[\mu].
	\end{gather*}
	Hence, \(\cF\) descends to an \(\bR\)-affine quadratic map \(\widetilde{\cF} : \HH^1_\ell(H;\Omega^1_B,\Omega_B) \to \HH^1_\ell(H;
	\Omega^2_B)\), satisfying 
	\begin{multline*}
		\forall \sigma \in \ZS^1_\ell(H;B,\Omega^1_B), \, \forall \mu \in \ZH^1_\ell(H;\Omega^1_B,\Omega_B), \\
		\widetilde{\cF}\mleft[[\sigma] \act [\mu] + \widetilde{\MC}[\dt{B},\Omega_B]([\sigma])\mright] = [\sigma] \act \widetilde{\cF}\mleft[[\mu]\mright].
	\end{multline*}
\end{proposition}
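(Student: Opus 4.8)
The strategy is to realise $\cF$ as the transport of the field-strength map $\bF$ of Proposition-Definition~\ref{canprol} through the explicit isomorphisms of Proposition~\ref{trivialgaugepot} and Corollary~\ref{trivialprcor}. To begin, I would note that a left $H$-covariant $\ast$-derivation $P \to \Omega^2_{P,\hor}$ vanishing on $B$ is determined by its restriction to $H = H \otimes 1_B$, since $P = B \rtimes H$ is generated by $B$ and $H$; combined with left $H$-covariance and the description $\Omega^2_{P,\hor} = H \otimes_\bC \Omega^2_B$, this shows that $\mu \mapsto \bigl(h \mapsto S(\cm{h}{1}) \cdot \bF[\Op(\mu)](\cm{h}{2})\bigr)$ is precisely the degree-$2$ analogue of the inverse of the map $\Op_0$ constructed in the proof of Proposition~\ref{trivialgaugepot}, applied after $\bF$; in particular $\cF[\mu]$ automatically lands in $\ZH^1_\ell(H;\Omega^2_B)$, and $\bF[\Op(\mu)](h) = \cm{h}{1} \cdot \cF[\mu](\cm{h}{2})$, so that $\cF = \Op_0^{-1} \circ \bF \circ \Op$ on $\ZH^1_\ell(H;\Omega^1_B,\Omega_B)$, with image $\pr{\fr{At}}$ by Corollary~\ref{trivialprcor}.

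The explicit formula~(1) is then a direct computation: $\Op(\mu)(h) = \cm{h}{1} \cdot \mu(\cm{h}{2})$ because $\rest{\Op(\mu)}{B} = \dt{B}$ kills $1_B$; since $\Op$ is affine with linear part $\Op_0$ we have $\pr{\Op(\mu)} = \pr{\Op(0)} + \pr{\Op_0(\mu)}$ by Proposition~\ref{fieldstrengthprop}, and Corollary~\ref{trivialprcor} gives $\pr{\Op(0)}(h \cdot \beta) = h \cdot \dt{B}(\beta)$ and $\pr{\Op_0(\mu)}(h \cdot \beta) = \cm{h}{1} \cdot \mu(\cm{h}{2}) \wedge \beta$; feeding this into $\bF[\Op(\mu)] = -\iu\,\pr{\Op(\mu)} \circ \Op(\mu)$ and using the braided commutation relation $\beta h = \cm{h}{1}(\beta \ract \cm{h}{2})$ of $\Omega_{P,\hor}$ yields $\bF[\Op(\mu)](h) = -\iu\,\cm{h}{1} \cdot \bigl(\dt{B}(\mu(\cm{h}{2})) + \mu(\cm{h}{2}) \wedge \mu(\cm{h}{3})\bigr)$, whence after applying $S(\cm{h}{1}) \cdot (-)$ one gets $\cF[\mu] = -\iu\bigl(\dt{B}\mu(\cdot) + \mu \star \mu\bigr) = -\iu\bigl(\dt{B}\mu(\cdot) + \tfrac{1}{2}[\mu,\mu]_{\cC(H,\Omega_B)}\bigr)$, since for a degree-$1$ element $\mu \star \mu = \tfrac{1}{2}[\mu,\mu]$. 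Identity~(2) is then the polarisation identity for the quadratic map $\mu \mapsto \mu \star \mu$, and exhibits $\cF$ as affine quadratic.

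For identity~(3) I would specialise~(1) to $\mu = D\alpha$: $H$-equivariance of $\dt{B}$ gives $\dt{B}(D\alpha(\cdot)) = D(\dt{B}\alpha)$; centrality of $\alpha$ in $\Omega_B$ forces $\alpha \wedge \alpha = 0$ and hence $D\alpha \star D\alpha = 0$, while a one-line computation with the graded Leibniz rule shows $\dt{B}\alpha$ is again central, so that $-\iu\dt{B}\alpha \in \Zent_B(\Omega^2_B)_{\sa} = \CH^0_\ell(H;\Omega^2_B)$ and $\cF[D\alpha] = D(-\iu\dt{B}\alpha)$. (Alternatively, this follows from $\Op_0(D\alpha) = \ad_\alpha$ of Proposition~\ref{trivialgaugepot}, the vanishing $\bF[\Op(0)] = 0$ — its field strength is a $\ast$-derivation of $P$ vanishing on $B$ and on $H$, hence on $P$ — and the relative field strength $\bF_{\rel}[\ad_\alpha] = \ad_{-\iu\dt{B}\alpha}$ of Corollary~\ref{relcor}.) For identity~(4) I would invoke $\pr{\fr{G}}$-equivariance of $\bF$ (Proposition~\ref{fieldstrengthprop}) together with Proposition~\ref{trivialgaugepotprop}, which gives $\Op(\sigma) \act \Op(\mu) = \Op(\sigma \act \mu + \MC[\dt{B}](\sigma))$; thus $\bF[\Op(\sigma \act \mu + \MC[\dt{B}](\sigma))] = \Op(\sigma)_\ast \circ \bF[\Op(\mu)] \circ \Op(\sigma)^{-1}$, and evaluating on $h \in H$ using $\Op(\sigma)^{-1} = \Op(\sigma^{-1})$ (Proposition~\ref{trivialgaugetrans}), the formula $\Op(\sigma)_\ast(h \cdot \eta) = \cm{h}{1}\sigma(\cm{h}{2}) \cdot \eta$ on $\Omega^2_{P,\hor}$ (Corollary~\ref{trivialprcor}), and vanishing of $\bF[\Op(\mu)]$ on $B$, the Sweedler legs collapse to give $\cF[\sigma \act \mu + \MC[\dt{B}](\sigma)](h) = \sigma(\cm{h}{1}) \cdot \cF[\mu](\cm{h}{2}) \cdot \sigma^{-1}(\cm{h}{3}) = (\sigma \act \cF[\mu])(h)$.

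Finally, the descent to $\widetilde\cF$ follows by quotienting: identity~(3) shows $\cF\bigl(\BH^1_\ell(H;\Omega^1_B,\Omega_B)\bigr) \subseteq \BH^1_\ell(H;\Omega^2_B)$, and together with~(2) — or directly with the base-point-independent description of $\bF_{\rel}$ above — one checks that $\cF[\mu + D\alpha] - \cF[\mu] \in \BH^1_\ell(H;\Omega^2_B)$ for all $\mu \in \ZH^1_\ell(H;\Omega^1_B,\Omega_B)$ and $\alpha \in \CH^0_\ell(H;\Omega^1_B,\Omega_B)$; then~(4) passes to cohomology exactly as $\MC[\dt{B}]$ descends to $\widetilde{\MC}[\dt{B},\Omega_B]$. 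The main obstacle I anticipate is purely computational rather than conceptual: pinning down the graded signs and braided commutation relations in $\Omega_{P,\hor}$ inside the derivation of $\bF[\Op(\mu)](h)$, and verifying that the residual cross terms appearing in $\cF[\mu + D\alpha] - \cF[\mu]$ — which the $\bF_{\rel}$ description forces to cancel — are indeed lazy Hochschild coboundaries, which is where both the centraliser condition cutting out $\ZH^1_\ell(H;\Omega^1_B,\Omega_B)$ and centrality of $\alpha$ in $\Omega_B$ are essential.
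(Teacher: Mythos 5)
Your proposal is correct and follows essentially the same route as the paper's own proof: transporting \(\bF\) through \(\Op\) (so that well-definedness and affine-quadraticity come from Proposition~\ref{trivialgaugepot} \emph{mutatis mutandis} and Proposition~\ref{fieldstrengthprop}), deriving the explicit formula \(\cF[\mu] = -\iu(\dt{B}\mu(\cdot)+\tfrac12[\mu,\mu])\) from \(\pr{\Op(\mu)} = \pr{\Op(0)}+\pr{\Op_0(\mu)}\), polarising, using centrality of \(\alpha\) for the coboundary identity, and obtaining equivariance from Proposition~\ref{trivialgaugepotprop} together with \(\pr{\fr{G}}\)-equivariance of \(\bF\). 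Your reading of the definition as \(\cF[\mu](h) = S(\cm{h}{1})\cdot\bF[\Op(\mu)](\cm{h}{2})\) is indeed the intended one, and your explicit verification of the descent (including the vanishing of the cross term \([\mu,D\alpha]\)) fills in a step the paper leaves implicit.
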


\begin{proof}
	First, note that \(\cF : \ZH^1_\ell(H;\Omega^1_B,\Omega_B) \to \ZH^1_\ell(H;\Omega^2_B)\) is well-defined by the proof of Proposition~\ref{trivialgaugepot}, \emph{mutatis mutandis}, hence \(\bR\)-affine quadratic by Proposition~\ref{fieldstrengthprop}.
	
	Next, let \(\mu \in \ZH^1_\ell(H;\Omega^1_B,\Omega_B)\). Then, for all \(h \in H\),
	\begin{align*}
		\iu{}\cF[\mu](h) &= S(\cm{h}{1}) \cdot \pr{\Op(\mu)} \circ \Op(\mu)(\cm{h}{2})\\
		&= S(\cm{h}{1}) \cdot \pr{\Op(\mu)}(\cm{h}{2}\cdot\mu(\cm{h}{3})\\
		&= S(\cm{h}{1}) \cdot \left(\cm{h}{2} \cdot \mu(\cm{h}{3}) \wedge \mu(\cm{h}{4})+\cm{h}{2} \cdot \dt{B}\mu(\cm{h}{3})\right)\\
		&= \dt{B}\mu(h) +\mu(\cm{h}{1})\wedge\mu(\cm{h}{2}),
	\end{align*}
	so that \(\cF[\mu] = -\iu{}\left(\dt{B}\mu(\cdot) + \tfrac{1}{2}[\mu,\mu]_{\cC(H,\Omega_B)}\right)\). Thus, for all \(\mu,\nu \in \ZH^1_\ell(H;\Omega^1_B,\Omega_B)\),
	\[
		\cF[\mu+\nu] = \dt{B}(\mu+\nu)(\cdot)+\tfrac{1}{2}[\mu+\nu,\mu+\nu]_{\cC(H,\Omega_B)} = \cF[\mu] + \cF[\nu] +[\mu,\nu]_{\cC(H,\Omega_B)}.
	\]
	
	Now, let \(\alpha \in (\Omega^1_B)_\sa \cap \Zent(\Omega_B)\); note that \(-\iu{}\dt{B}\alpha \in \Zent_B(\Omega^2_B)_\sa\), since for all \(b \in B\),
	\[
		b \cdot \dt{B}\alpha - \dt{B}\alpha \cdot b = \dt{B}(b \cdot \alpha)-\dt{B}b\wedge\alpha-\dt{B}(\alpha \cdot b)-\alpha \wedge \dt{B}b = 0.
	\]
	Then, for all \(h \in H\),
	\begin{align*}
		\iu{}\cF[D\alpha](h) &= \dt{B}(\alpha \ract h+\epsilon(h)\alpha) + (\alpha \ract \cm{h}{1}+\epsilon(\cm{h}{1})\alpha) \wedge (\alpha \ract \cm{h}{2}+\epsilon(\cm{h}{2})\alpha) \\
		&= \left(\dt{B}\alpha + \tfrac{1}{2}[\alpha,\alpha]_{\Omega_B}\right) \ract h + \epsilon(h)\left(\dt{B}\alpha + \tfrac{1}{2}[\alpha,\alpha]_{\Omega_B}\right)  + [\alpha,\alpha \ract h]_{\Omega_B}\\
		&= \dt{B}(\alpha) \ract h + \epsilon(h)\dt{B}(\alpha),
	\end{align*}
	so that \(\cF[D\alpha] = D(-\iu{}\dt{B}\alpha) \in \BH^1_\ell(H;\Omega^2_B)\).

	Finally, let \(\sigma \in \ZS^1_\ell(H;B,\Omega^1_B)\) and \(\mu \in \ZH^1_\ell(H;\Omega^1_B,\Omega_B)\). Then, for all \(h \in H\),
	\begin{align*}
		\cF[\sigma \act \mu + \MC[\dt{B}](\sigma)](h) &= S(\cm{h}{1}) \cdot \bF[\Op(\sigma) \act \Op(\mu)](\cm{h}{2})\\
		&= S(\cm{h}{1}) \cdot \Op(\sigma)_\ast \circ \bF[\Op(\mu)] \circ \Op(\sigma^\ast)(\cm{h}{2})\\
		&= S(\cm{h}{1}) \cdot \Op(\sigma)_\ast \circ \bF[\Op(\mu)]\mleft(\cm{h}{2}\sigma^\ast(\cm{h}{3})\mright)\\
		&= S(\cm{h}{1}) \cdot \Op(\sigma)_\ast\mleft(\cm{h}{2} \cdot \cF[\mu](\cm{h}{3}) \cdot \sigma^\ast(\cm{h}{4})\mright)\\
		&= S(\cm{h}{1}) \cm{h}{2}\sigma(\cm{h}{3}) \cdot \cF[\mu](\cm{h}{4}) \cdot \sigma^\ast(\cm{h}{5})\\
		&= \sigma(\cm{h}{1}) \cdot \cF[\mu](\cm{h}{2}) \cdot \sigma^\ast(\cm{h}{3})\\
		&= \sigma \act \cF[\mu](h),
	\end{align*}
	so that, indeed, \(\cF[\sigma \act \mu + \MC[\dt{B}](\sigma)](h) = \sigma \act \cF[\mu]\).
\end{proof}

\subsection{Reconstruction of total calculi}

As we have just seen, all purely horizontal aspects of noncommutative gauge theory on the trivial quantum principal \(H\)-bundle \(B \rtimes H\) can be computed solely in terms of lazy Sweedler cohomology and lazy Hochschild cohomology on \(H\) with coefficients arising from the basic calculus \((\Omega_B,\dt{B})\). We now turn to those aspects that depend on a choice of bicovariant \fodc{} \((\Omega^1_H,\dt{H})\) on \(H\), e.g., the \(\fr{G}\)-equivariant moduli space \(\fr{At}/\fr{at}[\Omega^1_H] \cong \Ob\mleft(\cG[\Omega^1_H]/\ker\mu[\Omega^1_H]\mright)\) of total \fodc{} on \(P\) and the \(\pr{\fr{G}}\)-invariant quadratic subset \(\pr{\fr{At}}[\Omega^1_H]\) of \(\Omega^1_H\)-adapted prolongable gauge potentials.

Let us fix a bicovariant \textsc{fodc} \((\Omega^1_H,\dt{H})\) on \(H\) with left crossed \(H\)-\(\ast\)-module \(\Lambda^1_H\) of right \(H\)-covariant \(1\)-forms and quantum Maurer--Cartan form \(\varpi : H \to \Lambda^1_H\). Let \((\Omega_H,\dt{H})\) denote its canonical prolongation to a bicovariant \textsc{sodc} on \(H\), let \((\Omega_{P,\ver},\dt{P,\ver})\) denote the resulting second-order vertical calculus of \(P\) , and let \(\Omega_{P,\oplus} \coloneqq \Lambda_H \hotimes^{\leq 2} \Omega_{P,\hor}\), which therefore contains \(\Omega_{P,\ver}\) as a left \(H\)-subcomodule graded \(\ast\)-algebra. 

Using the multiplication map \(
		(\omega \otimes h \mapsto \omega \cdot h) : \Lambda_H \otimes H \to \Omega_H
	\), we can identify \(\Omega_H\) with the graded \(\ast\)-subalgebra of \(\Omega_{P,\oplus}\) generated by \(H \subset P\) and \(\Lambda^1_H \hotimes 1_P \subset \Omega^1_{P,\ver}\). Hence, we can view \(\Omega_{P,\oplus}\) as the graded left \(H\)-comodule \(\ast\)-algebra, truncated at degree \(2\), generated by the graded left \(H\)-subcomodule \(\ast\)-subalgebras \(\Omega_H\) and \(\Omega_B\) subject to the relation \(1_{\Omega_H} = 1_{\Omega_B}\) and the braided graded commutation relations
\begin{equation}
	\forall \omega \in \Omega_H, \, \forall \alpha \in \Omega_B, \quad \alpha \wedge \omega \coloneqq (-1)^{\abs{\alpha}\abs{\omega}} \ca{\omega}{0} \wedge \alpha \ract \ca{\omega}{1};
\end{equation}
indeed, \(\Omega_{P,\oplus} = \Omega_H \cdot \Omega_B\) is freely generated as a graded right \(\Omega_B\)-module by \(\Omega_H\). In particular, the graded left \(H\)-subcomodule \(\ast\)-subalgebra \(\Omega_{P,\ver} = \Omega_H \cdot B\) is freely generated as a right \(B\)-module by \(\Omega_H\), and
\begin{equation}
	\forall \omega \in \Omega_H, \, \forall b \in B, \quad \dv{P}(\omega \cdot b) = \dt{H}(\omega) \cdot b,
\end{equation}
so that \((\Omega^1_H,\dt{H})\) is necessarily locally freeing for \(P\).

We can now characterise the space \(\fr{at}[\Omega^1_H]\) of \((\Omega^1_H,\dt{H})\)-adapted relative gauge potentials on \(P\) with respect to the canonical second-order horizontal calculus \((\Omega_B,\dt{B};\Omega_{P,\hor})\), and hence relate the \(\fr{G}\)-equivariant moduli space \(\fr{At}/\fr{at}[\Omega^1_H] \cong \Ob\mleft(\cG[\Omega^1_H]/\ker\mu[\Omega^1_H]\mright)\) of strongly \((H;\Omega^1_H,\dt{H})\)-principal \sodc{} on \(P\) inducing \((\Omega_B,\dt{B};\Omega_{P,\hor})\) to lazy Hochschild cohomology on \(H\).

\begin{proposition}\label{trivialadaptprop}
	The groupoid isomorphism \(\ZS^1_\ell(H;B,\Omega^1_B) \ltimes \ZH^1_\ell(H;\Omega^1_B) \iso \fr{G} \ltimes \fr{At}\) of Proposition~\ref{trivialgaugepotprop} descends to a groupoid isomorphism
	\[
		\ZS^1_\ell(H;B,\Omega^1_B) \ltimes  \left(\frac{\ZH^1_\ell(H;\Omega^1_B)}{\Op_0^{-1}\mleft(\fr{at}[\Omega^1_H]\mright)} \right) \iso \fr{G} \ltimes \left(\frac{\fr{At}}{\fr{at}[\Omega^1_H]}\right),
	\]
	where
	\[
		\Op_0^{-1}\mleft(\fr{at}[\Omega^1_H]\mright) = \set{\mu \in \ZH^1_\ell(H;\Omega^1_B) \given \ker(\mu\circ\inv{S}) \supseteq \ker\varpi}.
	\]
\end{proposition}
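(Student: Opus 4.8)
The plan is to combine Proposition~\ref{trivialgaugepotprop}, which already gives the groupoid isomorphism \(\Op \times \Op : \ZS^1_\ell(H;B,\Omega^1_B) \ltimes \ZH^1_\ell(H;\Omega^1_B) \iso \fr{G} \ltimes \fr{At}\), with Theorem~\ref{firstorderclassification}, which says that \(\fr{at}[\Omega^1_H]\) is a \(\fr{G}\)-invariant subspace of \(\fr{at}\) and that \(\fr{G} \ltimes (\fr{At}/\fr{at}[\Omega^1_H]) \iso \cG[\Omega^1_H]/\ker\mu[\Omega^1_H]\). Since \(\Op_0 : \ZH^1_\ell(H;\Omega^1_B) \iso \fr{at}\) is an isomorphism of \(\bR\)-vector spaces intertwining the linear \(\fr{G}\)-action with the conjugation action of \(\ZS^1_\ell(H;B,\Omega^1_B)\) (again by Proposition~\ref{trivialgaugepotprop}), the preimage \(\Op_0^{-1}(\fr{at}[\Omega^1_H])\) is automatically a \(\ZS^1_\ell(H;B,\Omega^1_B)\)-invariant subspace of \(\ZH^1_\ell(H;\Omega^1_B)\); quotienting the groupoid isomorphism of Proposition~\ref{trivialgaugepotprop} by this invariant subspace then yields the asserted isomorphism onto \(\fr{G} \ltimes (\fr{At}/\fr{at}[\Omega^1_H])\). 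So the only genuinely new content is the explicit description of \(\Op_0^{-1}(\fr{at}[\Omega^1_H])\), and that is where the work lies.

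To identify \(\Op_0^{-1}(\fr{at}[\Omega^1_H])\), I would fix \(\mu \in \ZH^1_\ell(H;\Omega^1_B)\) and set \(\bA \coloneqq \Op_0(\mu) \in \fr{at}\), so \(\bA(hb) = \cm{h}{1}\cdot\mu(\cm{h}{2})\cdot b\). By definition, \(\bA \in \fr{at}[\Omega^1_H]\) iff \(\bA\) factors as \(\bA = \omega[\bA] \circ \dv{P}\) for some left \(H\)-covariant \(P\)-\(\ast\)-bimodule morphism \(\omega[\bA] : \Omega^1_{P,\ver} \to \Omega^1_{P,\hor}\). The key structural facts established just before the statement are that \((\Omega^1_H,\dt{H})\) is locally freeing for \(P = B \rtimes H\), that \(\Omega^1_{P,\ver} = \Omega^1_H \cdot B\) is freely generated as a right \(B\)-module by \(\Omega^1_H\), and that \(\dv{P}(\omega\cdot b) = \dt{H}(\omega)\cdot b\); in particular \(\dv{P}(hb) = \varpi(\cm{h}{1})\cdot\cm{h}{2}\cdot b\), where \(\varpi = \varpi_H : H \to \Lambda^1_H\) is the quantum Maurer--Cartan form. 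A candidate \(\omega[\bA]\) is forced: on \(\Lambda^1_H \cong \Lambda^1_H \hotimes 1_P\) it must be the \(B\)-\(\ast\)-bimodule map determined by the condition \(\omega[\bA]\circ\dv{P} = \bA\), and since \(\dv{P}\) restricted to \(H\) hits \(\varpi(\cm{h}{1})\cdot\cm{h}{2}\), one extracts that \(\omega[\bA]\) should send \(\varpi(h)\) to (essentially) \(S(\cm{h}{1})\cdot\bA(\cm{h}{2})\), i.e.\ to \(\mu\) composed with the antipode in the convolution sense. The point is then that such an \(\omega[\bA]\) is \emph{well-defined} on \(\Omega^1_{P,\ver}\) — equivalently, on the free generating right \(B\)-module \(\Omega^1_H\), hence on \(\Lambda^1_H\) — precisely when the induced map factors through the surjection \(\varpi : H \surj \Lambda^1_H\), i.e.\ when \(\ker\varpi\) is annihilated. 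Tracking the antipode bookkeeping, the condition "\(\mu\circ\inv{S}\) annihilates \(\ker\varpi\)'' is \(\ker(\mu\circ\inv{S}) \supseteq \ker\varpi\), which is the stated characterisation. I would then check, using bicovariance of \((\Omega^1_H,\dt{H})\) and the crossed-module axioms for \(\Lambda^1_H\), that this candidate map is automatically left \(H\)-covariant, left and right \(B\)-linear, and \(\ast\)-preserving, so that the factorisation condition is both necessary and sufficient.

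I expect the main obstacle to be the well-definedness/compatibility bookkeeping in the second paragraph: one must verify that the map \(\varpi(h) \mapsto S(\cm{h}{1})\cdot\mu(\cm{h}{2})\) (or whatever the precise antipode-twisted form turns out to be) respects the \emph{left} \(B\)-module structure on \(\Omega^1_{P,\ver}\), which is the nontrivial one since \(B\) does not act trivially — here the braided commutation relation \(\beta \cdot h = \cm{h}{1}\cdot(\beta\ract\cm{h}{2})\) in \(\Omega_{P,\hor}\) and the adjoint \(H\)-action on \(\Lambda^1_H\) both intervene, and the lazy-Hochschild centrality condition \(\mu \in \Zent_B(\cC(H,\Omega^1_B))\) is exactly what makes it work. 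A secondary, purely clerical subtlety is getting the antipode and \(\inv{S}\) on the correct side: one should double-check against the formula \(\inv{\Op_0}(\bA)(h) = S(\cm{h}{1})\bA(\cm{h}{2})\) from the proof of Proposition~\ref{trivialgaugepot} and against \(\varpi(h) = \dt{H}(\cm{h}{1})S(\cm{h}{2})\), so that the condition comes out as \(\ker(\mu\circ\inv{S})\supseteq\ker\varpi\) rather than some transpose of it. Everything else — invariance of the subspace, descent of the groupoid isomorphism — is formal given Propositions~\ref{trivialgaugepotprop} and Theorem~\ref{firstorderclassification}.
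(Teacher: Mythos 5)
Your proposal is correct and follows essentially the same route as the paper's proof: the descent is formal from Proposition~\ref{trivialgaugepotprop} together with the \(\fr{G}\)-invariance of \(\fr{at}[\Omega^1_H]\) from Theorem~\ref{firstorderclassification}, and the substance is exactly the forced relative connection \(1\)-form, whose well-definedness on \(\Omega^1_{P,\ver}\) (free as a right \(P\)-module over \(\Lambda^1_H\)) amounts to annihilating \(\ker\varpi\), with left \(P\)-linearity and \(\ast\)-preservation then verified from the centrality condition on \(\mu\) and the braided commutation relation. One clerical correction of precisely the kind you flagged: the forced formula is \(\omega[\bA](\varpi(h)) = \bA(\cm{h}{1}) \cdot S(\cm{h}{2}) = \cm{h}{1} \cdot \mu(\cm{h}{2}) \cdot S(\cm{h}{3})\) rather than \(S(\cm{h}{1}) \cdot \bA(\cm{h}{2})\) (the latter is just \(\mu(h)\)), and the translation between this map killing \(\ker\varpi\) and the stated condition \(\ker(\mu \circ \inv{S}) \supseteq \ker\varpi\) uses \(\Delta(\ker\varpi) \subseteq H \otimes \ker\varpi\), which follows from the \(\Ad\)-covariance of \(\varpi\).
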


\begin{proof}
	Before continuing, recall that \(\varpi : H \to \Lambda^1_H\) satisfies
	\[
		\forall h \in H, \quad \ca{\varpi(h)}{-1} \otimes \ca{\varpi(h)}{0} = \cm{h}{1}S(\cm{h}{3}) \otimes \varpi(\cm{h}{2}).
	\]
	so that that
	\(
		\Delta(\ker\varpi) \subseteq \ker(\id_H \otimes \varpi) = H \otimes \ker\varpi
	\). Let \(\mu \in \ZH^1_\ell(H;\Omega^1_B)\) and set \(\bA \coloneqq \Op_0(\mu)\).

	First, suppose that \(\bA \in \fr{at}[\Omega^1_H]\). Then for all \(h \in H\),
	\begin{align*}
		\mu \circ \inv{S}(h) &= S(\cm{\inv{S}(h)}{1}) \cdot \bA(\cm{\inv{S}(h)}{2})\\
		&= \cm{h}{2} \cdot \bA(\inv{S}(\cm{h}{1}))\\
		&= \bA(\inv{S}(\cm{h}{1}S(\cm{h}{2})) - \bA(\cm{h}{2})\cdot\inv{S}(\cm{h}{1})\\
		&= -\omega[\bA]\mleft(\varpi(\cm{h}{2})\cdot\cm{h}{3}\mright) \cdot \inv{S}(\cm{h}{1})\\
		&= -\omega[\bA]\mleft(\ca{\varpi(h)}{0} \cdot \inv{S}(\ca{\varpi(h)}{-1}) \mright),
	\end{align*}
	so that, indeed, \(\ker (\mu \circ \inv{S}) \supseteq \ker\varpi\).
	
	Now, suppose that \(\ker (\mu \circ \inv{S}) \supseteq \ker\varpi\). Then, for all \(h\in H\), 
	\begin{multline*}
		\bA(\cm{h}{1}) \cdot S(\cm{h}{2}) = -\bA(\cm{h}{1}^\ast)^\ast \cdot S(\cm{h}{2})
		= -\left(\cm{h}{1}^\ast \cdot \mu (\cm{h}{2})^\ast\right)^\ast \cdot S(\cm{h}{3})
		= -\mu(\cm{h}{2}^\ast)^\ast \cdot \cm{h}{1}S(\cm{h}{3})\\
		= -\mu(S(\cm{h}{2}^\ast)^\ast) \cdot \cm{h}{1}S(\cm{h}{3})
		= -\mu(\inv{S}(\cm{h}{2})) \cdot \cm{h}{1}S(\cm{h}{3});
	\end{multline*}
	hence, if \(h \in \ker\varpi\), then \(\cm{h}{1}S(\cm{h}{3}) \otimes \cm{h}{2} \in H \otimes \ker\varpi\), so that \(\bA(\cm{h}{1}) \cdot S(\cm{h}{2}) = 0\). Since \(\varpi : H \to \Lambda^1_H\) is surjective, \(\bA = N \circ \dv{P}\) for the morphism of left \(H\)-comodule right \(P\)-modules \(N : \Omega^1_{P,\ver} \to \Omega^1_{P,\hor}\) defined by
	\[
		\forall h,k \in H, \, \forall b \in B, \quad N(\varpi(k) \cdot hb) \coloneqq \bA(\cm{k}{1}) \cdot S(\cm{k}{2})hb;
	\]
	it therefore suffices to show that \(N\) is left \(P\)-linear and \(\ast\)-preserving, for then \(\bA \in \fr{at}[\Omega^1_H]\) with relative connection \(1\)-form \(\omega[\bA] = N\). Let \(h,k \in H\) and \(b \in B\), so that
	\begin{align*}
		hb \cdot \varpi(k) &= h \cdot \varpi(k) \cdot \left(b \ract 1\right)\\
		&= \left(\varpi(\cm{h}{1}k) - \epsilon(k)\varpi(\cm{h}{1})\right) \cdot \cm{h}{2}b\\
		&= \varpi(\cm{h}{1}k) \cdot \cm{h}{2} b - \varpi(\cm{h}{1}) \cdot \cm{h}{2}\epsilon(k)b.
	\end{align*}
	On the one hand,
	\begin{align*}
	&N\mleft(\varpi(\cm{h}{1}k) \cdot \cm{h}{2} b - \varpi(\cm{h}{1}) \cdot \cm{h}{2}\epsilon(k)b\mright)\\
	&= \bA(\cm{h}{1}\cm{k}{1}) \cdot S(\cm{h}{2}\cm{k}{2}) \cdot \cm{h}{3}b -\bA(\cm{h}{1}) \cdot S(\cm{h}{2})\cm{h}{3}\epsilon(k)b \\
	&= \bA(\cm{h}{1}\cm{k}{1}) \cdot S(\cm{k}{2})S(\cm{h}{2})\cm{h}{3}b - \epsilon(k)\bA(h)b\\
	&= \bA(h)\cm{k}{1} \cdot S(\cm{k}{2})b +h\bA(\cm{k}{1}) \cdot S(\cm{k}{2})b - \epsilon(k)\bA(h)b\\
	&= h \bA(\cm{k}{1}) \cdot S(\cm{k}{2}) b,
	\end{align*}
	while on the other,
	\begin{align*}
		hb N(\varpi(k)) &= hb \cdot \bA(\cm{k}{1})S(\cm{k}{2})\\
		&= -hb\cm{k}{1})\bA(S(\cm{k}{2}))\\
		&= -h\cm{k}{1}S(\cm{k}{5})\left((b\ract\cm{k}{2})\ract S(\cm{k}{4})\right) \cdot \mu(S(\cm{k}{3}))\\
		&= -h\cm{k}{1}S(\cm{k}{5}) \cdot \mu(S(\cm{k}{4})) \cdot \left((b\ract\cm{k}{2})\ract S(\cm{k}{3})\right)\\
		&= -h\cm{k}{1}S(\cm{k}{3})\mu(S(\cm{k}{2}))b\\
		&= -h\cm{k}{1} \cdot \bA(S(\cm{k}{2})) \cdot b\\
		&= h \cdot \bA(\cm{k}{1}) \cdot S(\cm{k}{2})b,
	\end{align*}
	so that
	\(
		hb N(\varpi(k)) = N\mleft(\varpi(\cm{h}{1}k) \cdot \cm{h}{2} b - \varpi(\cm{h}{1}) \cdot \cm{h}{2}\epsilon(k)b\mright)
	\).
	Hence, \(N\) is left \(P\)-linear. Similar calculations now show that \(N\) is also \(\ast\)-preserving.
\end{proof}

The techniques of this proof can also be used to characterise the quadric set \(\pr{\fr{At}}[\Omega^1_H]\) of \((\Omega^1_H,\dt{H})\)-adapted prolongable gauge potentials on \(P\) with respect to  \((\Omega_B,\dt{B};\Omega_{P,\hor})\)---hence, also, the \(\pr{\fr{G}}\)-equivariant moduli space \[\pr{\fr{At}}[\Omega^1_H]/\pr{\fr{at}}_{\can}[\Omega^1_H] \cong \Ob\mleft(\cG[\Omega^{\leq 2}_H]/\ker\mu[\Omega^{\leq 2}_H]\mright)\] of  strongly \((H;\Omega_H,\dt{H})\)-principal \sodc{} on \(P\) with respect to \((\Omega_B,\dt{B};\Omega_{P,\hor})\)---in terms of lazy Hochschild cohomology. Recall that \(\pr{\fr{at}}_{\can}[\Omega^1_H]\) denotes the space of all \((\Omega_H,\dt{H})\)-adapted prolongable relative gauge potentials on \(P\) with respect to \((\Omega_B,\dt{B};\Omega_{P,\hor})\) and that \(\Inn(\pr{\fr{at}};\Omega^1_H)\) denotes the space of all \((\Omega^1_H,\dt{H})\)-semi-adapted inner prolongable relative gauge potentials on \(P\) with respect to \((\Omega_B,\dt{B};\Omega_{P,\hor})\), so that \[\Out(\pr{\fr{at}}[\Omega^1_H]) \coloneqq \pr{\fr{At}}[\Omega^1_H]/\Inn(\pr{\fr{at}};\Omega^1_H).\]

\begin{theorem}\label{trivialthm}
	The groupoid isomorphism \(\ZS^1_\ell(H;B,\Omega^1_B) \ltimes \ZH^1_\ell(H;\Omega^1_B,\Omega_B) \iso \pr{\fr{G}} \ltimes \pr{\fr{At}}\) of Corollary~\ref{trivialprcor} restricts to a groupoid isomorphism
	\[
		\ZS^1_\ell(H;B,\Omega^1_B) \ltimes \Op^{-1}\mleft(\pr{\fr{At}}[\Omega^1_H]\mright) \iso \pr{\fr{G}} \ltimes \pr{\fr{At}}[\Omega^1_H],
	\]
	where
	\begin{equation}\label{triveq1}
		\Op^{-1}\mleft(\pr{\fr{At}}[\Omega^1_H]\mright) = \set{\mu \in \ZH^1_\ell(H;\Omega^1_B,\Omega_B) \given \ker(\cF[\mu] \circ \inv{S}) \supseteq \ker\varpi}.
	\end{equation}
	In turn, this restricted groupoid isomorphism descends to respective groupoid isomorphisms
	\begin{gather*}
			\ZS^1_\ell(H;B,\Omega^1_B) \ltimes \left(\frac{\Op^{-1}\mleft(\pr{\fr{At}}[\Omega_H]\mright)}{\Op_0^{-1}\mleft(\pr{\fr{at}}_{\can}[\Omega^1_H]\mright)}\right) \iso \pr{\fr{G}} \ltimes \left(\frac{\pr{\fr{At}}[\Omega^1_H]}{\pr{\fr{at}}_{\can}[\Omega^1_H]}\right),\label{triveq2}\\
			\HS^1_\ell(H;B,\Omega^1_B) \ltimes \left(\frac{\Op^{-1}\mleft(\pr{\fr{At}}[\Omega^1_H]\mright)}{\Op_0^{-1}\mleft(\Inn(\pr{\fr{at}};\Omega^1_H)\mright)}\right) \iso \Out(\pr{\fr{G}}) \ltimes \Out(\pr{\fr{At}}[\Omega^1_H]),\label{triveq3}
	\end{gather*}
	where
	\begin{gather}
		\Op_0^{-1}\mleft(\pr{\fr{at}}_{\can}[\Omega^1_H]\mright) = \set{\mu \in \ZH^1_\ell(H;\Omega^1_B,\Omega_B) \given \ker(\mu \circ \inv{S}) \supseteq \ker\varpi},\\
		\Op_0^{-1}\mleft(\Inn(\pr{\fr{at}};\Omega^1_H)\mright) = \set{D\beta \given \beta \in Z_B(\Omega^1_B)_\sa \cap Z(\Omega_B), \,\ker(D(-\iu{}\dt{B}\beta) \circ \inv{S}) \supseteq \ker\varpi}.
	\end{gather}
	In particular, \(\pr{\fr{at}}_{\can}[\Omega^1_H] = \pr{\fr{at}} \cap \fr{at}[\Omega^1_H]\).
\end{theorem}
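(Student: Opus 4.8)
The plan is to bootstrap from Corollary~\ref{trivialprcor}, which already identifies $\pr{\fr{G}} \ltimes \pr{\fr{At}}$ with $\ZS^1_\ell(H;B,\Omega^1_B) \ltimes \ZH^1_\ell(H;\Omega^1_B,\Omega_B)$ via $\Op \times \rest{\Op}{\ZH^1_\ell(H;\Omega^1_B,\Omega_B)}$, and then to pin down how the distinguished subsets $\pr{\fr{At}}[\Omega^1_H]$, $\pr{\fr{at}}_{\can}[\Omega^1_H]$, and $\Inn(\pr{\fr{at}};\Omega^1_H)$ transport under $\Op$ and $\Op_0$. Since $\pr{\fr{At}}[\Omega^1_H]$ is a $\pr{\fr{G}}$-invariant subset of $\pr{\fr{At}}$, its preimage $\Op^{-1}(\pr{\fr{At}}[\Omega^1_H])$ is $\ZS^1_\ell(H;B,\Omega^1_B)$-invariant, so restricting the groupoid isomorphism of Corollary~\ref{trivialprcor} to the corresponding full subgroupoids immediately yields the first claimed isomorphism. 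Granting the computations of $\Op_0^{-1}(\pr{\fr{at}}_{\can}[\Omega^1_H])$ and $\Op_0^{-1}(\Inn(\pr{\fr{at}};\Omega^1_H))$ below, the descended isomorphisms~\eqref{triveq2} and~\eqref{triveq3} then follow by quotienting this restricted isomorphism by these $\ZS^1_\ell(H;B,\Omega^1_B)$-invariant subspaces---invariant because $\pr{\fr{at}}_{\can}[\Omega^1_H]$ is $\pr{\fr{G}}$-invariant by Theorem~\ref{qpbthm} and $\Inn(\pr{\fr{at}};\Omega^1_H)$ by the preceding proposition, and $\pr{\fr{At}}[\Omega^1_H]$ is translation-invariant under both---and passing to lazy cohomology in the second case exactly as in Corollary~\ref{trivialprcor}, i.e.\ further quotienting by $\BS^1_\ell(H;B,\Omega^1_B) \leftrightarrow \Inn(\fr{G})$ and $\BH^1_\ell(H;\Omega^1_B,\Omega_B) \leftrightarrow \Inn(\pr{\fr{at}})$. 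Thus everything reduces to the three explicit parametrisation statements.

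For~\eqref{triveq1} I would observe that $\bF[\Op(\mu)] : P \to \Omega^2_{P,\hor}$ is a left $H$-covariant $\ast$-derivation vanishing on $B$---formally the same kind of object as a relative gauge potential, only valued in $\Omega^2_{P,\hor}$ rather than $\Omega^1_{P,\hor}$---and that, by Proposition~\ref{curveprop}, $\cF[\mu](h) = S(\cm{h}{1}) \cdot \bF[\Op(\mu)](h)$ is precisely its image under the inverse of the $\Op_0$-type parametrisation of $\Omega^2_{P,\hor}$-valued $H$-covariant right-$B$-module morphisms $P \to \Omega^2_{P,\hor}$ vanishing on $B$ (cf.\ the proof of Proposition~\ref{trivialgaugepot}). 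The gauge potential $\Op(\mu)$ is $(\Omega^1_H,\dt{H})$-adapted exactly when $\bF[\Op(\mu)]$ factors as $F \circ \dv{P}$ for a left $H$-covariant morphism $F : \Omega^1_{P,\ver} \to \Omega^2_{P,\hor}$ of $P$-$\ast$-bimodules; running the argument of Proposition~\ref{trivialadaptprop} \emph{verbatim}, with $\Omega^1_B$ replaced by $\Omega^2_B$ and $\mu$ by $\cF[\mu]$, shows this holds iff $\ker(\cF[\mu] \circ \inv{S}) \supseteq \ker\varpi$. The same argument applied to $\bF_{\rel}[\ad_\beta]$---which by Corollary~\ref{relcor} equals $\ad_{-\iu{}\dt{B}\beta}$, whose $\Op_0$-preimage is $\cF[D\beta] = D(-\iu{}\dt{B}\beta)$ by Proposition~\ref{curveprop}---together with Proposition~\ref{innpot} (inner prolongable relative gauge potentials are the $\ad_\beta$ for $\beta \in (\Omega^1_B)_\sa \cap \Zent(\Omega_B)$) and Proposition~\ref{semiadapt} (semi-adaptedness is factorisation of $\bF_{\rel}$ through $\dv{P}$) gives the claimed formula for $\Op_0^{-1}(\Inn(\pr{\fr{at}};\Omega^1_H))$.

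The main obstacle is the identification $\pr{\fr{at}}_{\can}[\Omega^1_H] = \pr{\fr{at}} \cap \fr{at}[\Omega^1_H]$, which, via Corollary~\ref{trivialprcor} and Proposition~\ref{trivialadaptprop}, yields $\Op_0^{-1}(\pr{\fr{at}}_{\can}[\Omega^1_H]) = \set{\mu \in \ZH^1_\ell(H;\Omega^1_B,\Omega_B) \given \ker(\mu \circ \inv{S}) \supseteq \ker\varpi}$. The inclusion $\pr{\fr{at}}_{\can}[\Omega^1_H] \subseteq \pr{\fr{at}} \cap \fr{at}[\Omega^1_H]$ holds for any quantum principal bundle by Lemma~\ref{qpblem1}. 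For the reverse inclusion I would take $\bA = \Op_0(\mu)$ with $\mu \in \ZH^1_\ell(H;\Omega^1_B,\Omega_B)$ and $\ker(\mu \circ \inv{S}) \supseteq \ker\varpi$, so that $\bA$ is a prolongable $(\Omega^1_H,\dt{H})$-adapted relative gauge potential whose relative connection $1$-form $N = \omega[\bA]$ is given by the explicit formula from the proof of Proposition~\ref{trivialadaptprop}, namely $N(\varpi(k) \cdot hb) = \cm{k}{1} \cdot \mu(\cm{k}{2}) \cdot S(\cm{k}{3}) \cdot hb$, and then verify the braided identities~\eqref{toteq1} and~\eqref{toteq2} for $N$ by direct computation. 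The tools are the braided commutation relations of $\Omega_{P,\hor} = \Omega_B \rtimes H$, the Hochschild $1$-cocycle identity together with the fact that $\mu$ centralises $\rho_{\Omega_B}(\Omega_B)$ inside $\cC(H;\Omega_B)$ for the wedge product, the Ad-covariance and cocycle property of $\varpi$, and---for~\eqref{toteq2} specifically---the braided commutation relation of Proposition~\ref{worprop} characterising the canonical Woronowicz prolongation $\Omega_H$; this is essentially the bookkeeping of Lemmata~\ref{qpblem1}--\ref{qpblem3} carried out on the cleft model $P = B \rtimes H$, where every relevant morphism is explicit. The resulting equality $\pr{\fr{at}}_{\can}[\Omega^1_H] = \pr{\fr{at}} \cap \fr{at}[\Omega^1_H]$ is the final ``in particular'' claim, and feeds back into the assembly of~\eqref{triveq2} and~\eqref{triveq3} described above.
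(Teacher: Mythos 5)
Your proposal is correct and follows essentially the same route as the paper: equation \eqref{triveq1} and the formula for \(\Op_0^{-1}(\Inn(\pr{\fr{at}};\Omega^1_H))\) are obtained, exactly as in the paper, by running the argument of Proposition~\ref{trivialadaptprop} \emph{mutatis mutandis} on \(\bF[\Op(\mu)]\) and \(\bF_{\rel}[\Op_0(\beta)]\) via Proposition~\ref{curveprop} and Corollary~\ref{relcor}, while the key identification \(\pr{\fr{at}}_{\can}[\Omega^1_H] = \pr{\fr{at}} \cap \fr{at}[\Omega^1_H]\) is proved, as in the paper, by taking the explicit relative connection \(1\)-form \(N = \omega[\bA]\) from Proposition~\ref{trivialadaptprop} and verifying \eqref{toteq1} and \eqref{toteq2} directly with the crossed-module, cocycle, and centraliser (Lemma~\ref{centrallem}) identities. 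The only cosmetic slip is the passing reference to quotienting by \(\BH^1_\ell(H;\Omega^1_B,\Omega_B) \leftrightarrow \Inn(\pr{\fr{at}})\) when assembling \eqref{triveq3}, where the object-space quotient is by \(\Op_0^{-1}(\Inn(\pr{\fr{at}};\Omega^1_H))\) only, as you in fact state earlier; this does not affect the argument.
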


\begin{proof}
	Let us first prove \eqref{triveq1}. Let \(\mu \in \ZH^1_\ell(H;\Omega^1_B,\Omega_B)\), so that
	\[
		\forall h \in H, \, \forall b \in B, \quad \bF[\Op(\mu)](hb) = \cm{h}{1}\cdot\cF[\mu](\cm{h}{2})\cdot b.
	\]
	By applying the proof of Proposition~\ref{trivialadaptprop}, \emph{mutatis mutandis}, to the left \(H\)-covariant \(\ast\)-deriva\-tion \(\bF[\Op(\mu)] : P \to \Omega^2_{P,\hor}\), it follows that \(\Op(\mu) \in \tot{\fr{At}}\) if and only if
	\[
		\ker(\cF[\mu] \circ \inv{S}) \supseteq \ker\varpi.
	\]
	This argument now also proves \eqref{triveq3}, since for all \(\beta \in Z_B(\Omega^1_B)_\sa \cap Z(\Omega_B)\) and \(h \in H\)
	\[
		D(-\iu{}\dt{B}\beta)(h) = \cF[D\alpha](h) = S(\cm{h}{1}) \cdot \bF[\Op(\alpha)](\cm{h}{2}) = S(\cm{h}{1}) \cdot \bF_\rel[\Op_0(\alpha)](\cm{h}{2})
	\]
	by \eqref{coboundcurve} and the fact that \(\bF[\Op(0)] = 0\). Hence, it remains to prove \eqref{triveq3}.

	On the one hand, \(\pr{\fr{At}}[\Omega_H] \subseteq \pr{\fr{at}} \cap \fr{at}[\Omega^1_H]\) by Theorem~\ref{qpbthm}; on the other hand, by Corollary~\ref{trivialprcor} and Proposition~\ref{trivialadaptprop},
	\[
		\Op_0^{-1}(\pr{\fr{at}} \cap \fr{at}[\Omega^1_H]) = \set{ \mu \in \ZH^1_\ell(H;\Omega^1_B,\Omega_B) \given \ker (\mu \circ \inv{S}) \supseteq \ker\varpi} \eqqcolon \ZH^1_\ell(H;\Omega^1_B,\Omega_B)[\Omega^1_H] .
	\]
	Hence, it suffices to show that \(\pr{\fr{At}}[\Omega_H] \supseteq \pr{\fr{at}} \cap \fr{at}[\Omega^1_H]\).
	
	Let \(\bA \in \pr{\fr{at}} \cap \fr{at}[\Omega^1_H]\) be given; set \(\mu \coloneqq \Op_0^{-1}(\bA)\) and \(N \coloneqq \omega[\bA]\), so that for all \(h \in H\),
	\[
		N(\varpi(h)) = \bA(\cm{h}{1}) \cdot S(\cm{h}{2}) = -\cm{h}{1} \cdot \bA(S(\cm{h}{2})) = -\cm{h}{1}S(\cm{h}{3}) \cdot \mu(S(\cm{h}{2}))
	\]
	by the proof of Proposition~\ref{trivialadaptprop}. We will show that \(\bA \in \pr{\fr{At}}[\Omega_H]\) by showing that \(N\) satisfies both \eqref{toteq1} and \eqref{toteq2}.
	
	First, let \(h \in H\) and \(\beta \in \Omega^1_B\). Since \(\mu \in \ZH^1_\ell(H;\Omega^1_B,\Omega_B)\), it follows that
	\begin{align*}
		\beta \wedge N(\varpi(h)) &= -\beta \wedge \cm{h}{1}S(\cm{h}{3}) \cdot \mu(S(\cm{h}{2}))\\
		&= -\cm{h}{1}S(\cm{h}{5}))\left((\beta \ract \cm{h}{2}) \ract S(\cm{h}{4})\right) \wedge \mu(S(\cm{h}{3}))\\
		&=  \cm{h}{1}S(\cm{h}{5})) \cdot \mu(S(\cm{h}{4})) \wedge \left((\beta \ract \cm{h}{2}) \ract S(\cm{h}{3})\right) \\
		&=  \cm{h}{1}S(\cm{h}{3}) \cdot \mu(S(\cm{h}{3})) \cdot \beta\\
		&= -N(\varpi(h)) \wedge \beta
	\end{align*}
	by Lemma~\ref{centrallem}. Hence, \(N\) satisfies \eqref{toteq1}.
	
	Let us now check that \(N\) satisfies \eqref{toteq2}. Let \(h, k \in H\); note that
	\begin{align*}
		\ca{\varpi(k)}{-1} \act\varpi(h) \otimes \ca{\varpi(k)}{0} &= \cm{k}{1}S(\cm{h}{k}) \act\varpi(h) \otimes \varpi(\cm{k}{2})\\
		&= \varpi(\cm{h}{1}S(\cm{h}{3})k) \otimes \varpi(\cm{h}{2}) - \epsilon(k)\varpi(\cm{h}{1}S(\cm{h}{3})) \otimes \varpi(\cm{h}{2}),
	\end{align*}
	so that it suffices to show that
	\[
		N(\varpi(h)) \wedge N(\varpi(k)) + N(\varpi(\cm{h}{1}S(\cm{h}{3})k)) \wedge N(\varpi(\cm{h}{2})) - \epsilon(k)N(\varpi(\cm{h}{1}S(\cm{h}{3})))N(\varpi(\cm{h}{2})) = 0.
	\]
	First, by repeated applications of Lemma~\ref{centrallem} together with the fact that \(\varpi\) is a \(1\)-cocycle valued in the left crossed \(H\)-\(\ast\)-module \(\Lambda^1_H\), we find that
	\begin{align*}
		N(\varpi(h)) \wedge N(\varpi(k)) &= \bA(\cm{h}{1}) \cdot S(\cm{h}{2}) \wedge \bA(\cm{k}{1}) \cdot S(\cm{k}{2})\\
		&= \cm{h}{1} \cdot \bA(S(\cm{h}{2})) \wedge \cm{k}{1} \cdot \bA(S(\cm{k}{2}))\\
		&= \cm{h}{1}\cdot \bA(S(\cm{h}{2})\cm{k}{1}) \wedge \bA(\cm{k}{2}) - \cm{h}{1}S(\cm{h}{2}) \cdot \bA(\cm{k}{1}) \wedge \bA(S(\cm{k}{2}))\\
		&= \cm{h}{1}S(\cm{h}{3})\cm{k}{1} \cdot \mu(S(\cm{h}{2})\cm{k}{2}) \wedge S(\cm{k}{4}) \cdot \mu(S(\cm{k}{3}))\\
		&\quad\quad\quad-\epsilon(h)\cm{k}{1}\cdot\mu(\cm{k}{2})\wedge S(\cm{k}{4})\cdot\mu(S(\cm{k}{3}))\\
		&= \cm{h}{1}S(\cm{h}{3})\cm{k}{1}S(\cm{k}{5}) \cdot \mu(S(\cm{h}{2})\cm{k}{2}) \ract S(\cm{k}{4}) \wedge \mu(S(\cm{k}{3}))\\
		&\quad\quad\quad-\epsilon(h)\cm{k}{1}S(\cm{k}{5})\cdot\mu(\cm{k}{2})\ract S(\cm{k}{4})\wedge\mu(S(\cm{k}{3}))\\
		&= - \cm{h}{1}S(\cm{h}{3})\cm{k}{1}S(\cm{k}{5}) \cdot \mu(S(\cm{k}{4})) \wedge \mu(S(\cm{h}{2})\cm{k}{2}) \ract S(\cm{k}{3})\\
		&\quad\quad\quad+\epsilon(h)\cm{k}{1}S(\cm{k}{5})\cdot\mu(S(\cm{k}{4})) \wedge \mu(\cm{k}{2})\ract S(\cm{k}{3})\\
		&=- \cm{h}{1}S(\cm{h}{3})\cm{k}{1}S(\cm{k}{3}) \cdot \mu(S(\cm{k}{2})) \wedge \mu(S(\cm{h}{2}))\\
		&\quad\quad\quad+\epsilon(h)\cm{k}{1}S(\cm{k}{4}) \cdot \mu(S(\cm{k}{3}))\wedge\mu(S(\cm{k}{2}))\\
		&\quad\quad\quad-\epsilon(h)\cm{k}{1}S(\cm{k}{4}) \cdot \mu(S(\cm{k}{3}))\wedge\mu(S(\cm{k}{2}))\\
		&= -\cm{h}{1}S(\cm{h}{3})\cm{k}{1}S(\cm{k}{3}) \cdot \mu(S(\cm{k}{2}) \wedge \mu(S(\cm{h}{2})).
	\end{align*}
	Next, by applying the last calculation, \emph{mutatis mutandis}, we find that
	\begin{align*}
		&N(\varpi(\cm{h}{1}S(\cm{h}{3})k)) \wedge N(\varpi(\cm{h}{2}))\\
		&=- \cm{h}{1}S(\cm{h}{9})\cm{k}{1}S(\cm{k}{3})S^2(\cm{h}{7})S(\cm{h}{3})\cm{h}{4}S(\cm{h}{6}) \cdot\mu(S(\cm{h}{5})\wedge \mu(S(\cm{k}{2})S^2(\cm{h}{8})S(\cm{h}{2}))\\
		&= -\cm{h}{1}S(\cm{h}{5})\cm{k}{1}S(\cm{k}{3})\cdot\mu(S(\cm{h}{3}))\wedge\mu(S(\cm{k}{2})S^2(\cm{h}{4})S(\cm{h}{2}))\\
		&= -\cm{h}{1}S(\cm{h}{5})\cm{k}{1}S(\cm{k}{3})\cdot\mu(S(\cm{h}{3}))\wedge\left(\mu(S(\cm{k}{2})S^2(\cm{h}{4})) \ract S(\cm{h}{2}) + \epsilon(\cm{k}{2})\epsilon(\cm{h}{4})\mu(S(\cm{h}{2}) \right) \\
		&= \cm{h}{1}S(\cm{h}{5})\cm{k}{1}S(\cm{k}{3}) \cdot \mu(S(\cm{k}{2})S^2(\cm{h}{4})) \ract S(\cm{h}{3}) \wedge \mu(S(\cm{h}{2})) \\
		&\quad\quad\quad-\epsilon(k)\cm{h}{1}S(\cm{h}{4}) \cdot \mu(S(\cm{h}{3})) \wedge \mu(S(\cm{h}{2}))\\
		&= \cm{h}{1}S(\cm{h}{3})\cm{k}{1}S(\cm{k}{3}) \cdot \mu(S(\cm{k}{2}) \wedge \mu(S(\cm{h}{2})) - \epsilon(k)\cm{h}{1}S(\cm{h}{4}) \cdot \mu(S(\cm{h}{3})) \wedge \mu(S(\cm{h}{2}))\\
		&\quad\quad\quad-\epsilon(k)\cm{h}{1}S(\cm{h}{4}) \cdot \mu(S(\cm{h}{3})) \wedge \mu(S(\cm{h}{2}))\\
		&= -N(\varpi(h)) \wedge N(\varpi(k)) - 2\epsilon(k)\cm{h}{1}S(\cm{h}{4}) \cdot \mu(S(\cm{h}{3})) \wedge \mu(S(\cm{h}{2})).
	\end{align*}
	Finally, by applying the last calculation, \emph{mutatis mutandis}, we find that
	\[
		N(\varpi(\cm{h}{1}S(\cm{h}{3}))) \wedge N(\varpi(\cm{h}{2})) = -2\cm{h}{1}S(\cm{h}{4}) \cdot \mu(S(\cm{h}{3})) \wedge \mu(S(\cm{h}{2})),
	\]
	which we can substitute into the calculation of \(N(\varpi(\cm{h}{1}S(\cm{h}{3})k)) \wedge N(\varpi(\cm{h}{2}))\) to obtain
	\[
		N(\varpi(\cm{h}{1}S(\cm{h}{3})k)) \wedge N(\varpi(\cm{h}{2})) = -N(\varpi(h)) \wedge N(\varpi(k)) + 2\epsilon(k)N(\varpi(\cm{h}{1}S(\cm{h}{3}))) \wedge N(\varpi(\cm{h}{2})),
	\]
	which, in turn, yields our claim.
\end{proof}

\begin{remark}
Note, in particular, that \(\pr{\fr{At}}[\Omega^1_H]\) necessarily contains \(\Op(0)\), which corresponds to the trivial flat connection.	
\end{remark}

The proof of the above result almost immediately yields the following expression for the curvature \(2\)-form of an \((\Omega^1_H,\dt{H})\)-adapted prolongable gauge potential.

\begin{corollary}\label{trivcurvcor}
	Let \(\mu \in \Op^{-1}\mleft(\pr{\fr{At}}[\Omega^1_H]\mright)\). Then
	\[
		\forall h \in H, \quad F[\Op(\mu)](\varpi(h)) = \cm{h}{1}S(\cm{h}{4}) \cdot \iu{}\left(\dt{B}\mu(S(\cm{h}{3}))\epsilon(\cm{h}{2}) + \mu(S(\cm{h}{3})) \wedge \mu(S(\cm{h}{2}))\right)
	\]
\end{corollary}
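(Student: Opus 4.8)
The plan is to combine the curvature identity already extracted in the course of proving Theorem~\ref{trivialthm} with the explicit form of the relevant relative connection $1$-form, so that essentially no new computation is required. Since $\mu \in \Op^{-1}(\pr{\fr{At}}[\Omega^1_H])$, the relative gauge potential $\bA \coloneqq \Op_0(\mu) = \Op(\mu) - \Op(0)$ lies in $\pr{\fr{at}}_{\can}[\Omega^1_H]$, so by the proof of Theorem~\ref{trivialthm} (via Proposition~\ref{trivialadaptprop}) its relative connection $1$-form $N \coloneqq \omega[\bA] : \Omega^1_{P,\ver} \to \Omega^1_{P,\hor}$ is given by
\[
	\forall h \in H, \quad N(\varpi(h)) = -\cm{h}{1}S(\cm{h}{3}) \cdot \mu(S(\cm{h}{2})).
\]
Recall also that $\Op(0) \in \pr{\fr{At}}[\Omega^1_H]$ is a trivial flat connection, so $F[\Op(0)] = 0$ (as one checks directly from $\pr{\Op(0)}(h \cdot \beta) = h \cdot \dt{B}(\beta)$, established in Corollary~\ref{trivialprcor}, since then $\pr{\Op(0)} \circ \Op(0)(hb) = h\cdot\dt{B}(\dt{B}(b)) = 0$).

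First I would apply Lemma~\ref{qpblem3} to the decomposition $\Op(\mu) = \Op(0) + \bA$, obtaining
\[
	F[\Op(\mu)] = F[\Op(0)] - \iu{}\left(\Op(0)N + [N,N]\right) = -\iu{}\left(\Op(0)N + [N,N]\right),
\]
where $\Op(0)N$ and $[N,N]$ are the left $H$-covariant morphisms $\Omega^1_{P,\ver} \to \Omega^2_{P,\hor}$ produced by Lemmata~\ref{qpblem1} and~\ref{qpblem2}; it then remains only to evaluate both on $\varpi(h)$. For the first term, $(\Op(0)N)(\varpi(h)) = \pr{\Op(0)}(N(\varpi(h))) = -\cm{h}{1}S(\cm{h}{3}) \cdot \dt{B}\mu(S(\cm{h}{2}))$ by the explicit forms of $N$ and of $\pr{\Op(0)}$. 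For the second, $[N,N](\varpi(h)) = N(\varpi(\cm{h}{1})) \wedge N(\varpi(\cm{h}{2}))$; substituting the explicit form of $N$ and invoking the identity
\[
	N(\varpi(h)) \wedge N(\varpi(k)) = -\cm{h}{1}S(\cm{h}{3})\cm{k}{1}S(\cm{k}{3}) \cdot \mu(S(\cm{k}{2})) \wedge \mu(S(\cm{h}{2}))
\]
proved in the verification of~\eqref{toteq2} inside the proof of Theorem~\ref{trivialthm}, then collapsing the internal factor $S(\cm{h}{3})\cm{h}{4} = \epsilon(\cm{h}{3})1_H$ via the antipode axiom, yields $[N,N](\varpi(h)) = -\cm{h}{1}S(\cm{h}{4}) \cdot \mu(S(\cm{h}{3})) \wedge \mu(S(\cm{h}{2}))$.

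Adding the two contributions — padding the first with a factor $\epsilon(\cm{h}{2})$ so that both carry four Sweedler legs — and multiplying through by $-\iu{}$ then gives exactly
\[
	F[\Op(\mu)](\varpi(h)) = \cm{h}{1}S(\cm{h}{4}) \cdot \iu{}\left(\dt{B}\mu(S(\cm{h}{3}))\epsilon(\cm{h}{2}) + \mu(S(\cm{h}{3})) \wedge \mu(S(\cm{h}{2}))\right).
\]
The only real work lies in the Sweedler-notation bookkeeping in the $[N,N]$ term — specifically the antipode collapse and the re-indexing needed to line up the three-leg and four-leg expressions — but this is entirely mechanical once the computations of the proof of Theorem~\ref{trivialthm} are in hand, which is why the corollary follows from it almost immediately.
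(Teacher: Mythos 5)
Your final formula is right and the Sweedler bookkeeping in the two terms is carried out correctly, but the very first step contains a genuine gap: you assert that \(\bA \coloneqq \Op_0(\mu) = \Op(\mu) - \Op(0)\) lies in \(\pr{\fr{at}}_{\can}[\Omega^1_H]\), so that its relative connection \(1\)-form \(N = \omega[\bA]\) exists and Lemma~\ref{qpblem3} applies. This does not follow from the hypothesis. By Theorem~\ref{trivialthm}, \(\Op_0(\mu) \in \pr{\fr{at}}_{\can}[\Omega^1_H]\) is equivalent to the \emph{linear} condition \(\ker(\mu \circ \inv{S}) \supseteq \ker\varpi\), whereas \(\mu \in \Op^{-1}\mleft(\pr{\fr{At}}[\Omega^1_H]\mright)\) only gives the \emph{quadratic} condition \(\ker(\cF[\mu] \circ \inv{S}) \supseteq \ker\varpi\) on the field strength. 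The latter does not imply the former: \(\pr{\fr{At}}[\Omega^1_H]\) is only a quadric subset, and the difference of two \((\Omega^1_H,\dt{H})\)-adapted gauge potentials need not be an adapted relative gauge potential (this is exactly why the moduli space \(\pr{\fr{At}}[\Omega^1_H]/\pr{\fr{at}}_{\can}[\Omega^1_H]\) can be nontrivial; compare also Proposition~\ref{semiadapt}, where merely \emph{semi-adapted} translations preserve \(\pr{\fr{At}}[\Omega^1_H]\), and the Section~\ref{sec5} example, where \(\pr{\fr{At}}[\Omega^1_{\e^2}] = \fr{At}\) while \(\pr{\fr{at}}_{\can}[\Omega^1_{\e^2}] = 0\)). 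A concrete crossed-product instance: any \(\mu \in \ZH^1_\ell(H;\Omega^1_B,\Omega_B)\) with \(\cF[\mu] = 0\) lies in \(\Op^{-1}(\pr{\fr{At}}[\Omega^1_H])\) for every choice of \((\Omega^1_H,\dt{H})\), but \(\ker(\mu\circ\inv{S}) \supseteq \ker\varpi\) will generally fail. So your argument only proves the formula on the proper subset \(\Op_0^{-1}(\pr{\fr{at}}_{\can}[\Omega^1_H])\).

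The intended (and gap-free) route is to bypass \(\omega[\Op_0(\mu)]\) entirely and apply the factorisation formula from the proof of Proposition~\ref{trivialadaptprop} directly to the left \(H\)-covariant \(\ast\)-derivation \(\bF[\Op(\mu)] : P \to \Omega^2_{P,\hor}\), which by Proposition~\ref{curveprop} satisfies \(\bF[\Op(\mu)](hb) = \cm{h}{1}\cdot\cF[\mu](\cm{h}{2})\cdot b\); the hypothesis \(\ker(\cF[\mu]\circ\inv{S}) \supseteq \ker\varpi\) is precisely what makes that proof applicable. It yields
\[
	F[\Op(\mu)](\varpi(h)) = \bF[\Op(\mu)](\cm{h}{1}) \cdot S(\cm{h}{2}) = -\cm{h}{1}S(\cm{h}{3}) \cdot \cF[\mu](S(\cm{h}{2})),
\]
and expanding \(\cF[\mu] = -\iu\mleft(\dt{B}\mu(\cdot) + \tfrac{1}{2}[\mu,\mu]_{\cC(H,\Omega_B)}\mright)\) together with \(\Delta \circ S = (S \otimes S)\circ\Delta^{\mathrm{op}}\) gives exactly the four-leg expression in the statement. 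Your computations of the two contributions would then reappear verbatim, but with \(\cF[\mu]\) playing the role your \(N\) cannot in general play.
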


\section{\texorpdfstring{\(q\)}{q}-Monopoles over real multiplication noncommutative \texorpdfstring{\(2\)}{2}-tori}\label{sec5}

Let \(\theta \in \bR \setminus \mathbf{Q}\) be a quadratic irrationality, and let \(\cA_\theta\) be the corresponding smooth noncommutative \(2\)-torus. In this section, we show how to subsume Connes's constant curvature connections~\cite{Connes80} and Polishchuk--Schwarz's holomorphic structures~\cite{PolishchukSchwarz} on the self-Morita equivalence bimodules amongst the basic Heisenberg modules over \(\cA_\theta\) into gauge theory on a certain canonical non-trivial principal \(\cO(\Unit(1))\)-module algebra \(P\) over \(\cA_\theta\) implicit in Manin's `Alterstraum'~\cite{Manin}. In the process, we will encounter striking formal similarities with the \(q\)-deformed complex Hopf fibration; in this case, however, there is a canonical value for the parameter \(q\) arising from the algebraic number theory of the real quadratic irrationality \(\theta\).

\subsection{Number-theoretic preliminaries} We begin by recalling relevant folklore about real quadratic number fields; our primary reference is the monograph of Halter--Koch~\cite{HK}. Let \(\theta \in \bR \setminus \mathbf{Q}\) be a quadratic irrationality. Recall that \(\SL(2,\bZ)\) acts on \(\bR \setminus \mathbf{Q}\) by fractional linear transformations, i.e., by
\[
	\forall g \in \SL(2,\bZ), \, \forall \xi \in \bR \setminus \mathbf{Q}, \quad g \act \xi \coloneqq \frac{g_{11}\xi+g_{12}}{g_{21}\xi+g_{22}},
\]
and observe that this action descends to an action of \(\PSL(2,\bZ)\) on \(\bR \setminus \mathbf{Q}\); denote the stabilizers of \(\theta\) in \(\SL(2,\bZ)\) and \(\PSL(2,\bZ)\) by \(\SL(2,\bZ)_\theta\) and \(\PSL(2,\bZ)_\theta\). Our goal is to compute \(\SL(2,\bZ)_\theta\) (and hence \(\PSL(2,\bZ)_\theta\)) in terms of the solutions of the positive Pell's equation associated with the real quadratic irrationality \(\theta\).

First, the \emph{type} of \(\theta\) is the unique coprime triple \((a,b,c) \in (\bZ \setminus \set{0}) \times \bZ^2\), such that
\[
	\theta = \frac{b+\sqrt{b^2-4ac}}{2a}
\]
and \(b^2-4ac\) is not a square, so that \(\Delta \coloneqq b^2-4ac\) is the \emph{discriminant} of \(\theta\). On the one hand, then, the \emph{norm} on the real quadratic number field \(\mathbf{Q}[\theta] = \mathbf{Q}[\sqrt{\Delta}]\) is the multiplicative unit-preserving map \(\mathcal{N} : \mathbf{Q}[\theta] \to \mathbf{Q}\) defined by
\[
	\forall r,s \in \mathbf{Q}, \quad \mathcal{N}(r+s\sqrt{\Delta}) \coloneqq r^2-\Delta s^2;
\]
on the other hand, the \emph{quadratic order} of discriminant \(\Delta\) in \(\mathbf{Q}[\theta] = \mathbf{Q}[\!\sqrt{\Delta}]\) is the subring
\[
	\mathcal{O}_\Delta = \set{\tfrac{u+v\sqrt{\Delta}}{2} \given u,v \in \bZ, \, u \equiv v \Delta \bmod 2}.
\]
Thus, we can define the multiplicative group of \emph{norm-positive} units in \(\cO_\Delta\) by
\[
	\mathcal{O}_\Delta^{\times,+} \coloneqq \set{\epsilon \in \mathcal{O}_\Delta^\times \given \mathcal{N}(\epsilon) > 0} = \set{\tfrac{u+v\sqrt{\Delta}}{2} \given u,v \in \bZ, \, u^2 - \Delta v^2 = 4},
\]
which can therefore be viewed as the group of positive solutions of the Pell's equation 
\[x^2 - \Delta y^2 = 4.\]

By applying Dirichlet's unit theorem to the quadratic order \(\cO_\Delta\) of the real quadratic field \(\mathbf{Q}[\theta]\), one finds that \(\cO_\Delta^\times \cap \bR_{>0}\) is infinite cyclic and generated by the \emph{fundamental unit}
\[
	\epsilon_\Delta \coloneqq \min\set{\epsilon \in \cO_\Delta^\times \cap \bR_{>0} \given \epsilon > 1} = \min\set{\epsilon \in \cO_\Delta^\times \given \epsilon > 1},
\]
so that, in turn, the subgroup \(\cO_\Delta^{\times,+} \cap \bR_{>0}\) is also infinite cyclic and generated by the \emph{norm-positive fundamental unit} (or \emph{Pell's unit})
\[
	\epsilon_\Delta^+ \coloneqq \begin{cases} \epsilon_\Delta &\text{if \(\mathcal{N}(\epsilon_\Delta)=1\),}\\ \epsilon_\Delta^2 &\text{if \(\mathcal{N}(\epsilon_\Delta) = -1\).} \end{cases}
\]
The following folkloric result now gives the desired number-theoretic characterization of the stabilizer groups \(\SL(2,\bZ)_\theta\) and \(\PSL(2,\bZ)_\theta\).

\begin{proposition}[Folklore~\cite{HK}*{Thm.\ 5.2.10}]\label{stabthm}
	The map \(\Phi : \cO_\Delta^{\times,+} \to \SL(2,\bZ)_\theta\) given by
	\begin{equation}
		\forall \tfrac{u+v\sqrt{\Delta}}{2} \in \cO_\Delta^{\times,+}, \quad \Phi\left(\tfrac{u+v\sqrt{\Delta}}{2}\right) \coloneqq \begin{pmatrix} \tfrac{u+bv}{2} & -cv \\ av & \tfrac{u-bv}{2} \end{pmatrix}
	\end{equation}
	defines a group isomorphism satisfying \(\Phi(-1) = -I\) and
	\begin{equation}\label{stabthmeq}
		\forall g \in \SL(2,\bZ)_\theta, \quad \inv{\Phi}(g) \coloneqq g_{21}\theta+g_{22}.
	\end{equation}
	Thus, in particular, \(\SL(2,\bZ)_\theta\) decomposes as the internal direct product
	\[
		\SL(2,\bZ)_\theta = \Phi\mleft(\cO^{\times,+}_\Delta \cap \bR_{>0}\mright) \times \set{\pm I},
	\]
	where \(\Phi\mleft(\cO^{\times,+}_\Delta \cap \bR_{>0}\mright)\) is infinite cyclic with canonical generator \(\Phi(\epsilon_\Delta^+)\), so that \(\PSL(2,\bZ)_\theta\) is infinite cyclic with canonical generator \([\Phi(\epsilon_\Delta^+)]\).
\end{proposition}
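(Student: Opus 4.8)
The plan is to prove that $\Phi$ and the assignment $\psi \colon g \mapsto g_{21}\theta + g_{22}$ are mutually inverse group isomorphisms between $\mathcal{O}_\Delta^{\times,+}$ and $\SL(2,\bZ)_\theta$, and then to read off the structural consequences from Dirichlet's unit theorem as recalled above. First I would check that $\Phi$ is well-defined as a map into $\SL(2,\bZ)$: since $\Delta = b^2-4ac \equiv b \pmod 2$, the congruence $u \equiv v\Delta \pmod 2$ defining $\mathcal{O}_\Delta$ forces $u\pm bv$ to be even, so the entries of $\Phi\mleft(\tfrac{u+v\sqrt{\Delta}}{2}\mright)$ are integers, and a direct expansion gives $\det\Phi\mleft(\tfrac{u+v\sqrt{\Delta}}{2}\mright) = \tfrac{u^2-\Delta v^2}{4} = \mathcal{N}\mleft(\tfrac{u+v\sqrt{\Delta}}{2}\mright) = 1$ for a norm-positive unit. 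Next, the identity $\Phi(\epsilon)\act\theta = \theta$ reduces, after clearing denominators, to $v\,(a\theta^2 - b\theta + c) = 0$, which holds because $\theta = \tfrac{b+\sqrt{\Delta}}{2a}$ is a root of $aX^2 - bX + c$; the denominator $a v\theta + \tfrac{u-bv}{2}$ is nonzero, being either a rational number (when $v = 0$) or irrational (when $v\neq 0$, as $\theta \notin \mathbf{Q}$). Hence $\Phi$ maps $\mathcal{O}_\Delta^{\times,+}$ into $\SL(2,\bZ)_\theta$, with $\Phi(-1) = -I$ immediate.

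To construct the inverse I would analyse $\psi$. The condition $g\act\theta = \theta$ says exactly that $\binom{\theta}{1}$ is an eigenvector of $g$ with eigenvalue $\psi(g)$; applying the nontrivial field automorphism of $\mathbf{Q}[\theta]$ shows $\binom{\bar\theta}{1}$ is an eigenvector with eigenvalue $\overline{\psi(g)}$, and since $\theta \neq \bar\theta$ these are $\bR$-linearly independent, whence $\mathcal{N}(\psi(g)) = \det g = 1$ and $\psi(g) + \overline{\psi(g)} = \Tr g \in \bZ$, so $\psi(g)$ is an algebraic integer of norm $1$ in $\mathbf{Q}[\sqrt{\Delta}]$. \emph{The step I expect to be the main obstacle} is showing that $\psi(g)$ lies in the \emph{correct} quadratic order $\mathcal{O}_\Delta$, equivalently that $a \mid g_{21}$: comparing the coefficients of $1$ and $\theta$ in the eigenvector equation $g\binom{\theta}{1} = \psi(g)\binom{\theta}{1}$ (using $\theta^2 = \tfrac{b\theta - c}{a}$) yields $a \mid g_{21}b$ and $a \mid g_{21}c$, and a short $\gcd$ argument exploiting $\gcd(a,b,c)=1$ then forces $a \mid g_{21}$. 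Setting $v \coloneqq g_{21}/a$ and $u \coloneqq bv + 2g_{22}$ exhibits $\psi(g) = \tfrac{u+v\sqrt{\Delta}}{2} \in \mathcal{O}_\Delta^{\times,+}$, and the same coefficient comparison recovers $g_{11} = \tfrac{u+bv}{2}$, $g_{12} = -cv$, i.e.\ $\Phi(\psi(g)) = g$; since $\psi\bigl(\Phi\mleft(\tfrac{u+v\sqrt{\Delta}}{2}\mright)\bigr) = \tfrac{u+v\sqrt{\Delta}}{2}$ by direct substitution, $\Phi$ is a bijection with inverse $\psi$, which is \eqref{stabthmeq}.

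That $\psi$, and hence $\Phi = \psi^{-1}$, is a group homomorphism I would deduce from the eigenvalue description: for $g,h \in \SL(2,\bZ)_\theta$, $\bR$-linearity of the matrix $g$ gives $(gh)\binom{\theta}{1} = g\bigl(\psi(h)\binom{\theta}{1}\bigr) = \psi(h)\bigl(g\binom{\theta}{1}\bigr) = \psi(g)\psi(h)\binom{\theta}{1}$, so $\psi(gh) = \psi(g)\psi(h)$. Finally, since $-1$ is a norm-positive unit and every element of $\mathcal{O}_\Delta^{\times,+}$ equals $\pm 1$ times a positive unit, Dirichlet's unit theorem as recalled above yields $\mathcal{O}_\Delta^{\times,+} = \{\pm 1\} \times (\mathcal{O}_\Delta^{\times,+} \cap \bR_{>0})$ with the second factor infinite cyclic generated by $\epsilon_\Delta^+$; transporting this decomposition through the isomorphism $\Phi$ and using $\Phi(-1) = -I$ gives the internal direct product $\SL(2,\bZ)_\theta = \Phi\mleft(\mathcal{O}_\Delta^{\times,+} \cap \bR_{>0}\mright) \times \{\pm I\}$ with $\Phi(\epsilon_\Delta^+)$ a canonical generator of the infinite cyclic factor, and passing to the quotient by $\{\pm I\}$ identifies $\PSL(2,\bZ)_\theta$ as infinite cyclic with canonical generator $[\Phi(\epsilon_\Delta^+)]$.
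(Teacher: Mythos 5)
Your proof is correct. The paper does not prove this proposition at all---it is cited as folklore from Halter--Koch---so there is no in-paper argument to compare against; what you have written is a complete and correct rendition of the standard argument behind that citation. In particular, you correctly isolate and resolve the one genuinely delicate point, namely that the eigenvalue \(g_{21}\theta+g_{22}\) lies in the order \(\cO_\Delta\) and not merely in the ring of algebraic integers of norm one: the coefficient comparison in \(g\binom{\theta}{1}=\psi(g)\binom{\theta}{1}\) (using \(a\theta^2=b\theta-c\)) gives \(a\mid g_{21}b\) and \(a\mid g_{21}c\), and together with \(a \mid g_{21}a\) and \(\gcd(a,b,c)=1\) this forces \(a\mid g_{21}\), after which \(u=bv+2g_{22}\), \(v=g_{21}/a\) and \(\det g=1\) give \(u^2-\Delta v^2=4\) exactly as required; the remaining verifications (integrality and determinant of \(\Phi\), nonvanishing of the denominator, multiplicativity of \(g\mapsto g_{21}\theta+g_{22}\) via the eigenvector description, \(\Phi(-1)=-I\), and the transport of the Dirichlet decomposition to \(\SL(2,\bZ)_\theta\) and \(\PSL(2,\bZ)_\theta\)) are all sound.
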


\subsection{Basic Heisenberg modules over irrational noncommutative \texorpdfstring{\(2\)}{2}-tori} In this section, we recall the construction of basic Heisenberg modules over an irrational noncommutative \(2\)-torus \(C^\infty(\bT^2_\theta)\); when \(\theta \in \bR \setminus \mathbf{Q}\) is a quadratic irrationality, we assemble the self-Morita equivalence bimodules of positive rank among the basic Heisenberg modules into a canonical non-trivial principal \(\cO(\Unit(1))\)-comodule algebra \(P\) over \(C^\infty(\bT^2_\theta)\).

Let us first recall the relevant algebraic notions of (Morita) equivalence bimodule and self-Morita equivalence bimodule.

\begin{definition}
	Let \(A\) and \(B\) be unital \(\bC\)-algebras. An \emph{\((A,B)\)-equivalence bimodule} is an \((A,B)\)-bimodule \({}_A E_B\) satisfying the following:
	\begin{enumerate}
		\item the right \(B\)-module \(E_B\) is finitely generated, projective, and full in the sense that
		\[
			B = \Span_{\bC}\set{f(e) \given e \in E, f \in \Hom_B(E_B,B_B)};
		\]
		\item the left \(A\)-module structure on \({}_A E_B\) is an algebra isomorphism \(A \iso \End_B(E_B)\).
	\end{enumerate}
	In particular, a \emph{self-Morita equivalence bimodule} over \(B\) is an \((B,B)\)-equiva\-lence bimodule.
\end{definition}

\begin{example}
	Let \(B\) be a unital \(\bC\)-algebra. The \emph{trivial} self-Morita equivalence bimodule over \(B\) is \(B\) itself equipped with the left and right \(B\)-module structures induced by multiplication in the algebra \(B\).
\end{example}

Next, recall that the \emph{smooth noncommutative \(2\)-torus} with deformation parameter \(\theta \in \bR\) is the unital \(\ast\)-algebra \(\cA_\theta\) of rapidly decaying Laurent series in two unitary generators \(U_\theta\) and \(V_\theta\) satisfying the commutation relation
\begin{equation}
	V_{\theta} U_\theta = \eu^{2\pi\iu{}\theta} U_\theta V_\theta;
\end{equation}
note that \(\cA_\theta\) can be canonically topologised as a Fr\'{e}chet pre-\(C^\ast\)-algebra, whose \(C^\ast\)-algebraic completion can be identified with the rotation algebra \(A_\theta \coloneqq C(\Unit(1)) \rtimes_\theta \bZ\). As Rieffel famously observed~\cite{Rieffel83}*{Thm.\ 1.1}, the action of \(\SL(2,\bZ)\) on \(\bR \setminus \mathbf{Q}\) by fractional linear transformations manifests itself as distinguished family of equivalence bimodules, the \emph{basic Heisenberg modules}, for smooth noncommutative \(2\)-tori with irrational deformation parameter. These bimodules had already been constructed \emph{qua} right modules by Connes~\cite{Connes80} as the very first examples of noncommutative smooth vector bundles in noncommutative differential geometry.

\begin{theoremdefinition}[Connes~\cite{Connes80}, Rieffel~\citelist{\cite{Rieffel81}\cite{Rieffel83}*{Thm.\ 1.1}\cite{Rieffel88}*{\S\S 2--3, 5}}]
	Let \(\theta \in \bR \setminus \mathbf{Q}\). For every \(g \in \SL(2,\bZ)\), we can construct a \(\left(\cA_{g \act \theta},\cA_\theta\right)\)-equivalence bimodule \(\cE(g,\theta)\) as follows:
	\begin{enumerate}
		\item if \(g_{21} = 0\), so that \(\cA_{g \act \theta} = \cA_{\theta}\) and \(g_{22} \neq 0\), let \(\cE(g,\theta) \coloneqq \cA_{\theta}\) be the trivial self-Morita equivalence bimodule over \(\cA_\theta\);
		\item if \(g_{22} \neq 0\), let \(\cE(g,\theta) \coloneqq \cS(\bR) \otimes \bC[\bZ_{g_{21}}]\) with the unique right \(\cA_{\theta}\)-module structure satisfying
		\begin{align*}
			\forall f \in \cE(g,\theta), \, \forall (x,k) \in \bR \times \bZ_{g_{21}}, \quad (f \cdot U_\theta)(x,k) &= \exp\mleft(2\pi\iu{}\left(x-\tfrac{k g_{22}}{g_{21}}\right)\mright) f(x,k),\\
			\forall f \in \cE(g,\theta), \, \forall (x,k) \in \bR \times \bZ_{g_{21}}, \quad (f \cdot V_\theta)(x,k) &= f\mleft(x-\tfrac{g_{21}\theta+g_{22}}{g_{21}},k-1\mright),
		\end{align*}
		and the unique left \(\cA_{g\act\theta}\)-module structure satisfying
		\begin{align*}
			\forall f \in \cE(g,\theta), \, \forall (x,k) \in \bR \times \bZ_{g_{21}}, \quad (U_{g\act\theta} \cdot f)(x,k) &= \exp\mleft(2\pi\iu{}\left(\tfrac{x}{g_{21}\theta+g_{22}} - \tfrac{k}{g_{21}}\right)\mright) f(x,k),\\
			\forall f \in \cE(g,\theta), \, \forall (x,k) \in \bR \times \bZ_{g_{21}}, \quad (V_{g\act\theta} \cdot f)(x,k) &= f\mleft(x-\tfrac{1}{g_{21}},k-g_{11}\mright).
		\end{align*}
	\end{enumerate}
	We call \(\cE(g,\theta)\) the \emph{basic Heisenberg module} over \(\cA_\theta\) with \emph{rank} \(g_{21}\theta+g_{22}\) and \emph{degree} \(g_{21}\).
\end{theoremdefinition}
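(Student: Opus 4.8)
The final statement to prove is the Theorem-Definition asserting that the formulas displayed in (1) and (2) actually define a $\left(\cA_{g\act\theta},\cA_\theta\right)$-equivalence bimodule $\cE(g,\theta)$ for every $g\in\SL(2,\bZ)$, with the stated rank $g_{21}\theta+g_{22}$ and degree $g_{21}$.

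\medskip

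The plan is to follow Rieffel's original strategy for Heisenberg bimodules, adapted to the smooth (Schwartz) setting. First I would treat the two cases separately. In the case $g_{21}=0$ (so $g_{22}=\pm1$ since $\det g = 1$ and $g_{11}g_{22}=1$ over $\bZ$, hence actually $g_{22}\in\set{\pm1}$ and $g\act\theta=\theta$), the claim is trivial: $\cE(g,\theta)=\cA_\theta$ is the trivial self-Morita equivalence bimodule, which is visibly finitely generated, projective, and full over itself, and the left action is the identity isomorphism $\cA_\theta\iso\End_{\cA_\theta}((\cA_\theta)_{\cA_\theta})$. The rank is $g_{22}=\pm1=g_{21}\theta+g_{22}$ and the degree is $g_{21}=0$, matching the statement. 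The real content is the case $g_{22}\neq 0$ with $g_{21}\neq 0$ (the overlap $g_{21}=0$, $g_{22}\neq0$ being absorbed into the trivial case). Here I would proceed as follows: (i) verify that the displayed formulas for the right $\cA_\theta$-action and the left $\cA_{g\act\theta}$-action are each well-defined actions of the respective $\ast$-algebras on $\cS(\bR)\otimes\bC[\bZ_{g_{21}}]$ — this reduces to checking that the generators satisfy the defining commutation relation of the noncommutative torus and that Schwartz decay is preserved, both of which are routine but must be stated; (ii) verify the two actions commute, making $\cE(g,\theta)$ an $(\cA_{g\act\theta},\cA_\theta)$-bimodule.

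\medskip

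The heart of the argument is showing $\cE(g,\theta)_{\cA_\theta}$ is finitely generated projective and full, and that the left action induces an isomorphism $\cA_{g\act\theta}\iso\End_{\cA_\theta}(\cE(g,\theta)_{\cA_\theta})$. The standard route, which I would invoke, is to construct the $\cA_\theta$-valued inner product $\langle\cdot,\cdot\rangle_{\cA_\theta}$ and the $\cA_{g\act\theta}$-valued inner product ${}_{\cA_{g\act\theta}}\langle\cdot,\cdot\rangle$ by the usual Heisenberg-module formulas (sums over $\bZ$ against the lattice determined by $g$, absolutely convergent by Schwartz decay), check the compatibility identity ${}_{\cA_{g\act\theta}}\langle e_1,e_2\rangle\cdot e_3 = e_1\cdot\langle e_2,e_3\rangle_{\cA_\theta}$, and then appeal to Rieffel's imprimitivity-bimodule machinery: fullness of both inner products (which for these specific lattice data follows from the irrationality of $\theta$ together with a direct computation exhibiting a finite sum of inner products equal to $1$, i.e. the rank-one projection argument / partition-of-unity on $\bR\times\bZ_{g_{21}}$), yields that $\cE(g,\theta)$ is a Morita equivalence bimodule, whence $\cE(g,\theta)_{\cA_\theta}$ is finitely generated projective and the left action is the asserted isomorphism onto the endomorphism algebra. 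I would cite \cite{Rieffel81}, \cite{Rieffel83}*{Thm.\ 1.1}, and \cite{Rieffel88}*{\S\S 2--3, 5}, and \cite{Connes80}, for the technical backbone, since the theorem is attributed to them; the task here is really to record that the conventions in the displayed formulas match theirs.

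\medskip

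Finally I would address the rank and degree claims. The degree $g_{21}$ is immediate from the construction: it is the cardinality parameter $\bZ_{g_{21}}$ of the finite factor, i.e. the "winding" of the underlying lattice. The rank $g_{21}\theta+g_{22}$ is the normalized trace (equivalently, the value of the canonical tracial weight) of the identity of $\End_{\cA_\theta}(\cE(g,\theta)_{\cA_\theta})\cong\cA_{g\act\theta}$, or dually $\tau_{\cA_\theta}$ of the support projection; this is computed by evaluating $\langle e,e\rangle_{\cA_\theta}$ under the trace for a suitably normalized $e$, giving the measure of a fundamental domain, which is exactly $|g_{21}\theta+g_{22}|$ up to the sign convention fixed by the formulas — and I would note that the sign is chosen so that $\operatorname{rank}$ is the \emph{signed} quantity $g_{21}\theta+g_{22}$, consistent with $g\act\theta$ being well-defined (the denominator $g_{21}\theta+g_{22}$ appearing in the left-action formula for $V_{g\act\theta}$ and in $U_{g\act\theta}$ is nonzero precisely because $\theta$ is irrational). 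The main obstacle I anticipate is purely bookkeeping: matching the lattice/cocycle conventions of the displayed formulas to those in Rieffel's papers so that the imprimitivity theorem applies verbatim, and tracking signs carefully enough that "rank $=g_{21}\theta+g_{22}$" comes out with the correct sign rather than its absolute value. No genuinely new analysis is required beyond what is in the cited literature.
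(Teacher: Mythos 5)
Your proposal takes essentially the same route as the paper: the paper gives no proof of this Theorem-Definition at all, attributing it wholesale to Connes and Rieffel, and your sketch is precisely the standard construction from those references (well-definedness of the two commuting actions on \(\cS(\bR)\otimes\bC[\bZ_{g_{21}}]\), the pair of compatible \(\cA_{g\act\theta}\)- and \(\cA_\theta\)-valued inner products, fullness, and Rieffel's imprimitivity machinery), with only convention-matching left over. Two harmless slips worth noting: when \(g_{21}=0\) one has \(g_{11}=g_{22}=\pm 1\) and \(g\act\theta=\theta\pm g_{12}\), which is generally an integer translate of \(\theta\) rather than \(\theta\) itself (the identification \(\cA_{g\act\theta}=\cA_\theta\) still holds since the algebra depends only on \(\eu^{2\pi\iu\theta}\)), and the hypothesis labelling case (2) should be read as \(g_{21}\neq 0\) (the displayed formulas divide only by \(g_{21}\)), so your restriction to \(g_{21},g_{22}\) both nonzero omits the subcase \(g_{22}=0\), where the identical argument applies verbatim.
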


Now, given \(\theta \in \bR \setminus \mathbf{Q}\) and \(g \in \SL(2,\bZ)\), we see that \(g \act \theta = \theta\) if and only if \(\theta\) is a quadratic irrationality and \(g \in \SL(2,\bZ)_\theta\). Thus, in light of Proposition~\ref{stabthm}, if \(\theta\) is a real quadratic irrationality, then the family of basic Heisenberg bimodules over \(\cA_\theta\) contains a canonical family of self-Morita equivalence bimodules, which we can now assemble into a principal \(\cO(\Unit(1))\)-module \(\ast\)-algebra \(P\) with \(\coinv{\cO(\Unit(1))}{P} = \cA_\theta\).

\begin{theorem}[Schwarz~\cite{Schwarz98}*{\S 3}, Dieng--Schwarz~\cite{DiengSchwarz}, Polishchuk--Schwarz~\cite{PolishchukSchwarz}*{\S 1.3}, Po\-li\-shchuk \cite{Polishchuk}*{\S 2.2}, Vlasenko~\cite{Vlasenko}*{Thm.\ 6.1}]\label{quadprincipalthm} Let \(\theta \in \bR \setminus \mathbf{Q}\) be a quadratic irrationality with norm-positive fundamental unit \(\e\), and let \(\Phi : \langle \epsilon \rangle \times \set{\pm 1} \to \SL(2,\bZ)_\theta\) be the isomorphism of Proposition~\ref{stabthm}; hence, given \(m \in \bZ\), let
\[
	a_m \coloneqq \Phi(\e^m)_{11}, \quad b_m \coloneqq \Phi(\e^m)_{12}, \quad c_m \coloneqq \Phi(\e^m)_{21}, \quad d_m \coloneqq \Phi(\e^m)_{22}.
\]
For each \(n \in \bZ\), let \(P_n \coloneqq \cE(\Phi(\e^n),\theta)\), and define a \(\cO(\Unit(1))\)-comodule \(\cA_\theta\)-bimodule 
\[
	P \coloneqq \bigoplus_{n \in \bZ} P_n,
\]
where the corepresentation of \(\cO(\Unit(1))\) is induced by the obvious \(\bZ\)-grading. Then \(P\) defines a principal \(\cO(\Unit(1))\)-comodule \(\ast\)-algebra over \(\coinv{\cO(\Unit(1))}{P} = P_0 = \cA_\theta\) when endowed with the \(\ast\)-operation defined by
\[
	\forall m \in \bZ, \, \forall f \in P_m, \, \forall (x,k) \in \bR \times \bZ_{c_m}, \quad f^\ast(x,k) \coloneqq \overline{f(\e^m x,-a_mk)}
\]
and the multiplication \(P \otimes_\bC P \to P\) defined as follows:
\begin{enumerate}
	\item the restrictions of the multiplication to \(P_0 \otimes_{\bC} P = \cA_\theta \otimes_{\bC} P\) and \(P \otimes_{\bC} P_0 = P \otimes_{\bC} \cA_\theta\) are given by the left and right \(\cA_\theta\)-module structures on \(P\), respectively;
	\item for all non-zero \(m \in \bZ\), for all \(f \in P_{-m}\), and for all \(g \in P_m\),
	\[
		f \cdot g \coloneqq \sum_{(n_1,n_2) \in \bZ^2} U^{n_1} V^{n_2} \sum_{k \in \bZ_{c_m}} \int_{\bR} (V_\theta^{-n_2}U_\theta^{-n_1} \cdot f)\mleft(\tfrac{x}{\e^m},k\mright) g\mleft(x,-a_m k\mright) \,\mathrm{d}x;
	\]
	\item for all non-zero \(m,n \in \bZ\) with \(m+n \neq 0\), for all \(f \in P_m\), and for all \(g \in P_n\),
	\begin{multline*}
		\forall (x,k) \in \bR \times \bZ_{c_{m+n}},\\ 
		(f\cdot g)(x,k) \coloneqq \sum_{j\in\bZ} f\mleft(\tfrac{x}{\e^{n}} + \e^m\left(\tfrac{d_{m+n}k}{c_{m+n}}-\tfrac{j}{c_m}\right),a_{m}d_{m+n}k-j\mright) \cdot g\mleft(x-\left(\tfrac{d_{m+n}k}{c_{m+n}}-\tfrac{j}{c_n}\right),a_n j \mright).
	\end{multline*}
\end{enumerate}
\end{theorem}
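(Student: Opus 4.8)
The plan is to treat the statement as an assembly of known facts about basic Heisenberg modules together with one genuinely new observation, principality. First I would record the structural input: since $\Phi(\e^n) \in \SL(2,\bZ)_\theta$ by Proposition~\ref{stabthm}, the basic Heisenberg module $P_n = \cE(\Phi(\e^n),\theta)$ is an $(\cA_{\Phi(\e^n)\act\theta},\cA_\theta)$-equivalence bimodule, i.e.\ a self-Morita equivalence bimodule over $\cA_\theta$, with $P_0 = \cE(I,\theta) = \cA_\theta$ the trivial one; in particular each $P_n$ is finitely generated projective and full on both sides and $1_{\cA_\theta}$ acts as the identity, so $1_P \coloneqq 1_{\cA_\theta}$. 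Identifying $\cO(\Unit(1)) = \bC[z,z^{-1}] \cong \bC[\bZ]$ and setting $\rest{\delta_P}{P_n} \coloneqq z^n \otimes \id_{P_n}$, the $\bZ$-grading makes $P = \bigoplus_{n \in \bZ} P_n$ a left $\cO(\Unit(1))$-comodule with $\coinv{\cO(\Unit(1))}{P} = P_0 = \cA_\theta$.

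Next I would invoke the Morita-theoretic composition law for Heisenberg modules: for $g,h \in \SL(2,\bZ)$ there is a canonical isomorphism $\cE(g,h\act\theta) \otimes_{\cA_{h\act\theta}} \cE(h,\theta) \iso \cE(gh,\theta)$ of equivalence bimodules (Rieffel~\cite{Rieffel83}, worked out explicitly in Polishchuk--Schwarz~\cite{PolishchukSchwarz}, Polishchuk~\cite{Polishchuk}, and Vlasenko~\cite{Vlasenko}). Because $\Phi$ is a group homomorphism, this specialises to a canonical isomorphism $P_m \otimes_{\cA_\theta} P_n \iso P_{m+n}$ of $\cA_\theta$-bimodules for all $m,n$. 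The explicit integral-and-series formulas~(1)--(3) in the statement are precisely the normalised realisations of these composition isomorphisms (and of the $\cA_\theta$-bimodule actions) computed by Vlasenko~\cite{Vlasenko}*{Thm.\ 6.1}, building on Polishchuk--Schwarz~\cite{PolishchukSchwarz}*{\S 1.3} and Polishchuk~\cite{Polishchuk}*{\S 2.2}, while convergence of the defining integrals and sums and closure within the relevant Schwartz-type spaces is part of Rieffel's construction of Heisenberg modules as smooth imprimitivity bimodules~\cite{Rieffel88}. Quoting these, I would conclude that~(1)--(3) define an associative, $\cA_\theta$-balanced, bilinear multiplication on $P$ extending those of $\cA_\theta$ and of the $P_n$, that $f \mapsto f^\ast$ is a well-defined conjugate-linear involution which restricts on $P_0$ to the standard involution of $\cA_\theta$ and satisfies $(fg)^\ast = g^\ast f^\ast$; compatibility of the grading with~(2)--(3) and with $\ast$ then makes $\delta_P$ a $\ast$-algebra homomorphism, so $P$ is a left $\cO(\Unit(1))$-comodule $\ast$-algebra.

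For principality, I would argue as follows. Since $\cO(\Unit(1)) \cong \bC[\bZ]$ is cosemisimple, the Example attributed to Schneider reduces principality of $P$ to surjectivity of the canonical map $\can : P \otimes_{\cA_\theta} P \to \cO(\Unit(1)) \otimes P$, $p \otimes \p{p} \mapsto \ca{p}{-1} \otimes \ca{p}{0}\p{p}$. With respect to the grading, $\can$ carries $P_j \otimes_{\cA_\theta} P_l \ni p \otimes \p{p}$ to $z^j \otimes p\p{p}$ with $p\p{p} \in P_{j+l}$, so $\operatorname{ran}\can \cap (z^j \otimes P) = z^j \otimes \bigoplus_{l \in \bZ}\operatorname{im}\mleft(P_j \otimes_{\cA_\theta} P_l \to P_{j+l}\mright)$. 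By the composition isomorphism $P_j \otimes_{\cA_\theta} P_l \iso P_{j+l}$ established above, each such image is all of $P_{j+l}$, whence $\bigoplus_l \operatorname{im}(\cdots) = \bigoplus_m P_m = P$ and $z^j \otimes P \subseteq \operatorname{ran}\can$ for every $j$. Thus $\can$ is surjective and $P$ is a principal $\cO(\Unit(1))$-comodule $\ast$-algebra with $\coinv{\cO(\Unit(1))}{P} = \cA_\theta$, as claimed.

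The main obstacle is not any of the above bookkeeping but rather the fact that associativity of the multiplication and the identity $(fg)^\ast = g^\ast f^\ast$ are \emph{not} formal consequences of abstract Morita theory: composition of equivalence bimodules is associative only up to coherent natural isomorphism, so upgrading the family $\set{P_n \given n \in \bZ}$ to an honest graded $\ast$-algebra requires a \emph{coherent} choice of the isomorphisms $P_m \otimes_{\cA_\theta} P_n \iso P_{m+n}$, together with verification that the precise constants and lattice shifts appearing in~(2)--(3) and in the formula for $\ast$ satisfy the resulting cocycle identities. This is exactly the hard computational content of Vlasenko~\cite{Vlasenko} and Polishchuk--Schwarz~\cite{PolishchukSchwarz}, which I would cite rather than reprove; the only step of the proof that is new here is the short cosemisimplicity-plus-fullness argument for principality.
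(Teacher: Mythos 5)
Your overall strategy matches the paper's: assemble the algebra structure on \(P\) from the literature (Schwarz, Dieng--Schwarz, Polishchuk--Schwarz, Polishchuk, Vlasenko) and deduce principality from the strong grading \(P_m \cdot P_n = P_{m+n}\). Your principality argument differs only in route: you use cosemisimplicity of \(\cO(\Unit(1)) \cong \bC[\bZ]\) to reduce principality to surjectivity of the canonical map and then check surjectivity degreewise, whereas the paper cites Arici--D'Andrea--Landi (Thm.\ 4.4) for the statement that a strongly \(\bZ\)-graded \(\ast\)-algebra is a principal \(\cO(\Unit(1))\)-comodule \(\ast\)-algebra. Both arguments rest on the same input, namely that multiplication induces isomorphisms \(P_m \otimes_{\cA_\theta} P_n \iso P_{m+n}\) (which the paper attributes to Dieng--Schwarz and the proof of Polishchuk--Schwarz, Prop.\ 1.2), so this part of your proposal is a perfectly acceptable equivalent.

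The gap is in the \(\ast\)-structure. You claim that the antimultiplicativity \((fg)^\ast = g^\ast f^\ast\) for the explicit involution and multiplication in the statement can simply be quoted, and you assert that the only genuinely new step is the principality argument. That is not how the paper reads the literature: Polishchuk's results are cited only for the fact that \(\ast\) is compatible with the \(\cA_\theta\)-bimodule structure (i.e.\ \(P\) becomes a \(B\)-\(\ast\)-bimodule), and Vlasenko's Thm.\ 6.1 specialises only to the weaker four-element identity \((p_1 \cdot p_2)\cdot(q_1 \cdot q_2)^\ast = \left(p_1 \cdot (p_2 \cdot q_2^\ast)\right)\cdot q_1^\ast\) for \(p_1,q_1 \in P_m\), \(p_2,q_2 \in P_n\). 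The paper then has to \emph{derive} antimultiplicativity from this: using associativity, \(P_{m+n} = P_m \cdot P_n\), and the fact that multiplication restricts to an isomorphism \(P_{m+n} \otimes_{\cA_\theta} P_{-m-n} \iso \cA_\theta\), one cancels the outer factors to conclude \((q_1 \cdot q_2)^\ast = q_2^\ast \cdot q_1^\ast\). This Morita-cancellation step is precisely the other piece of new content in the proof besides principality, and your proposal leaves it unaddressed: you flag the coherence problem in your final paragraph but then defer it wholesale to references which, at least as the paper uses them, do not state the needed identity. To complete your argument you should either reproduce this cancellation argument or verify antimultiplicativity directly from the explicit formulas for \(\ast\) and the products.
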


\begin{proof}
	First, observe that the results of Schwarz~\cite{Schwarz98}*{\S 3}, Dieng--Schwarz~\cite{DiengSchwarz}, and Poli\-shchuk--Schwarz~\cite{PolishchukSchwarz}*{\S 1.3} specialise to yield the multiplication \(P \otimes P \to P\) that makes \(P\) into a unital \(\bC\)-algebra. Next, the results of Po\-li\-shchuk~\cite{Polishchuk}*{\S 2.2} specialise to show that \(\ast : P \to P\) makes \(P\) into a \(B\)-\(\ast\)-bimodule, while results of Vlasenko~\cite{Vlasenko}*{Thm.\ 6.1}, with our notation and conventions, specialise to show that
	\[
		\forall m,n \in \bZ, \, \forall p_1,q_1 \in P_m, \, \forall p_2,q_2 \in P_n, \quad (p_1 \cdot p_2) \cdot (q_1 \cdot q_2)^\ast = (p_1 \cdot (p_2 \cdot q_2^\ast)) \cdot q_1^\ast,
	\]
	so that, by associativity of the multiplication on \(P\), 
	\begin{equation}\label{vlasenkoeq}
		\forall m,n \in \bZ, \, \forall p_1,q_1 \in P_m, \, \forall p_2,q_2 \in P_n, \quad (p_1 \cdot p_2) \cdot (q_2^\ast \cdot q_1).
	\end{equation}
	Finally, by observations of Dieng--Schwarz~\cite{DiengSchwarz}*{\S 2} and Polishchuk--Schwarz~\cite{PolishchukSchwarz}*{Proof of Prop.\ 1.2}, for all \(m,n \in \bZ\), multiplication in \(P\) yields an isomorphism \(P_m \otimes_B P_n \iso P_{m+n}\). On the one hand, for all \(m,n \in \bZ\), \(q_1 \in P_m\), and \(q_2 \in P_n\), since \(P_{m+n} = P_m \cdot P_n\) and since the multiplication on \(P\) restricts to an isomorphism of \(B\)-bimodules \(P_{m+n} \otimes_B P_{-m-n} \iso B\), Equation~\ref{vlasenkoeq} implies that \((q_1 \cdot q_2)^\ast = q_2^\ast \cdot q_1\), so that  \(\ast : P \to P\) makes \(P\) into a unital \(\ast\)-algebra. On the other hand, since \(P_m \cdot P_n = P_{m+n}\) for all \(m,n \in \bZ\), 	it follows that the \(\ast\)-algebra \(P\) defines a principal \(\cO(\Unit(1))\)-comodule \(\ast\)-algebra~\cite{ADL}*{Thm.\ 4.4}.
\end{proof}

Now, if \(B\) is a \(\cO(\Unit(1))\)-module \(\ast\)-algebra, then the trivial principal \(\cO(\Unit(1))\)-comodule \(\ast\)-algebra \(B \rtimes \cO(\Unit(1))\) admits the left \(\cO(\Unit(1))\)-covariant \(\ast\)-homomorphism
\[
	(h \mapsto h \otimes 1_P) : \cO(\Unit(1)) \to B \rtimes \cO(\Unit(1)),
\]
which one views as a global trivialisation of \(B \rtimes \cO(\Unit(1))\). Indeed, more generally, a principal \(\cO(\Unit(1))\)-comodule \(\ast\)-algebra \(P\) is left \(\cO(\Unit(1))\)-covariant \(\ast\)-isomorphic to such a trivial principal \(\cO(\Unit(1))\)-comodule \(\ast\)-algebra if and only if it admits a left \(\cO(\Unit(1))\)-covariant \(\ast\)-homomorphism \(\cO(\Unit(1)) : \to P\). We now show that Theorem~\ref{quadprincipalthm} yields non-trivial principal \(\cO(\Unit(1))\)-comodule \(\ast\)-algebra in this technically precise sense.

\begin{proposition}
	Let \(\theta \in \bR \setminus \mathbf{Q}\) be a quadratic irrationality, and let \(P\) be the resulting principal \(\cO(\Unit(1))\)-comodule \(\ast\)-algebra of Theorem~\ref{quadprincipalthm}. There does not exist a left \(H\)-covariant unital \(\ast\)-homomorphism \(\cO(\Unit(1)) \to P\).
\end{proposition}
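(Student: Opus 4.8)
The plan is to show that such a homomorphism would force the degree‑one component $P_1 = \cE(\Phi(\e),\theta)$ to be free of rank one as a right $\cA_\theta$-module, which is impossible because its rank is the irrational number $\e = c_1\theta + d_1$. Write $z$ for the canonical grouplike unitary generator of $\cO(\Unit(1)) = \bC[\bZ]$, so that $\Delta(z) = z \otimes z$ and $z^\ast = z^{-1}$, and suppose $\phi : \cO(\Unit(1)) \to P$ is a left $H$-covariant unital $\ast$-homomorphism. Left $H$-covariance gives $\delta_P(\phi(z)) = (\id \otimes \phi)(\Delta(z)) = z \otimes \phi(z)$, so $u \coloneqq \phi(z)$ lies in the isotypical component $P_1$; since $\phi$ is a unital $\ast$-homomorphism, the element $u^\ast = \phi(z^{-1})$ lies in $P_{-1}$ and satisfies $u^\ast u = \phi(z^{-1}z) = 1_P$ as well as $u u^\ast = \phi(zz^{-1}) = 1_P = 1_{\cA_\theta}$. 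Thus it suffices to rule out the existence of an element $u \in P_1$ with $u^\ast u = u u^\ast = 1_{\cA_\theta}$.

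Assume such a $u$ exists. First I would promote it to an isomorphism of right $\cA_\theta$-modules $\cA_\theta \iso P_1$. Because the multiplication on $P$ is associative and respects the $\bZ$-grading (so $P_m \cdot P_n \subseteq P_{m+n}$), the maps
\[
	r_u : \cA_\theta \to P_1,\ a \mapsto u \cdot a, \qquad \rho : P_1 \to \cA_\theta,\ v \mapsto u^\ast \cdot v
\]
are right $\cA_\theta$-linear, and $\rho \circ r_u = \bigl(a \mapsto (u^\ast u)\cdot a\bigr) = \id_{\cA_\theta}$ while $r_u \circ \rho = \bigl(v \mapsto (uu^\ast)\cdot v\bigr) = \id_{P_1}$, using $1_P = 1_{\cA_\theta}$. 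Hence $r_u$ is an isomorphism of right $\cA_\theta$-modules; equivalently, $u$ is a free generator of the right $\cA_\theta$-module $P_1$.

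Finally I would obtain a contradiction from the structure theory of basic Heisenberg modules. By Theorem~\ref{quadprincipalthm} together with~\eqref{stabthmeq}, the right $\cA_\theta$-module $P_1 = \cE(\Phi(\e),\theta)$ has rank $\Phi(\e)_{21}\theta + \Phi(\e)_{22} = c_1\theta + d_1 = \e$, whereas the trivial self-Morita equivalence bimodule $\cA_\theta = \cE(\Phi(\e^0),\theta)$ has rank $1$. Since rank — the Murray--von Neumann dimension computed by the canonical normalised trace on $\cA_\theta$, cf.\ Connes~\cite{Connes80} and Rieffel~\cite{Rieffel81} — is an invariant of the isomorphism class of a finitely generated projective right $\cA_\theta$-module, the isomorphism $\cA_\theta \cong P_1$ gives $1 = c_1\theta + d_1$. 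But $\e = \epsilon_\Delta^+ > 1$ is a fundamental unit of the real quadratic order $\cO_\Delta$, hence irrational (a rational unit of $\cO_\Delta$ must be $\pm 1$); concretely $c_1 \in \bN$ forces $c_1 \geq 1$, so $c_1\theta + d_1 \notin \mathbf{Q}$ because $\theta$ is a quadratic irrationality. This contradiction proves the claim. The one step that requires genuine input rather than bookkeeping is the use of rank as an isomorphism invariant: both modules in play are honest finitely generated projective right $\cA_\theta$-modules (indeed equivalence bimodules), so this is precisely the classical fact that basic Heisenberg modules of distinct rank are non-isomorphic over $\cA_\theta$.
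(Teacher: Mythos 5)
Your proof is correct and follows essentially the same route as the paper: left $H$-covariance forces $\phi(z)$ to be a unitary element of $P_1$, which would make $P_1$ free of rank one as a right $\cA_\theta$-module, contradicting the structure of the basic Heisenberg module $\cE(\Phi(\e),\theta)$. The only difference is cosmetic: where the paper cites Connes's Theorem 7 for non-freeness of $P_1$, you unpack the same fact via the trace/rank invariant $c_1\theta + d_1 = \e \notin \mathbf{Q}$, which is exactly the content of that result.
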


\begin{proof}
	Let \(\epsilon\) be the norm-positive fundamental unit of \(\mathbf{Q}[\theta]\) induced by the real quadratic irrationality \(\theta\), and let \(\Phi : \langle \e \rangle \times \set{\pm 1} \to \SL(2,\bZ)_\theta\) is the isomorphism of Proposition~\ref{stabthm}; note, in particular, that \(\Phi_{21}(\e)\e + \Phi_{22}(\e) = \e \in \bR \setminus \mathbf{Q}\), where \(\Phi_{21}(\e),\Phi_{22}(\e)\in\bZ\), so that, necessarily, \(\Phi_{21}(\e) \neq 0\). Thus, by a result of Connes~\cite{Connes80}*{Thm.\ 7}, it follows \(P_1 \coloneqq \cE(\Phi(\e),\theta)\) is not free as a right \(\cA_\theta\)-module. But now, if \(\psi : \cO(\Unit(1)) \to P\) were a left \(H\)-covariant \(\ast\)-homomor\-phism, then \(P_1\) would be freely generated as a right \(\cA_\theta\)-module by the unitary element \(\psi((z \mapsto z)) \in P_1 \cap \Unit(P)\).
\end{proof}

\begin{remark}
	In fact, this shows that \(P\) is not cleft, i.e., that there does not exist a unital convolution-invertible left \(\cO(\Unit(1))\)-covariant map \(\cO(\Unit(1)) \to P\). Indeed, \(P_1\) is the quantum vector bundle (in the sense of Brzezi\'{n}ski--Majid~\cite{BrM}*{Def.\ A.3}) associated to \(P\) induced by the standard irreducible corepresentation of \(\cO(\Unit(1))\), so that \(P_1\) would be free as a right \(\cA_\theta\)-module if \(P\) were cleft~\cite{BrM}*{Appx.\ A}.
\end{remark}

From now on, we will be concerned with the \(\cO(\Unit(1))\)-gauge theory of the canonical non-trivial principal left \(\cO(\Unit(1))\)-comodule \(\ast\)-algebra of Theorem~\ref{quadprincipalthm} induced by a real quadratic irrationality.

\subsection{Constant curvature connections as \texorpdfstring{\(q\)}{q}-monopoles}

From now on, let \(\theta \in \bR \setminus \mathbf{Q}\) be a real quadratic irrationality, let \(\e \in \mathbf{Q}[\theta]^\times \cap (1,\infty)\) be its norm-positive fundamental unit, let \(B \coloneqq \cA_\theta\) be the smooth noncommutative \(2\)-torus with deformation parameter \(\theta\), and let \(P\) be the non-trivial principal \(\cO(\Unit(1))\)-comodule \(\ast\)-algebra with \(\coinv{\cO(\Unit(1))}{P} = \cA_\theta\) of Theorem~\ref{quadprincipalthm}. We shall now show that Connes's constant curvature connections~\cite{Connes80} on the isotypical subspaces of \(P\) \emph{qua} basic Heisenberg modules on \(\cA_\theta\) combine to yield a \(\e^{-1}\)-monopole closely analogous to Brzezi\'{n}ski--Majid's \(q\)-monopole on the \(q\)-deformed complex Hopf fibration~\cite{BrM}*{\S 5.2}. In fact, we shall see that Polishchuk--Schwarz's holomorphic structures~\cite{PolishchukSchwarz} exhaust the \(\cO(\Unit(1))\)-gauge theory of \(P\) in a manner that demonstrates the necessity of considering \emph{all} principal \fodc{} on \(P\) compatible with given vertical and horizontal calculi. 

Before continuing, let us fix notation. On the one hand, define a left \(\cO(\Unit(1))\)-covariant unital \(\bC\)-algebra automorphism \(\sigma : P \to P\) by
\begin{equation}
	\forall m \in \bZ, \, \forall p \in P_m, \quad \sigma(p) \coloneqq \e^{-m}p.
\end{equation}
Although neither \(\sigma\) nor \(\sigma^2\) is a \(\ast\)-automorphism, they do respectively satisfy
\[
	(\sigma \circ \ast)^2 = \id_P, \quad (\sigma^2 \circ \ast)^2 = \id_P.
\]
On the other hand, given \(m \in \bZ\), define \(a_m,b_m,c_m,d_m \in \bZ\) by
\begin{equation}
	\begin{pmatrix}
		a_m & b_m \\ c_m & d_m	
	\end{pmatrix}
	\coloneqq \Phi(\e^m),
\end{equation}
where \(\Phi : \langle \e \rangle \times \set{\pm 1} \iso \SL(2,\bZ)_\theta\) is the isomorphism of Proposition~\ref{stabthm}. Thus, by \eqref{stabthmeq},
\[
	\forall m \in \bZ, \quad c_m \e + d_m = \e^m,
\]
so that \(\set{m \in \bZ \given c_m = 0} = \set{0}\). Moreover, by  an observation of Polishchuk--Schwarz~\cite{PolishchukSchwarz}*{Eq.\ 1.2}, the map \((m \mapsto c_m) : \bZ \to \bZ\) satisfies
\begin{equation}
	\forall m, n \in \bZ,\quad c_{m+n} = c_m\e^{-n} + \e^m c_n
\end{equation}

Let us now recall the construction of the standard \(\ast\)-differential calculus on the smooth noncommutative \(2\)-torus \(B \coloneqq \cA_\theta\), which consists of rapidly decaying Laurent series in unitary generators \(U_\theta\) and \(V_\theta\) satisfying \(V_\theta U_\theta = \e^{2\pi\iu{}\theta}U_\theta V_\theta\). Recall that \(B\) admits a commuting pair of \(\ast\)-derivations \(\delta_1,\delta_2 : B \to B\) uniquely determined by
\[
	\delta_1(U) \coloneqq 2\pi U, \quad \delta_1(V) \coloneqq 0, \quad \delta_2(U) \coloneqq 0, \quad \delta_2(V) \coloneqq 2\pi V.
\]
Hence, the canonical \(\ast\)-differential calculus \((\Omega_B,\dt{B})\) on \(B\) is given by
\[
	\forall k \in \bZ_{\geq 0}, \quad \Omega_B^k \coloneqq \begin{cases} B &\text{if \(k=0,2\)},\\ B^{\oplus 2} &\text{if \(k=1\),}\\ 0 &\text{else,} \end{cases}
\]
where \(B\) is viewed as the trivial \(B\)-\(\ast\)-bimodule,  the non-trivial product  \(\wedge : \Omega^1_B \otimes_B \Omega^1_B \to \Omega^2_B\) is given by
\[
	\forall (b_1,b_2), (c_1,c_2) \in \Omega^1_B, \quad (b_1,b_2) \wedge (c_1,c_2) \coloneqq b_2 c_1 - b_1 c_2,
\]
and the exterior derivative \(\dt{B}\) given by
\begin{gather*}
	\forall b \in \Omega^0_B, \quad \dt{B}(b) \coloneqq (\delta_1(b),\delta_2(b)),\\
	\forall (b_1,b_2) \in \Omega^1_B, \quad \dt{B}(b_1,b_2) \coloneqq \delta_2(b_1)-\delta_1(b_2).
\end{gather*}
For notational convenience, let
\[
	\du{\tau}^1 \coloneqq \frac{1}{2\pi\iu{}} U_\theta^\ast \dt{B}(U_\theta) = (-\iu,0), \quad \du{\tau}^2 \coloneqq \frac{1}{2\pi\iu{}} V_\theta^\ast \dt{B}(V_\theta) = (0,-\iu), \quad \vol_B \coloneqq \du{\tau}^1 \wedge \du{\tau}^2 = 1,
\]
so that \(\du{\tau}^1\) and \(\du{\tau^2}\) are central in the graded \(\ast\)-algebra \(\Omega_B\) and skew-adjoint, \(\vol_B\) is central in \(\Omega_B\) and self-adjoint, \(\set{\du{\tau}^1,\du{\tau}^2}\) is a basis for \(\Omega^1_B\) as both a left and right \(B\)-module, and \(\set{\vol_B}\) is a basis for \(\Omega^2_B\) as both a left and right \(B\)-module. In particular, can write
\begin{gather*}
	\forall b \in B, \quad \dt{B}(b) = 	\iu{}\partial_1(b)\du{\tau}^1 +\iu{}\partial_2(b)\du{\tau}^2,\\
	\forall (b_1,b_2) \in B^{\oplus 2}, \quad \dt{B}(b_1 \du{\tau}^1 + b_2 \du{\tau}^2) = -\iu{}(\partial_2(b_1)-\partial_1(b_2))\vol_B
\end{gather*}

Now, suppose that we have completed \((\Omega_B,\dt{B})\) to a second-order horizontal calculus \((\Omega_B,\dt{B};\Omega_{P,\hor})\) on the principal left \(H\)-comodule \(\ast\)-algebra \(P\). Since \(P\) is principal and since \(\Omega_{P,\hor}^1\) and \(\Omega_{P,\hor}^2\) are projectable horizontal lifts for \(\Omega^1_B\) and \(\Omega^2_B\), respectively, it follows by the generalised Hopf module lemma~\cite{BeM}*{Lemma 5.29} that \(\Omega^1_{P,\hor}\) and \(\Omega^2_{P,\hor}\) are free as left \(P\)-modules with respective bases \(\set{\du{\tau}^1,\du{\tau}^2}\) and \(\set{\vol_B}\). Thus, for every prolongable gauge potential \(\nabla\) on \(P\) with respect to \((\Omega_B,\dt{B};\Omega_{P,\hor})\), there necessarily exist unique left \(H\)-covariant maps \(\partial_1,\partial_2 : P \to P\) extending \(\delta_1,\delta_2 : B \to B\), respectively, such that
\begin{equation}\label{connectioneq1}
	\forall p \in P, \quad \nabla(p) = \iu{}\partial_1(p)\du{\tau}^1+\iu{}\partial_2(p)\du{\tau}^2;
\end{equation}
in this case, it follows that the canonical prolongation \(\pr{\nabla}\) of \(\nabla\) is given by
\begin{equation}\label{connectioneq2}
	\forall (p_1,p_2) \in P^{\oplus 2}, \quad \nabla(p_1\du{\tau}^1+p_2\du{\tau}^2) = -\iu{}(\partial_2(p_1)-\partial_1(p_2)) \vol_B,
\end{equation}
and hence that the field strength \(\bF[\nabla]\) of \(\nabla\) is given by
\begin{equation}\label{connectioneq3}
	\forall p \in P, \quad \bF[\nabla](p) = \iu{}[\partial_1,\partial_2](p)\vol_B.
\end{equation}
As it turns out, we have the following canonical candidate for \(\set{\partial_1,\partial_2}\) due originally to Connes~\cite{Connes80}*{Thm.\ 7}; our immediate goal, then, will be to reverse-engineer \(\Omega_{P,\hor}\) so that \eqref{connectioneq1} defines a prolongable gauge potential \(\nabla\) on \(P\) with respect to \((\Omega_B,\dt{B};\Omega_{P,\hor})\).

\begin{theorem}[Connes~\cite{Connes80}*{Thm.\ 7}, Polishchuk--Schwarz~\cite{PolishchukSchwarz}*{Propp.\ 2.1, 2.2}]\label{cpsthm}
	Define \(\cO(\Unit(1))\)-covariant \(\bC\)-linear extensions \(\partial_1,\partial_2 : P \to P\) of \(\delta_1\) and \(\delta_2\), respectively, by
	\begin{gather}
		\forall m \in \bZ \setminus \set{0}, \, \forall f \in P_m, \, \forall (x,k) \in \bR \times \bZ_{c_m}, \quad \partial_1f(x,k) \coloneqq -\iu{}\ddt{f}(x,k), \quad \\
		\forall m \in \bZ \setminus \set{0}, \, \forall f \in P_m, \, \forall (x,k) \in \bR \times \bZ_{c_m}, \quad \partial_2f(x,k) \coloneqq 2\pi \e^{-n}c_n x f(x,\alpha).
	\end{gather}
	where \(\ddt{}\) denotes differentiation on \(\cS(\bR) \otimes \bC[\bZ_{c_m}]\) with respect to the continuous variable.
	Then the maps \(\partial_1\) and \(\partial_2\) satisfy the following relations:
	\begin{gather}
		\forall j \in \set{1,2}, \forall p, \p{p} \in P, \quad \partial_j(p\p{p}) = \partial_j(p)\cdot \sigma(\p{p}) + p \cdot \partial_j(\p{p}),\label{twist1}\\
		\forall j \in \set{1,2}, \, \forall p \in P, \quad \partial_j(p^\ast) = -\sigma(\partial_j(p)^\ast).	\label{twist2}
	\end{gather}
	Moreover, the commutator \([\partial_1,\partial_2] \coloneqq \partial_1 \circ \partial_2 - \partial_2 \circ \partial_1\) satisfies
	\begin{equation}\label{twist3}
		\forall m \in \bZ, \, \forall f \in P_m, \quad [\partial_1,\partial_2](f) = -2\pi\iu{}\e^{-m}c_m f.
	\end{equation}
\end{theorem}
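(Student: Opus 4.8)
The plan is to treat this as essentially a repackaging of Connes's computation of the canonical constant-curvature connection on a Heisenberg module~\cite{Connes80}*{Thm.\ 7} together with Polishchuk--Schwarz's extension of that connection across the graded product structure of $P$~\cite{PolishchukSchwarz}*{Propp.\ 2.1, 2.2}, so that one may either cite those results after translating conventions or reprove them directly; I take the direct route. Since $\partial_1$ and $\partial_2$ are required to be $\cO(\Unit(1))$-covariant, everything decomposes over the $\bZ$-grading: on $P_0 = B = \cA_\theta$ one has $\partial_j = \delta_j$ and $\sigma = \id$, while on $P_m$ for $m \neq 0$ the map $\partial_1$ is $-\iu\ddt{}$, the map $\partial_2$ is multiplication by the function $2\pi\e^{-m}c_m x$, and $\sigma$ is multiplication by $\e^{-m}$. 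A first preliminary step is to check that these formulas are well-posed --- that $-\iu\ddt{}$ and multiplication by $2\pi\e^{-m}c_m x$ preserve $\cS(\bR)\otimes\bC[\bZ_{c_m}]$ and are compatible with the quasi-periodicity relations cutting $P_m$ out inside it (differentiate, resp.\ multiply, those relations) --- so that indeed $\partial_j(P_m)\subseteq P_m$, that $\partial_j$ is $\cO(\Unit(1))$-covariant, and that $\partial_j|_B = \delta_j$.

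The curvature identity~\eqref{twist3} is then immediate: on $P_m$ one has $[\partial_1,\partial_2] = 2\pi\e^{-m}c_m\,[-\iu\ddt{},x]$, and $[-\iu\ddt{},x] = -\iu$ is the Heisenberg canonical commutation relation. The twisted $\ast$-identity~\eqref{twist2} is a direct computation on each $P_m$: for $p \in P_m$ both $\partial_j(p^\ast)$ and $-\sigma(\partial_j(p)^\ast)$ lie in $P_{-m}$, and one expands them using the explicit $\ast$-operation of Theorem~\ref{quadprincipalthm} (which rescales the continuous variable by $\e^m$ and the discrete variable by $-a_m$) together with $\sigma|_{P_{-m}}$ being multiplication by $\e^m$; for $\partial_1$ the match comes from the chain rule for $\ddt{}$ under the rescaling $x \mapsto \e^m x$, and for $\partial_2$ from $c_{-m} = -c_m$, which holds because $\Phi(\e^{-m}) = \Phi(\e^m)^{-1}$ in $\SL(2,\bZ)$.

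The twisted Leibniz rule~\eqref{twist1} is the substantive part. By $\bC$-bilinearity and $\cO(\Unit(1))$-covariance it is enough to treat homogeneous $p \in P_m$, $\p p \in P_n$, and one proceeds through the three cases of the multiplication in Theorem~\ref{quadprincipalthm}. If one of $m,n$ vanishes --- say $p \in B$, $\p p \in P_n$, so that $\sigma(\p p) = \e^{-n}\p p$ --- the identity becomes $\partial_j(p\cdot\p p) = \delta_j(p)\,\e^{-n}\p p + p\cdot\partial_j(\p p)$, which one reads off directly from the explicit left $\cA_\theta$-action formulas on the Heisenberg module $P_n$: the rescaling $x \mapsto (g_{21}\theta+g_{22})^{-1}x = \e^{-n}x$ built into that action is precisely what supplies the twist $\sigma$, while the mirror case $p \in P_m$, $\p p \in B$ is the untwisted statement that $\partial_j$ is a connection on $P_m$ in Connes's sense~\cite{Connes80}. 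The case $m + n = 0$ uses the bilinear pairing formula (the second bullet), with integration by parts in the continuous variable giving~\eqref{twist1} for $\partial_1$ and $c_0 = 0$ giving it for $\partial_2$. The case $m,n \neq 0$, $m+n \neq 0$ uses the convolution formula (the third bullet). Since $P$ is generated as a $\bC$-algebra by $B = P_0$ together with $P_1$ and $P_{-1}$ (using the isomorphisms $P_k\otimes_B P_l \cong P_{k+l}$), and since the class of pairs obeying~\eqref{twist1} is stable under reassociation because $\sigma$ is an algebra automorphism, these cases suffice; equivalently, one verifies~\eqref{twist1} on each $P_m$ qua Connes connection and on each product $P_m \cdot P_n$, exactly as in the cited work.

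The \textbf{main obstacle} is the case $m,n \neq 0$, $m+n \neq 0$, and specifically its $\partial_2$ half: one must push multiplication by $2\pi\e^{-(m+n)}c_{m+n}x$ through the convolution sum and recognise the outcome as $\partial_2(p)\sigma(\p p) + p\,\partial_2(\p p)$, which forces one to rewrite the argument $x$ of $p\cdot\p p$ as an affine combination of the shifted arguments of $p$ and of $\p p$; the coefficient of $x$ then matches thanks to the cocycle identity $c_{m+n} = c_m\e^{-n} + \e^m c_n$ recorded just before Theorem~\ref{cpsthm}, and the remaining $x$-independent terms cancel once the shift in the discrete index is accounted for --- this is the delicate bookkeeping. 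The $\partial_1$ half is easier: differentiating the convolution sum, the chain-rule slopes of the two shifted arguments, namely $\e^{-n}$ and $1$, reproduce exactly the scaling factor $\e^{-n}$ of $\sigma$ on the right-hand factor. An expedient alternative to carrying out this computation afresh is to invoke the cited results of Connes~\cite{Connes80}*{Thm.\ 7} and Polishchuk--Schwarz~\cite{PolishchukSchwarz}*{Propp.\ 2.1, 2.2} directly and confine the work to matching normalisations (the roles of $2\pi$, the factor of $\iu$, and the convention for the involution).
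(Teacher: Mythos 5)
Your proposal is correct, and it is worth noting that the paper itself contains no proof of this statement: Theorem~\ref{cpsthm} is imported verbatim (up to conventions) from Connes~\cite{Connes80}*{Thm.\ 7} and Polishchuk--Schwarz~\cite{PolishchukSchwarz}*{Propp.\ 2.1, 2.2}, so the ``expedient alternative'' you mention at the end---cite and match normalisations---is in fact exactly what the paper does. Your direct verification is nonetheless sound and internally consistent with the paper's data: you correctly read the (typo-laden) formulas as $\partial_1=-\iu\,\partial/\partial x$ and $\partial_2=$ multiplication by $2\pi\e^{-m}c_m x$ on $P_m$ (the statement's ``$\e^{-n}c_n$'' and ``$f(x,\alpha)$'' should be ``$\e^{-m}c_m$'' and ``$f(x,k)$''); the curvature identity \eqref{twist3} is indeed just the canonical commutation relation $[-\iu\,\partial/\partial x,\,x]=-\iu$ weighted by $2\pi\e^{-m}c_m$; the $\ast$-identity \eqref{twist2} checks out against the explicit involution of Theorem~\ref{quadprincipalthm} using the chain rule under $x\mapsto\e^m x$ for $\partial_1$ and $c_{-m}=-c_m$ (equivalently $c_0=0$ in the cocycle identity) for $\partial_2$; and in the mixed-degree Leibniz cases the twist $\sigma$ does arise precisely from the $\e^{-n}$-rescaling built into the left $\cA_\theta$-action on $\cE(\Phi(\e^n),\theta)$, with the cocycle identity $c_{m+n}=c_m\e^{-n}+\e^m c_n$ governing the remaining case. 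The only part you leave as a sketch---pushing $\partial_2$ through the convolution product for $m,n\neq 0$, $m+n\neq 0$---is exactly the content of Polishchuk--Schwarz's Propositions 2.1--2.2, so deferring to them there is legitimate and matches the paper's own economy; a fully self-contained write-up would have to carry out that bookkeeping, but it is not required to reach parity with the text.
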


\begin{remark}[Polishchuk~\cite{PolishchukJGP}]
 Let \(\tau \in \set{z \in \bC \given \Im z < 0}\); let \(g \coloneqq \Phi(\e)\). Then \[B_g(\theta,\tau) \coloneqq\ker(\partial_1 + \tau \partial_2)\] is a unital (non-\(\ast\)-closed)  subalgebra of \(P\), which can be interpreted as the homogeneous coordinate ring of \(\cA_\theta\) with the complex structure \(\tau\) \emph{qua} noncommutative projective variety.
\end{remark}

We can now use \eqref{twist1} and \eqref{twist2} to reverse-engineer a canonical second-order horizontal calculus on \(P\) of the form \((\Omega_B,\dt{B};\Omega_{P,\hor})\) encoding the canonical twisted derivations \(\partial_1\) and \(\partial_2\) on \(P\) as a prolongable gauge potential. This will require the following definition.

\begin{definition}
	Let \(H\) be a Hopf \(\ast\)-algebra, let \(A\) be a left \(H\)-comodule \(\ast\)-algebra, and let \(E\) be a left \(H\)-covariant \(A\)-\(\ast\)-bimodule. Let \(\phi : A \to A\) be a left \(H\)-covariant unital \(\bC\)-algebra automorphism, such that \((\phi \circ \ast)^2 = \id_A\). We define the left \(H\)-covariant \(A\)-\(\ast\)-bimodule \(E_\phi\) to be the left \(H\)-covariant left \(A\)-module \(E\) together with right \(A\)-module structure \(\cdot_\phi : E \otimes_\bC A \to E\) and \(\ast\)-structure \(\ast_\phi : E \to E\) defined, respectively, by
	\begin{gather*}
		\forall e \in E, \, \forall a \in A, \quad e \cdot_\phi a \coloneqq e \cdot \phi(a),\\
		\forall e \in E, \quad e^{\ast_\phi} \coloneqq \phi(e^\ast).	
	\end{gather*}
\end{definition}

\begin{proposition}\label{constantcurvatureprop}
	Define a \(\mathbf{Z}_{\geq 0}\)-graded left \(\cO(\Unit(1))\)-comodule \(P\)-\(\ast\)-bimodule \(\Omega_{P,\hor}\) by
	\[
		\forall k \in \mathbf{Z}_{\geq 0}, \quad \Omega^k_{P,\hor} \coloneqq \begin{cases} P &\text{if \(k = 0\),} \\ (P_\sigma)^{\oplus 2} &\text{if \(k = 1\),} \\ P_{\sigma^2} & \text{if \(k=2\),}\\0 &\text{else;}\end{cases}
	\]
	let \(\iota : \Omega_B \inj \Omega_{P,\hor}\) be the map induced by the inclusion \(B \inj P\). Then \((\Omega_B,\dt{B};\Omega_{P,\hor},\iota)\) defines a second-order horizontal calculus on \(P\) when \(\Omega_{P,\hor}\) is endowed with the extension of the \(P\)-bimodule structure to a multiplication \(\Omega_{P,\hor} \otimes \Omega_{P,\hor} \to \Omega_{P,\hor}\) given by
	\begin{multline*}
		\forall (p_1,p_2), (\p{p}_1,\p{p}_2) \in \Omega^1_{P,\hor}, \\ (p_1 \cdot \du{\tau}^1+p_2 \cdot \du{\tau}^2) \wedge (\p{p}_1 \cdot \du{\tau}^1 + \p{p}_2 \cdot \du{\tau}^2) \coloneqq \left(p_1 \sigma(\p{p}_2) - p_2 \sigma(\p{p}_1)\right) \cdot \vol_B.
	\end{multline*}
	Moreover, the left \(\cO(\Unit(1))\)-covariant map \(\nabla_0 : P \to \Omega^1_{P,\hor}\) given by
	\begin{equation}
		\forall p \in P, \quad \nabla_0(p) = \iu{}\partial_1(p)\cdot\du{\tau}^1+\iu{}\partial_2(p)\cdot\du{\tau}^2,
	\end{equation}
	where \(\partial_1,\partial_2 : P \to P\) are the maps constructed in Theorem~\ref{cpsthm}, defines a prolongable gauge potential \(\nabla_0\) on \(P\) with respect to \((\Omega_B,\dt{B};\Omega_{P,\hor})\) with canonical prolongation \(\pr{\nabla}_0\) given by
	\begin{equation}
		\forall (p_1,p_2) \in P^{\oplus 2}, \quad \pr{\nabla}_0(p_1\cdot\du{\tau}^1+p_2\cdot\du{\tau}^2) = -\iu{}(\partial_2(p_1)-\partial_1(p_2)) \cdot\vol_B,
	\end{equation}
	and field strength \(\bF[\nabla_0]\) given by
	\begin{equation}
		\forall p \in P, \quad \bF[\nabla_0](p) = \left[2\pi\frac{\e c_1}{\e^2-1}\vol_B,p\right].
	\end{equation}
\end{proposition}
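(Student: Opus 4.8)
The plan is to verify the three assertions of the proposition in turn --- that $(\Omega_B,\dt{B};\Omega_{P,\hor},\iota)$ is a second-order horizontal calculus, that $\nabla_0$ is a prolongable gauge potential with the stated canonical prolongation, and that its field strength has the claimed form --- in each case reducing the work to the twisted Leibniz identities \eqref{twist1}--\eqref{twist3} of Theorem~\ref{cpsthm}, the involutivity relations $(\sigma\circ\ast)^2 = \id_P$ and $(\sigma^2\circ\ast)^2 = \id_P$, and the algebraic number theory of the real quadratic irrationality $\theta$. For the first assertion I would note that the left $\cO(\Unit(1))$-comodule structure on $P_\sigma$ and $P_{\sigma^2}$ is the given one on $P$, which $\sigma$ and $\sigma^2$ preserve, and that the twisted $P$-bimodule structures and twisted $\ast$-operations satisfy the $\ast$-bimodule axioms precisely because $(\sigma\circ\ast)^2 = \id_P$ and $(\sigma^2\circ\ast)^2 = \id_P$. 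That the displayed formula for $\wedge$ is well-defined amounts to checking that it is balanced over $P$, which uses that $\sigma$ is an algebra homomorphism; it then descends to a left $\cO(\Unit(1))$-covariant map $\Omega^1_{P,\hor}\otimes_P\Omega^1_{P,\hor}\to\Omega^2_{P,\hor}$ with $\du{\tau}^1\wedge\du{\tau}^2 = \vol_B = -\,\du{\tau}^2\wedge\du{\tau}^1$ and $\du{\tau}^i\wedge\du{\tau}^i = 0$ on the free left-$P$-basis $\set{\du{\tau}^1,\du{\tau}^2}$, so $\Omega^2_{P,\hor} = \Omega^1_{P,\hor}\wedge\Omega^1_{P,\hor} = P\cdot\vol_B$ and $\Omega_{P,\hor}$ is generated over $P$ by $\Omega^1_{P,\hor}$ and truncated at degree $2$; the graded identity $(\omega\wedge\omega')^\ast = -\,(\omega')^\ast\wedge\omega^\ast$ then follows by carrying $\sigma$ through the formula and reapplying $(\sigma\circ\ast)^2 = \id_P$. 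Since $P$ is principal with $\coinv{\cO(\Unit(1))}{P} = B$, on which both $\sigma$ and $\sigma^2$ restrict to $\id_B$, one has $\coinv{\cO(\Unit(1))}{\Omega^k_{P,\hor}} = \Omega^k_B$ and $\Omega^k_{P,\hor} = P\cdot\iota(\Omega^k_B)$ for $k = 0,1,2$, whence $\iota$ is an injective morphism of graded $\ast$-algebras and $(\Omega^1_{P,\hor},\iota)$, $(\Omega^2_{P,\hor},\iota)$ are projectable horizontal lifts by Propositions~\ref{strongprop} and \ref{projectprop}.

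For the second assertion, $\nabla_0$ is $\cO(\Unit(1))$-covariant because $\partial_1,\partial_2$ are, and restricts to $\dt{B}$ on $B$ because $\partial_j\rest{}{B} = \delta_j$; it is a $\ast$-derivation into $\Omega^1_{P,\hor}$ because the twisted Leibniz rule \eqref{twist1} is exactly matched by the twisted right action $\du{\tau}^i\cdot q = \sigma(q)\,\du{\tau}^i$ on $\Omega^1_{P,\hor}$ and \eqref{twist2} by the twisted $\ast$-operation $(q\,\du{\tau}^i)^\ast = -\,\sigma(q^\ast)\,\du{\tau}^i$, so $\nabla_0\in\fr{At}$. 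For prolongability I would verify --- again straight from \eqref{twist1} and \eqref{twist2} --- that the explicit assignment $p_1\,\du{\tau}^1 + p_2\,\du{\tau}^2\mapsto -\iu\,(\partial_2(p_1) - \partial_1(p_2))\,\vol_B$ is a well-defined $\cO(\Unit(1))$-covariant second-order prolongation of $\nabla_0$, which restricts to $\dt{B}$ on $\Omega^1_B$ because $\delta_1,\delta_2$ commute. As every element of $\Omega^1_{P,\hor} = P\cdot\dt{B}(B)$ is a sum of terms $p\cdot\dt{B}(b)\cdot\p{p}$, and as the Leibniz rule together with $\dt{B}\circ\dt{B} = 0$ forces any second-order prolongation of $\nabla_0$ restricting to $\dt{B}$ on $\Omega^1_B$ to send $p\cdot\dt{B}(b)\cdot\p{p}$ to $\nabla_0(p)\wedge\dt{B}(b)\cdot\p{p} - p\cdot\dt{B}(b)\wedge\nabla_0(\p{p})$, the canonical-prolongation assignment of Proposition-Definition~\ref{canprol} is well-defined and coincides with the explicit map above; hence $\nabla_0$ is prolongable with $\pr{\nabla}_0$ as stated.

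Finally, for the field strength, \eqref{connectioneq3} gives $\bF[\nabla_0](p) = \iu\,[\partial_1,\partial_2](p)\,\vol_B$, and \eqref{twist3} then gives $\bF[\nabla_0](p) = 2\pi\,\e^{-m}c_m\,p\,\vol_B$ for $p\in P_m$. The number-theoretic step is to Galois-conjugate the identity $c_m\theta + d_m = \e^m$ --- which is \eqref{stabthmeq} applied to $\Phi(\e^m)\in\SL(2,\bZ)_\theta$ --- and use that $\mathcal{N}(\e) = 1$ forces $\bar\e = \e^{-1}$: this gives $c_m = (\e^m - \e^{-m})/(\theta - \bar\theta)$, hence $\e^{-m}c_m = (1 - \e^{-2m})/(\theta - \bar\theta)$ and, at $m = 1$, $\tfrac{\e c_1}{\e^2 - 1} = (\theta - \bar\theta)^{-1}$. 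Since $\sigma^2\rest{}{P_m} = \e^{-2m}\,\id$, the right-hand side $[2\pi\tfrac{\e c_1}{\e^2-1}\vol_B, p]$ evaluates on $p\in P_m$ in the twisted bimodule $\Omega^2_{P,\hor} = P_{\sigma^2}$ to $2\pi(\theta-\bar\theta)^{-1}(\e^{-2m} - 1)\,p\,\vol_B$, which (after carrying out the remaining sign bookkeeping) matches the computed value of $\bF[\nabla_0](p)$, yielding the claimed formula. I expect the first assertion to be the principal obstacle: the twist by $\sigma$, introduced so that \eqref{twist1} and \eqref{twist2} make $\partial_1,\partial_2$ into a $\ast$-derivation into $\Omega^1_{P,\hor}$, interacts non-trivially with the graded $\ast$-structure, so one must keep the $\sigma$-factors and $\ast$-signs consistent while checking simultaneously that $\wedge$ makes $\Omega_{P,\hor}$ a graded $\ast$-algebra over $P$ and that $\iota$ intertwines it with the product on $\Omega_B$. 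Everything afterwards is a sequence of bookkeeping computations governed by \eqref{twist1}--\eqref{twist3} and the number theory of $\theta$.
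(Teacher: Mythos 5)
Your proposal is correct and follows essentially the same route as the paper's proof: the only substantive check for the horizontal calculus is graded antimultiplicativity of \(\ast\) via \((\sigma\circ\ast)^2=\id_P\), the twisted Leibniz rules \eqref{twist1}--\eqref{twist2} give that \(\nabla_0\) is a prolongable \(\ast\)-derivation (your detour through an explicit second-order prolongation plus the uniqueness clause of Proposition-Definition~\ref{canprol} amounts to the same computation as the paper's direct well-definedness check, both resting on freeness of \(\Omega^1_{P,\hor}\) over \(\set{\du{\tau}^1,\du{\tau}^2}\)), and \eqref{twist3} gives the curvature. The only variations are cosmetic---you obtain \(\e^{-m}c_m=(1-\e^{-2m})\,\e c_1/(\e^2-1)\) from the Galois-conjugate closed form \(c_m=(\e^m-\e^{-m})/(\theta-\bar{\theta})\) where the paper runs the recursion for \(c_{m+n}\) and an induction---and the ``remaining sign bookkeeping'' you defer (your commutator evaluates to \(2\pi(\theta-\bar{\theta})^{-1}(\e^{-2m}-1)\,p\cdot\vol_B\), whereas \eqref{twist3} yields \(2\pi(\theta-\bar{\theta})^{-1}(1-\e^{-2m})\,p\cdot\vol_B\)) is exactly the sign tension already present in the paper's own concluding display, so it is not a gap you have introduced.
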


\begin{proof}
	Let us first show that \((\Omega_B,\dt{B};\Omega_{P,\hor},\iota)\) correctly defines a second-order horizontal calculus on \(P\); since \(\rest{\sigma}{B} = \id_B\), the only non-trivial point is that the \(\ast\)-structure on \(\Omega_{P,\hor}\) is graded-antimultiplicative with respect to the multiplication \(\Omega^1_{P,\hor} \times \Omega^1_{P,\hor} \to \Omega^2_{P,\hor}\). Indeed, for all \((p_1,p_2),(\p{p}_1,\p{p}_2) \in P^{\oplus 2}\),
	\begin{align*}
		&\left((p_1 \cdot \du{\tau}^1 + p_2 \cdot \du{\tau}^2) \wedge (\p{p}_1 \cdot \du{\tau}^1 + \p{p}_2 \cdot \du{\tau}^2) \right)^\ast	\\
		&\quad\quad = \left((p_2\sigma(\p{p}_1)-p_1\sigma(\p{p}_2))\cdot\vol_B\right)^\ast\\
		&\quad\quad = \vol_B \cdot (\sigma(\p{p}_1)^\ast p_2^\ast - \sigma(\p{p}_2)^\ast p_1^\ast)\\
		&\quad\quad = \left(\sigma^2(\sigma(\p{p}_1)^\ast) \sigma^2(p_2^\ast)-\sigma^2(\sigma(\p{p}_2)^\ast)\sigma^2(p_1^\ast)\right) \cdot \vol_B\\
		&\quad\quad = -\left(\sigma((\p{p}_1)^\ast) \cdot \du{\tau}^1 + \sigma((\p{p}_2)^\ast) \cdot \du{\tau}^2\right) \wedge \left(\sigma(p_1^\ast) \cdot\du{\tau}^1 + \sigma(p_2^\ast) \cdot \du{\tau}^2\right)\\
		&\quad\quad = -\left(\du{\tau}^1 \cdot (\p{p}_1)^\ast + \du{\tau}^2 \cdot (\p{p}_2)^\ast\right) \wedge \left( \du{\tau}^1 \cdot p_1^\ast + \du{\tau}^2 \cdot p_2^\ast\right)\\
		&\quad\quad = -(\p{p}_1 \cdot \du{\tau}^1 + \p{p}_2 \cdot \du{\tau}^2)^\ast \wedge (p_1 \cdot \du{\tau}^1 + p_2 \cdot \du{\tau}^2)^\ast.
	\end{align*}
	
	Let us now show that the map \(\nabla_0\) defines a prolongable gauge potential on \(P\) with respect to \((\Omega_B,\dt{B};\Omega_{P,\hor})\) with the correct canonical prolongation and field strength. Before continuing, note that \(\delta_1\), \(\delta_2\), and \(\sigma\) are all block-diagonal with respect to the decomposition \(P = \bigoplus_{m\in\bZ}P_m\) and that \(\sigma\) acts as multiplication by a scalar on each isotypical subspace \(P_m\), so that \(\sigma \circ \delta_1 = \delta_1 \circ \sigma\) and \(\sigma \circ \delta_2 = \delta_2 \circ \sigma\). 
	
	First, equations \eqref{twist1} and \eqref{twist2} together with the construction of \(\Omega_{P,\hor}\) imply that the left \(\cO(\Unit(1))\)-covariant map \(\nabla_0 : P \to \Omega^1_{P,\hor}\) is a \(\ast\)-derivation, while the fact that \(\rest{\partial_j}{B} = \delta_j\) for \(j=1,2\) implies that \(\rest{\nabla_0}{B} = \dt{B}\). Hence, \(\nabla_0\) defines a gauge potential with respect to the first-order horizontal calculus \((\Omega^1_B,\dt{B};\Omega^1_{P,\hor})\) induced by \((\Omega_B,\dt{B};\Omega_{P,\hor})\).
	
	Next, for all \(p,q \in P\) and \(b \in B\), 
	\begin{align*}
		&\nabla_0(p) \wedge \dt{B}(b) \cdot q - p \cdot \dt{B}(b) \wedge \nabla_0(q)	\\
		&\quad\quad = \iu{}\left(\partial_1(p)\cdot\du{\tau}^1 + \partial_2(p)\cdot\du{\tau}^2\right) \wedge \iu{}\left(\delta_1(b) \cdot \du{\tau}^1+\delta_2(b)\cdot\du{\tau}^2\right) \cdot q\\
		&\quad\quad\quad\quad - p \cdot \iu{}\left(\delta_1(b) \cdot \du{\tau}^1+\delta_2(b)\cdot\du{\tau}^2\right) \wedge \iu{}\left(\partial_1(q)\cdot\du{\tau}^1 + \partial_2(q)\cdot\du{\tau}^2\right)\\
		&\quad\quad = -\left(\partial_1(p) \delta_2(b) - \delta_2(p)\delta_1(b)\right)\sigma^2(q)\cdot \vol_B + p \left(\delta_1(b)\sigma(\partial_2(q))-\delta_2(b)\sigma(\partial_1(q))\right) \cdot \vol_B\\
		&\quad\quad = \left(\partial_2(p\delta_1(b)\sigma(q))-\partial_1(p\delta_2(b)\sigma(q))\right) \cdot \vol_B,
	\end{align*}
	where
	\[
		p \cdot \dt{B}(b) \cdot q = p \cdot \iu{}\left(\delta_1(b) \cdot \du{\tau}^1+\delta_2(b)\cdot\du{\tau}^2\right) \cdot q = \iu{}p \delta_1(b)\sigma(q) \cdot \du{\tau}^1 + \iu{} p\delta_2(b)\sigma(q) \cdot \du{\tau}^2.
	\]
	Hence, the gauge potential \(\nabla_0\) is prolongable with canonical prolongation \(\pr{\nabla}_0\) given by
	\[
		\forall (p_1,p_2) \in P^{\oplus 2}, \quad \pr{\nabla}_0(p_1 \cdot \du{\tau}^1+ p_2 \cdot \du{\tau}^2) = -\iu{}\left(\partial_2(p_1)-\partial_1(p_2)\right) \cdot \vol_B.
	\]
	
	Finally, by Equation~\ref{twist3}, we see that for all \(m \in \bZ\) and \(p \in P_m\),
	\begin{align*}
		\bF[\nabla_0](p) &= -\iu{}\nabla_0(\iu{}\partial_1(p)\cdot\du{\tau}^1 + \iu{}\partial_2(p) \cdot \du{\tau}^2)\\
		&= -\iu{}\left(\partial_2(\partial_1(p))-\partial_1(\partial_2(p))\right) \cdot \vol_B
		= 2\pi \e^{-m} c_m p \cdot \vol_B.
	\end{align*}
	But now, since
	\[
		\forall m,n \in \bZ, \quad \e^{-(m+n)}c_{m+n} = \e^{-m-n}(q^{n}c_m+q^{-m}c_n) = (\e^{-m}c_m) + (\e^{-2})^m (\e^{-n}c_n),
	\]
	it follows that for all \(m \in \bZ\),
	\[
		\e^{-m} c_m = \frac{1-\e^{-2m}}{1-\e^{-2}}\e^{-1}c_1 = -(1-\e^{-2m})\frac{\e c_1}{\e^2-1},
	\]
	so that, in turn, for all \(m \in \bZ\) and \(p \in P_m\),
	\[
		F[\nabla_0](p) = 2\pi\e^{-m}c_m p \cdot  \vol_B = 2\pi (1-\e^{-2m}) \frac{\e c_1}{\e^2-1} p \cdot \vol_B = \left[2\pi\frac{\e c_1}{\e^2-1}\vol_B,p\right],
	\]
	as was claimed.
\end{proof}

We can now readily compute the Atiyah space \(\fr{At}\) and gauge group \(\fr{G}\) of \(P\) with respect to \((\Omega^1_B,\dt{B};\Omega^1_{P,\hor})\) as well as their prolongable analogues with respect to \((\Omega_B,\dt{B};\Omega_{P,\hor})\). In fact, we shall see that every gauge potential and gauge transformation is automatically prolongable and that the group \(\fr{G} \cong \Unit(1)\) acts trivially on the affine space \(\fr{At} \cong \bR^2\).

\begin{proposition}\label{heisthm}
Let \((\Omega_B,\dt{B};\Omega_{P,\hor})\) be the second-order horizontal calculus on \(P\) and let \(\nabla_0\) be the prolongable gauge potential on \(P\) with respect to \((\Omega_B,\dt{B};\Omega_{P,\hor})\) of Proposition~\ref{constantcurvatureprop}.
\begin{enumerate}
	\item\label{heis1} Let \(\fr{at}\) be the space of relative gauge potentials on \(P\) with respect to the first-order horizontal calculus \((\Omega^1_B,\dt{B};\Omega^1_{P,\hor})\). The map \(\psi_{\fr{at}} : \bR^2 \to \fr{at}\) defined by
	\[
		\forall (s_1,s_2) \in \bR^2, \, \forall p \in P, \quad \psi_{\fr{at}}(s_1,s_2)(p) \coloneqq \left[\iu{}(s_1\du{\tau}^1+s_2\du{\tau}^2),p\right]
	\]
	is an isomorphism of \(\bR\)-vector spaces; hence, in particular, the subspace \(\Inn(\fr{at})\) of inner relative gauge potentials satisfies \(\Inn(\fr{at}) = \fr{at}\).
	\item\label{heis2} Let \(\pr{\fr{at}}\) be the space of prolongable relative gauge potentials on \(P\) with respect to the second-order horizontal calculus \((\Omega_B,\dt{B};\Omega_{P,\hor})\), and let \(\Inn(\pr{\fr{at}})\) be the subspace of  inner prolongable gauge potentials. Then \(\pr{\fr{at}} = \Inn(\pr{\fr{at}}) = \fr{at}\), where
	\[
		\forall (s_1,s_2) \in \bR^2, \quad \bF[\nabla_0 + \sigma(s_1,s_2)] = \bF[\nabla_0].
	\]
	\item\label{heis3} Let \(\fr{G}\) be the gauge group of \(P\) with respect to \((\Omega^1_B,\dt{B};\Omega^1_{P,\hor})\). Then the function \(\psi_{\fr{G}} : \Unit(1) \to \fr{G}\) defined by
	\[
		\forall \zeta \in \Unit(1), \, \forall m \in \bZ, \, \forall p \in P_m, \quad \psi_{\fr{G}}(\zeta)(p) \coloneqq \zeta^m p
	\]
	is a group isomorphism; hence, in particular, the subgroup \(\Inn(\fr{G})\) of inner gauge transformations and the subgroup \(\pr{\fr{G}}\) of prolongable gauge transformations with respect to \((\Omega_B,\dt{B};\Omega_{P,\hor})\) satisfy \(\Inn(\fr{G}) = \set{\id_P}\) and \(\pr{\fr{G}} = \fr{G}\). Moreover, the group \(\fr{G}\) acts trivially on the Atiyah space \(\fr{At}\) of \(P\) with respect to \((\Omega^1_B,\dt{B};\Omega^1_{P,\hor})\).
\end{enumerate}

\end{proposition}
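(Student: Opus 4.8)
The plan hinges on three ingredients, all already available. First, as shown in the proof of Proposition~\ref{constantcurvatureprop}, $\Omega^1_{P,\hor}$ is free as a left $P$-module on the central coinvariant forms $\du{\tau}^1,\du{\tau}^2$ and $\Omega^2_{P,\hor}$ is free on $\vol_B$. Second, the isotypical decomposition $P = \bigoplus_{m\in\bZ}P_m$ satisfies $P_m\cdot P_n = P_{m+n}$ (Theorem~\ref{quadprincipalthm}), so each $P_{\pm m}$ is the $m$-fold product $P_{\pm 1}\cdots P_{\pm 1}$, the $\ast$-operation exchanges $P_m$ and $P_{-m}$, and each $P_{\pm 1} = \cE(\Phi(\e^{\pm 1}),\theta)$ is an invertible $\cA_\theta$-bimodule because $\Phi(\e^{\pm 1})\in\SL(2,\bZ)_\theta$. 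Third, $\Zent(\cA_\theta) = \bC$ and $c_1\neq 0$ (since $c_1\e + d_1 = \e\notin\mathbf{Q}$ by \eqref{stabthmeq}), so $P_1\neq 0$ and $\sigma|_{P_1} = \e^{-1}\id_{P_1}\neq\id$. Combining the second and third facts: any $\bC$-linear $\cA_\theta$-bimodule endomorphism of $P_{\pm 1}$ is left multiplication by a central element of $\cA_\theta$, hence a scalar. The whole proof reduces each of the three assertions to this rigidity.

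For \ref{heis1} and \ref{heis2}, I would take $\bA\in\fr{at}$ and write $\bA(p) = A_1(p)\du{\tau}^1 + A_2(p)\du{\tau}^2$ with $A_1,A_2\colon P\to P$ uniquely determined and $\bC$-linear. Left $H$-covariance of $\bA$ makes each $A_j$ block-diagonal ($A_j(P_m)\subseteq P_m$); that $\bA$ is a $\ast$-derivation into $\Omega^1_{P,\hor} = (P_\sigma)^{\oplus 2}$ vanishing on $B$ translates, in terms of the $A_j$, into precisely the $\sigma$-twisted identities satisfied by the coefficients $\iu\partial_1,\iu\partial_2$ of $\nabla_0$ --- namely $A_j|_B = 0$, $A_j(pq) = A_j(p)\sigma(q) + pA_j(q)$, and $A_j(p^\ast) = \sigma(A_j(p)^\ast)$ (cf.\ \eqref{twist1}--\eqref{twist2}). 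Then $A_j|_{P_1}$ is an $\cA_\theta$-bimodule endomorphism of $P_1$, so $A_j|_{P_1} = \lambda_j\id$ with $\lambda_j\in\bC$, and likewise $A_j|_{P_{-1}} = \mu_j\id$. Evaluating the twisted Leibniz identity on $P_{-1}\cdot P_1 = B$ forces $\mu_j = -\e\lambda_j$, while the twisted $\ast$-identity forces $\mu_j = \bar\lambda_j\e$; hence $\bar\lambda_j = -\lambda_j$, i.e.\ $\lambda_j = \iu t_j$ with $t_j\in\bR$. Iterating the twisted Leibniz identity along $P_m = P_1\cdots P_1$ then determines $A_j$ on every $P_m$, and a short computation identifies $\bA$ with $\psi_{\fr{at}}(s_1,s_2)$ for $s_j \coloneqq -t_j/(1 - \e^{-1})\in\bR$ (here $\e > 1$, so $1-\e^{-1}\neq 0$). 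This gives surjectivity of $\psi_{\fr{at}}$; injectivity is immediate from $\sigma|_{P_1}\neq\id$; and since $\psi_{\fr{at}}(s_1,s_2) = \ad_\alpha$ for $\alpha \coloneqq \iu(s_1\du{\tau}^1 + s_2\du{\tau}^2)\in(\Omega^1_B)_{\sa}\cap\Zent(\Omega_B)$, Propositions~\ref{innerprop} and~\ref{innpot} show it is an inner, inner-prolongable relative gauge potential. Hence $\fr{at} = \Inn(\fr{at}) = \Inn(\pr{\fr{at}}) = \pr{\fr{at}}$, and since $\du{\tau}^1,\du{\tau}^2$ are closed ($\dt{B}\alpha = 0$), \eqref{innpoteq} gives $\bF[\nabla_0 + \ad_\alpha] = \bF[\nabla_0]$.

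For \ref{heis3}, I would first check that each $\psi_{\fr{G}}(\zeta)$ is a left $H$-covariant unital $\ast$-automorphism of $P$ ($\ast$-preserving because $|\zeta| = 1$) fixing $B$ and extending to the automorphism $p\cdot\omega\mapsto\psi_{\fr{G}}(\zeta)(p)\cdot\omega$ of the graded $\ast$-algebra $\Omega_{P,\hor}$, so $\psi_{\fr{G}}(\zeta)\in\pr{\fr{G}}\subseteq\fr{G}$, and that $\psi_{\fr{G}}$ is an injective group homomorphism (injectivity from $P_1\neq 0$). For surjectivity, given $\phi\in\fr{G}$, left $H$-covariance gives $\phi(P_m)\subseteq P_m$ with $\phi|_B = \id$, so $\phi|_{P_{\pm 1}}$ is an $\cA_\theta$-bimodule automorphism, hence a nonzero scalar; writing $\phi|_{P_1} = \zeta\id$, $\phi|_{P_{-1}} = \eta\id$, multiplicativity on $P_{-1}\cdot P_1 = B$ gives $\zeta\eta = 1$ and the $\ast$-identity gives $\eta = \bar\zeta$, whence $|\zeta| = 1$, and multiplicativity along $P_m = P_1\cdots P_1$ gives $\phi = \psi_{\fr{G}}(\zeta)$. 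Thus $\psi_{\fr{G}}$ is an isomorphism and $\pr{\fr{G}} = \fr{G}$; since $\Cent_B(B\oplus\Omega^1_B)\subseteq\Zent(B) = \bC$, every inner gauge transformation $\Ad_v$ has $v$ a scalar of modulus one, so $\Ad_v = \id_P$ and $\Inn(\fr{G}) = \set{\id_P}$. Finally, for triviality of the action on $\fr{At}$: both $\phi = \psi_{\fr{G}}(\zeta)$ and $\phi_\ast$ act on the degree-$m$ part of $P$, resp.\ of $\Omega^1_{P,\hor}$, as multiplication by $\zeta^m$; since every $\nabla\in\fr{At}$ is left $H$-covariant, $\phi\act\nabla = \phi_\ast\circ\nabla\circ\phi^{-1}$ multiplies by $\zeta^m$ and then by $\zeta^{-m}$ degreewise, hence equals $\nabla$.

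The one genuinely substantive step is the rigidity input --- that $\cA_\theta$-bimodule (endo)morphisms of $P_{\pm 1}$ are scalars --- which is exactly where the arithmetic of the real quadratic irrationality $\theta$ enters: it is $\Phi(\e)\in\SL(2,\bZ)_\theta$ that makes $\cE(\Phi(\e),\theta)$ a self-Morita equivalence bimodule, and $\theta\notin\mathbf{Q}$ that makes $\Zent(\cA_\theta) = \bC$. Everything else --- carrying the $\sigma$-twist through the Leibniz recursions and using fullness of $P_1$ to force $\lambda_j\in\iu\bR$ and $|\zeta| = 1$ --- is routine bookkeeping.
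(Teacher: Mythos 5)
Your proposal is correct and follows essentially the same route as the paper: write everything in the free left \(P\)-module basis \(\set{\du\tau^1,\du\tau^2}\), use the fact that each \(P_m\) is a self-Morita equivalence bimodule over the central algebra \(\cA_\theta\) to force the degreewise coefficients to be scalars, and extract the twisted cocycle, reality, and multiplicativity relations from the Leibniz and \(\ast\)-conditions before solving them. The only (cosmetic) difference is that you pin the scalars down on \(P_{\pm 1}\) and propagate via \(P_m = P_{\pm1}\cdots P_{\pm1}\), whereas the paper runs the rigidity argument in every degree and solves the resulting \(1\)-cocycle identity \(\alpha_{m+n,j}=\e^{-n}\alpha_{m,j}+\alpha_{n,j}\) by induction; aside from the harmless slip ``\(c_1\e+d_1=\e\)'' (it should read \(c_1\theta+d_1=\e\)), all the details you cite check out.
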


\begin{proof}

	Let us begin by proving part~\ref{heis1}. Note that \(\psi_{\fr{at}}\) is a well-defined \(\bR\)-linear map with range contained in \(\Inn(\fr{at})\). Note, moreover, that \(\psi_{\fr{at}}\) is injective: indeed, if \((s_1,s_2) \in \ker\psi_{\fr{at}}\), then for all \(p \in P_1\),
	\[
		0 = \psi_{\fr{at}}(s_1,s_2)(p) = \iu{}(s_1\du{\tau}^1+s_2\du{\tau}^2) \cdot p - p \cdot \iu{}(s_1\du{\tau}^1+s_2\du{\tau}^2) = \iu{}(\e^{-1}-1)(s_1 p \cdot \du{\tau}^1 + s_2 p \cdot \du{\tau}^2),
	\]
	so that \((s_1,s_2) = (0,0)\) as was claimed. Thus, it suffices to show that \(\psi_{\fr{at}}\) is surjective.
	
	Let \(\bA \in \fr{at}\) be given. Given \(m \in \bZ\), since \(\bA\) is left \(\cO(\Unit(1))\)-covariant and right \(B\)-linear, since the left \(P\)-module \(\Omega^1_{P,\hor}\) is free with basis \(\set{\du{\tau}^1,\du{\tau}^2} \subset \Zent_B(\Omega^1_B)\), and since \(P_m\) is a self-Morita equivalence bimodule for \(B\), there exist unique \(\alpha_{m,1},\alpha_{m,2} \in B\), such that
	\[
		\forall p \in P_m, \quad \bA(p) = \alpha_{m,1} p \cdot \du{\tau}^1 + \alpha_{m,2}p \cdot \du{\tau}^2;
	\]
	note that \(\alpha_{0,1} = \alpha_{0,2} = 0\) since \(\rest{\bA}{P_0} = 0\). First, for all \(m \in \bZ\), we have \(\alpha_{m,1},\alpha_{m,2} \in \bC\); indeed, given \(b \in B\), for all \(p \in P_m\),
	\begin{align*}
	 0 &= \bA(b p) - b \cdot \bA(p)\\
	 &= \alpha_{m,1}bp \cdot \du{\tau}^1 + \alpha_{m,2}bp \cdot \du{\tau}^2 -b\alpha_{m,1}p \cdot \du{\tau}^1 - b\alpha_{m,2}p \cdot \du{\tau}^2\\
	 &= [\alpha_{m,1},b]p \cdot \du{\tau}^1+ [\alpha_{m,2},b]p \cdot \du{\tau}^2,
	\end{align*}
	so that \([\alpha_{m,1},b] = [\alpha_{m,2},b] = 0\) since \(P_{m}\) is a self-Morita equivalence bimodule for \(B\), and hence \(\alpha_{m,1},\alpha_{m,2} \in \bC\) since the algebra \(B \coloneqq \cA_\theta\) is central. Next, given \(m,n \in \bZ\), we see that for all \(p \in P_m\) and \(q \in P_n\),
	\begin{align*}
		0 &= \bA(pq) - \bA(p)\cdot q - p \cdot \bA(q)\\
		&=(\alpha_{m+n,1}pq \cdot \du{\tau}^1+\alpha_{m+n,2}pq \cdot \du{\tau}^2) - (\alpha_{m,1}p\cdot\du{\tau}^1 + \alpha_{m,2}p \cdot \du{\tau}^2) \cdot q\\
		&\quad\quad-p \cdot (\alpha_{n,1}q \cdot \du{\tau}^1 + \alpha_{n,2}q \cdot \du{\tau}^2)\\
		&= (\alpha_{m+n,1}-\e^{-n}\alpha_{m,1}-\alpha_{n,1})p \cdot \du{\tau}^1 + (\alpha_{m+n,2}-\e^{-n}\alpha_{m,2}-\alpha_{n,2})p \cdot \du{\tau}^2;
	\end{align*}
	again, since \(P_{m+n} = P_m \cdot P_n\) is a self-Morita equivalence bimodule, it follows that 
	\begin{equation}\label{cocycle}
		\forall j \in \set{1,2}, \quad \alpha_{m+n,j} = \e^{-n}\alpha_{m,j} + \alpha_{n,j}
	\end{equation}
	On the one hand, given \(m \in \bZ\), for all \(p \in P_m\),
	\begin{align*}
		0 &= \bA(p) + \bA(p^\ast)^\ast\\
		&= (\alpha_{m,1}p \cdot \du{\tau}^1+\alpha_{m,2}p\cdot\du{\tau}^2) + (\alpha_{-m,1}p^\ast \cdot \du{\tau}^1 + \alpha_{-m,2}p^\ast \cdot\du{\tau}^2)^\ast\\
		&= (\alpha_{m,1}p \cdot \du{\tau}^1+\alpha_{m,2}p\cdot\du{\tau}^2) + (-\e^{-m}\alpha_{m,1}p^\ast \cdot \du{\tau}^1 - \e^{-m}\alpha_{m,2}p^\ast \cdot\du{\tau}^2)^\ast\\
		&= (\alpha_{m,1}p \cdot \du{\tau}^1+\alpha_{m,2}p\cdot\du{\tau}^2) + (\alpha_{m,1} \du{\tau}^1 \cdot p^\ast + \alpha_{m,2}\du{\tau}^2 \cdot p^\ast)^\ast \\
		&= (\alpha_{m,1}+\overline{\alpha_{m,1}})p \cdot \du{\tau}^1 + (\alpha_{m,2}+\overline{\alpha_{m,2}})p \cdot \du{\tau}^2,
	\end{align*}
	so that again, since \(P_m\) is a self-Morita equivalence bimodule, \(\alpha_{m,1},\alpha_{m,2} \in \iu{}\bR\). On the other hand, by induction together with the \(1\)-cocycle identity \eqref{cocycle}, for all \(m \in \bZ\) and \(j \in \set{1,2}\),
	\[
		\alpha_{m,j} = \frac{1-\e^{-m}}{1-\e^{-1}}\alpha_{1,j} = (1-\e^{-m}) \frac{\alpha_{1,j}}{1-\e^{-1}}.
	\]
	Hence, for all \(m \in \bZ\) and \(p \in P_m\),
	\[
		\bA(p) = \sum_{j=1}^2 \alpha_{m,j}p \cdot \du{\tau}^j = \sum_{j=1}^2(1-\e^{-m}) \frac{\alpha_{1,j}}{1-\e^{-1}}p \cdot \du{\tau}^j = \left[\sum_{j=1}^2\frac{\alpha_{1,j}}{1-\e^{-1}}\du{\tau}^j,p\right],
	\]
	so that \(\bA = \sigma(s_1,s_2)\) for \(s = (-\iu{}(1-\e^{-1})^{-1}\alpha_{1,1},-\iu{}(1-\e^{-1})^{-1}\alpha_{1,2}) \in \bR^2\).
	
	Let us now turn to part~\ref{heis2}. First, since \(\du{\tau}^1,\du{\tau}^2 \in \Zent_B(\Omega^1_B)\), it follows that
	\[
		\fr{at} = \ran \psi_{\fr{at}} \subseteq \Inn(\pr{\fr{at}}),
	\]
	so that, indeed, \(\pr{\fr{at}} = \Inn(\pr{\fr{at}}) = \fr{at}\). Now, given \((s_1,s_2) \in \bR^2\), it  follows by Corollary~\ref{relcor} that for all \(p \in P\),
	\[
		\left(\bF[\nabla_0+\sigma(s_1,s_2)]-\bF[\nabla_0]\right)(p) = \left[-\iu{}\du_B\mleft(\iu{}(s_1\du{\tau}^2+s_2\du{\tau}^2)\mright),p\right] = 0.
	\]
	
	Finally, let us consider part~\ref{heis3}. Before continuing, note that \(\Inn(\fr{G}) = \set{\id_P}\) since \(B\) is central. By construction, the map \(\psi_{\fr{G}} : \Unit(1) \to \fr{G}\) is a well-defined injective group homomorphism. Furthermore, for every \(\zeta \in \Unit(1)\), the gauge transformation \(\psi_{\fr{G}}(\zeta)\) acts diagonally on \(P = \bigoplus_{m\in\bZ}P_m\) by scalar multiplication; hence, the range of \(\psi_{\fr{G}}\) is contained in \(\pr{\fr{G}}\) and acts trivially on \(\fr{At}\). The proof that \(\psi_{\fr{at}} : \bR^2 \to \fr{at}\) is an \(\bR\)-vector space isomorphism, \emph{mutatis mutandis}, now shows that \(\psi_{\fr{G}}\) is surjective.
\end{proof}

Now, recall that the usual de Rham calculus on \(\cO(\Unit(1))\) fits into a canonical \(1\)-parameter family of \(1\)-dimensional bicovariant \(\ast\)-differential calculi on \(\cO(\Unit(1))\) defined as follows. Let \(q \in \bR^\times\) be given, and recall that the corresponding \emph{\(q\)-numbers} are defined by
\[
	\forall n \in \bZ, \quad [n]_q \coloneqq \begin{cases} \frac{1-q^n}{1-q} &\text{if \(q \neq 1\),}\\ n &\text{if \(q	= 1\).}\end{cases}
\]
We define \((\Omega_q,\dt{q})\) to be the unique \(\ast\)-differential calculus on \(\cO(\Unit(1))\) with \(\ast\)-closed left \(\cO(\Unit(1))\)-comodule \(\Lambda^1_q \eqqcolon (\Omega^1_q)^{\cO(\Unit(1))}\) of right \(\cO(\Unit(1))\)-covariant \(1\)-forms and quantum Maurer--Cartan form \(\varpi_q : \cO(\Unit(1)) \to \Lambda^1_q\) defined as follows:
\begin{enumerate}
	\item the left crossed \(\cO(\Unit(1))\)-\(\ast\)-module \(\Lambda^1_q\) is defined to be \(\bC\) with the left \(\cO(\Unit(1))\)-module structure given by
	\[
		\forall m \in \bZ, \, \forall \mu \in \Lambda^1_q, \quad (z \mapsto z^m) \act \mu \coloneqq q^m\mu
	\]
	and the \(\ast\)-structure given by complex conjugation;
	\item the \(\Lambda^1_q\)-valued \(\Ad\)-invariant \(1\)-cocycle \(\varpi_q : \cO(\Unit(1)) \to \Lambda^1_q\) is given by
	\[
		\forall m \in \bZ, \quad \varpi_q((z\mapsto z^m)) \coloneqq 2\pi[m]_{q}.
	\]
\end{enumerate}
Since elements of \(\Lambda^1_q\) are also bicoinvariant and since \(\Lambda^1_q\) is spanned by the image under \(\varpi_q\) of the group-like unitary \((z \mapsto z) \in \cO(\Unit(1))\), it follows \(\Omega^k_q = 0\) for \(k \geq 2\). Note that \(\Omega^1_q\) is freely generated as both a left and right \(\cO(\Unit(1))\)-module by the skew-adjoint bicoinvariant element \(\du_q t \coloneqq (2\pi\iu{})^{-1}\varpi_q((z \mapsto z)) = -\iu{}\);
note also that setting \(q = 1\) recovers the de Rham calculus on \(\cO(\Unit(1))\). We now show that there exists a unique value of \(q\), such that  \(P\) admits \((\Omega^1_q,\dt{q})\)-adapted prolongable gauge potentials with respect to the second-order horizontal calculus \((\Omega_B,\dt{B};\Omega_{P,\hor})\).

\begin{theorem}
	Let \(q \in \bR^\times\). 
	\begin{enumerate}
		\item\label{heiscor1} The quadric subset \(\pr{\fr{At}}[\Omega^1_q]\) of all \((\Omega^1_q,\dt{q})\)-adapted prolongable gauge potentials on \(P\) with respect to \((\Omega_B,\dt{B};\Omega_{P,\hor})\) satisfies
		\[
			\pr{\fr{At}}[\Omega^1_q] = \begin{cases}\fr{At},&\text{if \(q = \e^{2}\),}\\ \emptyset, &\text{else.}\end{cases}
		\]
		In particular, for every \(\nabla \in \fr{At} = \pr{\fr{At}}[\Omega^1_{\e^2}]\), the curvature \(2\)-form \(F[\nabla]\) is non-zero, independent of \(\nabla\), and uniquely determined by
		\[
			F[\nabla](\du_{\e^2}t) = -\iu{}\e c_1 \vol_B.
		\]
		\item\label{heiscor2} The space \(\fr{at}[\Omega^1_q]\) of all \((\Omega^1_q,\du_q t)\)-adapted relative gauge potentials on \(P\) with respect to \((\Omega^1_B,\dt{B};\Omega^1_{P,\hor})\) and the subspace \(\pr{\fr{at}}[\Omega^{\leq 2}_q]\) of all \((\Omega_q,\dt{q})\)-adapted prolongable relative gauge potentials on \(P\) with respect to \((\Omega_B,\dt{B};\Omega_{P,\hor})\) satisfy
		\[
			\fr{at}[\Omega^1_q] = \pr{\fr{at}}[\Omega^{\leq 2}_q] = \begin{cases}\fr{at},&\text{if \(q = \e\),}\\0 &\text{else;}\end{cases}
		\]
		hence, in particular, it follows that
		\[
			\fr{At}/\pr{\fr{at}}[\Omega^1_{\e^{2}}] = \pr{\fr{At}}[\Omega^1_{\e^{2}}]/\pr{\fr{at}}[\Omega^{\leq 2}_{\e^{2}}] = \fr{At}.
		\]
		\item\label{heiscor3} The space \(\Inn(\pr{\fr{at}};\Omega^1_q)\) of all \((\Omega^1_q,\dt{q})\)-semi-adapted inner prolongable gauge potentials on \(P\) with respect to \((\Omega_B,\dt{B};\Omega_{P,\hor})\) satisfies \(\Inn(\pr{\fr{at}};\Omega^1_q) = \fr{at}\); hence, in particular,
		\[
			\Out(\pr{\fr{At}}[\Omega^1_{\e^{2}}]) \coloneqq \pr{\fr{At}}[\Omega^1_{\e^{2}}]/\Inn(\pr{\fr{at}};\Omega^1_{\e^{2}}) = \fr{At}/\fr{at} =\set{\nabla_0 + \fr{at}}.
		\]
	\end{enumerate}
\end{theorem}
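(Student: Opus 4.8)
The strategy is to reduce all three assertions to a single computation, exploiting the fact --- visible from Propositions~\ref{constantcurvatureprop} and~\ref{heisthm} --- that both the field strength of every prolongable gauge potential on \(P\) and the vertical differential \(\dv{P}\) are \emph{inner} \(\ast\)-derivations. By Proposition~\ref{heisthm} one has \(\pr{\fr{at}} = \Inn(\pr{\fr{at}}) = \fr{at}\) and \(\nabla_0 \in \pr{\fr{At}}\), so \(\pr{\fr{At}} = \fr{At}\); every \(\nabla \in \fr{At}\) equals \(\nabla_0 + \psi_{\fr{at}}(s_1,s_2)\) for some \((s_1,s_2) \in \bR^2\), and \(\bF[\nabla] = \bF[\nabla_0] = \ad_\gamma\) with \(\gamma \coloneqq 2\pi\tfrac{\e c_1}{\e^2 - 1}\vol_B\), whence also \(\bF_{\rel}[\bA] = 0\) for every \(\bA \in \fr{at}\); note \(\gamma \neq 0\) since \(c_1 \neq 0\) (as \(c_1\theta + d_1 = \e \notin \mathbf{Q}\)). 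Thus the whole question becomes: for which \(q \in \bR^\times\) does \(\ad_\gamma\) --- respectively the zero derivation, respectively each inner derivation \(\ad_{\iu{}(s_1\du{\tau}^{1} + s_2\du{\tau}^{2})}\) --- factor as \(F \circ \dv{P}\) for a left \(\cO(\Unit(1))\)-covariant morphism \(F\) of \(P\)-\(\ast\)-bimodules?

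For the key computation, fix \(q\) and recall \(\Omega^1_{P,\ver} = \Lambda^1_q \otimes P\), \(\dv{P}(p) = \varpi_q(\ca{p}{-1})\otimes\ca{p}{0}\). Since \(P_m\) is the \(m\)-isotypical component, \(\varpi_q((z\mapsto z^m)) = 2\pi[m]_q\) with \([m]_q = \tfrac{1-q^m}{1-q}\), and \(p \cdot (\du_{q}t\otimes 1_P) = q^m(\du_{q}t\otimes 1_P)\cdot p\) for \(p \in P_m\) (the braiding of \(\Lambda^1_q\) with \(P\)), a short calculation shows \(\dv{P} = \ad_\eta\) with \(\eta \coloneqq \tfrac{2\pi\iu{}}{1-q}\,\du_{q}t\otimes 1_P\), which is \(\cO(\Unit(1))\)-coinvariant. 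Hence a left \(\cO(\Unit(1))\)-covariant \(P\)-\(\ast\)-bimodule morphism \(F \colon \Omega^1_{P,\ver} \to \Omega^j_{P,\hor}\) has \(F \circ \dv{P} = \ad_{F(\eta)}\) and is determined by \(\beta \coloneqq F(\du_{q}t\otimes 1_P)\), which must be \(\cO(\Unit(1))\)-coinvariant, hence lie in \(\Omega^j_B\) (Proposition~\ref{projectprop}), inside the \(\sigma^j\)-twisted bimodule \(\Omega^j_{P,\hor}\), and satisfy the same braiding \(p \cdot \beta = q^m \beta \cdot p\) for \(p \in P_m\). Because the left action on \(\Omega^j_{P,\hor}\) is untwisted while the right one carries the \(\sigma^j\)-twist (\(\sigma^j(p) = \e^{-jm}p\) on \(P_m\)), and \(\Omega^j_B\) is a free \(\Zent(\cA_\theta) = \bC\)-module on \(\{\vol_B\}\) (if \(j = 2\)) resp.\ \(\{\du{\tau}^{1},\du{\tau}^{2}\}\) (if \(j = 1\)), this forces \(\beta \in \ker(1 - q^m\e^{-jm})\) for every \(m\); as each \(P_m = \cE(\Phi(\e^m),\theta)\) is nonzero, we get \(\beta = 0\) unless \(q = \e^j\). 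This dichotomy drives everything.

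Now I assemble the three assertions. For part~\ref{heiscor1} (\(j = 2\)): if \(q \neq \e^2\) then \(F \equiv 0\), so adaptedness of \(\nabla_0\) would force \(\ad_\gamma = 0\), impossible since \(\gamma\) is a nonzero scalar multiple of \(\vol_B\) and \(\sigma^2 \neq \id\) on every \(P_m\) with \(m \neq 0\); hence \(\pr{\fr{At}}[\Omega^1_q] = \emptyset\). If \(q = \e^2\), then \(\beta = c\,\vol_B\) is admissible for any \(c \in \bC\), and \(\ad_\gamma = \ad_{F(\eta)}\) holds iff \(F(\eta) - \gamma = \tfrac{2\pi\iu{}}{1-\e^2}c\,\vol_B - \gamma\) is \(P\)-central in \(\Omega^2_{P,\hor}\); among scalar multiples of \(\vol_B\) this happens only when the coefficient vanishes, a single linear equation solved by the \(c\) with \(c\,\vol_B = -\iu{}\e c_1\vol_B\). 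Thus \(\nabla_0\), and therefore every \(\nabla \in \fr{At}\) (all with the same field strength), is \((\Omega^1_{\e^2},\dt{\e^2})\)-adapted with the same curvature \(2\)-form \(F[\nabla] = F\), and \(F(\du_{\e^2}t) = -\iu{}\e c_1\vol_B \neq 0\); so \(\pr{\fr{At}}[\Omega^1_{\e^2}] = \fr{At}\). For part~\ref{heiscor2} (\(j = 1\)): the same dichotomy gives \(\fr{at}[\Omega^1_q] = \{0\}\) for \(q \neq \e\), while for \(q = \e\) the analogous linear equation for \(\omega[\bA](\du_{\e}t)\) has a solution realising each \(\psi_{\fr{at}}(s_1,s_2)\) as \(\omega[\bA]\circ\dv{P}\), so \(\fr{at}[\Omega^1_\e] = \fr{at}\); since \(\dim_{\bC}\Lambda^1_q = 1\), conditions~\eqref{toteq1}--\eqref{toteq2} defining canonical adaptedness reduce to \([\omega[\bA](\du_{q}t),\alpha] = 0\) for \(\alpha \in \Omega^1_B\) and \(\omega[\bA](\du_{q}t)\wedge\omega[\bA](\du_{q}t) = 0\), both automatic because \(\omega[\bA](\du_{q}t)\) is a scalar combination of the graded-central, square-zero forms \(\du{\tau}^{1},\du{\tau}^{2}\); together with \(\pr{\fr{at}} = \fr{at}\) this yields \(\pr{\fr{at}}[\Omega^{\leq 2}_q] = \fr{at}[\Omega^1_q]\), hence the stated value and the quotients \(\fr{At}/0 = \fr{At}\). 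For part~\ref{heiscor3}: every \(\bA \in \Inn(\pr{\fr{at}}) = \fr{at}\) satisfies \(\bF_{\rel}[\bA] = 0 = 0 \circ \dv{P}\), so it is trivially \((\Omega^1_q,\dt{q})\)-semi-adapted with \(F_{\rel}[\bA] = 0\), whence \(\Inn(\pr{\fr{at}};\Omega^1_q) = \fr{at}\) for all \(q\) and \(\Out(\pr{\fr{At}}[\Omega^1_{\e^2}]) = \fr{At}/\fr{at} = \{\nabla_0 + \fr{at}\}\).

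The main obstacle is the key computation: recognising \(\dv{P} = \ad_\eta\) with \(\eta\) coinvariant, and then pinning down --- through the mismatch between the untwisted left action and the \(\sigma^j\)-twisted right action on \(\Omega^j_{P,\hor}\) --- precisely which coinvariant \(\beta \in \Omega^j_B\) is admissible and what it must equal. This is exactly where the exponents (\(q = \e^2\) in degree \(2\), \(q = \e\) in degree \(1\)) and the constant \(-\iu{}\e c_1\vol_B\) come out, and it demands careful bookkeeping of the \(q\)-dependent \(\cO(\Unit(1))\)-action on \(\Lambda^1_q\), the identification \(\du_{q}t = -\iu{}\), and the factors of \(\iu{}\) hidden in \(\dv{P}\), \(\nabla_0\), and \(\bF = -\iu{}\,\dva{\nabla}\circ\nabla\). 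Once these are settled, all non-emptiness and non-triviality statements follow at once from \(c_1 \neq 0\).
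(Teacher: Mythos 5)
Your core mechanism is the same as the paper's: everything reduces to classifying the left \(\cO(\Unit(1))\)-covariant morphisms of \(P\)-\(\ast\)-bimodules \(\Omega^1_{P,\ver} \to \Omega^j_{P,\hor}\), which are determined by a coinvariant element of \(\Omega^j_B\) whose coefficients are forced (by the degree-zero part of the braiding) into \(\Zent(\cA_\theta) = \bC\) and then killed by the \(\sigma^j\)-twist unless \(q = \e^j\); after that one matches constants isotypically. The paper runs exactly this argument in the form ``\(\phi(\du_q t) = c\,\vol_B\), bimodule condition \(\Rightarrow c = 0\) or \(q = \e^j\)'' and then computes \(\bF[\nabla]\) and \(\dv{P}\) on each \(P_m\); your repackaging via the observation that \(\dv{P}\) and the field strengths are inner derivations by coinvariant elements is a pleasant reformulation but not a genuinely different route. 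Your treatment of the remaining items (the reduction of \eqref{toteq1}--\eqref{toteq2} to trivialities because \(\Lambda^1_q\) is one-dimensional and \(\du{\tau}^1,\du{\tau}^2\) are graded-central square-zero elements of \(\Omega_B\); the realisation of each \(\psi_{\fr{at}}(s_1,s_2)\) as \(\omega[\bA]\circ\dv{P}\) at \(q=\e\); semi-adaptedness of all of \(\fr{at}\) from \(\bF_{\rel}=0\)) matches the paper's.

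Two points need repair, neither fatal. First, \(\eta = \tfrac{2\pi\iu{}}{1-q}\,\du_q t \otimes 1_P\) is undefined at \(q=1\), a value the theorem covers; this is harmless because your emptiness/vanishing direction uses only that a covariant bimodule map is determined by \(\beta = F(\du_q t \otimes 1_P)\) subject to \(p\cdot\beta = q^m\beta\cdot p\), not that \(\dv{P}\) is inner, but you should say so explicitly (or treat \(q=1\) separately), since as written the blanket identity \(\dv{P}=\ad_\eta\) is false there. Second, a sign: with the value \(\gamma = 2\pi\tfrac{\e c_1}{\e^2-1}\vol_B\) that you quote, your own equation \(F(\eta)=\gamma\) yields \(F(\du_{\e^2}t) = +\iu{}\e c_1\vol_B\), not the asserted \(-\iu{}\e c_1\vol_B\). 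The recursion gives \(\e^{-m}c_m = [m]_{\e^{-2}}\,\e^{-1}c_1 = (1-\e^{-2m})\tfrac{\e c_1}{\e^2-1}\), and since \([\lambda\vol_B,p] = \lambda(\e^{-2m}-1)\,p\cdot\vol_B\) on \(P_m\), one in fact has \(\bF[\nabla_0] = \ad_{-2\pi\frac{\e c_1}{\e^2-1}\vol_B}\) (the sign attached to \(\gamma\) in the statement of Proposition~\ref{constantcurvatureprop} as printed is off, though the theorem's own proof bypasses the \(\ad\)-form and is consistent); with the corrected \(\gamma\) your method does return \(-\iu{}\e c_1\vol_B\), in agreement with the statement. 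So the final constant you assert is the right one, but as the chain stands it does not follow from the inputs you cite--this is exactly the \(\iu{}\)-bookkeeping you flagged, and it should be carried out rather than asserted.
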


\begin{proof}
	Let \((\Omega_{P,\ver},\dv{P})\) be the second-order vertical calculus on \(P\) induced by the unique bicovariant prolongation \((\Omega_q,\dt{q})\) of \((\Omega^1_q,\dt{q})\). Since \(\Omega^1_{P,\ver}\) and \(\Omega^2_{P,\hor}\) are free as left \(P\)-modules with respective bases \(\set{\du_q t} \subset \coinv{\cO(\Unit(1))}{\Omega^1_{P,\ver}}\) and \(\set{\vol_B} \subset \coinv{\cO(\Unit(1))}{\Omega^2_{P,\hor}} = \Omega^2_B\), it follows that a left \(\cO(\Unit(1))\)-covariant left \(P\)-linear map \(\phi : \Omega^1_{P,\ver} \to \Omega^2_{P,\hor}\) is completely determined by the unique element \(c \in \coinv{\cO(\Unit(1))}{P} = B\), such that \(\phi(\du_q t) = c\vol_B\), and \emph{vice versa}. Thus, given \(c \in B\) with corresponding map \(\phi\), for all \(b \in B\), \(m \in \bZ\), and \(p \in P_m\),
	\begin{gather*}
		\phi(\du_q t \cdot b) - \phi(\du_q t) \cdot b = \phi(b\cdot \du_q t) - c\vol_B \cdot b = (b c - c b) \vol_B,\\
		\phi(\du_q t \cdot p) - \phi(\du_q t) \cdot p = \phi(q^{-m} p \cdot \du_q t) - c\vol_B \cdot p = (q^{-m} p c - c \e^{-2m} p) \vol_B,
	\end{gather*}
	so that \(\phi\) is a \(P\)-bimodule map if and only if \(c = 0\) or \(c \in \bC \setminus \set{0}\) and \(q = \e^{2}\). We can now proceed with the proof of this corollary.
	
	Let us first check part \ref{heiscor1}. On the one hand, suppose that there exists \(\nabla \in \pr{\fr{At}}[\Omega^1_q]\); recall that \(F[\nabla]\) denotes its curvature \(2\)-form. Since \(\bF[\nabla] = \bF[\nabla_0] \neq 0\) by Propositions~\ref{constantcurvatureprop} and~\ref{heisthm}, it follows that \(F[\nabla] : \Omega^1_{P,\ver} \to \Omega^2_{P,\hor}\) is a non-zero left \(\cO(\Unit(1))\)-covariant morphism of \(P\)-bimodules, so that \(q = \e^{2}\) by the above discussion. On the other hand, let \(\nabla \in \fr{At}\) be given. By the proof of Proposition~\ref{constantcurvatureprop} together with Proposition~\ref{heisthm}, for all \(m \in \bZ\) and  \(p \in P_m\), 
	\begin{gather*}
		\bF[\nabla](p) = \bF[\nabla_0](p) = 2\pi [m]_{\e^{-2}} \e^{-1}c_1 p \cdot \vol_B, \\ \dv{P}(p) = 2\pi\iu{}[m]_q\,\du_q t \cdot p = 2\pi \iu{}[m]_q q^{-m}p\cdot\du_q t = 2\pi \iu{} q^{-1} [m]_{-q} p \cdot \du_q t
	\end{gather*}
	so that \(\nabla \in \pr{\fr{At}}[\Omega^1_{\e^2}]\) with \(F[\nabla] : \Omega^1_{P,\ver} \to \Omega^2_{P,\hor}\) uniquely defined by
	\[
		F[\nabla](\du_{\e^2} t) \coloneqq -\iu{} \e c_1\vol_B.
	\]
	
	Let us now turn to parts \ref{heiscor2} and \ref{heiscor3}. Note that \(\Omega^1_{P,\hor}\) is free as a left \(P\)-module with basis given by \(\set{\du\tau^1,\du\tau^2} \subset \coinv{\cO(\Unit(1))}{\Omega^1_{P,\hor}} = \Omega^1_B\). Hence, the above argument, \emph{mutatis mutandis}, shows that \(\fr{at}[\Omega^1_q] = \fr{at}\) when \(q = \e\) and \(\fr{at}[\Omega^1_q] = 0\) otherwise; indeed, for all \((s_1,s_2) \in \bR^2\), the relative connection \(1\)-form \(\omega[\sigma(s_1,s_2)]\) of \(\sigma(s_1,s_2) \in \fr{at}[\Omega^1_\e]\) is given by
	\[
		\omega[\sigma(s_1,s_2)](\du_\e t) \coloneqq - \frac{\e-1}{2\pi}(s_1\du\tau^1+s_2\du\tau^2) \in \Span_{\bR}\set{\du\tau^1,\du\tau^2}.
	\]
	Since \(\pr{\fr{at}} = \fr{at}\), since \(\Omega^2_q	 = 0\), and since \(\du\tau^1\wedge\du\tau^1 = \du\tau^1 \wedge \du\tau^2+\du\tau^2 \wedge\du\tau^1 = \du\tau^2\wedge\du\tau^2 = 0\), it now follows that \(\pr{\fr{at}}[\Omega^{\leq 2}_q] = \fr{at}[\Omega^1_q]\) in each case. Since \(\bF[\nabla] = \bF[\nabla_0]\) for all \(\nabla \in \pr{\fr{At}} = \fr{At}\), it now follows by Proposition~\ref{heisthm} that \(\Inn(\pr{\fr{at}};\Omega^1_q) = \fr{at}\) in general.
\end{proof}

It therefore follows that the norm-positive fundamental unit \(\e\) of the real quadratic field \(\mathbf{Q}[\theta]\) induced by \(\theta\) is the unique value of \(q \in \bR^\times\) for which \(P\) admits \((\Omega^1_{q^2},\dt{q^2})\)-adapted prolongable gauge potentials. In this case, every gauge potential is prolongable \((\Omega^1_{q^2},\dt{q^2})\)-adapted with the same non-zero constant curvature \(2\)-form
\[
	\du_{q^2} t \mapsto -\iu{}\e c_1 \vol_B
\] 
thereby defining a \(q\)-monopole analogous to the \(q\)-monopole of Brzezi\'{n}ski--Majid~\cite{BrM}*{\S 5.2} on the \(q\)-deformed complex Hopf fibration; mdistinct gauge potentials are gauge-inequivalent and yield non-isomorphic \((\cO(\Unit(1));\Omega^1_{q^2},\dt{q^2})\)-principal \fodc{} on \(P\).

\appendix\section{Groupoids}\label{Groupoids}

In this appendix, we fix notation and terminology related to groupoids, and we state and prove a key technical lemma used in the proofs of Theorems~\ref{firstorderclassification} and~\ref{qpbthm}.

Recall that a \emph{groupoid} is a small category \(\cG\) whose morphisms are all invertible; thus, in particular, a group is a groupoid with a single object. By abuse of notation, we conflate \(\cG\) with the set of all morphisms in \(\cG\), and we further identify the set \(\Ob(\cG)\) of all objects in \(\cG\) with a subset of \(\cG\) via the injection \(e \mapsto \id_e\). If \(f \in \cG\) is an \emph{arrow} (morphism) in \(\cG\), we denote its \emph{source} (domain) by \(s(f)\) and its \emph{target} (codomain) by \(t(f)\). Given \(f,g \in \cG\) satisfying \(t(f) = s(g)\), we denote the composition \(g \circ f\) by \(gf\) or \(g \cdot f\). Given \(e \in \Ob(\cG)\), the \emph{isotropy group} of \(e\) is the automorphism group \(\cG(e)\) of \(e\) in the category \(\cG\) and the \emph{star} of \(e\) is the set \(\operatorname{St}_{\cG}(e)\) of all arrows in \(\cG\) with source \(e\); given \(e_1,e_2 \in \Ob(\cG)\), we denote by \(\cG(e_1,e_2)\) the set of all arrows in \(\cG\) with source \(e_1\) and target \(e_2\).

A \emph{subgroupoid} of a groupoid \(\cG\) is a subcategory \(\cH\) of \(\cG\) whose morphisms are all invertible; we say that \(\cH\) is \emph{wide} whenever \(\Ob(\cG) = \Ob(\cH)\), and we say that \(\cH\) \emph{has trivial isotropy groups} whenever \(\cG(e) = \set{\id_e}\) for all \(e \in \Ob(\cG)\). Given a groupoid \(\cG\) and a wide subgroupoid \(\cH\) of \(\cG\) that has trivial isotropy groups, the \emph{quotient} of \(\cG\) by \(\cH\) is the groupoid \(\cG/\cH\) constructed as follows. Define an equivalence relation
\[
	\sim_{\cH} \coloneqq \set{(f,g) \in \cG^2 \given \exists (a,b) \in \cH^2, \, g=afb}
\]
on \(\cG\), and denote:
\begin{gather*}
	\forall e \in \Ob(\cG), \quad [e]_{\cH} \coloneqq [\id_e]_{\sim_{\cH}}; \quad
	\forall f \in \cG, \quad [f]_{\cH} \coloneqq [f]_{\sim_{\cH}}.	
\end{gather*}
Then \(\cG/\cH \coloneqq \cG/\sim_{\cH}\) as a set of arrows with set of objects \(\Ob(\cG/\cH) \coloneqq \set{[e]_\cH \given e \in \Ob(\cG)}\) and with the unique composition of arrows, such that:
\begin{gather*}
	\forall e \in \Ob(\cG), \quad \id_{[e]_{\cH}} \coloneqq [\id_e]_\cH = [e]_{\cH},\\
	\forall e_1,e_2,e_3 \in \Ob(\cG), \, \forall f \in \cG(e_1,e_2), \, \forall g \in \cG(e_2,e_3), \quad [g]_{\cH} \cdot [f]_{\cH} \coloneqq [g \cdot f]_{\cH}.
\end{gather*}

A \emph{homomorphism} from a groupoid \(\cG_1\) to a groupoid \(\cG_2\) is a functor \(F : \cG_1 \to \cG_2\); the \emph{kernel} of \(F\) is the subgroupoid
\[
	\ker F \coloneqq \set{f \in \cG_1 \given F(f) \in \Ob(\cG_2)}
\]
of \(\cG_2\). A homomorphism \(F : \cG_1 \to \cG_2\) is \emph{star-injective} if \(\rest{F}{\operatorname{St}_{\cG}(e)} : \operatorname{St}_{\cG_1}(e) \to \operatorname{St}_{\cG_2}(F(e))\) is injective for all \(e \in \Ob(\cG_1)\); it is a\emph{covering} if, in addition, \(\rest{F}{\Ob(\cG_1)} : \Ob(\cG_1) \to \Ob(\cG_2)\) is surjective; it is an \emph{isomorphism} if it is an isomorphism of categories; and it is an \emph{equivalence} if it is an equivalence of categories, in which case a \emph{homotopy inverse} of \(F\) is a weak inverse of \(F\). A \emph{homotopy} from a homomorphism \(F_1 : \cG_1 \to \cG_2\) to a homomorphism \(F_2 : \cG_1 \to \cG_2\) is a natural isomorphism \(\eta : F_1 \Rightarrow F_2\).

Finally, given a groupoid \(\cG\), a set \(X\), and a function \(p : X \to \Ob(\cG)\), an \emph{action} of \(\cG\) on \(X\) via \(p\) is a function
\[
	\set{(g,x) \in \cG \times X \given p(x) = s(g)} \to X, \quad (g,x) \mapsto g \act x
\]
satisfying the following conditions:
\begin{gather*}
	\forall e \in \Ob(\cG), \, \forall x \in \inv{p}(e), \quad \id_e \act x = x,\\
	\begin{multlined}\forall (e_1,e_2,e_3) \in \Ob(\cG), \, \forall f \in \cG(e_1,e_2), \, \forall g \in \cG(e_2,e_3), \, \forall x \in \inv{p}(e_1),\\ g \act (f \act x) = (g \cdot f) \act x.\end{multlined}
\end{gather*}
In this case, the \emph{action groupoid} of the action of \(\cG\) on \(X\) via \(p\) is the groupoid \(\cG \act X\) with set of objects \(\Ob(e) \times X\) and set of arrows \(\set{(g,x) \in \cG \times X \given p(x) = s(g)}\), with the following definitions:
\begin{gather*}
	\forall (g,x) \in \cG \ltimes X, \quad s(g,x) \coloneqq (s(g),x), \quad t(g,x) \coloneqq (t(g),g\act x);\\
	\forall (g_1,x) \in \cG \ltimes X, \, \forall g_2 \in \operatorname{St}_{\cG}(t(g_1)), \quad  (g_2,g_1 \act x) \cdot (g_1,x) \coloneqq (g_2 \cdot g_1,x);\\
	\forall (e,x) \in \Ob(\cG \ltimes X), \quad \id_{(e,x)} \coloneqq (\id_e,x).
\end{gather*}
When \(\cG\) is a group, this recovers the usual notions of group action and action groupoid.

We can now state and prove our technical lemma, which we shall need for the proofs of Theorems~\ref{firstorderclassification} and~\ref{qpbthm}.

\begin{lemma}\label{groupoidlemma}
	Let \(\pi : \cG \to \cH\) be surjective covering of groupoids, and suppose that we have a star-injective homomorphism \(\mu : \cH \to H\) from \(\cH\) to a group \(H\). Suppose that we are given a homomorphism \(S : G \ltimes X \to \cG\) and a left inverse \(T : \cG \to G \ltimes X\) of \(S\), such that:
	\begin{enumerate}
		\item there exists a natural isomorphism \(\eta : \id_{\cG} \Rightarrow S \circ T\) satisfying
		\[
			\forall e \in \Ob(\cG), \quad \mu \circ \pi(\eta_e) = \id_H;
		\]
		\item we have commutative diagrams
			\[
				\begin{tikzcd}
				G \ltimes X \arrow[d,twoheadrightarrow] \arrow[r,"S", hookrightarrow]  & \cG \arrow[d,"\mu\circ\pi"] \\ G \arrow[r,hookrightarrow] & H
				\end{tikzcd} \quad \quad \quad
				\begin{tikzcd}
				G \ltimes X \arrow[d,twoheadrightarrow]  \arrow[r, "T" below,twoheadleftarrow] & \cG \arrow[d,"\mu\circ\pi"] \\ G \arrow[r,hookrightarrow] & H
			\end{tikzcd}
			\]
		where \(G \inj H\) is the inclusion map and \(G \ltimes X \surj G\) is the surjective covering homomorphism given by \((g,x) \mapsto g\).
	\end{enumerate}
	Then \(\mu(\cH) = G\), so that \(\mu\) defines a surjective morphism \(\cH \surj G\); the subgroupoid \(\ker \mu\) of \(\cH\) is wide and has trivial isotropy groups; the equivalence kernel \(\sim\) of the set function
	\[
		\left(x \mapsto [\pi \circ \Sigma(1_G,x)]_{\ker\mu}\right) : X \to \Ob(\cH/\ker\mu)
	\]
	is a \(G\)-invariant; and there exists a unique isomorphism \(\tilde{S} : G \ltimes (X/\sim) \iso \cH/\ker\mu\) with
	\[
		\forall (g,x) \in G \ltimes X, \quad \tilde{S}(g,[x]_\sim) = [\pi \circ S(g,x)]_{\ker \mu}.
	\]
\end{lemma}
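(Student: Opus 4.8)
The plan is to verify the four assertions in turn by elementary diagram chases in $\cH$ and its cover $\cG$, the one load-bearing device being that the natural isomorphism $\eta\colon\id_{\cG}\Rightarrow S\circ T$ conjugates every arrow of $\cG$ into the image of $S$ modulo $\ker\mu$, by virtue of the normalisation $\mu\circ\pi(\eta_e)=\id_H$. First I would show $\mu(\cH)=G$. The inclusion $G\subseteq\mu(\cH)$ is immediate: for $g\in G$ and any $x\in X$ the arrow $S(g,x)\in\cG$ satisfies $\mu\circ\pi\circ S(g,x)=g$ by the first commutative square, and $\pi\circ S(g,x)\in\cH$. For the reverse inclusion, given $h\in\cH$ I would lift it along the surjective covering $\pi$ to some $\tilde h\in\cG$, set $(g,x)\coloneqq T(\tilde h)$, use naturality of $\eta$ to write $\tilde h=\eta_{t(\tilde h)}^{-1}\cdot S(g,x)\cdot\eta_{s(\tilde h)}$, and apply $\mu\circ\pi$, using $\mu\circ\pi(\eta_e)=\id_H$ and the first square, to get $\mu(h)=g\in G$. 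Then $\ker\mu$ is wide, since it contains every identity arrow ($\mu$ being a functor into a group), and it has trivial isotropy groups, since $\mu$ is star-injective and so any $h\in\cH(e,e)$ with $\mu(h)=1_H=\mu(\id_e)$ must equal $\id_e$; hence the quotient $\cH/\ker\mu$ is defined by the construction described above, and $\mu$ descends to a surjection $\bar\mu\colon\cH/\ker\mu\surj G$.

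Next I would unwind the object map and prove that $\sim$ is $G$-invariant. Since $(1_G,x)$ is the identity arrow at $(\ast,x)$ in $G\ltimes X$, one has $S(1_G,x)=\id_{S_0(x)}$, where $S_0\colon X\to\Ob(\cG)$ denotes the object part of $S$; so the map in the statement is $x\mapsto[\pi(S_0(x))]_{\ker\mu}$, and $x\sim y$ holds iff there is an arrow $\pi S_0(x)\to\pi S_0(y)$ in $\ker\mu$ (necessarily unique, by triviality of isotropy). Given such a witness $k$ for $x\sim y$, the arrow $k'\coloneqq\pi S(g,y)\cdot k\cdot(\pi S(g,x))^{-1}\colon\pi S_0(gx)\to\pi S_0(gy)$ satisfies $\mu(k')=g\cdot 1_H\cdot g^{-1}=1_H$, because $\mu\circ\pi\circ S(g,\cdot)$ is constantly $g$ by the first square; hence $k'\in\ker\mu$ and $gx\sim gy$. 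This is the first place the commutative square does real work, and notably it requires no path-lifting, only the value of the $G$-component of an $S$-arrow.

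I would then define $\tilde S$ on objects by $[x]_\sim\mapsto[\pi S_0(x)]_{\ker\mu}$ and on arrows by $(g,[x]_\sim)\mapsto[\pi S(g,x)]_{\ker\mu}$, and check it is a functor. Well-definedness of the arrow map reuses the computation of the previous step: if $x\sim x'$ with witness $k$, then $k'\cdot\pi S(g,x)\cdot k^{-1}=\pi S(g,x')$ with $k',k^{-1}\in\ker\mu$, so the two $\ker\mu$-classes coincide. Functoriality and compatibility with source and target are then routine, using that $S$, $\pi$ and the quotient map $[\,\cdot\,]_{\ker\mu}$ are functors, that $(g_2,g_1x)\cdot(g_1,x)=(g_2g_1,x)$ in $G\ltimes X$, and that $S$ carries identities to identities.

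Finally I would prove $\tilde S$ is bijective on objects and on arrows. Injectivity on objects is exactly the definition of $\sim$ as the equivalence kernel of the object map, and injectivity on arrows follows by recovering $g$ via $\bar\mu$ and $[x]_\sim$ via injectivity on objects. For surjectivity on objects: lift $e\in\Ob(\cH)$ to $\tilde e\in\Ob(\cG)$, note $T(\id_{\tilde e})=\id_{(\ast,x)}$ for some $x$, so $\eta_{\tilde e}\colon\tilde e\to S_0(x)$ gives $\pi(\eta_{\tilde e})\colon e\to\pi S_0(x)$ in $\ker\mu$, whence $[e]_{\ker\mu}=\tilde S([x]_\sim)$. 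Surjectivity on arrows is the crux, and is where the hypotheses combine: given $h\colon e_1\to e_2$ in $\cH$, lift to $\tilde h\in\cG$, put $(g,x)\coloneqq T(\tilde h)$, and use naturality of $\eta$ to write $\pi S(g,x)=\pi(ST(\tilde h))=\pi(\eta_{t(\tilde h)})\cdot h\cdot\pi(\eta_{s(\tilde h)})^{-1}$ with both outer factors in $\ker\mu$, so that $\tilde S(g,[x]_\sim)=[\pi S(g,x)]_{\ker\mu}=[h]_{\ker\mu}$. Uniqueness of $\tilde S$ is then clear, since every arrow of $G\ltimes(X/\sim)$ is of the form $(g,[x]_\sim)$ and its image is fixed by the displayed formula. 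I expect the main obstacle to be not conceptual depth but the bookkeeping of tracking the domains and codomains of arrows as they are lifted along $\pi$, conjugated by $\eta$, composed in $G\ltimes X$, and pushed into $\cH/\ker\mu$ — keeping all four categories' arrows straight simultaneously.
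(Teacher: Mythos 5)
Your proof is correct, and its load-bearing parts coincide with the paper's: the same formula for \(\tilde S\), the same conjugation of a \(\ker\mu\)-witness by \(\pi\circ S(g,\cdot)\) to get \(G\)-invariance of \(\sim\) and well-definedness of \(\tilde S\), and the same use of naturality of \(\eta\) together with the normalisation \(\mu\circ\pi(\eta_e)=\id_H\) to conjugate an arbitrary lifted arrow of \(\cH\) into the image of \(\pi\circ S\) modulo \(\ker\mu\). You diverge in two local verifications, both legitimately. First, you obtain trivial isotropy of \(\ker\mu\) in one line from star-injectivity of \(\mu\), by comparing \(h\) with \(\id_e\) inside \(\operatorname{St}_{\cH}(e)\); the paper instead lifts \(h\) along \(\pi\), notes that \(T(\tilde h)\) is an identity arrow, and conjugates by \(\eta\). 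Your argument is shorter and uses fewer hypotheses. Second, you prove that \(\tilde S\) is an isomorphism by direct bijectivity on objects and arrows (injectivity on objects being exactly the equivalence-kernel definition of \(\sim\), injectivity on arrows by recovering \(g\) via the descended \(\bar\mu\) and then the source object, surjectivity by the \(\eta\)-conjugation), whereas the paper constructs an explicit inverse \(\tilde T\) descending from \(\sigma\circ T\) and checks the two composites. The content matches up — your surjectivity-on-arrows computation is the paper's verification that \(\tilde S\circ\tilde T=\id\), and your injectivity plays the role of well-definedness of \(\tilde T\) — but your route happens never to invoke the second commutative square or the left-inverse identity \(T\circ S=\id\); that is harmless (the second square is in any case a consequence of the first square, naturality of \(\eta\), and the normalisation), while the paper's construction buys an explicit description of the inverse as the descent of the homotopy inverse \(T\). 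No gaps.
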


\begin{proof}
	Let us first show that \(\mu(\cH) = G\). Let \(p \coloneqq \left((g,x) \mapsto g\right) : G \ltimes X \to G\), so that
	\[
		p = \mu \circ \pi \circ S, \quad p \circ T = \mu \circ \pi.
	\]
	Since \(\pi : \cG \to \cH\) and \(p : G \ltimes X \to G\) are surjections, it follows that
	\[
		G = p(G \ltimes X) = (p \circ T)(\cG) = (\mu \circ \pi)(\cG) = \mu(\cH).
	\]
	
	Next, let us show that \(\sim\) is \(G\)-invariant. Let \(x,y \in X\), and suppose that \(x \sim y\), so that there exists \(\left(f : \pi \circ S(x) \to \pi \circ S(y)\right) \in \ker\mu\). Given \(g \in G\), the arrow
	\[
		(\pi \circ S)(g,y) \cdot f \cdot (\pi \circ S)(\inv{g},gx) : \pi \circ S(gx) \to \pi \circ S(gy)
	\]
	in \(\cH\) satisfies
	\[
		\mu\mleft((\pi \circ S)(g,y) \cdot f \cdot (\pi \circ S)(\inv{g},gx)\mright) = g \cdot \id \cdot \inv{g} = \id;
	\]
	hence, \(gx \sim gy\).
	
	Next, let us check that \(\ker\mu\) has trivial isotropy groups; it is wide since \(\id_e \in \ker\mu\) for all \(e \in \Ob(\cH)\). Let \(e \in \Ob(\cH)\) and let \(h \in (\ker\mu)(e)\). Since \(\pi\) is a covering of groupoids, let \(\tilde{h} \in \pi^{-1}(h)\). Then
	\[
		\id = \mu(h) = (\mu \circ \pi)(\tilde{h}) = (p \circ T)(\tilde{h}),
	\]
	so that \(T(\tilde{h}) = (\id,T(s(\tilde{h})))\), and hence
	\[
		\tilde{e} \coloneqq s(\tilde{h}) = ST(s(\tilde{h})) = ST(t(\tilde{h})) = t(\tilde{h});
	\]
	in other words, \(\tilde{h} \in \cG(\tilde{e})\) and \(ST(\tilde{h}) = \id_{\tilde{e}}\) for some \(\tilde{e} \in \pi^{-1}(e)\). It now follows that
	\[
		h = \pi(\tilde{h}) = \pi(\eta_{\tilde{e}}^{-1} \cdot \id_{\tilde{e}} \cdot \eta_{\tilde{e}}) = \pi(\eta_{\tilde{e}})^{-1} \cdot \id_e \cdot \pi(\eta_{\tilde{e}}) = \id_e.
	\]
	Since \(\ker\mu\) is wide and has trivial isotropy groups, it now follows that \(\cH/\ker\mu\) is well-defined.
	
	Next, let us show that \(\tilde{S}\) is well-defined. First, let \(g \in G\), let \(x,y \in X\), and suppose that \(x \sim y\), so that there exists \(\left(f : \pi \circ S(x) \to \pi \circ S(y)\right)\in\ker\mu\). Since
	\[
		\mu\mleft((\pi \circ S)(g,y) \cdot f \cdot (\pi \circ S)(\inv{g},gx)\mright) = g \cdot \id \cdot \inv{g} = \id,
	\]
	it follows that \((\pi \circ S)(g,y) \circ f \circ (\pi \circ S)(\inv{g},gx) \in \ker\mu\), so that
	\begin{align*}
		[(\pi \circ S)(g,y)]_{\ker\mu} &= \left[\left((\pi \circ S)(g,y) \cdot f \cdot (\pi \circ S)(\inv{g},gx)\right) \cdot (\pi\circ S)(g,x) \cdot \id_{\pi \circ S(x)}\right]_{\ker\mu}\\ &= [(\pi \circ S)(g,x)]_{\ker\mu},
	\end{align*}
	so that \(\tilde{S}(g,[x]_\sim)\) is well-defined. A straightforward calculation now shows that \(\tilde{S}\), which is well-defined as a map between sets of arrows, is a homomorphism.
	
	Now, since \(\sim\) is \(G\)-invariant, the quotient map \(X \to X/\sim\) is \(G\)-equivariant and hence extends to a covering of groupoids \(\sigma \coloneqq (g,x) \mapsto (g,[x]) : G \ltimes X \to G \ltimes (X/\sim)\). We wish to show that there exists a unique homomorphism \(\tilde{T} : \cH/\ker\mu \to G \ltimes (X/\sim)\) with
	\[
			\forall f \in \cG, \quad \tilde{T}([\pi(f)]_{\ker\mu}) = (\sigma \circ T)(f).
	\]
	Let \(f_1,f_2 \in \cG\), and suppose that \([\pi(f_1)]_{\ker\mu} = [\pi(f_2)]_{\ker\mu}\), so that there exist \(a,b \in \ker\mu\), such that \(\pi(f_1) \cdot a = b \cdot \pi(f_2)\); we wish to show that \(\sigma \circ T(f_1) = \sigma \circ T(f_2)\). On the one hand,
	\[
		p \circ T(f_1) = \mu \circ \pi(f_1) = \mu\mleft(\pi(f_1)\cdot a\mright) = \mu\mleft(b \cdot \pi(f_2)\mright) = \mu \circ \pi (f_2) = p \circ T(f_2);
	\]
	on the other hand,
	\[
		\left((\pi \circ \eta_{s(f_2)}) \cdot a \cdot (\pi \circ \eta_{s(f_1)})^{-1} : (\pi \circ S)\mleft(T(s(f_1))\mright) \to (\pi \circ S)\mleft(T(s(f_2))\mright) \right) \in \ker \mu,
	\]
	so that \(T(s(f_1)) \sim T(s(f_2))\). Hence,
	\[
		\sigma \circ T(f_1) = \left(p \circ T(f_1),[T(s(f_1))]\right) = \left(p \circ T(f_2),[T(s(f_2))]\right) = \sigma \circ T(f_2).
	\]
	Again, a straightforward calculation now shows that \(\tilde{T}\), which is well-defined as a map between sets of arrows, is a homomorphism.
	
	Finally, let us show that \(\tilde{S}\) and \(\tilde{T}\) are mutually inverse homomorphisms. On the one hand, for every \((g,x) \in G \ltimes X\),
	\[
		\tilde{T} \circ \tilde{S}(g,[x]) = \tilde{T}\mleft([S(g,x)]_{\ker\mu}\mright) = (\sigma \circ T)(S(g,x)) = \sigma(g,x) = (g,[x]);
	\]
	On the other hand, for every \(f \in \cG\),
	\begin{multline*}
		\tilde{S} \circ \tilde{T}\mleft([\pi(f)]_{\ker\mu}\mright) = \tilde{S}\mleft(\sigma \circ T(f)\mright) = [\pi \circ S \circ T(f)]_{\ker\mu}\\ = [\pi(\eta_{t(f)}) \cdot \pi(f) \cdot \pi(\eta_{s(f)})^{-1}]_{\ker\mu} = [\pi(f)]_{\ker\mu}.
	\end{multline*}
	Thus, the homomorphisms \(\tilde{S}\) and \(\tilde{T}\) are mutually inverse.
\end{proof}

\begin{bibdiv}
	\begin{biblist}
	\bib{ADL}{article}{
		author={Arici, F.},
		author={D'Andrea, F.},
		author={Landi, G.},
		title={Pimsner algebras and circle bundles},
		conference={
			title={Noncommutative analysis, operator theory and applications},
		},
		book={
			series={Oper. Theory Adv. Appl.},
			volume={252},
			publisher={Birkh\"{a}user},
		},
		date={2016},
		pages={1--25},
		eprint={1506.03109},
	}
		\bib{Atiyah57}{article}{
   author={Atiyah, M. F.},
   title={Complex analytic connections in fibre bundles},
   journal={Trans. Amer. Math. Soc.},
   volume={85},
   date={1957},
   pages={181--207},
}
	\bib{BDH}{article}{
				author={Baum, P. F.},
				author={De Commer, K.},
				author={Hajac, P. M.},
				title={Free actions of compact quantum groups on unital \(C^\ast\)-algebras},
				journal={Doc. Math.},
				volume={22},
				year={2017},
				pages={825--849},
				eprint={1304.2812},
			}
\bib{BB}{article}{
   author={Beggs, E. J.},
   author={Brzezi\'{n}ski, T.},
   title={The Serre spectral sequence of a noncommutative fibration for de
   Rham cohomology},
   journal={Acta Math.},
   volume={195},
   date={2005},
   pages={155--196},
   eprint={math/0508194},
}
	\bib{BeM}{book}{
		author={Beggs, E. J.},
		author={Majid, S.},
		title={Quantum Riemannian Geometry},
		series={Grundlehren der mathematischen Wissenschaften},
		volume={355},
		date={2020},
		publisher={Springer},
	}
	\bib{BC}{article}{
   author={Bichon, J.},
   author={Carnovale, G.},
   title={Lazy cohomology: an analogue of the Schur multiplier for arbitrary
   Hopf algebras},
   journal={J. Pure Appl. Algebra},
   volume={204},
   date={2006},
   number={3},
   pages={627--665},
   eprint={math/0410263},
}
\bib{Br93}{article}{
   author={Brzezi\'{n}ski, T.},
   title={Remarks on bicovariant differential calculi and exterior Hopf
   algebras},
   journal={Lett. Math. Phys.},
   volume={27},
   date={1993},
   number={4},
   pages={287--300},
}
	\bib{Br96}{article}{
   author={Brzezi\'{n}ski, T.},
   title={Translation map in quantum principal bundles},
   journal={J. Geom. Phys.},
   volume={20},
   date={1996},
   number={4},
   pages={349--370},
   eprint={hep-th/9407145},
	}
	\bib{BGS}{article}{
   author={Brzezi\'{n}ski, T.},
   author={Gaunt, J.},
   author={Schenkel, A.},
   title={On the relationship between classical and deformed Hopf
   fibrations},
   journal={SIGMA Symmetry Integrability Geom. Methods Appl.},
   volume={16},
   date={2020},
   pages={Paper No. 008, 29},
   eprint={1811.10913},
}
		\bib{BH}{article}{
			author={Brzezi\'{n}ski, T.},
			author={Hajac, P. M.},
			title={The Chern--Galois character},
			journal={C. R. Math. Acad. Sci. Paris},
			volume={338},
			date={2004},
			number={2},
			pages={113--116},
			eprint={math/0306436},
		}
	\bib{BrM}{article}{
	   author={Brzezi\'{n}ski, T.},
	   author={Majid, S.},
	   title={Quantum group gauge theory on quantum spaces},
	   journal={Comm. Math. Phys.},
	   volume={157},
	   date={1993},
	   number={3},
	   pages={591--638},
	   eprint={hep-th/9208007},
	}
	\bib{BrM98}{article}{
   author={Brzezi\'{n}ski, T.},
   author={Majid, S.},
   title={Quantum differentials and the \(q\)-monopole revisited},
   journal={Acta Appl. Math.},
   volume={54},
   date={1998},
   number={2},
   pages={185--232},
   eprint={q-alg/9706021},
}
\bib{CaMe}{article}{
	author={\'{C}a\'{c}i\'{c}, B.},
	author={Mesland, B.},
	title={Gauge theory on noncommutative Riemannian principal bundles},
	journal={Commun. Math. Phys.},
	eprint={1912.04179},
	note={In press},
}
\bib{CaTi}{article}{
	author={\'{C}a\'{c}i\'{c}, B.},
	author={Timmavajjula, V. K.},
	note={In progress},
}
\bib{Connes80}{article}{
   author={Connes, A.},
   title={\(C^{\ast}\) alg\`ebres et g\'{e}om\'{e}trie diff\'{e}rentielle},
   journal={C. R. Acad. Sci. Paris S\'{e}r. A-B},
   volume={290},
   date={1980},
   number={13},
   pages={A599--A604},
}
\bib{Connes95}{article}{
   author={Connes, A.},
   title={Geometry from the spectral point of view},
   journal={Lett. Math. Phys.},
   volume={34},
   date={1995},
   number={3},
   pages={203--238},
}
\bib{Connes96}{article}{
   author={Connes, A.},
   title={Gravity coupled with matter and the foundation of non-commutative
   geometry},
   journal={Comm. Math. Phys.},
   volume={182},
   date={1996},
   number={1},
   pages={155--176},
   eprint={hep-th/9603053},
}
			\bib{DGH}{article}{
				author={D\k{a}browski, L.},
				author={Grosse, H.},
				author={Hajac, P. M.},
				title={Strong connections and Chern--Connes pairing in the Hopf--Galois
					theory},
				journal={Comm. Math. Phys.},
				volume={220},
				date={2001},
				number={2},
				pages={301--331},
				eprint={math/9912239},
			}
\bib{DFSW}{article}{
   author={Daws, M.},
   author={Fima, P.},
   author={Skalski, A.},
   author={White, S.},
   title={The Haagerup property for locally compact quantum groups},
   journal={J. Reine Angew. Math.},
   volume={711},
   date={2016},
   pages={189--229},
   eprint={1303.3261},
}
\bib{DiengSchwarz}{article}{
   author={Dieng, M.},
   author={Schwarz, A.},
   title={Differential and complex geometry of two-dimensional
   noncommutative tori},
   journal={Lett. Math. Phys.},
   volume={61},
   date={2002},
   number={3},
   pages={263--270},
   eprint={math/0203160},
}
\bib{Dj97}{article}{
   author={\DJ{}ur\dj{}evi\'{c}, M.},
   title={Geometry of quantum principal bundles. II},
   journal={Rev. Math. Phys.},
   volume={9},
   date={1997},
   number={5},
   pages={531--607},
   eprint={q-alg/9412005},
}
\bib{Dj98}{article}{
   author={\DJ{}ur\dj{}evi\'{c}, M.},
   title={Differential structures on quantum principal bundles},
   journal={Rep. Math. Phys.},
   volume={41},
   date={1998},
   number={1},
   pages={91--115},
   eprint={q-alg/9412004},
}
\bib{Dj10}{article}{
   author={\DJ{}ur\dj{}evi\'{c}, M.},
   title={Geometry of quantum principal bundles III},
   journal={Algebras Groups Geom.},
   volume={27},
   date={2010},
   number={3},
   pages={247--336},
}
	\bib{Hajac}{article}{
		author={Hajac, P. M.},
		title={Strong connections on quantum principal bundles},
		journal={Comm. Math. Phys.},
		volume={182},
		date={1996},
		number={3},
		pages={579--617},
		eprint={hep-th/9406129},
	}
	\bib{HK}{book}{
   author={Halter-Koch, F.},
   title={Quadratic irrationals},
   series={Pure and Applied Mathematics},
   note={An introduction to classical number theory},
   publisher={CRC Press, Boca Raton, FL},
   date={2013},
}
\bib{Majid17}{article}{
   author={Majid, S.},
   title={Hodge star as braided Fourier transform},
   journal={Algebr. Represent. Theory},
   volume={20},
   date={2017},
   number={3},
   pages={695--733},
   eprint={1511.00190},
}
\bib{MT}{article}{
   author={Majid, S.},
   author={Tao, W.-Q.},
   title={Duality for generalised differentials on quantum groups},
   journal={J. Algebra},
   volume={439},
   date={2015},
   pages={67--109},
   eprint={1207.7001},
}	
\bib{Manin}{article}{
   author={Manin, Yu. I.},
   title={Real multiplication and noncommutative geometry (ein Alterstraum)},
   conference={
      title={The legacy of Niels Henrik Abel},
   },
   book={
      publisher={Springer, Berlin},
   },
   date={2004},
   pages={685--727},
   eprint={math/0202109},
}
\bib{OB}{article}{
   author={\'{O} Buachalla, R.},
   title={Noncommutative K\"{a}hler structures on quantum homogeneous spaces},
   journal={Adv. Math.},
   volume={322},
   date={2017},
   pages={892--939},
   eprint={1602.08484},
}
\bib{PolishchukJGP}{article}{
   author={Polishchuk, A.},
   title={Noncommutative two-tori with real multiplication as noncommutative
   projective varieties},
   journal={J. Geom. Phys.},
   volume={50},
   date={2004},
   number={1-4},
   pages={162--187},
   eprint={math/0212306},
}
\bib{Polishchuk}{article}{
   author={Polishchuk, A.},
   title={Classification of holomorphic vector bundles on noncommutative
   two-tori},
   journal={Doc. Math.},
   volume={9},
   date={2004},
   pages={163--181},
   eprint={math/0308136},
}
\bib{PolishchukSchwarz}{article}{
   author={Polishchuk, A.},
   author={Schwarz, A.},
   title={Categories of holomorphic vector bundles on noncommutative
   two-tori},
   journal={Comm. Math. Phys.},
   volume={236},
   date={2003},
   number={1},
   pages={135--159},
   eprint={math/0211262},
}
\bib{Rieffel81}{article}{
   author={Rieffel, M. A.},
   title={\(C^\ast\)-algebras associated with irrational rotations},
   journal={Pacific J. Math.},
   volume={93},
   date={1981},
   number={2},
   pages={415--429},
}
\bib{Rieffel83}{article}{
   author={Rieffel, M. A.},
   title={The cancellation theorem for projective modules over irrational
   rotation \(C^\ast\)-algebras},
   journal={Proc. London Math. Soc. (3)},
   volume={47},
   date={1983},
   number={2},
   pages={285--302},
}
\bib{Rieffel88}{article}{
   author={Rieffel, M. A.},
   title={Projective modules over higher-dimensional noncommutative tori},
   journal={Canad. J. Math.},
   volume={40},
   date={1988},
   number={2},
   pages={257--338},
}
\bib{Schneider}{article}{
   author={Schneider, H.-J.},
   title={Principal homogeneous spaces for arbitrary Hopf algebras},
   note={Hopf algebras},
   journal={Israel J. Math.},
   volume={72},
   date={1990},
   number={1-2},
   pages={167--195},
}
\bib{Schurmann}{article}{
   author={Sch\"{u}rmann, M.},
   title={Gaussian states on bialgebras},
   conference={
      title={Quantum probability and applications, V},
      address={Heidelberg},
      date={1988},
   },
   book={
      series={Lecture Notes in Math.},
      volume={1442},
      publisher={Springer, Berlin},
   },
   date={1990},
   pages={347--367},
}
\bib{Schwarz98}{article}{
   author={Schwarz, A.},
   title={Morita equivalence and duality},
   journal={Nuclear Phys. B},
   volume={534},
   date={1998},
   number={3},
   pages={720--738},
   eprint={hep-th/9805034},
}
\bib{Sweedler}{article}{
   author={Sweedler, M. E.},
   title={Cohomology of algebras over Hopf algebras},
   journal={Trans. Amer. Math. Soc.},
   volume={133},
   date={1968},
   pages={205--239},
}
\bib{Vlasenko}{article}{
   author={Vlasenko, M.},
   title={The graded ring of quantum theta functions for noncommutative
   torus with real multiplication},
   journal={Int. Math. Res. Not.},
   date={2006},
   pages={Art. ID 15825, 19},
   eprint={math/0601405},
}
\bib{Woronowicz}{article}{
   author={Woronowicz, S. L.},
   title={Differential calculus on compact matrix pseudogroups (quantum
   groups)},
   journal={Comm. Math. Phys.},
   volume={122},
   date={1989},
   number={1},
   pages={125--170},
}
	\bib{Zucca}{thesis}{
		author={Zucca, A.},
		title={Dirac operators on quantum principal \(G\)-bundles},
		type={Ph.D.\ thesis},
		organization={Scuola Internazionale Superiore di Studi Avanzati},
		year={2013},
	}
	\end{biblist}
\end{bibdiv}

\end{document}